\newcommand{\blind}{0}
\def\spacingset#1{\renewcommand{\baselinestretch}%
{#1}\small\normalsize} \spacingset{1}
\DeclareMathOperator*{\argmin}{arg\,min}
\newcommand{\bea}{\begin{eqnarray*}}
\newcommand{\eea}{\end{eqnarray*}}
\newcommand{\be}{\begin{eqnarray}}
\newcommand{\ee}{\end{eqnarray}}
\newcommand{\bay}{\begin{array}}
\newcommand{\eay}{\end{array}}
\newcommand{\bi}{\begin{itemize}}
\newcommand{\ei}{\end{itemize}}
\newcommand{\ben}{\begin{enumerate}}
\newcommand{\een}{\end{enumerate}}
\newcommand{\bcen}{\begin{center}}
\newcommand{\ecen}{\end{center}}
\newcommand{\T}{\mathrm{\scriptscriptstyle T}}
\DeclareOldFontCommand{\bf}{\normalfont\bfseries}{\mathbf}
\newtheorem{thm}{Theorem}
\newtheorem{lem}{Lemma}
\newtheorem{prop}{Proposition}
\newtheorem{corr}{Corollary}
\date{} 
\begin{document}
 \if0\blind
 {
  \title{\bf  Mapping the Genetic-Imaging-Clinical   Pathway with Applications to Alzheimer's Disease}

  \maketitle
\begin{center}
  \author{\large Dengdeng Yu \\
  \vspace{10pt}
  Department of Mathematics, University of Texas at Arlington}\\ \vspace{10pt}
    \author{\large Linbo Wang \\
    \vspace{10pt}
  Department of Statistical Sciences, University of Toronto}\\ \vspace{10pt}
    \author{\large Dehan Kong \\
    \vspace{10pt}
  Department of Statistical Sciences, University of Toronto}\\ \vspace{10pt}
  \author{\large Hongtu Zhu \\
  \vspace{10pt}
  Department of Biostatistics, University of North Carolina, Chapel Hill }\\
  \vspace{10pt}
 {\large for the Alzheimer's Disease Neuroimaging Initiative
\footnote[1]{
Data used in preparation of this article were obtained from the Alzheimer's Disease
Neuroimaging Initiative (ADNI) database (adni.loni.usc.edu). As such, the investigators within the ADNI contributed to the design and implementation of ADNI and/or provided data but did not participate in analysis or writing of this report. A complete listing of ADNI investigators can be found at: \url{http://adni.loni.usc.edu/wp-content/uploads/how_to_apply/ADNI_Acknowledgement_List.pdf}.} }
\end{center}
\newpage
} \fi

 \if1\blind
 {
  \title{\bf Mapping the Genetic-Imaging-Clinical Pathway with Applications to Alzheimer's Disease}
  \maketitle
} \fi

\vspace{-50pt}
\spacingset{1.7} 

\begin{center} \textbf{Abstract}
\end{center}

\begin{quote}
Alzheimer's disease is a progressive form of dementia that results in problems with memory, thinking,  and behavior. It often starts with abnormal aggregation and deposition of $\beta$ amyloid and tau,  followed by neuronal damage such as atrophy of the hippocampi,  leading to Alzheimer’s Disease (AD). The aim of this paper is to map the genetic-imaging-clinical pathway for  AD 
in order to  delineate the genetically-regulated   brain changes that drive disease progression based on the Alzheimer’s Disease Neuroimaging Initiative (ADNI) dataset.
We develop a novel two-step approach to delineate the   association between high-dimensional 2D hippocampal surface exposures and the Alzheimer’s Disease 
Assessment Scale (ADAS)
 cognitive score,
while taking into account the ultra-high dimensional clinical and genetic  covariates at baseline. 
 Analysis results suggest that the radial distance of each pixel of both hippocampi is negatively associated with the severity of behavioral deficits conditional on observed clinical and genetic covariates. These associations are stronger in Cornu Ammonis region 1 (CA1) and subiculum subregions compared to Cornu Ammonis region 2 (CA2) and Cornu Ammonis region 3 (CA3) subregions. 
Supplementary materials for this article, including a standardized description of the materials available for reproducing the work, are available as an online supplement.
\end{quote}

\begin{quote}
\textbf{Keywords}: 2D surface, behavioral deficits, confounders, hippocampus, variable selection. 
\end{quote}

\newpage
\section{Introduction}
\label{Introduction}
Alzheimer's disease (AD) is an irreversible brain disorder that slowly destroys memory and thinking skills. According to World Alzheimer Reports \citep{gaugler20192019}, there are around $55$ million people worldwide living with Alzheimer's disease and related dementia. The total global cost of Alzheimer's disease and related dementia was estimated to be a trillion US dollars, equivalent to $1.1\%$ of global gross domestic product.  Alzheimer's patients often suffer from behavioral deficits including memory loss and difficulty of thinking, reasoning and decision making. 

In the current model of AD pathogenesis, it is well established that deposition of amyloid plaques is an early event that, in conjunction with tau pathology, causes neuronal damage. Scientists have identified risk genes that may cause the abnormal aggregation and deposition of the amyloid plaques \citep[e.g.][]{morishima2002alzheimer}. The neuronal damage typically starts from the hippocampus and results in the first clinical manifestations of
the disease in the form of episodic memory deficits \citep{weiner2013alzheimer}. Specifically, \citet{jack2010hypothetical} presented a hypothetical model for biomarker dynamics in AD pathogenesis, which has been empirically and collectively supported by many works in the literature. The model begins with the abnormal deposition of $\beta$ amyloid (A$\beta$) fibrils, as evidenced by a corresponding drop in the levels of soluble A$\beta$-42 in cerebrospinal fluid (CSF) \citep{ aizenstein2008frequent}. After that, neuronal damage begins to occur, as evidenced by increased levels of CSF tau protein \citep{hesse2001transient}. Numerous studies have investigated how A$\beta$ and tau impact the hippocampus \citep[e.g.][]{ferreira2011abeta}, known to be fundamentally involved in acquisition, consolidation, and recollection of new episodic memories \citep{frozza2018challenges}. In particular, as neuronal degeneration progresses, brain atrophy, which starts with hippocampal atrophy \citep{fox1996presymptomatic}, becomes detectable by magnetic resonance imaging (MRI).  Studies from recent years also found other important CSF proteins that may be related to hippocampal atrophy. For instance, the low levels of chromogranin A (CgA) and trefoil factor 3 (TFF3), and high level of cystatin C (CysC) are evidently associated with hippocampal atrophy \citep{khan2015subset,paterson2014cerebrospinal}. Indeed, the impact of protein concentration on behavior can also be through atrophy of other brain regions. For example, there exists potential entorhinal tau pathology on episodic memory decline \citep{maass2018entorhinal}. As sufficient brain atrophy accumulates, it results in cognitive symptoms and impairment. This process of AD  pathogenesis is summarized by the flow chart in Figure \ref{fig: flowchart}.
Note that it is still debatable how A$\beta$ and tau interact with each other as mentioned by \citet{jack2013tracking, majdi2020amyloid}; however, it is evident that A$\beta$ may still hit a biomarker detection threshold earlier than tau \citep{jack2013tracking}. In addition, as noted by \citet{hampel2018cholinergic}, it is likely that highly complex interactions exist between A$\beta$ as well as tau, and the cholinergic system. For instance, the association has been found between CSF biomarkers of amyloid and tau pathology in AD \citep{remnestaal2021association}.
It has also been found that other factors, such as dysregulation and dysfunction of the Wnt signaling pathway, may also contribute to A$\beta$ and tau pathologies \citep{ferrari2014wnt}. In addition, the M1 and M3 subtypes of muscarinic receptors increase amyloid precursor protein production via the induction of the phospholipase C/protein kinase C pathway and increase BACE expression in AD brains \citep{nitsch1992release}.
\begin{figure}[htbp]
\centering
\includegraphics[height=2.2in,width=6.5in]{./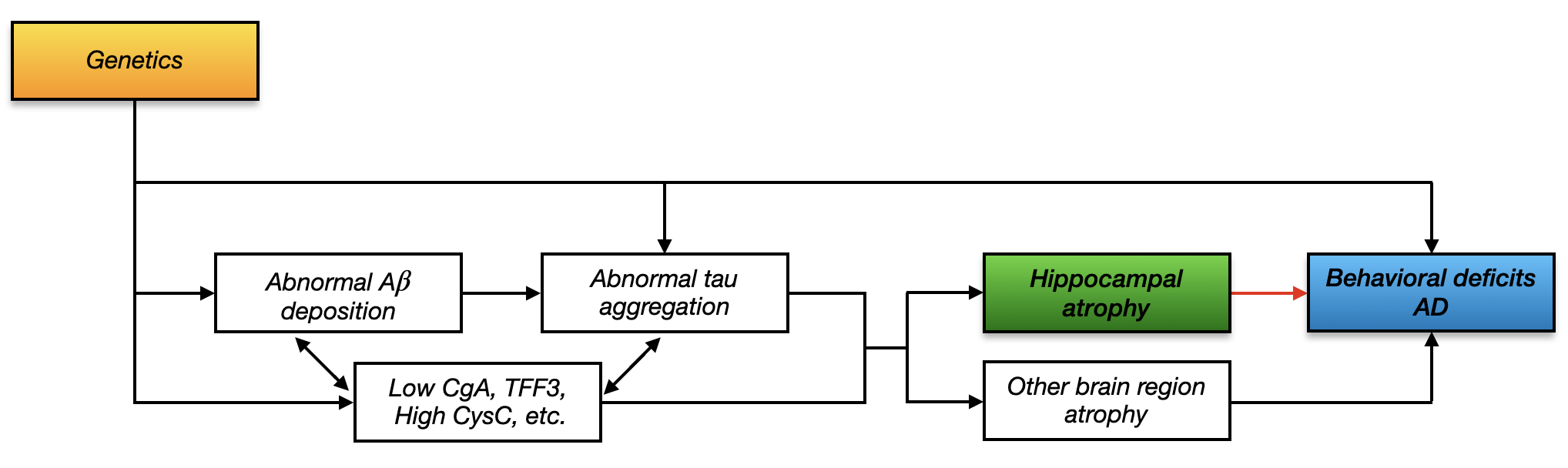}
\caption{A hypothetical model of AD pathogenesis based on \cite{selkoe2016amyloid}. The double arrows represent the possible interactions that exist between A$\beta$ as well as tau, and the cholinergic system. The red arrow denotes the conditional association we are interested in estimating.}
\label{fig: flowchart}
\end{figure}

The aim of this paper is to  
map the genetic-imaging-clinical (GIC) pathway for  AD, which is  
  the most important part of the hypothetical model of AD pathogenesis  in Figure \ref{fig: flowchart}.  Histological studies have shown that the hippocampus is particularly vulnerable to Alzheimer's disease pathology and has already been considerably damaged at the first occurrence of clinical symptoms  \citep{braak1998evolution}. Therefore, the hippocampus has become a major focus in Alzheimer's studies \citep{de1989early}.
 Some neuroscientists even conjecture that the association between hippocampal atrophy and behavioral deficits may be causal, because the former destroys the connections that help the neuron communicate and results in a loss of function \citep{BrainAtrophy}. 
We are interested in  delineating the genetically-regulated   hippocampal shape that drives  AD related behavioral deficits and disease progression.

To map the GIC pathway, we extract clinical, imaging, and genetic variables  from  the Alzheimer's Disease Neuroimaging Initiative (ADNI) study as follows. First, we use the Alzheimer's Disease Assessment Scale (ADAS) cognitive score to quantify behavioral deficits, for which a higher score indicates more severe behavioral deficits. Second, we characterize the exposure of interest, hippocampal shape, by the hippocampal morphometry surface measure, summarized as two $100  \times 150$ matrices corresponding to the left/right hippocampi.  Each element of the matrices is a continuous-valued variable, representing the radial distance from the corresponding coordinate on the hipppocampal surface to the medial core of the hippocampus. Compared with the conventional scalar measure of hippocampus shape \citep{jack2003mri}, recent studies show that  the additional information contained in the hippocampal morphometry surface measure is valuable for Alzheimer's diagnosis \citep{thompson2004mapping}. For example, \citet{li2007hippocampal} showed that the surface measures of the hippocampus could provide more subtle indexes compared with the volume differences in discriminating between patients with Alzheimer's and healthy control subjects. In our case, with the 2D matrix radial distance measure, one may investigate how local shapes of hippocampal subfields are associated with the behavioral deficits. Third, the ADNI study measures ultra-high dimensional genetic covariates and other demongraphic covariates at baseline. 
There are more than $6$ million genetic variants per subject.

The special data structure of the ADNI data application presents  new challenges for statistically mapping the GIC pathway. First, unlike conventional statistical analysis that deals with scalar exposure, our exposure of interest is represented by  high-dimensional 2D hippocampal imaging measures. Second, the dimension of baseline covariates, which are also potential confounders, is much larger than the sample size.  
Recently there have been many developments for confounder selection, most of which are in the causal inference literature. 
Studies show inclusion of the variables only associated with exposure but not directly with the outcome except through the exposure (known as instrumental variables) may result in loss of efficiency  in the causal effect estimate \citep[e.g.][]{schisterman2009overadjustment}, while inclusion of variables only related to the outcome but not the exposure (known as precision variables) may provide efficiency gains \citep[e.g.][]{brookhart2006variable}; see \cite{shortreed2017outcome, richardson2018discussion,tang2021outcome} and references therein for an overview.

When a large number of covariates are available, the primary difficulty  for mapping the GIC pathway is how to include all the confounders and precision variables, while excluding all the instrumental variables and irrelevant variables (not related to either outcome or exposure). We develop a novel two-step approach to estimate the conditional association between the high-dimensional 2D hippocampal surface exposure and the Alzheimer's behaviorial score, while taking into account the ultra-high dimensional baseline covariates. The first step is  a fast screening procedure based on both the outcome and exposure  models to rule out most of the irrelevant variables. The use of both models in screening is crucial for both computational efficiency and selection accuracy, as we will show in detail in Sections \ref{jointscreening} and \ref{Blockwise joint screening}. The second step is  a penalized regression procedure for the outcome generating model to further exclude instrumental and irrelevant variables, and simultaneously estimate the conditional association. Our simulations and ADNI data application demonstrate the effectiveness of the proposed procedure.

Our analysis represents a novel inferential target compared to recent developments in imaging genetics mediation analysis \citep{bi2017genome}. Although we consider a similar set of variables and structure among these variables as illustrated later in Figure \ref{fig: DAG}, our goal is to estimate the conditional association of hippocampal shape with behavioral deficits. In contrast, in mediation analysis,  
researchers are often interested in the effects of genetic factors on behavioral deficits, and how those are mediated through hippocampus. Direct application of methods developed for imaging genetics mediation analysis to our problem may select genetic factors that are confounders affecting both hippocampal shape and behavioral deficits. In comparison, we aim to include precision variables in the adjustment set as they may improve efficiency.

The rest of the article is organized as follows. Section \ref{description} includes a detailed data and problem description. We introduce our models and a two-step variable selection procedure in Section \ref{Setup}. We analyze the ADNI data and estimate the association between hippocampal shape and behavioral deficits conditional on observed clinical and genetic covariates in Section \ref{imgenedatastudy}. Simulations are conducted in Section \ref{simulation} to evaluate the finite-sample performance of the proposed method. We finish with a discussion in Section \ref{sec:discussion}. The theoretical properties of our procedure are included in Section \ref{Theoretical guarantees} in the supplementary material.

\section{Data and problem description}
\label{description}

Understanding how human brains work and how they connect to human behavior is a central goal in medical studies. 
In this paper, we are interested in studying whether and how hippocampal shape is associated with behavioral deficits in Alzheimer's studies. We consider the clinical, genetic, imaging and behavioral measures in the ADNI dataset. The outcome of interest is the Alzheimer's Disease Assessment Scale cognitive score observed at 24th month after baseline measurements. The Alzheimer's Disease Assessment Scale cognitive 13 items score (ADAS-13) \citep{mohs1997development} includes 13 items: word recall task, naming objects and fingers, following commands, constructional praxis, ideational praxis, orientation, word recognition task, remembering test directions, spoken language, comprehension, and word-finding difficulty, delayed word recall and a number cancellation or maze task. A higher ADAS score indicates more severe behavioral deficits.

The exposure of interest is the baseline 2D surface data obtained from the left/right hippocampi. The hippocampus surface data were preprocessed from the raw MRI data, where the detailed preprocessing steps are included in Section \ref{image preprocessing} of the supplementary material.  After preprocessing, we obtained left and right hippocampal shape representations as two $100\times 150$ matrices. The imaging measurement at each pixel  is an absolute metric, representing the radial distance from the pixel to the medial core of the hippocampus. The unit for the measurement is in millimeters.

In the ADNI data, there are millions of observed covariates that one may need to adjust for, including the whole genome sequencing data from all of the 22 autosomes. We have included detailed genetic preprocessing techniques in Section \ref{genetic preprocessing} of the supplementary material. After preprocessing, $6,087,205$ bi-allelic markers (including SNPs and indels) of $756$ subjects were retained in the data analysis.

We excluded those subjects with missing hippocampus shape representations, baseline intracranial volume (ICV) information or ADAS-13 score observed at Month 24, after which there are $566$ subjects left. Our aim is to estimate the association between the hippocampal surface exposure and the ADAS-13 score conditional on clinical measures including age, gender and length of education, ICV, diagnosis status, and $6,087,205$ bi-allelic markers. 

\section{Methodology}
\label{Setup}
\subsection{Basic set-up}
\label{Setup:Notation}
Suppose we observe independent and identically distributed samples $ \{L_i=(X_i, \bm{Z}_i, Y_i), 1\leq i\leq n\} $ generated from $ L $, where $ L=(X, \bm{Z}, Y) $ has support $ \mathcal{L}=(\mathcal{X}\times \mathcal{Z}\times \mathcal{Y})$. Here $ \bm{Z}\in \mathcal{Z}\subseteq \mathbb{R}^{p\times q}$ is a 2D-image continuous exposure, $ Y\in \mathcal{Y} $ is a continuous outcome of interest, and $ X\in \mathcal{X}\subseteq \mathbb{R}^{s} $ denotes a vector of ultra-high dimensional genetic (and clinical) covariates, where we assume $ s>>n$. We are interested in characterizing the association between the 2D exposure $ \bm{Z} $ and outcome $ Y $ conditional on the observed covariates $ X$.

\begin{figure}[htbp]
\centering
\includegraphics[height=3in,width=4in]{./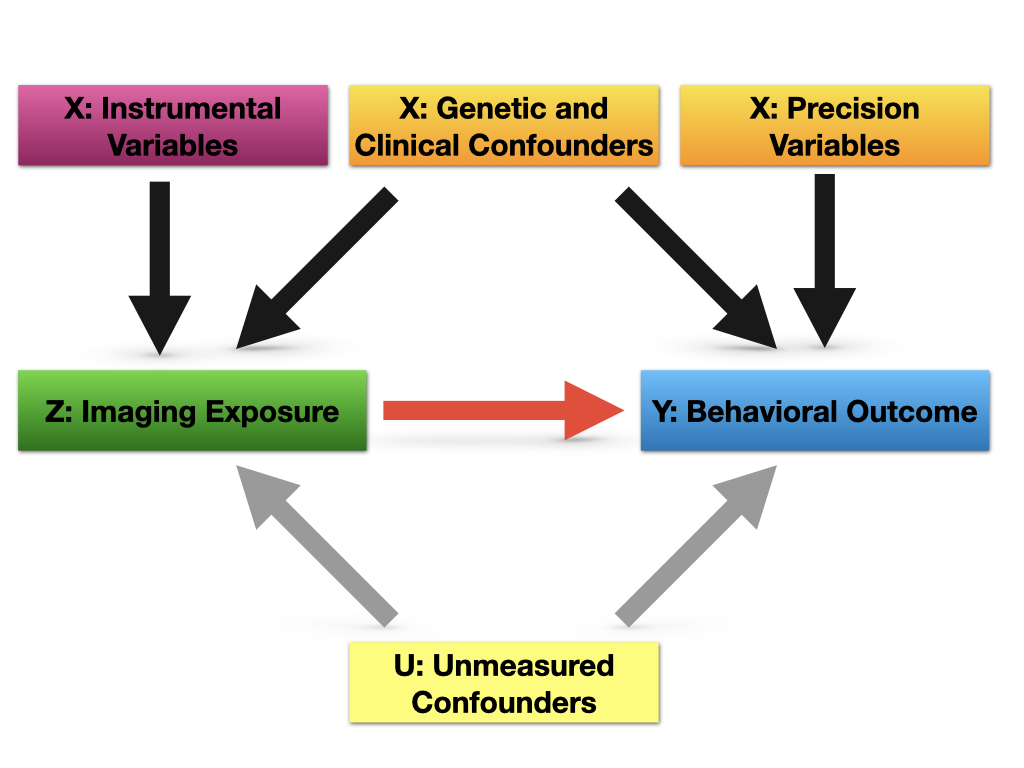}
\caption{Directed acyclic graph showing potential high dimensional confounder and precision variables $X$ (gold),  the possible unmeasured confounders $U$ (light yellow), the 2D imaging exposure $Z$ (green), the instrumental variables $X$ (purple) and the outcome of interest $Y$ (blue). The red arrow denotes the association of interest. }\label{fig: DAG}
\end{figure}

Denote $ X_i=(X_{i1}, \ldots, X_{is})^{\T} $. Without loss of generality, we assume that $X_{il}$ has been standardized for every $1 \leq l \leq s$, and $\bm{Z}_i$ and $Y_i$ have been centered. To map the GIC pathway, we assume the following linear equation models:
\begin{eqnarray}
\label{outcomemodel}
Y_i &=& \sum_{l=1}^s X_{il}\beta_l+ \langle \bm{Z}_i,\bm{B} \rangle + \epsilon_{i}  \quad \textrm{(outcome  model)}; \\
\label{treatmentmodel}
\bm{Z}_i &=& \sum_{l=1}^s X_{il}*\bm{C}_{l} + \bm{E}_{i}  \quad \textrm{(exposure  model)}.
\end{eqnarray}
In \eqref{outcomemodel}, the matrix $ \bm{B}\in \mathbb{R}^{p\times q} $ is the main parameter of interest, representing the association between the 2D imaging treatment $\bm{Z}_i$ and the behavioral outcome $ Y_i $, $\beta_l$ represents the association between the $l$-th observed covariate $X_{il} $ and the behavioral outcome $ Y_i $, and  $ \epsilon_i $ and $\bm E_i$ are  random errors that may be correlated. 
The inner product between two matrices is defined as $ \langle\bm{Z}_i, \bm{B} \rangle =\langle \mathrm{vec}(\bm{Z}_i), \mathrm{vec}(\bm{B})\rangle$, where $ \mathrm{vec}(\cdot) $ is a vectorization operator that stacks the columns of a matrix into a vector. Model \eqref{treatmentmodel},  previously introduced in \citet{kong2019l2rm},  specifies the relationship between the 2D imaging exposure and the observed covariates. 
The $ \bm{C}_{l} $  is a $ p\times q $ coefficient matrix characterizing the association between the $l$th covariate $X_{il} $ and the 2D imaging exposure $ \bm{Z}_i $, and $ \bm{E}_i $ is a $ p\times q $ matrix of random errors with mean $ 0 $. The symbol ``$*$" denotes element-wise multiplication.   Define $\mathcal{M}_1 = \{ 1 \leq l \leq s: \beta_l \neq 0\}$ and $\mathcal{M}_2 = \{  1 \leq l \leq s : \bm{C}_l \neq \bm{0} \}$, where we assume $|\mathcal{M}_{1}| << n$ and $|\mathcal{M}_{2}| << n$; here $|\mathcal{M}_1|$ and  $|\mathcal{M}_2|$ represent the number of elements in $\mathcal{M}_1$ and $\mathcal{M}_2$ respectively. 

 To estimate $ \bm{B}$, the first step is to perform variable selection in models \eqref{outcomemodel} and \eqref{treatmentmodel}. For all the covariates $ X_{l}$, we  group them into four categories. Let $ \mathcal{A} =  \{ 1,\ldots, s\}$, and denote $\mathcal{C}$ the indices of confounders, i.e. variables associated with both the outcome and the exposure; $\mathcal{P}$ denotes the indices of precision variables, i.e. predictors of the outcome, but not the exposure; $\mathcal{I}$  denotes the indices of instrumental variables, i.e. covariates  that are only associated with the exposure but not directly with the outcome except through the exposure;  $\mathcal{S}$ denotes the indices of irrelevant variables, i.e. covariates that are not related to  the outcome or the exposure. Mathematically speaking, $\mathcal{C}=  \{l \in \mathcal{A}| \beta_l \neq 0 \textrm{ and } \bm{C}_l \neq 0\}$, $\mathcal{P}=  \{l \in \mathcal{A}| \beta_l \neq 0 \textrm{ and } \bm{C}_l = 0\}$, $\mathcal{I} =  \{l \in \mathcal{A}| \beta_l = 0 \textrm{ and } \bm{C}_l \neq 0\}$ and $\mathcal{S} =  \{l \in \mathcal{A}| \beta_l = 0 \textrm{ and } \bm{C}_l = 0\}$. The relationships among different types of $ X $, $ \bm{Z}$ and $ Y $ are shown in Figure \ref{fig: DAG}, where $U$ denotes possible unmeasured confounders. Since we are interested in characterizing the association between   $ \bm{Z} $ and   $ Y $ conditional on   $ X$, further discussion of $U$ will be omitted for the remainder of the paper. 

When there are no unobserved confounders $U$, the estimate of $ \bm{B}$ has underlying causal interpretations. In this case, the ideal adjustment set includes all confounders to avoid bias and all precision variables to increase statistical efficiency, while excluding instrumental variables and irrelevant variables \citep{brookhart2006variable,shortreed2017outcome}. Although we are studying the conditional association rather than the causal relationship due to the possible unobserved confounding, our target adjustment set remains the same.  In other words,  we aim to retain all covariates from $\mathcal{M}_1=\mathcal{C} \cup \mathcal{P}=\{l \in \mathcal{A}| \beta_l \neq 0\}$, while excluding covariates from $\mathcal{I} \cup \mathcal{S}=\{l \in \mathcal{A}| \beta_l=0\}$.

\subsection{Naive screening methods}
\label{Naive screening methods}
To find the nonzero $ \beta_l$'s, a straightforward idea is to consider a penalized estimator obtained from the outcome generating model \eqref{outcomemodel}, where one imposes, say Lasso penalties, on $ \beta_l $'s. However, this is computationally infeasible in our ADNI data application as the number of baseline covariates $ s $ is over $ 6 $ million. Consequently, it is important to employ a screening procedure \citep[e.g.][]{fan2008sure} to reduce the model size. To find covariates $X_l$'s that are associated with the outcome $Y$ conditional on the exposure $\bm Z$, one might consider a   conditional screening procedure for model \eqref{outcomemodel}  \citep{barut2016conditional}. Specifically, one can fit the model $ Y_i = X_{il}\beta_l+ \langle \bm{Z}_i, \bm{B} \rangle + \epsilon_{i} $ for each $ 1\leq l\leq s$, obtain marginal estimates of $ \widehat{\beta}^{MZ}_l$'s and then sort the $|\widehat{\beta}^{MZ}_l|$'s for screening. This procedure works well if the exposure variable $ \bm{Z} $ is of low dimension as one only needs to fit low dimensional ordinary least squares (OLS)  $s$ times. However, in our ADNI data application, the imaging exposure $ \bm{Z} $ is of dimension $ pq=15,000 $. As a result,  one cannot obtain an OLS estimate since $ n<pq$. Thus, to apply the conditional sure independence screening procedure to our application, one may need to solve a penalized regression problem for each $ 1\leq l\leq s$, such as $\arg\min_{\bm{B},\beta_l} \left[ \frac{1}{2 n} \sum_{i=1}^n \left( Y_i -\langle \bm{Z}_i, \bm{B} \rangle -  {X}_{il} \beta_l \right)^2 + P_\lambda (\bm{B}) \right]$, 
where $P_\lambda (\bm{B})$ is a penalty of $\bm{B}$. In theory, for each $l\in \{1,\ldots, s\},$ one can  obtain the estimates $\widehat{\beta}^{MZ}_{l,\lambda} $, and then rank the $|\widehat{\beta}^{MZ}_{l,\lambda}| $'s. However,  this is  computationally prohibitive in the ADNI data with $ s> 6,000,000 $. First, the penalized regression problem is much slower to solve compared to the OLS. Second, selection of the tuning parameter $ \lambda $ based on grid search substantially increases the computational burden.

Alternatively one may apply the marginal screening procedure of \cite{fan2008sure} to model \eqref{outcomemodel}. Specifically, one may solve the following marginal OLS on each $ X_{il} $ by ignoring the exposure $ \bm{Z}_i$: $\arg\min_{\beta_l} \left[ \frac{1}{2 n} \sum_{i=1}^n \left( Y_i -{X}_{il} \beta_l \right)^2 \right]$.
The marginal OLS estimate has a closed form $ \widehat{\beta}_l^{M}=n^{-1} \sum_{i=1}^n X_{il} Y_i $, and one can rank $|\widehat{\beta}_l^{M}|$'s for screening. Specifically, the selected sub-model is defined as ${\widehat{\mathcal{M}}_{1}^*} = \{1 \leq l \leq s: |\widehat{\beta}_l^{M} | \geq \gamma_{1,n}\}$,
where $ \gamma_{1,n} $ is a threshold. Computationally, it is much faster than conditional screening for model \eqref{outcomemodel} as we only need to fit one dimensional OLS for $ s> 6,000,000 $ times. However, this procedure is likely to miss some important confounders. To see this, plugging model \eqref{treatmentmodel} into \eqref{outcomemodel} yields
\begin{eqnarray*}
Y_i &=& \sum_{l=1}^s X_{il}(\beta_l+\langle \bm{C}_{l},\bm{B} \rangle)+ \langle \bm{E}_{i},\bm{B} \rangle + \epsilon_{i}.
\end{eqnarray*}
Even in the ideal case when $X_{il} $'s are orthogonal for $ 1\leq l\leq s$, $ \widehat{\beta}_l^{M} $ is not a good estimate of $ \beta_l $ because of the bias term $\langle \bm{C}_{l},\bm{B} \rangle$. Thus, we may miss some nonzero $ \beta_l $'s in the screening step if $ \beta_l $ and $\langle \bm{C}_{l},\bm{B} \rangle$ are of similar magnitudes but opposite signs. We illustrate this point in  Figures \ref{sim1step1n200sigma1} 
in Section \ref{simulation}, in which cases the conventional marginal screening on \eqref{outcomemodel} fails to capture some of the important confounders.  

\subsection{Joint screening}
\label{jointscreening}

To overcome the drawbacks of the estimation methods discussed in Section \ref{Naive screening methods}, we develop a joint screening procedure, specifically for our ADNI data application. The procedure is not only computationally efficient, but can also select all the important confounders and precision variables with high probability. The key insight here is that although we are interested in selecting important variables in the outcome generating model, this can be done much more efficiently by incorporating information from 
the exposure model. Specifically, let $\widehat{\bm{C}}_l^M = n^{-1} \sum_{i=1}^n X_{il} * \bm{Z}_i \in \mathbb{R}^{p \times q}$ be  the marginal OLS estimate in model \eqref{treatmentmodel} for $l = 1,\ldots,s$. Following \citet{kong2019l2rm}, the important covariates in model \eqref{treatmentmodel} can be selected by ${\widehat{\mathcal{M}}_{2}} = \{1 \leq l \leq s: \|\widehat{\bm{C}}_l^M\|_{op} \geq \gamma_{2,n}\}$, 
where $||\cdot||_{op}$ is the operator norm of a matrix and $ \gamma_{2,n} $ is a threshold.

We define our joint screening set as $\widehat{\mathcal{M}} = \widehat{\mathcal{M}}_1^* \cup \widehat{\mathcal{M}}_2$. 
Intuitively, most important confounders and precision variables are contained in the set ${\widehat{\mathcal{M}}_{1}}^*$. The only exceptions are the covariates  $X_l$  for which both $\beta_l$ and $\langle \bm{C}_{l},\bm{B} \rangle$ are of similar magnitudes but  opposite signs. On the other hand, these  $X_l$ will be included in ${\widehat{\mathcal{M}}_{2}}$ and hence,   $ \widehat{\mathcal{M}}_{} $ along with instrumental variables with large $ ||\bm{C}_l||_{op} $.
In Section \ref{Theoretical guarantees} of the supplementary material, we show that with properly chosen $\gamma_{1,n}$ and $\gamma_{2,n}$, the joint screening set includes the confounders and precision variables with high probability: $P(\mathcal{M}_1 \subset \widehat{\mathcal{M}} ) \rightarrow 1$ as $ n\rightarrow \infty$. In practice, we recommend choosing $ \gamma_{1,n}$ and $\gamma_{2,n}$ such that $ |\widehat{\mathcal{M}}_1^*|=|\widehat{\mathcal{M}}_2|=k $, where $ k $ is the smallest integer such that $|\widehat{\mathcal{M}}| \geq \lfloor n/\log(n) \rfloor $. We set them to be of equal sizes following the convention that the size of screening set is determined only by the sample size \citep{fan2008sure}, which is the same for $\widehat{\mathcal{M}}_{1}^{*}$ and $\widehat{\mathcal{M}}_{2}$. 
Depending on the prior knowledge about the sizes and signal strengths of  confounding, precision and instrumental variables, the sizes of $|\widehat{\mathcal{M}}_1^*|$ and $|\widehat{\mathcal{M}}_2|$ may be chosen differently. In the simulations and real data analyses, we conduct sensitivity analyses by varying the relative sizes of $\widehat{\mathcal{M}}_{1}^{*}$ and $\widehat{\mathcal{M}}_{2}$.

In general,  the set $ \widehat{\mathcal{M}}_{}$ includes not only confounders and precision variables in $ \mathcal{M}_1=\mathcal{C}\bigcup \mathcal{P}$, but also instrumental variables in $ \mathcal{I}$ and a small subset of the irrelevant variables $\mathcal{S}$. Nevertheless, the size of $ |\widehat{\mathcal{M}}_{}| $ is greatly reduced compared to that of all the observed covariates. 
This makes it feasible to perform the second step procedure, a  refined penalized estimation of ${\bm B}$ based on the covariates $ \{X_{l}: l\in \widehat{\mathcal{M}}_{}\} $.

\subsection{Blockwise joint screening}
\label{Blockwise joint screening}
Linkage disequilibrium (LD) is a ubiquitous biological phenomenon where genetic variants present a strong blockwise correlation (LD) structure \citep{wall2003haplotype}.  If all the SNPs of a particular LD block are important but with relatively weak signals, they may be missed by the screening procedure described in Section \ref{jointscreening}. To appropriately utilize LD blocks' structural information to select those missed SNPs, we develop a modified screening procedure described below.

Suppose that $X=(X_1,\ldots, X_s)^T$ can be divided into $b$ discrete haplotype blocks: regions of high LD that are separated from other haplotype blocks by many historical recombination events \citep{wall2003haplotype}.
Let the index set of each $b$ non-overlapping block be $\mathcal{B}_1, \ldots, \mathcal{B}_b$ with $\cup_{j=1}^b \mathcal{B}_j = \{ 1\ldots, s\}$. 
For $ l = 1,\ldots, s,$ we define 
$${\beta^{block,M}_{l}} = \sum_{j=1}^b  \frac{1(l \in \mathcal{B}_j)}{|\mathcal{B}_{j}|}\sum_{i \in \mathcal{B}_{j}} |{\beta_i^{M}}| 
\quad \textrm{and} \quad {C^{block,M}_{l}} =  \sum_{j=1}^{b} \frac{1(l \in \mathcal{B}_j)}{|\mathcal{B}_{j}|}\sum_{i \in \mathcal{B}_{j}}  \|{\bm{C}}_i^M\|_{op}, $$
$$\widehat{\beta}^{block,M}_{l} = \sum_{j=1}^b  \frac{1(l \in \mathcal{B}_j)}{|\mathcal{B}_{j}|}\sum_{i \in \mathcal{B}_{j}} |\widehat{\beta}_i^{M}| 
\quad \textrm{and} \quad \widehat{C}^{block,M}_{l} =  \sum_{j=1}^{b} \frac{1(l \in \mathcal{B}_j)}{|\mathcal{B}_{j}|}\sum_{i \in \mathcal{B}_{j}}  \|\widehat{\bm{C}}_i^M\|_{op}, $$
where $1(\cdot)$ is the indicator function of an event.  We also define 
\begin{equation*}
{\widehat{\mathcal{M}}_{1}^{block,*}} = \{1 \leq l \leq s: \widehat{\beta}^{block,M}_{l} \geq \gamma_{3,n}\} \quad \textrm{and} \quad {\widehat{\mathcal{M}}_{2}^{block}} = \{1 \leq l \leq s: \widehat{C}_l^{block,M} \geq \gamma_{4,n}\}. 
\end{equation*}

We propose to use the new set $\widehat{\mathcal{M}}^{block} = \widehat{\mathcal{M}}_1^* \cup \widehat{\mathcal{M}}_2  \cup {\widehat{\mathcal{M}}_{1}^{block,*}} \cup \widehat{\mathcal{M}}_{2}^{block}$, rather than $\widehat{\mathcal{M}} = \widehat{\mathcal{M}}_1^* \cup \widehat{\mathcal{M}}_2$, to select   important covariates. 
Intuitively, when $|\beta_{l_1}| > |\beta_{l_2}|$, $X_{l_1}$ is much more easily selected compared with $X_{l_2}$ by  $\widehat{\mathcal{M}}_1^*$. However, suppose that $l_1 \in \mathcal{B}_1$ and $l_2 \in \mathcal{B}_2$, with only a small proportion of $X_l$ in $\mathcal{B}_1$ having $|\beta_l|>0$, whereas a large proportion of $X_l$ in $\mathcal{B}_2$ has $|\beta_l|>0$. It may well be the case that ${\beta^{block,M}_{l_1}}< {\beta^{block,M}_{l_2}} $,  meaning that  $X_{l_2}$ can be selected more easily than $X_{l_1}$ by $\widehat{\mathcal{M}}_1^{block,*}$. In addition, as $\widehat{\beta}_l^{M}$ is not a good estimate of $\beta_l$ due to  the bias term $\langle \bm{C}_{l},\bm{B} \rangle$, $\widehat{\beta}^{block,M}_{l}$ is not a good estimate of ${\beta^{block,M}_{l}}$ either. Therefore, some $X_l$ with nonzero ${\beta^{block,M}_{l}}$ may not be included in ${\widehat{\mathcal{M}}_{1}^{block,*}}$. Nevertheless, they will be included in ${\widehat{\mathcal{M}}_{2}^{block}}$ and hence $\widehat{\mathcal{M}}^{block}$.

Theoretically, when $\gamma_{1,n}$, $\gamma_{2,n}$, $\gamma_{3,n}$ and $\gamma_{4,n}$ are chosen properly, $P(\mathcal{M}_1 \subset \widehat{\mathcal{M}}^{block} ) \rightarrow 1$ as $ n\rightarrow \infty$; see Theorem \ref{thm1a} in Section \ref{Theoretical guarantees} of the supplementary material. In practice, we recommend choosing  $\gamma_{1,n}$, $\gamma_{2,n}$, $\gamma_{3,n}$ and $\gamma_{4,n}$, such that $ |\widehat{\mathcal{M}}_1^*|=|\widehat{\mathcal{M}}_2|= |{\widehat{\mathcal{M}}_{1}^{block,*}}| = |{\widehat{\mathcal{M}}_{2}^{block}}|= k $, where $ k $ is the smallest integer such that $|\widehat{\mathcal{M}}^{block}| \geq 2\lfloor n/\log(n) \rfloor $. The threshold here is twice the threshold what we suggested in Section \ref{jointscreening} since we unionize two additional sets.

\subsection{Second-step estimation}

In this step, we aim to estimate $\bm{B}$ by excluding the instrument variables $ \mathcal{I}$ and irrelevant variables in $\mathcal{S}$ from $ \widehat{\mathcal{M}}$ (or $\widehat{\mathcal{M}}^{block}_{}$) and keeping the other covariates. This can be done by solving the following optimization problem:
\begin{equation}
\label{min1}
\argmin_{\bm{B},\{\beta_l,l \in \widehat{\mathcal{M}}_{}\}} \left[ \frac{1}{2 n} \sum_{i=1}^n \left( Y_i -\langle \bm{Z}_i, \bm{B} \rangle -  \sum_{l \in \widehat{\mathcal{M}}_{}} {X}_{il} \beta_l \right)^2 + \lambda_{1}  \sum_{l \in \widehat{\mathcal{M}}_{}}  |\beta_l|  + \lambda_{2} || \bm{B}||_* \right].
\end{equation}
Denote $ (\widehat{\bm{B}}, \widehat{\boldsymbol\beta})$ the solution to the above optimization problem. 
The Lasso penalty on $\beta_l$ is used to exclude instrumental and irrelevant variables in $\widehat{\mathcal{M}}_{}$, whose corresponding coefficients $ \beta_l$'s are zero.  The nuclear norm penalty $|| \cdot ||_*$, defined as the sum of all the singular values of a matrix, is used to achieve a low-rank estimate of $\bm{B}$, where the low-rank assumption in estimating 2D structural coefficients is commonly used in the literature  \citep{zhou2014regularized, kong2019l2rm}. For  tuning parameters, we use five-fold cross validation based on two-dimensional grid search, and select $\lambda_{1}$ and $\lambda_{2}$ using the one standard error rule \citep{hastie2009}.  

\section{ADNI data applications}
\label{imgenedatastudy}
We use the data obtained from the ADNI study (adni.loni.usc.edu). The  data usage acknowledgement is included in Section \ref{usage} of the supplement material. As described in Section \ref{description}, we include $566$ subjects from the ADNI1 study. The exposure of interest is the baseline 2D hippocampal surface radial distance measures, which can be represented as a $ 100\times 150 $ matrix for each part of the hippocampus. The outcome of interest is the ADAS-13 score observed at Month 24. The average ADAS-13 score is  $20.8$ with standard deviation $14.1$.
The covariates to adjust for include $6,087,205$ bi-allelic markers as well as clinical covariates, including age, gender and education length, baseline intracranial volume (ICV),  and baseline diagnosis status. The average age is $75.5$ years old with standard deviation $6.6$ years, and the average education length is $15.6$ years with standard deviation $2.9$ years. Among all the $566$ subjects, $58.1\%$ were female. The average ICV was $1.28\times 10^6$ $\rm{mm}^3$ with standard deviation $1.35\times 10^5$ $\rm{mm}^3$. There were $175$ ($184$ at  Month 24) cognitive normal patients, $268$ ($157$ at Month 24) patients with mild cognitive impairment (MCI), and $123$ ($225$ at  Month 24) patients with AD at the baseline. Studies have shown that age and gender are the main risk factors for Alzheimer's disease \citep{vina2010women, guerreiro2015age} with older people and females more likely to develop Alzheimer's disease. Multiple studies have also shown that prevalence of dementia is greater among those with low or no education \citep{zhang1990prevalence}. On the other hand, age, gender and length of education have been found to be strongly associated with the hippocampus \citep{van2006hippocampal, jack2000rates, noble2012hippocampal}. Previous studies \citep{sargolzaei2015practical} suggest that the ICV is an important measure that needs to be adjusted for in studies of brain change and AD. In addition, the baseline diagnosis status may help explain the baseline hippocampal shape and the AD status at Month 24. Therefore, we consider age, gender, education length, baseline ICV and baseline diagnosis status as part of the confounders, and adjust for them in our analysis. In addition, we also adjust for population stratification, for which we use the top five principal components of the whole genome data.  As both left and right hippocampi have 2D radial distance measures and the two parts of hippocampi have been found to be asymmetric \citep{pedraza2004asymmetry}, we apply our method to the left and right hippocampi separately.

We use the default method \citep{gabriel2002structure} of Haploview \citep{barrett2005haploview} and PLINK \citep{purcell2007plink} to form linkage disequilibrium (LD) blocks.
Previous studies report that about  50 genetic variants are associated with AD; see the review in  \cite{sims2020multiplex} for details.  This provides support for our assumption that $|\mathcal{M}_{1}| < n$ ($n=566$). On the other hand,   a genome-wide association analysis of 19,629 individuals by \cite{zhao2019genome} shows that $57$ genetic variants are associated with the left hippocampal volumes and $54$ are associated with the right hippocampal volumes. This provides support for our assumption that $|\mathcal{M}_{2}| < n$. Therefore, we apply our blockwise joint screening procedure on those SNPs on each part of hippocampal outcome $ Y_i $ and exposure $ \bm{Z}_i $ marginally. 
We choose the thresholds $\gamma_{1,n}$, $\gamma_{2,n}$,$ \gamma_{3,n}$ and $\gamma_{4,n}$ such that  $|\widehat{\mathcal{M}}^{block}|=2 \lfloor n/\log(n) \rfloor=178$. In Table \ref{imgenet1} of the supplementary material, we list the top $20$ SNPs corresponding to left and right hippocampi, respectively. As suggested by one referee, we plot similar figures as the Manhattan plot for $\widehat{\mathcal{M}}^{*}_1$, $\widehat{\mathcal{M}}^{block,*}_1$, $\widehat{\mathcal{M}}^{\textbf{}}_2$ and $\widehat{\mathcal{M}}^{block}_2$ in Figure \ref{manhattanPlots} of the supplementary material, where genomic coordinates are displayed along the x-axis, the y-axis represents the magnitude of $| \widehat{\beta}^{M}_{l} |$, $ \widehat{\beta}^{block,M}_{l} $,  $ \| \widehat{\bm{C}}_l^M \|_{op}$ and $\widehat{C}_l^{block,M}$ and the horizontal dashed line represents the threshold values $\gamma_{1,n}$, $\gamma_{2,n}$, $\gamma_{3,n}$, and $\gamma_{4,n}$, respectively.

From Table \ref{imgenet1} and Figure \ref{manhattanPlots}, one can see that there are quite a few important SNPs for both hippocampi. For example, the top SNP is the rs429358 from the 19th chromosome. This SNP is a C/T single-nucleotide variant (snv) variation in the APOE gene. It is also one of the two SNPs that define the well-known APOE alleles,  the major genetic risk factor for Alzheimer's disease \citep{kim2009role}. 
In addition, a great portion of the SNPs in Table \ref{imgenet1} has  been found to be strongly associated with Alzheimer's. These include  rs10414043 \citep{du2018fast}, an A/G snv variation in the APOC1 gene; rs7256200 \citep{takei2009genetic}, an A/G snv variation in the APOC1 gene; rs73052335 \citep{zhou2018identification}, an A/C snv variation in the APOC1 gene; rs769449 \citep{chung2014exome}, an A/G snv variation in the APOE gene; rs157594 \citep{hao2017mining}, a G/T snv variation; rs56131196 \citep{gao2016shared}, an A/G snv variation in the APOC1 gene; rs111789331 \citep{gao2016shared}, an A/T snv variation; and rs4420638 \citep{coon2007high}, an A/G snv variation in the APOC1 gene. 

Among those SNPs that have been found to be associated with Alzheimer's, some of them are also directly associated with hippocampi. For example, \citet{zhou2019analysis} revealed that the SNPs rs10414043, rs73052335 and rs769449 are among the top SNPs that have significant genetic effects on the volumes of both left and right hippocampi. \cite{guo2019genome} identified the SNP rs56131196 to be associated with hippocampal shape.

We then perform our second-step estimation procedure for each part of the hippocampi. Here $ X_{\widehat{\mathcal{M}}}$ denotes  the SNPs selected in the screening step, the population stratification (top five principal components of the whole genome data) and the 
five clinical measures (age, gender, education length, baseline ICV and baseline diagnosis status), 
and $ {\bm Z} $ denotes the left/right hippocampal surface image matrix. 
To visualize the results, we map the estimates $\widehat{\bm B}$ corresponding to each part of the hippocampus onto a representative hippocampal surface and plot it in Figure \ref{dataPlots}(a).   We have also plotted the hippocampal subfield \citep{apostolova20063d} in Figure \ref{dataPlots}(b). Here Cornu Ammonis region 1 (CA1), Cornu Ammonis region 2 (CA2) and Cornu Ammonis region 3 (CA3) are a strip of pyramidal neurons within the hippocampus proper. CA1 is the top portion, named as ``regio superior of Cajal'' \citep{blackstad1970distribution}, which consists of small pyramidal neurons. Within the lower portion (regio inferior of Cajal), which consists of larger pyramidal neurons, there are a smaller area called CA2 and a larger area called CA3. The cytoarchitecture and connectivity of CA2 and CA3 are different. The subiculum is a pivotal structure of the hippocampal formation, positioned between the entorhinal cortex and the CA1 subfield of the hippocampus proper (for a complete review, see \citealt{dudek2016rediscovering}).

From the plots, we can see that   all the $15,000$ entries of $\widehat{\bm{B}}$ corresponding to both hippocampi are negative. This implies that the radial distance of each pixel of both hippocampi is negatively associated with the ADAS-13 score, which depicts the severity of behavioral deficits. The subfields with strongest associations are mostly CA1 and subiculum. 
Existing literature \citep{apostolova2010subregional} has found that as Alzheimer's disease progresses, it first affects CA1 and subiculum subregions and later CA2 and CA3 subregions. This can partially explain why the shapes of CA1 and subiculum may have stronger associations with ADAS-13 scores compared to CA2 and CA3 subregions.

\begin{figure}[htbp]
\captionsetup[subfigure]{justification=centering}
\centering
 \subcaptionbox{}[0.45\linewidth]
{\includegraphics[height=3in,width=4in]{./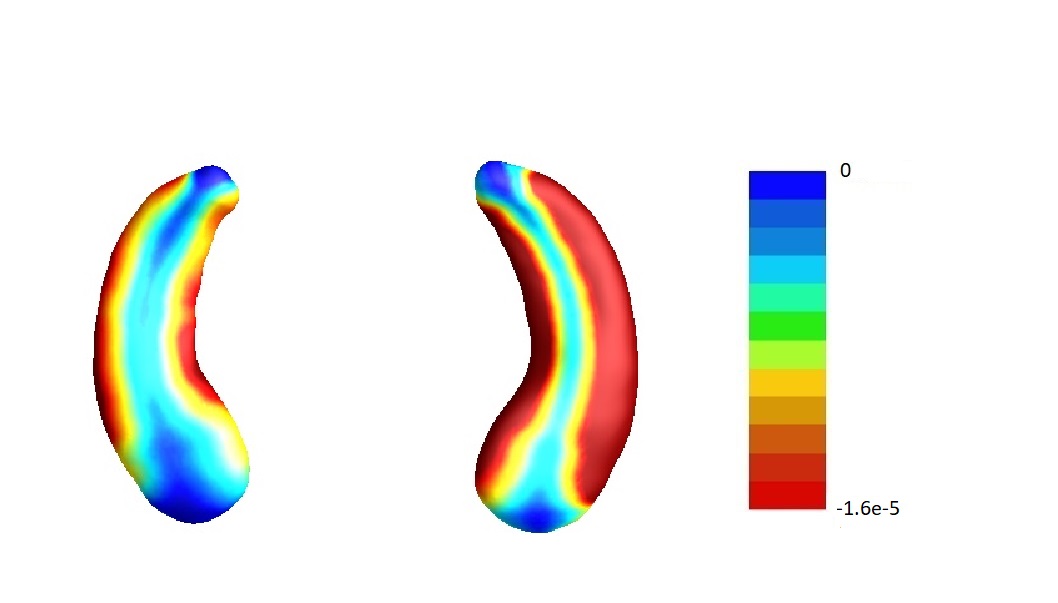}}
 \hfill
 \subcaptionbox{}[0.45\linewidth]
 {\includegraphics[height=2.4in,width=2in]{./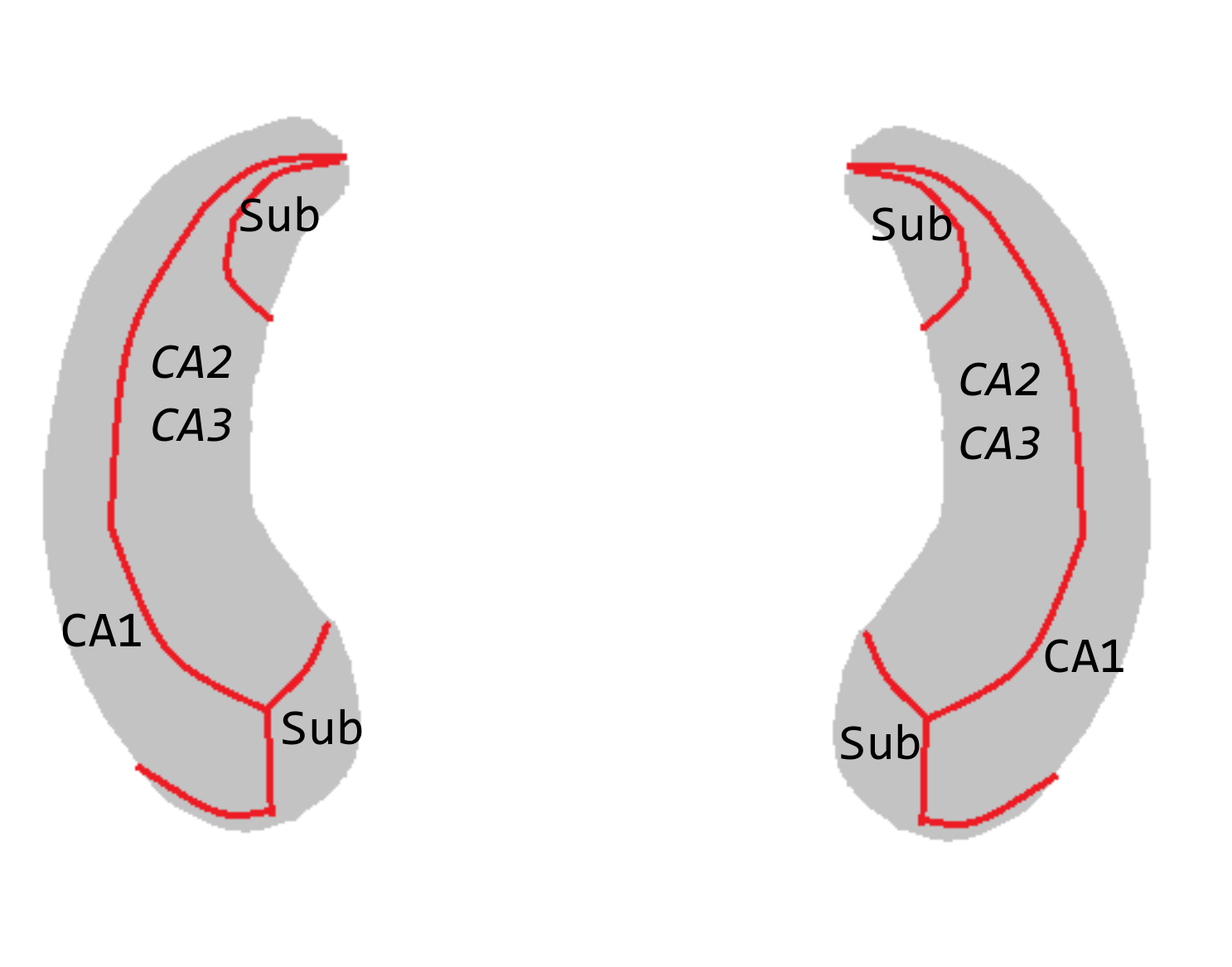}}
 \hfill
\caption{Real data results: Panel (a) plots the estimate $\widehat{\bm B}$ corresponding to the left hippocampi (left part) and  the right hippocampi (right part). Panel (b) plots the hippocampal subfield.}
\label{dataPlots}
\end{figure}
We examine the effect size of the whole hippocampal shape by evaluating the term $ \langle \bm{Z}_{i}, \widehat{\bm{B}} \rangle $. 
Specifically, we calculate the proportion of variance explained by  imaging covariates as follows: 
\begin{eqnarray*}
R^2 = \frac{\sum_{i=1}^{n} (y_{i}-\bar{y})^{2} - \sum_{i=1}^{n} (y_{i}-\bar{y} -\langle \bm{Z}_{i}, \widehat{\bm{B}} \rangle )^2}{\sum_{i=1}^{n} (y_{i}-\bar{y})^{2}}.
\end{eqnarray*}
Our results show that the shape of the left hippocampi and that of the right one account for 5.83\% and 4.71\% of the total variations in behavior deficits, respectively. Such effect sizes are quite large compared with those for  polygenic risk scores in genetics. In addition, we perform permutation test to test whether the $R^2$ statistic is significant. In particular, we randomly permutate the $\{Y_1,\ldots,Y_n\}$, denoted by $ \{Y_i^*,\ldots,Y_n^*\}$, and we then apply our estimation procedure based on $(X_i, \bm{Z}_i, Y_i^*)$, obtain $ \widehat{\bm{B}}^* $, and calculate $(R^2)^*$. We repeat this for 1,000 times and  and we obtain the $\{(R_{(k)}^2)^*, 1\leq k\leq 1000 \}$, which mimics the null
distribution. Finally, the p-value can be calculated as $\frac{1}{1000}\sum_{k=1}^{1000} 1\{(R_{(k)}^2)^*\geq R^2 \}$. The p-values for both hippocampi are less than $0.001$, suggesting that the $R^2$'s  are significantly different from zero.

We also conduct sensitivity analysis by varying the relative sizes of $\widehat{\mathcal{M}}_{1}^{*}$ and $\widehat{\mathcal{M}}_{2}$ in the joint screening procedure. The estimates $\widehat{\bm B}$s are similar among different choices of $|\widehat{\mathcal{M}}_{1}^{*}|$ and $|\widehat{\mathcal{M}}_{2}|$; see supplementary material Section \ref{Sensitivity analysis} for details. In addition, we repeated our analysis on
the $391$ MCI and AD subjects. We have similar findings that the radial distances of each pixel of both hippocampi are mostly negatively associated with the ADAS-13 score. And the subfields with strongest associations are mostly CA1 and subiculum; see supplementary material Section  \ref{Subgroup analysis ADNI data applications} for details. As suggested by one referee, we have performed the SNP-imaging-outcome mediation analysis proposed by \citet{bi2017genome}; see Section \ref{Results for mediation analyses} in the supplementary material for the detailed procedure. There is no evidence for the mediating relationship of  SNP-imaging-outcome from our analysis.

\section{Simulation studies}
\label{simulation}
In this section, we perform simulation studies to evaluate the finite sample performance of the proposed method. The dimension of covariates is set as $s=5000$, and the exposure is a  $64\times 64$ matrix. The $X_i \in \mathbb{R}^s$ is independently generated from $N(\bm{0},\bm\Sigma_x)$, where $\bm\Sigma_x = (\sigma_{x,ll^\prime})$ has an autoregressive structure such that $\sigma_{x,ll^\prime} = \rho_1^{|l - l^\prime|}$ holds for $1 \leq l,l^\prime \leq s$ with $\rho_1 = 0.5$. Define  ${\bm B}$ as a $64 \times 64$ image shown in Figure \ref{figureTCross}(a), and ${\bm C}$ a $64 \times 64$ image shown in Figure \ref{figureTCross}(b). For ${\bm B}$, the black regions of interest (ROIs) are assigned value $0.0408$ and white ROIs are assigned value $0$. For ${\bm C}$, the black ROIs are assigned value $0.0335$ and white ROIs are assigned value $0$. Further 
we set ${\bm C}_l=v_l*\bm{C}$, where $ v_1=-1/3$, $v_2=-1$, $v_3=-3$, $ v_{207}=-3$, $v_{208}=-1$, $v_{209}=-1/3$, and $ v_l=0 $ for $ 4\leq l\leq 206$ and $ 210\leq l \leq s $. 
We set the parameters $\beta_1=3$, $\beta_2=1$, $\beta_3=1/3$, $ \beta_{104}=3$, $\beta_{105}=1$, $\beta_{106}=1/3$, and $ \beta_l=0$ for $4 \leq l \leq 103$ and $ 107\leq l\leq s$. In this setting, we have $\mathcal{C} = \{1, 2, 3\}$, $\mathcal{P} = \{ 104, 105, 106\}$, $\mathcal{I} =\{ 207, 208, 209\}$ and $\mathcal{S} = \{1 \ldots, 5000\} \backslash \{1,2,3,104,105,106,207,208,209\}$.

\begin{figure}[htbp]
\captionsetup[subfigure]{justification=centering}
\centering
 \subcaptionbox{}[0.35\linewidth]
 {\frame{\includegraphics[scale = 2]{./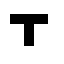}}}
 \subcaptionbox{}[0.35\linewidth]
 {\frame{\includegraphics[scale = 2]{./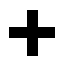}}}
 \hfill
\caption{Panels (a) and (b) plot ${\bm B}$ and ${\bm C}$ respectively. In Panels (a), the value at each pixel is either 0 (white) or 0.0408 (black). In Panels (b), the value at each pixel is either 0 (white) or 0.0335 (black).} 
\label{figureTCross}
\end{figure}

The random error $\mathrm{vec}(\bm{E}_i)$ is independently generated from $N(\bm{0},\bm\Sigma_e)$, where we set the standard deviations of all elements in  $\bm{E}_i$ to be $\sigma_e = 0.2$ and the correlation between $\bm{E}_{i,jk}$ and $\bm{E}_{i,j^\prime k^\prime}$ to be $\rho_2^{|j - j^\prime| + |k-k^\prime|}$ for $1 \leq j,k,j^\prime, k^\prime \leq 64$ with $\rho_2 =0.5$. The random error $\epsilon_{i}$ is generated independently from $N(0,\sigma^2)$, where we consider $ \sigma=1$ or $0.5$. The $Y_i$'s and $ {\bm Z}_i $'s are generated from models \eqref{outcomemodel} and \eqref{treatmentmodel}. We consider three different sample sizes $ n=200, 500$ and $ 1000$.

\subsection{Simulation for screening}
\label{Simulation for screening}
We perform our screening procedure (denoted by ``joint") and report the coverage proportion of $ \mathcal{M}_1 $, which is defined as $\frac{|\widehat{\mathcal{M}} \cap \mathcal{M}_1|}{ |\mathcal{M}_1|}$, where the size of the selected set $ |\widehat{\mathcal{M}}|$ changes from $ 1 $ to $ 100$. In addition, we report the coverage proportion for each of  the confounding and precision variables, i.e. each of the  $j$'s in the set ${\mathcal{M}}_1=\{ 1,2,3,104,105,106\}$. All the coverage proportions are averaged over $ 100 $ Monte Carlo runs. 

To control the changing size of the $|\widehat{\mathcal{M}}|$, we first set $|\widehat{\mathcal{M}}_1^*|= |\widehat{\mathcal{M}}_2|=1 $ by specifying appropriate $ \widehat{\gamma}_{1,n}$ and $ \widehat{\gamma}_{2,n}$. Then we sequentially add two variables, one to $\widehat{\mathcal{M}}_1^*$ by increasing $ \widehat{\gamma}_{1,n}$ and one to $ \widehat{\mathcal{M}}_2$ by increasing $ \widehat{\gamma}_{2,n}$, until $|\widehat{\mathcal{M}}|$ reaches $ 100 $. Note that we always keep $|\widehat{\mathcal{M}}_1^*|= |\widehat{\mathcal{M}}_2|$ in the procedure. We may not obtain all the sizes between $ 1 $ and $ 100 $ because $|\widehat{\mathcal{M}}|$ may increase by at most $ 2 $. Therefore, for those sizes that cannot be reached, we use a linear interpolation to estimate the coverage proportion of $ \mathcal{M}_1 $ by using the closest two end points. 

We  compare the proposed joint screening procedure to two competing procedures. The first is an outcome screening procedure that selects set $\widehat{\mathcal{M}}_1^*.$ For fair comparison, we let $ |\widehat{\mathcal{M}}_1^*| $ range from $ 1 $ to $ 100 $. The second is an intersection screening procedure, that selects set $\widehat{\mathcal{M}}_\cap = \widehat{\mathcal{M}}_1^* \cap \widehat{\mathcal{M}}_2$. We let $ |\widehat{\mathcal{M}}_\cap| $ range from $ 1 $ to $ 100 $, while keeping $|\widehat{\mathcal{M}}_1^*|= |\widehat{\mathcal{M}}_2|$. Similarly, for those specific sizes that $|\widehat{\mathcal{M}}^*| $ cannot reach, we use linear interpolation to estimate the coverage proportions. We plot the results for  $(n, s, \sigma) = (200, 5000, 1)$ in Figure \ref{sim1step1n200sigma1}. The remaining results for $(n,s,\sigma) = (200, 5000, 0.5)$, $ (500,5000,1)$, $(500,5000,0.5)$,  $(1000,5000,1)$ and $(1000,5000,0.5)$ can be found in Figures \ref{sim1step1n200sigma025} -- \ref{sim1step1n1000sigma025} of the supplementary material.

\begin{figure}[htbp]
\captionsetup[subfigure]{justification=centering}
\centering
 \subcaptionbox{Confounder: strong \\ outcome, weak exposure}[0.45\linewidth]
 {\includegraphics[width=6cm,height=3.5cm]{./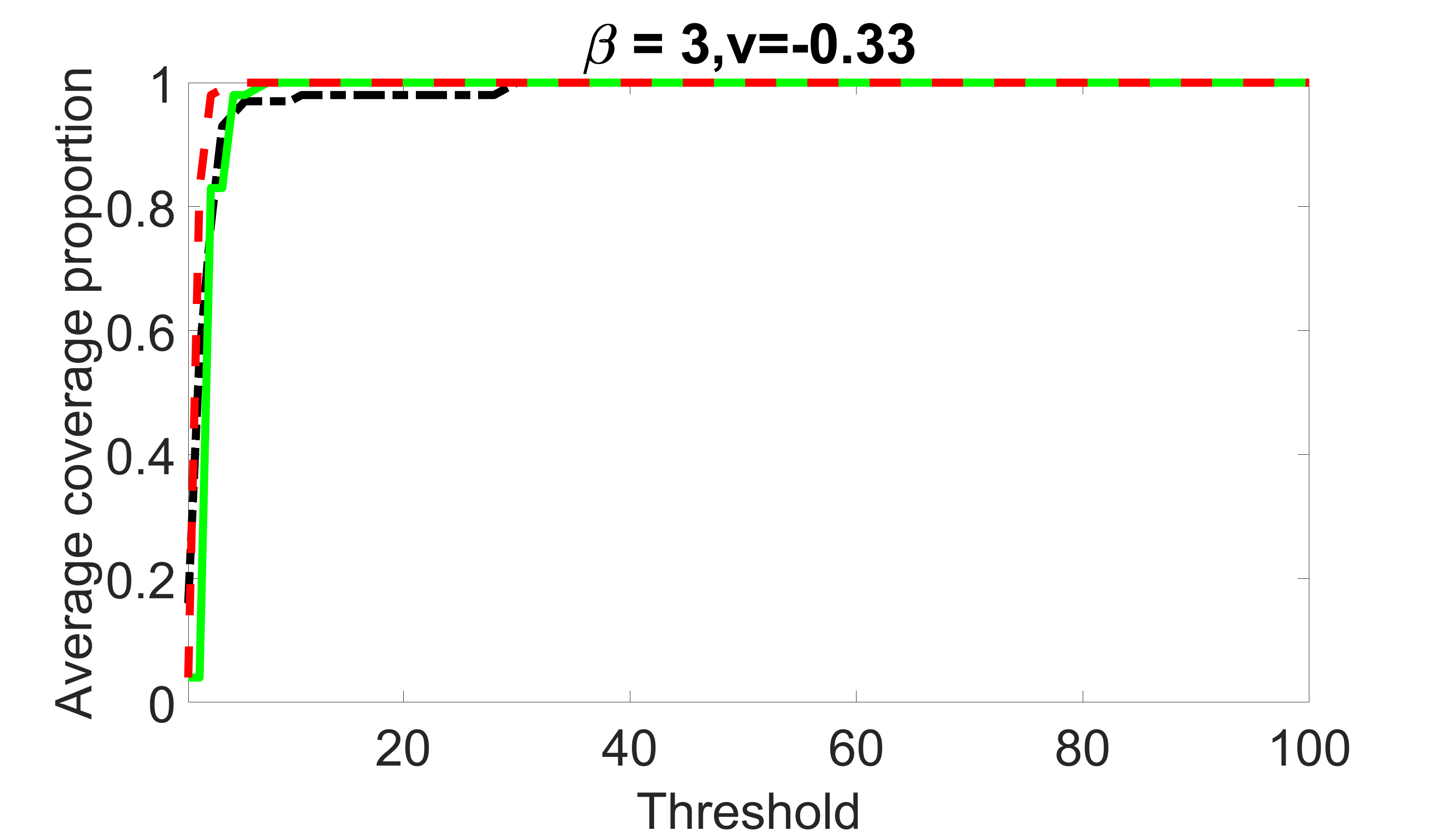}}
 \subcaptionbox{Confounder: medium \\ outcome, medium exposure}[0.45\linewidth]
 {\includegraphics[width=6cm,height=3.5cm]{./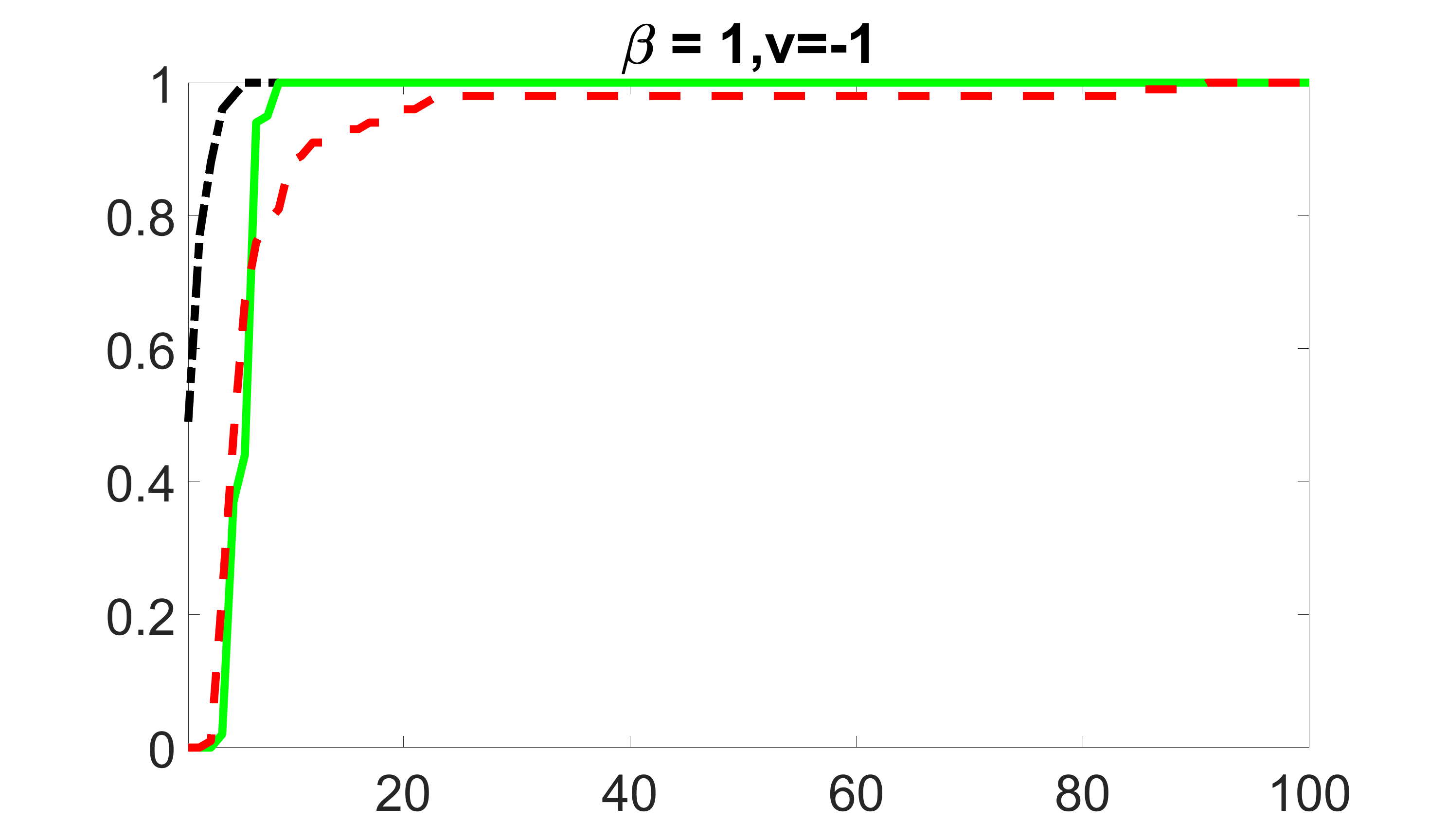}}
  \subcaptionbox{Confounder: weak \\ outcome, strong exposure}[0.45\linewidth]
 {\includegraphics[width=6cm,height=3.5cm]{./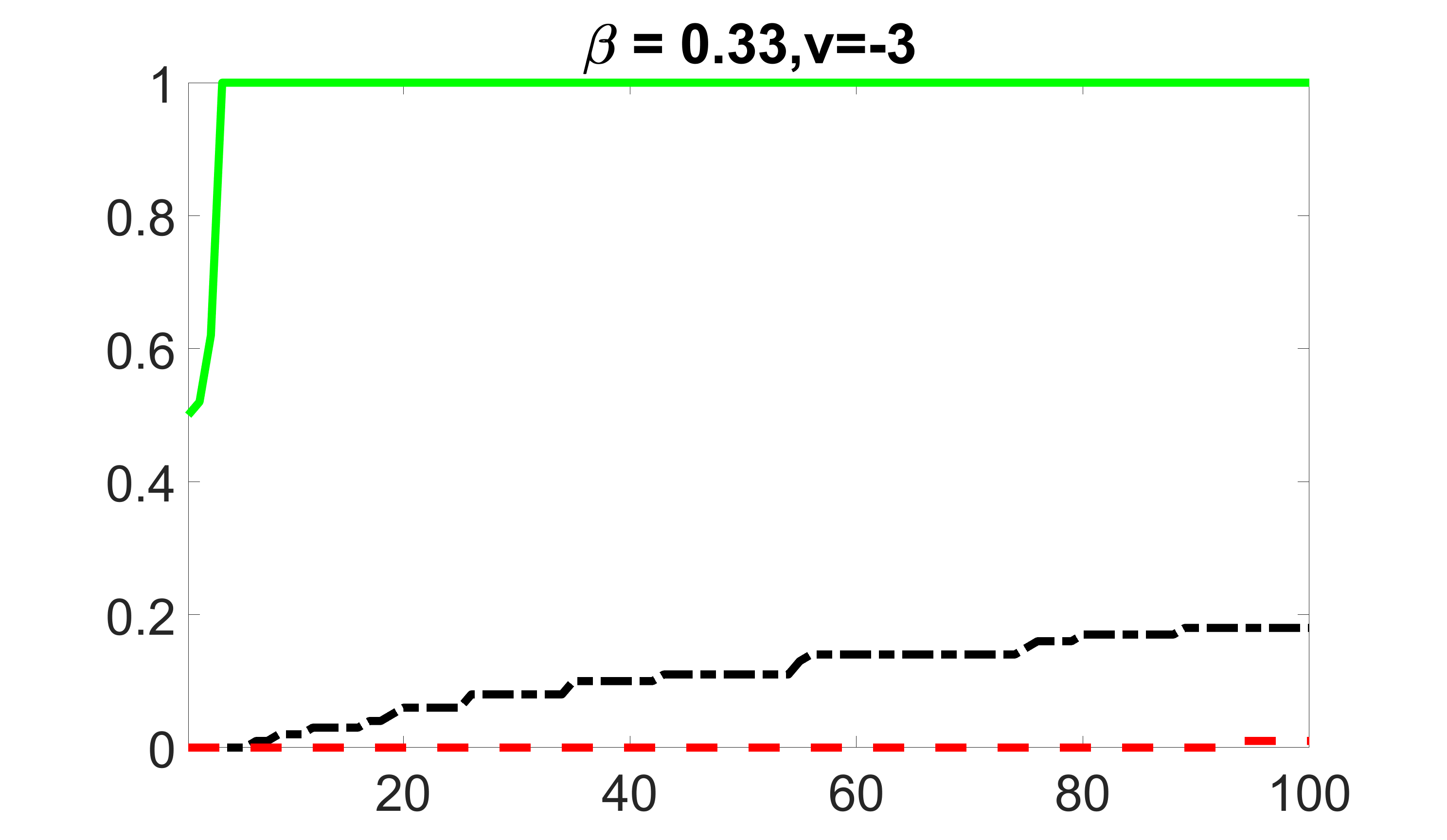}}
  \subcaptionbox{Precision: strong \\ outcome, zero exposure}[0.45\linewidth]
 {\includegraphics[width=6cm,height=3.5cm]{./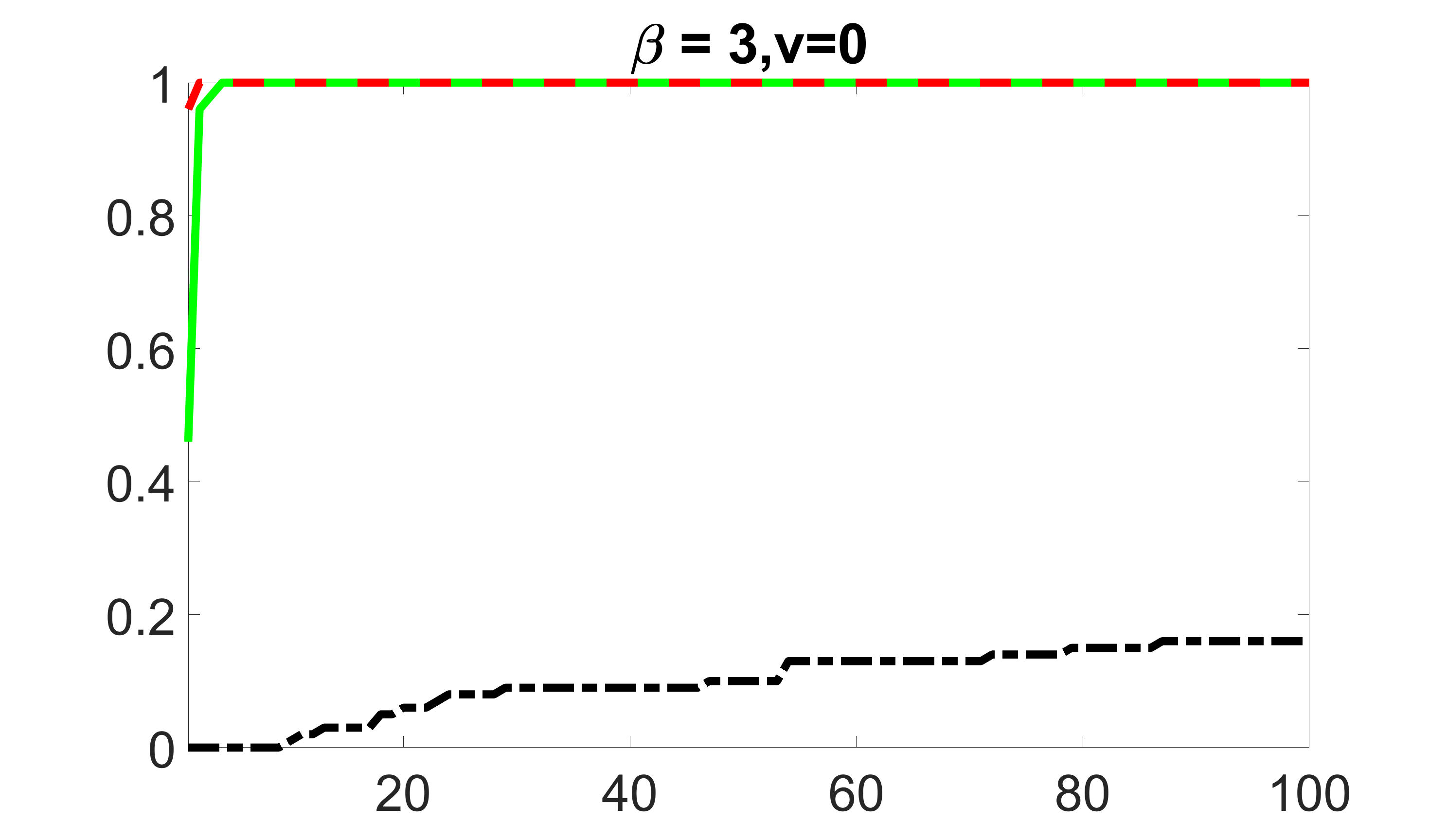}}
  \subcaptionbox{Precision: medium \\ outcome, zero exposure}[0.45\linewidth]
 {\includegraphics[width=6cm,height=3.5cm]{./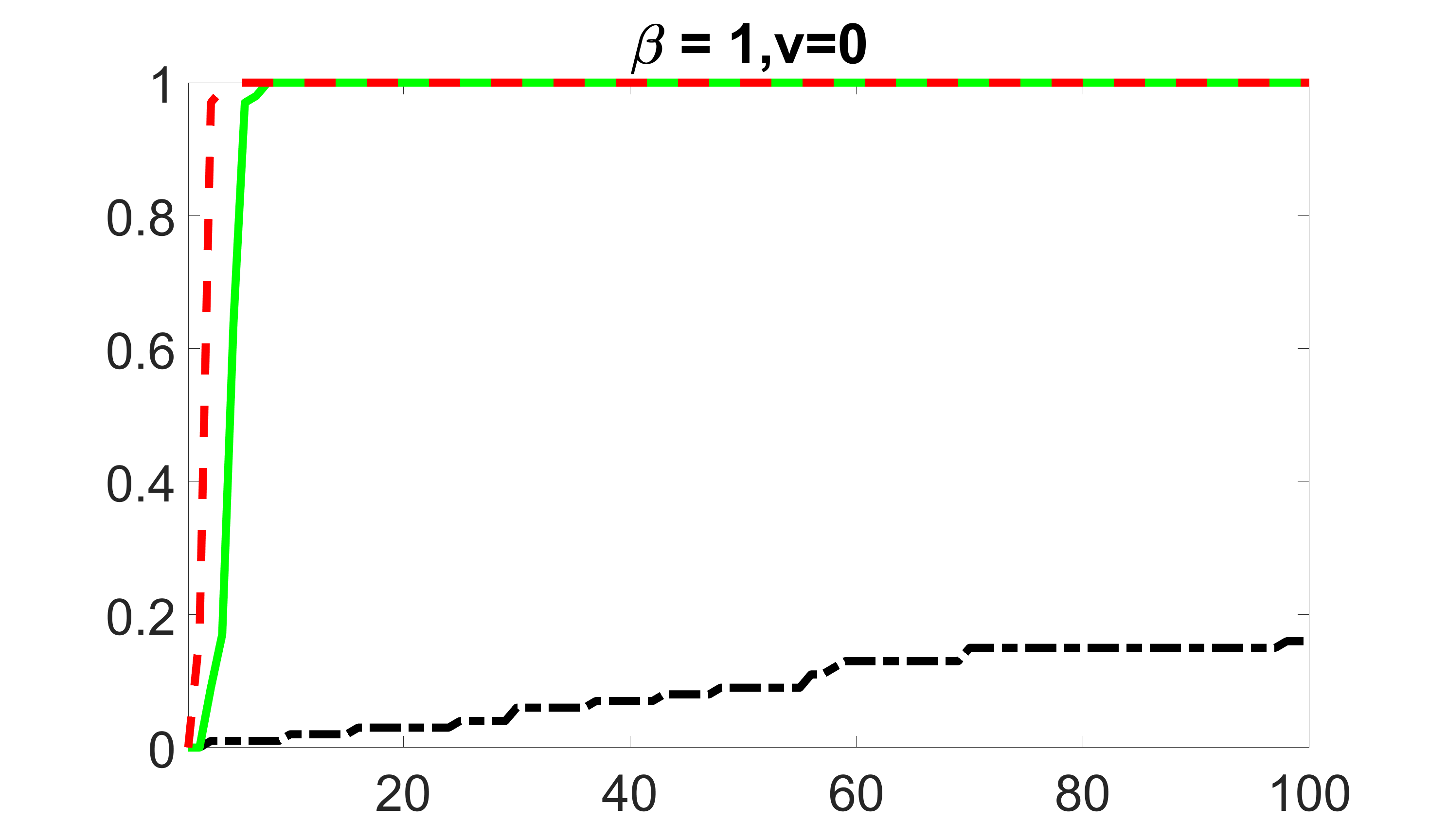}}
  \subcaptionbox{Precision: weak \\ outcome, zero exposure}[0.45\linewidth]
 {\includegraphics[width=6cm,height=3.5cm]{./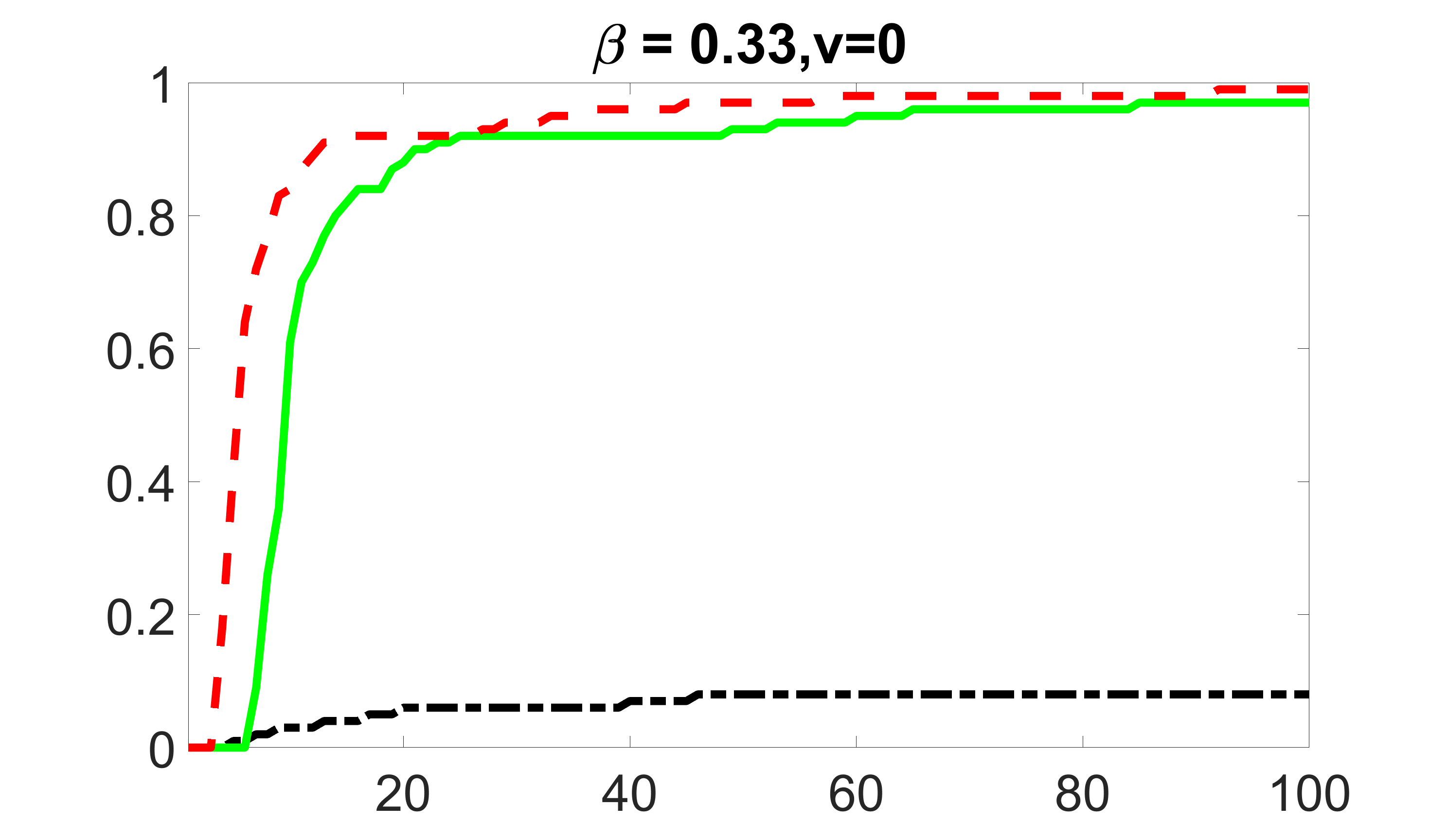}}
  \subcaptionbox{Overall coverage of $\mathcal{M}_1$}[0.45\linewidth]
 {\includegraphics[width=6cm,height=3.5cm]{./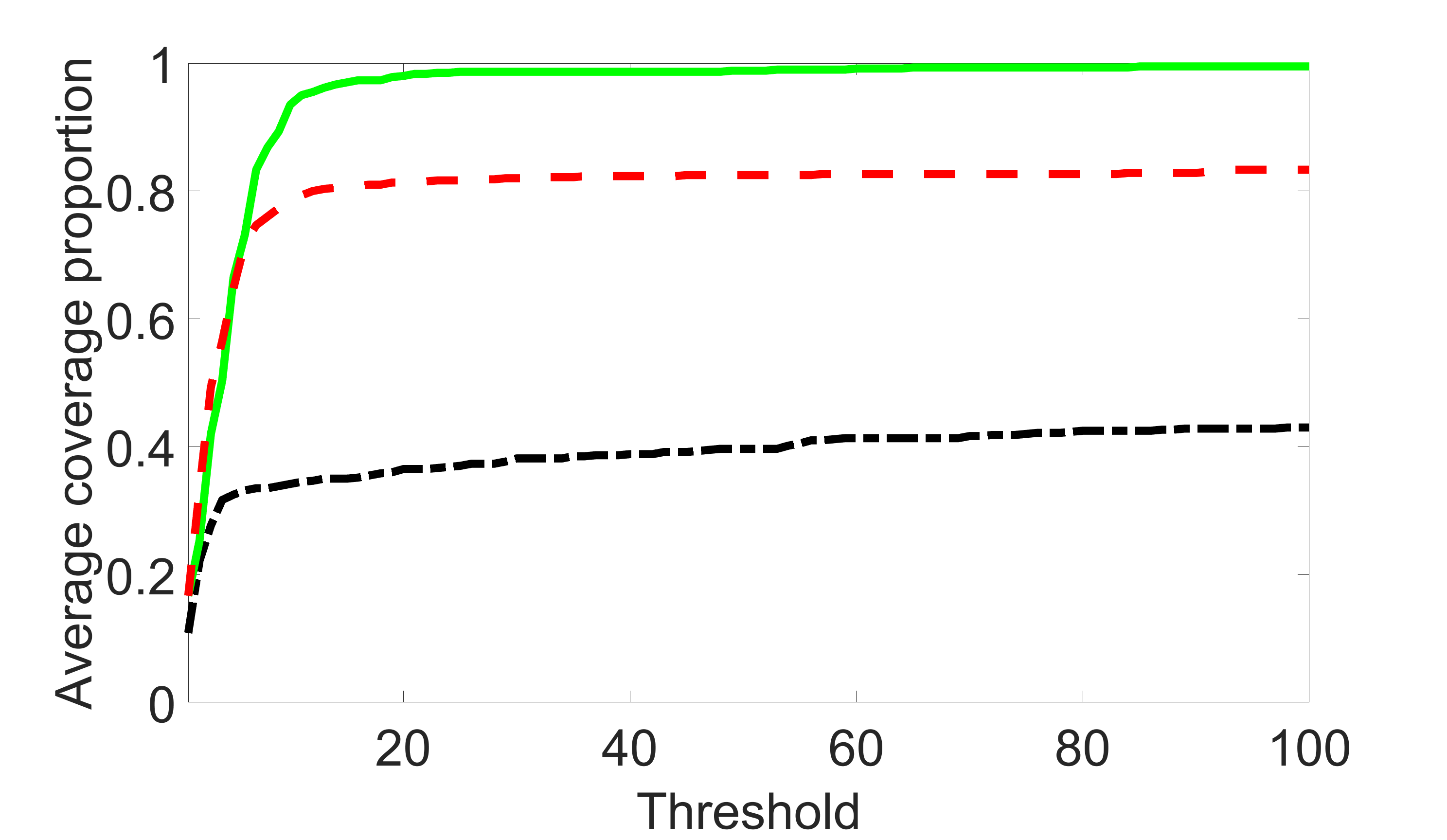}}
\caption{Simulation results for the case $(n,s,\sigma) = (200,5000,1)$: Panels (a) -- (f) plot the average coverage proportion for $X_l$, where $l=1,2,3,104,105$ and $106$. Panels (a) -- (c) correspond to strong outcome and weak exposure predictor, moderate outcome and moderate exposure predictor and weak outcome and strong exposure predictor; Panels (d) -- (f) correspond to strong, moderate and weak predictors of outcome only. Panel (g) plots the average coverage proportion for the index set $\mathcal{M}_1 = \{1,2,3,104,105,106\}$. The x-axis represents the size of $\widehat{\mathcal{M}} $, while
y-axis denotes the average proportion. The green solid, the red dashed and the black dash dotted lines denote our joint screening method, the outcome screening method, and the intersection screening method, respectively.}
\label{sim1step1n200sigma1}
\end{figure}

From the plots, one can see that both the ``intersection'' and ``outcome" screening methods miss the confounder $ X_3 $ with a very high probability even as the size of the selected set approaches $100$. In contrast,  our method can select $ X_3 $ with high probability when $ |\widehat{\mathcal{M}}| $ is relatively small. For confounders $ X_1$ and $ X_2 $, all three methods perform similarly. For the precision variables, the ``outcome" method and our ``joint" method perform similarly in covering these variables, while the ``intersection'' performs badly. Combining the results, one can see that our method performs 
the best as our method selects all the confounders and precision variables with high probabilities. In addition, we find that the coverage proportion of our method increases when the sample size increases, which validates the sure independence screening property developed in Section \ref{Theoretical guarantees} of the supplementary material.

\subsection{Simulation for estimation}
\label{Simulation for estimation}
In this part, we evaluate the performance of our estimation procedure after the first-step screening. For the size of $ \widehat{\mathcal{M}} $ in the screening step, we set $|\widehat{\mathcal{M}}| = \lfloor n / \log(n) \rfloor$. 
We compare the proposed estimate with the oracle estimate, which is calculated by adjusting for the ideal adjustment set including only confounders and precision variables as $X$ and then estimate $\bm{B}$ by using the optimization (\ref{min1}) without imposing the $l_{1}$-regularization. We report the mean squared errors (MSEs) for $\bm{\beta}$ and $\bm{B}$ defined as
$||{\bm\beta}_{}-\widehat{\bm\beta}||_2^2$ and $ \|{\bm{B}}-\widehat{\bm{B}}\|_F^2$, respectively. Table $\ref{sim1t1oracle}$ summarizes the average MSEs of the proposed and oracle estimates for $\bm\beta$ and $\bm{B}$ among 100 Monte Carlo runs when $n = 200$, $500$ and $1000$. We can see that the MSE decreases as the sample size increases. In terms of the primary parameter of interest $\bm{B}$, the proposed estimate is close to the oracle estimate.

\begin{table}[htbp]
\centering
\caption{Simulation results of the proposed joint screening method and oracle estimates for $ \sigma=1 $ and $ \sigma = 0.5 $, when $n=200$, $500$ and $1000$: the average MSEs for $\bm\beta$ and $ {\bm B}$, and their associated standard errors in the parentheses are reported. The results are based on 100 Monte Carlo repetitions.}
\begin{tabular}{ rcc rcc }
$\sigma = 1.0$ & MSE $\bm\beta$ & MSE ${\bm{B}}$ &$\sigma = 0.5$  &MSE $\bm\beta$ & MSE ${\bm{B}}$  \\
\hline
\multicolumn{6}{c}{n = 200}\\
Proposed   &0.496(0.021)&0.667(0.005) & Proposed    &0.276(0.009)&0.528(0.005)\\
Oracle   &0.086(0.005)&0.624(0.004)&Oracle   &0.021(0.001)&0.501(0.004)\\
\multicolumn{6}{c}{n = 500}\\
Proposed   &0.303(0.008)&0.574(0.006)& Proposed    &0.191(0.005)&0.345(0.004)\\
Oracle   &0.036(0.002)&0.553(0.005)&Oracle   &0.006(0.000)&0.340(0.004)\\
\multicolumn{6}{c}{n = 1000}\\
Proposed   &0.217(0.004)&0.449(0.004)& Proposed    &0.128(0.006)&0.234(0.002)\\
Oracle   &0.013(0.001)&0.460(0.005)&Oracle   &0.003(0.000)&0.233(0.002)\\
\end{tabular}
\label{sim1t1oracle}
\end{table}

We plot the 2D map of  $\widehat{\bm B}$ 
based on the average of 100 Monte Carlo runs in Figure \ref{sim1Btn1000sigma025}(c). For comparison, we also plot the corresponding average oracle estimate $\widehat{{\bm B}}_{\rm{oracle}}$ in Figure \ref{sim1Btn1000sigma025}(b) and the true ${\bm B}$ in Figure \ref{sim1Btn1000sigma025}(a). 
One can see that the proposed method recovers the signal pattern reasonably well.

\begin{figure}[htbp]
\captionsetup[subfigure]{justification=centering}
\centering
 \subcaptionbox{Truth}[0.30\linewidth]
 {\frame{\includegraphics[scale = 2]{./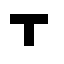}}}
  \subcaptionbox{Oracle }[0.30\linewidth]
 {\frame{\includegraphics[scale = 2]{./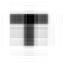}}}
  \subcaptionbox{Proposed }[0.30\linewidth]
 {\frame{\includegraphics[scale = 2]{./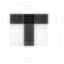}}}
\caption{Panel (a) plots the true ${\bm B}$ (Truth), Panel (b) 
plots the average of $\widehat{\bm B}_{\rm{oracle}}$ (Oracle), Panel (c) 
plots the average of $\widehat{\bm B}$ (proposed).
Here $(n,s,\sigma) = (1000,5000,0.5)$. The value at each pixel is a gray scale with 0 as white and 0.0408 as black.}
\label{sim1Btn1000sigma025}
\end{figure}

We also report the sensitivity and specificity of the estimates in Section \ref{Sensitivity and specificity analysis of simulation} of the supplementary material. We have found that although the proposed method may not remove all of the instrumental variables, eliminating even just some of the instruments greatly reduces the MSEs of both $\bm\beta$ and ${\bm{B}}$, compared to the method where we do not impose $l_{1}$-regularization on $\bm\beta$ in the second-step estimation.

\subsection{Screening and estimation using blockwise joint  screening}
\label{Group Selection Using Linkage Disequilibrium Information main}
Linkage disequilibrium (LD) is a ubiquitous biological phenomenon where genetic variants present a strong blockwise correlation (LD) structure \citep{wall2003haplotype}. In Section \ref{Blockwise joint screening}, we propose the blockwise joint screening procedure to appropriately utilize LD blocks' structural information. The performance of this procedure is illustrated in this section using
an adapted simulation based on the settings of \cite{dehman2015performance}.

For $i=1,\ldots,n$,  $X_i \in \mathbb{R}^s$ is independently generated from an $s$-dimensional multivariate distribution  $N(\bm{0},\bm\Sigma_x)$, where $\bm\Sigma_x = (\sigma_{x,ll^\prime})$ is block-diagonal. If $l \neq l^\prime$ are in the same block, the covariance $\sigma_{x,ll^\prime} = 0.4$, else $\sigma_{x,ll^\prime} = 0$, and the diagonal elements $\sigma_{x,ll}$s are all set to $1$.
We set $X_{ij}$ to $0$, $1$ or $2$ according to whether $X_{ij} < d_1$, $d_1 \leq X_{ij} \leq d_2$, or $X_{ij} > d_2$, where $d_1$ and $d_2$ are thresholds determined for producing a given minor allel frequency (MAF). For instance, choosing $d_1 = \Phi^{-1}(1-6\mathrm{MAF}/5)$ and $d_2 = \Phi^{-1}(1-2\mathrm{MAF}/5)$, where $\Phi$ is the c.d.f. of standard normal distribution, corresponds to a given fixed MAF. 
In order to generate more realistic MAF distributions, we simulate genotype $X_{ij}$, where the MAF for each $j$ is uniformly sampled between 0.05 and 0.5 \citep{dehman2015performance}.

Adapting the simulation setting of Section \ref{Simulation for screening} according to \cite{dehman2015performance}, we set $s=5000$, with $300$ blocks of covariates of size $2, 4, 6, 12, 24, 52$ replicated $50$ times. We perform 100 Monte Carlo runs, and the ordering of the block is drawn at random for each run. The settings for $ {\bm B}$ and $ {\bm C}$ remain the same as before: $ {\bm B} $ is as in Figure 5(a), and $ {\bm C} $ is as in Figure 5(b). Further 
we set $ {\bm C}_l=v_l*\bm{C}$, where $ v_1=-1/3$, $v_2=-1$, $v_3=-3$, $ v_{207}=-3$, $v_{208}=-1$, $v_{209}=-1/3$, and $ v_l=0 $ for $ 4\leq l\leq 206$ and $ 210\leq l \leq s $. 
We set $ \beta_1=3$, $\beta_2=1$, $\beta_3=1/3$, $ \beta_{104}=3$, $\beta_{105}=1$, $\beta_{106}=1/3$,  
$\beta_{j}$ = $1/4$ for $j \in \mathcal{P}_{LD}$, and $ \beta_l=0$ otherwise. 
Here $\mathcal{P}_{LD}$ is a randomly selected block consisting of $K$ consecutive indices from $\{210,211,\ldots,s\}$, where $K \in \{2,4,6,12,24, 52\}$. 
We have $\mathcal{C} = \{1, 2, 3\}$, $\mathcal{P} = \{ 104, 105, 106\} \cup \mathcal{P}_{LD}$, $\mathcal{I} =\{ 207, 208, 209\}$ and $\mathcal{S} = \{1 \ldots, 5000\} \backslash  ( \mathcal{C} \cup \mathcal{P} \cup \mathcal{I})$. The other settings are the same as the ones in Section \ref{Simulation for screening}.

We consider three different sample sizes $n= 200$, $500$ and $1000$. We first perform the screening procedure and report the coverage proportion of $\mathcal{M}_1 = \mathcal{C} \cup \mathcal{P}$. We also report the coverage proportion for each of the confounding and precision variables. 
In particular, we include the screening results, 
in which $s=5000$, $\sigma=1$, $n=200,500,1000$, and $K =2,4,6,12,24,52$, can be found in Figures \ref{sim3step1n200sizesig2sigma1} -- \ref{sim3step1n1000sizesig52sigma1} of the supplementary material. 
 
 From the plots, 
 one can see that the blockwise joint screening method (blue dotted line) selects $\mathcal{P}_{LD}$ and $\mathcal{M}_1$ with higher probability compared with the original joint screening method (green solid line). Based on the results, the blockwise joint screening method can better utilize precision variables with block structures to select $\mathcal{M}_1$.

 In addition, we evaluate the performance of the two proposed estimation procedure after the first-step screening. For the sizes of $ \widehat{\mathcal{M}} $ and $ \widehat{\mathcal{M}} ^{block} $ in the screening step, we set $|\widehat{\mathcal{M}}| = \lfloor n / \log(n) \rfloor$ for the original joint screening procedure, and $|\widehat{\mathcal{M}}^{block}| = 2 \lfloor n / \log(n) \rfloor$ for the blockwise joint screening procedure. We report the average MSEs for $\bm{\beta}$ and $\bm{B}$ when $(n, s, \sigma) = (200,5000, 1)$ in Table \ref{sim1t2main}. The complete results, in which $s=5000$, $\sigma=1$, $n=200, 500, 1000$, and $K =2,4,6,12,24,52$, can be found in Table \ref{sim1t2} of the supplementary material.

In summary, the blockwise joint screening estimate outperforms the original joint screening estimate when the sample size $n$ is small or block size of precision variables $K$ is large. For the rest of the scenarios, there are no significant differences between the two methods.

\begin{table}[htbp]
\centering
\caption{Simulation results for $ (n,s,\sigma) =(200,5000, 1) $: the average MSEs for $\bm\beta$ and $ {\bm B}$, and their associated standard errors in the parentheses are reported. The left panel summarizes the results from the joint screening method; the right panel summarizes the results from the blockwise joint screening method. The results are based on 100 Monte Carlo repetitions. }
\begin{tabular}{ crr | crrrr }
Proposed & MSE $\bm\beta$ & MSE ${\bm{B}}$ & Proposed (block) &MSE $\bm\beta$ & MSE ${\bm{B}}$  \\
\hline
K=2&1.423(0.096)&0.785(0.009)&K=2&1.390(0.090)&0.793(0.010)\\
K=4&1.667(0.096)&0.815(0.011)&K=4&1.548(0.088)&0.805(0.010)\\
K=6&1.955(0.101)&0.826(0.010)&K=6&1.701(0.084)&0.816(0.009)\\
K=12&2.466(0.096)&0.890(0.039)&K=12&2.223(0.129)&0.838(0.011)\\
K=24&2.533(0.164)&0.847(0.014)&K=24&2.136(0.138)&0.821(0.010)\\
K=52&14.650(0.815)&2.034(0.487)&K=52&13.693(0.728)&1.870(0.459)\\
\end{tabular}
\label{sim1t2main}
\end{table}

In the supplementary material, 
we assess the variable screening results for various sparsity levels of instrumental variables in Section \ref{Screening under different sparsity levels}, evaluate the performance of our estimation procedure for different covariances of exposure errors in Section \ref{Screening under different covariance structures of exposure errors}, and assess the sensitivity of the choices for different sizes of $\widehat{\mathcal{M}}_{1}^{*}$ and $\widehat{\mathcal{M}}_{2}$ in Section \ref{Screening and estimation under different sizes}.

\section{Discussion}
\label{sec:discussion}
This paper aims at mapping the complex GIC pathway for AD.  The unique features of the hippocampal morphometry surface measure data motivate us to develop a computationally efficient two-step screening and estimation procedure, which can select biomarkers among more than $6$ million observed covariates and estimate the conditional association simultaneously. If there was no unmeasured confounding, then the conditional association we estimate corresponds to the causal effect. This is,  however, not the case in the ADNI study because we have unmeasured confounders such as A$\beta$ and tau protein levels. To control for unmeasured confounding and estimate the causal effect, one possible approach is to use the generic variants as potential instrumental variables \citep[e.g.][]{lin2015regularization}.

There are a number of other important directions for future work. Firstly, the vast majority of AD, known as ``polygenic AD'', is influenced by the actions and interactions of multiple genetic variants simultaneously, likely in concert with non-genetic factors, such as environmental exposures and lifestyle choices among many others \citep{bertram2020genomic}. Therefore, various types of interaction effects, such as genetic-genetic and imaging-genetic, could be incorporated into the outcome generating model (\ref{outcomemodel}). However, this may significantly increase the computation as the dimension of genetic relevant covariates will increase from $6,087,205$ to more than $ 90$ billion covariates, if we add all the possible imaging-genetics interaction terms. 
One may consider interaction screening procedures \citep{hao2014interaction} as the first-step.
Secondly, this study simply removes observations with missingness. Accommodation of missing exposure, confounders and outcome under the proposed model framework is of great practical value and worth further investigation. 
Thirdly, baseline diagnosis status is an important effect modifier, as the effect of hippocampus shape on behavioral measures can be different across the CN/MCI/AD groups. 
However, the relatively small sample size in the ADNI study does not allow us to conduct a reliable subgroup analysis.
The subgroup analyses are   pertinent for further exploration when a larger sample size is available.
Fourthly, in the ADNI dataset, there are longitudinal ADAS-13 scores observed at different months as well as other longitudinal behavioral scores obtained from Mini-Mental State Examination and Rey Auditory Verbal Learning Test, which can provide a more comprehensive characterization of the behavioral deficits. Integrating these different scores as a multivariate longitudinal outcome to improve the estimation of the conditional association requires substantial effort for future research.
Lastly,  one could consider incorporating information from other brain regions. For instance, an entorhinal tau may exist on episodic memory decline through other brain regions, such as the medial temporal lobe \citep{maass2018entorhinal}.

\section*{Supplementary Material}
\label{supplements}
Supplementary material available online contains detailed derivations and explanations of the main algorithm, ADNI data usage acknowledgement, image and genetic data preprocessing steps, screening results and sensitivity analyses of the ADNI data application with a subgroup analysis including only MCI and AD patients, detailed procedure and results for the SNP-imaging-outcome mediation analyses, additional simulation results, theoretical properties of the proposed procedure including the main theorems, assumptions needed for our main theorems, and proofs of auxiliary lemmas and main theorems.

\section*{Acknowledgement}
The authors thank the editor, associate editor and referees for their constructive comments, which have substantially improved the paper. Yu was partially supported by the Canadian Statistical Sciences Institute (CANSSI) postdoctoral fellowship and the start-up fund of University of Texas at Arlington. Wang and Kong were partially supported by the Natural Science and Engineering Research Council of Canada and the CANSSI Collaborative Research Team Grant. Zhu was partially supported by the NIH-R01-MH116527.

\newpage

 \if0\blind
 {\centering
  \title{\bf \LARGE Supplementary Material for \\``Mapping the Genetic-Imaging-Clinical Pathway with Applications to Alzheimer's Disease''}

  \maketitle
\begin{center}
  \author{\large Dengdeng Yu \\
  \vspace{10pt}
  Department of Mathematics, University of Texas at Arlington}\\ \vspace{10pt}
    \author{\large Linbo Wang \\
    \vspace{10pt}
  Department of Statistical Sciences, University of Toronto}\\ \vspace{10pt}
    \author{\large Dehan Kong \\
    \vspace{10pt}
  Department of Statistical Sciences, University of Toronto}\\ \vspace{10pt}
  \author{\large Hongtu Zhu \\
  \vspace{10pt}
  Department of Biostatistics, University of North Carolina, Chapel Hill }\\
  \vspace{10pt}
 {\large for the Alzheimer's Disease Neuroimaging Initiative
\footnote[1]{
Data used in preparation of this article were obtained from the Alzheimer's Disease
Neuroimaging Initiative (ADNI) database (adni.loni.usc.edu). As such, the investigators within the ADNI contributed to the design and implementation of ADNI and/or provided data but did not participate in analysis or writing of this report. A complete listing of ADNI investigators can be found at: \url{http://adni.loni.usc.edu/wp-content/uploads/how_to_apply/ADNI_Acknowledgement_List.pdf}.} }
\end{center}
\newpage
} \fi

 \if1\blind
 {
  \title{\bf Supplementary Material for ``Mapping the Genetic-Imaging-Clinical   Pathway with Applications to Alzheimer's Disease''}
  \maketitle
} \fi

\spacingset{1.7} 

The supplementary file is organized as follows. The detailed description of the main algorithm is included in Section \ref{commentforalgorithm}. We include the ADNI data usage acknowledgement in Section \ref{usage} and image and genetic data preprocessing steps in Section \ref{image genetic preprocessing}. In Section \ref{Screening results of ADNI data applications}, we list the screening results of ADNI data applications.
Section \ref{Sensitivity analysis}  examines the sensitivity of the estimate $\widehat{\bm B}$ from the ADNI data application by varying the relative sizes of $\widehat{\mathcal{M}}_{1}^{*}$ and $\widehat{\mathcal{M}}_{2}$.   
Section \ref{Subgroup analysis ADNI data applications} includes a  subgroup analysis including only the 
mild cognitive impairment (MCI) and  Alzheimer's disease (AD) patients.
We include the detailed SNP-imaging-outcome mediation analysis procedure and results in Section \ref{Results for mediation analyses}. 
In Section \ref{addsimulation}, we list additional simulation results. 
The theoretical properties including the main theorems of our procedure are included in Section \ref{Theoretical guarantees}.   
We state the assumptions needed for the main theorems in Section \ref{assumption}. In Section \ref{auxlemma}, we include the auxiliary lemmas needed for the theorems and their proofs. We give the detailed proofs of our main theorems in Section \ref{proofthm}.

\section{Description and derivation of Algorithm 1}\label{commentforalgorithm}
To solve the minimization problem (\ref{min1}) of the main paper, we utilize the Nesterov optimal gradient method \citep{nesterov1998introductory}, which has been widely used in solving optimization problems for non-smooth and non-convex objective functions \citep{beck2009fast, zhou2014regularized}. 

Before we introduce Nesterov's gradient algorithm, we first state two propositions on shrinkage thresholding formulas \citep{beck2009fast, cai2010singular}. 

\begin{prop}
\label{prop1}
For a given matrix $\bm{A}$ with singular value decomposition $\bm{A} = \bm{U} \mathrm{diag}(\bm{a}) \bm{V}^{\T}$, where $\bm{a}=(a_1, \ldots, a_m)^{\T} $ with $\{a_k\}_{1 \leq k \leq r}$ 
being $\bm{A}$'s singular values, the optimal solution to
\[
\min_{\bm{B}} \left\{ \frac{1}{2} \| \bm{B} - \bm{A} \|_F^2 +  \lambda \| \bm{B}\|_*\right\}
\]
share the same singular vectors as $\bm{A}$ and its singular values are $b_k = (a_k - \lambda)_+$ for $k =1,\ldots, r$, where $ (a_k)_+=\max(0, a_k) $. 
\end{prop}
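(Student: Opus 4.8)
The plan is to exploit strict convexity to reduce the claim to a single first-order optimality condition, and then to verify that the proposed thresholded matrix satisfies it. First I would observe that the objective $g(\bm{B}) = \tfrac{1}{2}\|\bm{B}-\bm{A}\|_F^2 + \lambda\|\bm{B}\|_*$ is the sum of a strictly convex quadratic and the (convex) nuclear norm; it is therefore strictly convex and coercive, so it has a unique global minimizer. Since $g$ is convex, $\widehat{\bm{B}}$ is that minimizer if and only if $\bm{0}\in\partial g(\widehat{\bm{B}})$, which, because the quadratic term is differentiable with gradient $\widehat{\bm{B}}-\bm{A}$, is equivalent to the inclusion $\bm{A}-\widehat{\bm{B}}\in\lambda\,\partial\|\widehat{\bm{B}}\|_*$. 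Uniqueness is free from strict convexity, so the whole task collapses to exhibiting one matrix satisfying this inclusion.

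The key ingredient is the standard description of the subdifferential of the nuclear norm, which I would either cite from \citet{cai2010singular} or reproduce in a line or two. If $\bm{B}$ has compact singular value decomposition $\bm{B}=\bm{U}_0\Sigma_0\bm{V}_0^{\T}$ with $\Sigma_0$ collecting only the strictly positive singular values, then $\partial\|\bm{B}\|_* = \{\bm{U}_0\bm{V}_0^{\T}+\bm{W}:\ \bm{U}_0^{\T}\bm{W}=\bm{0},\ \bm{W}\bm{V}_0=\bm{0},\ \|\bm{W}\|_{op}\le 1\}$, i.e. the subgradient is $\bm{U}_0\bm{V}_0^{\T}$ plus any contraction supported on the orthogonal complements of the column and row spaces. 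This characterization is the real content of the argument; everything after it is bookkeeping.

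With these two facts in hand, I would test the candidate $\widehat{\bm{B}} = \bm{U}\,\mathrm{diag}(\bm{b})\,\bm{V}^{\T}$ with $b_k=(a_k-\lambda)_+$. Ordering the singular values of $\bm{A}$ and letting $r_0$ be the number of them exceeding $\lambda$, the matrices $\bm{U}_0=[\bm{u}_1,\ldots,\bm{u}_{r_0}]$ and $\bm{V}_0=[\bm{v}_1,\ldots,\bm{v}_{r_0}]$ are the singular vectors of $\widehat{\bm{B}}$. Writing $\bm{A}-\widehat{\bm{B}} = \lambda\,\bm{U}_0\bm{V}_0^{\T}+\bm{W}$ with $\bm{W}=\sum_{k>r_0} a_k\,\bm{u}_k\bm{v}_k^{\T}$, I would check the three conditions: orthogonality $\bm{U}_0^{\T}\bm{W}=\bm{0}$ and $\bm{W}\bm{V}_0=\bm{0}$ follows from orthonormality of the singular vectors, and $\|\lambda^{-1}\bm{W}\|_{op}=\lambda^{-1}a_{r_0+1}\le 1$ since every remaining singular value is at most $\lambda$. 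Hence $\lambda^{-1}(\bm{A}-\widehat{\bm{B}})\in\partial\|\widehat{\bm{B}}\|_*$, which is exactly the optimality inclusion, so $\widehat{\bm{B}}$ is the (unique) minimizer and shares $\bm{A}$'s singular vectors with singular values $(a_k-\lambda)_+$. The main obstacle, if I did not simply cite it, would be establishing the subdifferential formula for $\|\cdot\|_*$; the remaining verification is routine once the index split at $\lambda$ is set up.
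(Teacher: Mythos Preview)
Your argument is correct. Note, however, that the paper does not actually prove this proposition: it is stated as a known shrinkage formula with a citation to \citet{cai2010singular} (and \citet{beck2009fast}), so there is no ``paper's own proof'' to compare against. What you have written is essentially the proof of Theorem~2.1 in \citet{cai2010singular}: reduce to the first-order inclusion $\bm{A}-\widehat{\bm{B}}\in\lambda\,\partial\|\widehat{\bm{B}}\|_*$, invoke the subdifferential characterization of the nuclear norm, and verify the candidate by splitting the indices at the threshold $\lambda$. The only minor caveat is the boundary case $a_k=\lambda$ (and in particular $\widehat{\bm{B}}=\bm{0}$), but your inequality $\|\lambda^{-1}\bm{W}\|_{op}\le 1$ already covers it since you defined $r_0$ via strict inequality $a_k>\lambda$.
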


\begin{prop}
\label{prop2}
For vectors $\bm{a}=(a_1, \ldots, a_r)^{\T} $ and $\bm{b}=(b_1, \ldots, b_r)^{\T}$, the optimal solution to 
\[
\min_{\bm b} \left\{  \frac{1}{2}\| \bm{b} - \bm{a} \|_2^2 +  \lambda \| \bm b \|_1\right\}
\]
is $b_k =  \textrm{sgn} (a_k) (|a_k| - \lambda)_+$ for $k =1,\ldots, r$, where $ \textrm{sgn}(\cdot) $ denotes the sign of the number. 
\end{prop}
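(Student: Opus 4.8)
The plan is to exploit the fact that the objective is separable across coordinates. Writing $\|\bm{b}-\bm{a}\|_2^2 = \sum_{k=1}^r (b_k - a_k)^2$ and $\|\bm{b}\|_1 = \sum_{k=1}^r |b_k|$, the objective decomposes as a sum of terms, each depending on a single $b_k$. Hence minimizing over $\bm{b}\in\mathbb{R}^r$ is equivalent to minimizing each scalar function $g(t) = \tfrac{1}{2}(t - a_k)^2 + \lambda |t|$ independently, and it suffices to solve the one-dimensional problem $\min_{t\in\mathbb{R}} g(t)$ for a generic pair $(a_k,\lambda)$ with $\lambda \geq 0$.

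For the scalar problem I would first observe that $g$ is strictly convex (the quadratic term is strictly convex and $\lambda|t|$ is convex) and coercive, so a unique minimizer exists and is characterized by the first-order optimality condition $0 \in \partial g(t)$, where $\partial$ denotes the subdifferential. Using the subgradient calculus, $\partial g(t) = (t - a_k) + \lambda\, \partial|t|$, where $\partial|t| = \{\textrm{sgn}(t)\}$ for $t \neq 0$ and $\partial|0| = [-1,1]$. Then I would carry out the standard three-way case analysis: if the optimal $t > 0$, the condition $t - a_k + \lambda = 0$ forces $t = a_k - \lambda$, which is consistent only when $a_k > \lambda$; symmetrically, $t < 0$ yields $t = a_k + \lambda$ when $a_k < -\lambda$; and $t = 0$ is optimal precisely when $0 \in -a_k + \lambda[-1,1]$, i.e. $|a_k| \leq \lambda$. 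Collapsing these three cases gives the compact form $t = \textrm{sgn}(a_k)\,(|a_k| - \lambda)_+$.

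Reassembling coordinatewise then yields $b_k = \textrm{sgn}(a_k)(|a_k| - \lambda)_+$ for each $k$, as claimed. The only point requiring genuine care is the nondifferentiability of $|t|$ at the origin, which is why I would phrase optimality through the subdifferential rather than an ordinary derivative; once the case $a_k = 0$ (and the boundary cases $|a_k| = \lambda$) are handled via the inclusion $0 \in \partial g(t)$, the remainder is routine. I would not expect any serious obstacle beyond this bookkeeping, since strict convexity guarantees both existence and uniqueness of the minimizer, so the stationarity condition is both necessary and sufficient.
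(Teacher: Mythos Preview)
Your argument is correct and is exactly the standard derivation of the soft-thresholding operator: separability reduces the problem to a one-dimensional strictly convex minimization, and the subdifferential optimality condition $0 \in (t-a_k) + \lambda\,\partial|t|$ yields the three cases that combine into $\textrm{sgn}(a_k)(|a_k|-\lambda)_+$. The paper does not actually supply a proof of this proposition; it simply states it as a known shrinkage-thresholding formula with a citation to \cite{beck2009fast}, so there is nothing further to compare.
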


The Nestrov's gradient method utilizes the first-order gradient of the objective function to produce the next iterate based on the current search point. Differed from the standard gradient descent algorithm, the Nesterov's gradient algorithm uses two previous iterates to generate the next search point by extrapolating, which can dramatically improve the convergence rate. The Nesterov's gradient algorithm for problem (\ref{min1}) is presented as follows.  Denote $l(\bm\beta, \bm B) = \frac{1}{2 n}\sum_{i=1}^n\left( Y_i -\langle \bm{\beta}, X_i\rangle- \langle  \bm{Z}_i, \bm{B}\rangle\right)^2$ and $P(\bm{\beta},\bm{B}) = P_1(\bm{\beta}) + P_2(\bm{B})$, where $  P_1(\bm{\beta}) = \lambda_{1} \sum_l |\beta_l|  $ and $P_2(\bm{B})=  \lambda_{2} || \bm{B}||_*$. We also define
\begin{eqnarray*}
g(\bm\beta,\bm{B}| \bm{s}^{(t)},\bm{S}^{(t)} , \delta) &=& l(\bm{s}^{(t)},\bm{S}^{(t)}) + \left\langle \nabla l(\bm{s}^{(t)}, \bm{S}^{(t)}),\left[ (\bm\beta-\bm{s}^{(t)})^{\T}, \{\mathrm{vec}(\bm{B} - \bm{S}^{(t)})\}^{\T} \right]^{\T} \right\rangle\\
& &+ (2 \delta)^{-1}\left( \left\|\bm\beta-\bm{s}^{(t)} \right\|_2^2+ \left\| \bm{B} - \bm{S}^{(t)}\right\|_F^2 \right)+ P(\bm\beta,\bm{B})\\
 &=& (2 \delta)^{-1}\left[ \left\|\bm\beta - \left\{ \bm{s}^{(t)} - \delta \partial_{\bm{\beta}} l(\bm{s}^{(t)},\bm{S}^{(t)}) \right\} \right\|_2^2  \right.\\
 & & +\left.  \left\|  \mathrm{vec}(\bm{B}) - \left\{ \mathrm{vec}(\bm{S}^{(t)}) - \delta \partial_{\mathrm{vec}(\bm{B})} l(\bm{s}^{(t)}, \bm{S}^{(t)}) \right\} \right\|_2^2  \right] \\
 & &+ P(\bm\beta,\bm{B}) + c^{(t)},
\end{eqnarray*}
where $\nabla l(\bm{\beta},\bm{B}) = [(\partial_{\bm\beta}l)^{\T}, \{ \partial_{\mathrm{vec}(\bm{B})}l\}^{\T}]^{\T} \in \mathbb{R}^{|\widehat{\mathcal{M}}|+pq}$ denotes the first-order gradient of $l(\bm\beta,\bm{B})$ with respect to $\left[ \bm{\beta}^{\T},\{\mathrm{vec}(\bm{B})\}^{\T}\right]^{\T} \in \mathbb{R}^{|\widehat{\mathcal{M}}|+pq}$. We define 
\begin{eqnarray*}
\frac{\partial}{\partial_{\bm\beta}}l(\bm{\beta},\bm{B}) &=& n^{-1} \sum_{i=1}^{n}  X_i \left(\langle \bm{\beta}, X_i\rangle + \langle \bm{B}, \bm{Z}_i\rangle - Y_i\right),\\
\frac{\partial}{\partial_{\mathrm{vec}(\bm{B})}}l(\bm{\beta},\bm{B}) &=& \mathrm{vec} \left\{ n^{-1} \sum_{i=1}^{n}  \bm{Z}_i \left(\langle \bm{\beta}, X_i\rangle + \langle \bm{B}, \bm{Z}_i\rangle - Y_i\right) \right\},
\end{eqnarray*}
with $\partial_{\bm\beta}l(\bm{\beta},\bm{B}) \in \mathbb{R}^{|\widehat{\mathcal{M}}|}$, $\partial_{\mathrm{vec}(\bm{B})}l(\bm{\beta},\bm{B}) \in \mathbb{R}^{pq}$. 
Here $s^{(t)}$ and $S^{(t)}$ are interpolations between $\bm\beta^{(t-1)}$ and $\bm\beta^{(t)}$, and $\bm{B}^{(t-1)}$ and $\bm{B}^{(t)}$ respectively, which will be defined below; $c^{(t)}$ denotes all terms that are irrelevant to $\bm{B}$, and $\delta >0$ is a suitable step size. 
Given previous search points $\bm{s}^{(t)}$ and $\bm{S}^{(t)}$, the next search points $\bm{s}^{(t+1)}$ and $\bm{S}^{(t+1)}$ would be the minimizer of $g(\bm\beta,\bm B| \bm{s}^{(t)},\bm{S}^{(t)} , \delta)$. 
For the search points $\bm{s}^{(t)}$ and $\bm{S}^{(t)}$, they can be generated by linearly extrapolating two previous algorithmic iterates. 
A key advantage of using the Nestrov gradient method is that it has an explicit solution at each iteration.
In fact, minimizing $g(\bm\beta,\bm B| \bm{s}^{(t)},\bm{S}^{(t)} , \delta) $ can be divided into two sub-problems, minimizing $(2 \delta)^{-1} \left\|\bm\beta - \left( \bm{s}^{(t)} - \right. \right.$ $\left. \left. \delta \partial_{\bm\beta} l(\bm{s}^{(t)},\bm{S}^{(t)}) \right) \right\|_2^2 + \lambda_{1} \sum_l |\beta_l| $ and 
$(2 \delta)^{-1}\left\|  \mathrm{vec}(\bm{B}) - \left\{ \mathrm{vec}(\bm{S}^{(t)}) - \right. \right.$ $ \left. \left.  \delta \partial_{\mathrm{vec}(\bm{B})} l(\bm{s}^{(t)}, \bm{S}^{(t)}) \right\} \right\|_2^2 +  \lambda_{2} || \bm{B}||_*$, respectively. These sub-problems can be solved by the shrinkage thresholding formulas in Propositions \ref{prop1} and \ref{prop2}, respectively.

Let $ \bm{X}^{\widehat{\mathcal{M}}}=(X^{\widehat{\mathcal{M}}}_1, \ldots, X^{\widehat{\mathcal{M}}}_n)^{\T}\in \mathbb{R}^{n\times|\widehat{\mathcal{M}}|} $ where $X^{\widehat{\mathcal{M}}}_i$ is $\{ X_{ij}\}^{\T}_{j \in \widehat{\mathcal{M}}} \in \mathbb{R}^{|\widehat{\mathcal{M}}|}$ for $i = 1, \ldots, n$. Define $\bm{Z}_{new} = (\mathrm{vec}(\bm{Z}_1),\ldots,\mathrm{vec}(\bm{Z}_n))^{\T} \in \mathbb{R}^{n \times pq}$ and $\bm{X}_{new} = (\bm{X}^{\widehat{\mathcal{M}}},\bm{Z}_{new}) \in \mathbb{R}^{n \times (|\widehat{\mathcal{M}}|+pq)}$.
For a given vector $\bm{a} = (a_1,\ldots, a_r)^{\T} \in \mathbb{R}^{r}$, $(\bm{a})_+$ is defined as $\{(a_1)_+,\ldots, (a_r)_+\}^{\T} \in \mathbb{R}^{p}$, where $ (a)_+=\max(0, a) $. Similarly,  $\textrm{sgn}(\bm{a})$ is obtained by taking the sign of $\bm{a}$ componentwisely. 
For a given pair of tuning parameters $\lambda_{1}$ and $\lambda_{2}$, \eqref{min1} can be solved by Algorithm \ref{algorithm:solve8}.
\begin{algorithm}[!htb]
\caption{ Shrinkage thresholding algorithm to solve  \eqref{min1}}
\label{algorithm:solve8}
\begin{enumerate}
\item  Initialize: $\bm{\beta}^{(0)} = \bm{\beta}^{(1)}$, $\bm{B}^{(0)} = \bm{B}^{(1)}$,  $\alpha^{(0)} = 0$ and $\alpha^{(1)} = 1$,\\
$\delta = n/\lambda_{\textrm{max}}( \bm{X}_{new}^{\T} \bm{X}_{new})$. 
\item Repeat (a) to (f) until the objective function $Q(\bm{\beta},\bm{B})$ converges:\\
\begin{enumerate}
\item $\bm{s}^{(t)} = \bm{\beta}^{(t)} + \frac{\alpha^{(t-1)}-1}{\alpha^{(t)}} (\bm{\beta}^{(t)} - \bm{\beta}^{(t-1)})$,\\
$\bm{S}^{(t)} = \bm{B}^{(t)} + \frac{\alpha^{(t-1)}-1}{\alpha^{(t)}} (\bm{B}^{(t)} - \bm{B}^{(t-1)})$;
\item $\bm{\beta}_{\textrm{temp}} = \bm{s}^{(t)}  - \delta \frac{\partial l(\bm{s}^{(t)},\bm{B}^{(t)})}{\partial \bm{\beta}} $;\\
$\mathrm{vec}(\bm{B}_{\textrm{temp}}) = \mathrm{vec}(\bm{S}^{(t)})  - \delta \frac{\partial l(\bm{\beta}^{(t)},\bm{S}^{(t)})}{\partial \mathrm{vec}(\bm{B})} $;
\item 
Singular value decomposition: ${\bm{B}}_{\textrm{temp}} = \bm{U} \mathrm{diag}(\bm{B}) \bm{V}^{\T}$;
\item $\bm{a}_{new} = \textrm{sgn}(\bm{\beta}_{\textrm{temp}})\cdot (|\bm{\beta}_{temp}| - \lambda_1 \delta\cdot \bm{1})_{+}$,\\
$\bm{b}_{new} = (\bm{b} - \lambda_2 \delta \cdot \bm{1})_{+}$;\\
\item $\bm{\beta}^{(t+1)} = \bm{a}_{new}$, \\
$\bm{B}^{(t+1)} = \bm{U} \mathrm{diag}(\bm{b}_{new}) \bm{V}^{\T}$;\\
\item $\alpha^{(t+1)} = \left[1 + \sqrt{1+(2 \alpha^{(t)})^2}\right]/2$.\\
\end{enumerate}
\end{enumerate}
\end{algorithm}

In particular, step 2(a) predicts search points $\bm{s}^{(t)}$ and $\bm{S}^{(t)}$ by linear extrapolations from the solutions of previous two iterates, 
where $\alpha^{(t)}$ is a scalar sequence that plays a critical role in the extrapolation. This sequence is updated in step 2(f) as in the original Nesterov method. 
Next, steps 2(b) -- 2(d) perform gradient descent from the current search points to obtain the optimal solutions at current iteration. 
Specifically, the gradient descent is based on minimizing $g(\bm\beta,\bm{B}|$ $ \bm{s}^{(t)},\bm{S}^{(t)} , \delta)$, the first-order approximation to the loss function, at the current search points $\bm{s}^{(t)}$ and $\bm{S}^{(t)}$. This minimization problem is tackled by minimizing two sub-problems by the shrinkage thresholding formulas in Propositions \ref{prop1} and \ref{prop2} respectively, as mentioned above.  
Finally, step 2(e) forces the descent property of the next iterate.

A sufficient condition for the convergence of $\{ \bm\beta^{(t)} \}_{t \geq 1}$ and $\{ \bm{B}^{(t)} \}_{t \geq 1}$ is that the step size $\delta$ should be smaller than or equal to $1/{L_f}$, where $L_f$ is the smallest Lipschitz constant of the function $l(\bm{\beta},\bm{B})$  \citep{beck2009fast}. 
In our case, $L_f$ is equal to $\lambda_{\textrm{max}}( \bm{X}_{new}^{\T} \bm{X}_{new})/n$, where $ \lambda_{\textrm{max}}(\cdot) $ denotes the largest eigenvalue of a matrix. 

\section{Data usage acknowledgement}\label{usage}
Data used in the preparation of this article were obtained from the Alzheimer's Disease
Neuroimaging Initiative (ADNI) database (adni.loni.usc.edu). The ADNI was launched in
2003 as a public-private partnership, led by Principal Investigator Michael W. Weiner,
MD. The primary goal of ADNI has been to test whether serial magnetic resonance imaging
(MRI), positron emission tomography (PET), other biological markers, and clinical and
neuropsychological assessment can be combined to measure the progression of mild
cognitive impairment and early Alzheimer's disease. For up-to-date information,
see www.adni-info.org.

Data collection and sharing for this project was funded by the Alzheimer's Disease
Neuroimaging Initiative (ADNI) (National Institutes of Health Grant U01 AG024904) and
DOD ADNI (Department of Defense award number W81XWH-12-2-0012). ADNI is funded
by the National Institute on Aging, the National Institute of Biomedical Imaging and
Bioengineering, and through generous contributions from the following: AbbVie, Alzheimer's
Association; Alzheimer's Drug Discovery Foundation; Araclon Biotech; BioClinica, Inc.;
Biogen; Bristol-Myers Squibb Company; CereSpir, Inc.; Cogstate; Eisai Inc.; Elan
Pharmaceuticals, Inc.; Eli Lilly and Company; EuroImmun; F. Hoffmann-La Roche Ltd and
its affiliated company Genentech, Inc.; Fujirebio; GE Healthcare; IXICO Ltd.; Janssen
Alzheimer Immunotherapy Research \& Development, LLC.; Johnson \& Johnson
Pharmaceutical Research \& Development LLC.; Lumosity; Lundbeck; Merck \& Co., Inc.;
Meso Scale Diagnostics, LLC.; NeuroRx Research; Neurotrack Technologies; Novartis
Pharmaceuticals Corporation; Pfizer Inc.; Piramal Imaging; Servier; Takeda Pharmaceutical
Company; and Transition Therapeutics. The Canadian Institutes of Health Research is
providing funds to support ADNI clinical sites in Canada. Private sector contributions are
facilitated by the Foundation for the National Institutes of Health (www.fnih.org). The grantee
organization is the Northern California Institute for Research and Education, and the study is
coordinated by the Alzheimer's Therapeutic Research Institute at the University of Southern
California. ADNI data are disseminated by the Laboratory for Neuro Imaging at the
University of Southern California.

\section{Image and genetic data preprocessing}
\label{image genetic preprocessing}

\subsection{Image data preprocessing}
\label{image preprocessing}
The hippocampus surface data were preprocessed from the raw MRI data, which were collected across a variety of 1.5 Tesla MRI scanners with protocols individualized for each scanner. Standard T1-weighted images were obtained by  using volumetric 3-dimensional sagittal MPRAGE or equivalent protocols with varying resolutions. The typical protocol includes: inversion time (TI) = 1000 ms, flip angle = 8$^o$,  repetition time (TR) = 2400 ms, and  field of view (FOV) = 24 cm  with a $256\times 256\times 170$ acquisition matrix in the $x-$, $y-$, and $z-$dimensions yielding a voxel size of $1.25\times 1.26\times 1.2$ mm$^3$. We adopted a surface fluid registration based hippocampal subregional analysis package \citep{shi:nimg13}, which uses isothermal coordinates and fluid registration to generate one-to-one hippocampal surface registration for surface statistics computation. It introduced two cuts on a hippocampal surface to convert it into a genus zero surface with two open boundaries. The locations of the two cuts were at the front and back of the hippocampal surface. By using conformal parameterization, it essentially converts a 3D surface registration problem into a 2D image registration problem. The flow induced in the parameter domain establishes high-order correspondences between 3D surfaces. Finally, the radial distance was computed on the registered surface. This software package and associated image processing methods have been adopted and described in \citet{Wang2011}.

\subsection{Genetic data preprocessing}
\label{genetic preprocessing}
For the genetic data, we applied the following preprocessing technique to the $756$ subjects in ADNI1 study. The first line quality control steps include (i) call rate check per subject and per single nucleotide polymorphism (SNP) marker, 
(ii) gender check, (iii) sibling pair identification, (iv) the Hardy-Weinberg equilibrium test, (v) marker removal by the minor allele frequency, and (vi) population stratification. The second line preprocessing steps include removal of SNPs with (i) more than 5$\%$ missing values, (ii) minor allele frequency smaller than 10$\%$, and (iii) Hardy-Weinberg equilibrium $p$-value $< 10^{-6}$. The 503,892 SNPs obtained from 22 autosomes were included for further processing. MACH-Admix software (http://www.unc.edu/~yunmli/MaCH-Admix/) \citep{LiuLi2013} is applied on all the subjects to perform genotype imputation, using 1000G Phase I Integrated Release Version 3 haplotypes (http://www.1000genomes.org) \citep{GPC1000} as reference panel. Quality control was also conducted after imputation, excluding markers with (i) low imputation accuracy (based on imputation output $R^2$), (ii) Hardy-Weinberg equilibrium $p$-value $10^{-6}$, and (iii) minor allele frequency $<5\%$.  

\section{Screening results of ADNI data applications} 
\label{Screening results of ADNI data applications}
In Table \ref{imgenet1}, we list the top $20$ SNPs selected through the blockwise joint screening procedure corresponding to left and right hippocampi respectively. 
\begin{table}[htbp]
\centering
\begin{tabular}{ c c | c c}
\multicolumn{2}{c}{Left hippocampi}  & \multicolumn{2}{c}{Right hippocampi} \\
\hline
Chromesome number & SNP name & Chromesome number & SNP name \\
\hline
19 & rs429358 & 19 & rs429358\\
7 & rs1016394 & 19 & rs10414043\\
19 & rs10414043 & 14 & 14:25618120:G\_GC\\
7 & rs1181947 & 19 & rs7256200\\
19 & rs7256200 & 14 & rs41470748\\
22 & rs134828 & 19 & rs73052335\\
19 & rs73052335 & 14 & 14:25613747:G\_GT\\
7 & 7:101403195:C\_CA & 19 & rs157594\\
19 & rs157594 & 14 & rs72684825\\
13 & rs12864178 & 19 & rs769449\\
19 & rs769449 & 6 & rs9386934\\
2 & rs13030626 & 6 & rs9374191\\
19 & rs56131196 & 19 & rs56131196\\
2 & rs13030634 & 6 & rs9372261\\
19 & rs4420638 & 19 & rs4420638\\
2 & rs11694935 & 6 & rs73526504\\
19 & rs111789331 & 19 & rs111789331\\
2 & rs11696076 & 14 & rs187421061\\
19 & rs66626994 & 19 & rs66626994\\
2 & rs11692218 & 13 & rs342709\\
\end{tabular}
\caption{The top 20 SNPs selected through the blockwise joint screening procedure. The left two columns correspond to results from the left hippocampi, and the right two columns correspond to results from the right hipppocampus.}
\label{imgenet1}
\end{table}

We plot similar figures as the Manhattan plot for $\widehat{\mathcal{M}}^{*}_1$, $\widehat{\mathcal{M}}^{block,*}_1$, $\widehat{\mathcal{M}}^{\textbf{}}_2$ and $\widehat{\mathcal{M}}^{block}_2$ in Figure \ref{manhattanPlots}. Unlike the conventional Manhattan plots, where genomic coordinates are displayed along the x-axis, with the negative logarithm of the association p-value for each SNP displayed on the y-axis, in our analysis, we do not have the p-values. So in these figures, the y-axis represents the magnitude of $| \widehat{\beta}^{M}_{l} |$, $ \widehat{\beta}^{block,M}_{l} $,  $ \| \widehat{\bm{C}}_l^M \|_{op}$ and $\widehat{C}_l^{block,M}$ and the horizontal dashed line represents the threshold values $\gamma_{1,n}$ (Panel (a)), $\gamma_{2,n}$((Panel (b)), $\gamma_{3,n}$ ((Panel (c)) and $\gamma_{4,n}$ ((Panel (d)). In Panels (c) and (d), the left and right figures represent the left and right hippocampi, respectively. The SNPs with $| \widehat{\beta}^{M}_{l} |$, $ \widehat{\beta}^{block,M}_{l} $,  $ \| \widehat{\bm{C}}_l^M \|_{op}$ and $\widehat{C}_l^{block,M}$ greater than or equal to $\gamma_{1,n}$, $\gamma_{2,n}$, $\gamma_{3,n}$ and $\gamma_{4,n}$, hence being selected by $\widehat{\mathcal{M}}^{*}_1$, $\widehat{\mathcal{M}}^{block,*}_1$, $\widehat{\mathcal{M}}^{\textbf{}}_2$ and $\widehat{\mathcal{M}}^{block}_2$ respectively, are highlighted with red diamond symbols.

\begin{figure}[htbp]
\captionsetup[subfigure]{justification=centering}
\centering
 \subcaptionbox{$| \widehat{\beta}^{M}_{l} |$}[0.45\linewidth]
{\includegraphics[height=2in,width=3in]{./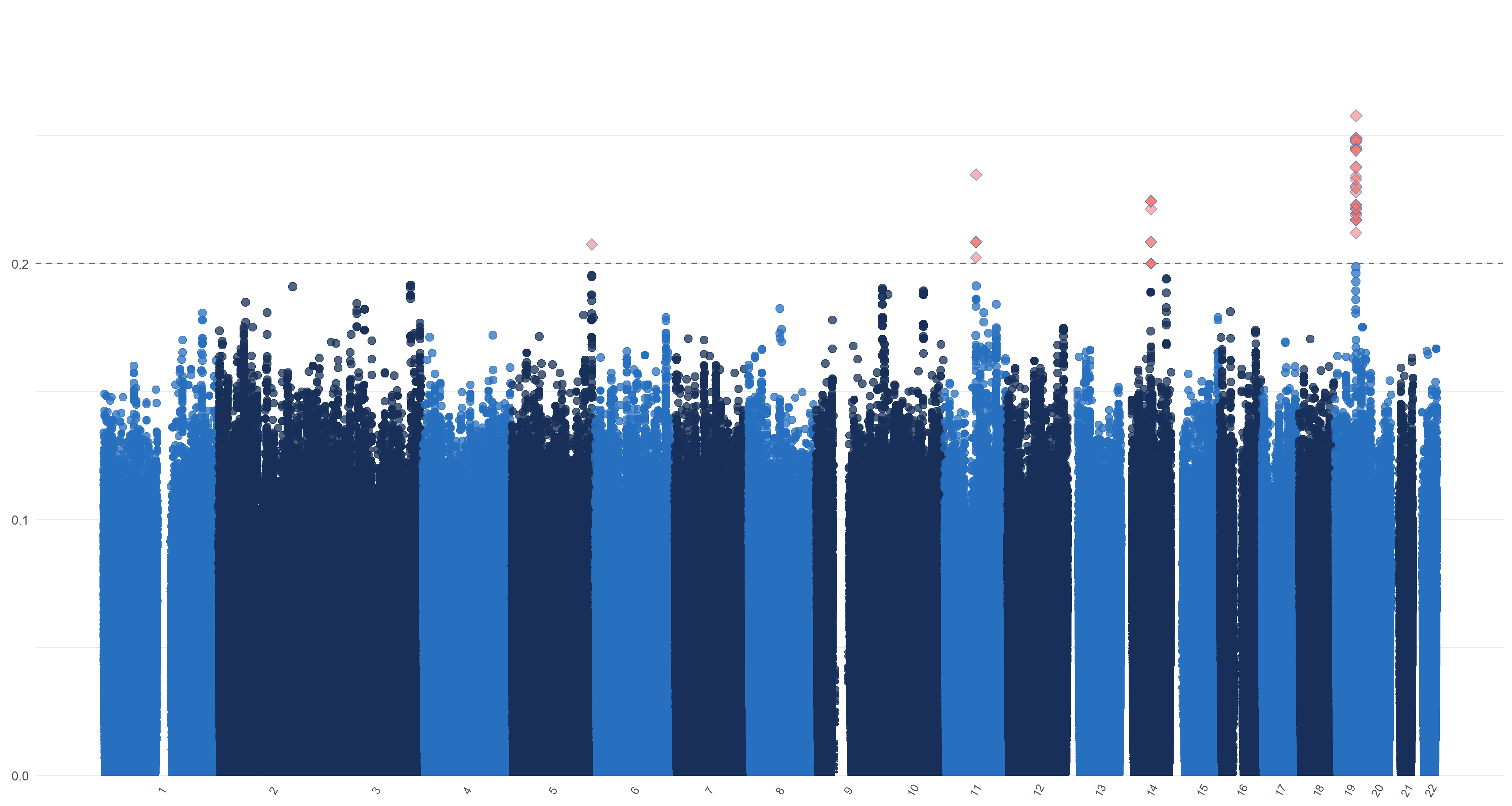}}
 \hfill
 \subcaptionbox{$ \widehat{\beta}^{block,M}_{l} $}[0.45\linewidth]
 {\includegraphics[height=2in,width=3in]{./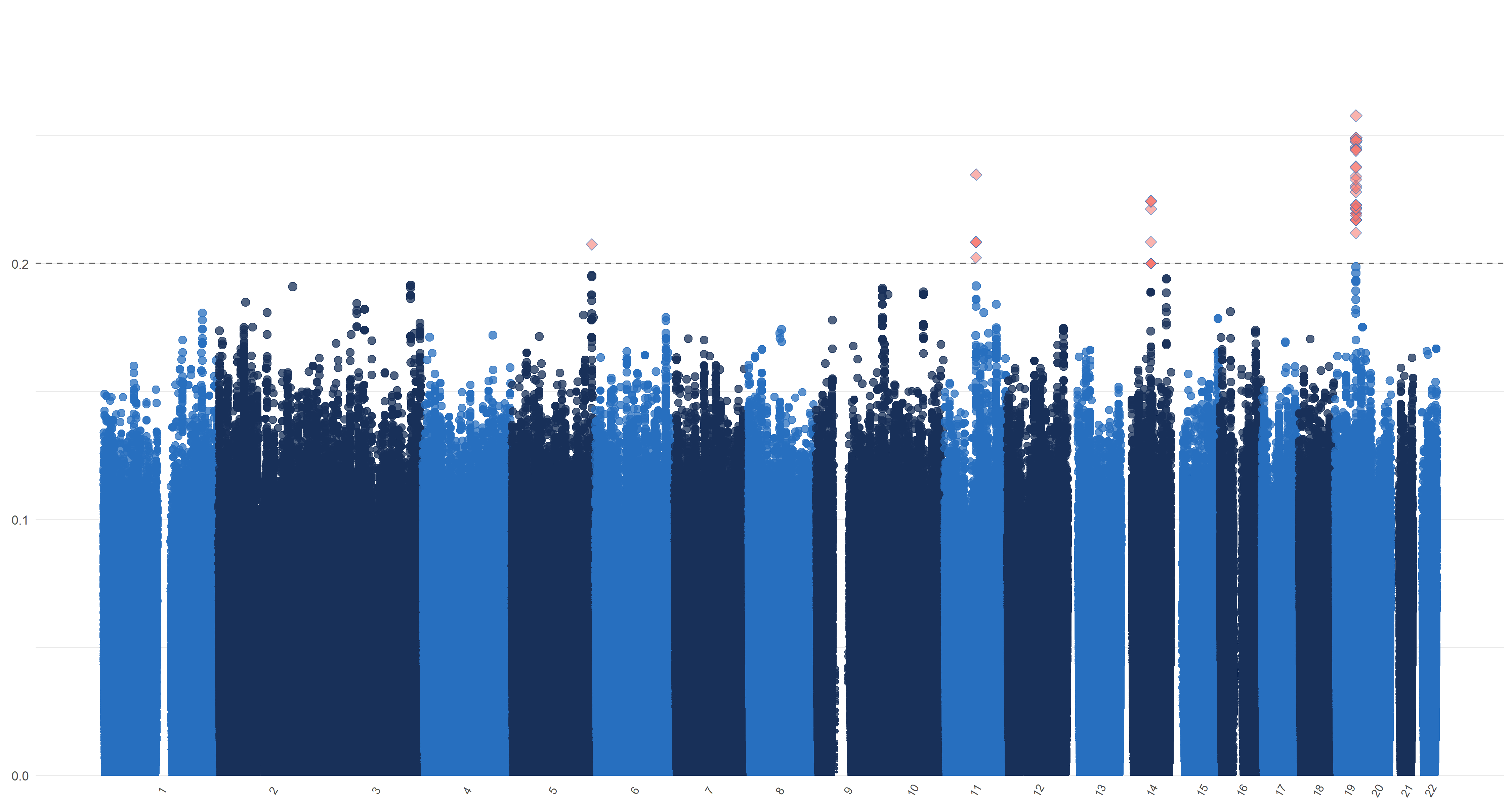}}
 \hfill
  \subcaptionbox{$ \| \widehat{\bm{C}}_l^M \|_{op}$}[0.95\linewidth]
 {\includegraphics[height=2in,width=3in]{./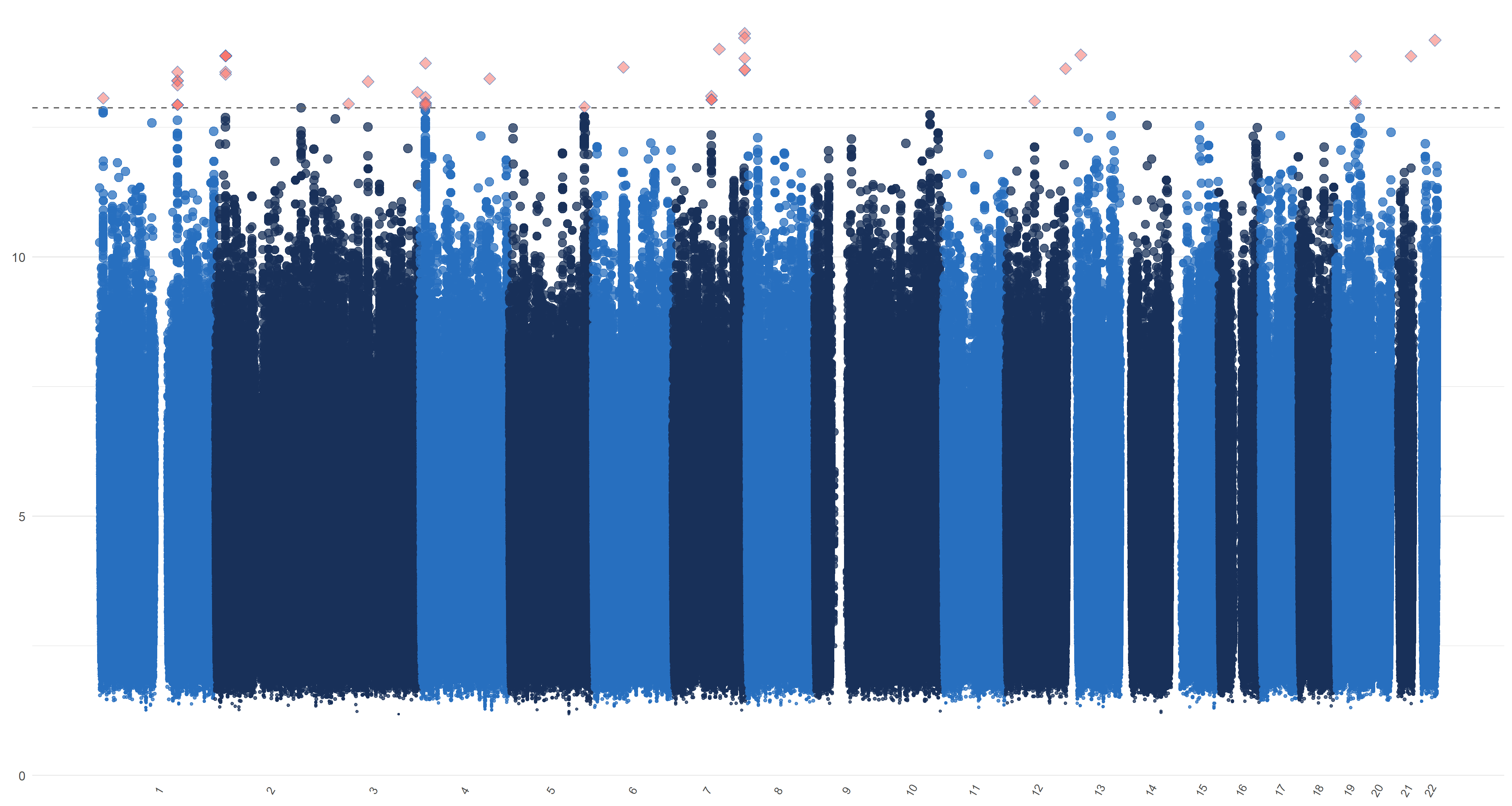}
 \includegraphics[height=2in,width=3in]{./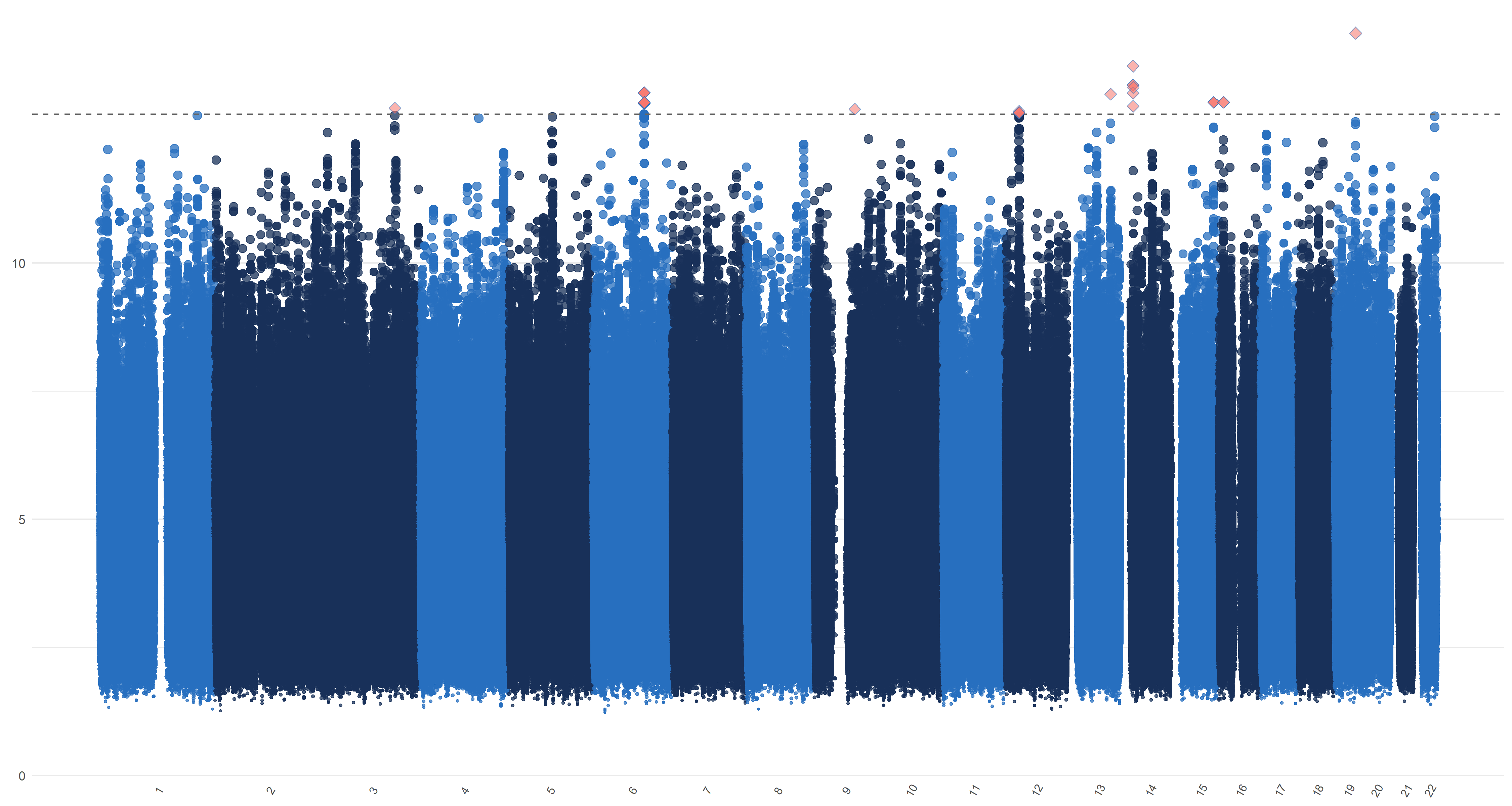}}
 \hfill
  \subcaptionbox{$\widehat{C}_l^{block,M}$}[0.95\linewidth]
 {\includegraphics[height=2in,width=3in]{./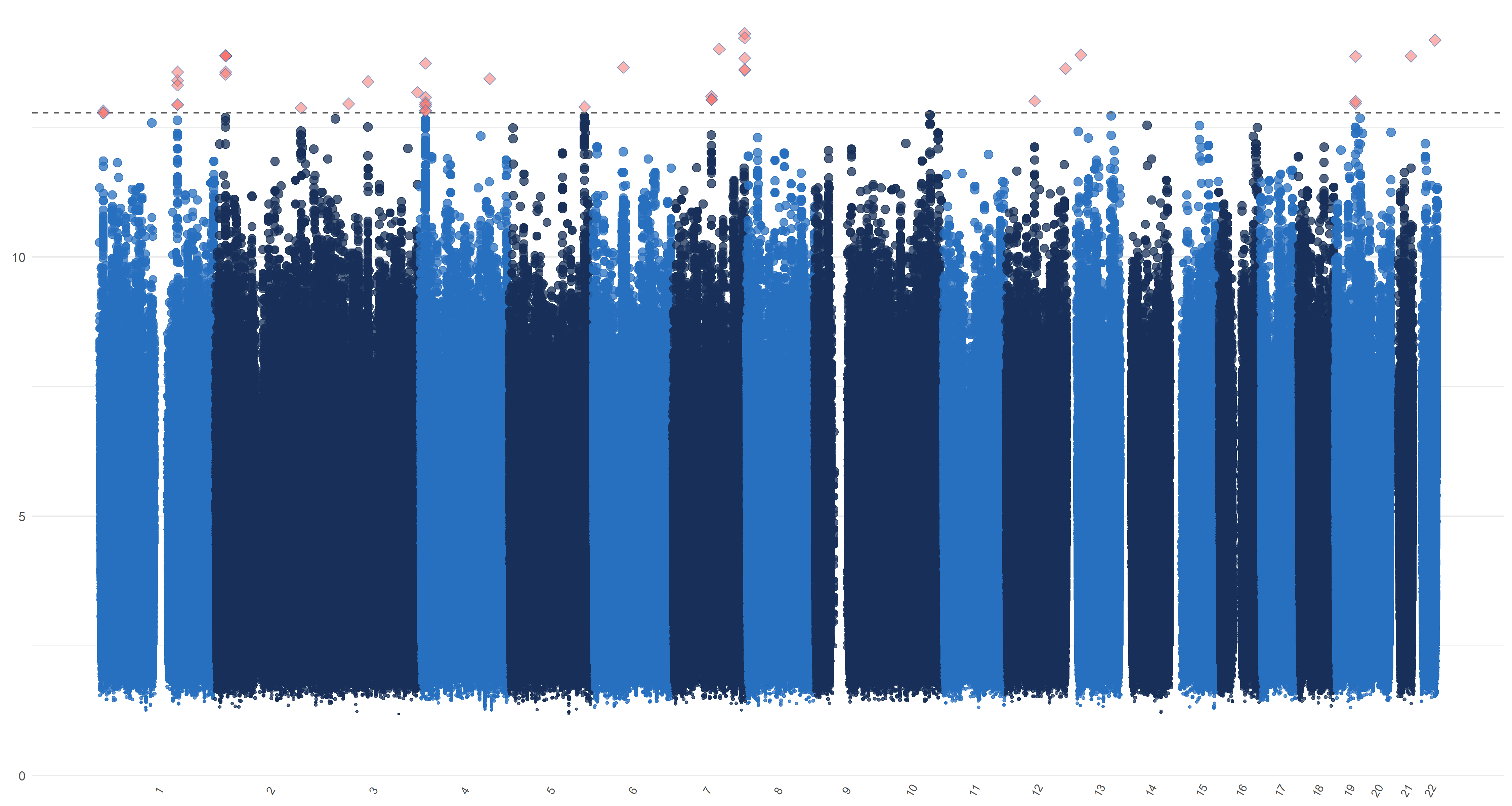}
 \includegraphics[height=2in,width=3in]{./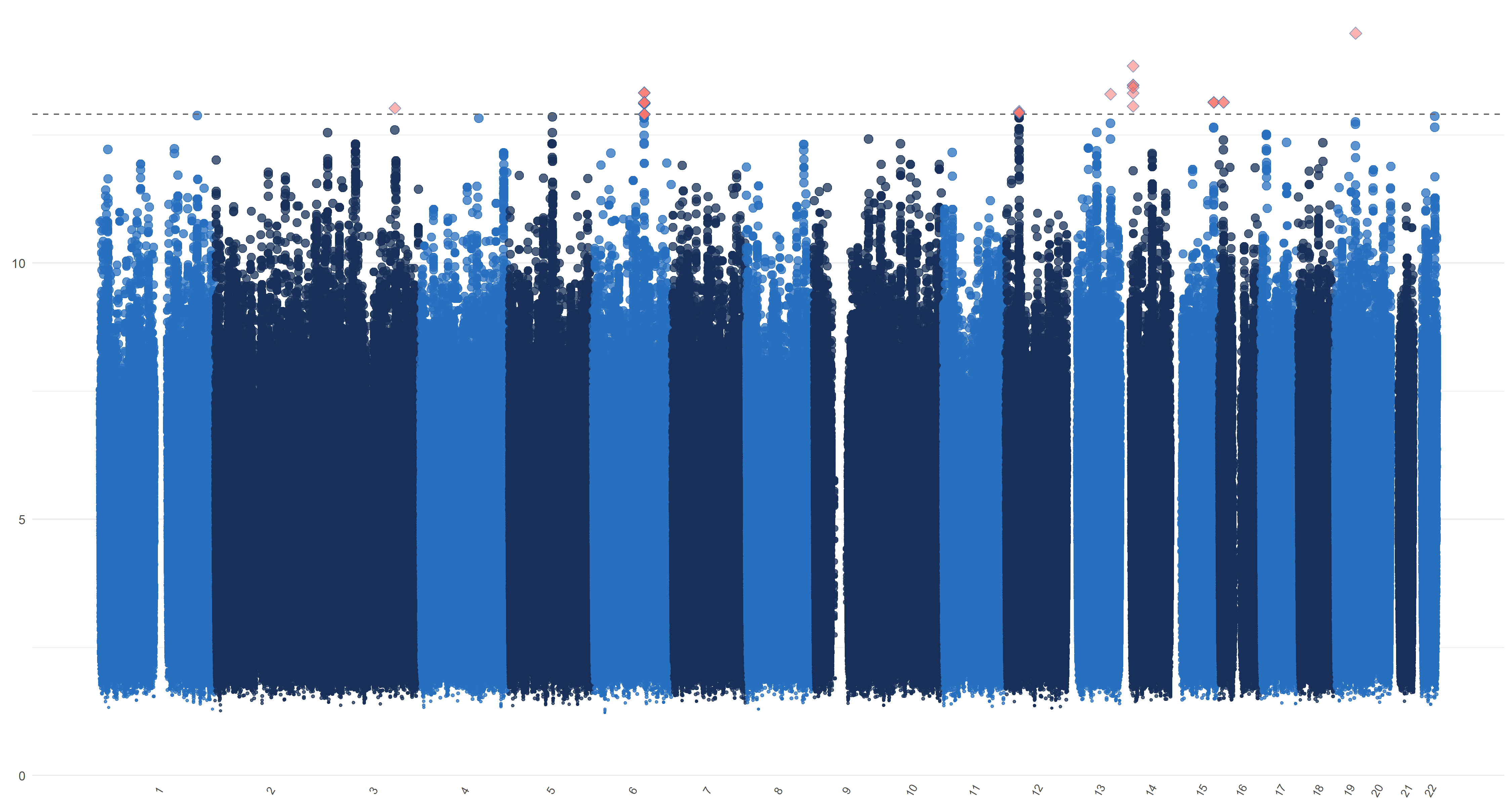}}
 \hfill
    \caption{Real data results: Panels (a) -- (d) present the results for $| \widehat{\beta}^{M}_{l} |$, $ \widehat{\beta}^{block,M}_{l} $,  $ \| \widehat{\bm{C}}_l^M \|_{op}$ and $\widehat{C}_l^{block,M}$, where genomic coordinates are displayed along the x-axis, y-axis represents the magnitude of $| \widehat{\beta}^{M}_{l} |$, $ \widehat{\beta}^{block,M}_{l} $,  $ \| \widehat{\bm{C}}_l^M \|_{op}$ and $\widehat{C}_l^{block,M}$ and the horizontal dashed line represents the threshold values $\gamma_{1,n}$ (Panel (a)), $\gamma_{2,n}$((Panel (b)), $\gamma_{3,n}$ ((Panel (c)) and $\gamma_{4,n}$ ((Panel (d)).
    In Panels (c) and (d), the left and right figures represent the left and right hippocampi, respectively. The SNPs with $| \widehat{\beta}^{M}_{l} |$, $ \widehat{\beta}^{block,M}_{l} $,  $ \| \widehat{\bm{C}}_l^M \|_{op}$ and $\widehat{C}_l^{block,M}$ greater than or equal to $\gamma_{1,n}$, $\gamma_{2,n}$, $\gamma_{3,n}$ and $\gamma_{4,n}$, hence being selected by $\widehat{\mathcal{M}}^{*}_1$, $\widehat{\mathcal{M}}^{block,*}_1$, $\widehat{\mathcal{M}}^{\textbf{}}_2$ and $\widehat{\mathcal{M}}^{block}_2$ respectively, are highlighted with red diamond symbols.}
\label{manhattanPlots}
\end{figure}

\section{Sensitivity analysis of ADNI data applications} 
\label{Sensitivity analysis}

In our analysis, we set $\widehat{\mathcal{M}}_{1}^{*}$ and $\widehat{\mathcal{M}}_{2}$ the same sizes, following the convention that the size of screening set is determined only by the sample size \citep{fan2008sure}, which is the same for $\widehat{\mathcal{M}}_{1}^{*}$ and $\widehat{\mathcal{M}}_{2}$.
To assess the sensitivity of our results to this choice, we conduct sensitivity analyses varying the relative sizes of $\widehat{\mathcal{M}}_{1}^{*}$ and $\widehat{\mathcal{M}}_{2}$ in the joint screening procedure. For simplicity, we only consider the joint screening procedure proposed in Section \ref{jointscreening}. Figure \ref{dataPlots2} lists the estimate $\widehat{\bm B}$ corresponding to the left hippocampi (left part) and the right hippocampi (right part) using $\widehat{\mathcal{M}} = \widehat{\mathcal{M}}_{1}^{*} \cup \widehat{\mathcal{M}}_{2}$. Denote the estimates corresponding to $|\widehat{\mathcal{M}}_{2}|/|\widehat{\mathcal{M}}_{1}^{*}| = 1/2, 1, 2$ by $\widehat{\bm B}^{(0.5)}$, $\widehat{\bm B}^{(1)}$, $\widehat{\bm B}^{(2)}$ respectively. We set $|\widehat{\mathcal{M}}|=\lfloor n/\log(n) \rfloor = 89$. The estimates $\widehat{\bm B}^{(0.5)}$, $\widehat{\bm B}^{(1)}$ and $\widehat{\bm B}^{(2)}$ are plotted in Figure \ref{dataPlots2} (a), (b) and (c) correspondingly. In addition, we consider $|\widehat{\mathcal{M}}_{2}|/|\widehat{\mathcal{M}}_{1}^{*}| =1$ but $|\widehat{\mathcal{M}}|=2\lfloor n/\log(n) \rfloor = 178$. Denote the corresponding estimate by $\widetilde{\bm B}^{(1)}$. We plot  $\widetilde{\bm B}^{(1)}$ in Figure \ref{dataPlots2} (d).

\begin{figure}[htbp]
\captionsetup[subfigure]{justification=centering}
\centering
  \subcaptionbox{}[0.4\linewidth]
 {\includegraphics[height=2.25in,width=3in]{./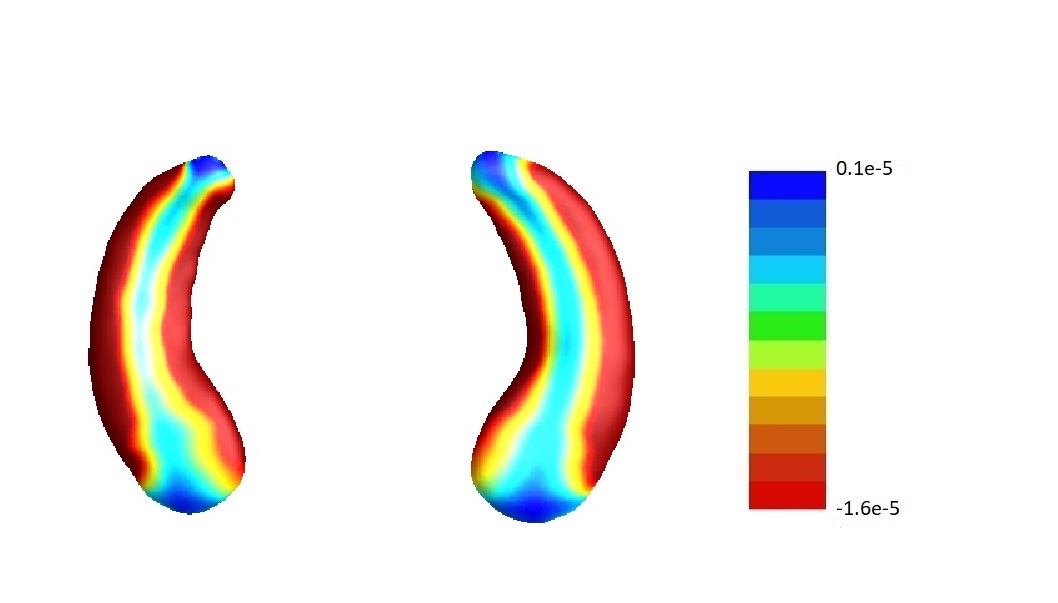}}
 \hfill
 \subcaptionbox{}[0.4\linewidth]
 {\includegraphics[height=2.25in,width=3in]{./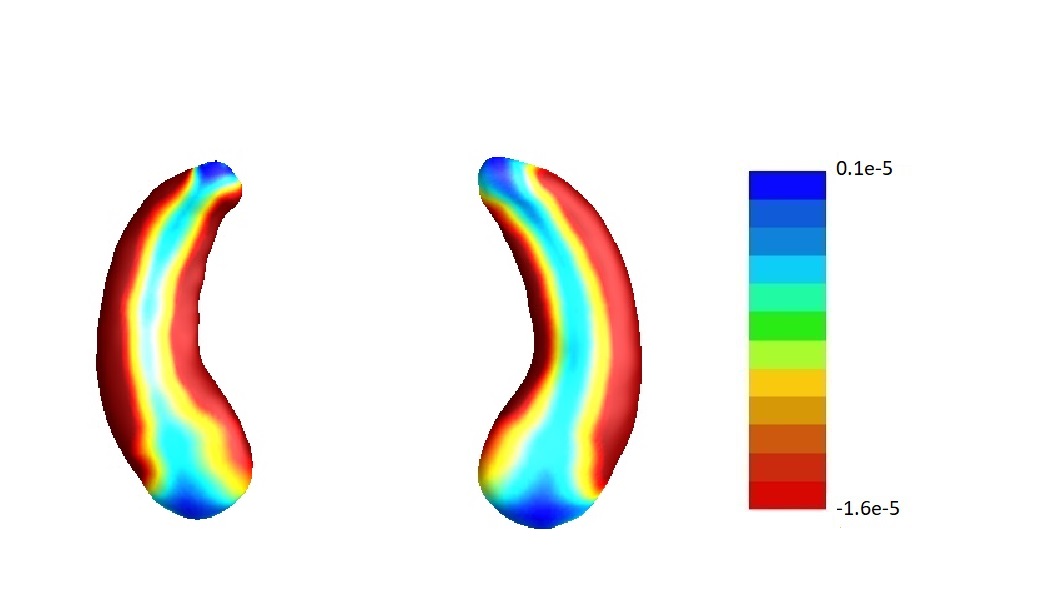}}
 \hfill
 \subcaptionbox{}[0.4\linewidth]
 {\includegraphics[height=2.25in,width=3in]{./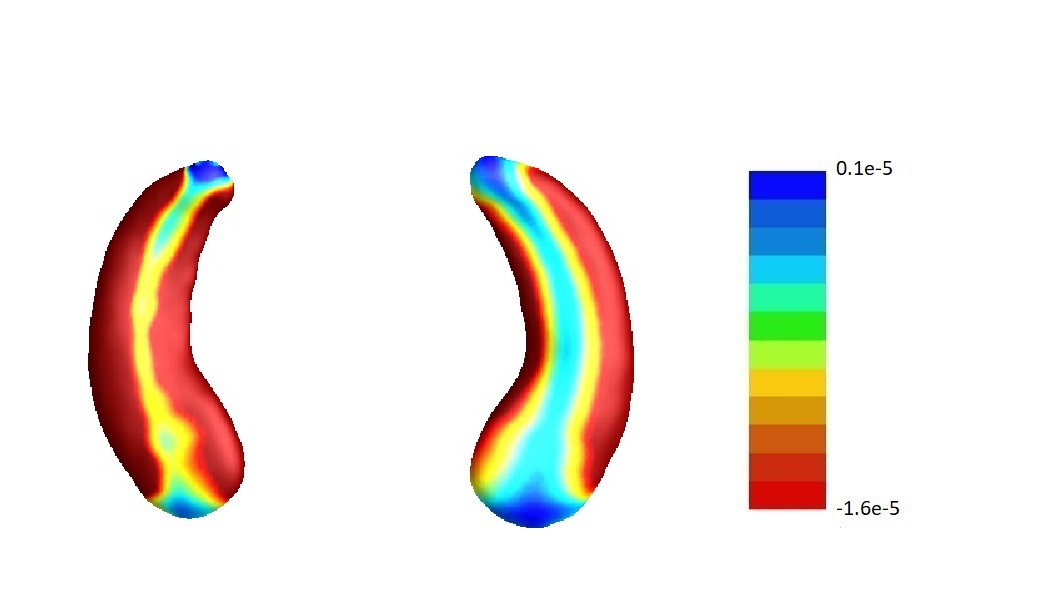}}
  \hfill
  \subcaptionbox{}[0.4\linewidth]
 {\includegraphics[height=2.25in,width=3in]{./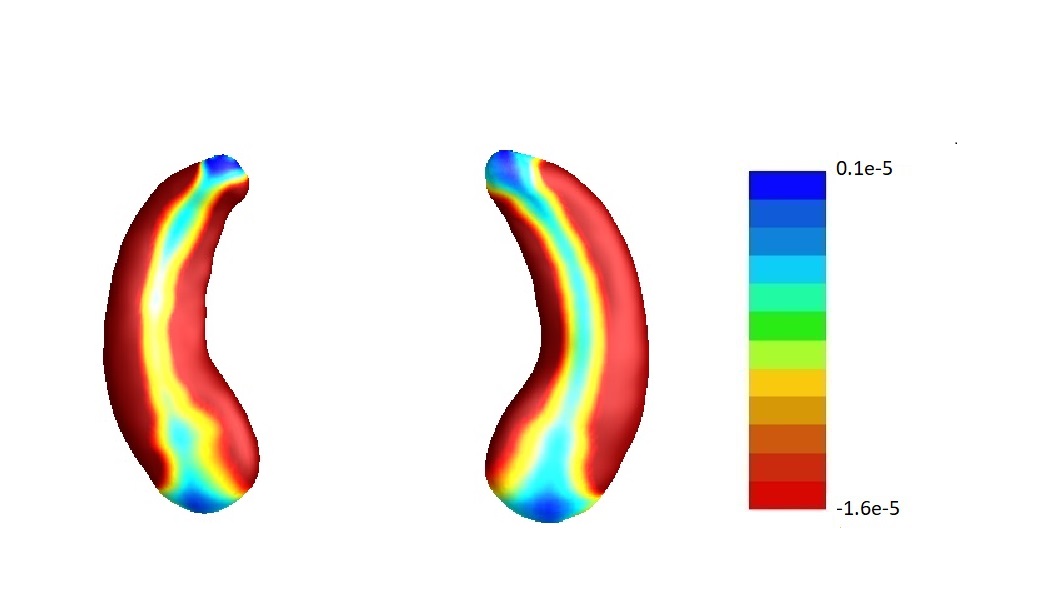}}
 \hfill
\caption{Real Data Results: Panels (a), (b), (c) and (d) plot the estimate $\widehat{\bm B}^{(0.5)}$, $\widehat{\bm B}^{(1)}$, $\widehat{\bm B}^{(2)}$ and $\widetilde{\bm B}^{(1)}$ corresponding to the left hippocampi (left part) and  the right hippocampi (right part).}
\label{dataPlots2}
\end{figure}

Furthermore, by defining the relative risk of an estimate $\widehat{\bm{B}}$ as $\mathrm{RR}(\widehat{\bm{B}}) =  \frac{\| \widehat{\bm{B}}-\widehat{\bm B}^{(1)}  \|_F^2}{\| \widehat{\bm B}^{(1)}  \|_F^2}$, we report the relative risks of three estimates $\widehat{\bm B}^{(0.5)}$, 
$\widehat{\bm B}^{(2)}$ and $\widetilde{\bm B}^{(1)}$ in Table \ref{sensana1}.

\begin{table}[htbp]
\centering
\begin{tabular}{ c c c r}
 & Left hippocampi & Right  hippocampi\\
\hline
$\mathrm{RR}(\widehat{\bm B}^{(0.5)})$ & 0.0022 & 0.1074\\
$\mathrm{RR}(\widehat{\bm B}^{(2)})$ & 0.2938 & 0.0907\\
$\mathrm{RR}(\widetilde{\bm B}^{(1)})$& 0.0611 & 0.0927\\
\end{tabular}
\caption{The relative risks of  $\widehat{\bm B}^{(0.5)}$, 
$\widehat{\bm B}^{(2)}$ and $\widetilde{\bm B}^{(1)}$ for left and right hippocampi. }
\label{sensana1}
\end{table}

\begin{table}[htbp]
\centering
\begin{tabular}{ c c c r}
Number of negative entries  & Left hippocampi & Right  hippocampi\\
\hline
$\widehat{\bm{B}}^{(0.5)}$ & 15,000 & 15,000\\
$\widehat{\bm{B}}^{(1)}$ & 15,000 & 15,000\\
$\widehat{\bm{B}}^{(2)}$ & 14,600 & 15,000\\
$\widetilde{\bm B}^{(1)}$  & 15,000 & 15,000\\
\end{tabular}
\caption{ Number of negative entries of $\widehat{\bm{B}}$ for left and right hippocampi. }
\label{sensana2}
\end{table}

To summarize, the estimate $\widehat{\bm B}$ is not very sensitive against the choices of $|\widehat{\mathcal{M}}_{1}^{*}|$ and $|\widehat{\mathcal{M}}_{2}|$ except for the left hippocampi when $|\widehat{\mathcal{M}}_{2}|/|\widehat{\mathcal{M}}_{1}^{*}|=2$ and $|\widehat{\mathcal{M}}| = 89$. In fact, as shown in Table \ref{sensana2}, when $|\widehat{\mathcal{M}}_{2}|/|\widehat{\mathcal{M}}_{1}^{*}|=2$, for the left hippocampi, there are $400$ entries are non-negative. We believe it may be due to some confounder variables being missed in the screening step. For instance, we find that rs157582, a previously identified  risk loci for Alzheimer's disease \citep{guo2019genome} is adjusted in estimating $\widehat{\bm B}^{(0.5)}$, $\widehat{\bm B}^{(1)}$ and $\widetilde{\bm B}^{(1)}$ except for $\widehat{\bm B}^{(2)}$ . But in general, as demonstrated in Figure \ref{dataPlots2}, the estimate $\widehat{\bm B}$s are similar among different choices of $|\widehat{\mathcal{M}}_{1}^{*}|$ and $|\widehat{\mathcal{M}}_{2}|$ .

\section{Subgroup analysis ADNI data applications}
\label{Subgroup analysis ADNI data applications}
We repeat the analysis on the $391$ MCI and AD subjects. The estimates $\widehat{\bm B}$ corresponding to each part of the hippocampus onto a representative hippocampal surface are plotted in Figure \ref{dataPlotsv12}(a). We have also plotted the hippocampal subfield \citep{apostolova20063d} in Figure \ref{dataPlotsv12}(b).
The results are similar to the complete data analysis including all the $566$ subjects. For example, from these plots, we can see that $13,700$ entries of $\widehat{\bm{B}}$ corresponding to left and all the $15,000$ entries of $\widehat{\bm{B}}$ corresponding to right hippocampi are negative. This implies that the radial distances of each pixel of both hippocampi are mostly negatively associated with the ADAS-13 score, which depicts the severity of behavioral deficits. Furthermore, the subfields with the strongest associations are still mostly CA1 and subiculum.

\begin{figure}[htbp]
\captionsetup[subfigure]{justification=centering}
\centering
 \subcaptionbox{}[0.45\linewidth]
{\includegraphics[height=3in,width=4in]{./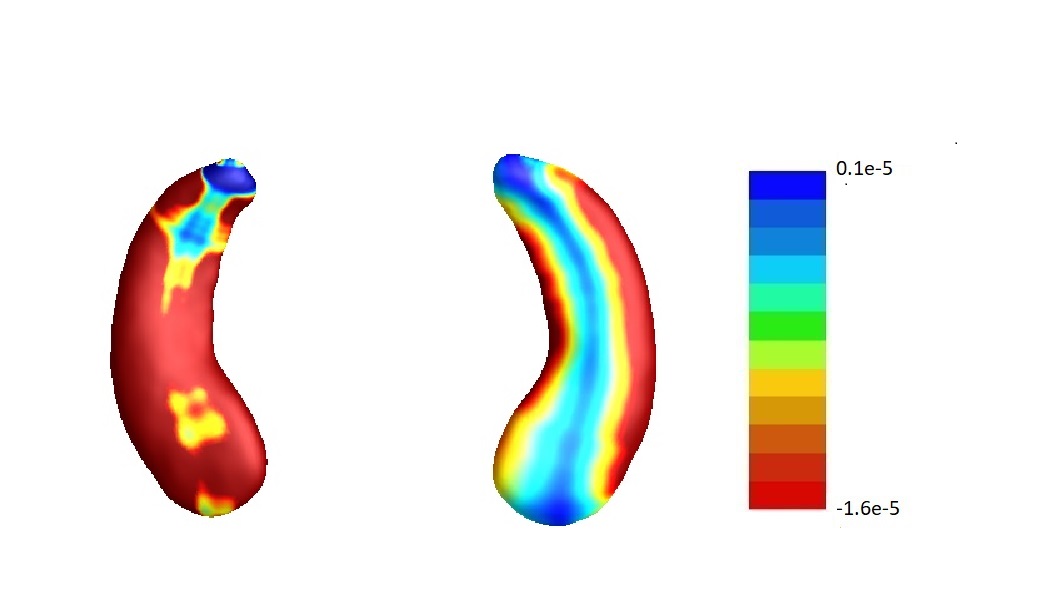}}
 \hfill
 \subcaptionbox{}[0.45\linewidth]
 {\includegraphics[height=2.4in,width=2in]{./plotsMainArkSupp/hippocampalsubfield.pdf}}
 \hfill
\caption{Real data results for MCI and AD subgroup: Panel (a) plots the estimate $\widehat{\bm B}$ corresponding to the left hippocampi (left part) and  the right hippocampi (right part). Panel (b) plots the hippocampal subfield.}
\label{dataPlotsv12}
\end{figure}

\section{Results for mediation analyses}
\label{Results for mediation analyses}
We perform the SNP-imaging-outcome mediation analyses following the same procedure as in \cite{bi2017genome}. In specific, we regress the $30,000$ imaging measures against $6,087,205$ SNPs in the first step to search for the pairs of intermediate imaging measures and genetic variants. Then the behaviour outcome is fit against each candidate genetic variant to identify direct and significant influence. In  
the last step, the behaviour outcome is fit against identified genetic variant and its associated intermediate imaging measure simultaneously. A mediation relationship is built if a) the genetic variant is significant in both first and second steps, b) the intermediate imaging measure is significant in the last step, and c) the genetic variant has a smaller coefficient in the last step compared with the second step. Note that the total effect of the genetic variant in the second step should be the summation of direct and indirect effects which motivates the criterion c) of coefficient comparison. Note that, the total effect may not always be greater than the direct effect in the last step when the direct and indirect effects have opposite signs, while the causal inference tool proposed in this paper does not have this problem. 
Similar to \citet{bi2017genome}, we try to identify the pairs of SNP and imaging measure, for which the direct effect of SNP on behavioral outcome, the effect of SNP on imaging measure and the effect of imaging measure on the behavioral outcome are all significant. However, there is no SNP with at least one paired imaging measure (i.e. hippocampal imaging pixel) being significant. Therefore,
there is no evidence for the mediating relationship of  SNP-imaging-outcome from our analysis.

\section{Additional results for simulation studies}
\label{addsimulation}
In this section, we list additional simulation results. In particular, Figures \ref{sim1step1n200sigma025} -- \ref{sim1step1n1000sigma025} present the screening results for Section \ref{Simulation for screening} with $(n,s,\sigma) = (200,5000,0.5)$, $(500,5000,1)$, $(500,5000,0.5)$, $(1000,5000,1)$ and $(1000,5000,0.5)$, respectively. 
Section \ref{Sensitivity and specificity analysis of simulation} presents the sensitivity and specificity analyses for Section \ref{Simulation for estimation}, where the detailed definitions of sensitivity and specificity can be found here.
Section \ref{Screening under different sparsity levels} presents an additional simulation study considering various sparsity levels of instrumental variables.
Section \ref{Screening under different covariance structures of exposure errors} presents an additional simulation study considering different covariances of exposure errors. 
 Section \ref{Screening and estimation under different sizes} conducts an additional simulation study by varying the relative sizes of $\widehat{\mathcal{M}}_{1}^{*}$ and $\widehat{\mathcal{M}}_{2}$. 
 Section \ref{Group Selection Using Linkage Disequilibrium Information} lists additional screening and estimation results for Section \ref{Group Selection Using Linkage Disequilibrium Information main} of the main article.

\begin{figure}[htbp]
\captionsetup[subfigure]{justification=centering}
\centering
 \subcaptionbox{Confounder: strong \\ outcome, weak exposure}[0.45\linewidth]
 {\includegraphics[width=6cm,height=3.5cm]{./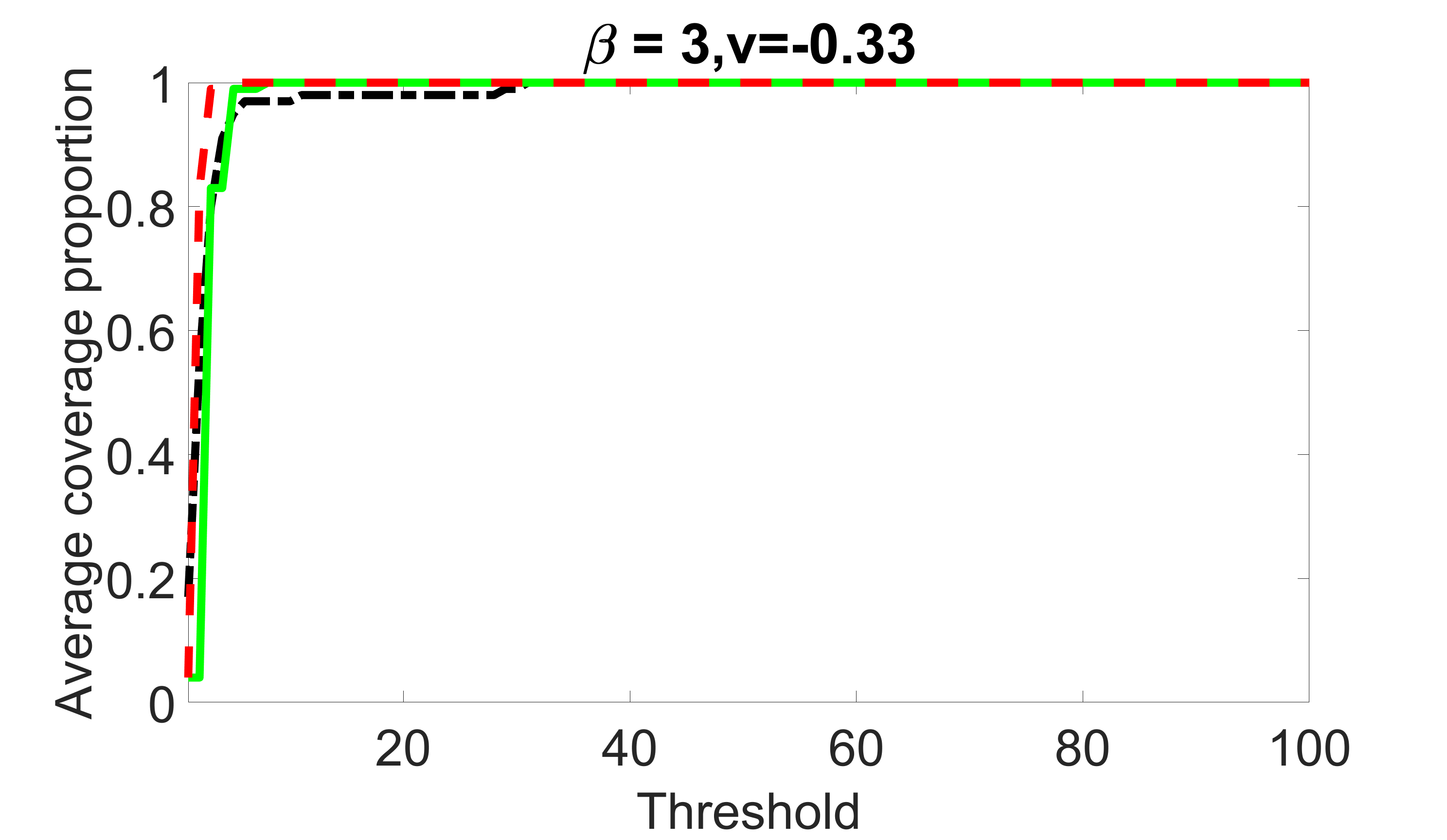}}
 \subcaptionbox{Confounder: medium \\ outcome, medium exposure}[0.45\linewidth]
 {\includegraphics[width=6cm,height=3.5cm]{./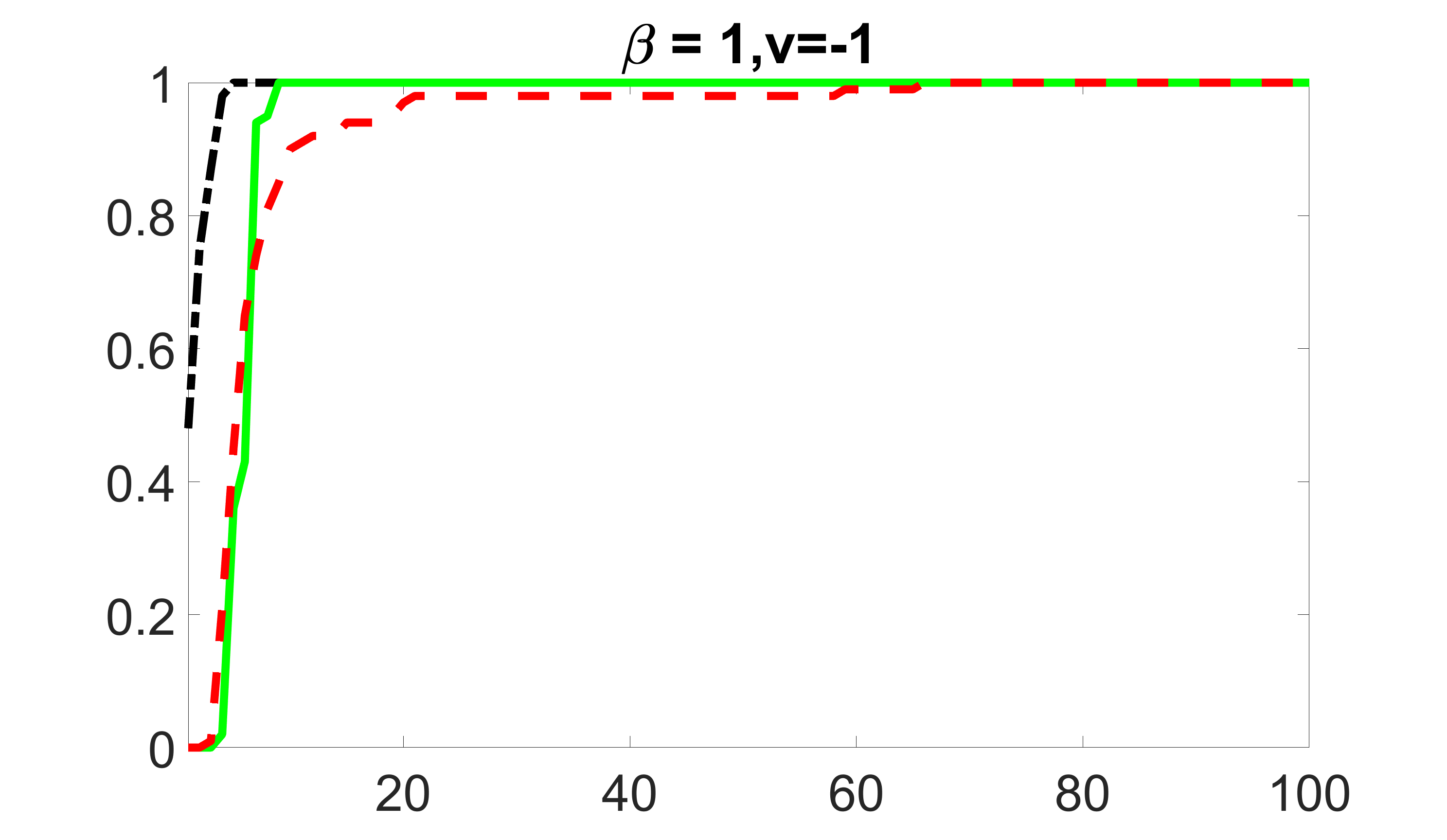}}
  \subcaptionbox{Confounder: weak \\ outcome, strong exposure}[0.45\linewidth]
 {\includegraphics[width=6cm,height=3.5cm]{./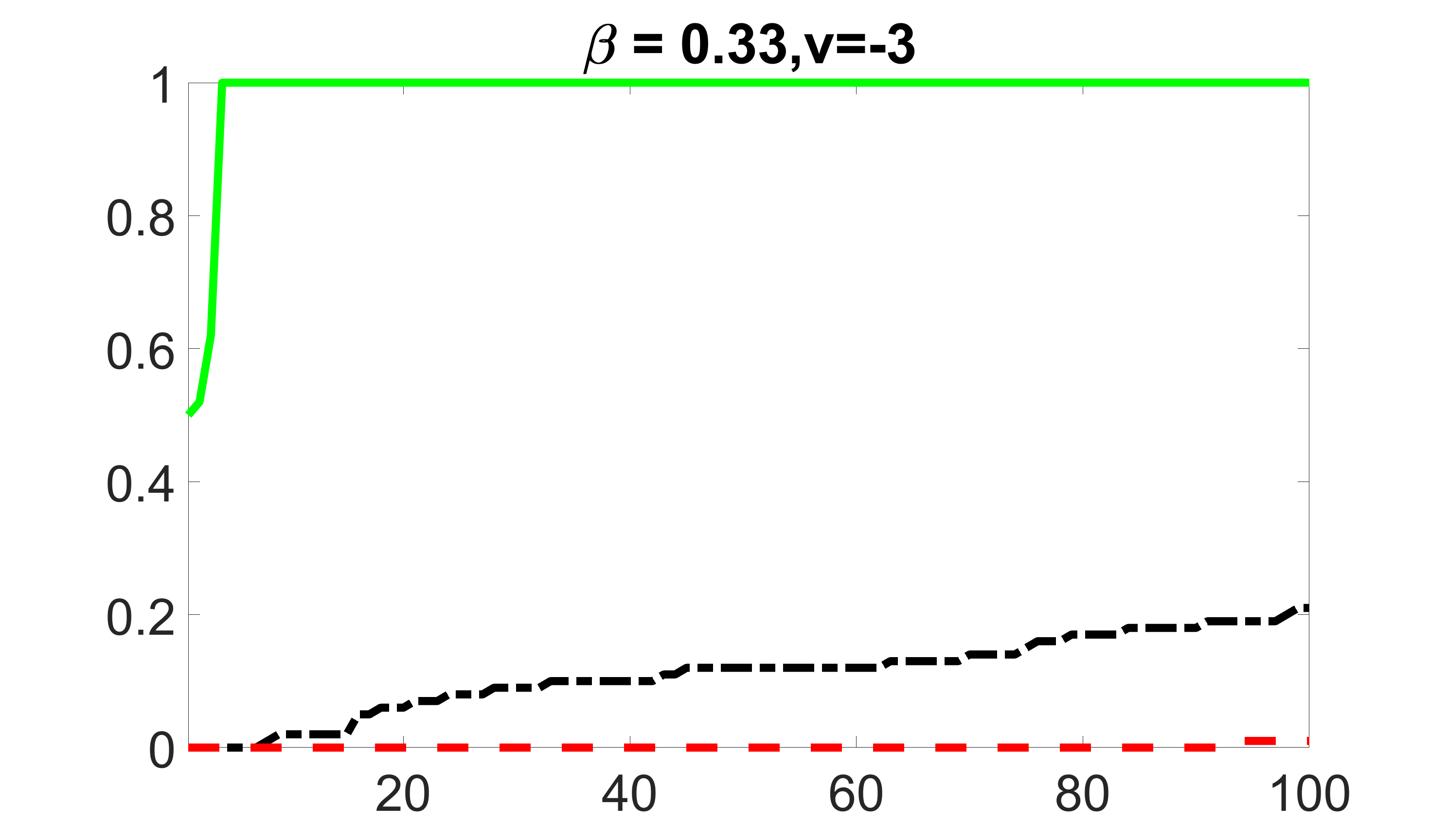}}
  \subcaptionbox{Precision: strong \\ outcome, zero exposure}[0.45\linewidth]
 {\includegraphics[width=6cm,height=3.5cm]{./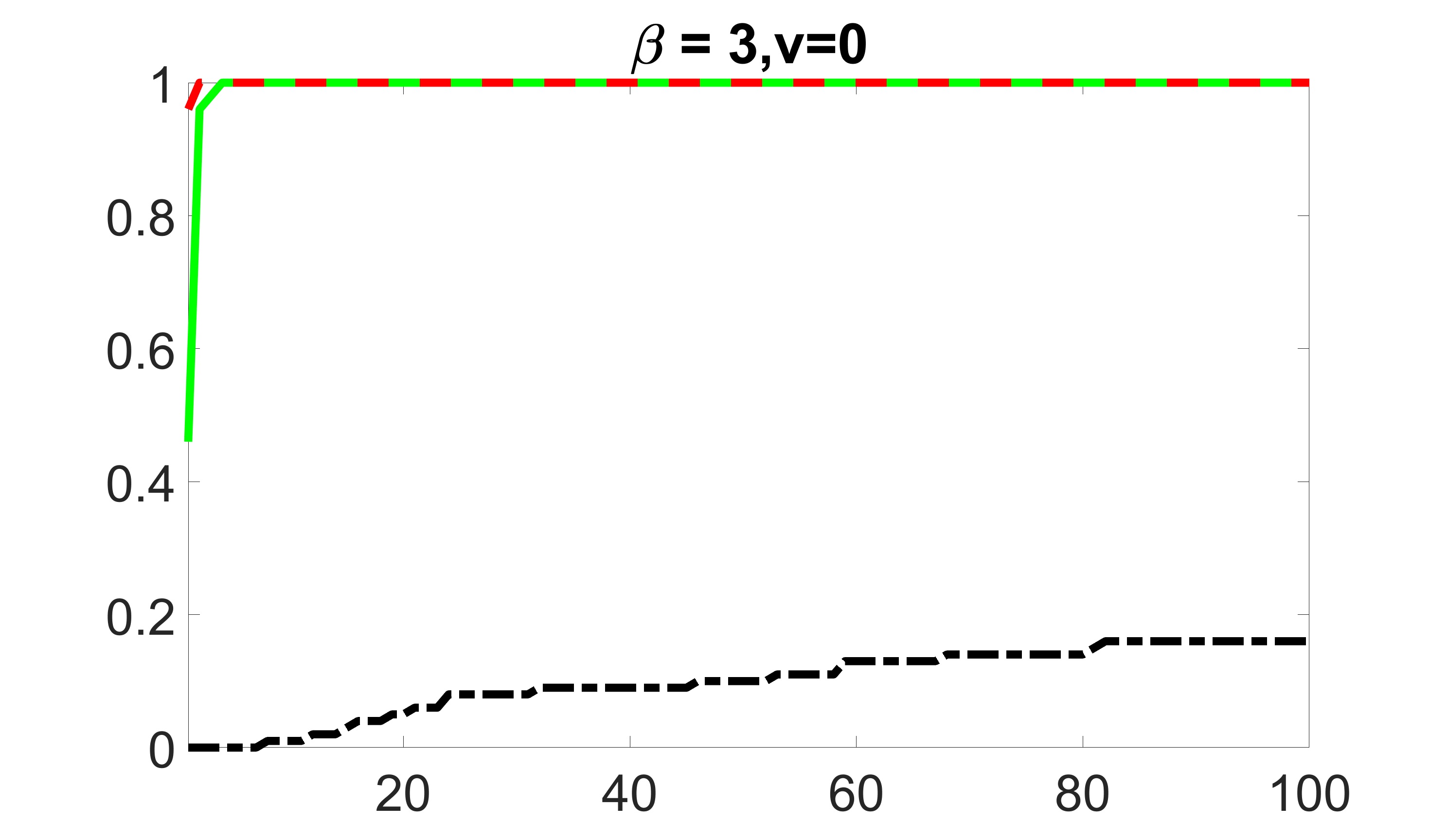}}
  \subcaptionbox{Precision: medium \\ outcome, zero exposure}[0.45\linewidth]
 {\includegraphics[width=6cm,height=3.5cm]{./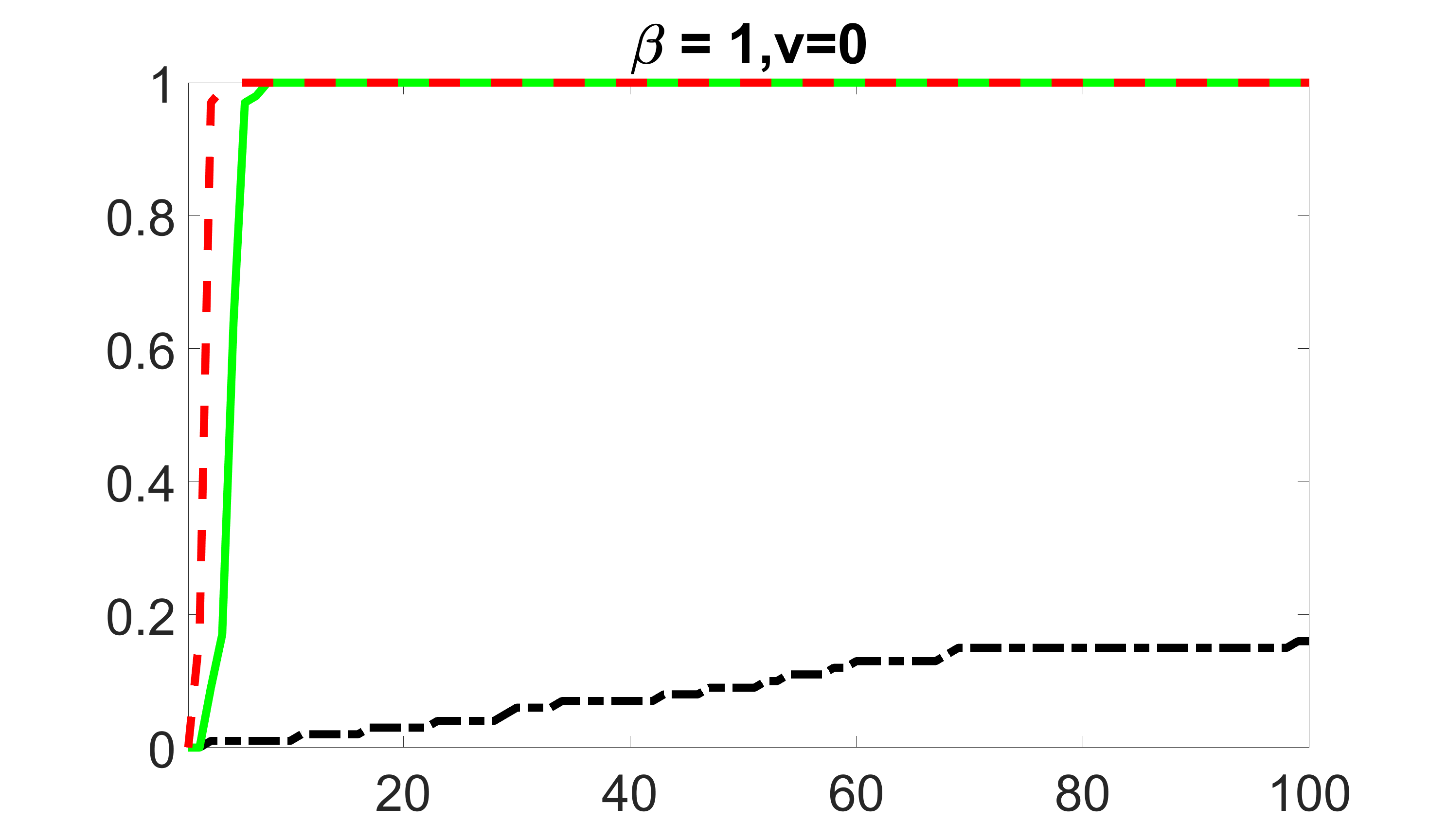}}
  \subcaptionbox{Precision: weak \\ outcome, zero exposure}[0.45\linewidth]
 {\includegraphics[width=6cm,height=3.5cm]{./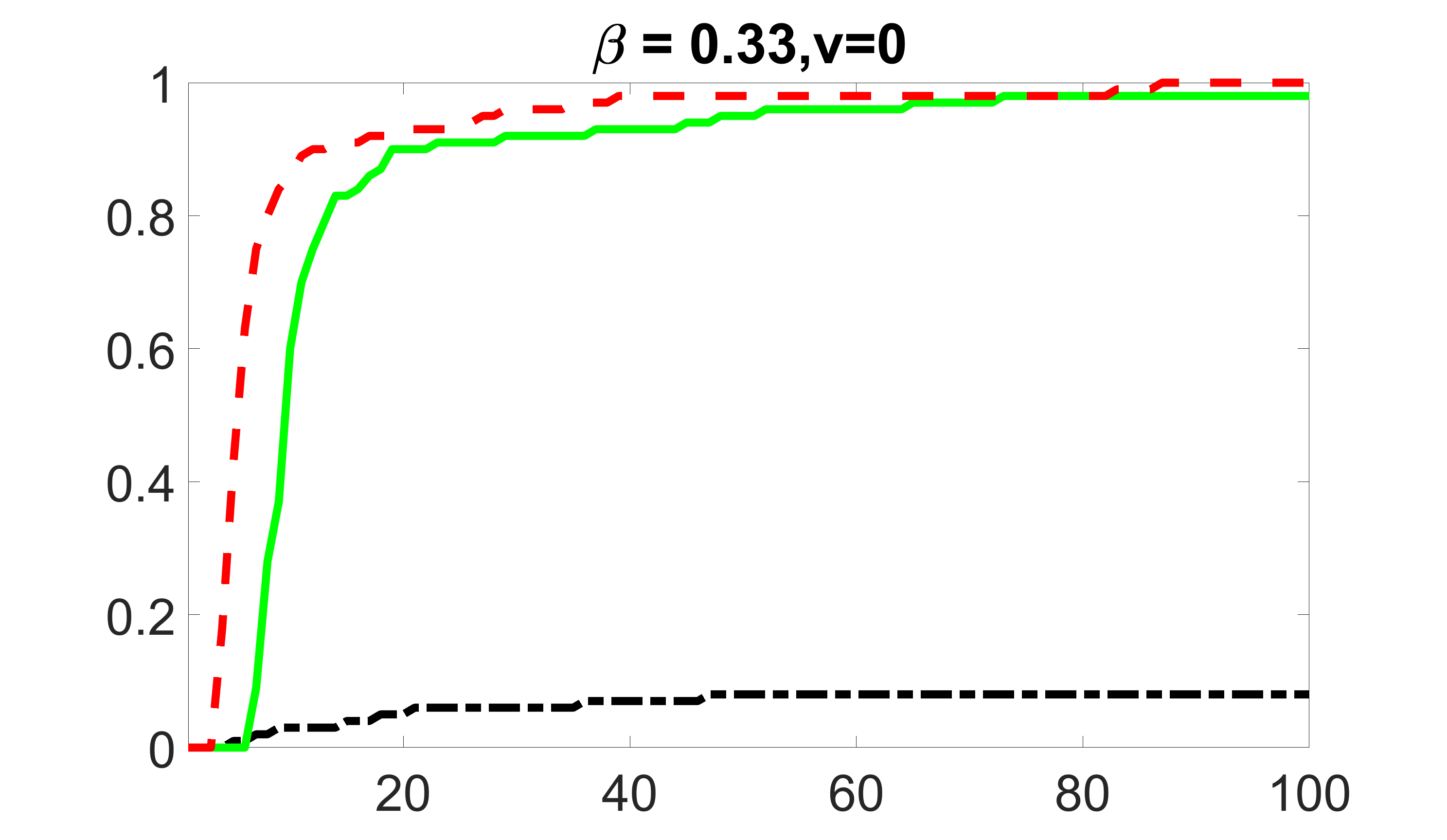}}
  \subcaptionbox{Overall coverage of $\mathcal{M}_1$}[0.45\linewidth]
 {\includegraphics[width=6cm,height=3.5cm]{./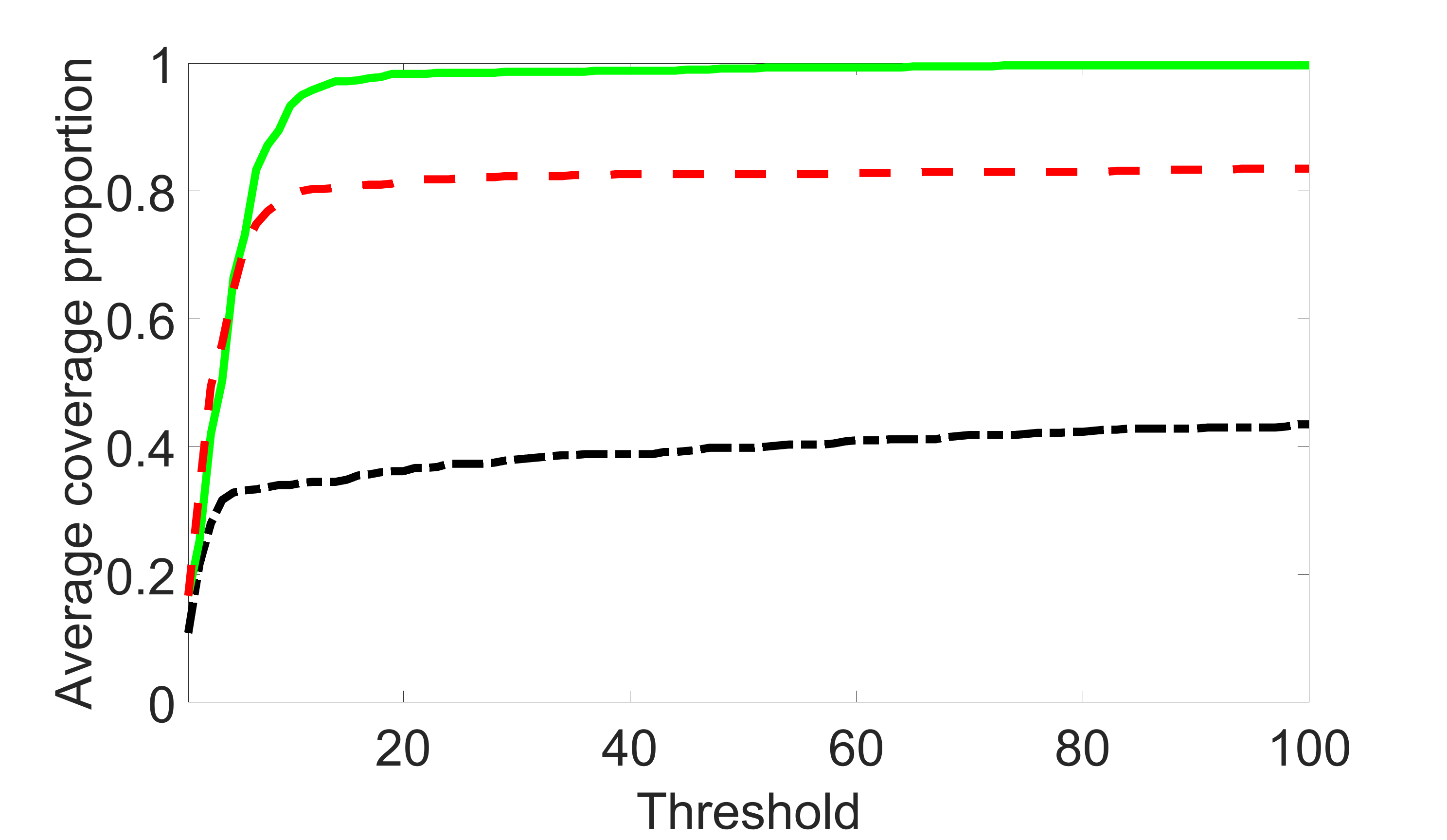}}
\caption{Simulation results for the case $(n,s,\sigma) = (200,5000,0.5)$: Panels (a) -- (f) plot the average coverage proportion for $X_l$, where $l=1,2,3,104,105$ and $106$. Panels (a) -- (c) correspond to strong outcome and weak exposure predictor, moderate outcome and moderate exposure predictor and weak outcome and strong exposure predictor; Panels (d) -- (f) correspond to strong, moderate and weak predictors of outcome only. Panel (g) plots the average coverage proportion for the index set $\mathcal{M}_1 = \{1,2,3,104,105,106\}$. The x-axis represents the size of $\widehat{\mathcal{M}} $, while
y-axis denotes the average proportion. The green solid, the red dashed and the black dash dotted lines denote our joint screening method, the outcome screening method, and the intersection screening method, respectively.}
\label{sim1step1n200sigma025}
\end{figure}

\begin{figure}[htbp]
\captionsetup[subfigure]{justification=centering}
\centering
 \subcaptionbox{Confounder: strong \\ outcome, weak exposure}[0.45\linewidth]
 {\includegraphics[width=6cm,height=3.5cm]{./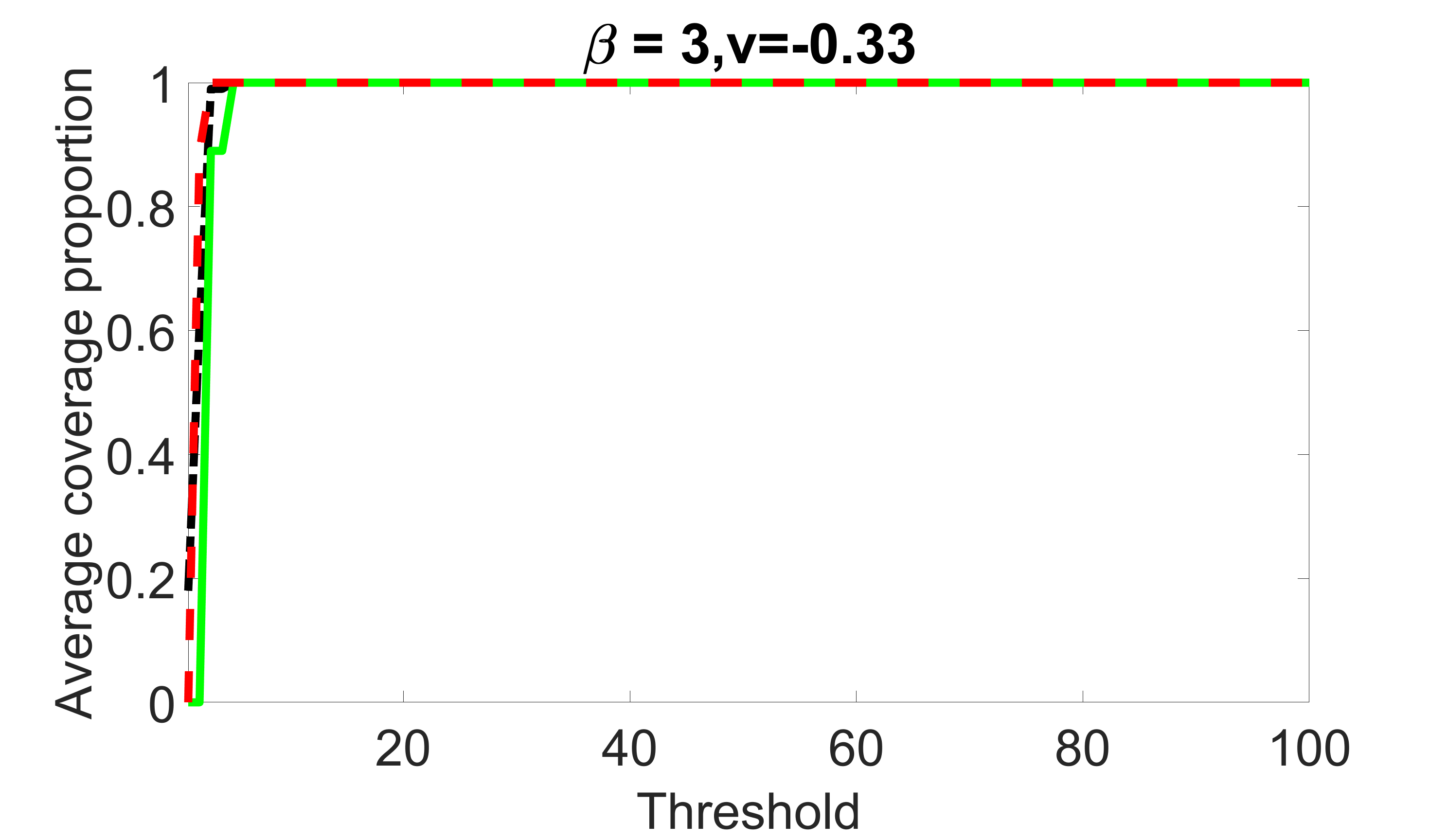}}
 \subcaptionbox{Confounder: medium \\ outcome, medium exposure}[0.45\linewidth]
 {\includegraphics[width=6cm,height=3.5cm]{./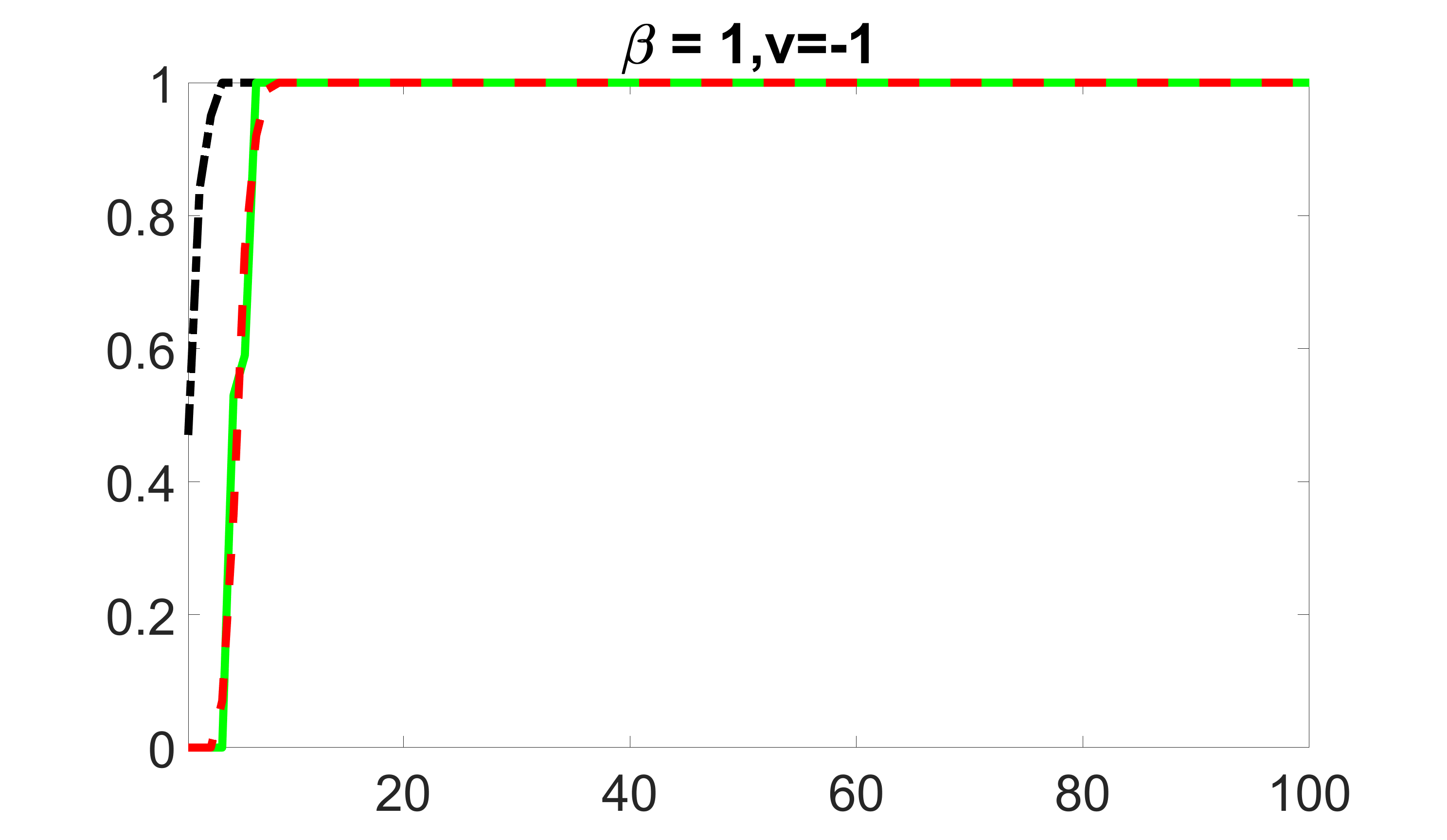}}
  \subcaptionbox{Confounder: weak \\ outcome, strong exposure}[0.45\linewidth]
 {\includegraphics[width=6cm,height=3.5cm]{./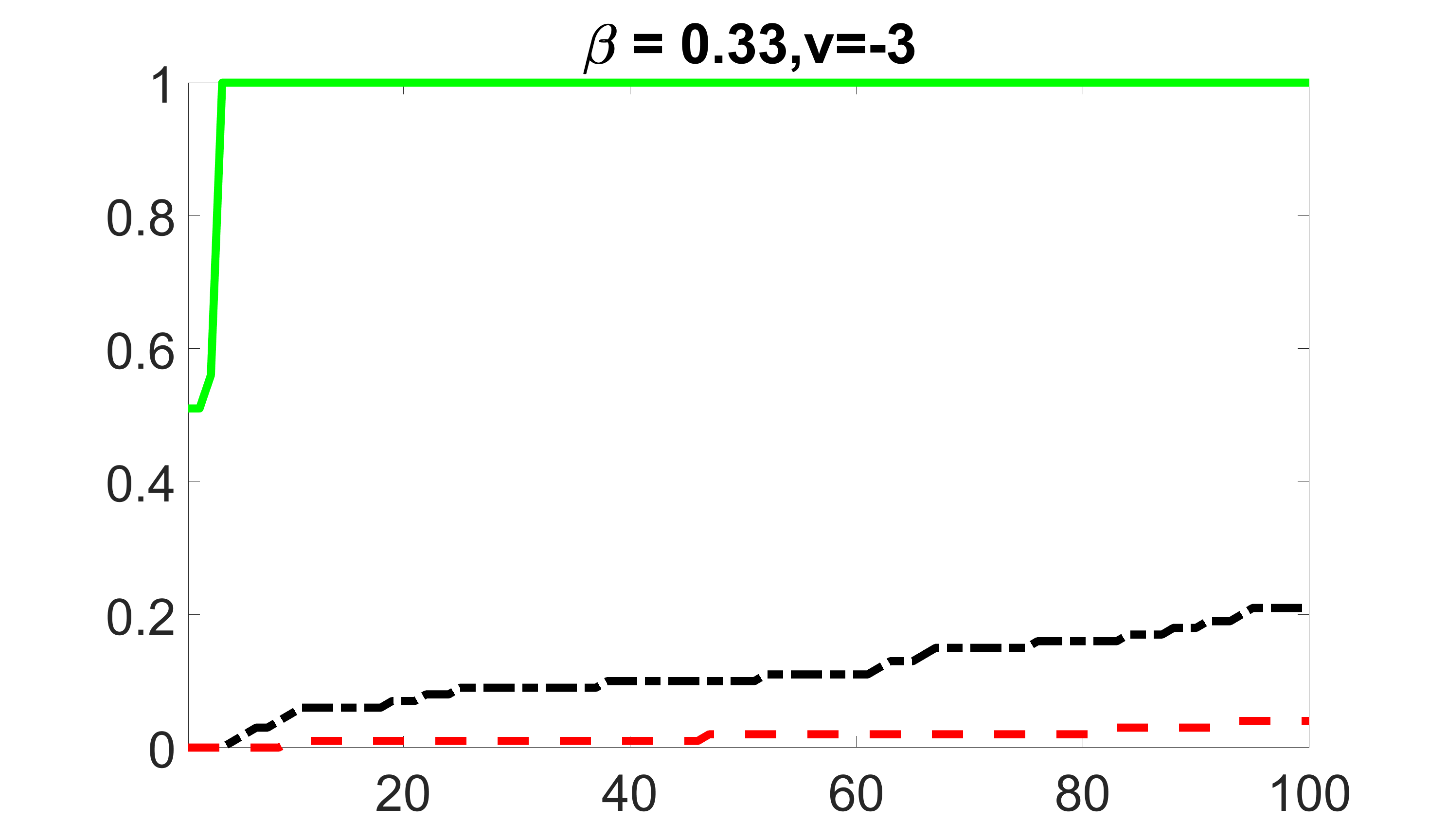}}
  \subcaptionbox{Precision: strong \\ outcome, zero exposure}[0.45\linewidth]
 {\includegraphics[width=6cm,height=3.5cm]{./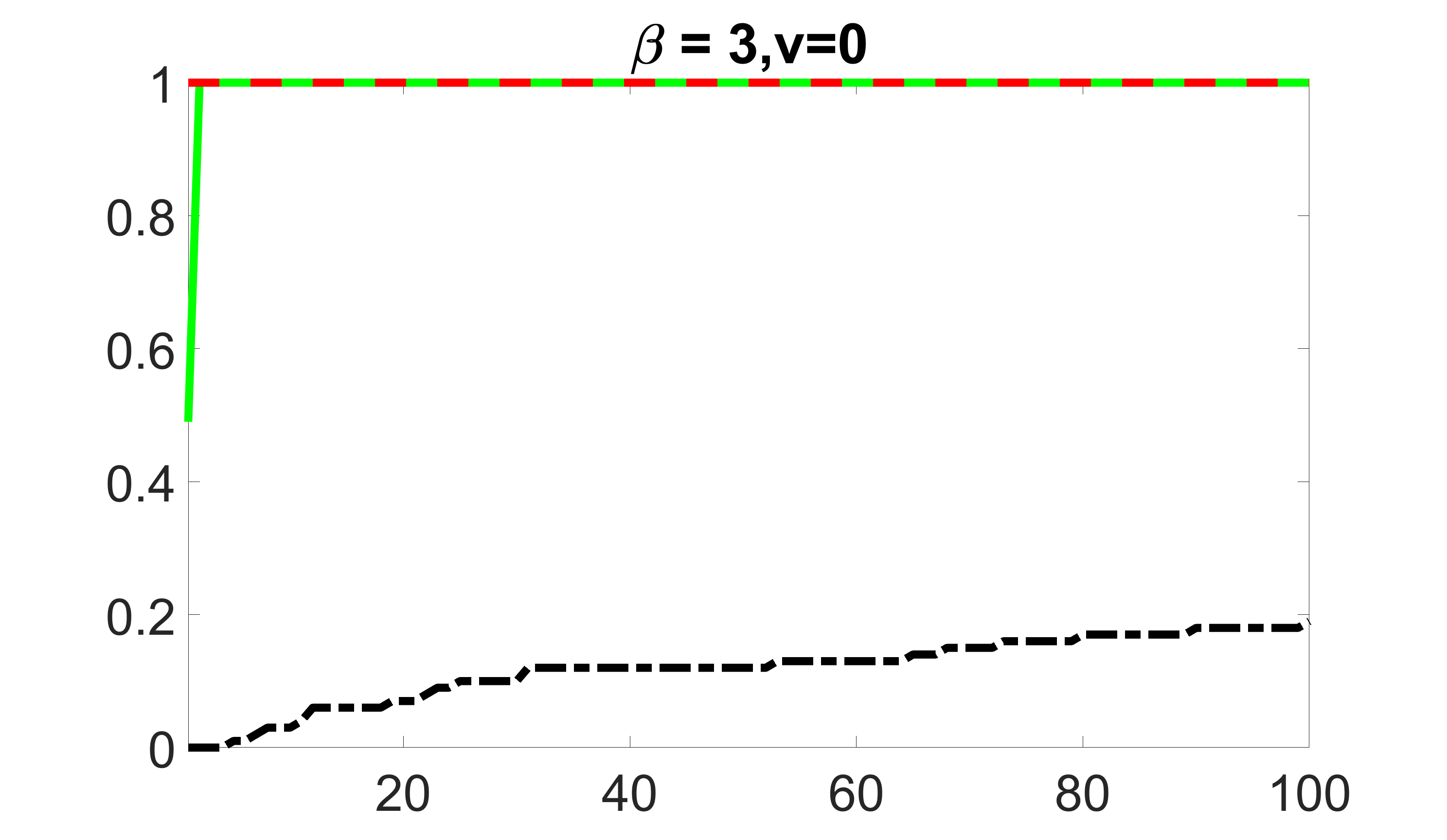}}
  \subcaptionbox{Precision: medium \\ outcome, zero exposure}[0.45\linewidth]
 {\includegraphics[width=6cm,height=3.5cm]{./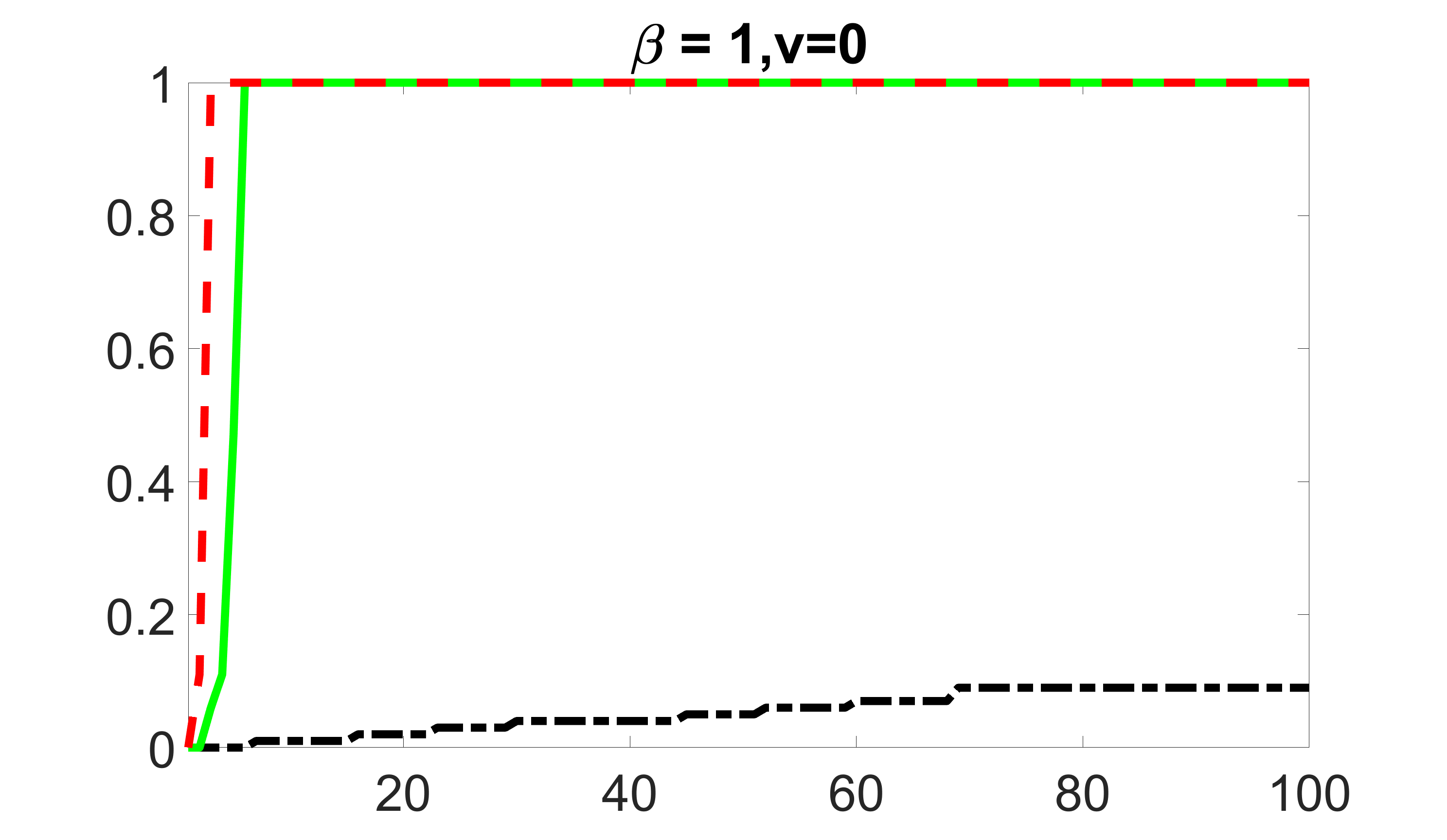}}
  \subcaptionbox{Precision: weak \\ outcome, zero exposure}[0.45\linewidth]
 {\includegraphics[width=6cm,height=3.5cm]{./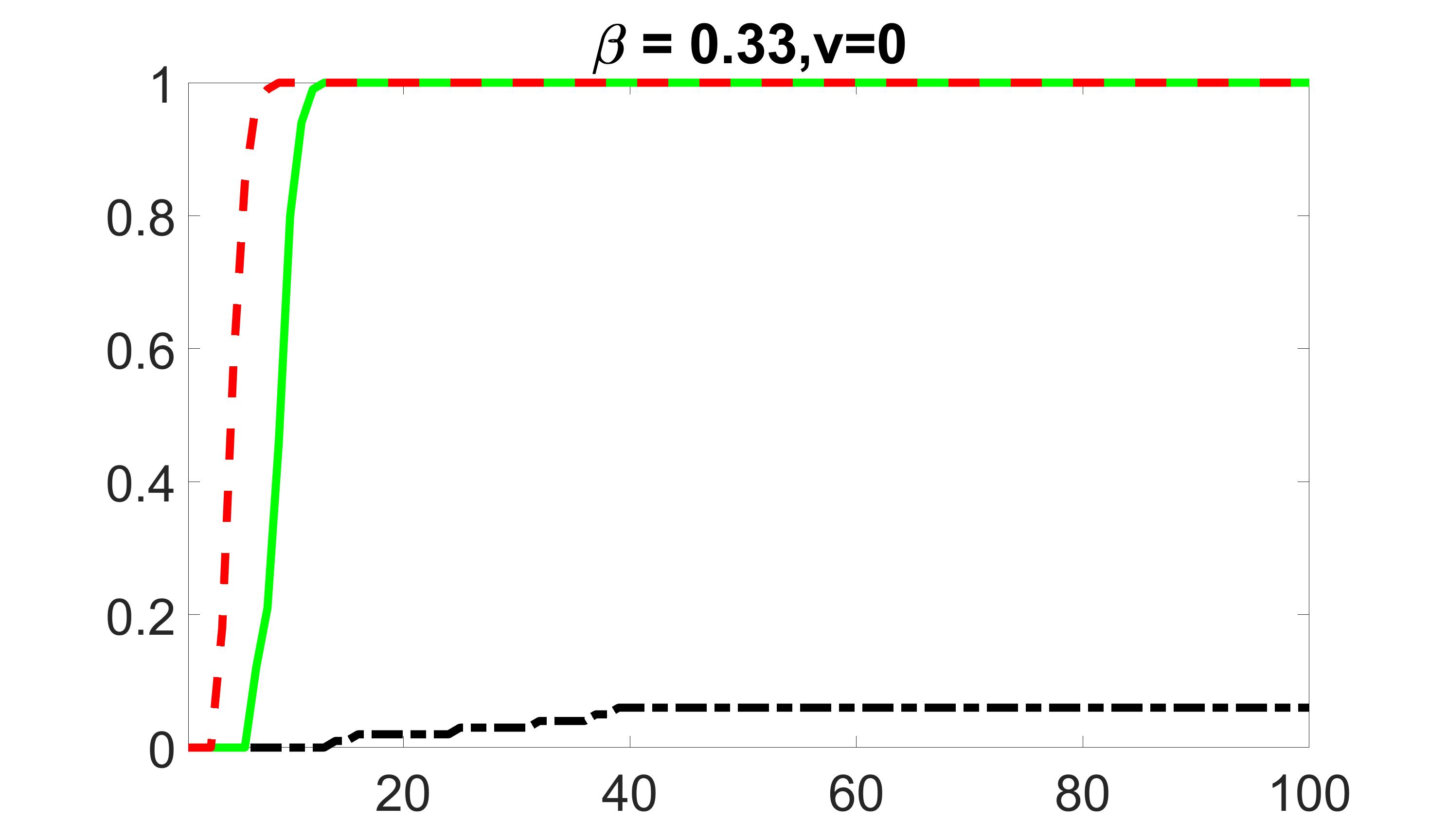}}
  \subcaptionbox{Overall coverage of $\mathcal{M}_1$}[0.45\linewidth]
 {\includegraphics[width=6cm,height=3.5cm]{./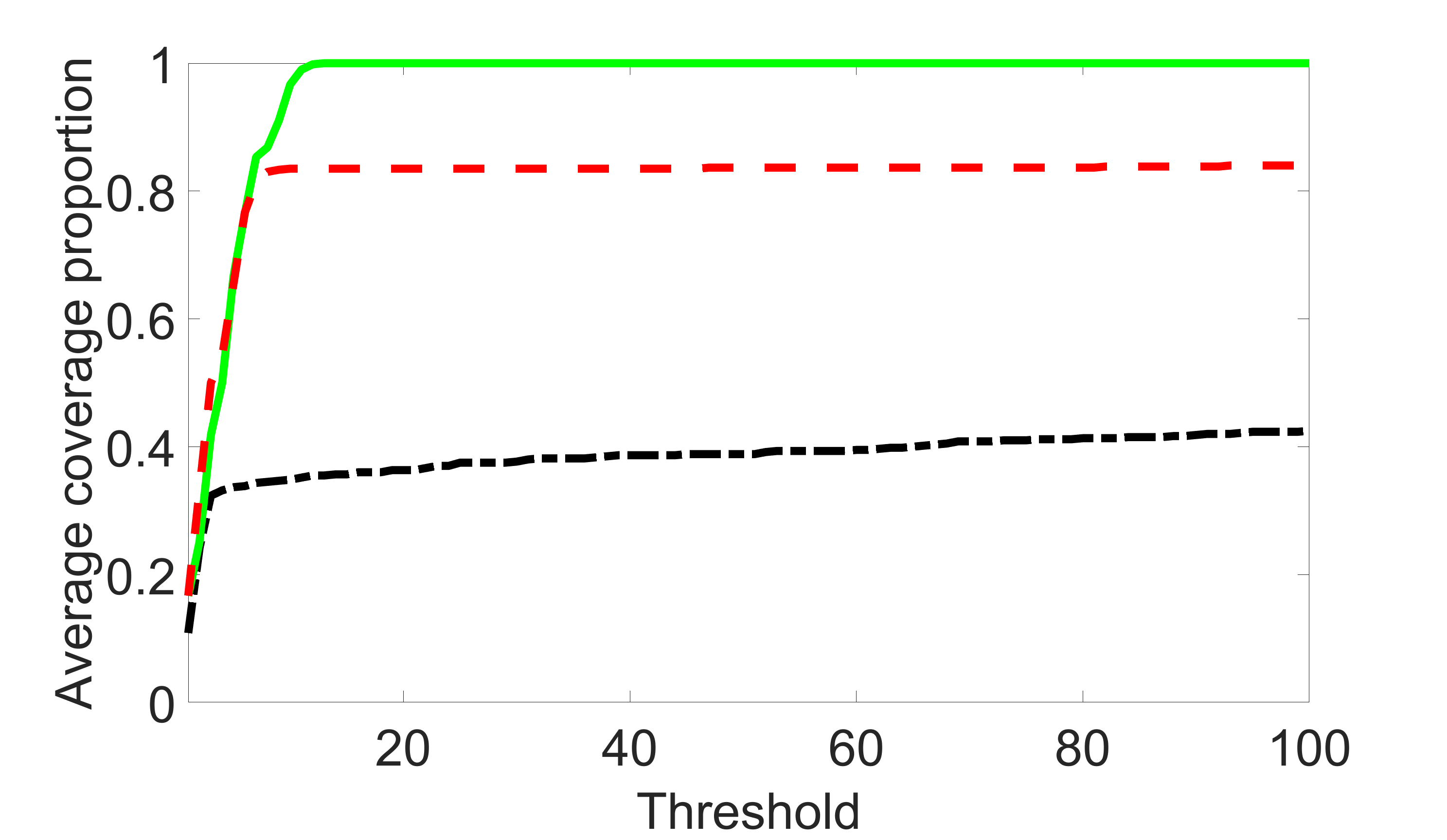}}
\caption{Simulation results for the case $(n,s,\sigma) = (500,5000,1)$: Panels (a) -- (f) plot the average coverage proportion for $X_l$, where $l=1,2,3,104,105$ and $106$. Panels (a) -- (c) correspond to strong outcome and weak exposure predictor, moderate outcome and moderate exposure predictor and weak outcome and strong exposure predictor; Panels (d) -- (f) correspond to strong, moderate and weak predictors of outcome only. Panel (g) plots the average coverage proportion for the index set $\mathcal{M}_1 = \{1,2,3,104,105,106\}$. The x-axis represents the size of $\widehat{\mathcal{M}} $, while
y-axis denotes the average proportion. The green solid, the red dashed and the black dash dotted lines denote our joint screening method, the outcome screening method, and the intersection screening method, respectively.}
\label{sim1step1n500sigma1}
\end{figure}

\begin{figure}[htbp]
\captionsetup[subfigure]{justification=centering}
\centering
 \subcaptionbox{Confounder: strong \\ outcome, weak exposure}[0.45\linewidth]
 {\includegraphics[width=6cm,height=3.5cm]{./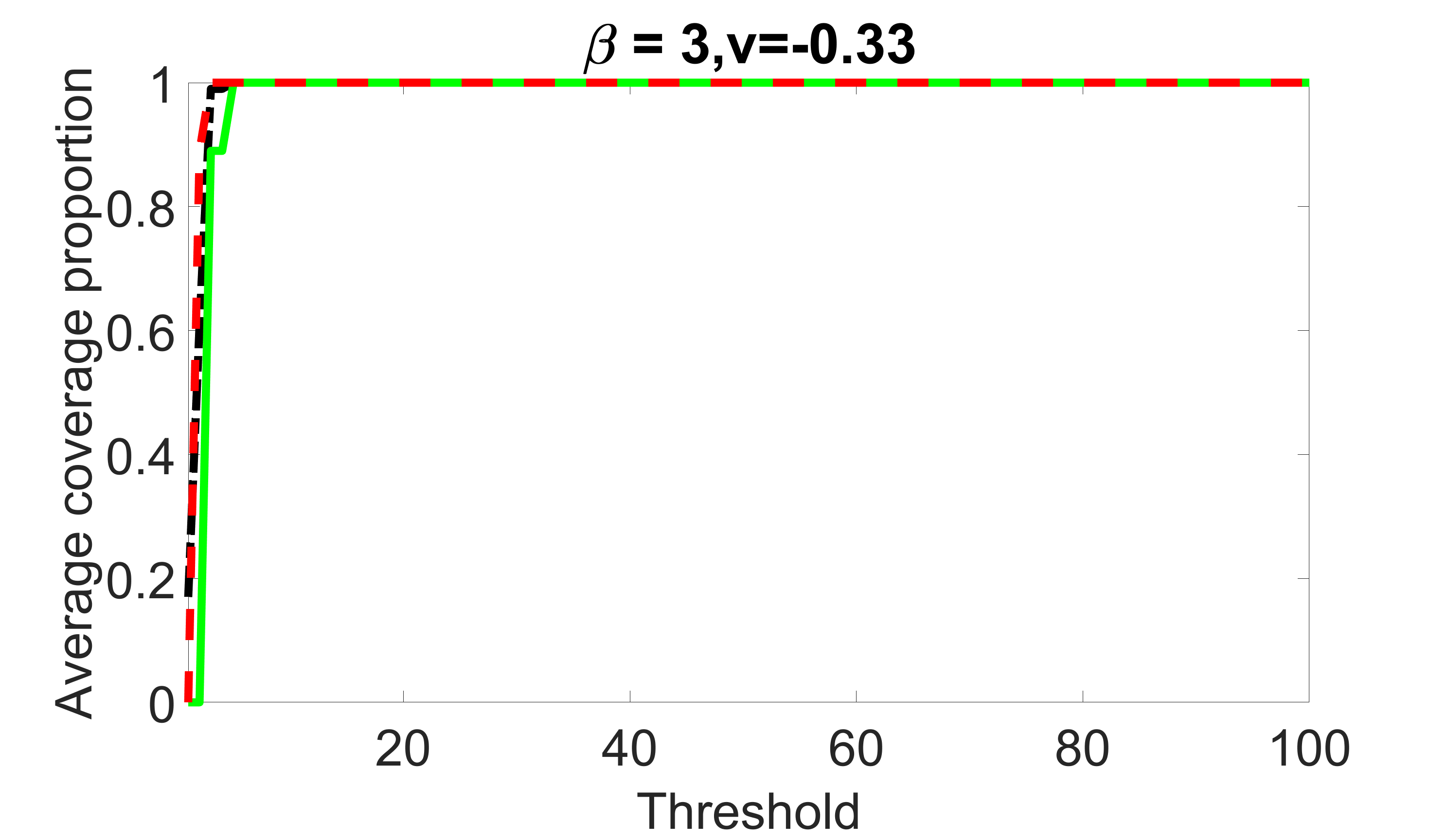}}
 \subcaptionbox{Confounder: medium \\ outcome, medium exposure}[0.45\linewidth]
 {\includegraphics[width=6cm,height=3.5cm]{./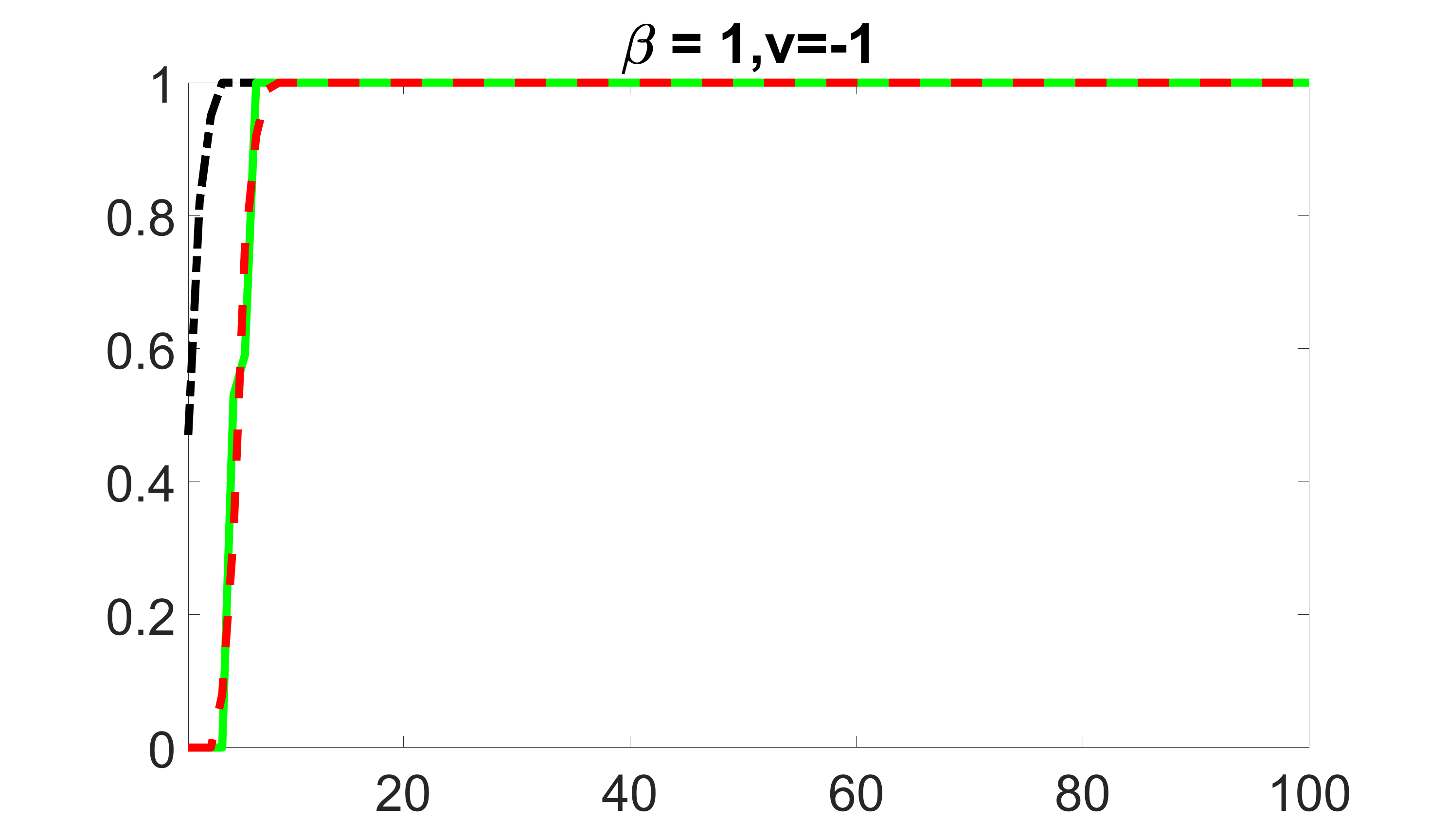}}
  \subcaptionbox{Confounder: weak \\ outcome, strong exposure}[0.45\linewidth]
 {\includegraphics[width=6cm,height=3.5cm]{./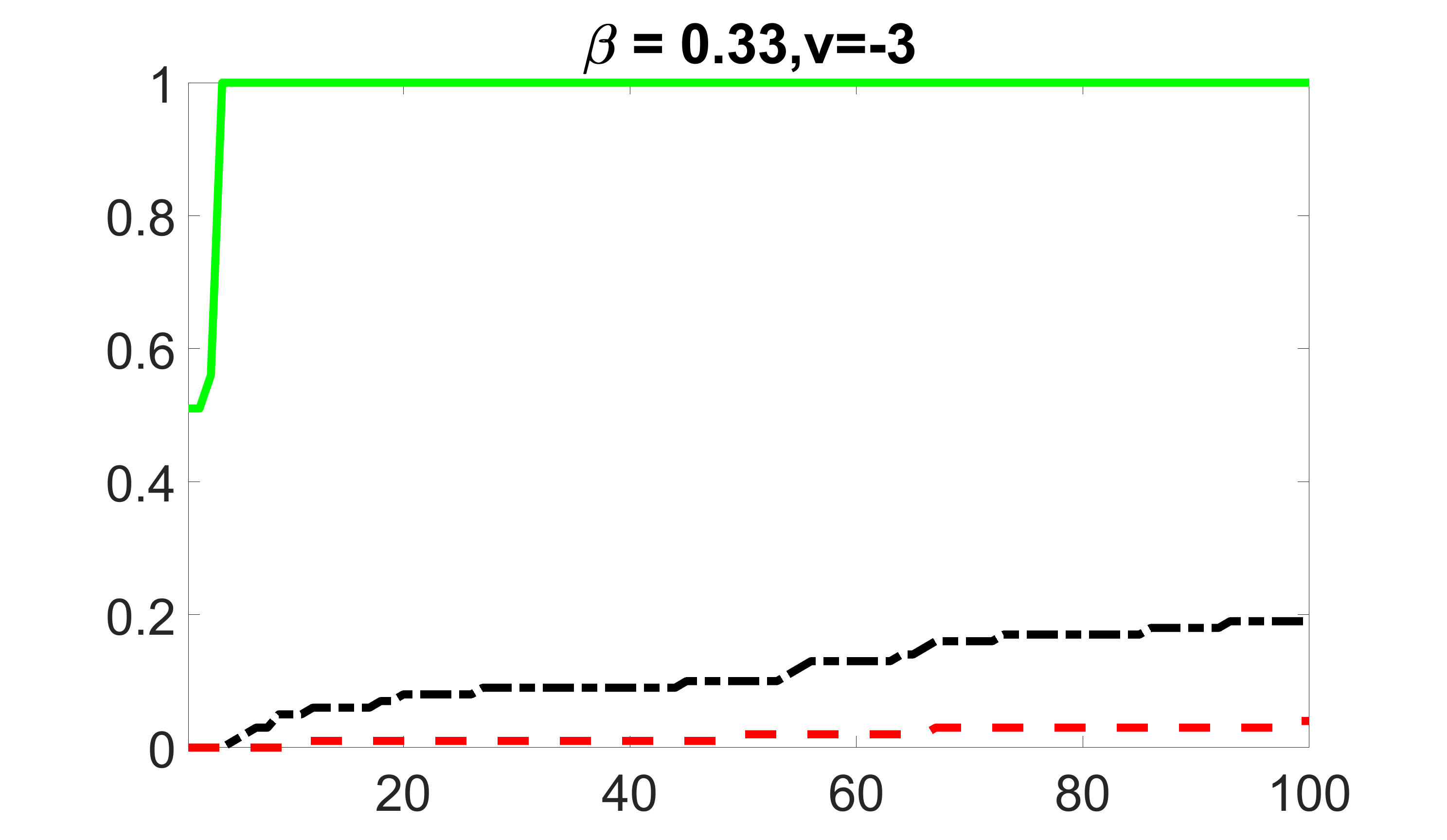}}
  \subcaptionbox{Precision: strong \\ outcome, zero exposure}[0.45\linewidth]
 {\includegraphics[width=6cm,height=3.5cm]{./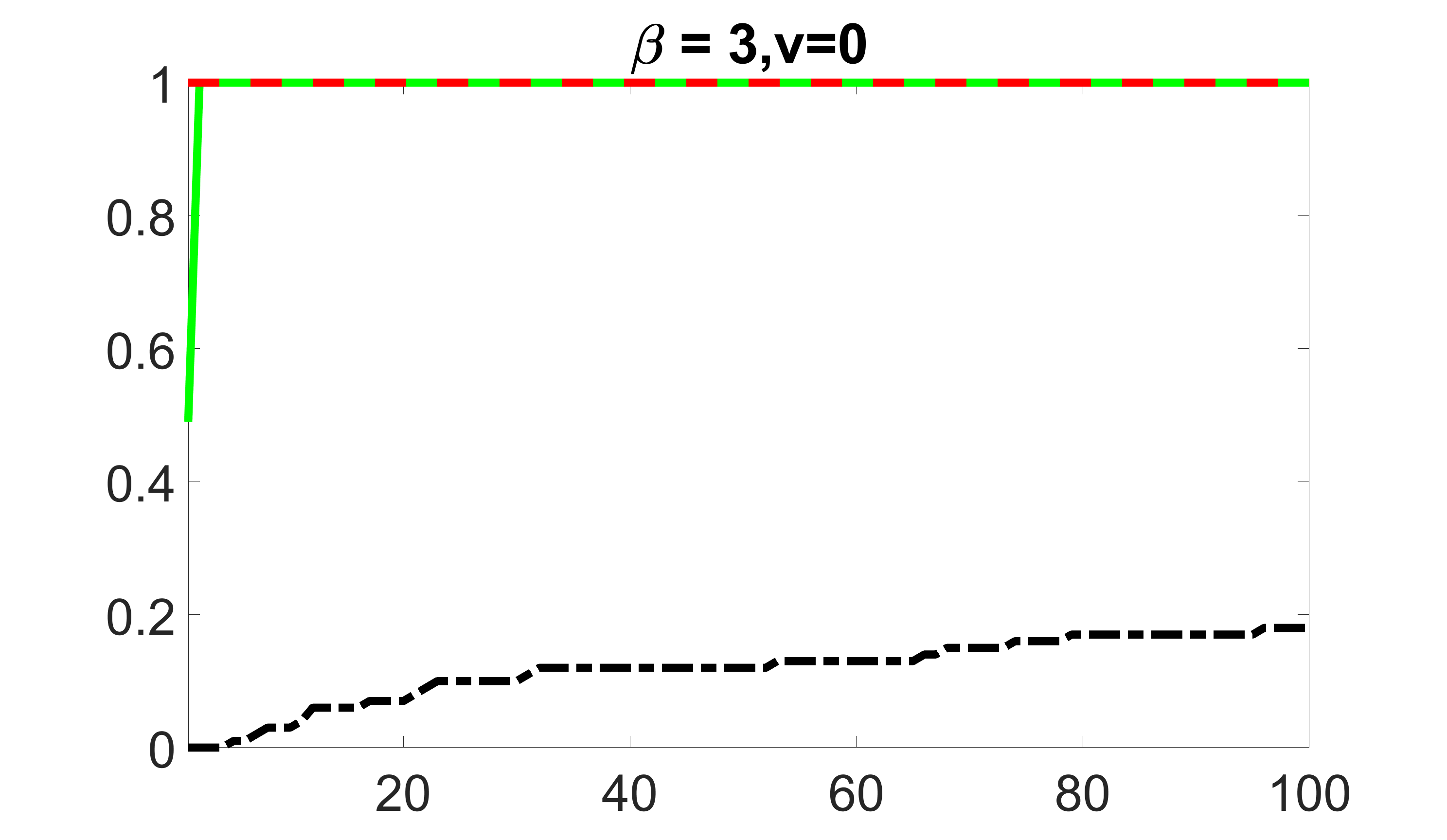}}
  \subcaptionbox{Precision: medium \\ outcome, zero exposure}[0.45\linewidth]
 {\includegraphics[width=6cm,height=3.5cm]{./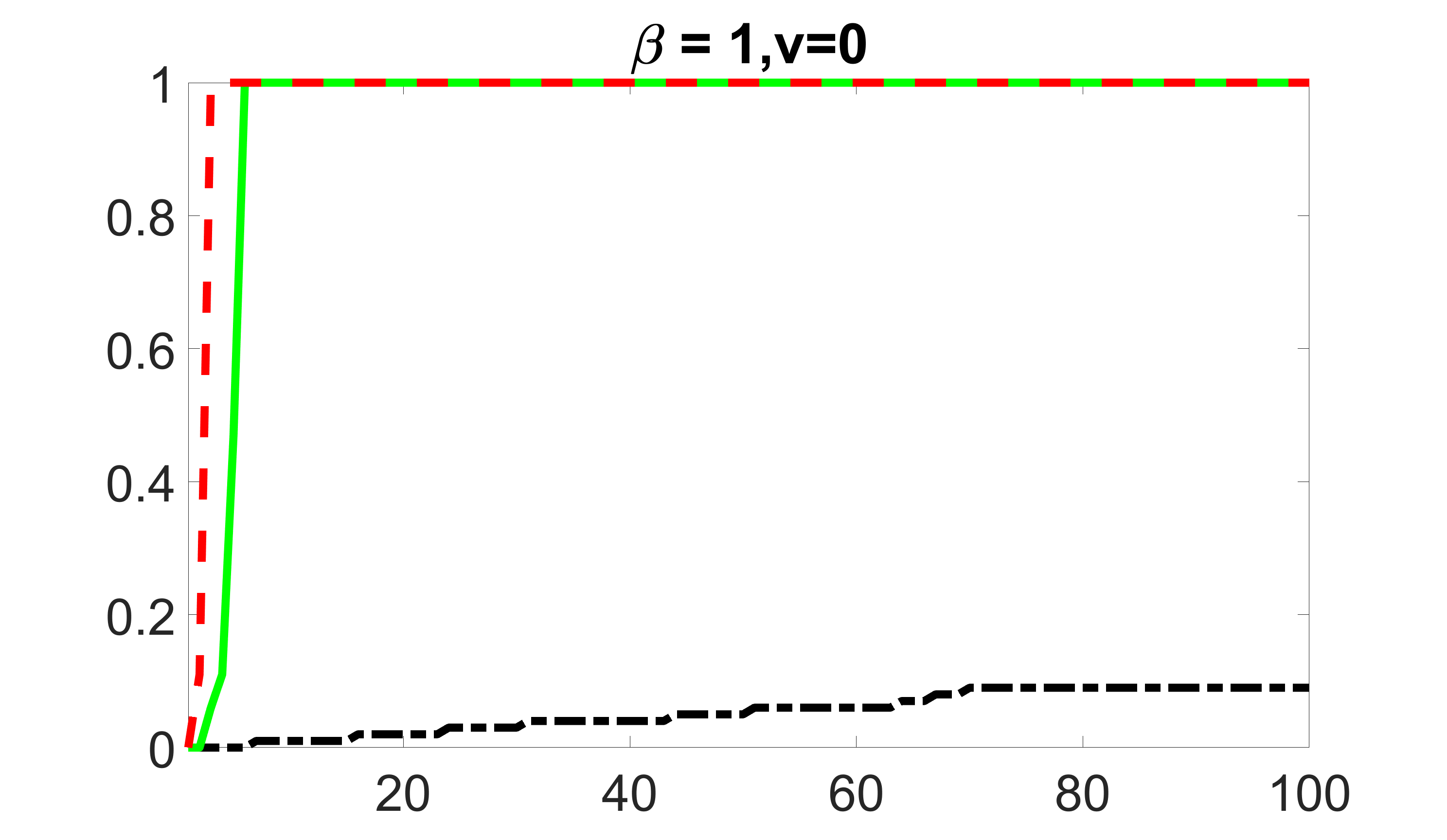}}
  \subcaptionbox{Precision: weak \\ outcome, zero exposure}[0.45\linewidth]
 {\includegraphics[width=6cm,height=3.5cm]{./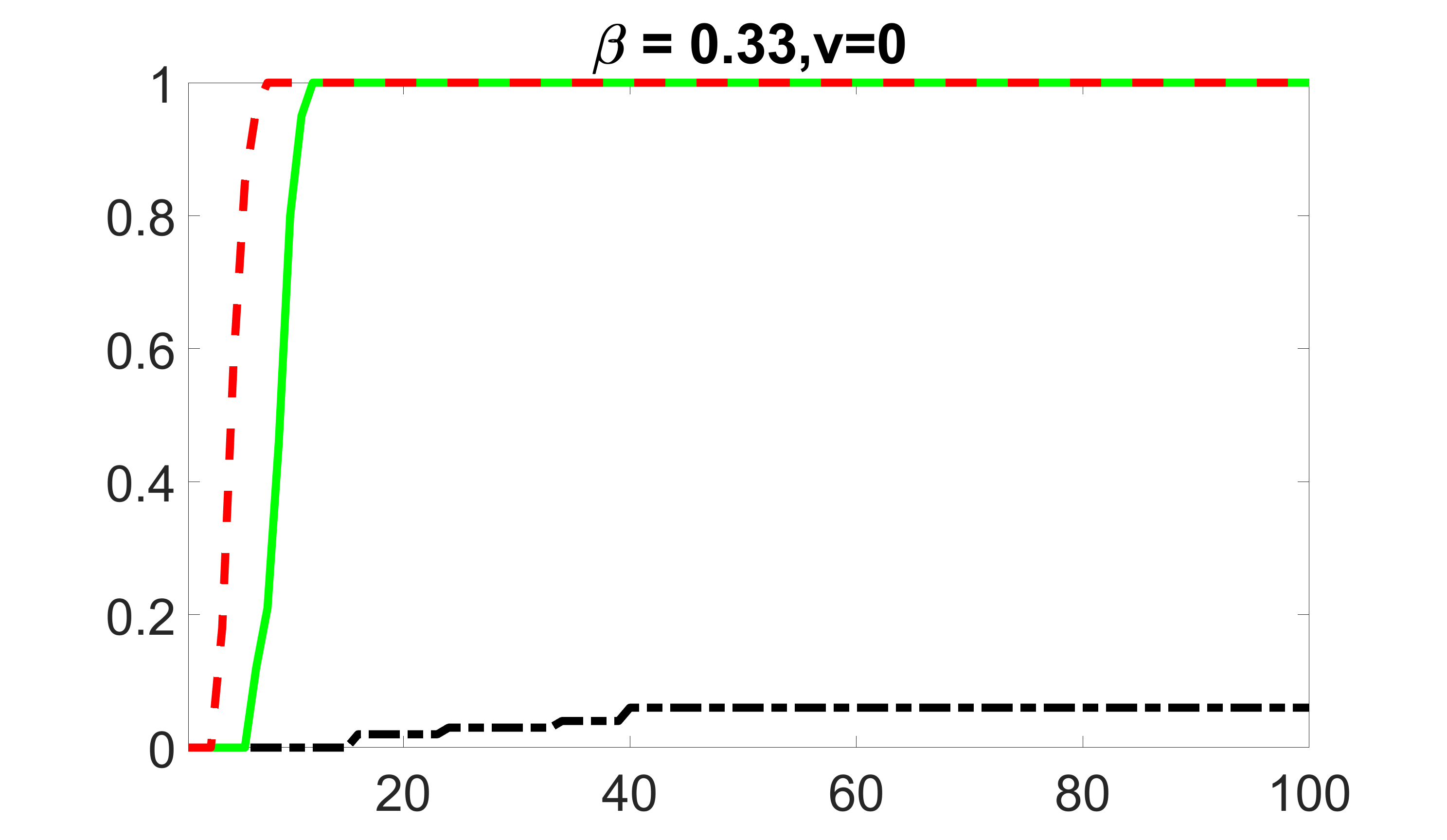}}
  \subcaptionbox{Overall coverage of $\mathcal{M}_1$}[0.45\linewidth]
 {\includegraphics[width=6cm,height=3.5cm]{./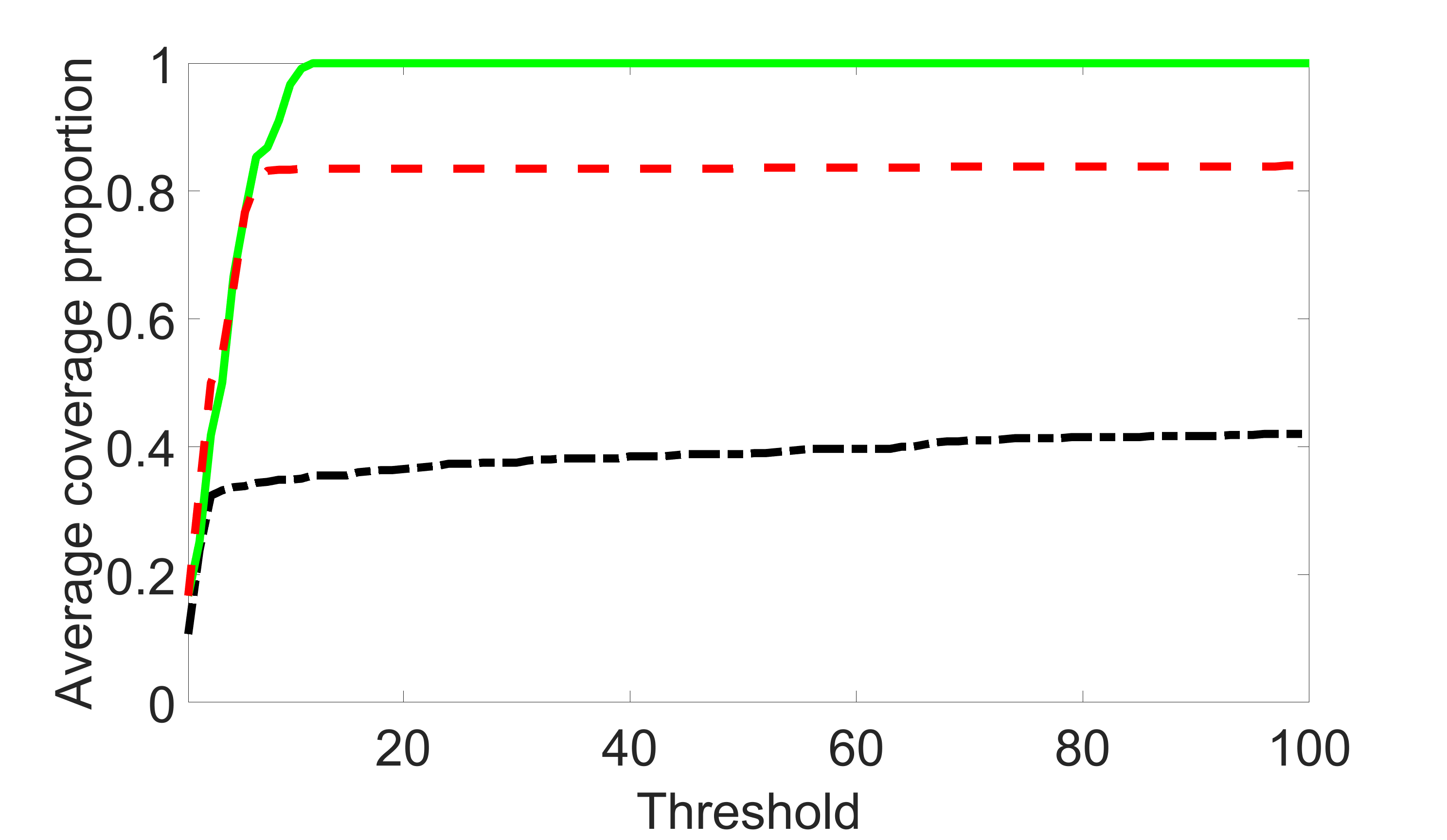}}
\caption{Simulation results for the case $(n,s,\sigma) = (500,5000,0.5)$: Panels (a) -- (f) plot the average coverage proportion for $X_l$, where $l=1,2,3,104,105$ and $106$. Panels (a) -- (c) correspond to strong outcome and weak exposure predictor, moderate outcome and moderate exposure predictor and weak outcome and strong exposure predictor; Panels (d) -- (f) correspond to strong, moderate and weak predictors of outcome only. Panel (g) plots the average coverage proportion for the index set $\mathcal{M}_1 = \{1,2,3,104,105,106\}$. The x-axis represents the size of $\widehat{\mathcal{M}} $, while
y-axis denotes the average proportion. The green solid, the red dashed and the black dash dotted lines denote our joint screening method, the outcome screening method, and the intersection screening method, respectively.}
\label{sim1step1n500sigma025}
\end{figure}

\begin{figure}[htbp]
\captionsetup[subfigure]{justification=centering}
\centering
 \subcaptionbox{Confounder: strong \\ outcome, weak exposure}[0.45\linewidth]
 {\includegraphics[width=6cm,height=3.5cm]{./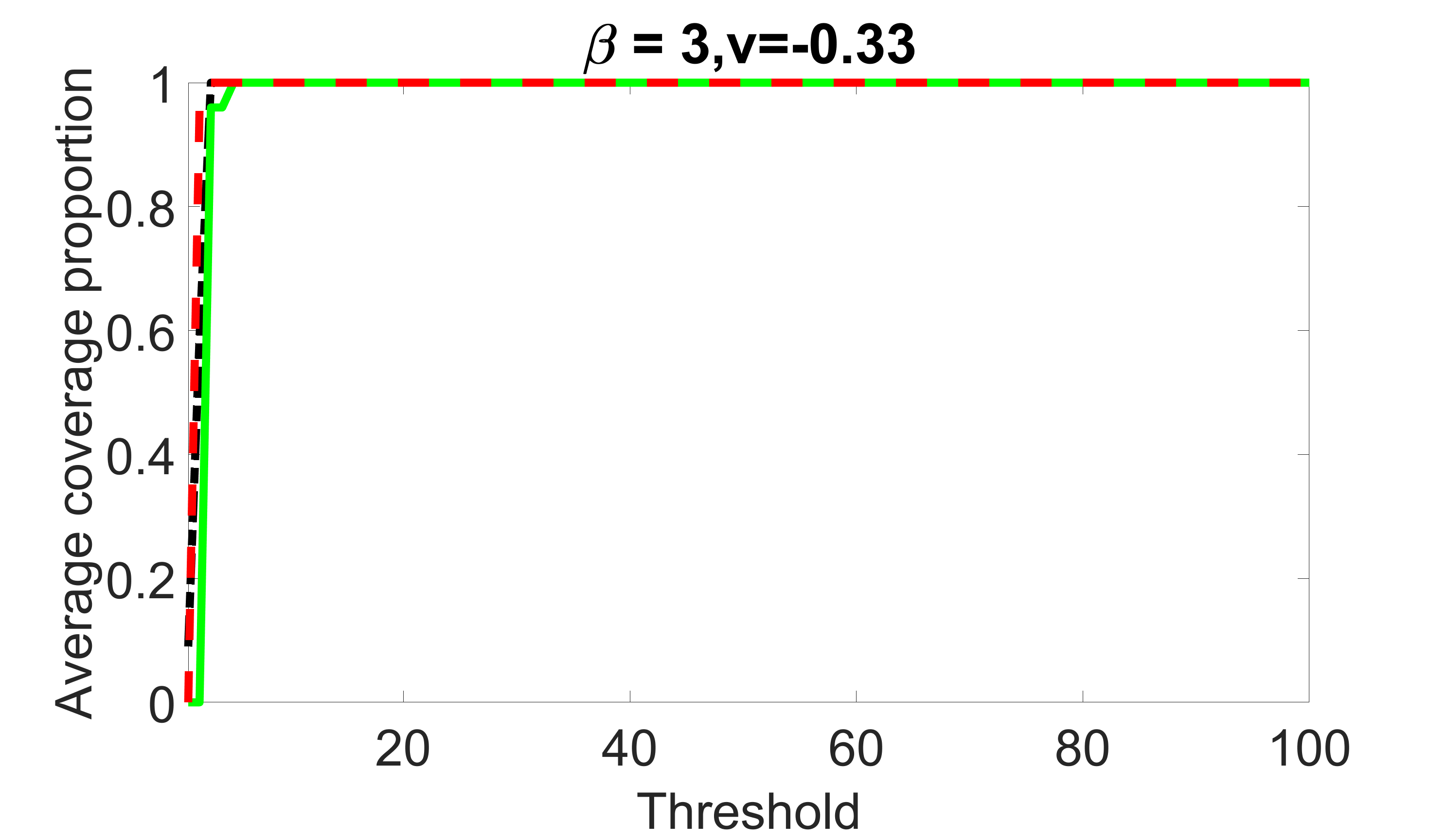}}
 \subcaptionbox{Confounder: medium \\ outcome, medium exposure}[0.45\linewidth]
 {\includegraphics[width=6cm,height=3.5cm]{./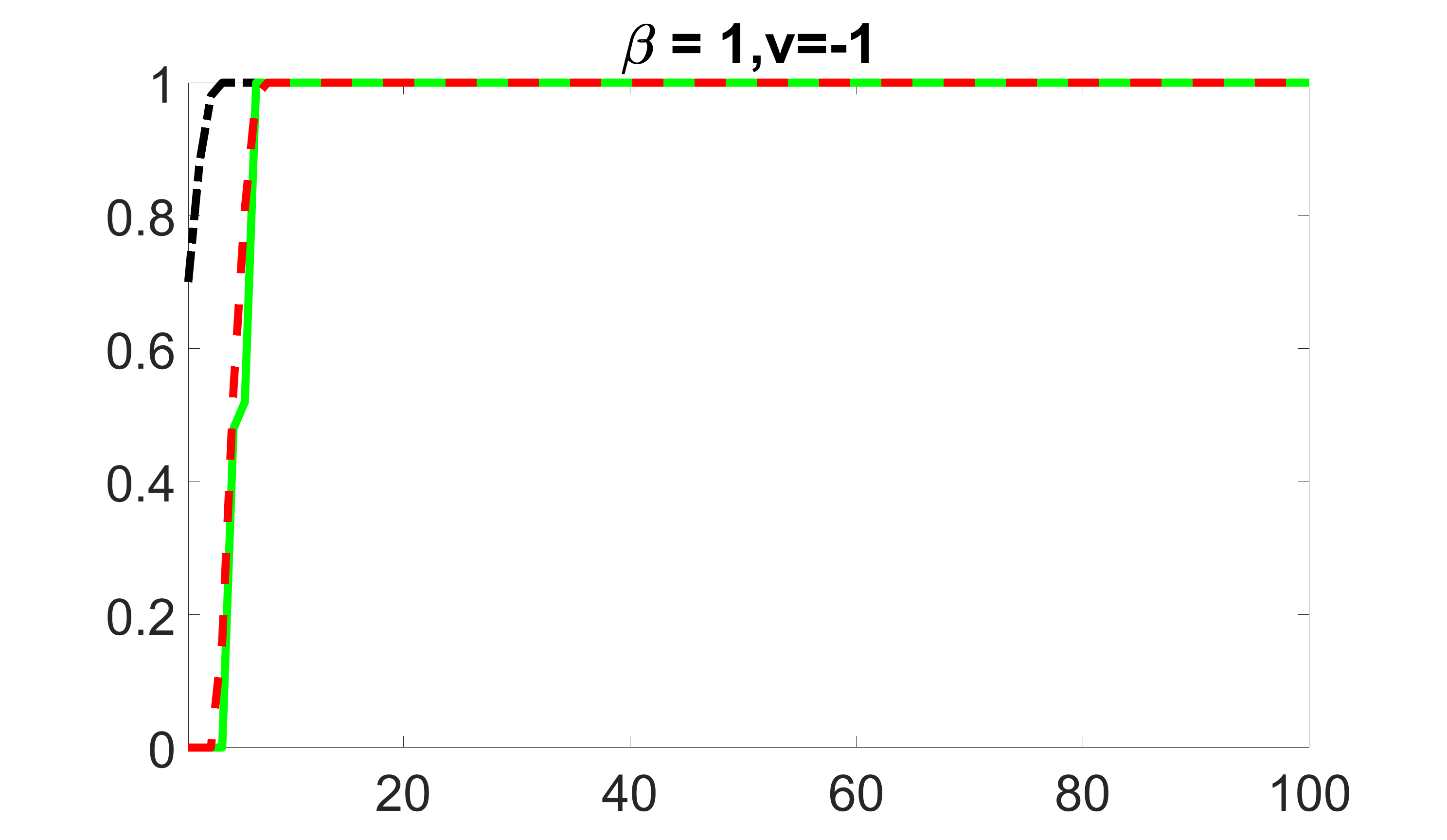}}
  \subcaptionbox{Confounder: weak \\ outcome, strong exposure}[0.45\linewidth]
 {\includegraphics[width=6cm,height=3.5cm]{./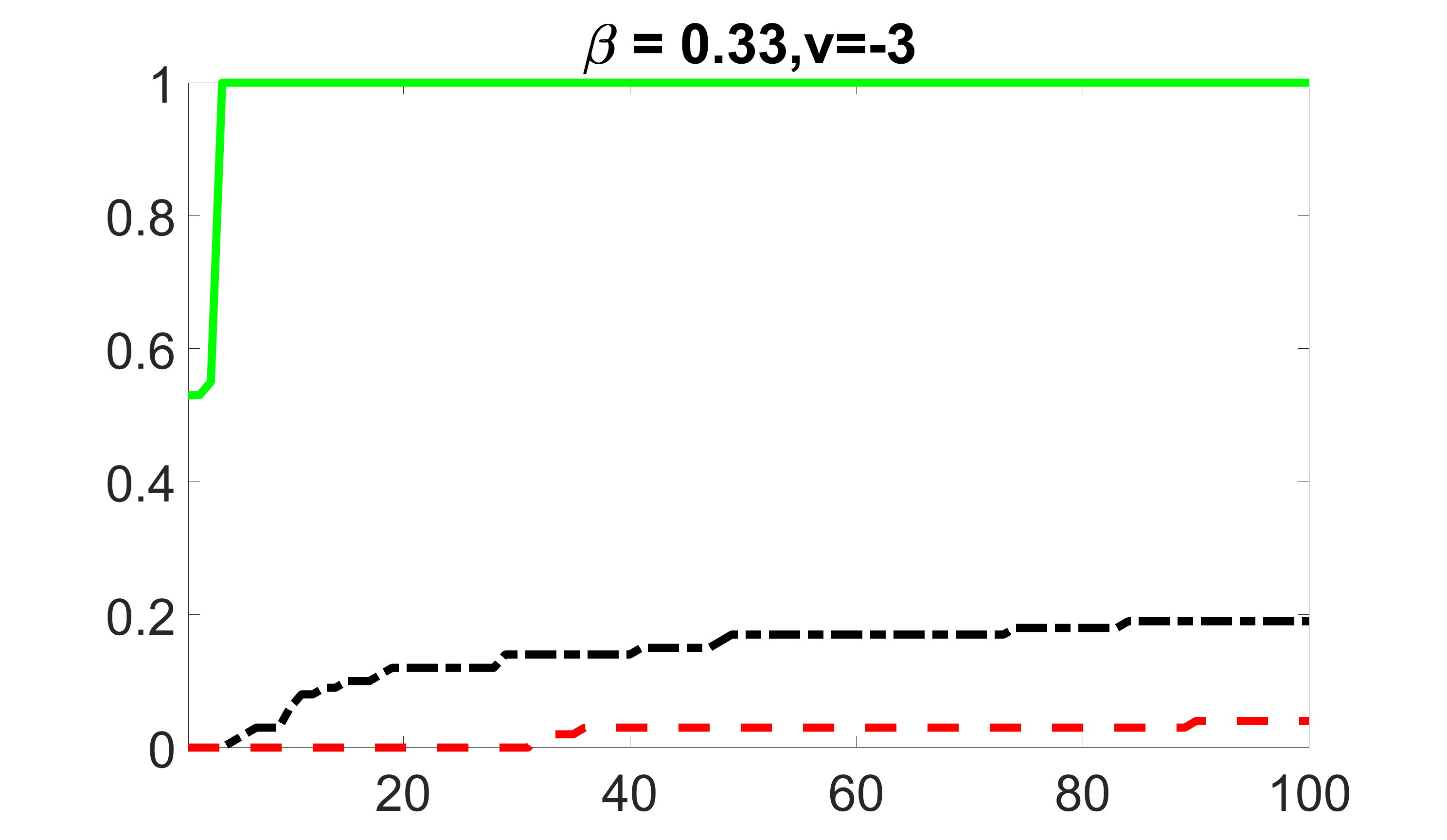}}
  \subcaptionbox{Precision: strong \\ outcome, zero exposure}[0.45\linewidth]
 {\includegraphics[width=6cm,height=3.5cm]{./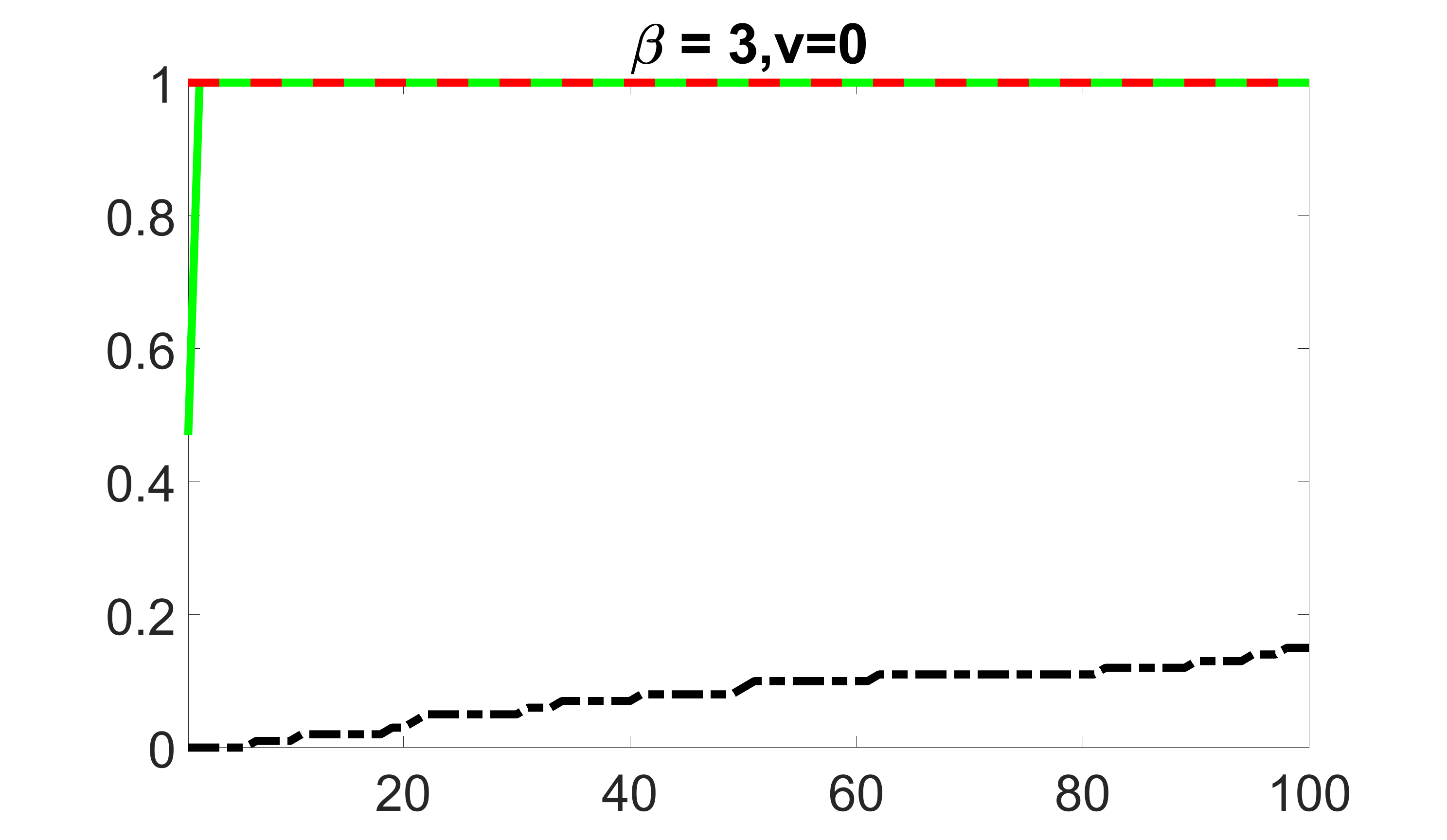}}
  \subcaptionbox{Precision: medium \\ outcome, zero exposure}[0.45\linewidth]
 {\includegraphics[width=6cm,height=3.5cm]{./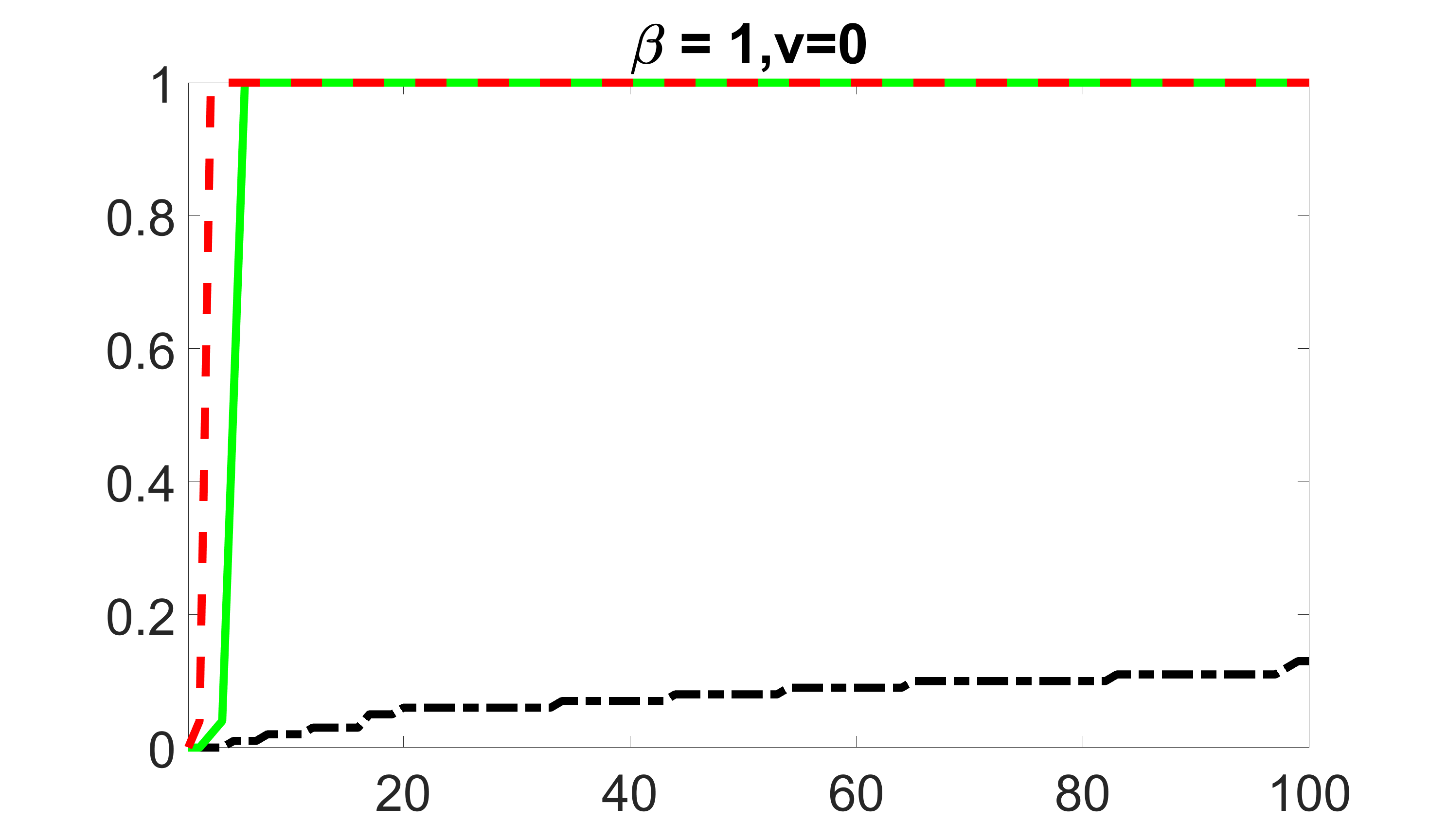}}
  \subcaptionbox{Precision: weak \\ outcome, zero exposure}[0.45\linewidth]
 {\includegraphics[width=6cm,height=3.5cm]{./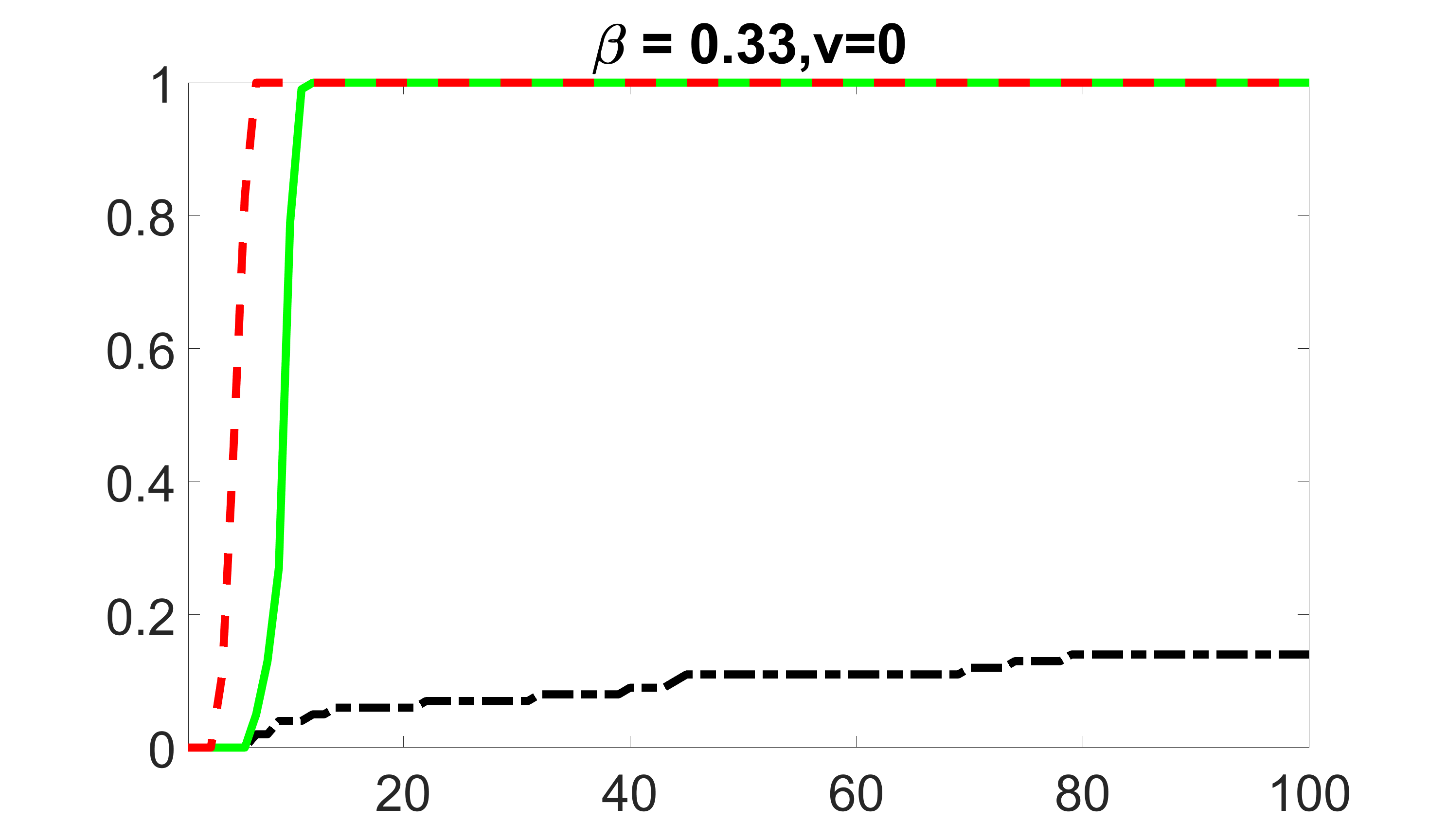}}
  \subcaptionbox{Overall coverage of $\mathcal{M}_1$}[0.45\linewidth]
 {\includegraphics[width=6cm,height=3.5cm]{./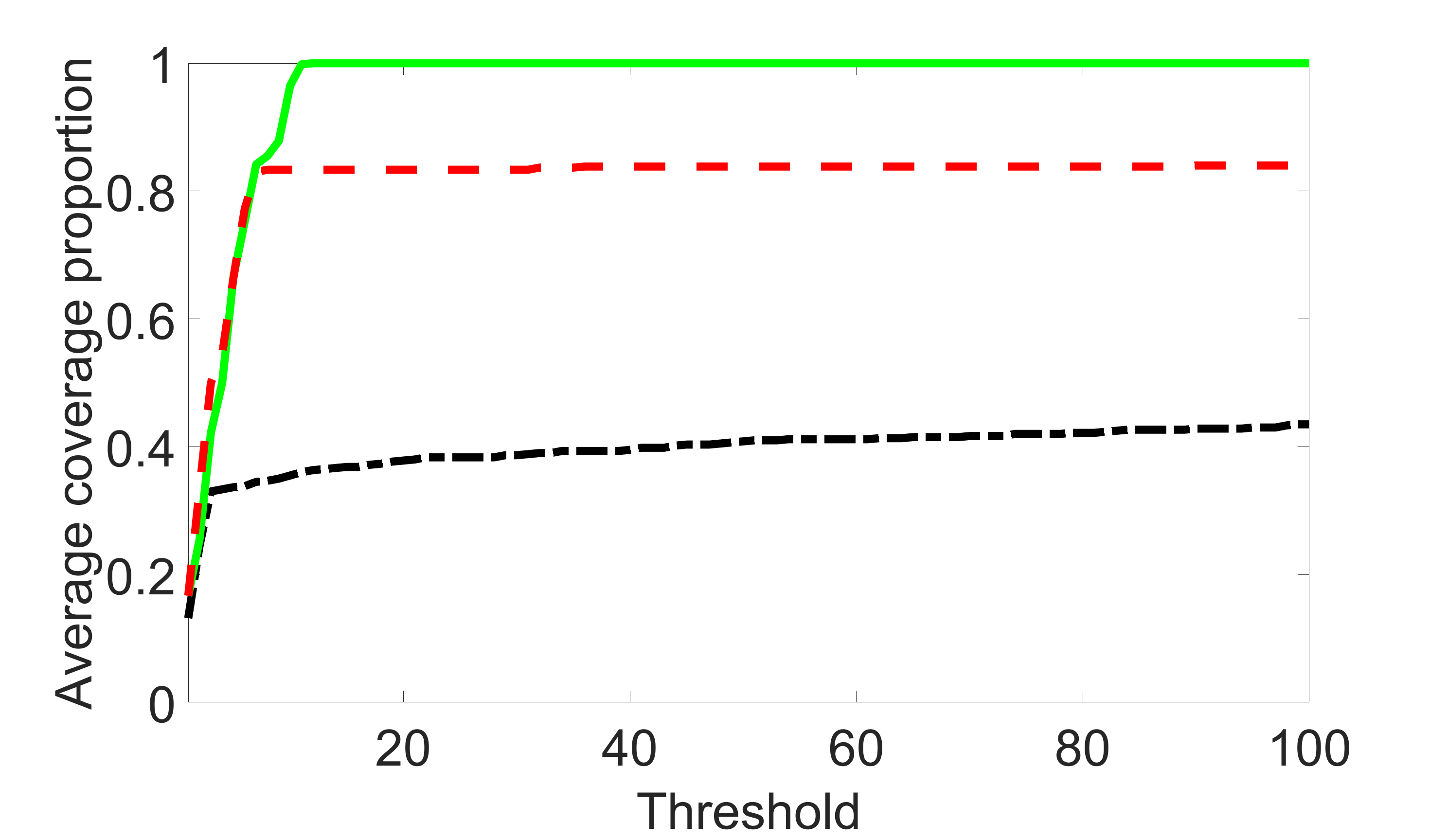}}
\caption{Simulation results for the case $(n,s,\sigma) = (1000,5000,1)$: Panels (a) -- (f) plot the average coverage proportion for $X_l$, where $l=1,2,3,104,105$ and $106$. Panels (a) -- (c) correspond to strong outcome and weak exposure predictor, moderate outcome and moderate exposure predictor and weak outcome and strong exposure predictor; Panels (d) -- (f) correspond to strong, moderate and weak predictors of outcome only. Panel (g) plots the average coverage proportion for the index set $\mathcal{M}_1 = \{1,2,3,104,105,106\}$. The x-axis represents the size of $\widehat{\mathcal{M}} $, while
y-axis denotes the average proportion. The green solid, the red dashed and the black dash dotted lines denote our joint screening method, the outcome screening method, and the intersection screening method, respectively. }
\label{sim1step1n1000sigma1}
\end{figure}

\begin{figure}[htbp]
\captionsetup[subfigure]{justification=centering}
\centering
 \subcaptionbox{Confounder: strong \\ outcome, weak exposure}[0.45\linewidth]
 {\includegraphics[width=6cm,height=3.5cm]{./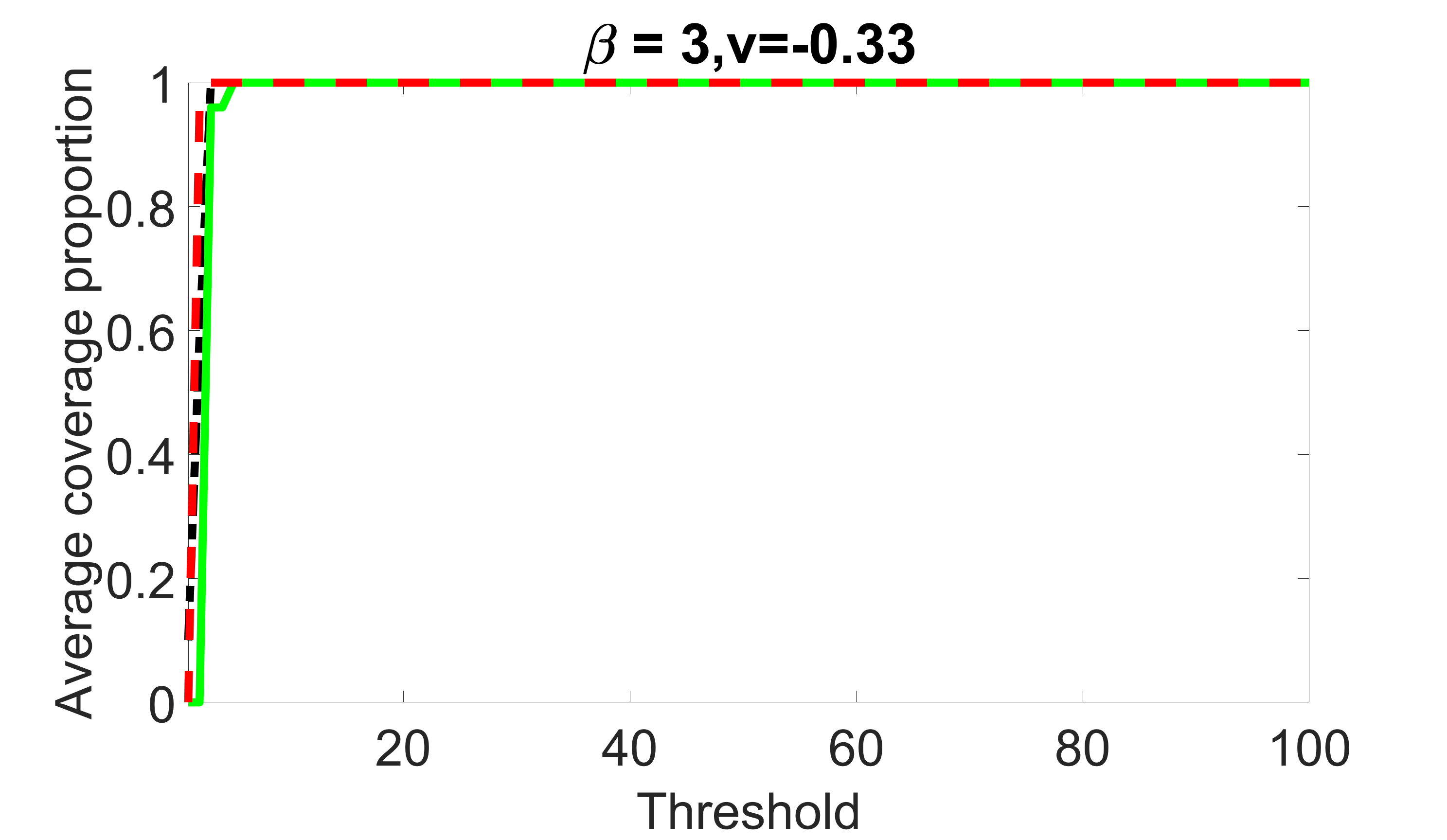}}
 \subcaptionbox{Confounder: medium \\ outcome, medium exposure}[0.45\linewidth]
 {\includegraphics[width=6cm,height=3.5cm]{./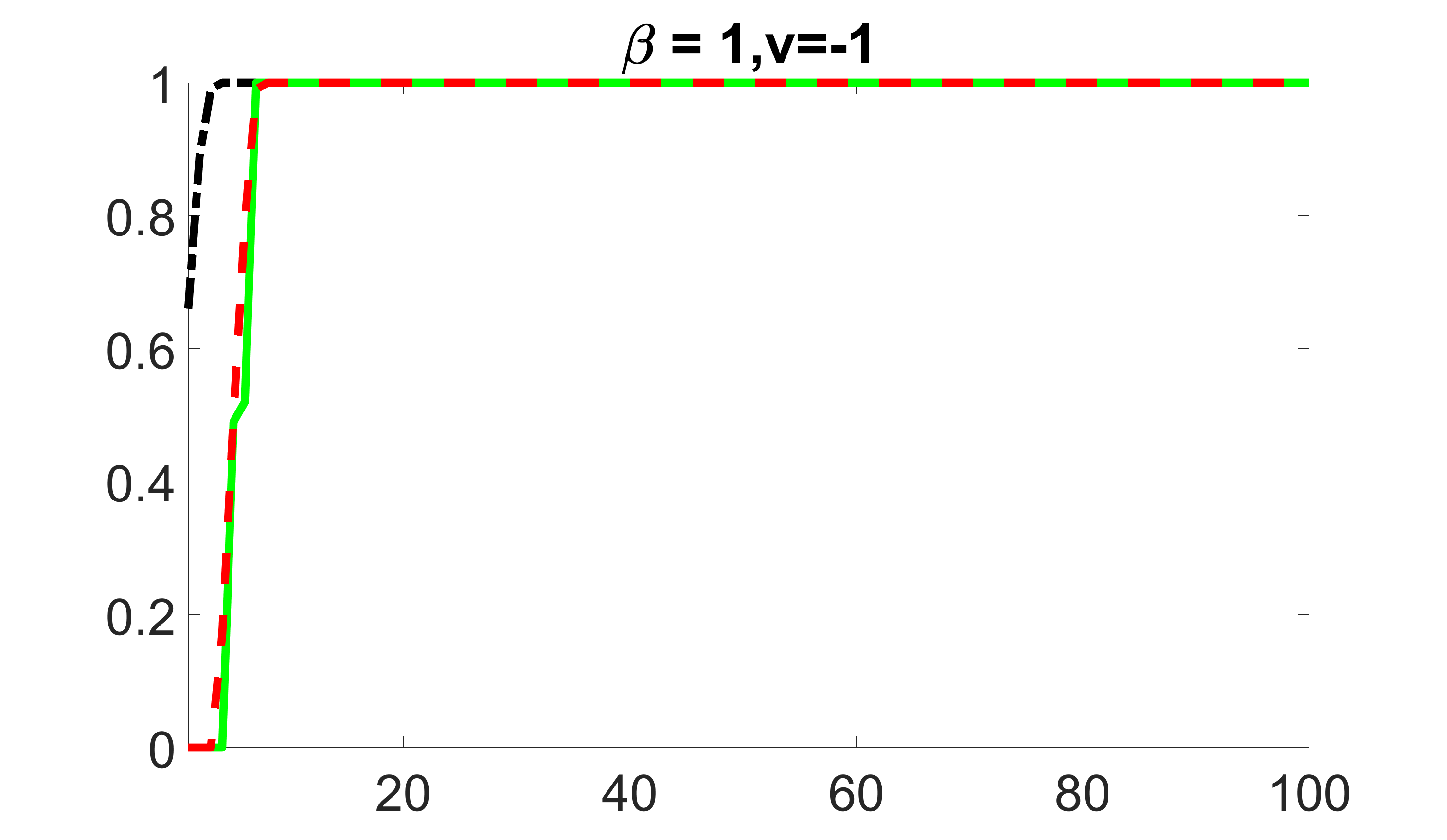}}
  \subcaptionbox{Confounder: weak \\ outcome, strong exposure}[0.45\linewidth]
 {\includegraphics[width=6cm,height=3.5cm]{./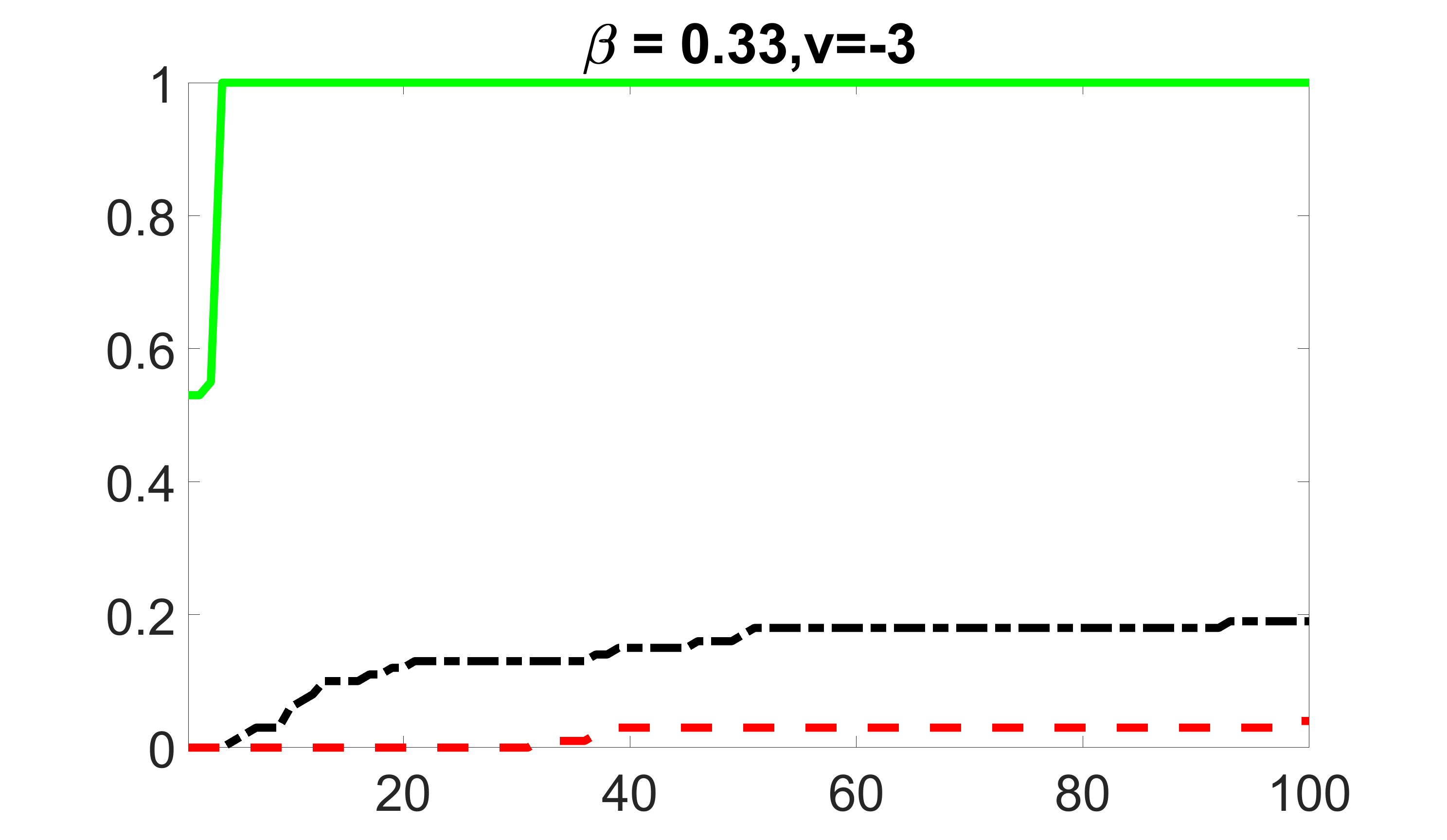}}
  \subcaptionbox{Precision: strong \\ outcome, zero exposure}[0.45\linewidth]
 {\includegraphics[width=6cm,height=3.5cm]{./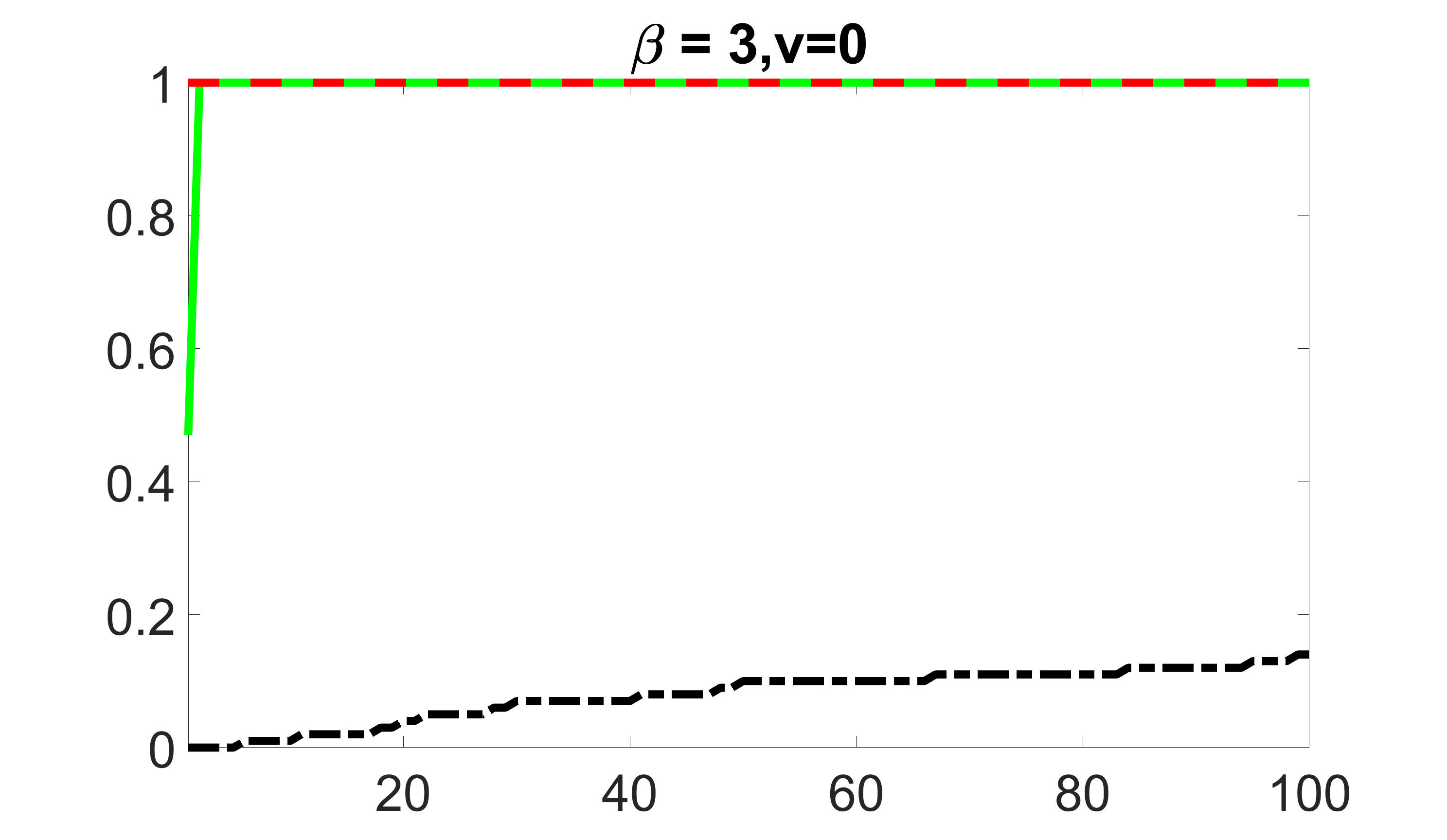}}
  \subcaptionbox{Precision: medium \\ outcome, zero exposure}[0.45\linewidth]
 {\includegraphics[width=6cm,height=3.5cm]{./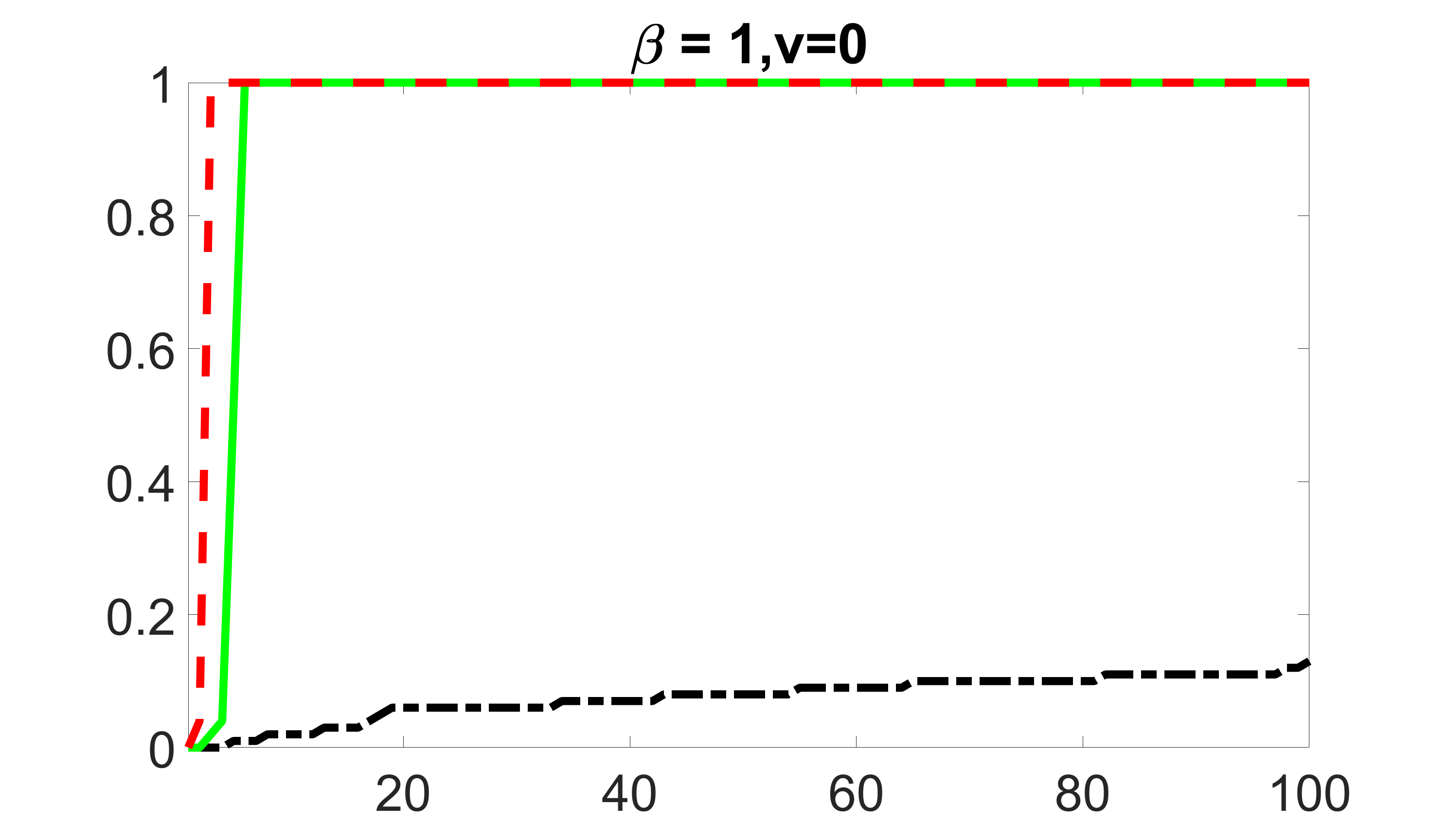}}
  \subcaptionbox{Precision: weak \\ outcome, zero exposure}[0.45\linewidth]
 {\includegraphics[width=6cm,height=3.5cm]{./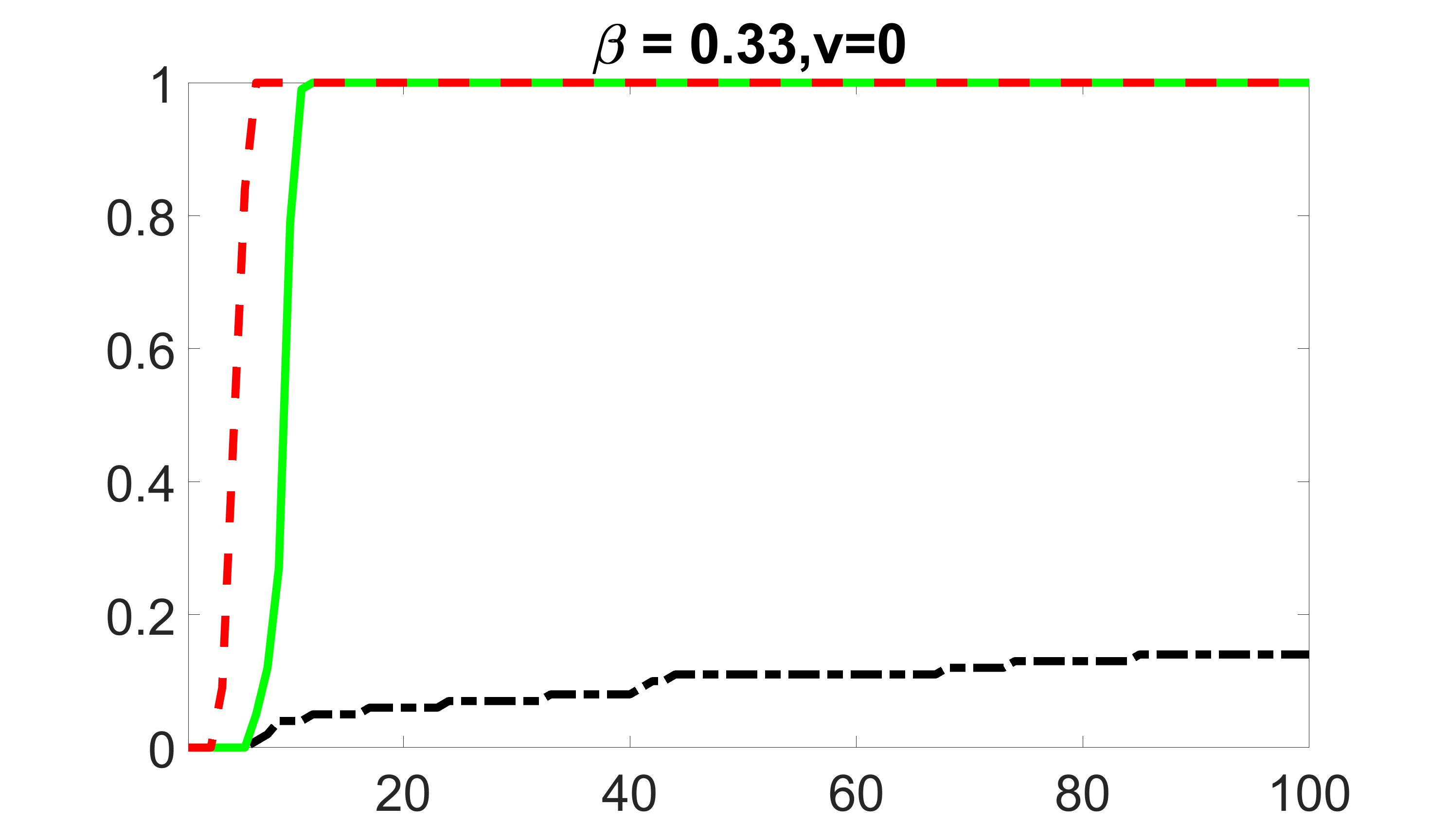}}
  \subcaptionbox{Overall coverage of $\mathcal{M}_1$}[0.45\linewidth]
 {\includegraphics[width=6cm,height=3.5cm]{./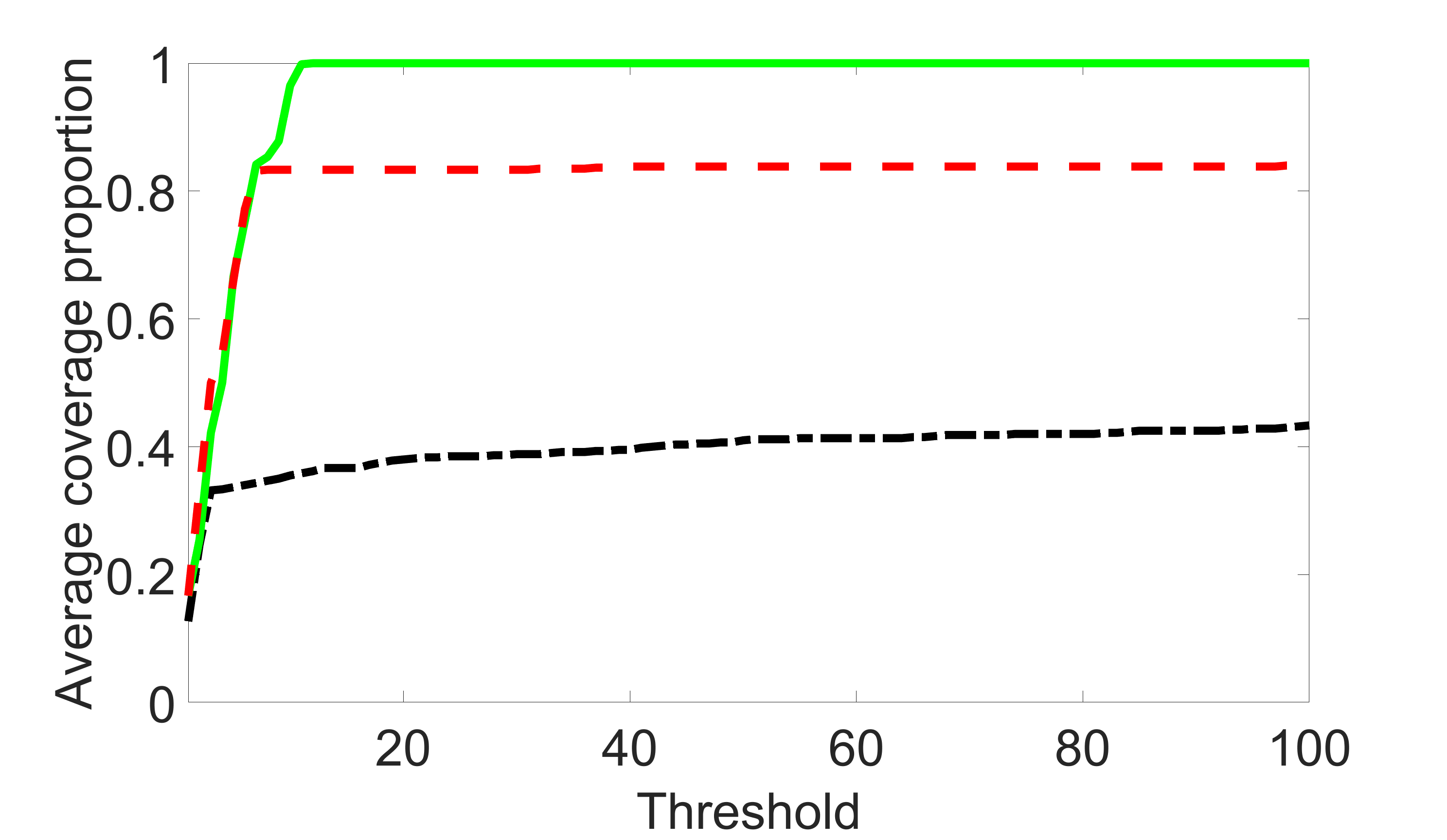}}
\caption{Simulation results for the case $(n,s,\sigma) = (1000,5000,0.5)$: Panels (a) -- (f) plot the average coverage proportion for $X_l$, where $l=1,2,3,104,105$ and $106$. Panels (a) -- (c) correspond to strong outcome and weak exposure predictor, moderate outcome and moderate exposure predictor and weak outcome and strong exposure predictor; Panels (d) -- (f) correspond to strong, moderate and weak predictors of outcome only. Panel (g) plots the average coverage proportion for the index set $\mathcal{M}_1 = \{1,2,3,104,105,106\}$. The x-axis represents the size of $\widehat{\mathcal{M}} $, while
y-axis denotes the average proportion. The green solid, the red dashed and the black dash dotted lines denote our joint screening method, the outcome screening method, and the intersection screening method, respectively.}
\label{sim1step1n1000sigma025}
\end{figure}

\subsection{Sensitivity and specificity analyses of simulation}
\label{Sensitivity and specificity analysis of simulation}
In this subsection, we report the sensitivity and specificity of the estimates. The sensitivity (true positive rate) is defined as $\frac{ |\{ j: \widehat{\beta}_j \neq 0 \} \cap \mathcal{M}_1 |}{ | \mathcal{M}_1|}$, i.e. the proportion of variables in the oracle adjustment set $\mathcal{M}_1$ that are selected by our estimation procedure. The specificity (true negative rate) is defined as $\frac{ |\{ j: \widehat{\beta}_j = 0 \} \cap (\mathcal{I} \cup \mathcal{S})| }{ | \mathcal{I} \cup  \mathcal{S}|}$, i.e. the proportion of variables not in the oracle adjustment set $\mathcal{M}_1$ that are not selected by our estimation procedure. Furthermore, we define the instrumental specificity as $\frac{ |\{ j: \widehat{\beta}_j = 0 \} \cap \mathcal{I}| }{ | \mathcal{I}|}$, i.e. the proportion of variables in the instrumental set $\mathcal{I}$ that are not selected by our estimation procedure.

\begin{table}[htbp]
\centering
\caption{Simulation results for $ \sigma=1 $ and $ \sigma = 0.5 $, when $n=500$. The average $\bm\beta$  sensitivity, $\bm\beta$  instrumental specificity and $\bm\beta$ specificity, MSE for $\bm\beta$, and MSE for $ {\bm B}$, with their associated standard errors in the parentheses are reported. The results are based on 100 Monte Carlo repetitions. ``No Lasso'' estimate is calculated by including all the selected variables from the screening step then estimating $\bm{B}$ using the optimization (\ref{min1}) without the $l_{1}$-regularization. ``Oracle'' estimate is calculated by pretending to know the correct set of confounders and precision variables as $X$ and then estimate $\bm{B}$ using the optimization (\ref{min1}) without the $l_{1}$-regularization. 
}
\begin{tabular}{ cccccc }
n = 500 &  Sensitivity & Instrumental specificity 
&  Specificity & MSE $\bm\beta$ & MSE ${\bm{B}}$\\
\hline
\multicolumn{6}{c}{$\sigma$ = 1.0}\\
Oracle &1.000(0.000)&1.000(0.000)&1.000(0.000) &0.036(0.002)&0.553(0.005)\\
Proposed &0.833(0.000)&0.293(0.020)&0.998(0.000)&0.303(0.008)&0.574(0.006)\\
No Lasso &1.000(0.000)&0.000(0.000)&0.985(0.000)&1.740(0.078)&0.693(0.013)\\
\multicolumn{6}{c}{$\sigma$ = 0.5}\\
Oracle &1.000(0.000)&1.000(0.000)&1.000(0.000)&0.006(0.000)&0.340(0.004)\\
Proposed &0.897(0.008)&0.217(0.017)&0.999(0.000)&0.191(0.005)&0.345(0.004)\\
No Lasso &1.000(0.000)&0.000(0.000)&0.985(0.000)&0.372(0.017)&0.371(0.004)\\
\end{tabular}
\label{sim1t1sssmmn500}
\end{table}

In the simulation studies, we report the results for the case $n=500$, which is close to the sample size $566$ in the real data. From Table \ref{sim1t1sssmmn500}, one can see that the second step can only regularize out some of the instrumental variables. We guess there may be a better tuning method, which can regularize out more instrumental variables, while still keep the confounders and prevision variables in the model. We leave this as future research. Nevertheless, one can also see from Table \ref{sim1t1sssmmn500} that although the proposed method may not remove all of the instrumental variables, eliminating even just some of the instruments greatly reduce the MSEs of both $\bm\beta$ and ${\bm{B}}$, compared to the method where we do not impose $l_{1}$-regularization on $\bm\beta$ in the second-step estimation (denoted by the ``No Lasso'' method). In addition, the estimation of ${\bm B} $ is reasonably good compared to the oracle estimates as shown in Table \ref{sim1t1oracle} of the main article.

\subsection{Screening under different sparsity levels}
\label{Screening under different sparsity levels}
We also consider different sparsity levels in the simulation. It is particularly of interest for our study since when more instrumental variables than confounders and precision variables exist, which could be the case in an imaging-genetic study, the robustness of the proposed method may be undermined. As discussed before, to reduce bias and increase the statistical efficiency of the estimated $\bm{B}$, the ideal adjustment set should include all confounders and precision variables while excluding instrumental variables and irrelevant variables. In particular, we consider three scenarios where the sizes of instrumental variables $\mathcal{I}$ are the same, twice, and eight times of the size of confounders and precision variables $\mathcal{M}_1 $. 

We set $s=5000$ and the settings for $ {\bm B} $ and $ {\bm C} $ remain the same as before: $ {\bm B} $ is as in Figure \ref{figureTCross}(a), and $ {\bm C} $ is as in Figure \ref{figureTCross}(b). Further 
we set $ {\bm C}_l=v_l*\bm{C}$, where $ v_1=-1/3$, $v_2=-1$, $v_3=-3$, $ v_{207}=v_{210}=\ldots=v_{204+6L}=-3$, $v_{208}=v_{211}=\ldots=v_{205+6L}=-1$, $v_{209}=v_{212}=\ldots=v_{206+6L}=-1/3$, and $ v_l=0 $ for $ 4\leq l\leq 206$ and $ 207+6L \leq l \leq s $. Here $L$ is a positive integer.
We set $ \beta_1=3$, $\beta_2=1$, $\beta_3=1/3$, $ \beta_{104}=3$, $\beta_{105}=1$, $\beta_{106}=1/3$, and $ \beta_l=0$ for $4 \leq l \leq 103$ and $ 107\leq l\leq s$. In this setting, we have $\mathcal{C} = \{1, 2, 3\}$, $\mathcal{P} = \{ 104, 105, 106\}$, $\mathcal{I} =\{ 207,208,209,\ldots,206+6L\}$ and $\mathcal{S} = \{1 \ldots, 5000\} \backslash \{1,2,3,104,105,106,207,208,209,\ldots,206+6L\}$. Note that $\frac{|\mathcal{I}|}{|\mathcal{C} \cup \mathcal{P}|}=\frac{|\mathcal{I}|}{|\mathcal{M}_1|} = L$. For $n=200$, we let $\sigma=1$ and $0.5$, and consider three different sparsity levels that $L=1,2,8$.
The complete screening results can be found in Figures \ref{sim1step1n200sigma1sparsity_instru2} -- \ref{sim1step1n200sigma025sparsity_instru16}.

Specifically, as summarized in Figure \ref{sim1step1n200sparsity_instru_summary}, when the number of instrumental variables is much larger than that of confounders and precision variables, the size of $\mathcal{M}_1 \cup \mathcal{M}_2$  is larger than the number of covariates kept in the first screening step. And in this case, our results show that the  screening step  may include many instrumental variables, while missing  some confounders and precision variables. This may deteriorate the accuracy and efficiency of the second step estimation.

\begin{figure}[htbp]
\captionsetup[subfigure]{justification=centering}
\centering
 \subcaptionbox{$\frac{|\mathcal{I}|}{|\mathcal{M}_1|} = 1$, $\sigma=1$}[0.45\linewidth]
 {\includegraphics[width=7cm,height=10cm,keepaspectratio]{./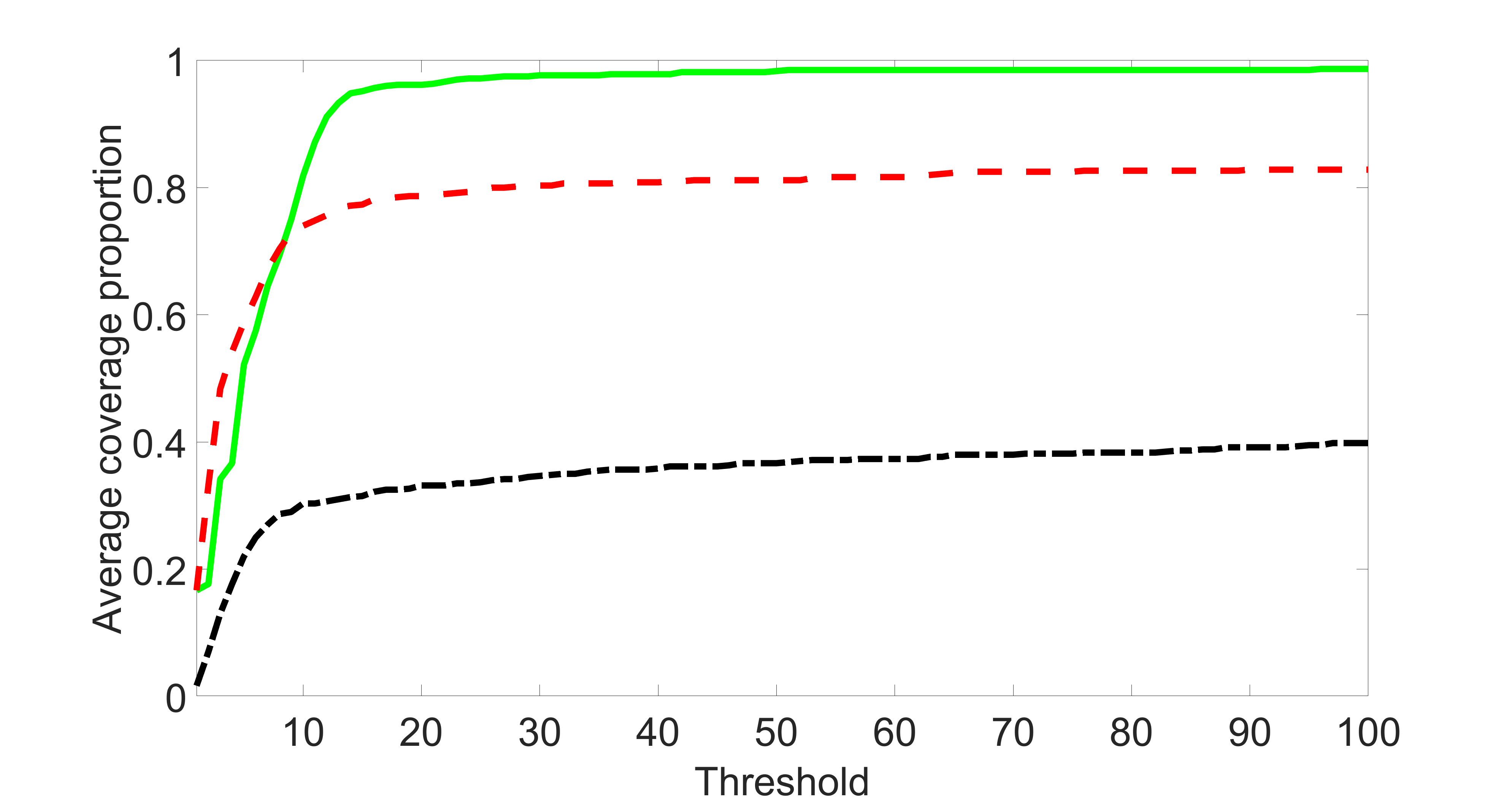}}
 \subcaptionbox{$\frac{|\mathcal{I}|}{|\mathcal{M}_1|} = 1$, $\sigma = 0.5$}[0.45\linewidth]
 {\includegraphics[width=7cm,height=10cm,keepaspectratio]{./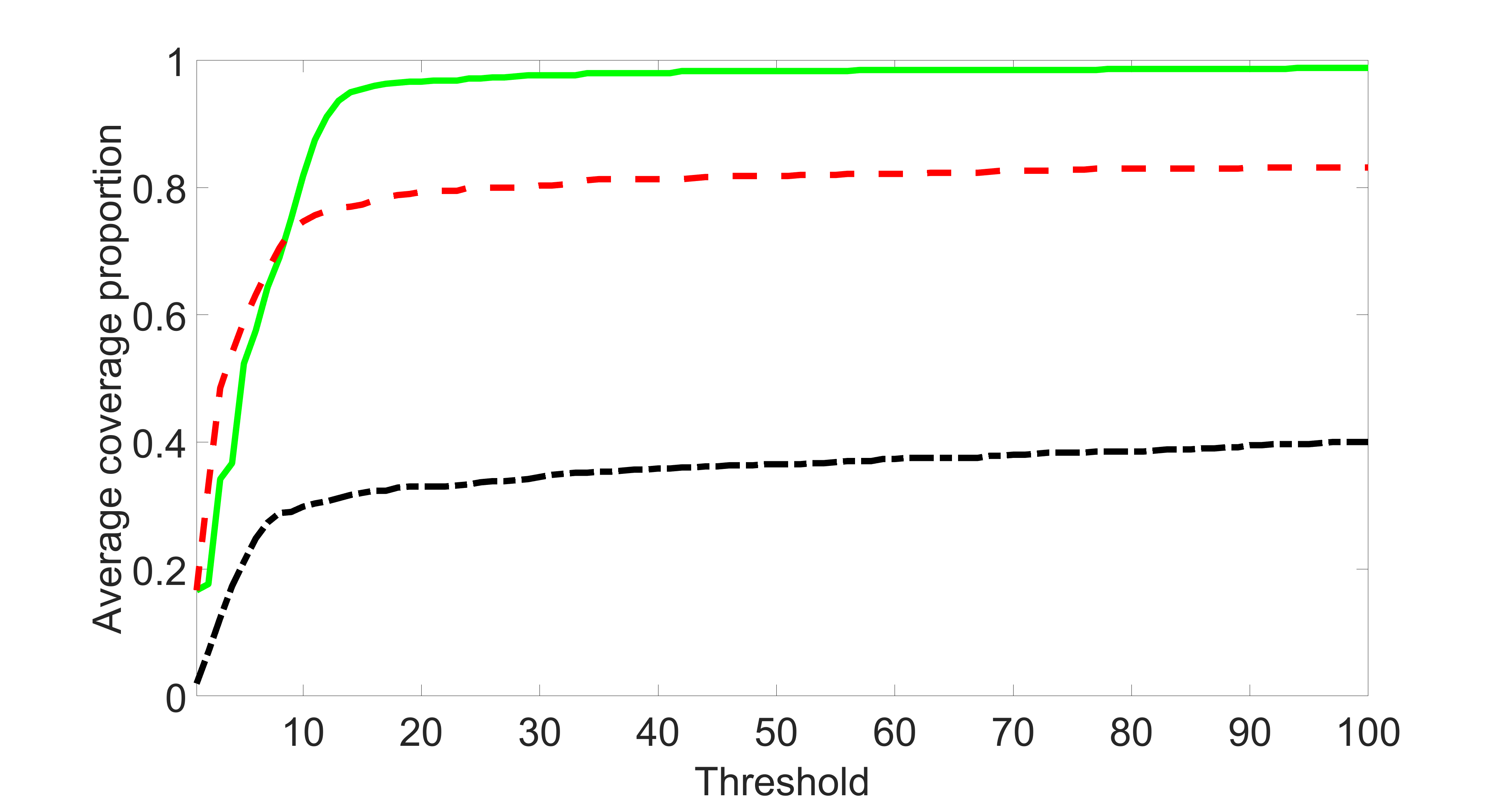}}
  \subcaptionbox{$\frac{|\mathcal{I}|}{|\mathcal{M}_1|} = 2$, $\sigma=1$}[0.45\linewidth]
 {\includegraphics[width=7cm,height=3.75cm]{./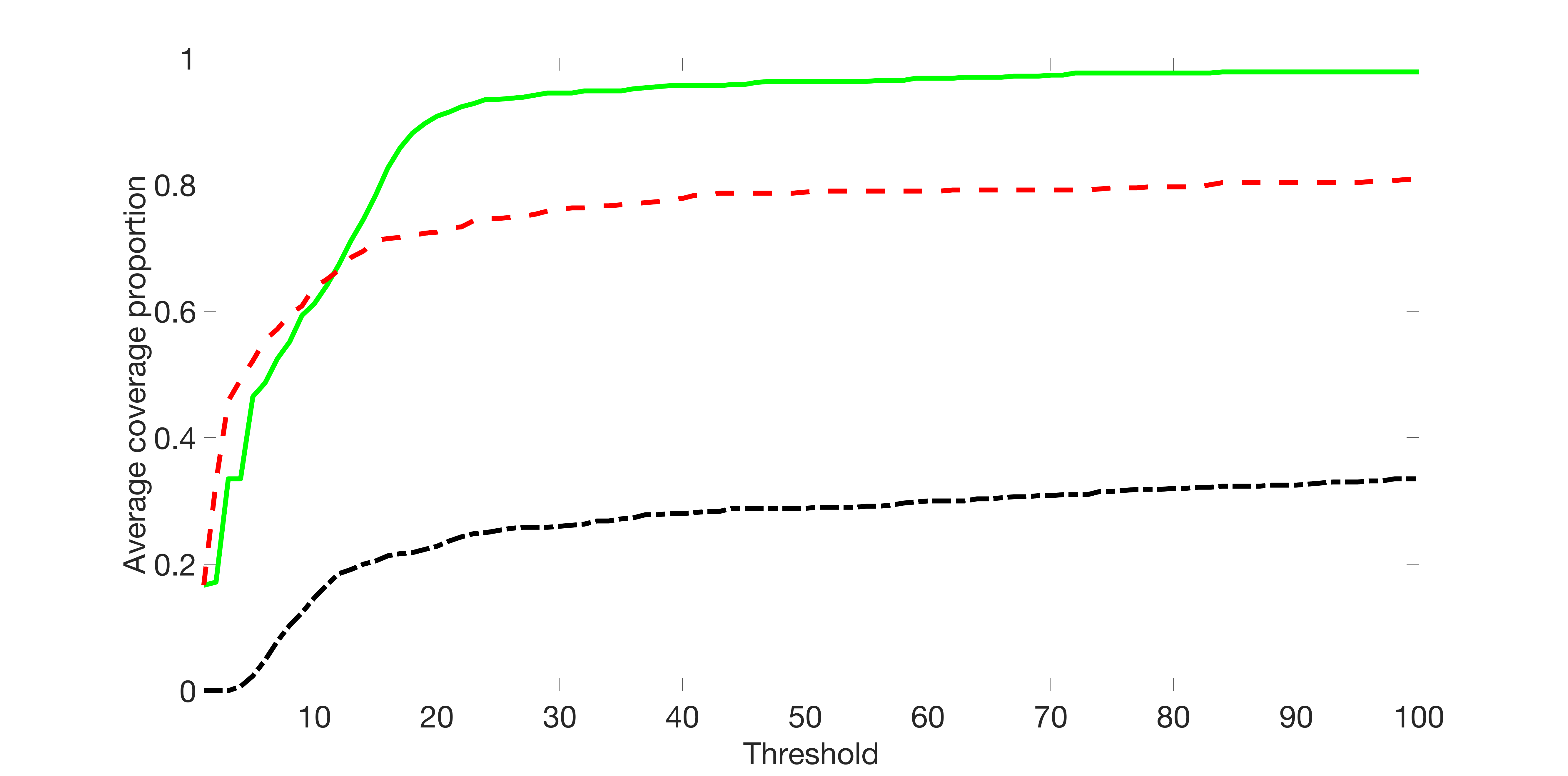}}
  \subcaptionbox{$\frac{|\mathcal{I}|}{|\mathcal{M}_1|} = 2$, $\sigma = 0.5$}[0.45\linewidth]
 {\includegraphics[width=7cm,height=10cm,keepaspectratio]{./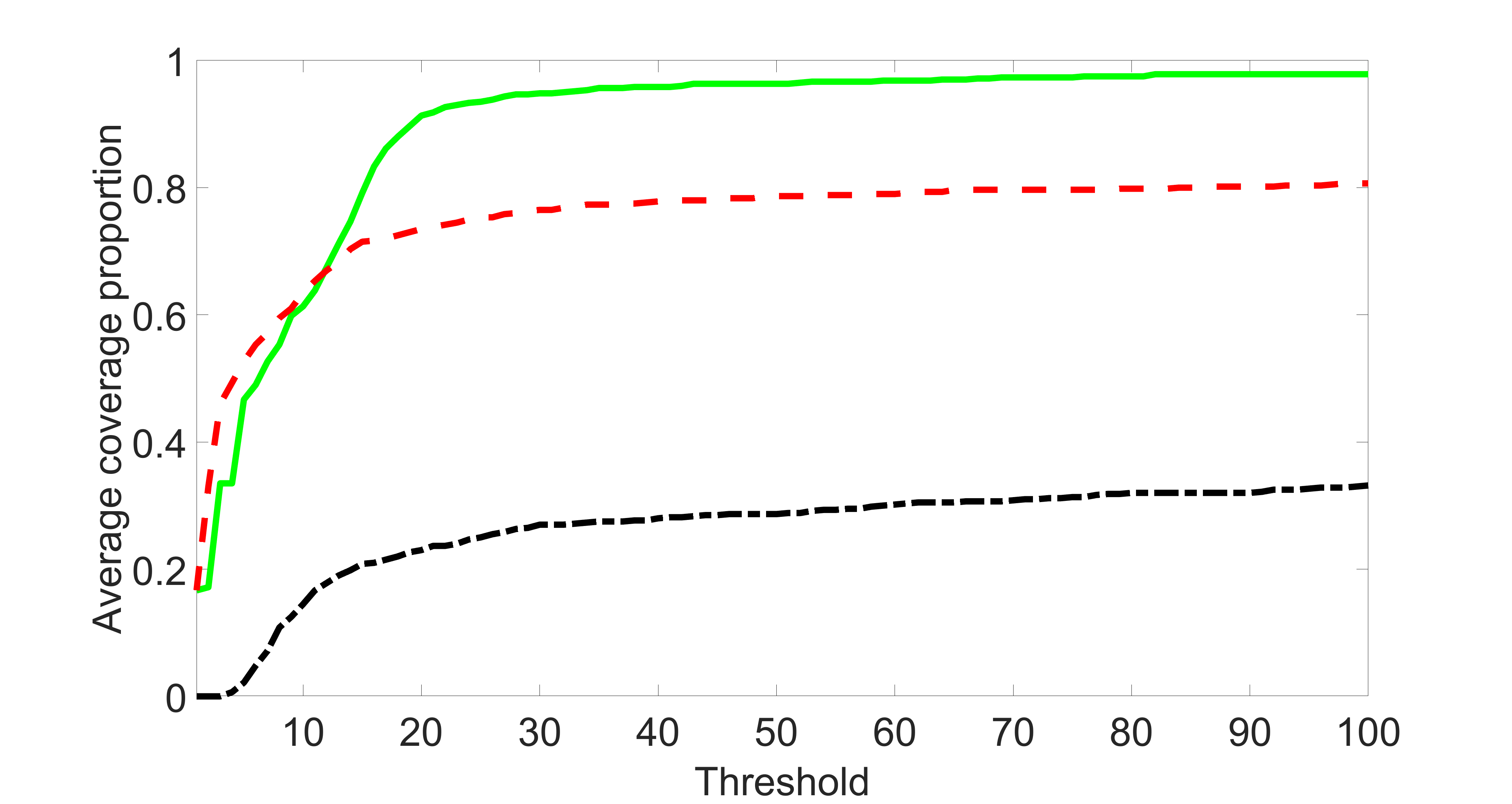}}
  \subcaptionbox{$\frac{|\mathcal{I}|}{|\mathcal{M}_1|} = 8$, $\sigma=1$}[0.45\linewidth]
 {\includegraphics[width=7cm,height=10cm,keepaspectratio]{./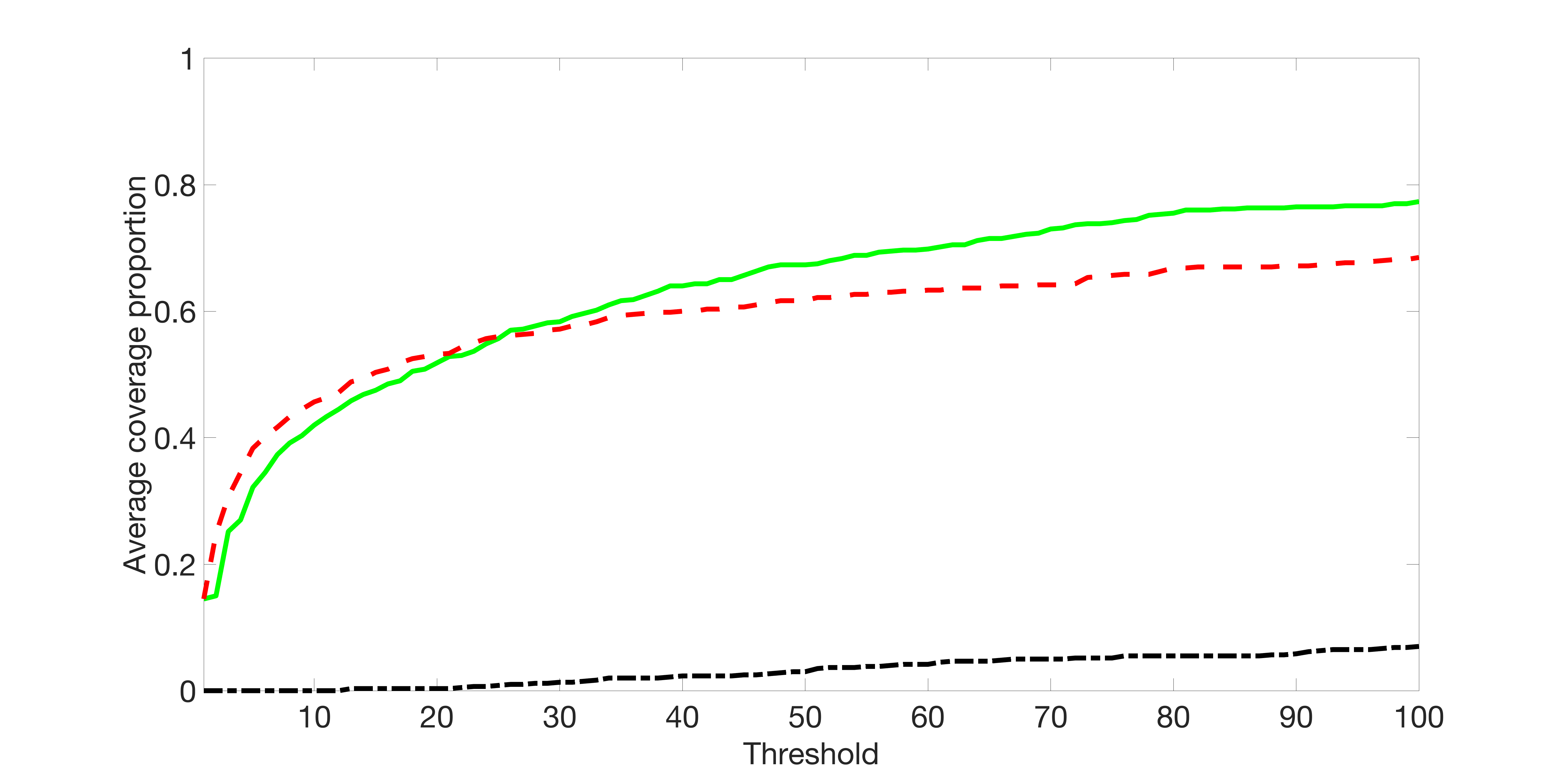}}
  \subcaptionbox{$\frac{|\mathcal{I}|}{|\mathcal{M}_1|} = 8$, $\sigma = 0.5$}[0.45\linewidth]
 {\includegraphics[width=7cm,height=10cm,keepaspectratio]{./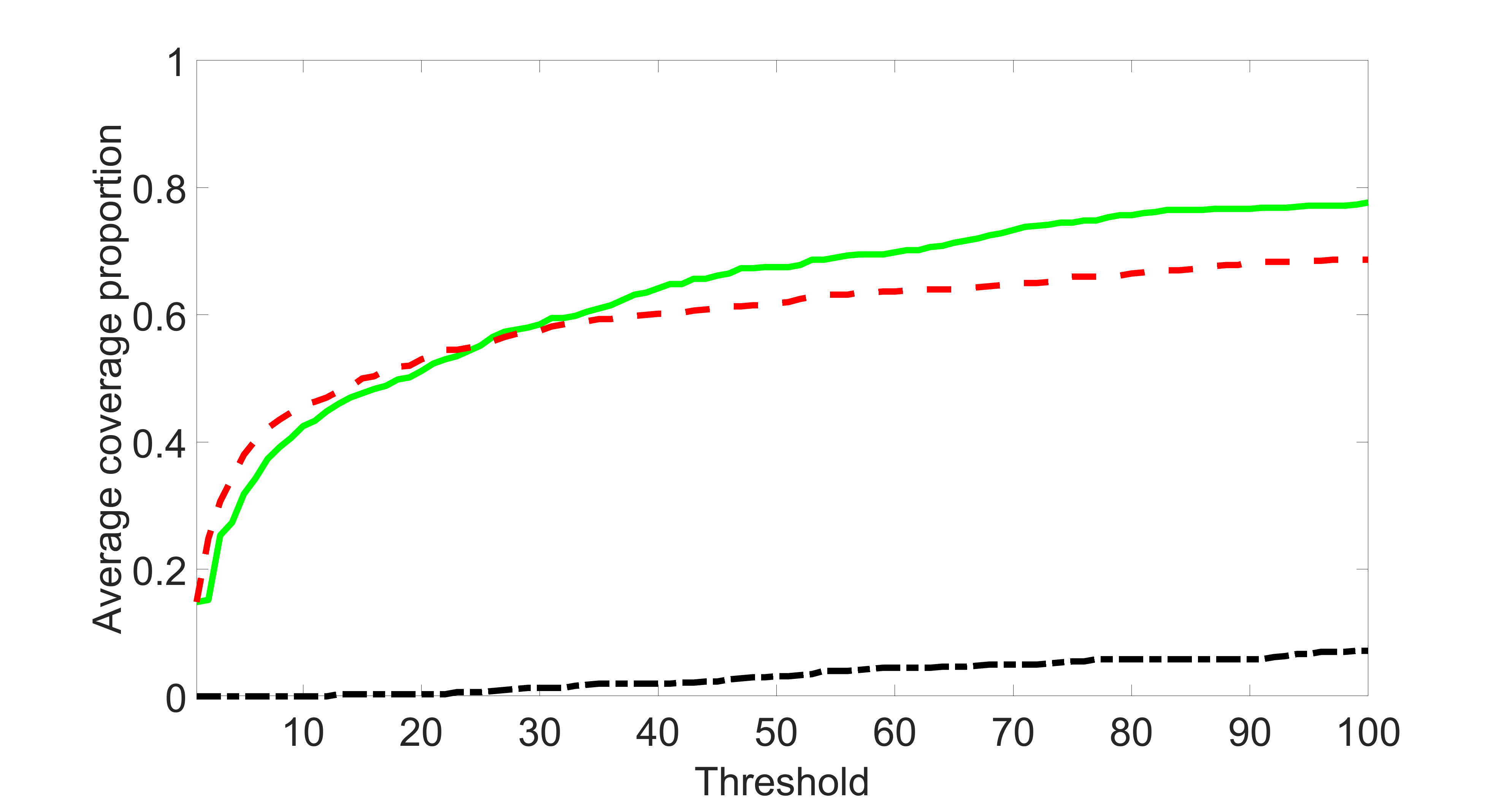}}
\caption{Simulation results where the size of instrumental variables $\mathcal{I}$ are the same, twice and eight times of $\mathcal{M}_1 $ for the case $(n,s) = (200,5000)$: Panels (a) (c) (e) plot the average coverage proportion for the index set $\mathcal{M}_1 = \{1,2,3,104,105,106\}$ when $\sigma=1$. Panels (b) (d) (f) plot the average coverage proportion for the index set $\mathcal{M}_1 = \{1,2,3,104,105,106\}$ when $\sigma = 0.5$. The x-axis represents the size of $\widehat{\mathcal{M}} $, while
y-axis denotes the average proportion. The green solid, the red dashed and the black dash dotted lines denote our joint screening method, the outcome screening method, and the intersection screening method, respectively. }
\label{sim1step1n200sparsity_instru_summary}
\end{figure}

\begin{figure}[htbp]
\captionsetup[subfigure]{justification=centering}
\centering
 \subcaptionbox{Confounder: strong \\ outcome, weak exposure}[0.45\linewidth]
 {\includegraphics[width=6cm,height=3.5cm]{./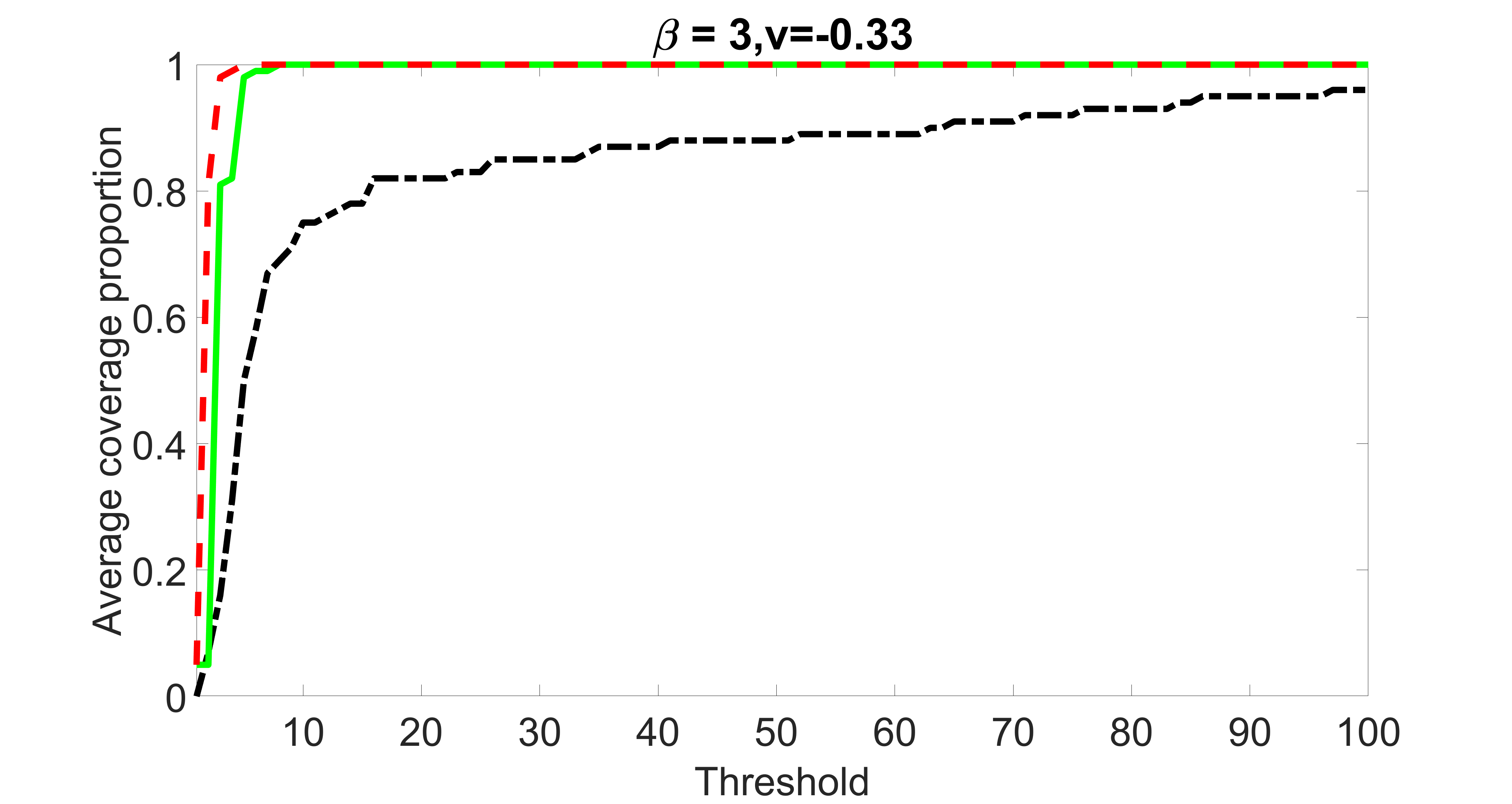}}
 \subcaptionbox{Confounder: medium \\ outcome, medium exposure}[0.45\linewidth]
 {\includegraphics[width=6cm,height=3.5cm]{./plotsMainArkSupp/sim1oal_p64_v2_optNorm2_n200s5000snry9sparsity_instru2count1.png}}
  \subcaptionbox{Confounder: weak \\ outcome, strong exposure}[0.45\linewidth]
 {\includegraphics[width=6cm,height=3.5cm]{./plotsMainArkSupp/sim1oal_p64_v2_optNorm2_n200s5000snry9sparsity_instru2count1.png}}
  \subcaptionbox{Precision: strong \\ outcome, zero exposure}[0.45\linewidth]
 {\includegraphics[width=6cm,height=3.5cm]{./plotsMainArkSupp/sim1oal_p64_v2_optNorm2_n200s5000snry9sparsity_instru2count1.png}}
  \subcaptionbox{Precision: medium \\ outcome, zero exposure}[0.45\linewidth]
 {\includegraphics[width=6cm,height=3.5cm]{./plotsMainArkSupp/sim1oal_p64_v2_optNorm2_n200s5000snry9sparsity_instru2count1.png}}
  \subcaptionbox{Precision: weak \\ outcome, zero exposure}[0.45\linewidth]
 {\includegraphics[width=6cm,height=3.5cm]{./plotsMainArkSupp/sim1oal_p64_v2_optNorm2_n200s5000snry9sparsity_instru2count1.png}}
  \subcaptionbox{Overall coverage of $\mathcal{M}_1$}[0.45\linewidth]
 {\includegraphics[width=6cm,height=3.5cm]{./plotsMainArkSupp/sim1oal_p64_v2_optNorm2_n200s5000sparsity_instru2coverage_snry9.png}}
\caption{Simulation results where the number of instrumental variables are the same of $\mathcal{M}_1 $ for the case $(n,s,\sigma) = (200,5000,1)$: Panels (a) -- (f) plot the average coverage proportion for $X_l$, where $l=1,2,3,104,105$ and $106$. Panels (a) -- (c) correspond to strong outcome and weak exposure predictor, moderate outcome and moderate exposure predictor and weak outcome and strong exposure predictor; Panels (d) -- (f) correspond to strong, moderate and weak predictors of outcome only. Panel (g) plots the average coverage proportion for the index set $\mathcal{M}_1 = \{1,2,3,104,105,106\}$. The x-axis represents the size of $\widehat{\mathcal{M}} $, while
y-axis denotes the average proportion. The green solid, the red dashed and the black dash dotted lines denote our joint screening method, the outcome screening method, and the intersection screening method, respectively. }
\label{sim1step1n200sigma1sparsity_instru2}
\end{figure}

\begin{figure}[htbp]
\captionsetup[subfigure]{justification=centering}
\centering
 \subcaptionbox{Confounder: strong \\ outcome, weak exposure}[0.45\linewidth]
 {\includegraphics[width=6cm,height=3.5cm]{./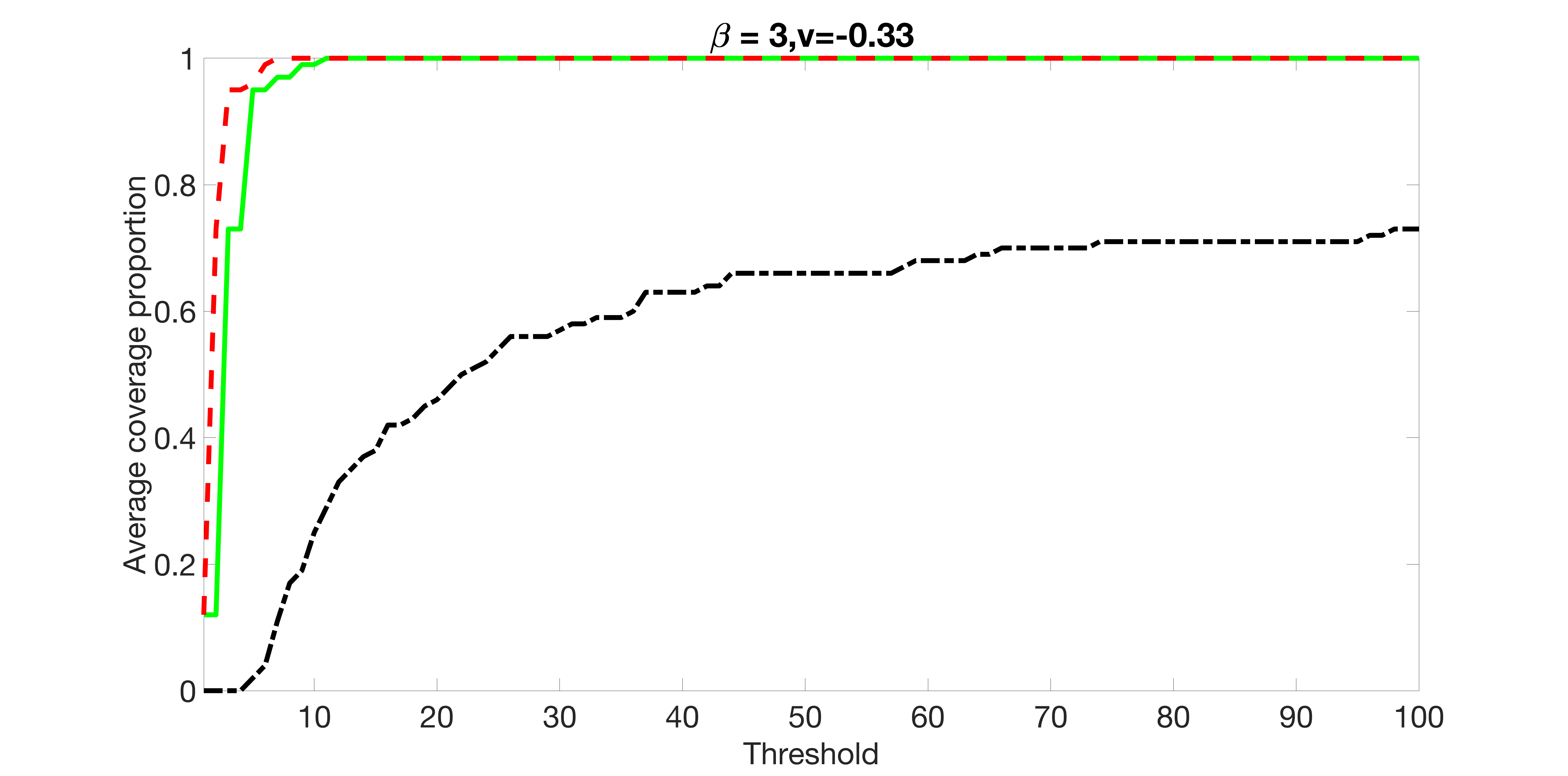}}
 \subcaptionbox{Confounder: medium \\ outcome, medium exposure}[0.45\linewidth]
 {\includegraphics[width=6cm,height=3.5cm]{./plotsMainArkSupp/sim1oal_p64_v2_optNorm2_n200s5000snry9sparsity_instru4count1.png}}
  \subcaptionbox{Confounder: weak \\ outcome, strong exposure}[0.45\linewidth]
 {\includegraphics[width=6cm,height=3.5cm]{./plotsMainArkSupp/sim1oal_p64_v2_optNorm2_n200s5000snry9sparsity_instru4count1.png}}
  \subcaptionbox{Precision: strong \\ outcome, zero exposure}[0.45\linewidth]
 {\includegraphics[width=6cm,height=3.5cm]{./plotsMainArkSupp/sim1oal_p64_v2_optNorm2_n200s5000snry9sparsity_instru4count1.png}}
  \subcaptionbox{Precision: medium \\ outcome, zero exposure}[0.45\linewidth]
 {\includegraphics[width=6cm,height=3.5cm]{./plotsMainArkSupp/sim1oal_p64_v2_optNorm2_n200s5000snry9sparsity_instru4count1.png}}
  \subcaptionbox{Precision: weak \\ outcome, zero exposure}[0.45\linewidth]
 {\includegraphics[width=6cm,height=3.5cm]{./plotsMainArkSupp/sim1oal_p64_v2_optNorm2_n200s5000snry9sparsity_instru4count1.png}}
  \subcaptionbox{Overall coverage of $\mathcal{M}_1$}[0.45\linewidth]
 {\includegraphics[width=6cm,height=3.5cm]{./plotsMainArkSupp/sim1oal_p64_v2_optNorm2_n200s5000sparsity_instru4coverage_snry9.png}}
\caption{Simulation results where the number of instrumental variables are twice of $\mathcal{M}_1 $ for the case $(n,s,\sigma) = (200,5000,1)$: Panels (a) -- (f) plot the average coverage proportion for $X_l$, where $l=1,2,3,104,105$ and $106$. Panels (a) -- (c) correspond to strong outcome and weak exposure predictor, moderate outcome and moderate exposure predictor and weak outcome and strong exposure predictor; Panels (d) -- (f) correspond to strong, moderate and weak predictors of outcome only. Panel (g) plots the average coverage proportion for the index set $\mathcal{M}_1 = \{1,2,3,104,105,106\}$. The x-axis represents the size of $\widehat{\mathcal{M}} $, while
y-axis denotes the average proportion. The green solid, the red dashed and the black dash dotted lines denote our joint screening method, the outcome screening method, and the intersection screening method, respectively. }
\label{sim1step1n200sigma1sparsity_instru4}
\end{figure}

\begin{figure}[htbp]
\captionsetup[subfigure]{justification=centering}
\centering
 \subcaptionbox{Confounder: strong \\ outcome, weak exposure}[0.45\linewidth]
 {\includegraphics[width=6cm,height=3.5cm]{./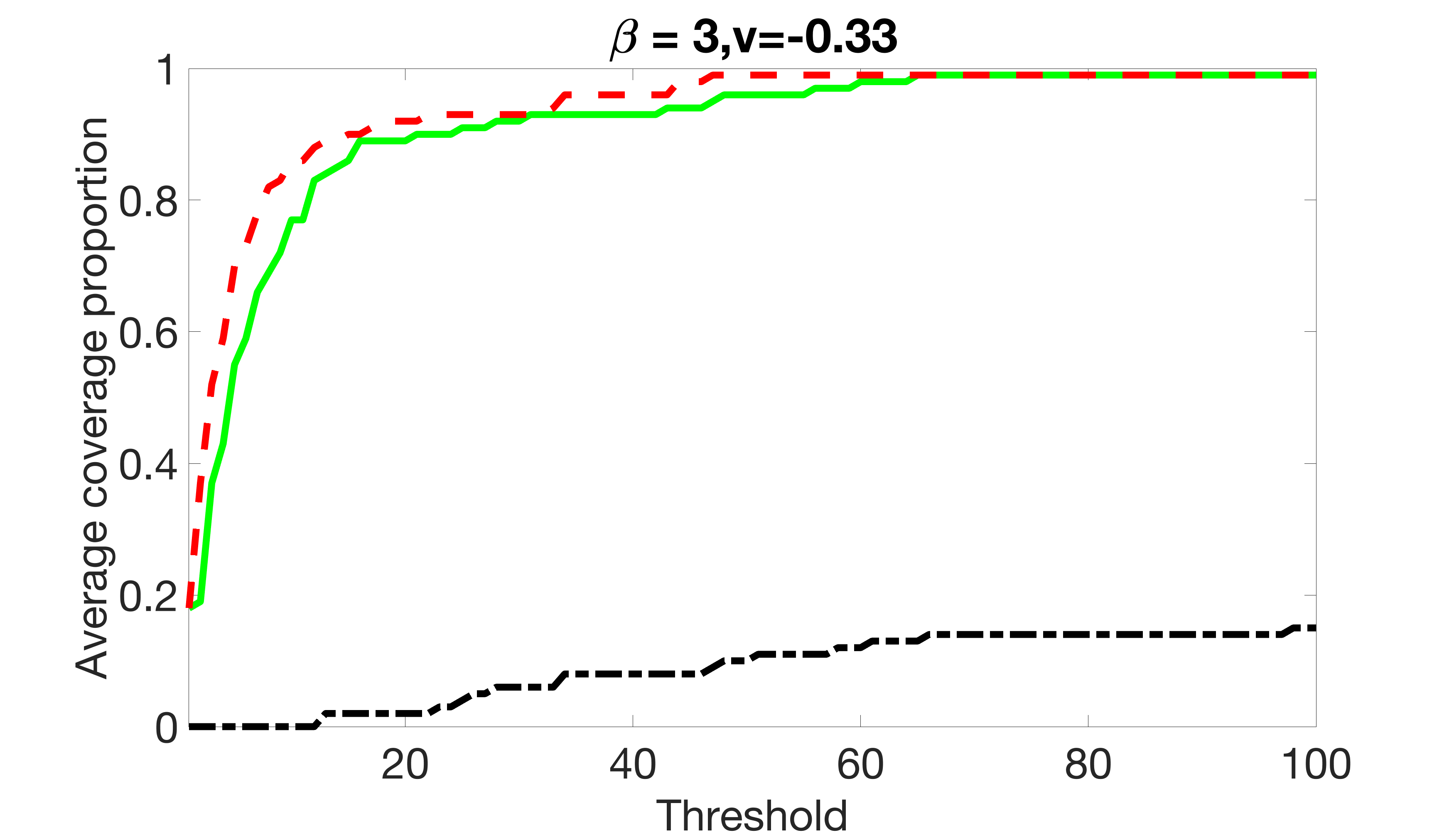}}
 \subcaptionbox{Confounder: medium \\ outcome, medium exposure}[0.45\linewidth]
 {\includegraphics[width=6cm,height=3.5cm]{./plotsMainArkSupp/sim1oal_p64_v2_optNorm2_n200s5000snry9sparsity_instru16count1.png}}
  \subcaptionbox{Confounder: weak \\ outcome, strong exposure}[0.45\linewidth]
 {\includegraphics[width=6cm,height=3.5cm]{./plotsMainArkSupp/sim1oal_p64_v2_optNorm2_n200s5000snry9sparsity_instru16count1.png}}
  \subcaptionbox{Precision: strong \\ outcome, zero exposure}[0.45\linewidth]
 {\includegraphics[width=6cm,height=3.5cm]{./plotsMainArkSupp/sim1oal_p64_v2_optNorm2_n200s5000snry9sparsity_instru16count1.png}}
  \subcaptionbox{Precision: medium \\ outcome, zero exposure}[0.45\linewidth]
 {\includegraphics[width=6cm,height=3.5cm]{./plotsMainArkSupp/sim1oal_p64_v2_optNorm2_n200s5000snry9sparsity_instru16count1.png}}
  \subcaptionbox{Precision: weak \\ outcome, zero exposure}[0.45\linewidth]
 {\includegraphics[width=6cm,height=3.5cm]{./plotsMainArkSupp/sim1oal_p64_v2_optNorm2_n200s5000snry9sparsity_instru16count1.png}}
  \subcaptionbox{Overall coverage of $\mathcal{M}_1$}[0.45\linewidth]
 {\includegraphics[width=6cm,height=3.5cm]{./plotsMainArkSupp/sim1oal_p64_v2_optNorm2_n200s5000sparsity_instru16coverage_snry9.png}}
\caption{Simulation results where the number of instrumental variables are eight times of $\mathcal{M}_1 $ for the case $(n,s,\sigma) = (200,5000,1)$: Panels (a) -- (f) plot the average coverage proportion for $X_l$, where $l=1,2,3,104,105$ and $106$. Panels (a) -- (c) correspond to strong outcome and weak exposure predictor, moderate outcome and moderate exposure predictor and weak outcome and strong exposure predictor; Panels (d) -- (f) correspond to strong, moderate and weak predictors of outcome only. Panel (g) plots the average coverage proportion for the index set $\mathcal{M}_1 = \{1,2,3,104,105,106\}$. The x-axis represents the size of $\widehat{\mathcal{M}} $, while
y-axis denotes the average proportion. The green solid, the red dashed and the black dash dotted lines denote our joint screening method, the outcome screening method, and the intersection screening method, respectively. }
\label{sim1step1n200sigma1sparsity_instru16}
\end{figure}

\begin{figure}[htbp]
\captionsetup[subfigure]{justification=centering}
\centering
 \subcaptionbox{Confounder: strong \\ outcome, weak exposure}[0.45\linewidth]
 {\includegraphics[width=6cm,height=3.5cm]{./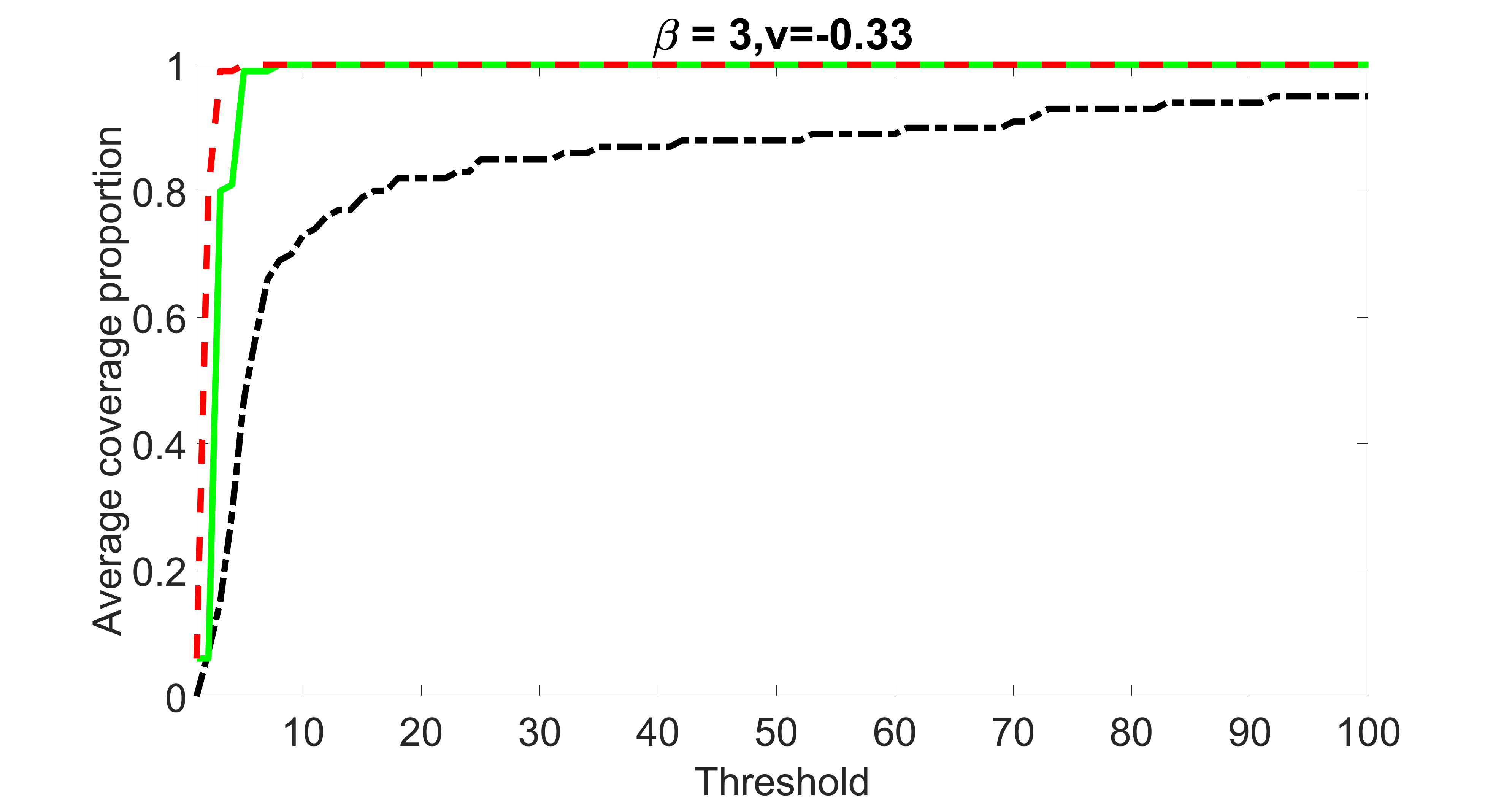}}
 \subcaptionbox{Confounder: medium \\ outcome, medium exposure}[0.45\linewidth]
 {\includegraphics[width=6cm,height=3.5cm]{./plotsMainArkSupp/sim1oal_p64_v2_optNorm2_n200s5000snry8sparsity_instru2count1.png}}
  \subcaptionbox{Confounder: weak \\ outcome, strong exposure}[0.45\linewidth]
 {\includegraphics[width=6cm,height=3.5cm]{./plotsMainArkSupp/sim1oal_p64_v2_optNorm2_n200s5000snry8sparsity_instru2count1.png}}
  \subcaptionbox{Precision: strong \\ outcome, zero exposure}[0.45\linewidth]
 {\includegraphics[width=6cm,height=3.5cm]{./plotsMainArkSupp/sim1oal_p64_v2_optNorm2_n200s5000snry8sparsity_instru2count1.png}}
  \subcaptionbox{Precision: medium \\ outcome, zero exposure}[0.45\linewidth]
 {\includegraphics[width=6cm,height=3.5cm]{./plotsMainArkSupp/sim1oal_p64_v2_optNorm2_n200s5000snry8sparsity_instru2count1.png}}
  \subcaptionbox{Precision: weak \\ outcome, zero exposure}[0.45\linewidth]
 {\includegraphics[width=6cm,height=3.5cm]{./plotsMainArkSupp/sim1oal_p64_v2_optNorm2_n200s5000snry8sparsity_instru2count1.png}}
  \subcaptionbox{Overall coverage of $\mathcal{M}_1$}[0.45\linewidth]
 {\includegraphics[width=6cm,height=3.5cm]{./plotsMainArkSupp/sim1oal_p64_v2_optNorm2_n200s5000sparsity_instru2coverage_snry8.png}}
\caption{Simulation results where the number of instrumental variables are the same of $\mathcal{M}_1 $ for the case $(n,s,\sigma) = (200,5000,0.5)$: Panels (a) -- (f) plot the average coverage proportion for $X_l$, where $l=1,2,3,104,105$ and $106$. Panels (a) -- (c) correspond to strong outcome and weak exposure predictor, moderate outcome and moderate exposure predictor and weak outcome and strong exposure predictor; Panels (d) -- (f) correspond to strong, moderate and weak predictors of outcome only. Panel (g) plots the average coverage proportion for the index set $\mathcal{M}_1 = \{1,2,3,104,105,106\}$. The x-axis represents the size of $\widehat{\mathcal{M}} $, while
y-axis denotes the average proportion. The green solid, the red dashed and the black dash dotted lines denote our joint screening method, the outcome screening method, and the intersection screening method, respectively. }
\label{sim1step1n200sigma025sparsity_instru2}
\end{figure}

\begin{figure}[htbp]
\captionsetup[subfigure]{justification=centering}
\centering
 \subcaptionbox{Confounder: strong \\ outcome, weak exposure}[0.45\linewidth]
 {\includegraphics[width=6cm,height=3.5cm]{./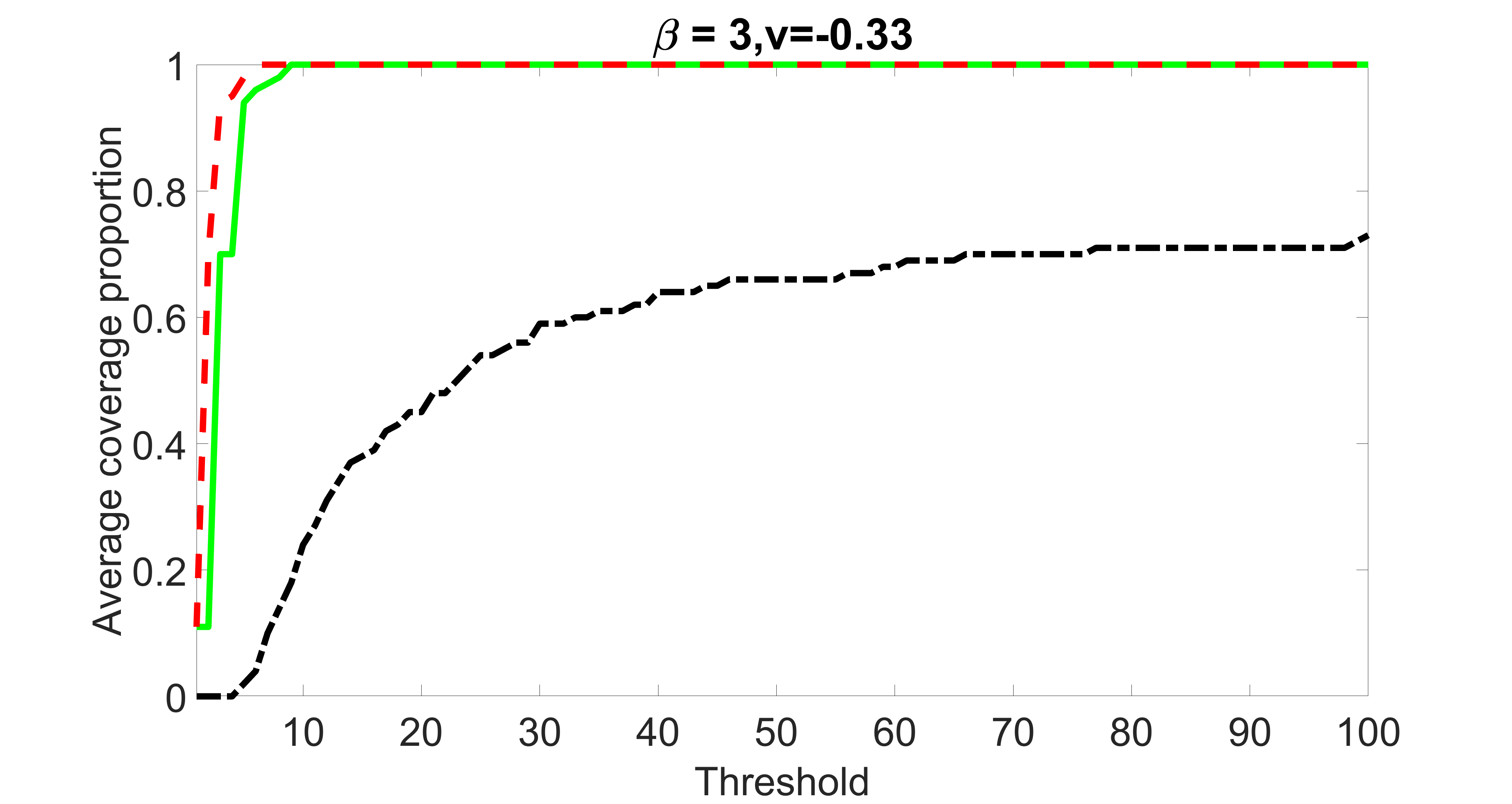}}
 \subcaptionbox{Confounder: medium \\ outcome, medium exposure}[0.45\linewidth]
 {\includegraphics[width=6cm,height=3.5cm]{./plotsMainArkSupp/sim1oal_p64_v2_optNorm2_n200s5000snry8sparsity_instru4count1.png}}
  \subcaptionbox{Confounder: weak \\ outcome, strong exposure}[0.45\linewidth]
 {\includegraphics[width=6cm,height=3.5cm]{./plotsMainArkSupp/sim1oal_p64_v2_optNorm2_n200s5000snry8sparsity_instru4count1.png}}
  \subcaptionbox{Precision: strong \\ outcome, zero exposure}[0.45\linewidth]
 {\includegraphics[width=6cm,height=3.5cm]{./plotsMainArkSupp/sim1oal_p64_v2_optNorm2_n200s5000snry8sparsity_instru4count1.png}}
  \subcaptionbox{Precision: medium \\ outcome, zero exposure}[0.45\linewidth]
 {\includegraphics[width=6cm,height=3.5cm]{./plotsMainArkSupp/sim1oal_p64_v2_optNorm2_n200s5000snry8sparsity_instru4count1.png}}
  \subcaptionbox{Precision: weak \\ outcome, zero exposure}[0.45\linewidth]
 {\includegraphics[width=6cm,height=3.5cm]{./plotsMainArkSupp/sim1oal_p64_v2_optNorm2_n200s5000snry8sparsity_instru4count1.png}}
  \subcaptionbox{Overall coverage of $\mathcal{M}_1$}[0.45\linewidth]
 {\includegraphics[width=6cm,height=3.5cm]{./plotsMainArkSupp/sim1oal_p64_v2_optNorm2_n200s5000sparsity_instru4coverage_snry8.png}}
\caption{Simulation results where the number of instrumental variables are twice of $\mathcal{M}_1 $ for the case $(n,s,\sigma) = (200,5000,0.5)$: Panels (a) -- (f) plot the average coverage proportion for $X_l$, where $l=1,2,3,104,105$ and $106$. Panels (a) -- (c) correspond to strong outcome and weak exposure predictor, moderate outcome and moderate exposure predictor and weak outcome and strong exposure predictor; Panels (d) -- (f) correspond to strong, moderate and weak predictors of outcome only. Panel (g) plots the average coverage proportion for the index set $\mathcal{M}_1 = \{1,2,3,104,105,106\}$. The x-axis represents the size of $\widehat{\mathcal{M}} $, while
y-axis denotes the average proportion. The green solid, the red dashed and the black dash dotted lines denote our joint screening method, the outcome screening method, and the intersection screening method, respectively. }
\label{sim1step1n200sigma025sparsity_instru4}
\end{figure}

\begin{figure}[htbp]
\captionsetup[subfigure]{justification=centering}
\centering
 \subcaptionbox{Confounder: strong \\ outcome, weak exposure}[0.45\linewidth]
 {\includegraphics[width=6cm,height=3.5cm]{./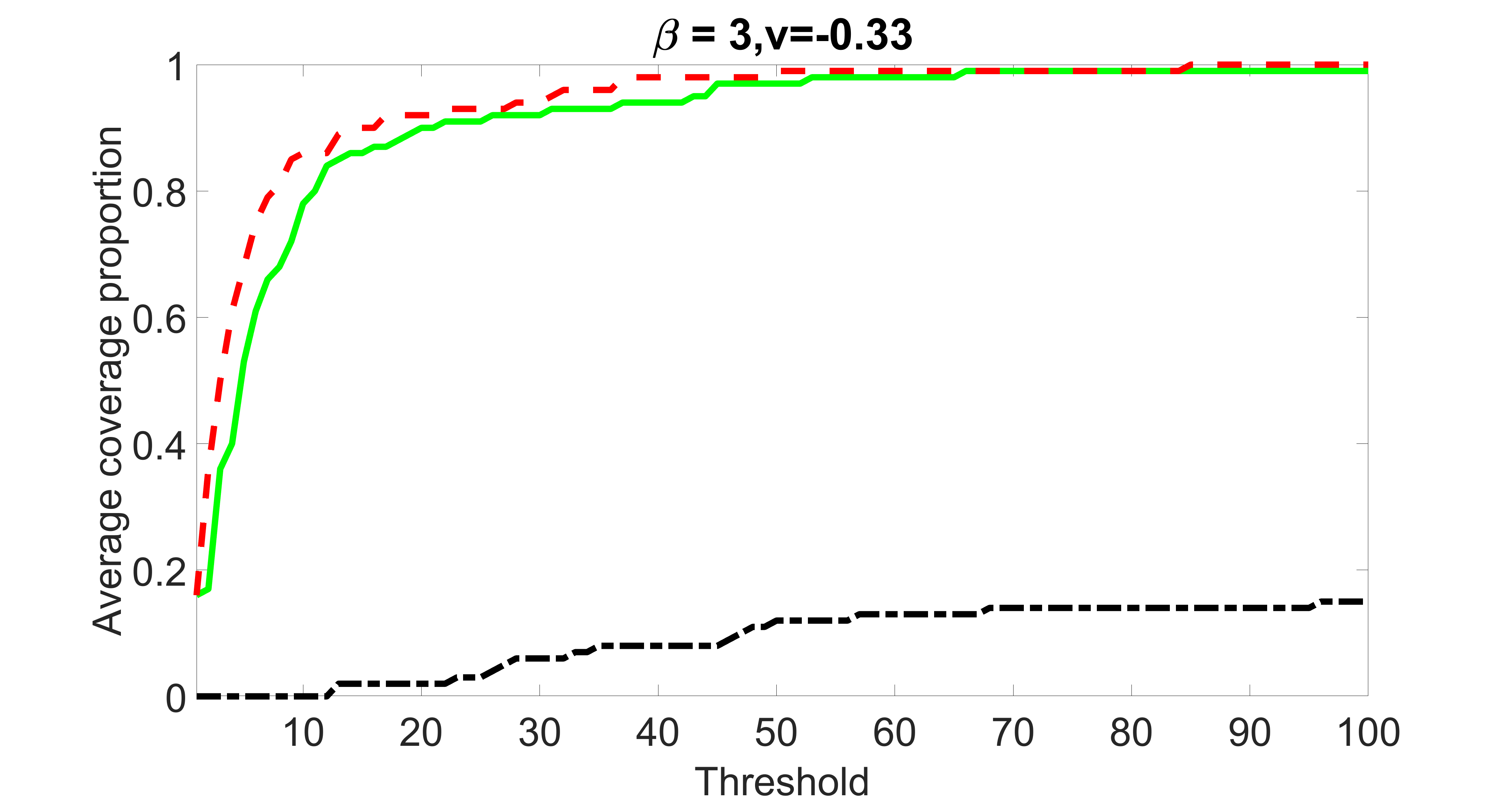}}
 \subcaptionbox{Confounder: medium \\ outcome, medium exposure}[0.45\linewidth]
 {\includegraphics[width=6cm,height=3.5cm]{./plotsMainArkSupp/sim1oal_p64_v2_optNorm2_n200s5000snry8sparsity_instru16count1.png}}
  \subcaptionbox{Confounder: weak \\ outcome, strong exposure}[0.45\linewidth]
 {\includegraphics[width=6cm,height=3.5cm]{./plotsMainArkSupp/sim1oal_p64_v2_optNorm2_n200s5000snry8sparsity_instru16count1.png}}
  \subcaptionbox{Precision: strong \\ outcome, zero exposure}[0.45\linewidth]
 {\includegraphics[width=6cm,height=3.5cm]{./plotsMainArkSupp/sim1oal_p64_v2_optNorm2_n200s5000snry8sparsity_instru16count1.png}}
  \subcaptionbox{Precision: medium \\ outcome, zero exposure}[0.45\linewidth]
 {\includegraphics[width=6cm,height=3.5cm]{./plotsMainArkSupp/sim1oal_p64_v2_optNorm2_n200s5000snry8sparsity_instru16count1.png}}
  \subcaptionbox{Precision: weak \\ outcome, zero exposure}[0.45\linewidth]
 {\includegraphics[width=6cm,height=3.5cm]{./plotsMainArkSupp/sim1oal_p64_v2_optNorm2_n200s5000snry8sparsity_instru16count1.png}}
  \subcaptionbox{Overall coverage of $\mathcal{M}_1$}[0.45\linewidth]
 {\includegraphics[width=6cm,height=3.5cm]{./plotsMainArkSupp/sim1oal_p64_v2_optNorm2_n200s5000sparsity_instru16coverage_snry8.png}}
\caption{Simulation results where the number of instrumental variables are eight times of $\mathcal{M}_1 $ for the case $(n,s,\sigma) = (200,5000,0.5)$: Panels (a) -- (f) plot the average coverage proportion for $X_l$, where $l=1,2,3,104,105$ and $106$. Panels (a) -- (c) correspond to strong outcome and weak exposure predictor, moderate outcome and moderate exposure predictor and weak outcome and strong exposure predictor; Panels (d) -- (f) correspond to strong, moderate and weak predictors of outcome only. Panel (g) plots the average coverage proportion for the index set $\mathcal{M}_1 = \{1,2,3,104,105,106\}$. The x-axis represents the size of $\widehat{\mathcal{M}} $, while
y-axis denotes the average proportion. The green solid, the red dashed and the black dash dotted lines denote our joint screening method, the outcome screening method, and the intersection screening method, respectively. }
\label{sim1step1n200sigma025sparsity_instru16}
\end{figure}

\subsection{Screening and estimation under different covariances of exposure errors}
\label{Screening under different covariance structures of exposure errors}
We also consider various exposure errors in the simulation. We use the same setting as Section \ref{Simulation for screening} of the main paper but taking three different covariance structures of exposure errors $\bm{E}_{i}$. In particular, the random error $\mathrm{vec}(\bm{E}_i)$ is independently generated from $N(\bm{0},\bm\Sigma_e)$, where we set the standard deviations of all elements in  $\bm{E}_i$ to be $\sigma_e = 0.2$ and the correlation between $\bm{E}_{i,jk}$ and $\bm{E}_{i,j^\prime k^\prime}$ to be $\rho_2^{|j - j^\prime| + |k-k^\prime|}$ for $1 \leq j,k,j^\prime, k^\prime \leq 64$ with $\rho_2$. We consider three scenarios that $\rho_2 = 0.2$, $0.5$, and $0.8$, and report the selected covariates from the screening step. We consider $ \sigma=1$ or $0.5$ and fix the sample size $n=200$. The complete screening results can be found in Figure \ref{sim1step1n200sigma1} as well as Figures \ref{sim1step1n200sigma025}, \ref{sim1step1n200sigma1rho202} -- \ref{sim1step1n200sigma025rho208} here ($\rho_2$ is set to be $0.5$ in Section \ref{Simulation for screening} of the main paper).

Specifically, as summarized in Figure \ref{sim1step1n200rho2_summary}, when $\rho_2$ increases, the average coverage proportion for the index set $\mathcal{M}_1 = \{1,2,3,104,105,106\}$ does not change too much. 

\begin{figure}[htbp]
\captionsetup[subfigure]{justification=centering}
\centering
 \subcaptionbox{$\rho_2 = 0.2$, $\sigma=1$}[0.45\linewidth]
 {\includegraphics[width=7cm,height=10cm,keepaspectratio]{./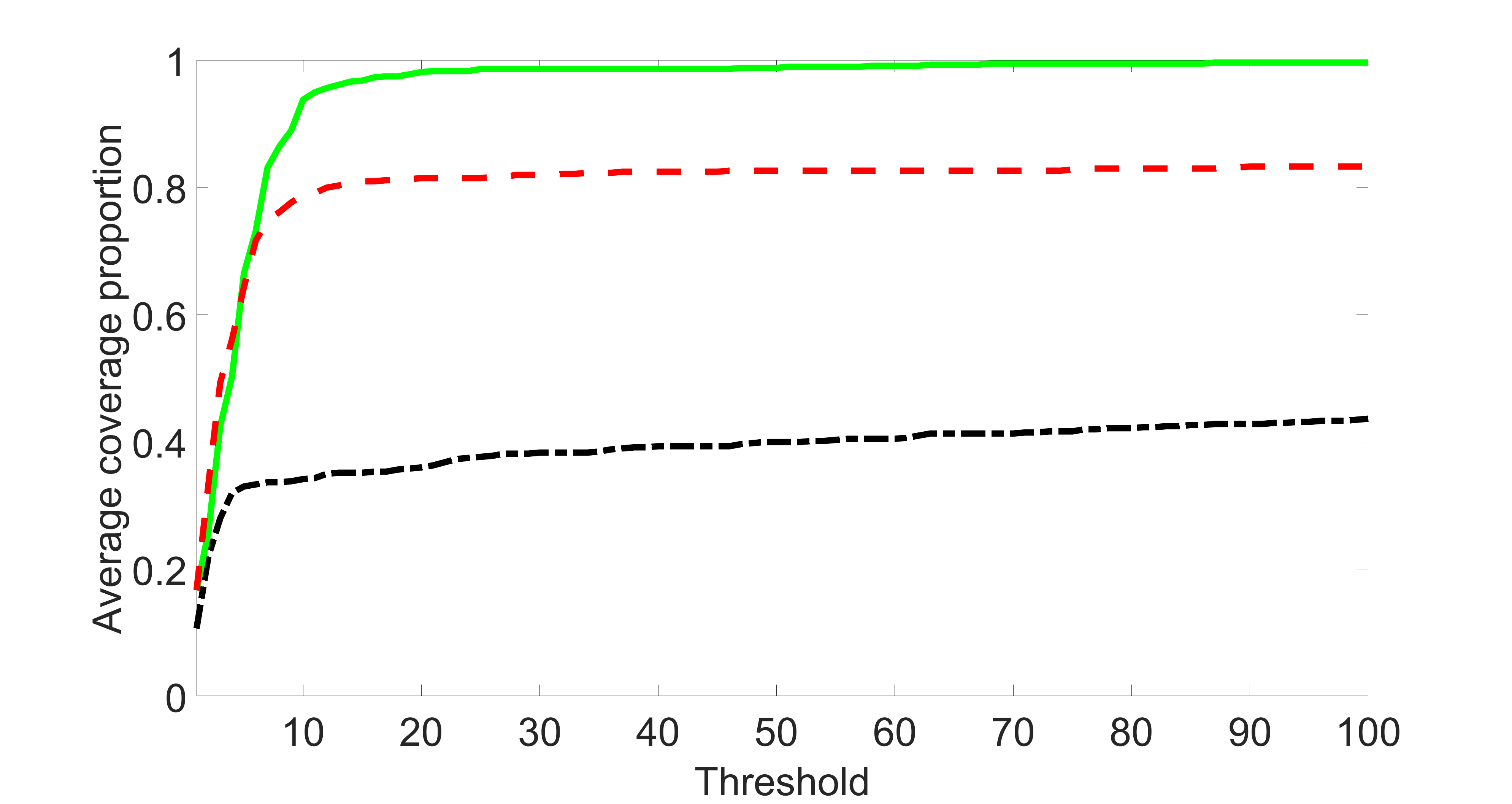}}
 \subcaptionbox{$\rho_2 = 0.2$, $\sigma = 0.5$}[0.45\linewidth]
 {\includegraphics[width=7cm,height=10cm,keepaspectratio]{./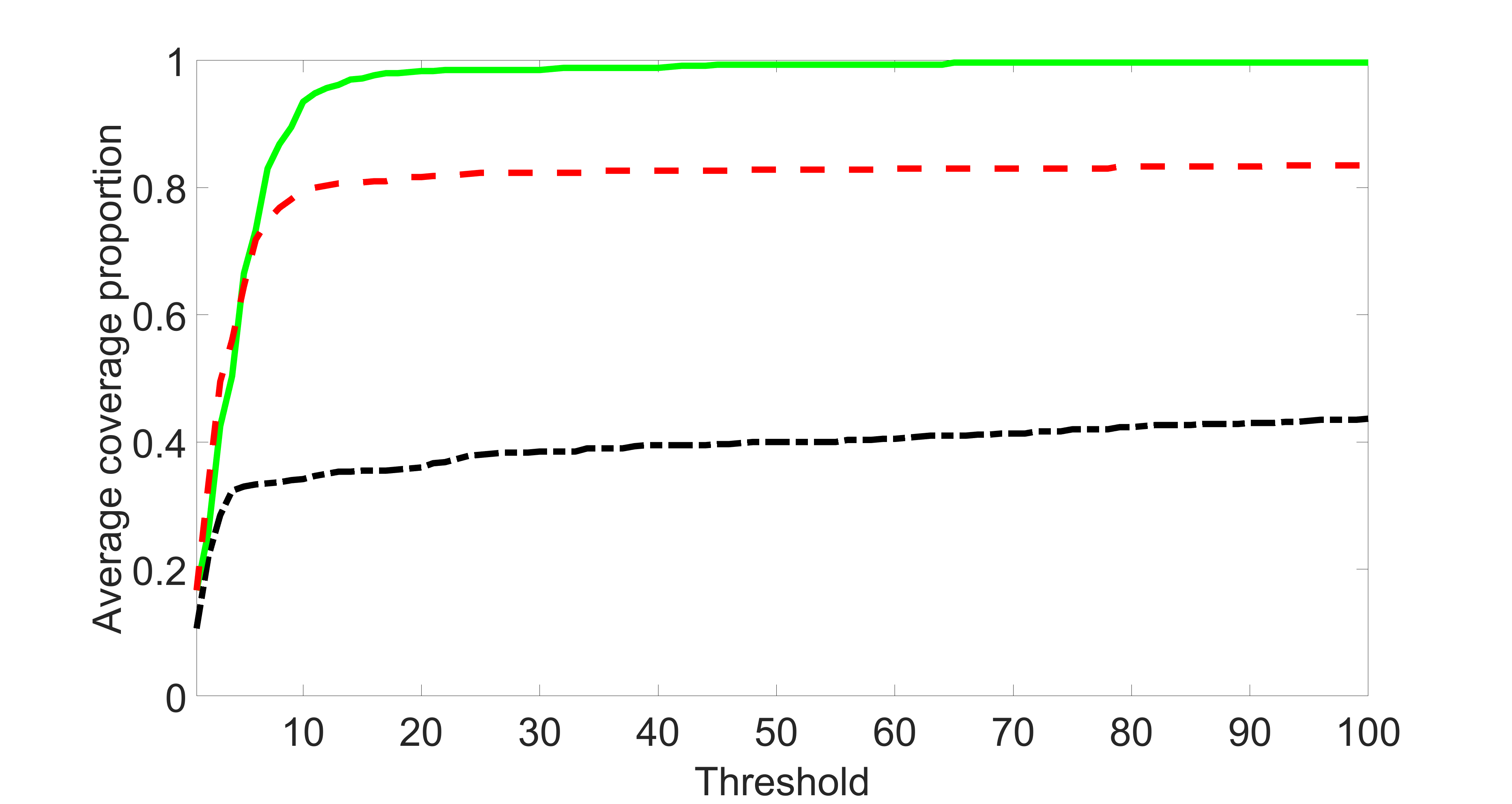}}
  \subcaptionbox{$\rho_2 = 0.5$, $\sigma=1$}[0.45\linewidth]
 {\includegraphics[width=7cm,height=10cm,keepaspectratio]{./plotsMainArkSupp/sim1oal_p64_v2_optNorm2_n200s5000coverageoptB2C1snry9.png}}
  \subcaptionbox{$\rho_2 = 0.5$, $\sigma = 0.5$}[0.45\linewidth]
 {\includegraphics[width=7cm,height=10cm,keepaspectratio]{./plotsMainArkSupp/sim1oal_p64_v2_optNorm2_n200s5000coverageoptB2C1snry8.png}}
  \subcaptionbox{$\rho_2 = 0.8$, $\sigma=1$}[0.45\linewidth]
 {\includegraphics[width=7cm,height=3.75cm]{./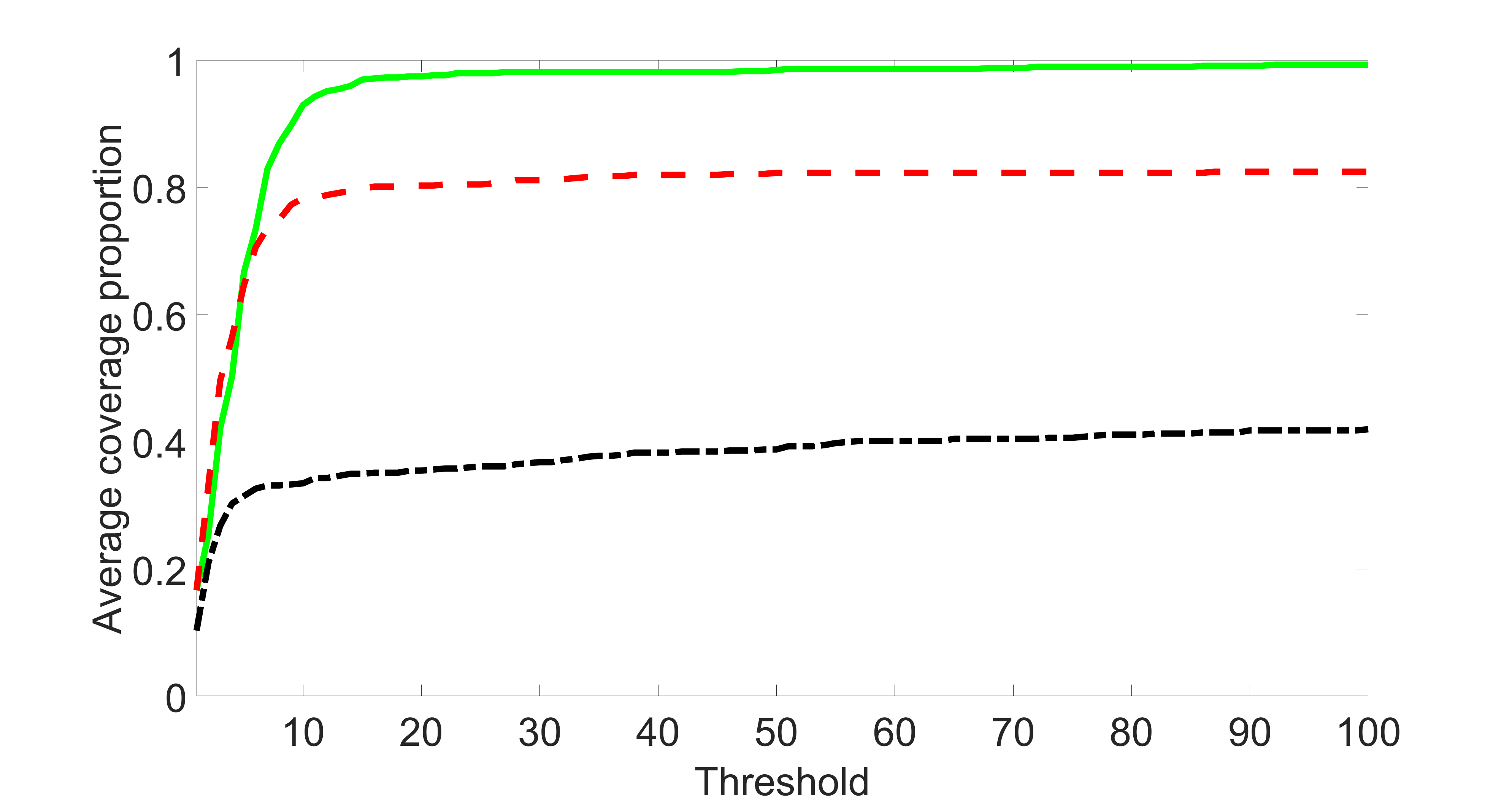}}
  \subcaptionbox{$\rho_2 = 0.8$, $\sigma = 0.5$}[0.45\linewidth]
 {\includegraphics[width=7cm,height=10cm,keepaspectratio]{./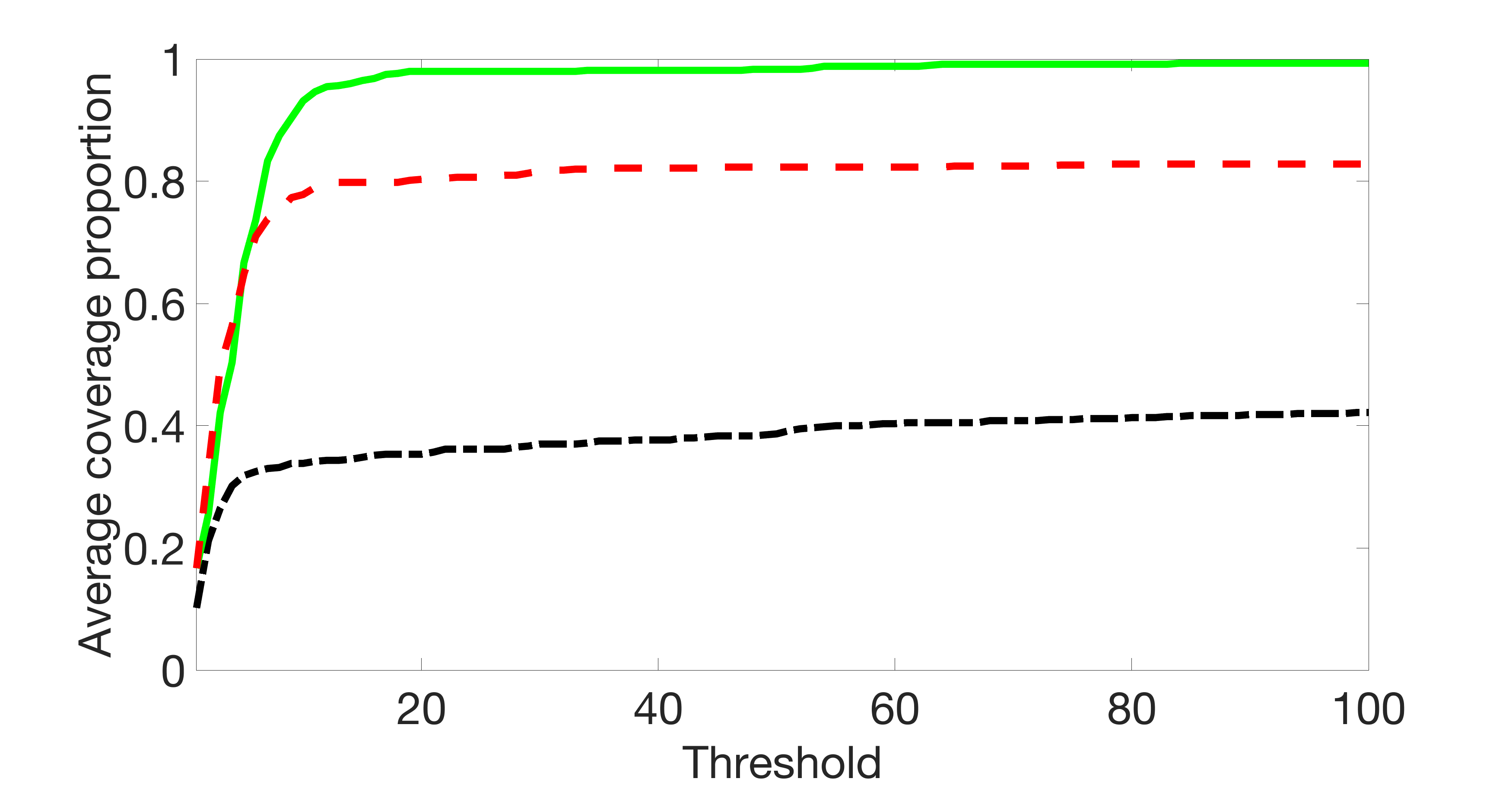}}
\caption{Simulation results where the size of instrumental variables $\mathcal{I}$ are the same, twice and eight times of $\mathcal{M}_1 $ for the case $(n,s) = (200,5000)$: Panels (a) (c) (e) plot the average coverage proportion for the index set $\mathcal{M}_1 = \{1,2,3,104,105,106\}$ when $\sigma=1$. Panels (b) (d) (f) plot the average coverage proportion for the index set $\mathcal{M}_1 = \{1,2,3,104,105,106\}$ when $\sigma = 0.5$. The x-axis represents the size of $\widehat{\mathcal{M}} $, while
y-axis denotes the average proportion. The green solid, the red dashed and the black dash dotted lines denote our joint screening method, the outcome screening method, and the intersection screening method, respectively. }
\label{sim1step1n200rho2_summary}
\end{figure}

In addition, we evaluate the performance of our estimation procedure after the first-step screening. For the size of $ \widehat{\mathcal{M}} $ in the screening step, we set $|\widehat{\mathcal{M}}| = \lfloor n / \log(n) \rfloor$, so that $|\widehat{\mathcal{M}}|=37$ for sample size $n = 200$. We report the mean squared errors (MSEs) for $\bm{\beta}$ and $\bm{B}$ defined as
$||{\bm\beta}_{}-\widehat{\bm\beta}||_2^2$ and $ \|{\bm{B}}-\widehat{\bm{B}}\|_F^2$, respectively.

Table \ref{sim1rho2t1} summarizes the average MSEs for $\bm\beta$ and $\bm{B}$ among 100 Monte Carlo runs. We can see that the MSE decreases with $\rho_2$ increases. As nuclear norm penalization estimation procedure can be regarded as one way of spatial smoothing, the large correlations among $\bm{E}_{i}$ actually help with the spatial smoothing, and thus we have better estimation accuracy when $\rho_2$ increases.

\begin{table}[htbp]
\centering
\caption{Simulation results for $ \sigma=1 $ and $ \sigma = 0.5 $: the average MSEs for $\bm\beta$ and $ {\bm B}$, and their associated standard errors in the parentheses are reported. The results are based on 100 Monte Carlo repetitions.}
\begin{tabular}{ lcc | lcccc }
$\sigma = 1.0$ & MSE $\bm\beta$ & MSE ${\bm{B}}$ &$\sigma = 0.5$  &MSE $\bm\beta$ & MSE ${\bm{B}}$  \\
\hline
$\rho_2=0.2$   &0.986(0.099)&0.802(0.011)&  $\rho_2=0.2$   &0.397(0.042)&0.701(0.006)\\
$\rho_2=0.5$  &0.496(0.021)&0.667(0.005)&  $\rho_2=0.5$  &0.276(0.009)&0.528(0.005)\\
$\rho_2=0.8 $  &0.252(0.010)&0.439(0.007)&  $\rho_2=0.8$  &0.097(0.006)&0.305(0.004)\\
\end{tabular}
\label{sim1rho2t1}
\end{table}

\begin{figure}[htbp]
\captionsetup[subfigure]{justification=centering}
\centering
 \subcaptionbox{Confounder: strong \\ outcome, weak exposure}[0.45\linewidth]
 {\includegraphics[width=6cm,height=3.5cm]{./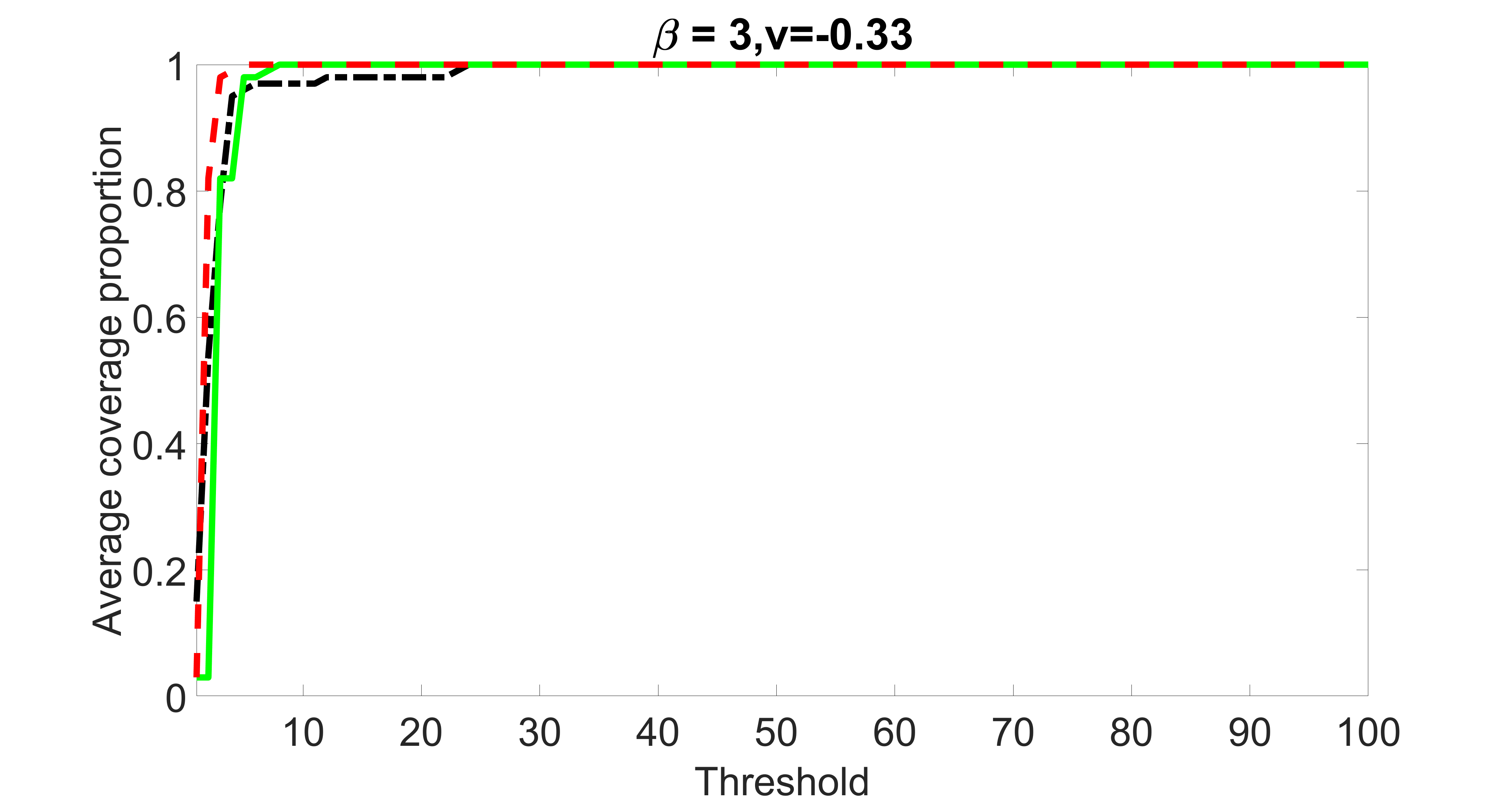}}
 \subcaptionbox{Confounder: medium \\ outcome, medium exposure}[0.45\linewidth]
 {\includegraphics[width=6cm,height=3.5cm]{./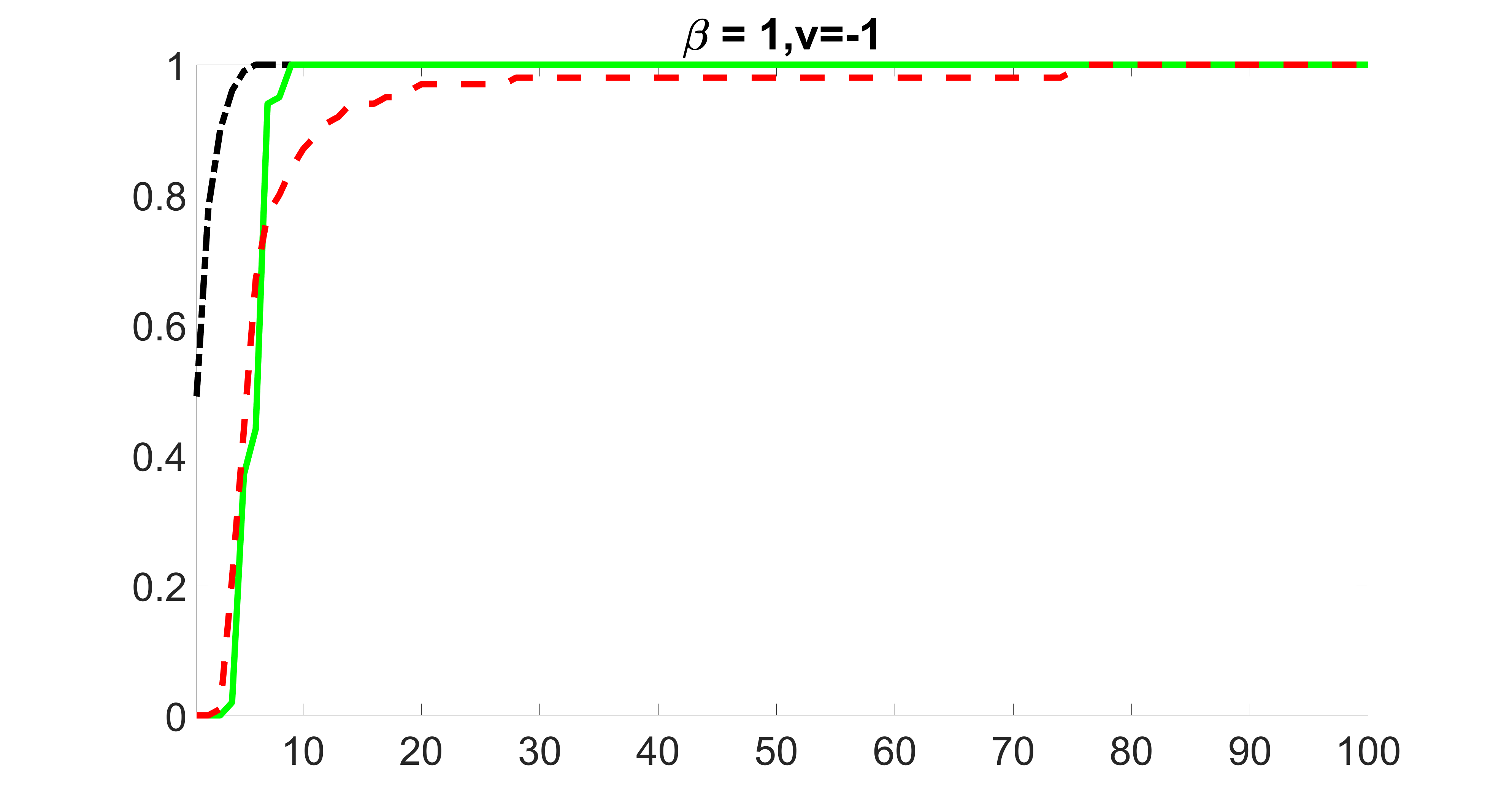}}
  \subcaptionbox{Confounder: weak \\ outcome, strong exposure}[0.45\linewidth]
 {\includegraphics[width=6cm,height=3.5cm]{./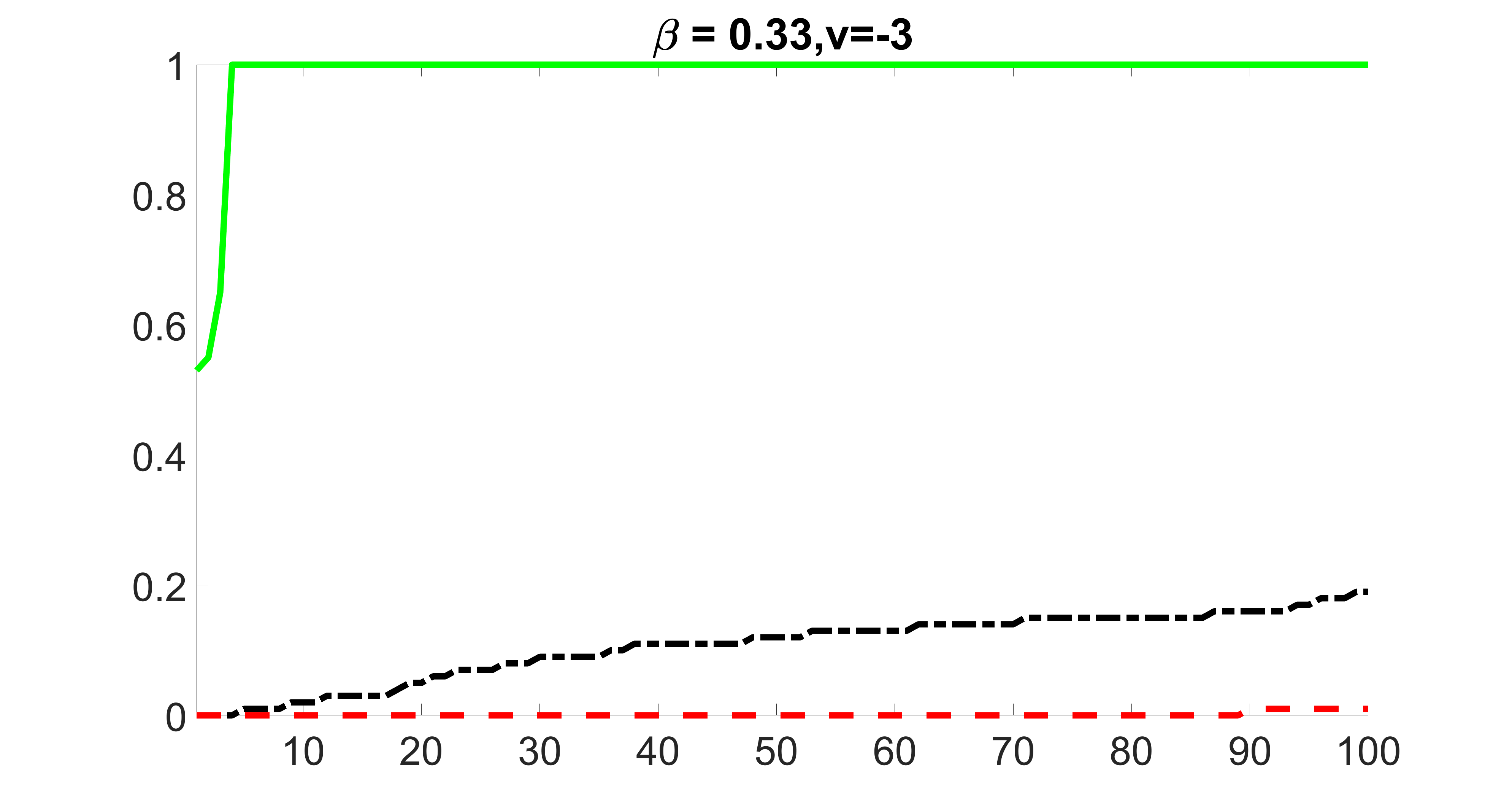}}
  \subcaptionbox{Precision: strong \\ outcome, zero exposure}[0.45\linewidth]
 {\includegraphics[width=6cm,height=3.5cm]{./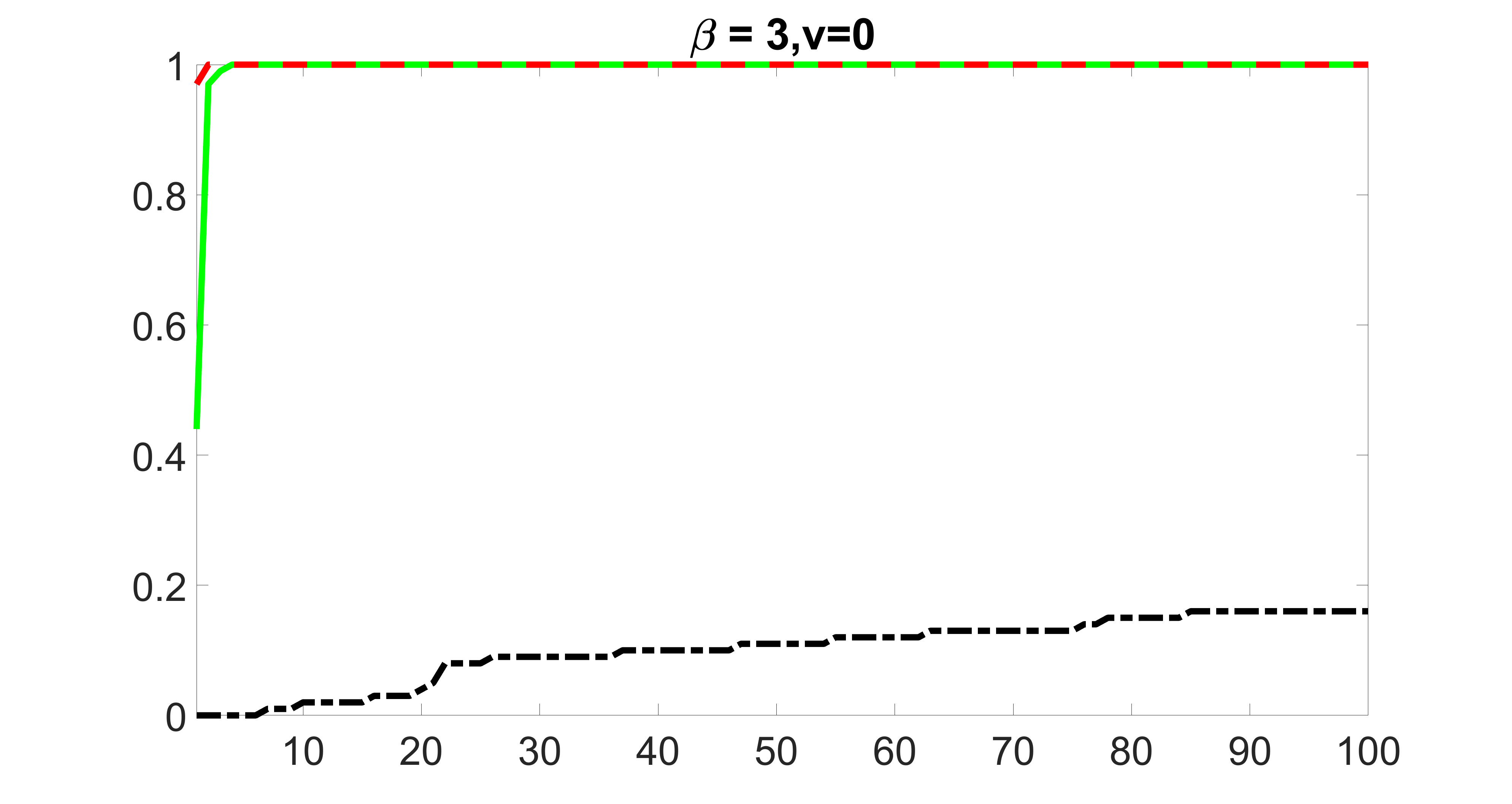}}
  \subcaptionbox{Precision: medium \\ outcome, zero exposure}[0.45\linewidth]
 {\includegraphics[width=6cm,height=3.5cm]{./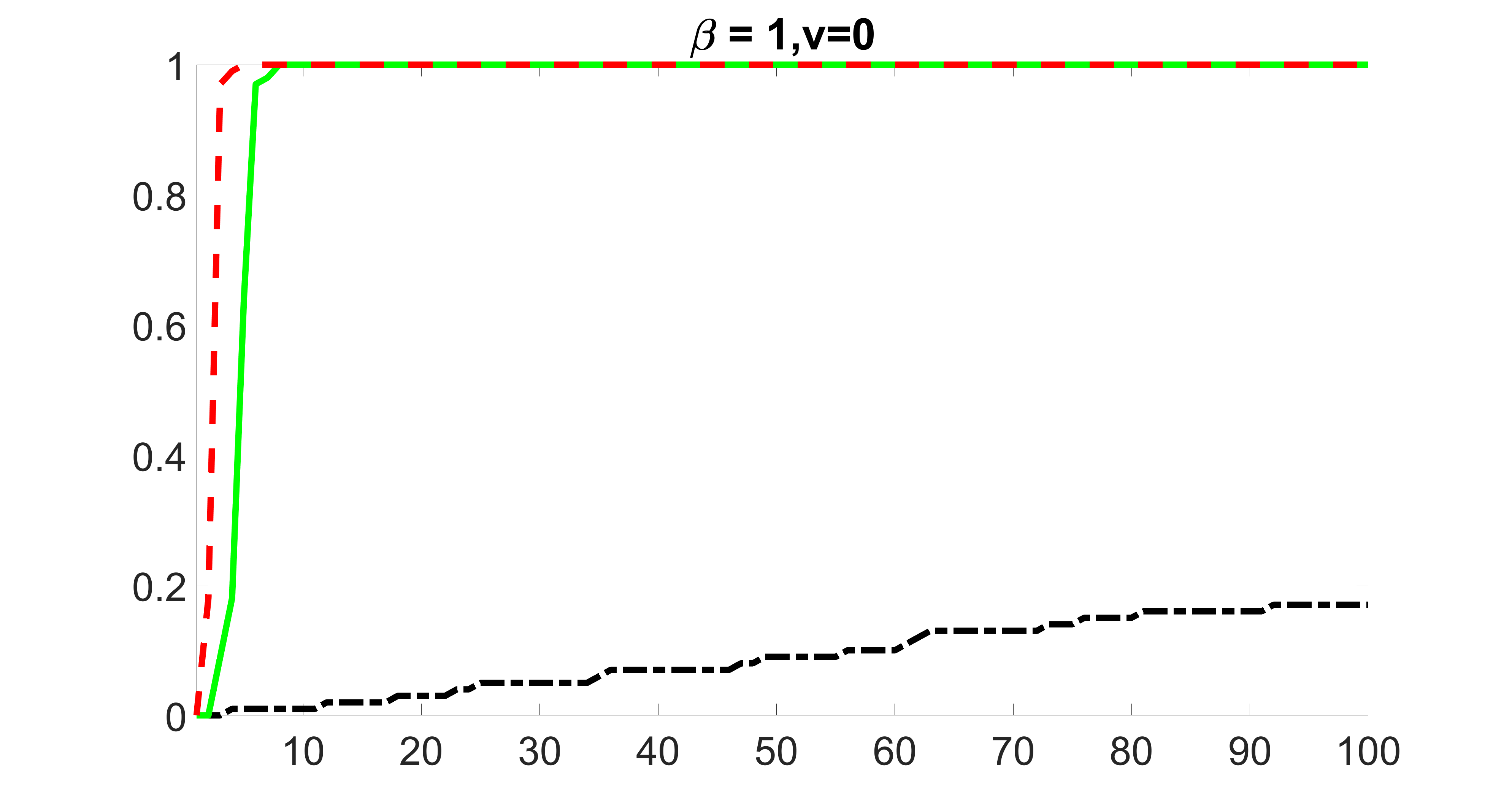}}
  \subcaptionbox{Precision: weak \\ outcome, zero exposure}[0.45\linewidth]
 {\includegraphics[width=6cm,height=3.5cm]{./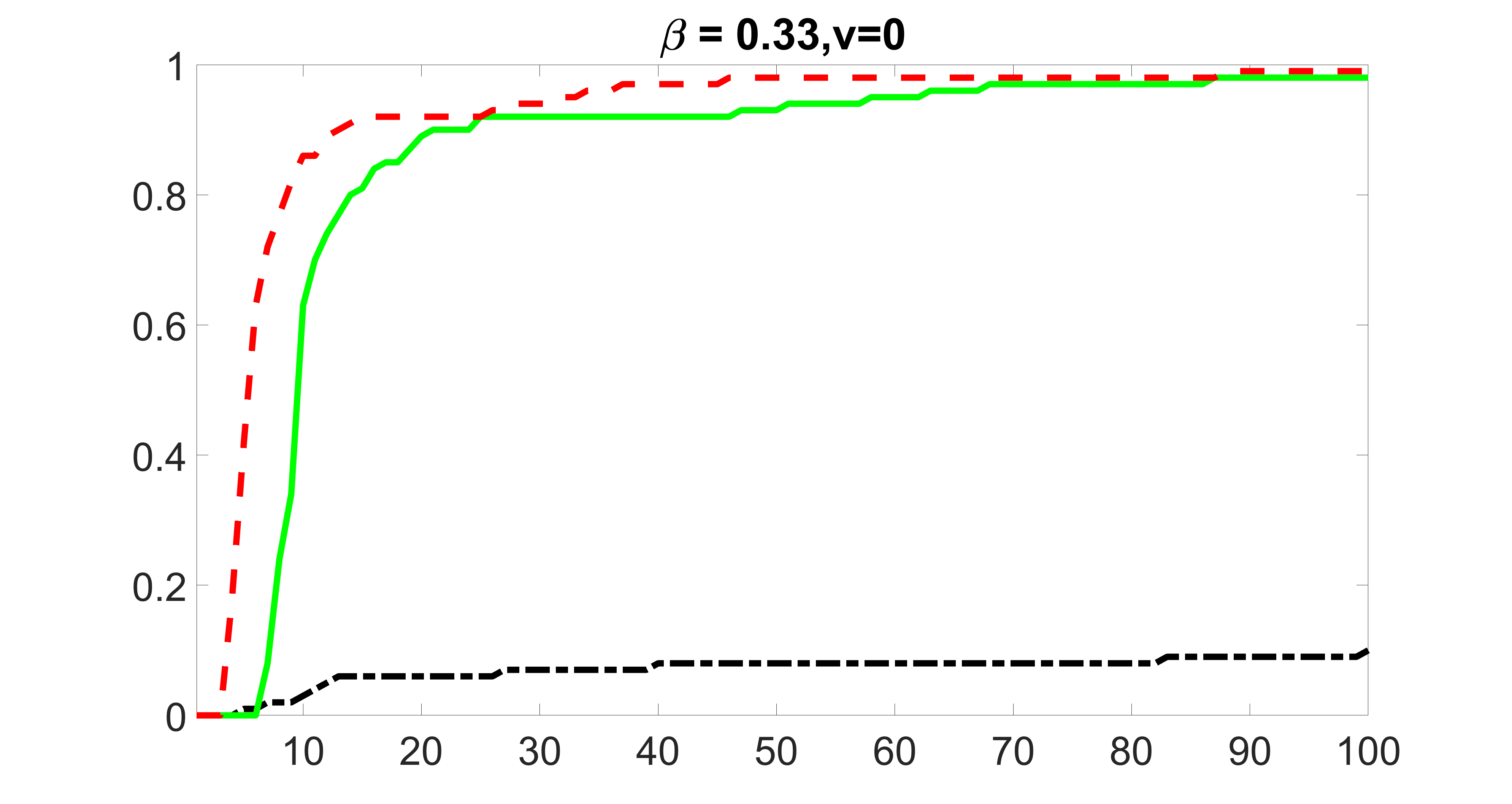}}
  \subcaptionbox{Overall coverage of $\mathcal{M}_1$}[0.45\linewidth]
 {\includegraphics[width=6cm,height=3.5cm]{./plotsMainArkSupp/sim1oal_p64_v2_optNorm2_n200s5000coverage_snry9rho202.png}}
\caption{Simulation results for the case $(n,s,\sigma,\rho_2) = (200,5000,1,0.2)$: Panels (a) -- (f) plot the average coverage proportion for $X_l$, where $l=1,2,3,104,105$ and $106$. Panels (a) -- (c) correspond to strong outcome and weak exposure predictor, moderate outcome and moderate exposure predictor and weak outcome and strong exposure predictor; Panels (d) -- (f) correspond to strong, moderate and weak predictors of outcome only. Panel (g) plots the average coverage proportion for the index set $\mathcal{M}_1 = \{1,2,3,104,105,106\}$. The x-axis represents the size of $\widehat{\mathcal{M}} $, while
y-axis denotes the average proportion. The green solid, the red dashed and the black dash dotted lines denote our joint screening method, the outcome screening method, and the intersection screening method, respectively. }
\label{sim1step1n200sigma1rho202}
\end{figure}

\begin{figure}[htbp]
\captionsetup[subfigure]{justification=centering}
\centering
 \subcaptionbox{Confounder: strong \\ outcome, weak exposure}[0.45\linewidth]
 {\includegraphics[width=6cm,height=3.5cm]{./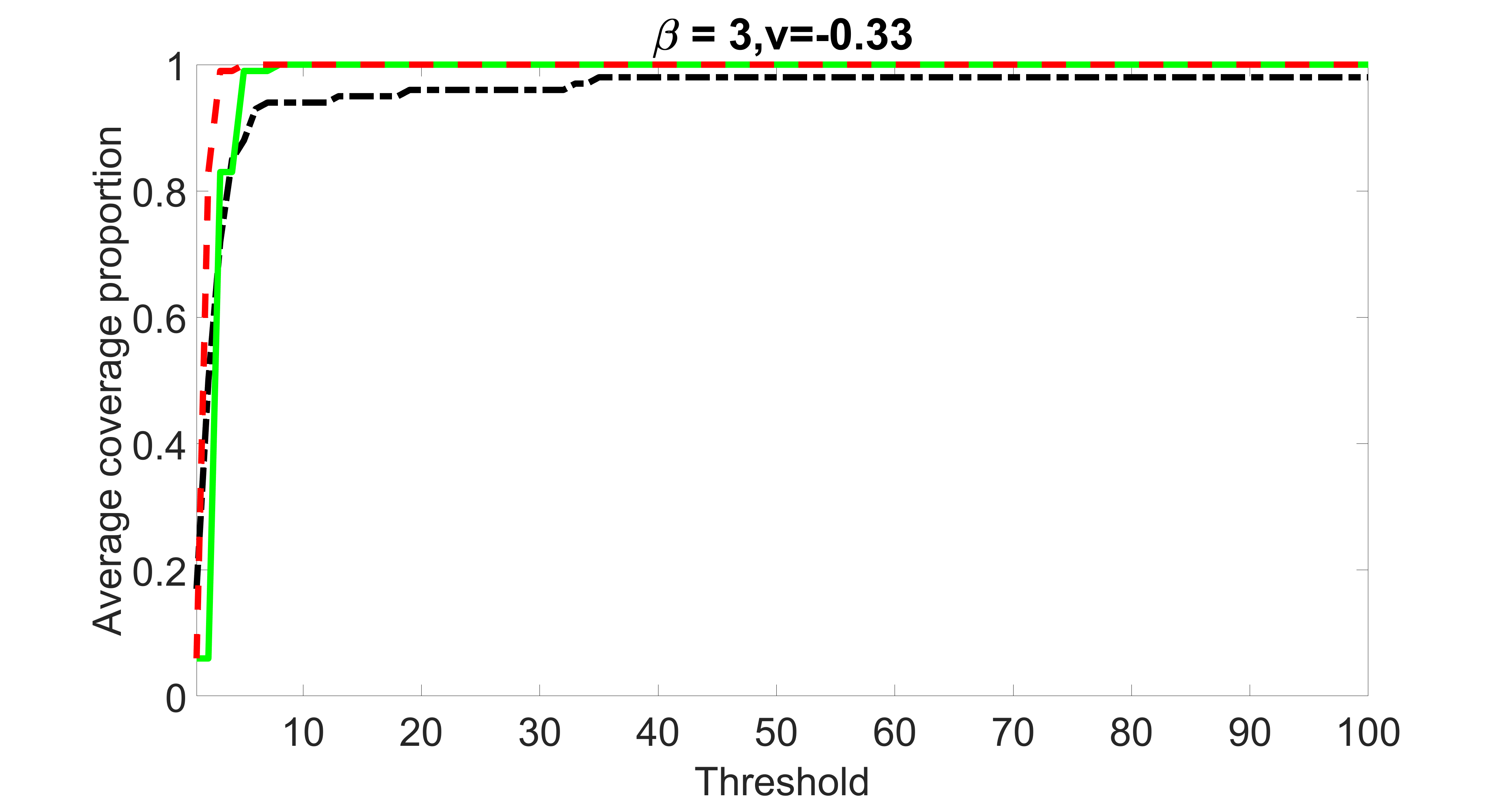}}
 \subcaptionbox{Confounder: medium \\ outcome, medium exposure}[0.45\linewidth]
 {\includegraphics[width=6cm,height=3.5cm]{./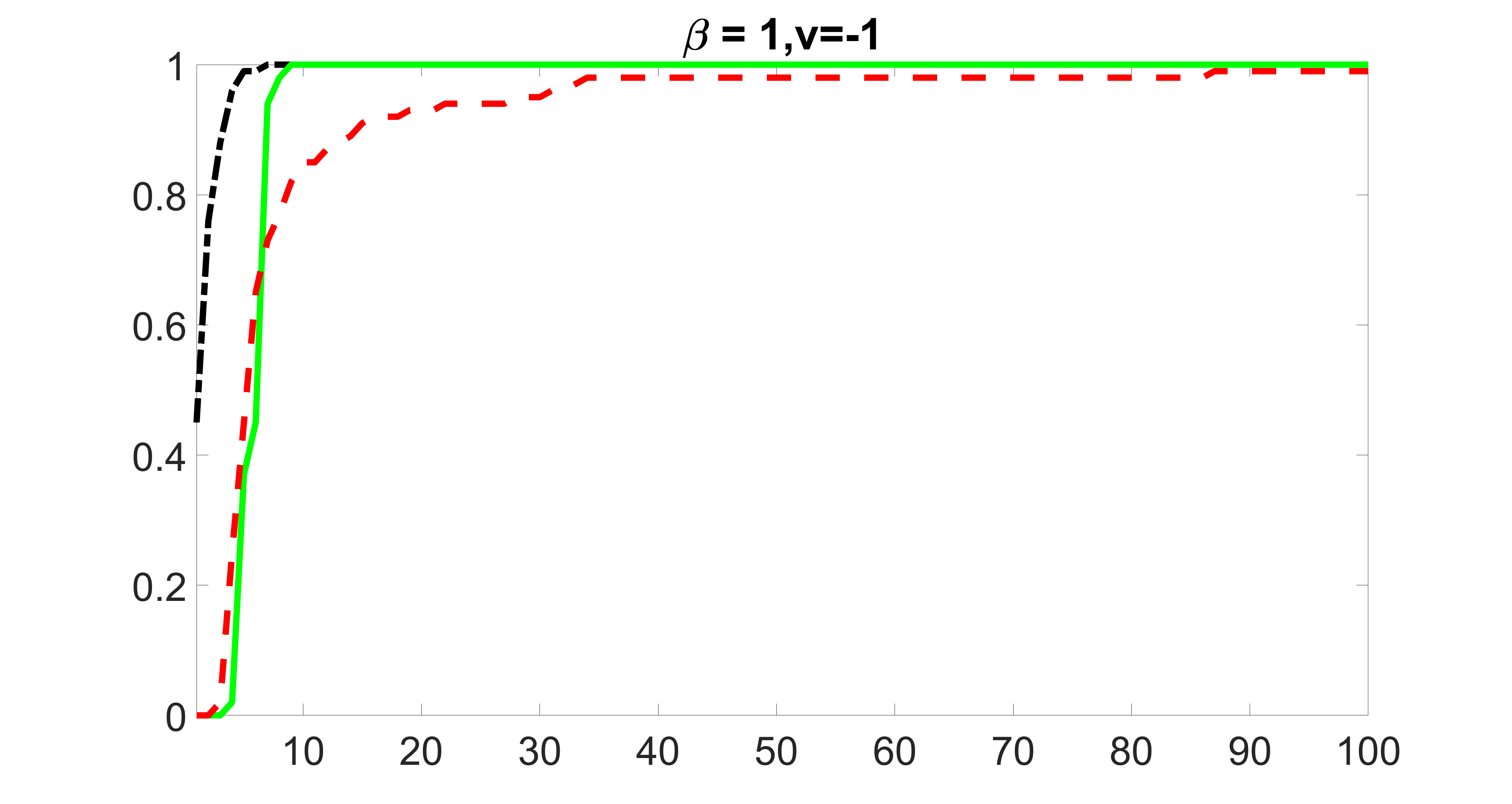}}
  \subcaptionbox{Confounder: weak \\ outcome, strong exposure}[0.45\linewidth]
 {\includegraphics[width=6cm,height=3.5cm]{./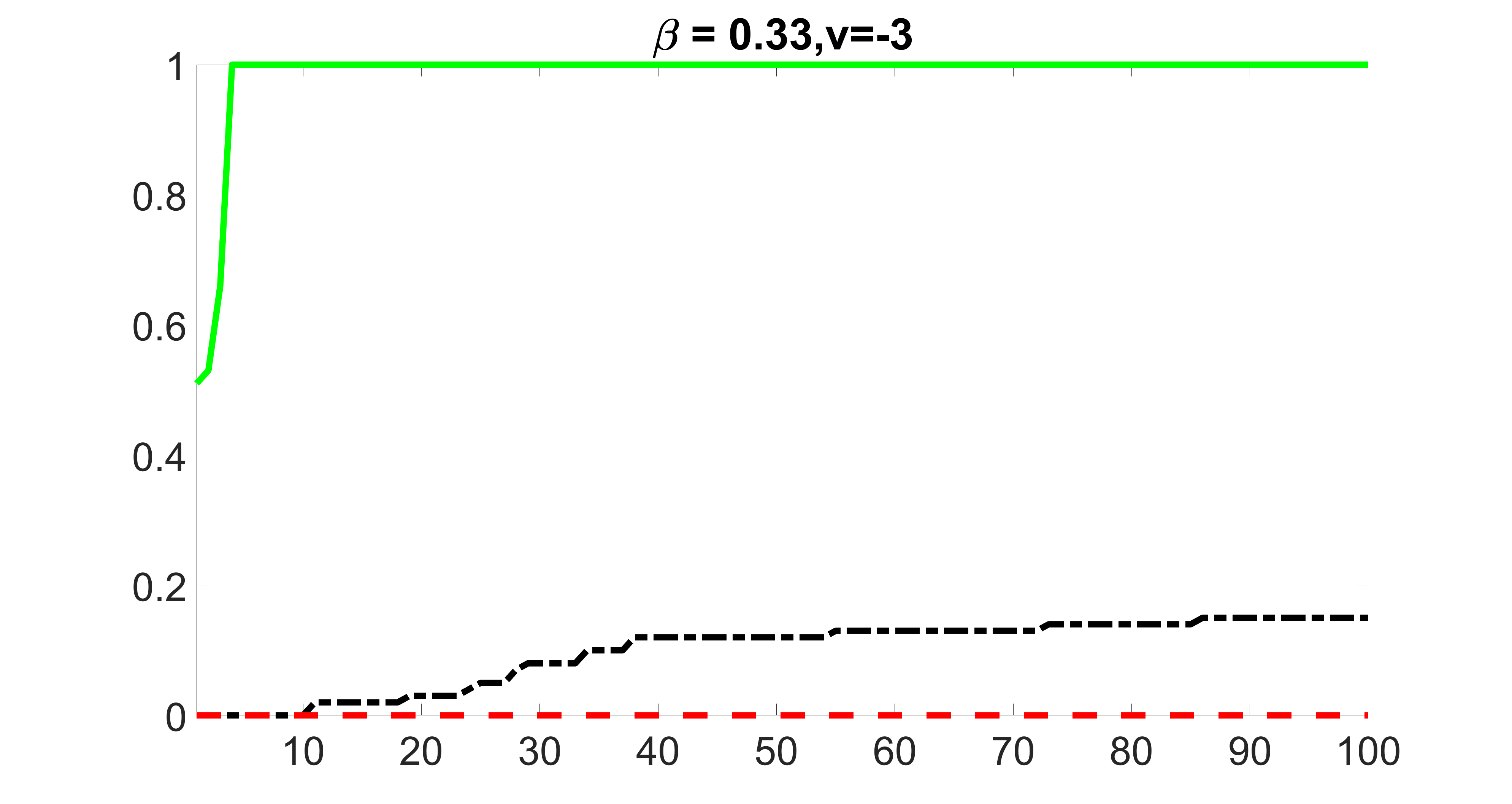}}
  \subcaptionbox{Precision: strong \\ outcome, zero exposure}[0.45\linewidth]
 {\includegraphics[width=6cm,height=3.5cm]{./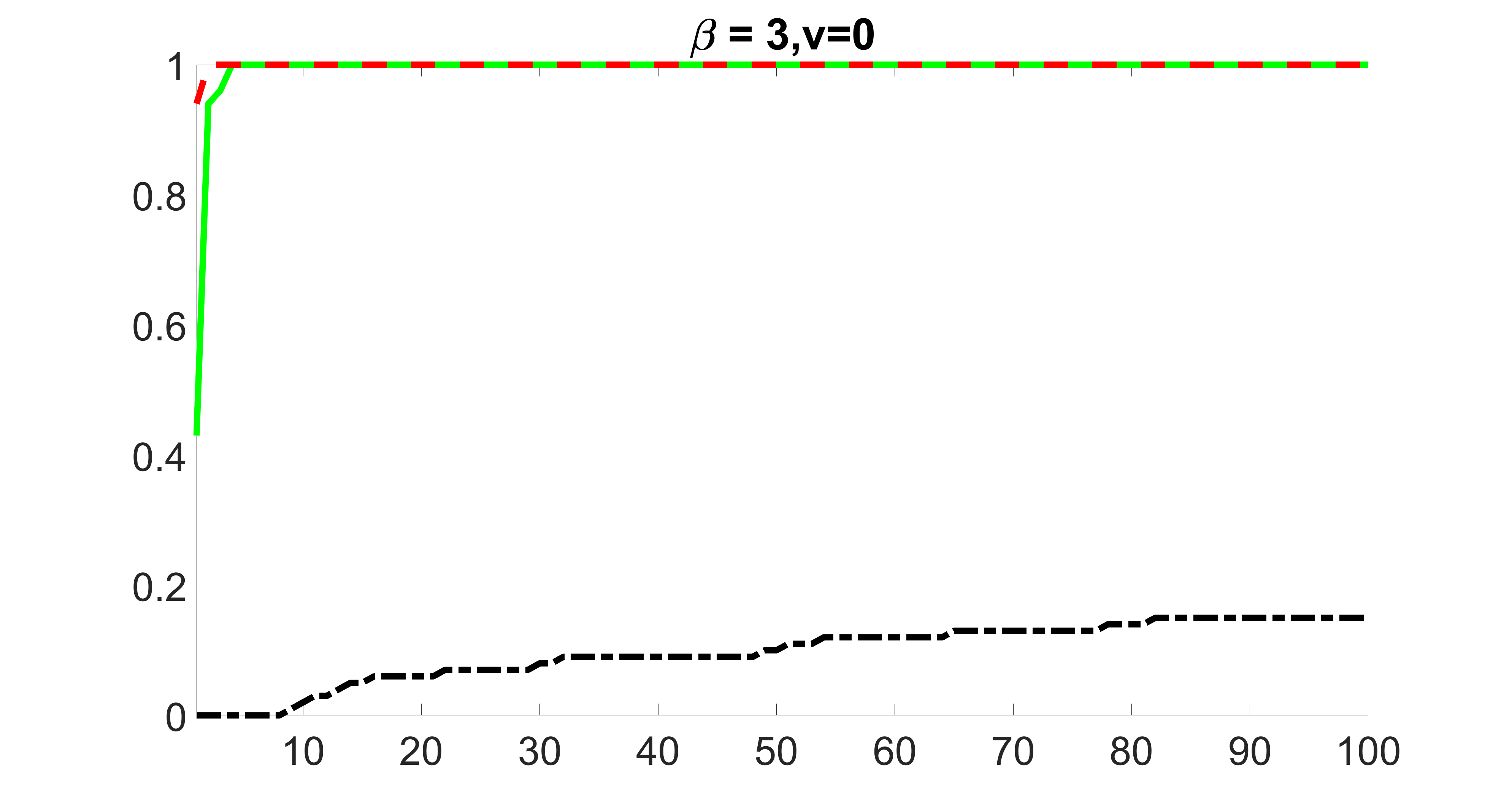}}
  \subcaptionbox{Precision: medium \\ outcome, zero exposure}[0.45\linewidth]
 {\includegraphics[width=6cm,height=3.5cm]{./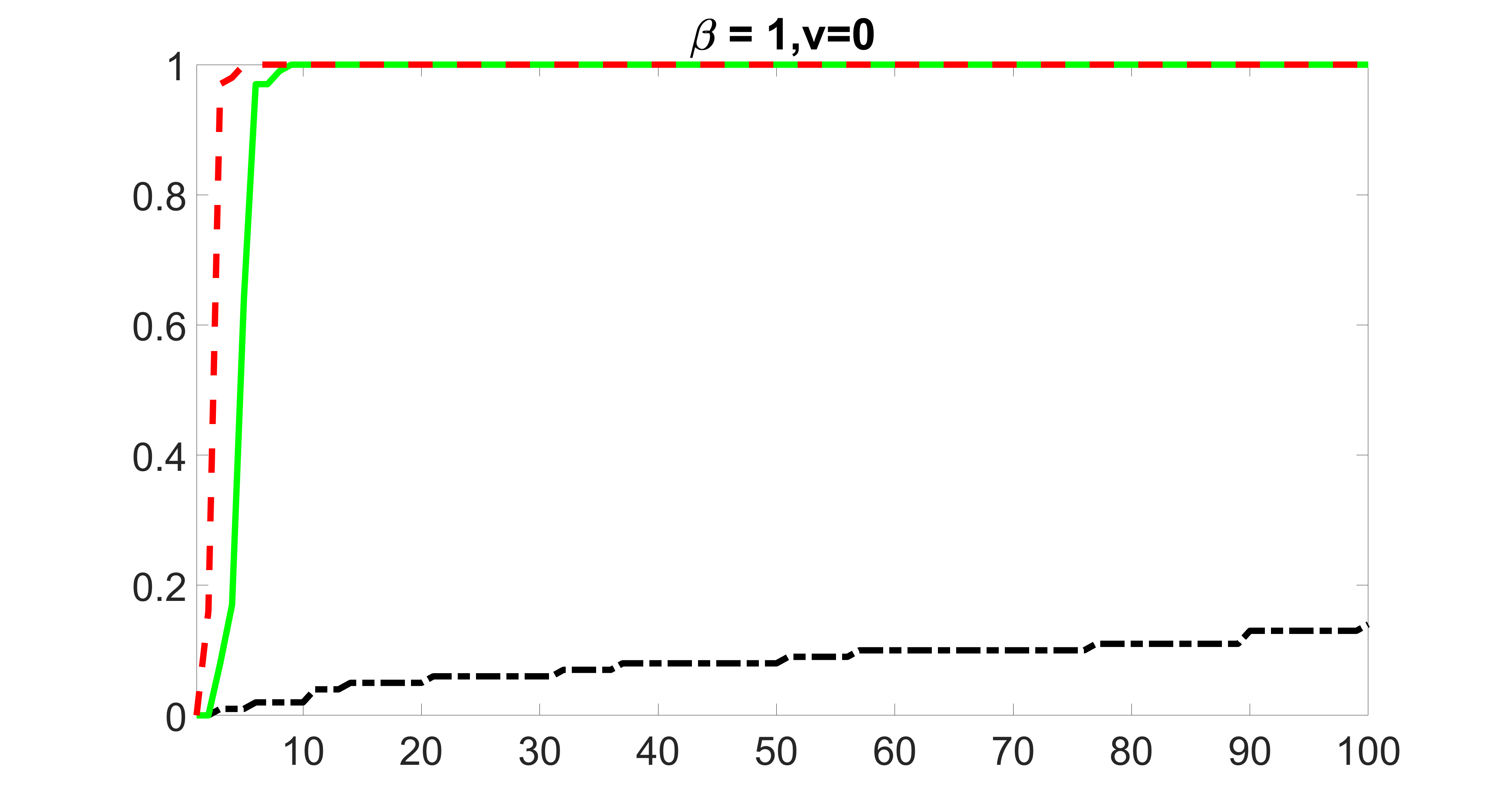}}
  \subcaptionbox{Precision: weak \\ outcome, zero exposure}[0.45\linewidth]
 {\includegraphics[width=6cm,height=3.5cm]{./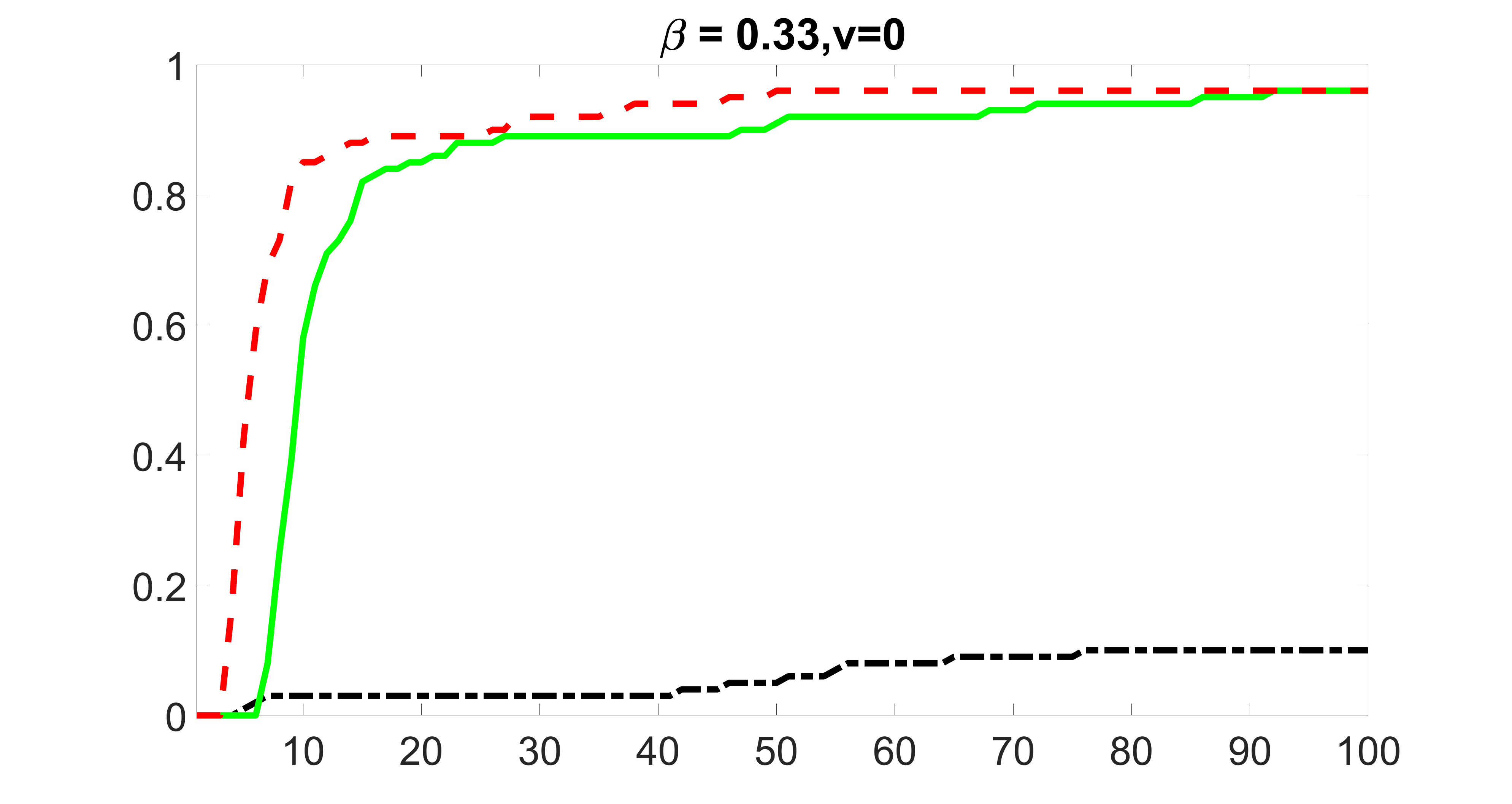}}
  \subcaptionbox{Overall coverage of $\mathcal{M}_1$}[0.45\linewidth]
 {\includegraphics[width=6cm,height=3.5cm]{./plotsMainArkSupp/sim1oal_p64_v2_optNorm2_n200s5000coverage_snry9rho208.png}}
\caption{Simulation results for the case $(n,s,\sigma,\rho_2) = (200,5000,1,0.8)$: Panels (a) -- (f) plot the average coverage proportion for $X_l$, where $l=1,2,3,104,105$ and $106$. Panels (a) -- (c) correspond to strong outcome and weak exposure predictor, moderate outcome and moderate exposure predictor and weak outcome and strong exposure predictor; Panels (d) -- (f) correspond to strong, moderate and weak predictors of outcome only. Panel (g) plots the average coverage proportion for the index set $\mathcal{M}_1 = \{1,2,3,104,105,106\}$. The x-axis represents the size of $\widehat{\mathcal{M}} $, while
y-axis denotes the average proportion. The green solid, the red dashed and the black dash dotted lines denote our joint screening method, the outcome screening method, and the intersection screening method, respectively. }
\label{sim1step1n200sigma1rho208}
\end{figure}

\begin{figure}[htbp]
\captionsetup[subfigure]{justification=centering}
\centering
 \subcaptionbox{Confounder: strong \\ outcome, weak exposure}[0.45\linewidth]
 {\includegraphics[width=6cm,height=3.5cm]{./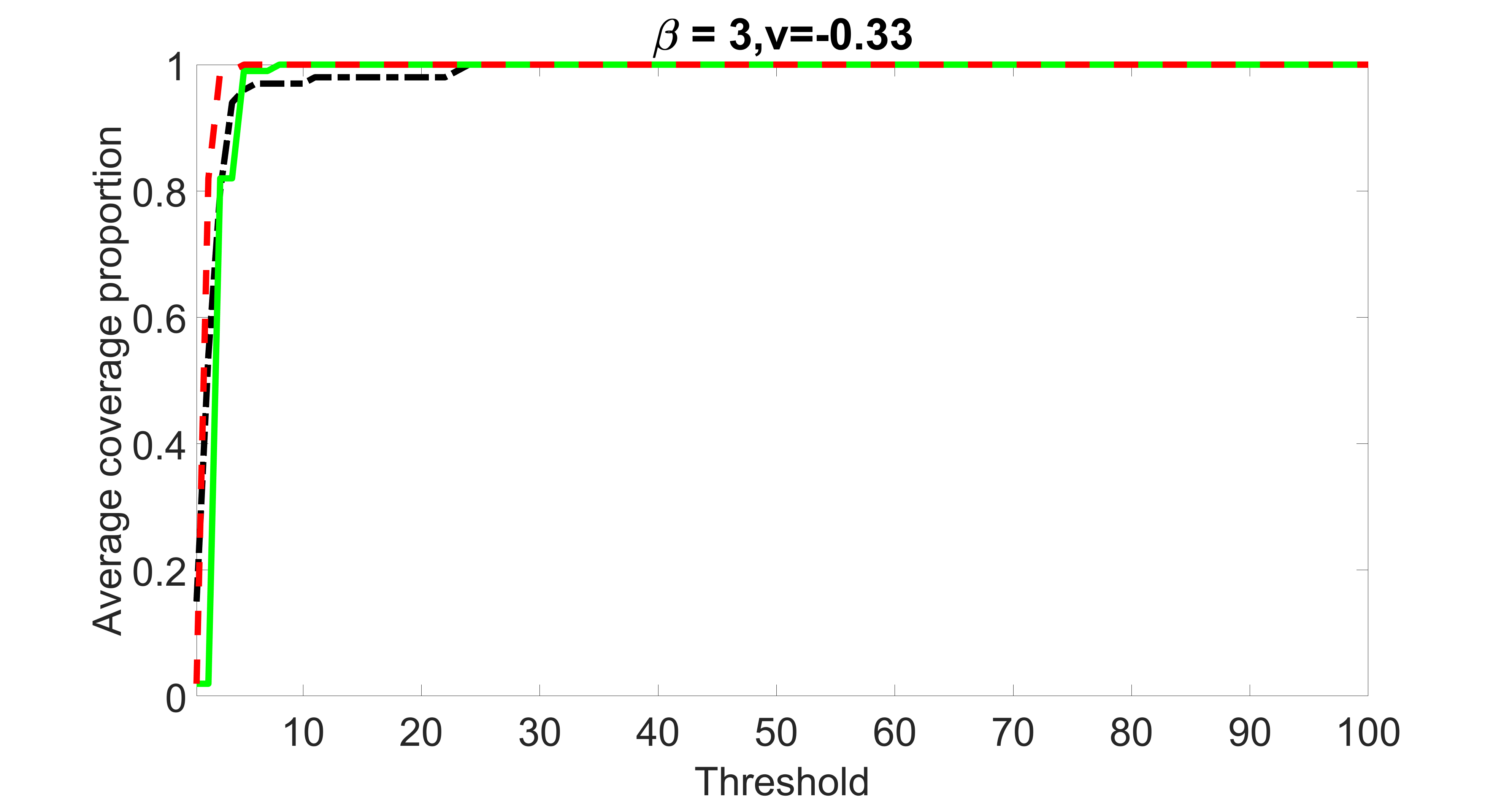}}
 \subcaptionbox{Confounder: medium \\ outcome, medium exposure}[0.45\linewidth]
 {\includegraphics[width=6cm,height=3.5cm]{./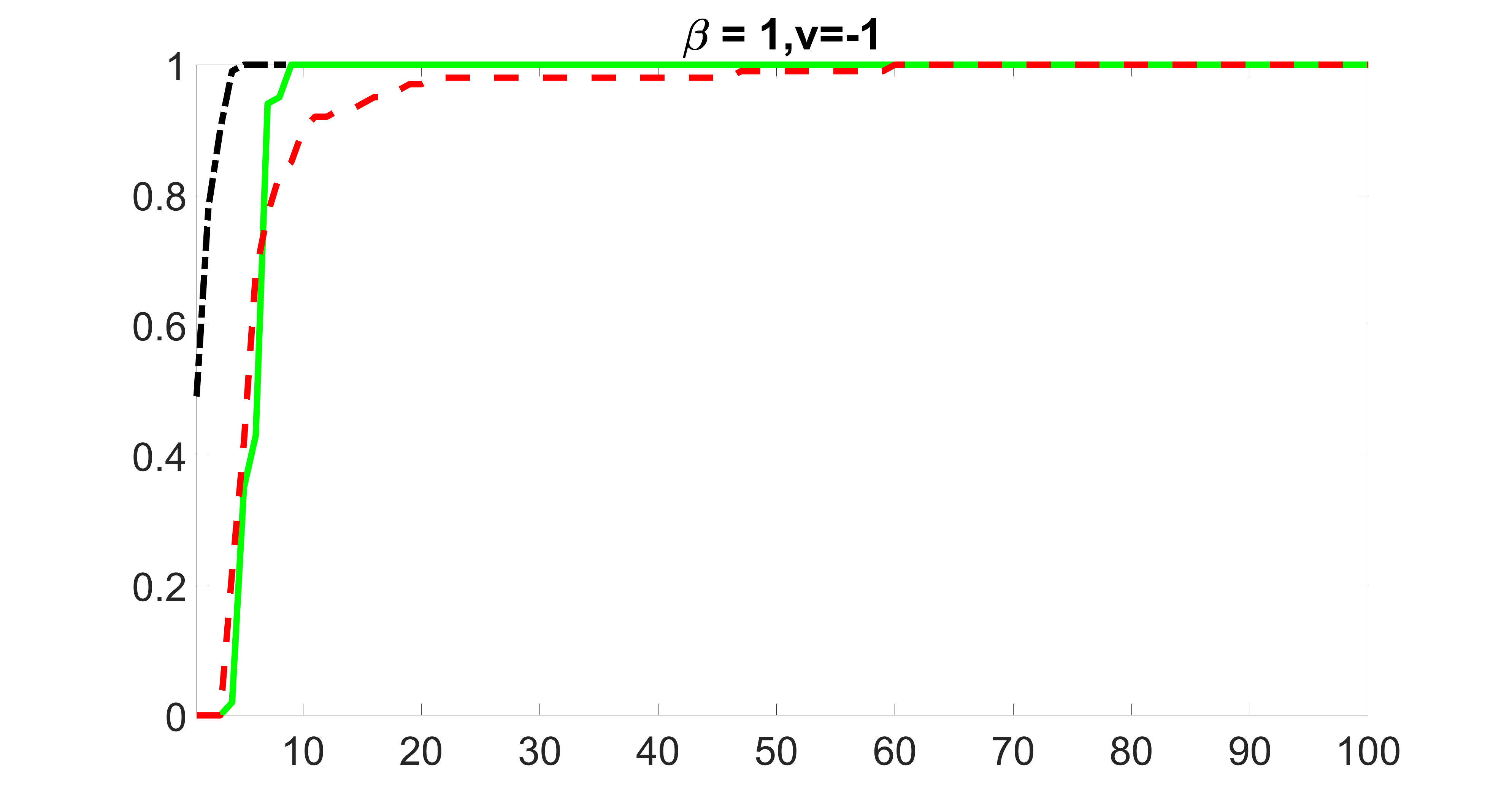}}
  \subcaptionbox{Confounder: weak \\ outcome, strong exposure}[0.45\linewidth]
 {\includegraphics[width=6cm,height=3.5cm]{./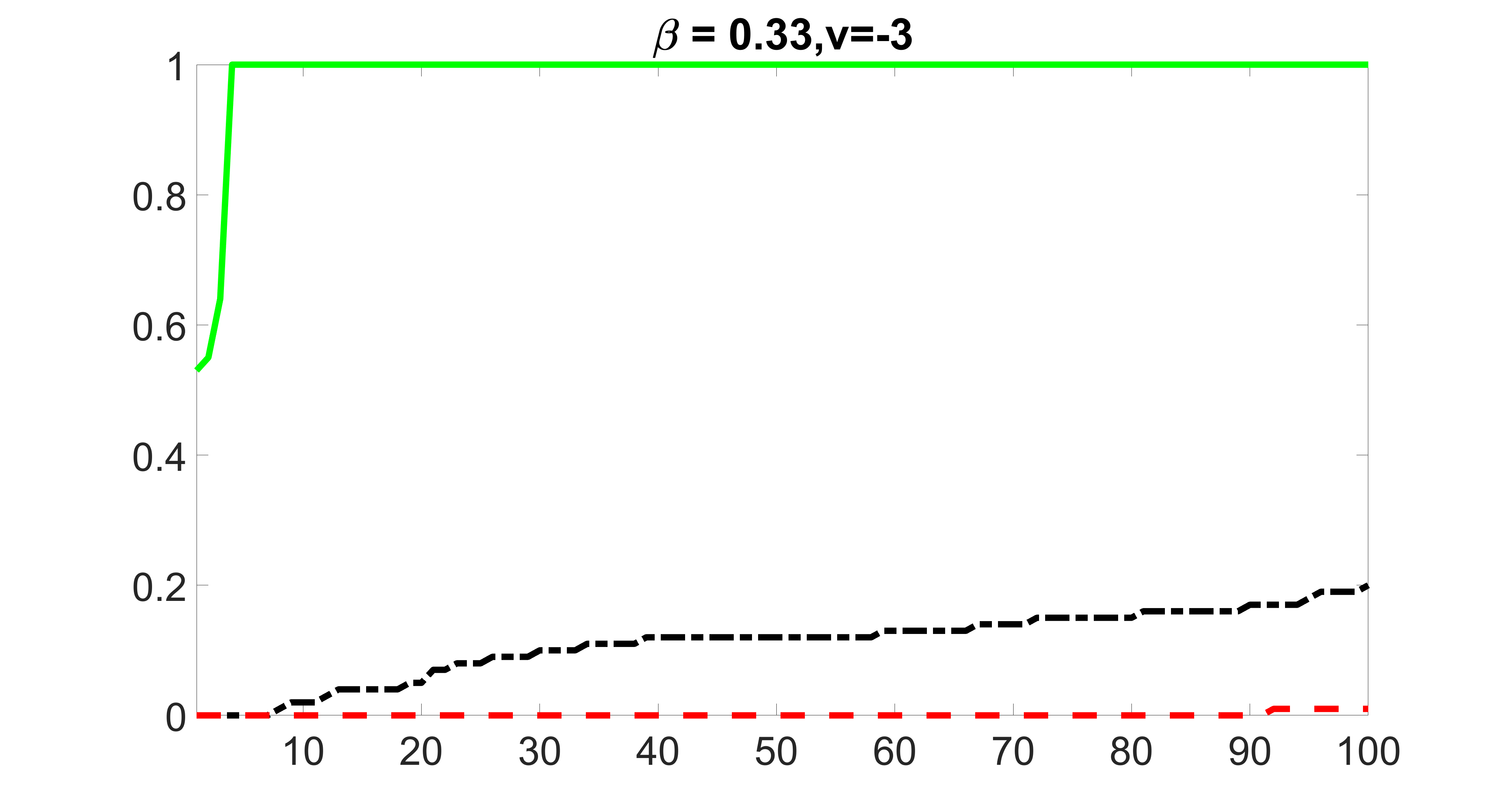}}
  \subcaptionbox{Precision: strong \\ outcome, zero exposure}[0.45\linewidth]
 {\includegraphics[width=6cm,height=3.5cm]{./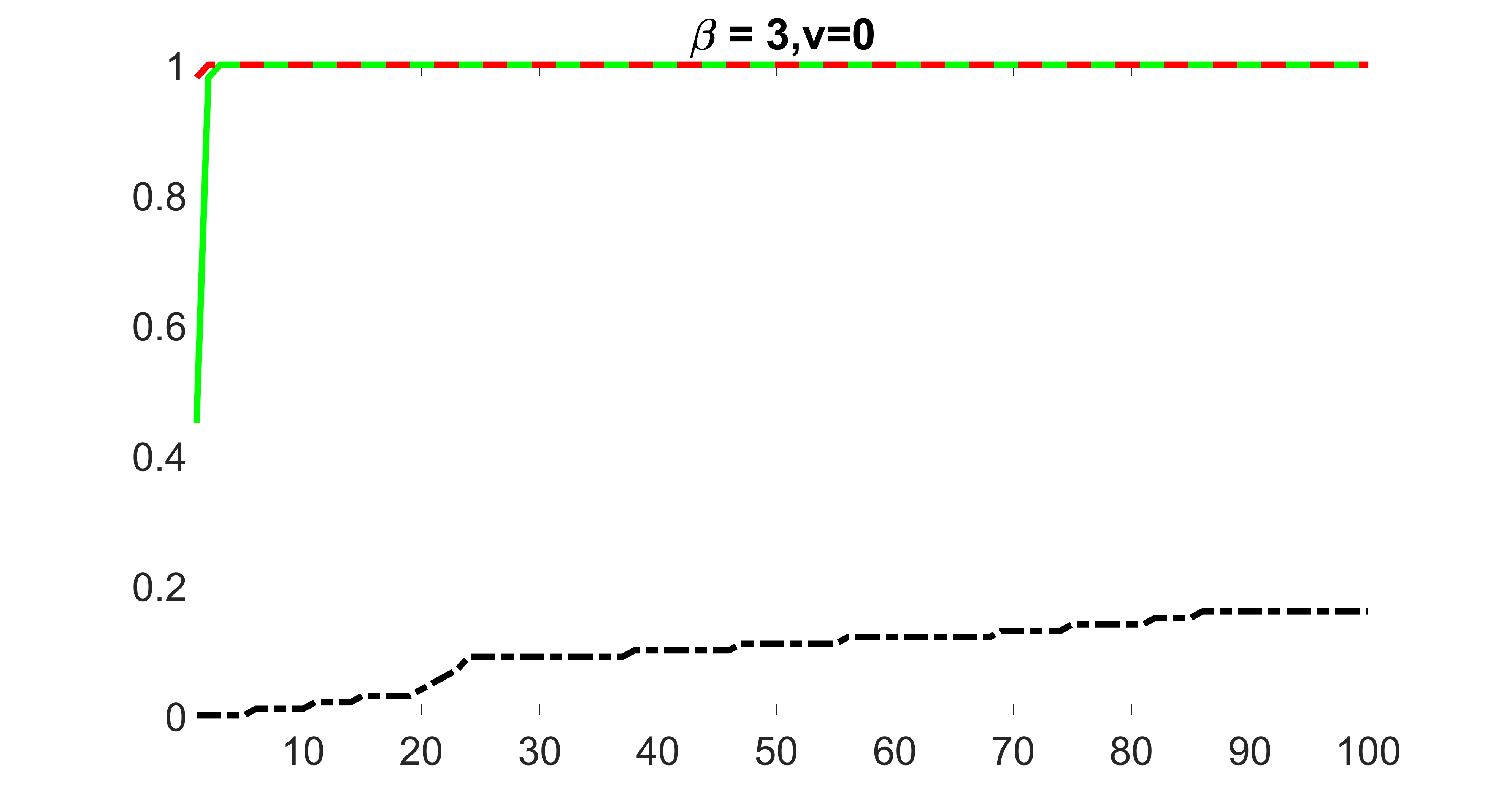}}
  \subcaptionbox{Precision: medium \\ outcome, zero exposure}[0.45\linewidth]
 {\includegraphics[width=6cm,height=3.5cm]{./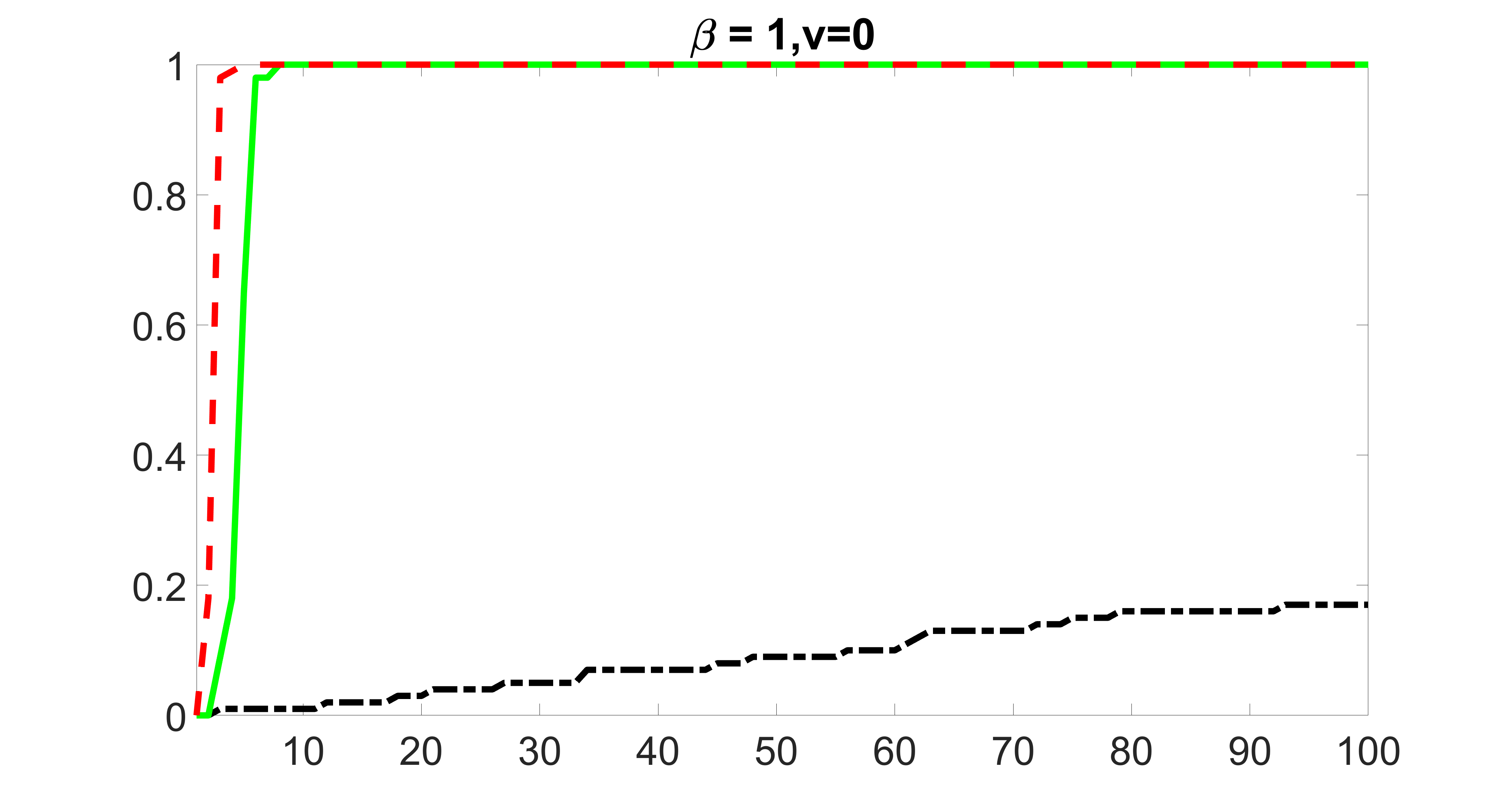}}
  \subcaptionbox{Precision: weak \\ outcome, zero exposure}[0.45\linewidth]
 {\includegraphics[width=6cm,height=3.5cm]{./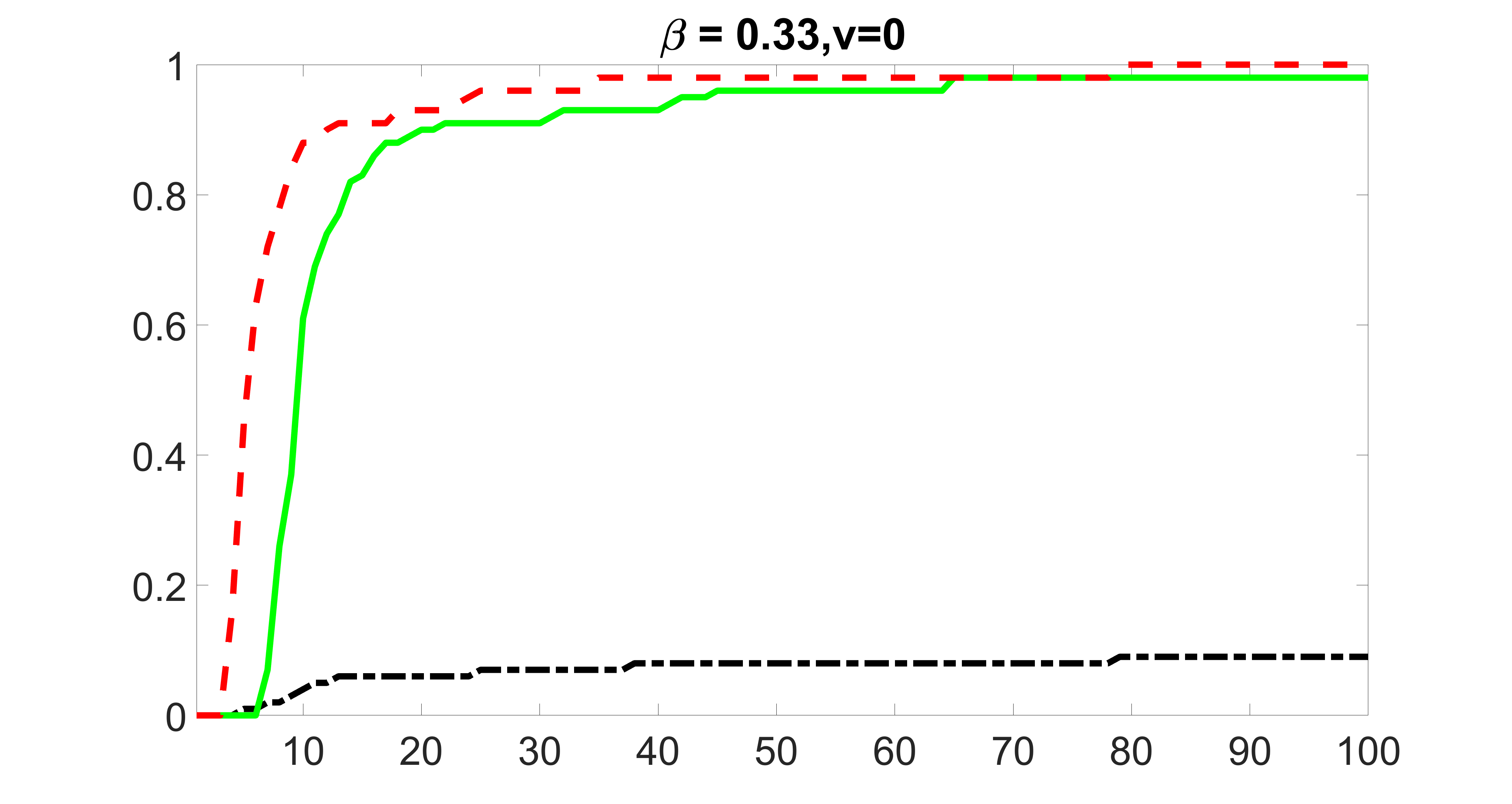}}
  \subcaptionbox{Overall coverage of $\mathcal{M}_1$}[0.45\linewidth]
 {\includegraphics[width=6cm,height=3.5cm]{./plotsMainArkSupp/sim1oal_p64_v2_optNorm2_n200s5000coverage_snry8rho202.png}}
\caption{Simulation results for the case $(n,s,\sigma,\rho_2) = (200,5000,0.5,0.2)$: Panels (a) -- (f) plot the average coverage proportion for $X_l$, where $l=1,2,3,104,105$ and $106$. Panels (a) -- (c) correspond to strong outcome and weak exposure predictor, moderate outcome and moderate exposure predictor and weak outcome and strong exposure predictor; Panels (d) -- (f) correspond to strong, moderate and weak predictors of outcome only. Panel (g) plots the average coverage proportion for the index set $\mathcal{M}_1 = \{1,2,3,104,105,106\}$. The x-axis represents the size of $\widehat{\mathcal{M}} $, while
y-axis denotes the average proportion. The green solid, the red dashed and the black dash dotted lines denote our joint screening method, the outcome screening method, and the intersection screening method, respectively. }
\label{sim1step1n200sigma025rho202}
\end{figure}

\begin{figure}[htbp]
\captionsetup[subfigure]{justification=centering}
\centering
 \subcaptionbox{Confounder: strong \\ outcome, weak exposure}[0.45\linewidth]
 {\includegraphics[width=6cm,height=3.5cm]{./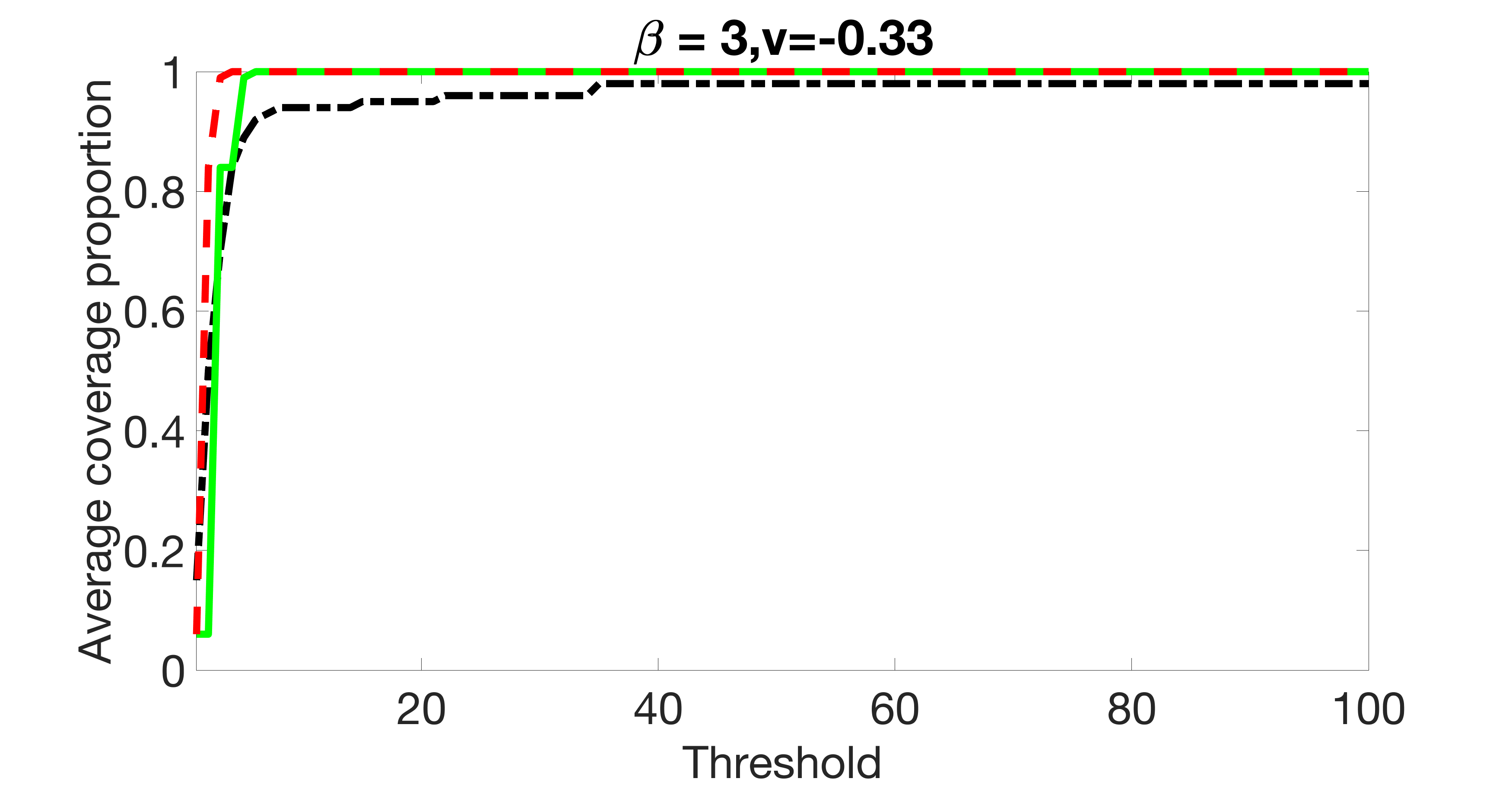}}
 \subcaptionbox{Confounder: medium \\ outcome, medium exposure}[0.45\linewidth]
 {\includegraphics[width=6cm,height=3.5cm]{./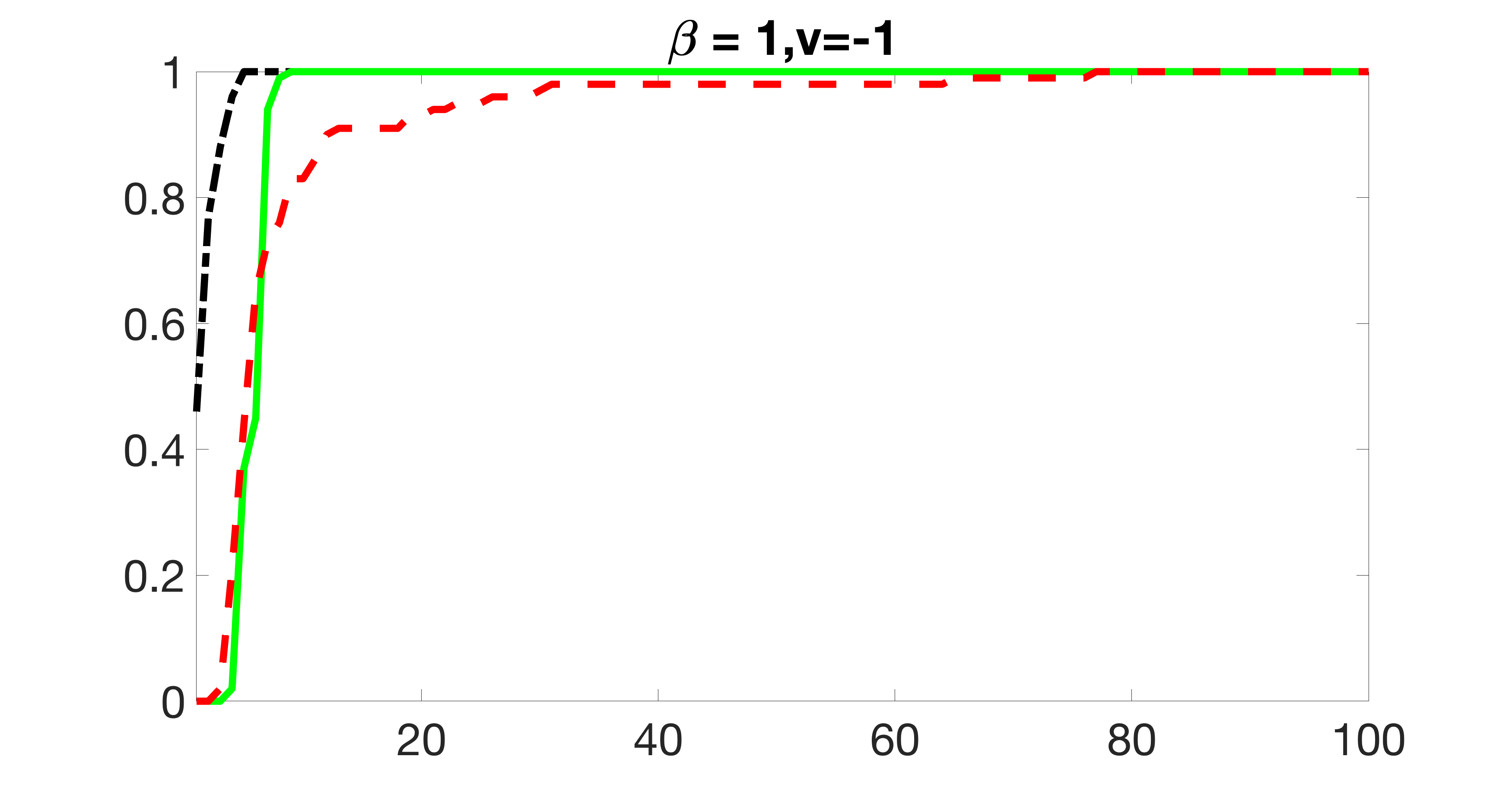}}
  \subcaptionbox{Confounder: weak \\ outcome, strong exposure}[0.45\linewidth]
 {\includegraphics[width=6cm,height=3.5cm]{./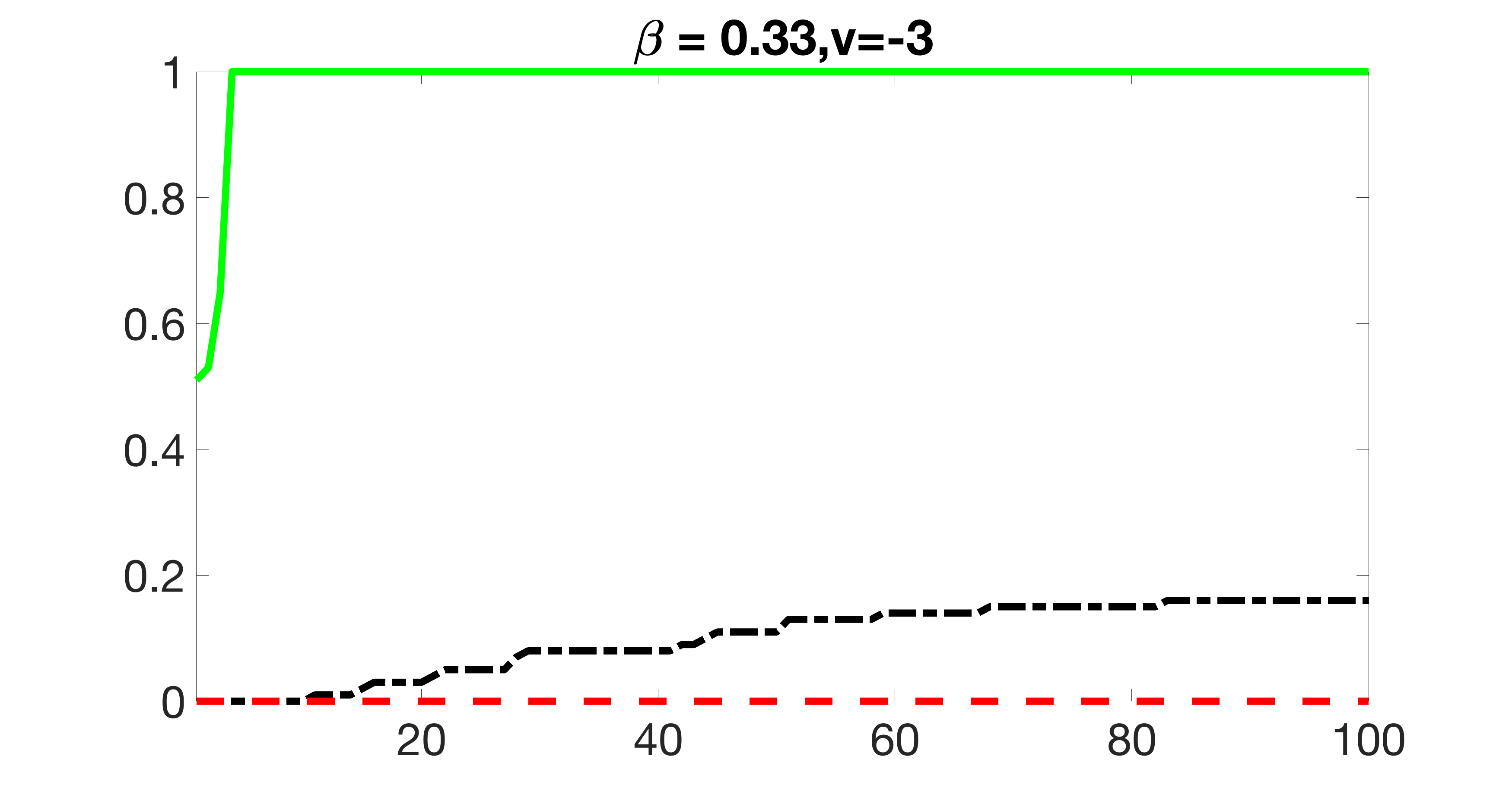}}
  \subcaptionbox{Precision: strong \\ outcome, zero exposure}[0.45\linewidth]
 {\includegraphics[width=6cm,height=3.5cm]{./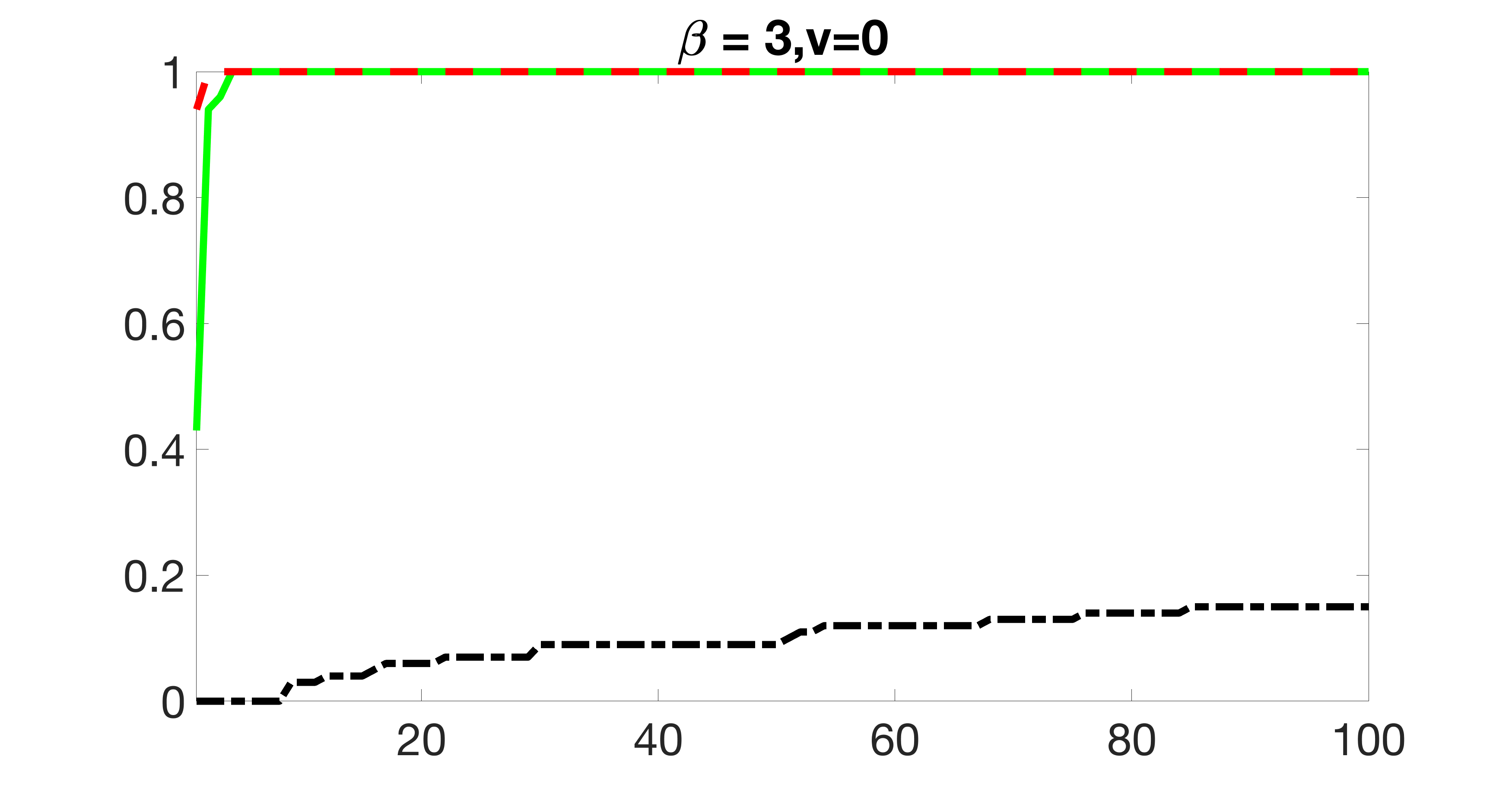}}
  \subcaptionbox{Precision: medium \\ outcome, zero exposure}[0.45\linewidth]
 {\includegraphics[width=6cm,height=3.5cm]{./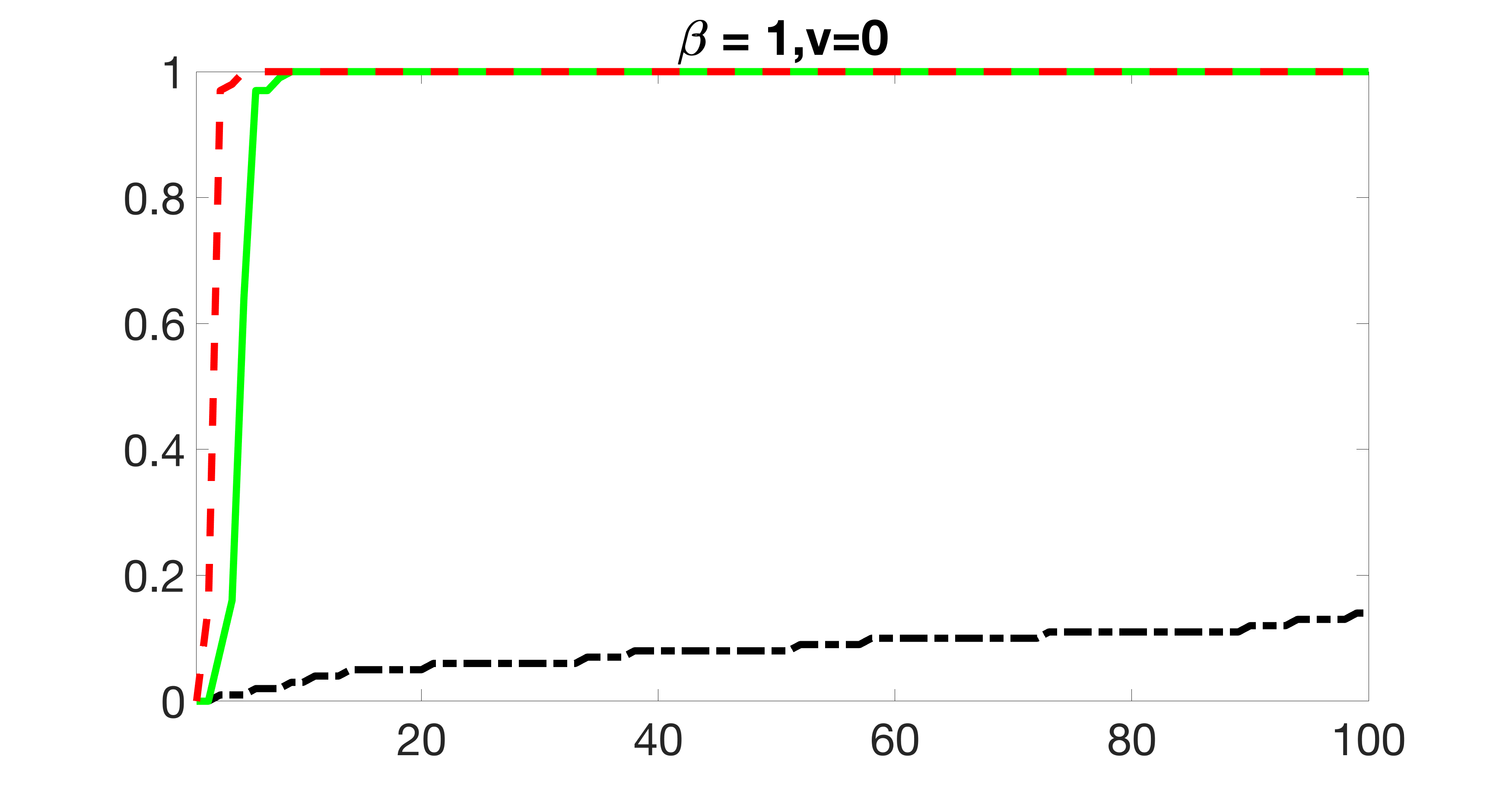}}
  \subcaptionbox{Precision: weak \\ outcome, zero exposure}[0.45\linewidth]
 {\includegraphics[width=6cm,height=3.5cm]{./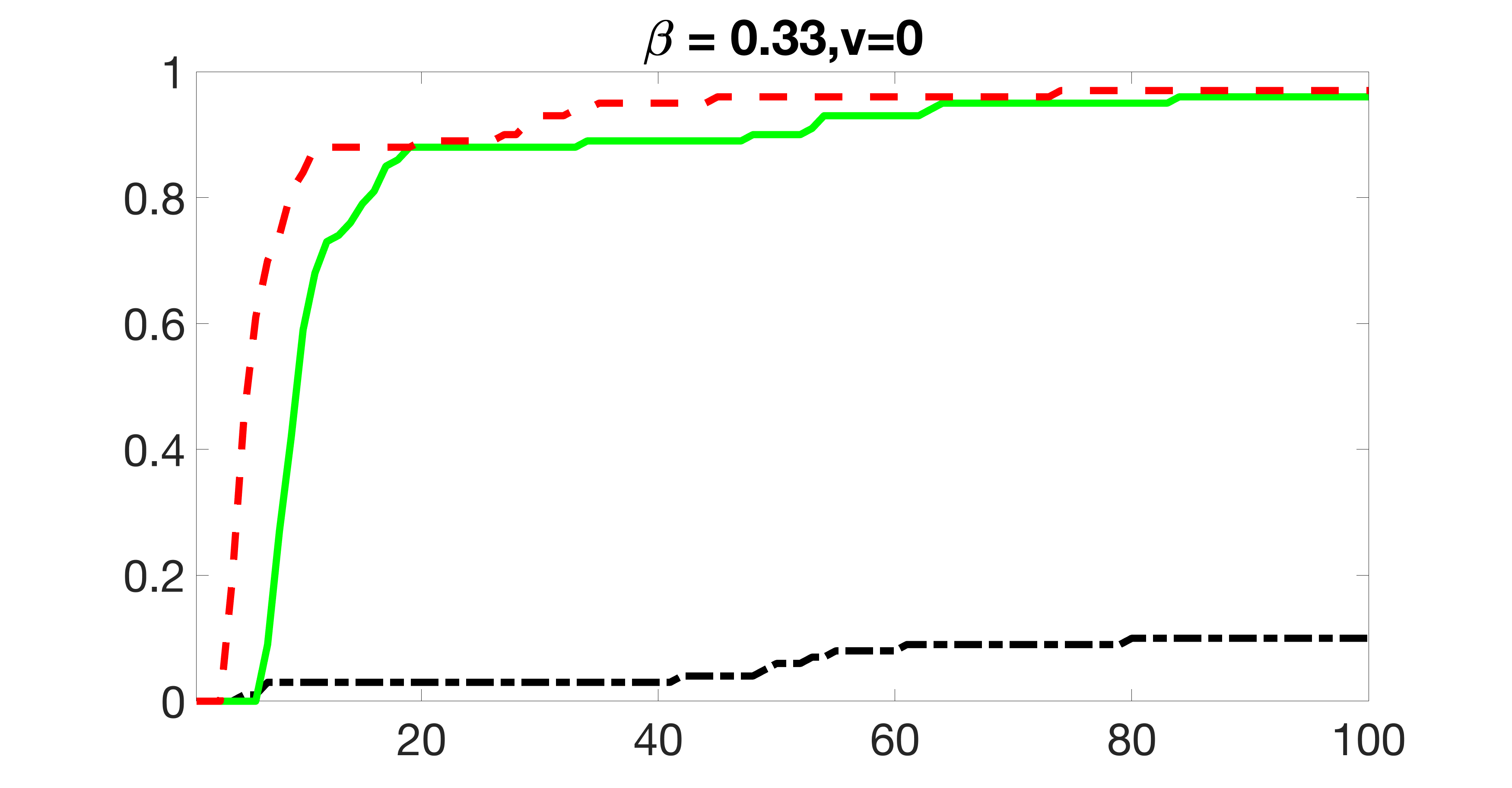}}
  \subcaptionbox{Overall coverage of $\mathcal{M}_1$}[0.45\linewidth]
 {\includegraphics[width=6cm,height=3.5cm]{./plotsMainArkSupp/sim1oal_p64_v2_optNorm2_n200s5000coverage_snry8rho208.png}}
\caption{Simulation results for the case $(n,s,\sigma,\rho_2) = (200,5000,0.5,0.8)$: Panels (a) -- (f) plot the average coverage proportion for $X_l$, where $l=1,2,3,104,105$ and $106$. Panels (a) -- (c) correspond to strong outcome and weak exposure predictor, moderate outcome and moderate exposure predictor and weak outcome and strong exposure predictor; Panels (d) -- (f) correspond to strong, moderate and weak predictors of outcome only. Panel (g) plots the average coverage proportion for the index set $\mathcal{M}_1 = \{1,2,3,104,105,106\}$. The x-axis represents the size of $\widehat{\mathcal{M}} $, while
y-axis denotes the average proportion. The green solid, the red dashed and the black dash dotted lines denote our joint screening method, the outcome screening method, and the intersection screening method, respectively. }
\label{sim1step1n200sigma025rho208}
\end{figure}

\subsection{Screening and estimation under different sizes of $\widehat{\mathcal{M}}_{1}^{*}$ and $\widehat{\mathcal{M}}_{2}$}
\label{Screening and estimation under different sizes}

In addition, we conduct a similar study following the same setting as described in Section \ref{Simulation for screening} of the main paper, where $|\widehat{\mathcal{M}}_2|/|\widehat{\mathcal{M}}_1^*|=2$ and $1/2$. We use $n = 500$ here since it is close to the number of observations $n = 566$ in the real data analysis. Specifically, as summarized in Figures \ref{sim1step1n500sigma1_propSb05} and \ref{sim1step1n500sigma1_propSb2}, when the ratio of $|\widehat{\mathcal{M}}_2|/|\widehat{\mathcal{M}}_1^*|$ is taken as $1/2$ or $2$, the performances of the proposed joint screening method are quite similar to each other. By comparing them with Figure \ref{sim1step1n500sigma1}, where $|\widehat{\mathcal{M}}_2|/|\widehat{\mathcal{M}}_1^*|=1$, the performances of the screening step results are quite similar. 

\begin{figure}[htbp]
\captionsetup[subfigure]{justification=centering}
\centering
 \subcaptionbox{Confounder: strong \\ outcome, weak exposure}[0.45\linewidth]
 {\includegraphics[width=6cm,height=3.5cm]{./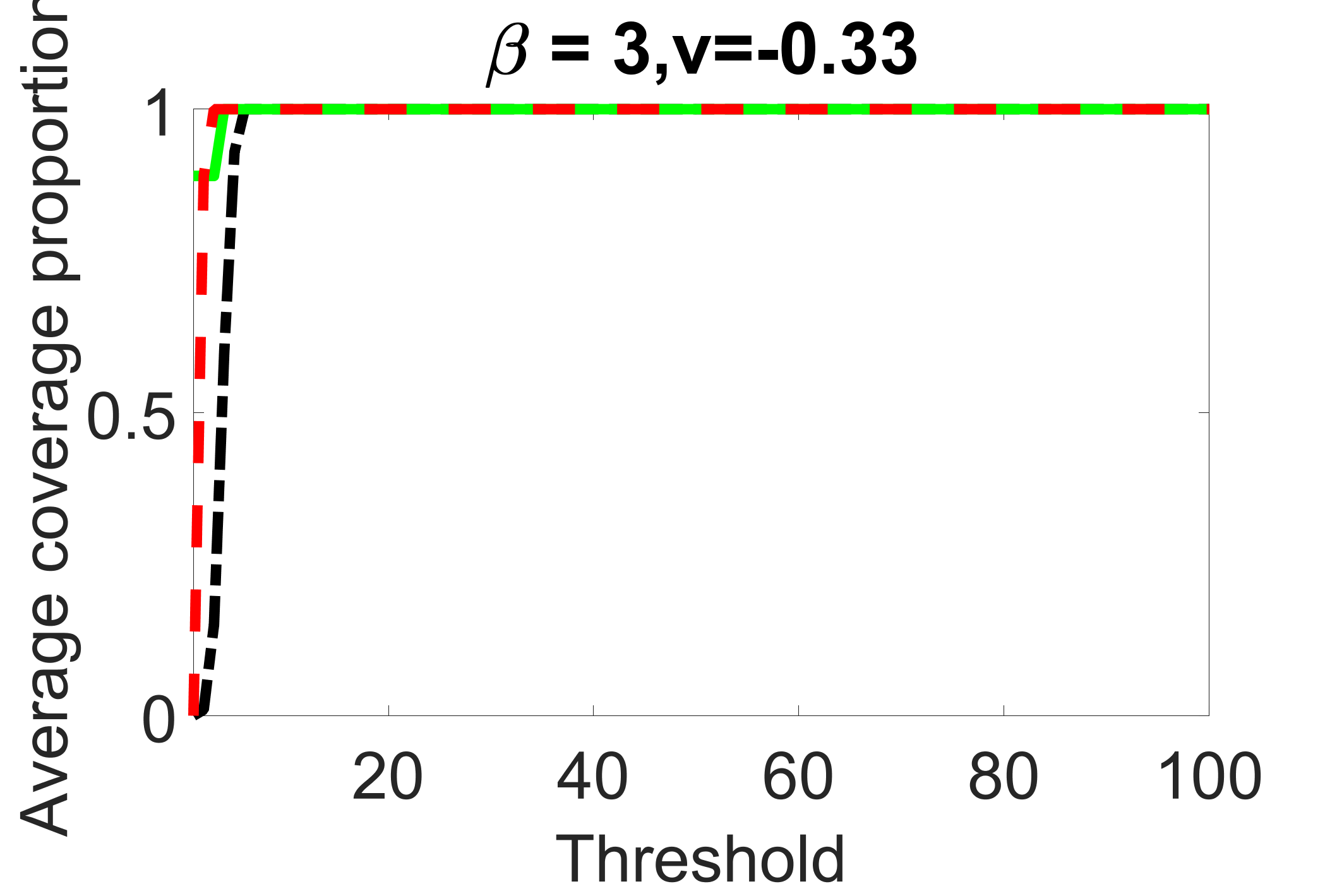}}
 \subcaptionbox{Confounder: medium \\ outcome, medium exposure}[0.45\linewidth]
 {\includegraphics[width=6cm,height=3.5cm]{./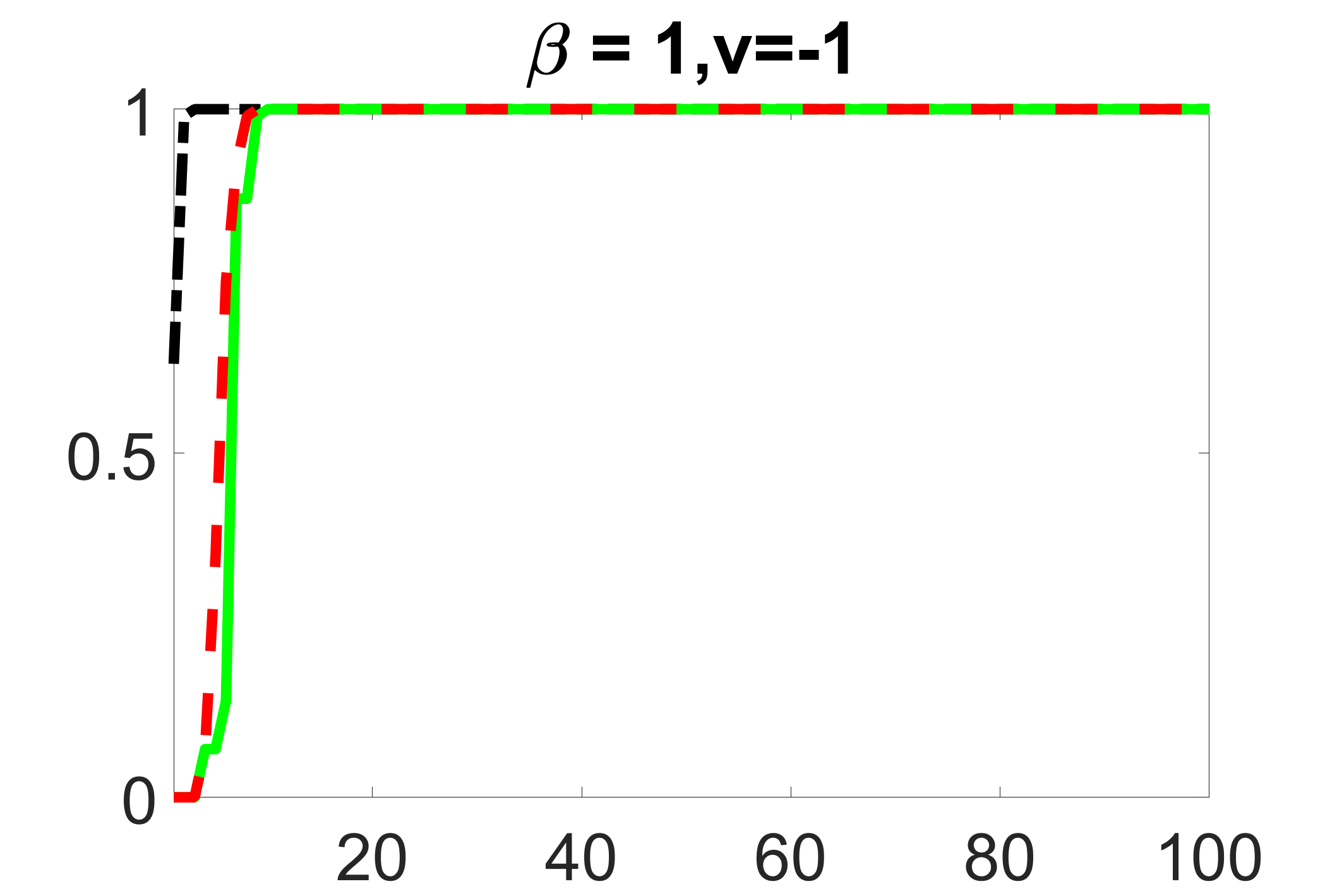}}
  \subcaptionbox{Confounder: weak \\ outcome, strong exposure}[0.45\linewidth]
 {\includegraphics[width=6cm,height=3.5cm]{./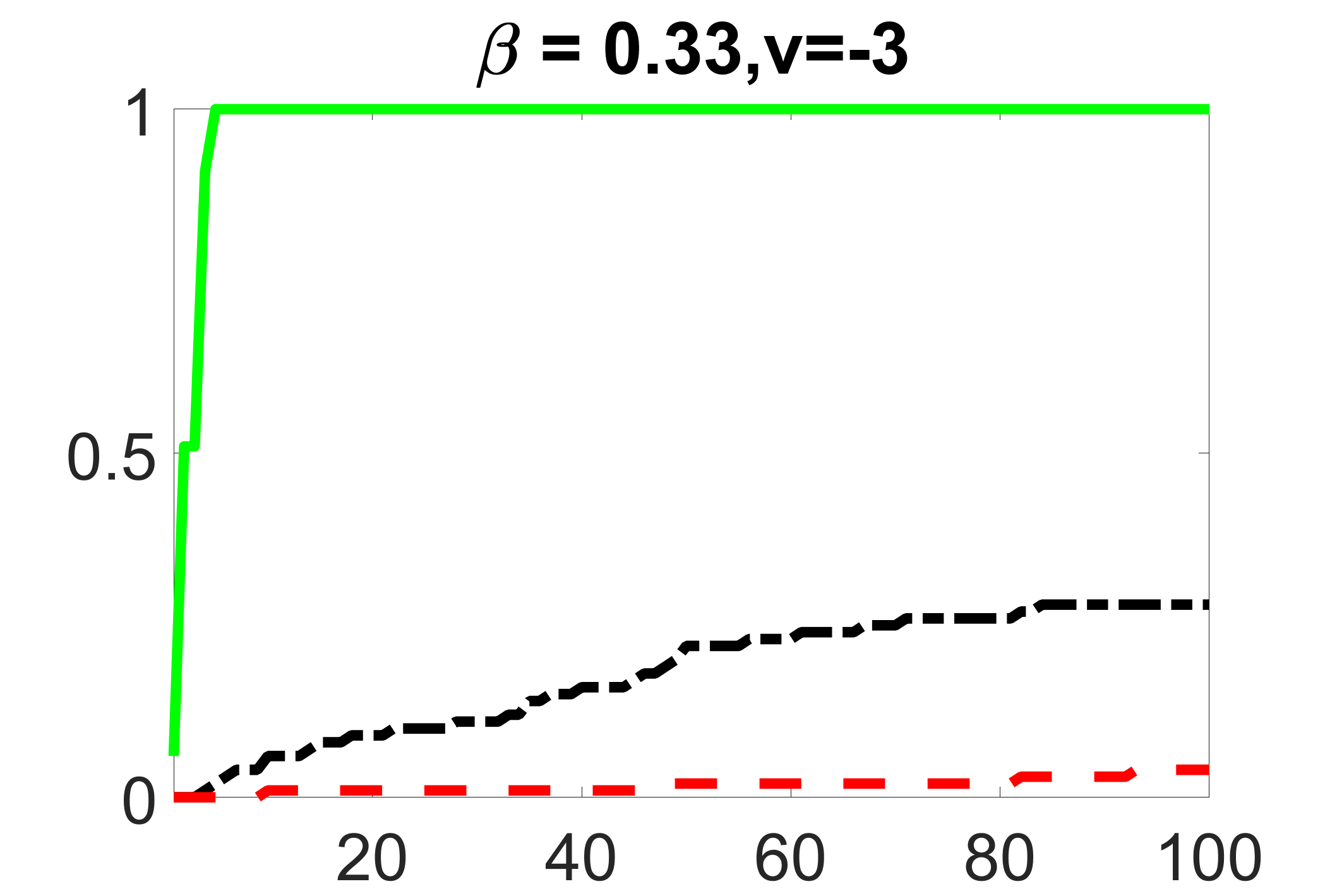}}
  \subcaptionbox{Precision: strong \\ outcome, zero exposure}[0.45\linewidth]
 {\includegraphics[width=6cm,height=3.5cm]{./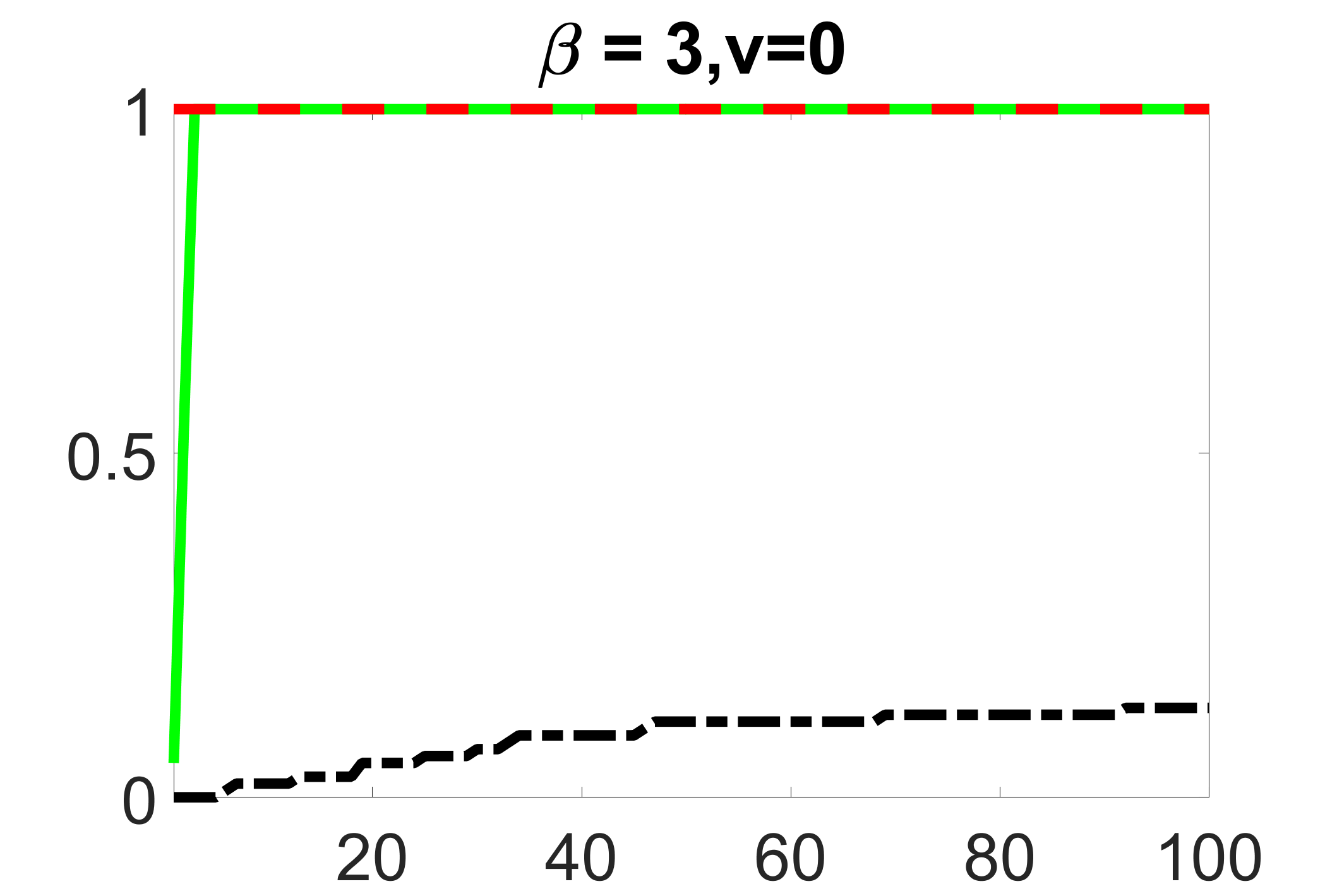}}
  \subcaptionbox{Precision: medium \\ outcome, zero exposure}[0.45\linewidth]
 {\includegraphics[width=6cm,height=3.5cm]{./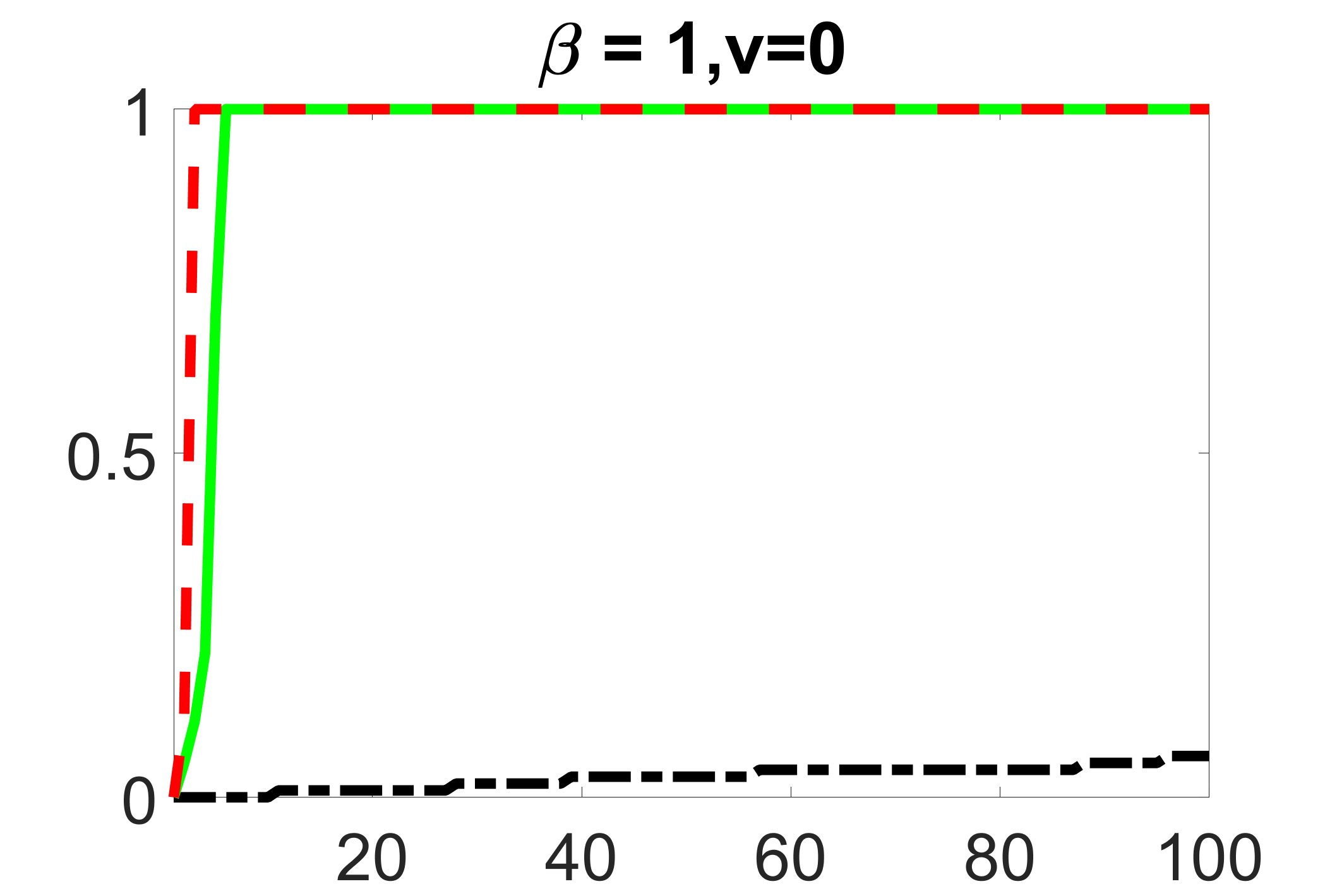}}
  \subcaptionbox{Precision: weak \\ outcome, zero exposure}[0.45\linewidth]
 {\includegraphics[width=6cm,height=3.5cm]{./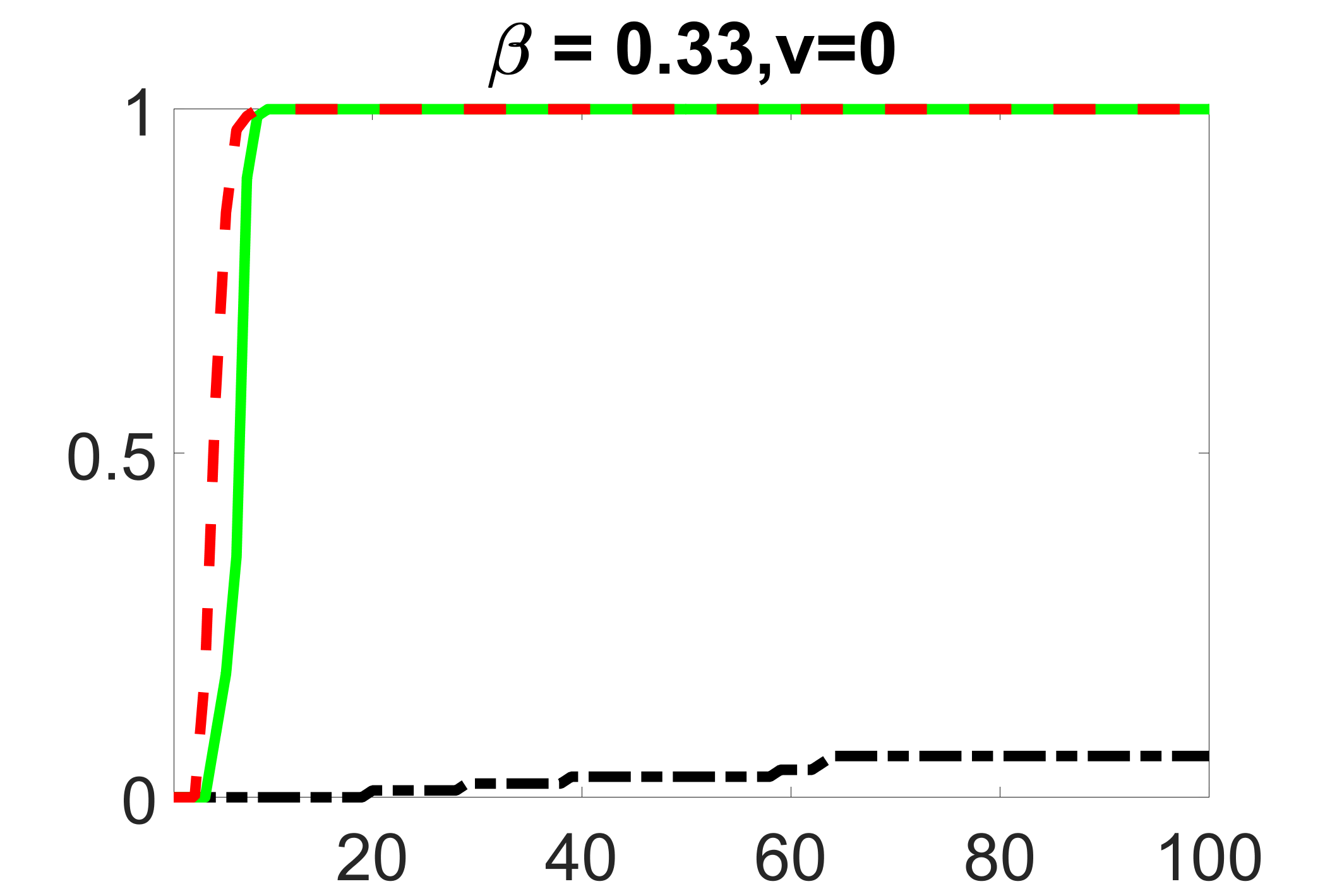}}
  \subcaptionbox{Overall coverage of $\mathcal{M}_1$}[0.45\linewidth]
 {\includegraphics[width=6cm,height=3.5cm]{./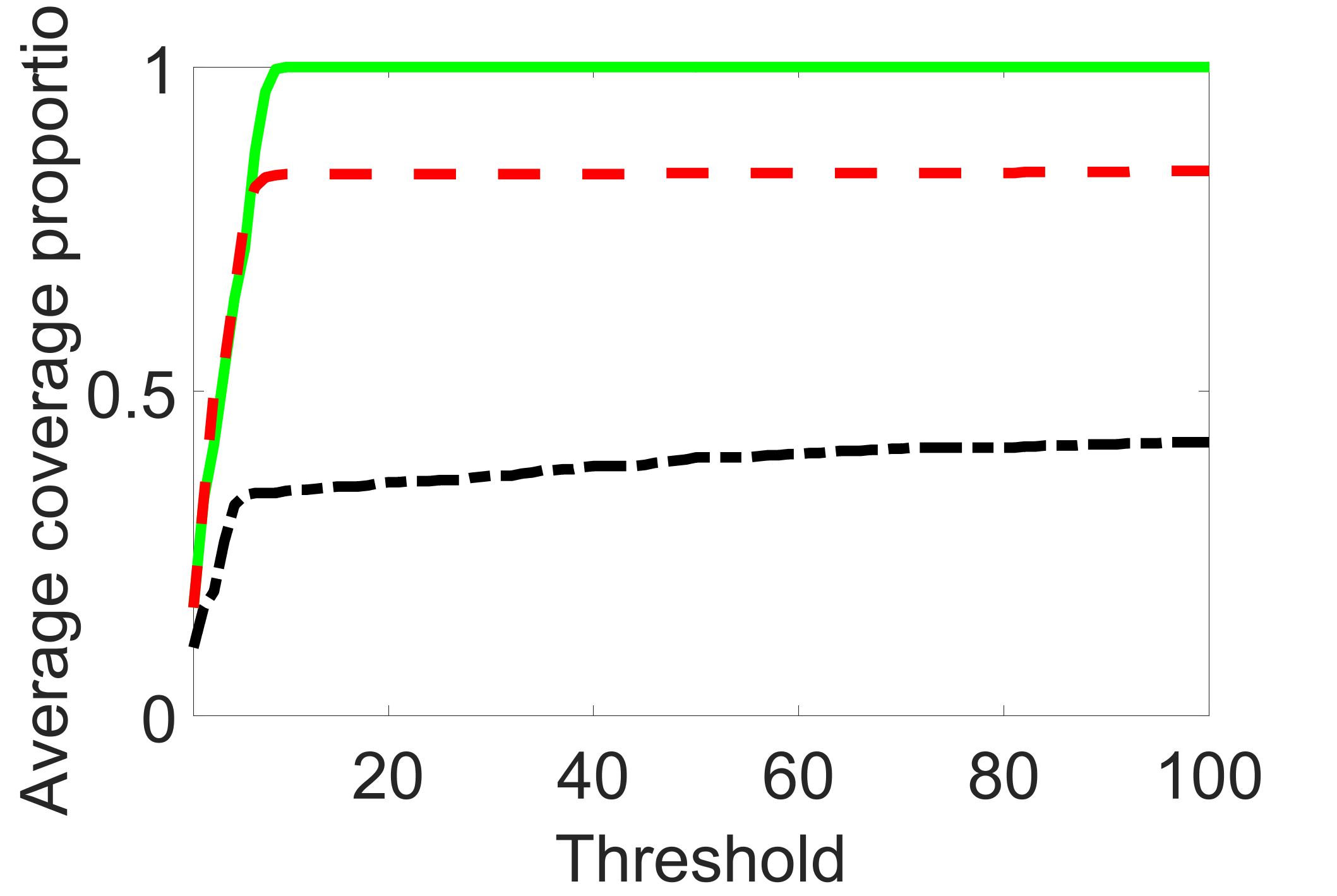}}
\caption{Simulation results for the case $(n,s,\sigma) = (500,5000,1)$  when $|\widehat{\mathcal{M}}_2|/|\widehat{\mathcal{M}}_1^*|=1/2$: Panels (a) -- (f) plot the average coverage proportion for $X_l$, where $l=1,2,3,104,105$ and $106$. Panels (a) -- (c) correspond to strong outcome and weak exposure predictor, moderate outcome and moderate exposure predictor and weak outcome and strong exposure predictor; Panels (d) -- (f) correspond to strong, moderate and weak predictors of outcome only. Panel (g) plots the average coverage proportion for the index set $\mathcal{M}_1 = \{1,2,3,104,105,106\}$. The x-axis represents the size of $\widehat{\mathcal{M}} $, while
y-axis denotes the average proportion. The green solid, the red dashed and the black dash dotted lines denote our joint screening method, the outcome screening method, and the intersection screening method, respectively.}
\label{sim1step1n500sigma1_propSb05}
\end{figure}

\begin{figure}[htbp]
\captionsetup[subfigure]{justification=centering}
\centering
 \subcaptionbox{Confounder: strong \\ outcome, weak exposure}[0.45\linewidth]
 {\includegraphics[width=6cm,height=3.5cm]{./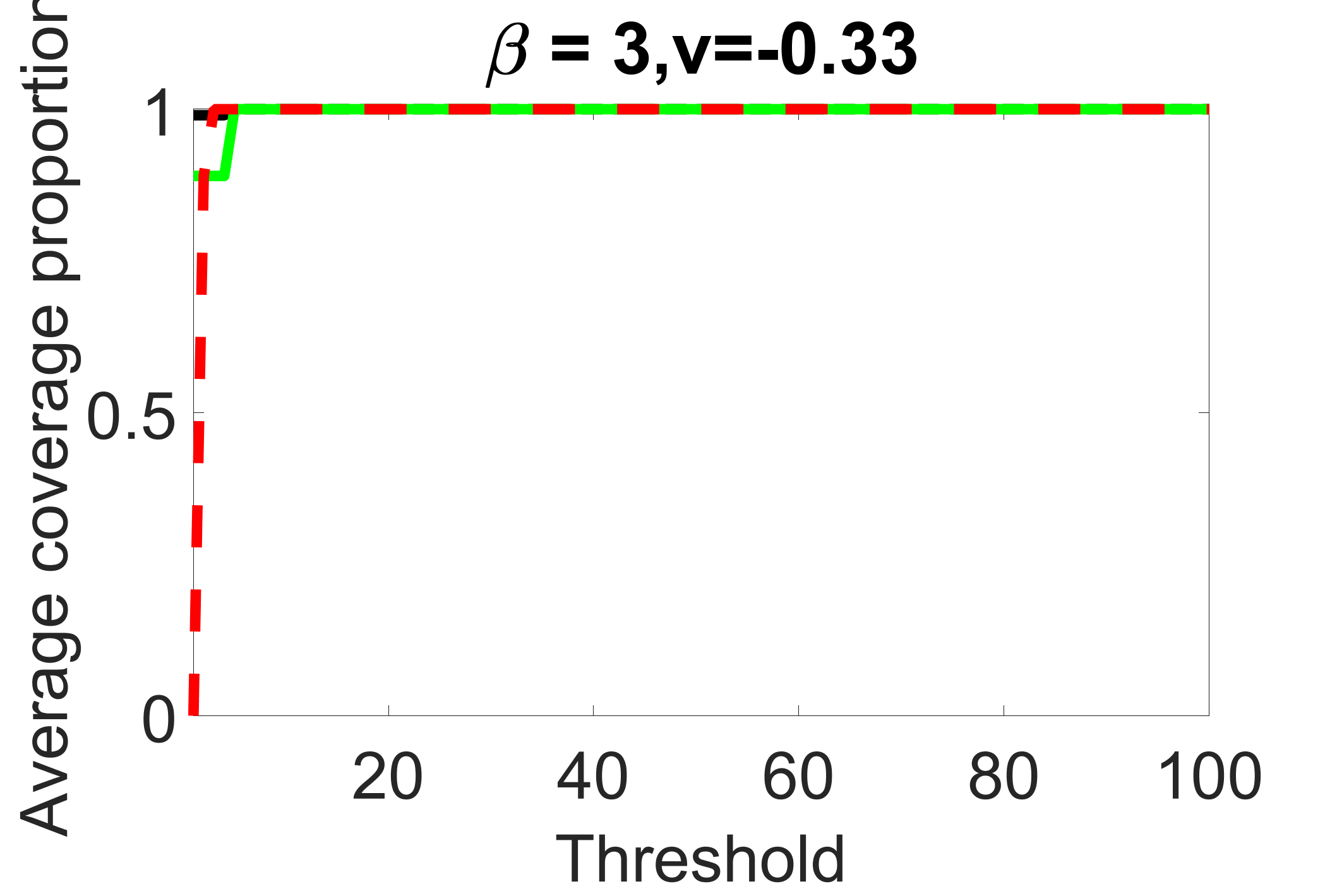}}
 \subcaptionbox{Confounder: medium \\ outcome, medium exposure}[0.45\linewidth]
 {\includegraphics[width=6cm,height=3.5cm]{./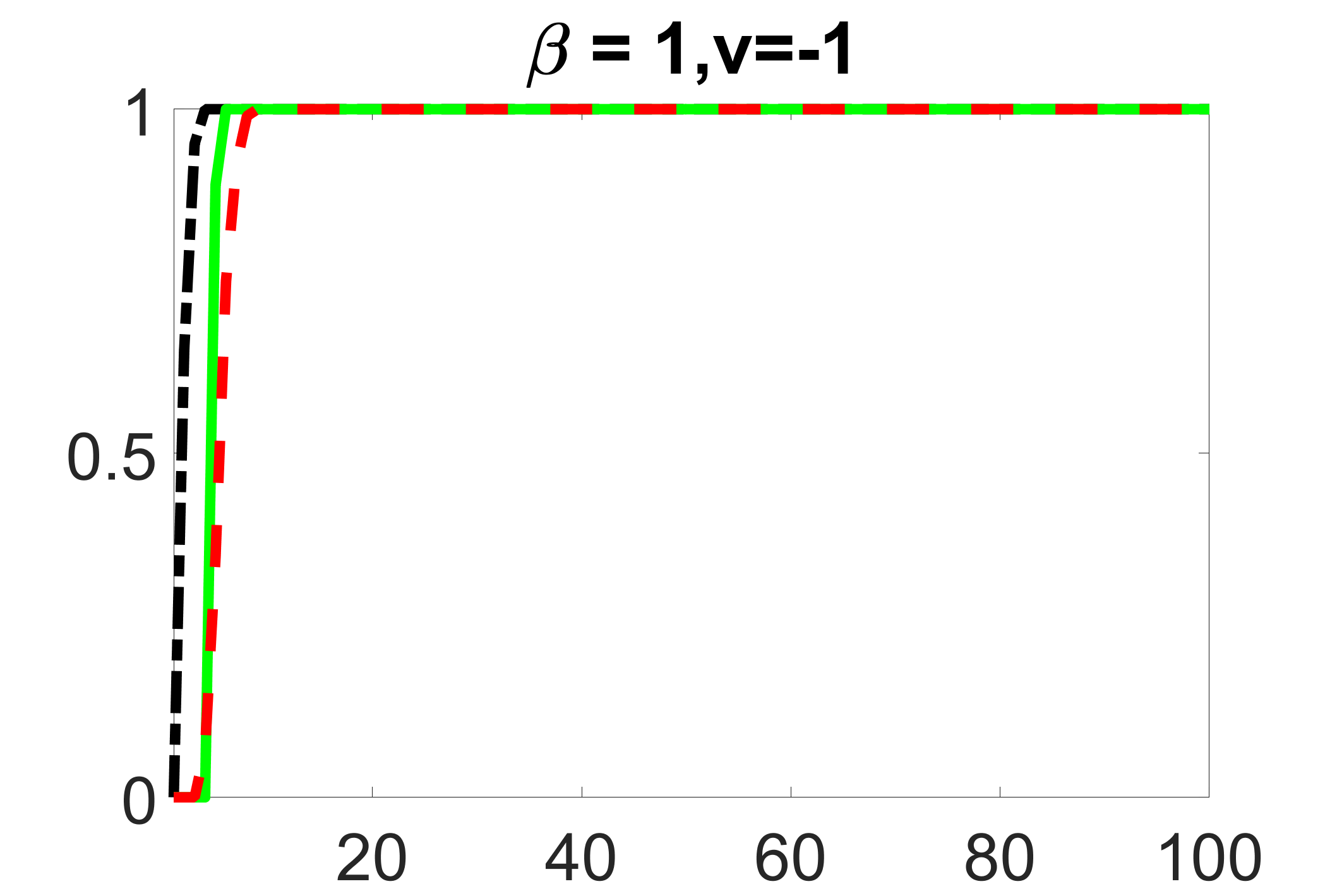}}
  \subcaptionbox{Confounder: weak \\ outcome, strong exposure}[0.45\linewidth]
 {\includegraphics[width=6cm,height=3.5cm]{./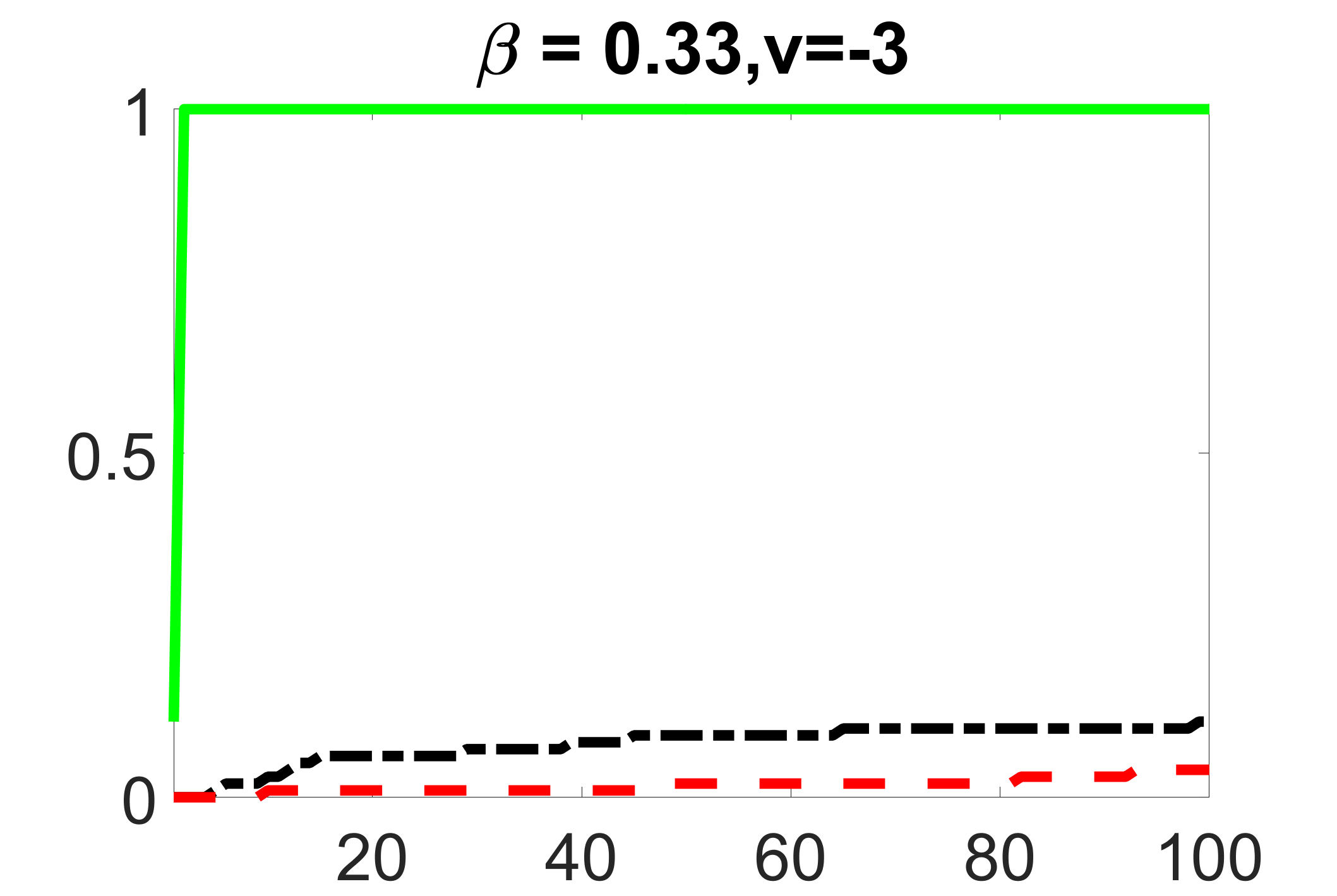}}
  \subcaptionbox{Precision: strong \\ outcome, zero exposure}[0.45\linewidth]
 {\includegraphics[width=6cm,height=3.5cm]{./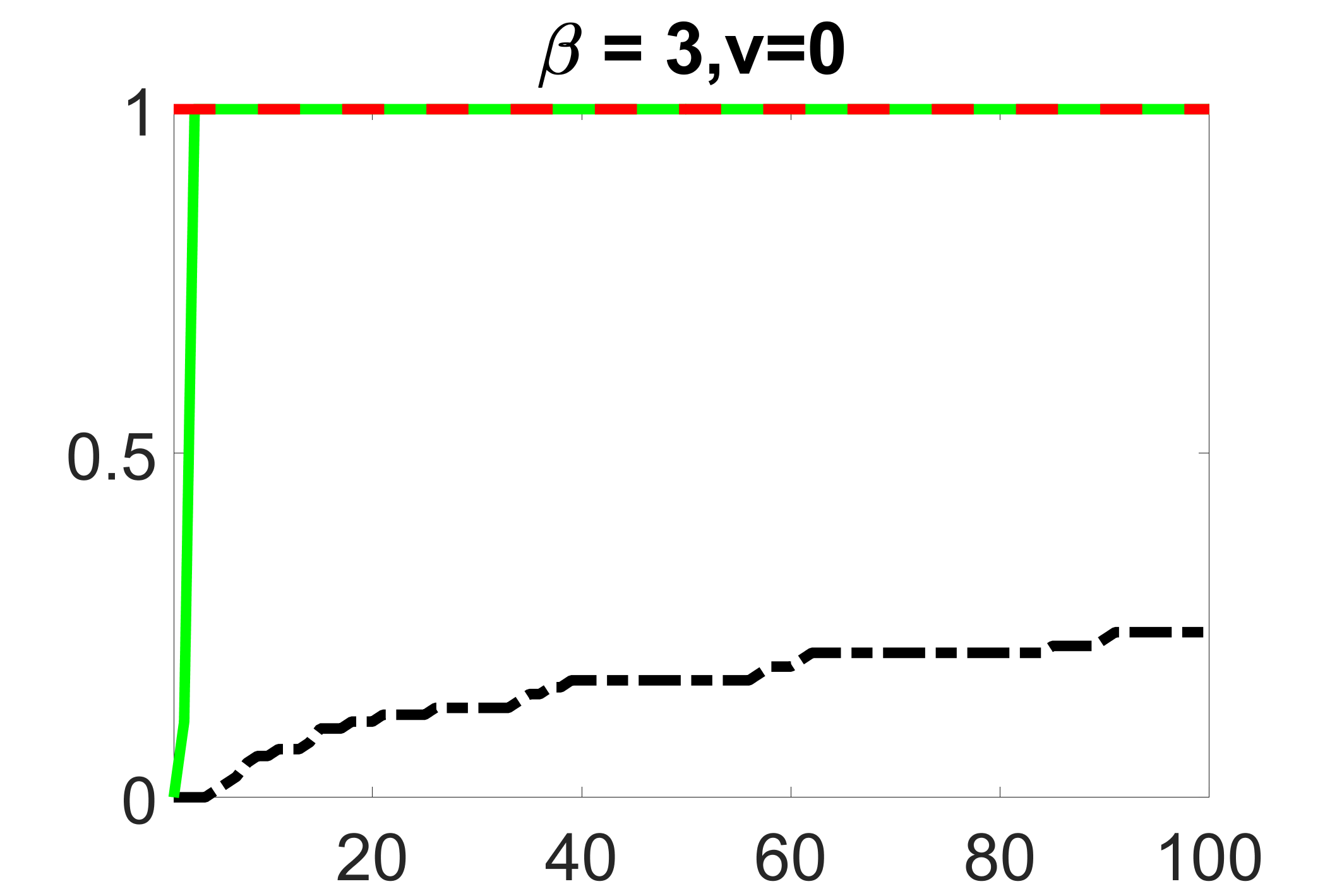}}
  \subcaptionbox{Precision: medium \\ outcome, zero exposure}[0.45\linewidth]
 {\includegraphics[width=6cm,height=3.5cm]{./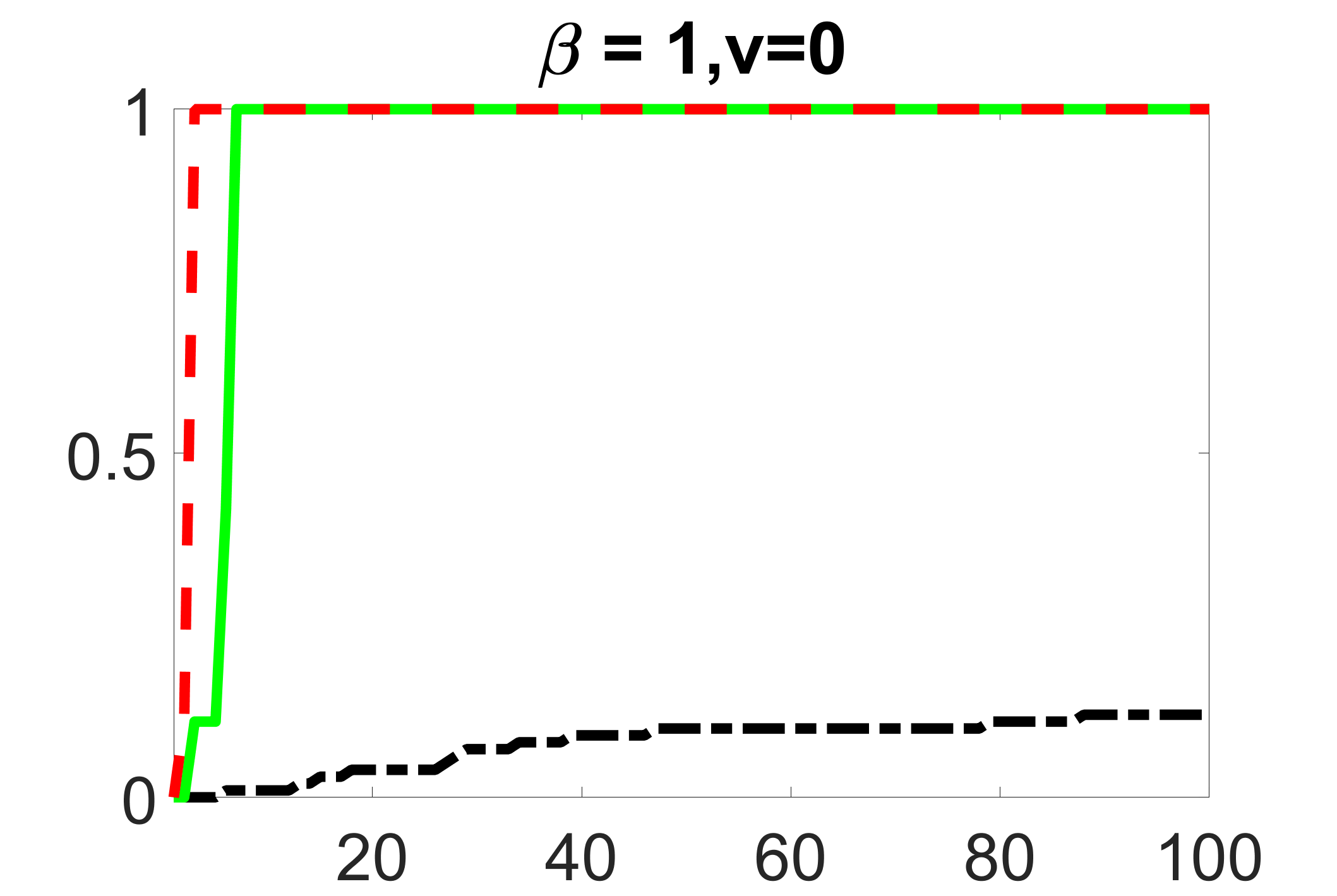}}
  \subcaptionbox{Precision: weak \\ outcome, zero exposure}[0.45\linewidth]
 {\includegraphics[width=6cm,height=3.5cm]{./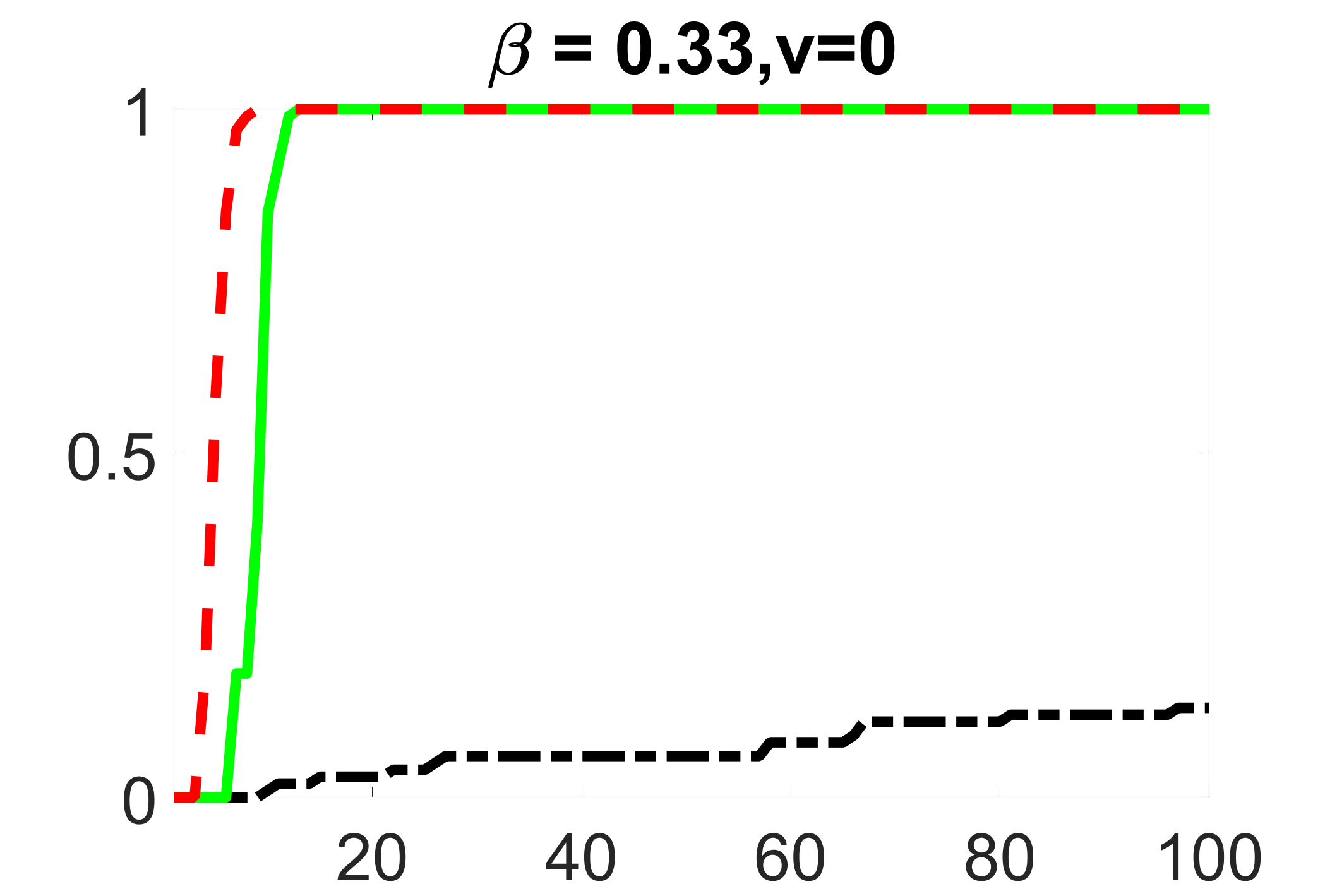}}
  \subcaptionbox{Overall coverage of $\mathcal{M}_1$}[0.45\linewidth]
 {\includegraphics[width=7cm,height=3.75cm]{./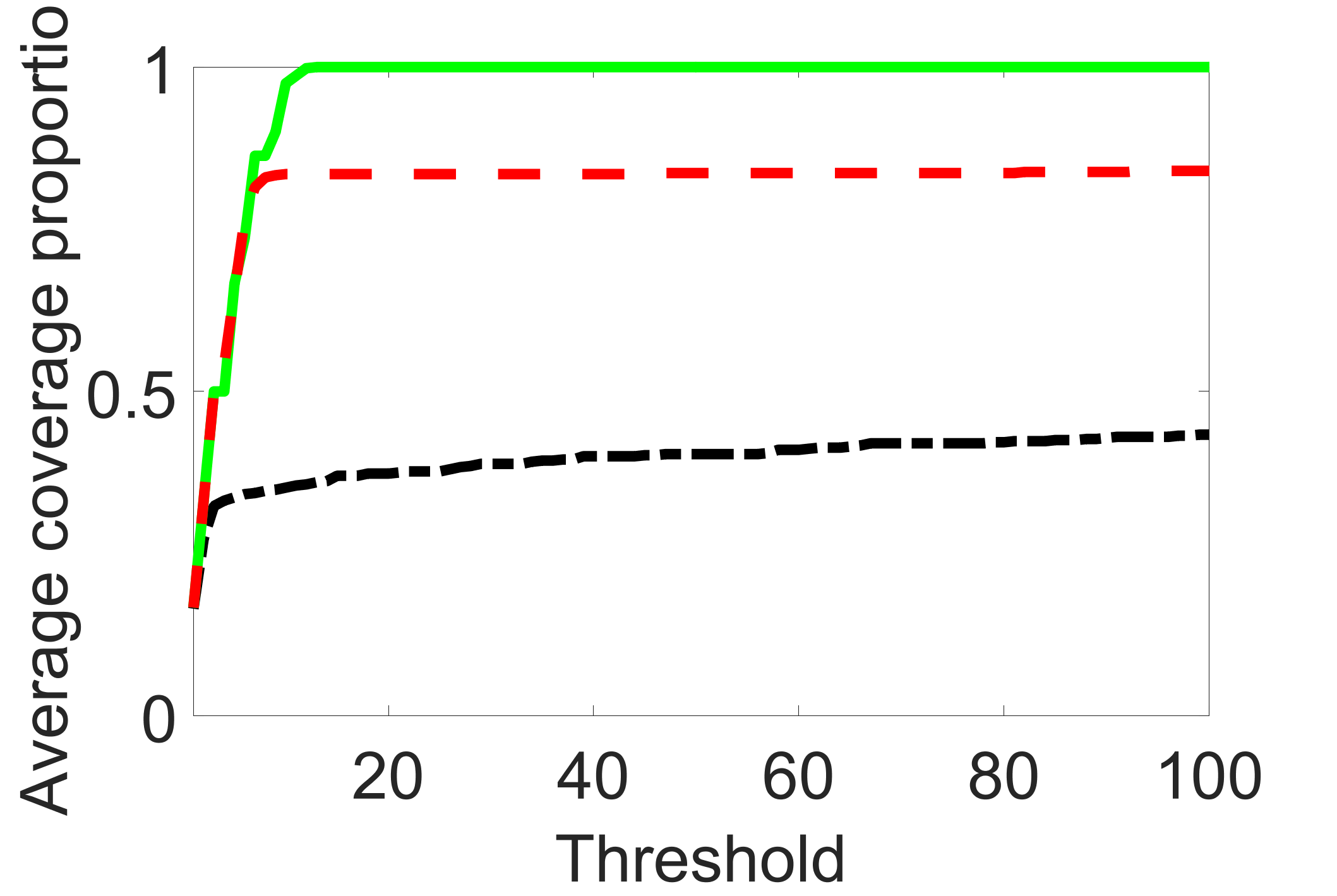}}
\caption{Simulation results for the case $(n,s,\sigma) = (500,5000,1)$  when $|\widehat{\mathcal{M}}_2|/|\widehat{\mathcal{M}}_1^*|=2$: Panels (a) -- (f) plot the average coverage proportion for $X_l$, where $l=1,2,3,104,105$ and $106$. Panels (a) -- (c) correspond to strong outcome and weak exposure predictor, moderate outcome and moderate exposure predictor and weak outcome and strong exposure predictor; Panels (d) -- (f) correspond to strong, moderate and weak predictors of outcome only. Panel (g) plots the average coverage proportion for the index set $\mathcal{M}_1 = \{1,2,3,104,105,106\}$. The x-axis represents the size of $\widehat{\mathcal{M}} $, while
y-axis denotes the average proportion. The green solid, the red dashed and the black dash dotted lines denote our joint screening method, the outcome screening method, and the intersection screening method, respectively.}
\label{sim1step1n500sigma1_propSb2}
\end{figure}

Furthermore, we evaluate the performance of our estimation procedure after the first-step screening. For the size of $ \widehat{\mathcal{M}} $ in the screening step, we set $|\widehat{\mathcal{M}}| = \lfloor n / \log(n) \rfloor$, so that $|\widehat{\mathcal{M}}|=89$ for sample size $n = 500$. We report the mean squared errors (MSEs) for $\bm{\beta}$ and $\bm{B}$ defined as
$||{\bm\beta}_{}-\widehat{\bm\beta}||_2^2$ and $ \|{\bm{B}}-\widehat{\bm{B}}\|_F^2$, respectively. As summarized in Table \ref{sim1t1mmn500sigma1propSb}, the average MSEs for $\bm\beta$ and $\bm{B}$ among 100 Monte Carlo runs are all similar to each other for the different choices of $|\widehat{\mathcal{M}}_1^*|$ and $|\widehat{\mathcal{M}}_2|$. Therefore, depending on the prior knowledge about the sizes and strengths of signals of confounding, precision and instrumental variables, one may choose $\widehat{\mathcal{M}}_1^*$ and $\widehat{\mathcal{M}}_2$ differently, though the estimations of  $\bm{B}$ appear to be similar among the different choices.

\begin{table}[htbp]
\caption{Simulation results of the proposed estimates for $(n,s,\sigma) = (500,5000,1)$ , when $|\widehat{\mathcal{M}}_2|/|\widehat{\mathcal{M}}_1^*|$ is taken as 0.5, 1.0 and 2.0: the average MSEs for $\bm\beta$ and $ {\bm B}$, and their associated standard errors in the parentheses are reported. The results are based on 100 Monte Carlo repetitions.}
\centering
\begin{tabular}{ ccc}
$|\widehat{\mathcal{M}}_2|/|\widehat{\mathcal{M}}_1^*|$ & MSE $\bm\beta$ & MSE ${\bm{B}}$\\
\hline
0.5 &0.301(0.008)&0.567(0.005)\\
1.0&0.303(0.008)&0.574(0.006)\\
2.0&0.302(0.008)&0.574(0.006)\\
\end{tabular}
\label{sim1t1mmn500sigma1propSb}
\end{table}

\subsection{Screening and estimation using blockwise joint screening}
\label{Group Selection Using Linkage Disequilibrium Information}
In this section, we list additional screening and estimation results for Section \ref{Group Selection Using Linkage Disequilibrium Information main} of the main article. In particular, the results for the screening step, in which $s=5000$, $\sigma=1$, $n=200,500,1000$, and $K =2,4,6,12,24,52$, can be found in Figures \ref{sim3step1n200sizesig2sigma1} -- \ref{sim3step1n1000sizesig52sigma1}. The complete results for the second step estimation, in which $s=5000$, $\sigma=1$, $n=200, 500, 1000$, and $K =2,4,6,12,24,52$, can be found in Table \ref{sim1t2}.

\begin{figure}[htbp]
\captionsetup[subfigure]{justification=centering}
\centering
 \subcaptionbox{\footnotesize Confounder: strong \\ outcome, weak exposure}[0.45\linewidth]
 {\includegraphics[width=6cm,height=3.5cm]{./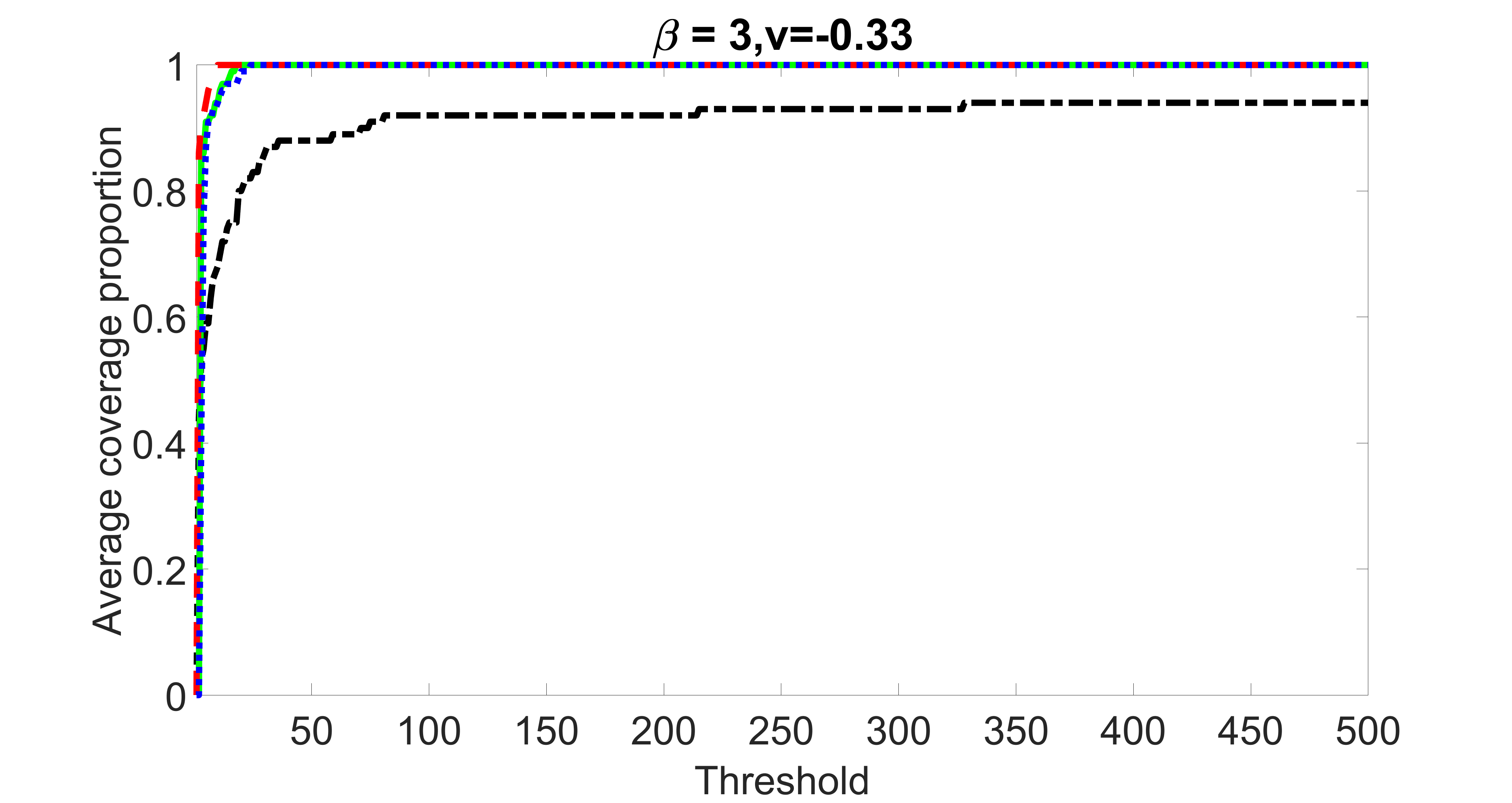}}
 \subcaptionbox{\footnotesize Confounder: medium \\ outcome, medium exposure}[0.45\linewidth]
 {\includegraphics[width=6cm,height=3.5cm]{./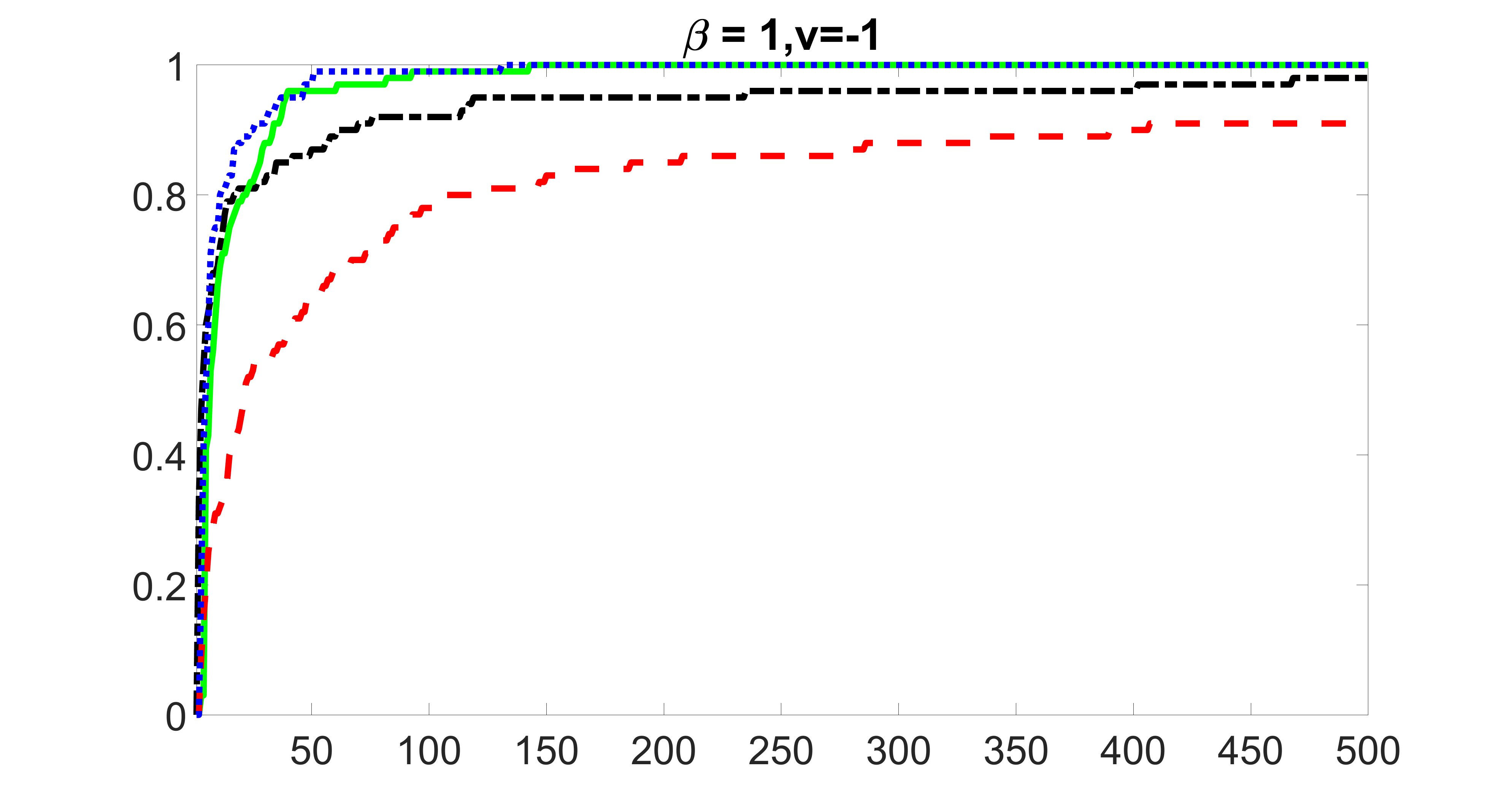}}
  \subcaptionbox{\footnotesize Confounder: weak \\ outcome, strong exposure}[0.45\linewidth]
 {\includegraphics[width=6cm,height=3.5cm]{./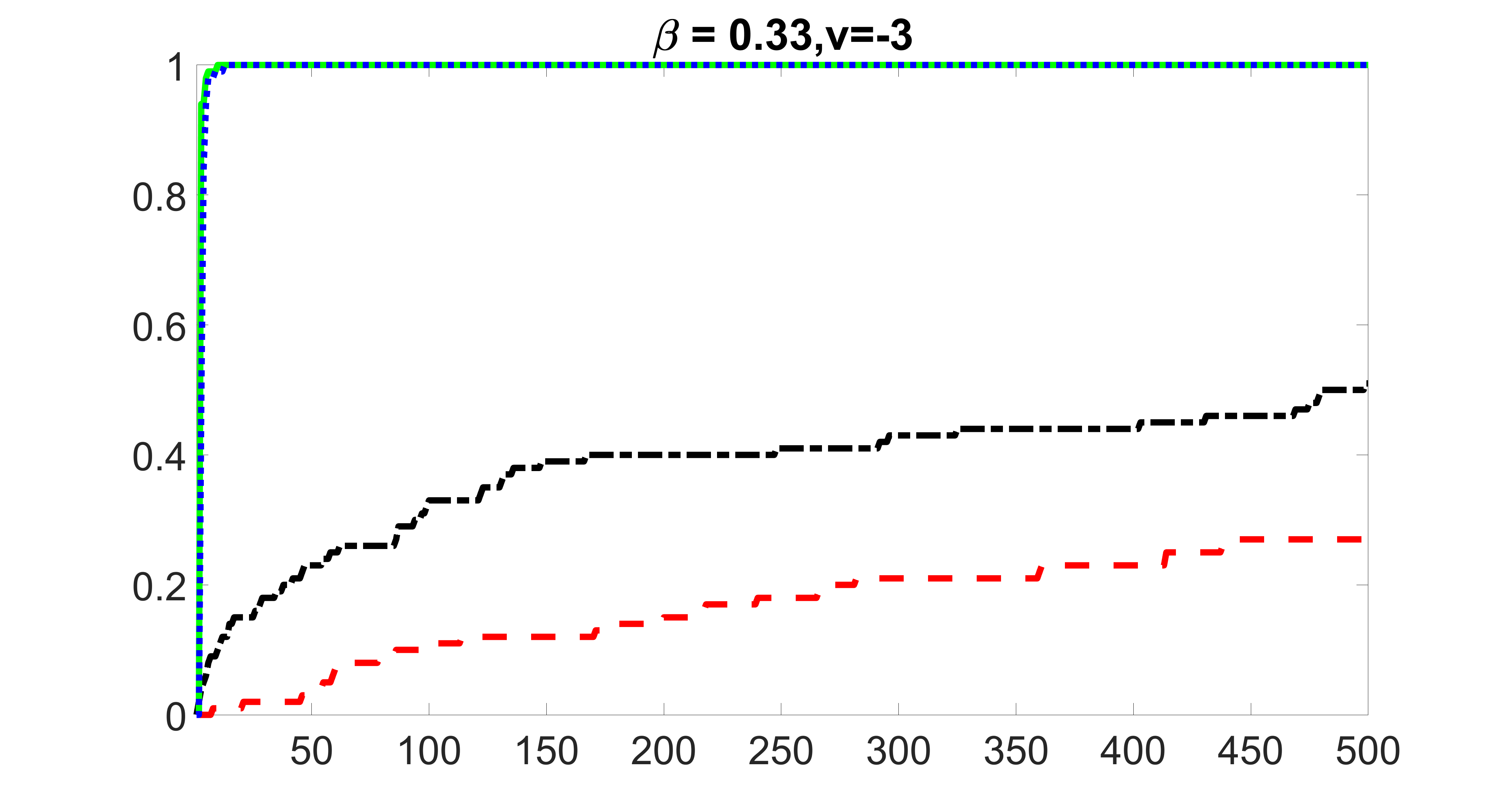}}
  \subcaptionbox{\footnotesize Precision: strong \\ outcome, zero exposure}[0.45\linewidth]
 {\includegraphics[width=6cm,height=3.5cm]{./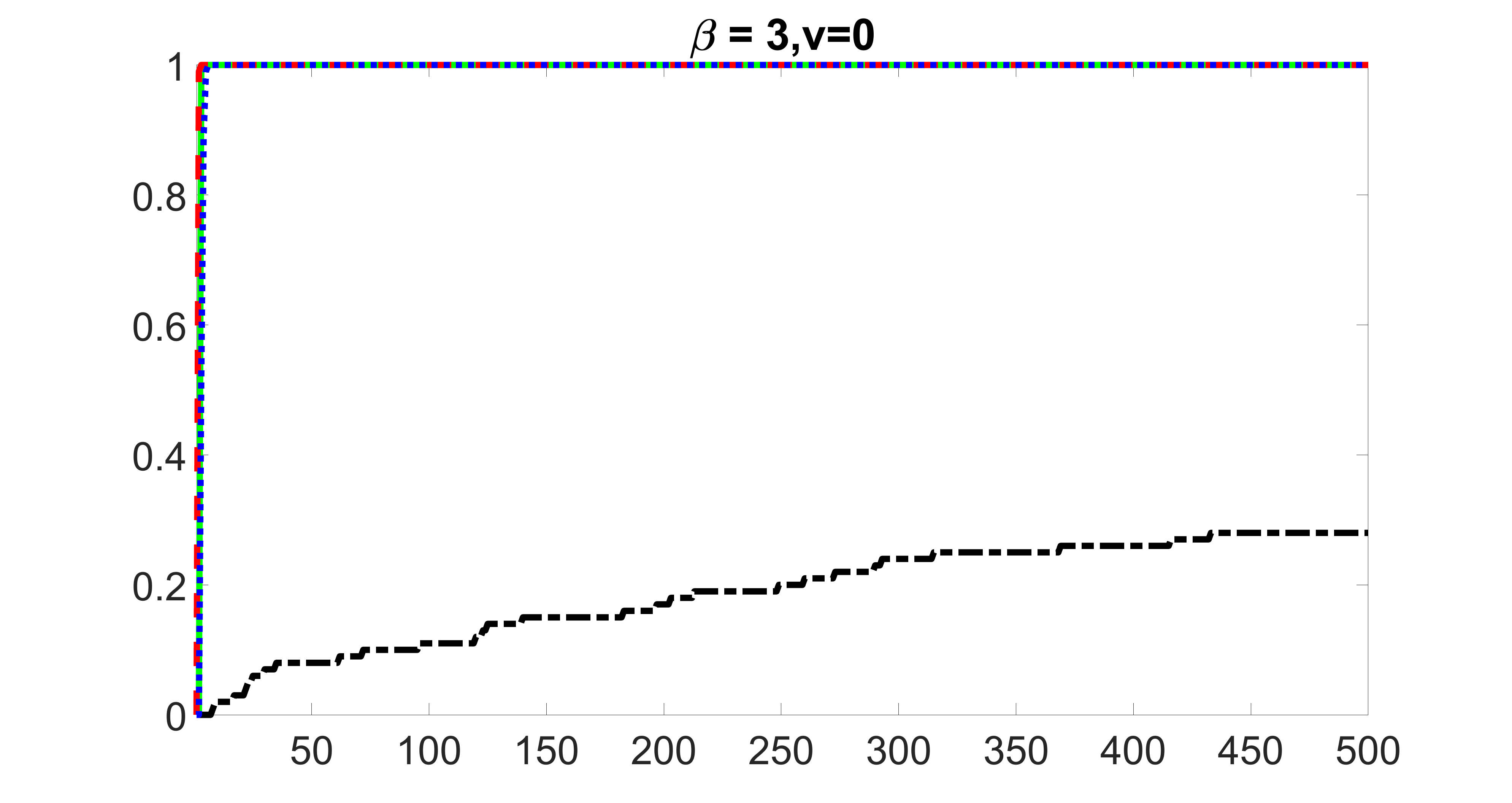}}
  \subcaptionbox{\footnotesize Precision: medium \\ outcome, zero exposure}[0.45\linewidth]
 {\includegraphics[width=6cm,height=3.5cm]{./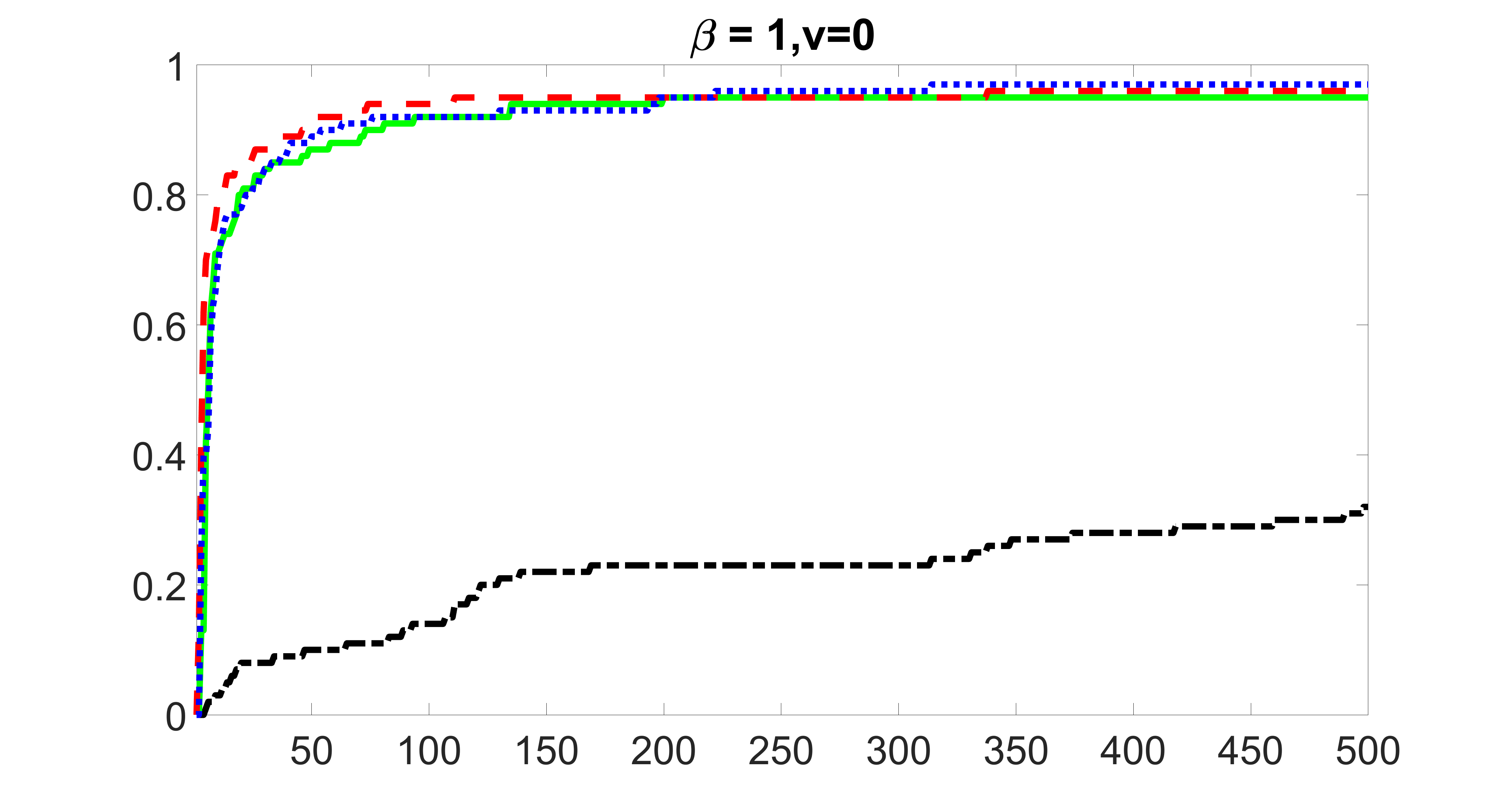}}
  \subcaptionbox{\footnotesize Precision: weak \\ outcome, zero exposure}[0.45\linewidth]
 {\includegraphics[width=6cm,height=3.5cm]{./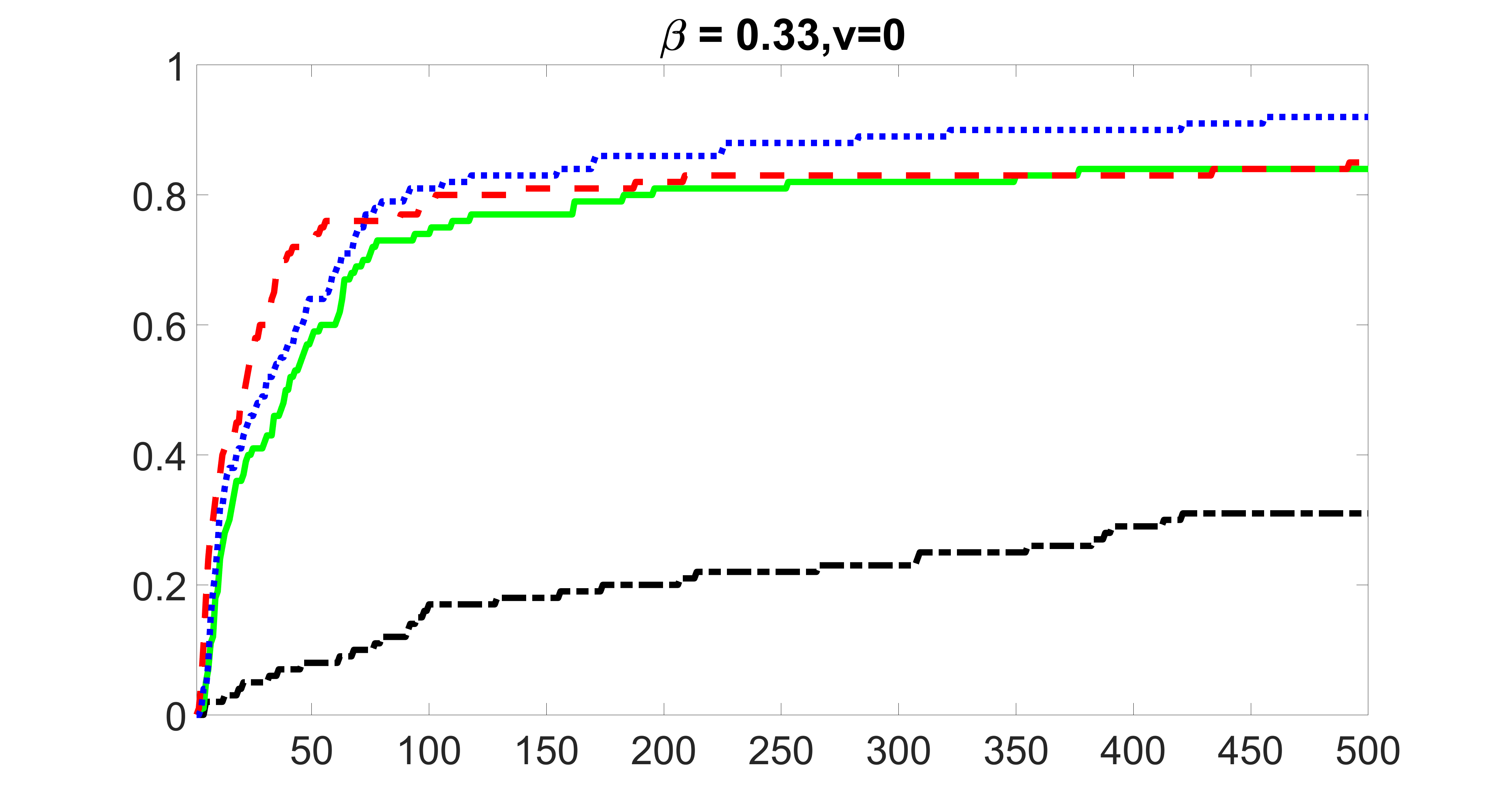}}
 \subcaptionbox{\footnotesize Precision: weaker \\ outcome, zero exposure}[0.45\linewidth]
 {\includegraphics[width=6cm,height=3.5cm]{./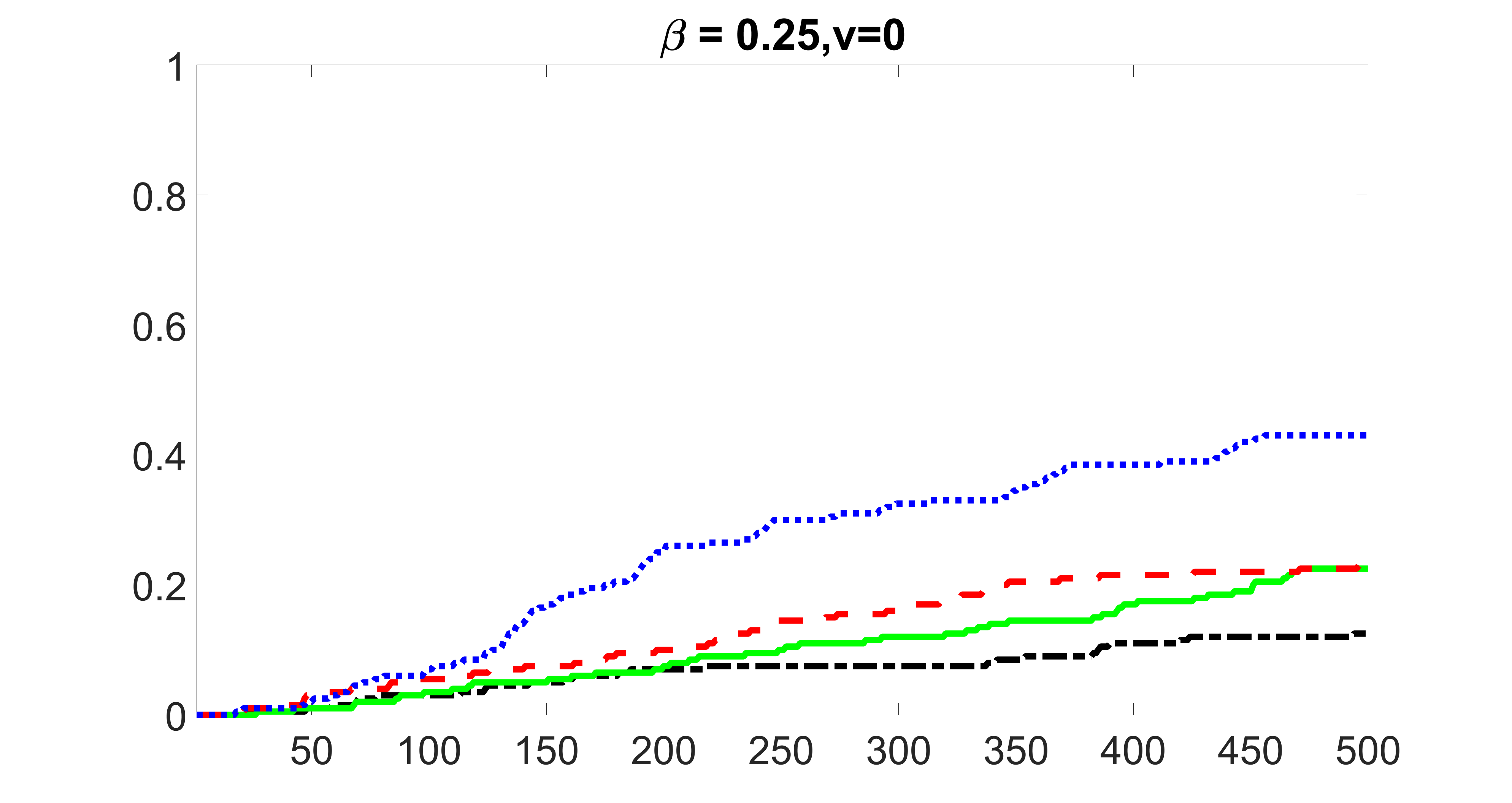}}
  \subcaptionbox{Overall coverage of $\mathcal{M}_1$}[0.45\linewidth]
 {\includegraphics[width=6cm,height=3.5cm]{./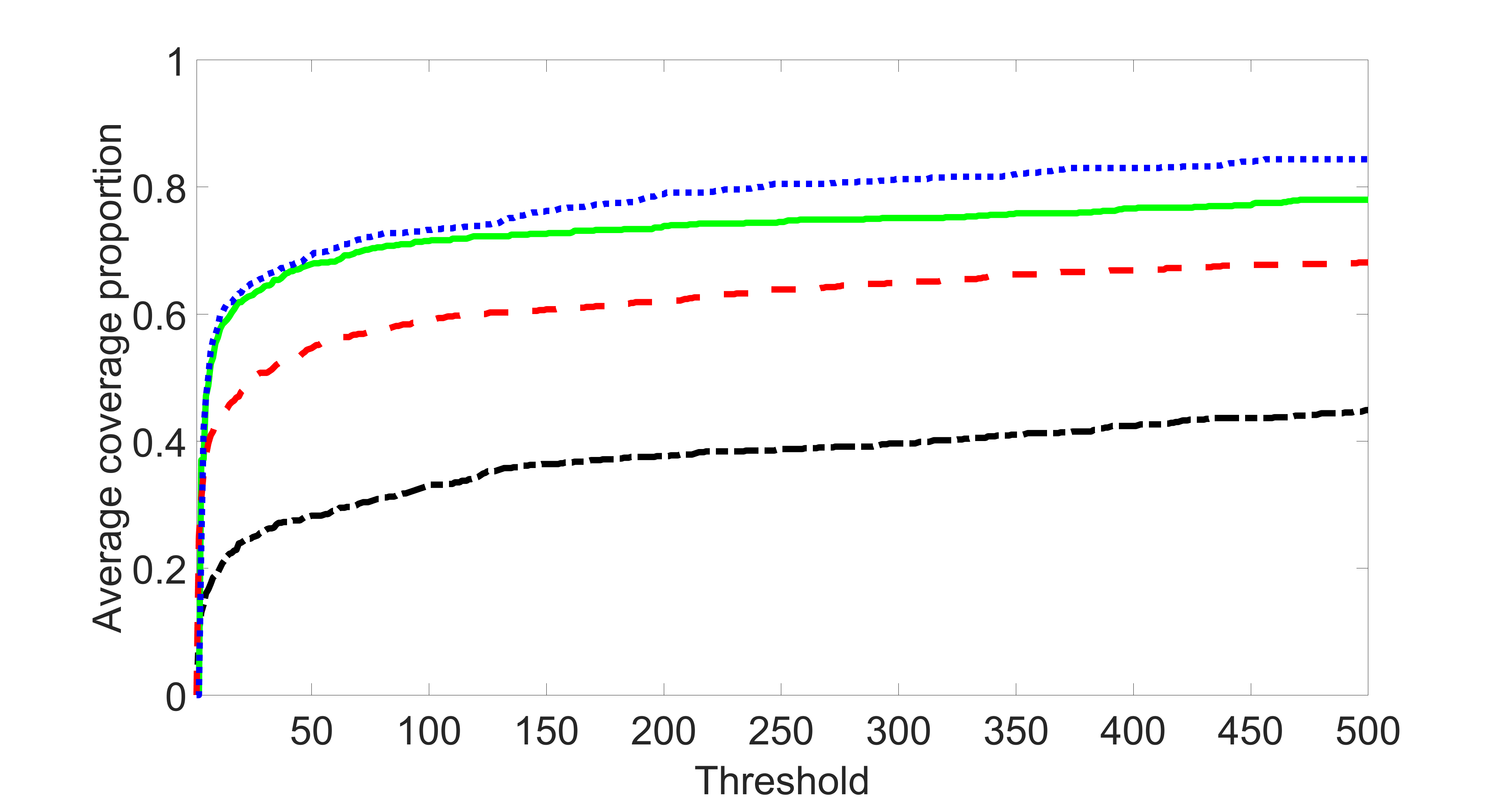}}
\caption{ Simulation results for the case $(n,s,K,\sigma) = (200,5000,2,1)$: Panels (a) -- (g) plot the average coverage proportion for $X_l$, where $l \in \mathcal{M}_1 =  \{1,2,3,104,105, 106\} \cup \mathcal{P}_{LD}$. Panels (a) -- (c) correspond to strong outcome and weak exposure predictor, moderate outcome and moderate exposure predictor and weak outcome and strong exposure predictor; Panels (d) -- (g) correspond to strong, moderate, and weak predictors of outcome only. Panel (g) plots the average coverage proportion for the index set $\mathcal{P}_{LD}$. Panel (h) plots the average coverage proportion for the index set $\mathcal{M}_1$. The x-axis represents the size of $\widehat{\mathcal{M}} $, while
y-axis denotes the average proportion. The blue dot, green solid, red dashed and black dash dotted lines denote the blockwise joint screening, joint screening, outcome screening, and intersection screening methods, respectively.}
\label{sim3step1n200sizesig2sigma1}
\end{figure}

\begin{figure}[htbp]
\captionsetup[subfigure]{justification=centering}
\centering
 \subcaptionbox{\footnotesize Confounder: strong \\ outcome, weak exposure}[0.45\linewidth]
 {\includegraphics[width=6cm,height=3.5cm]{./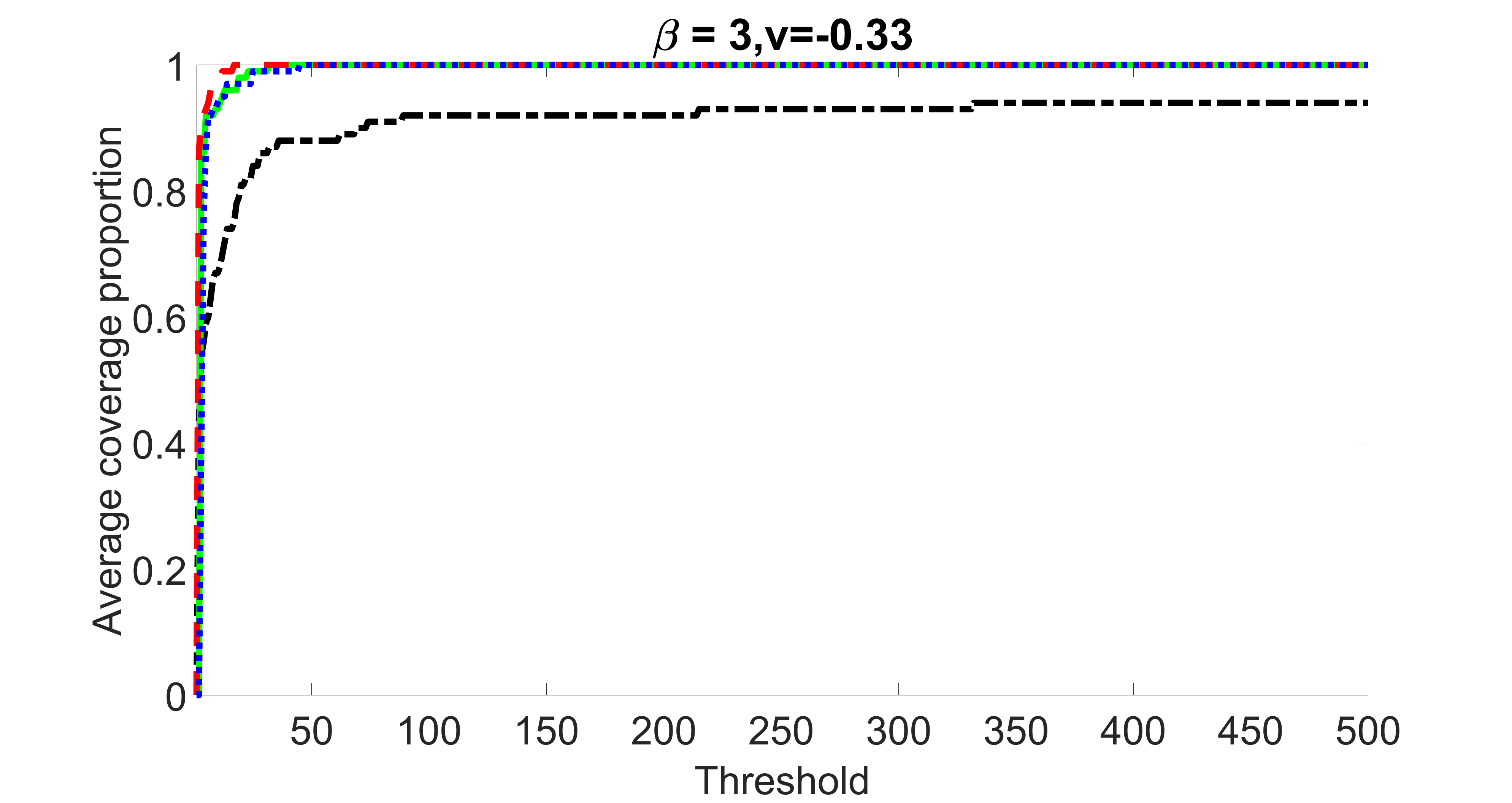}}
 \subcaptionbox{\footnotesize Confounder: medium \\ outcome, medium exposure}[0.45\linewidth]
 {\includegraphics[width=6cm,height=3.5cm]{./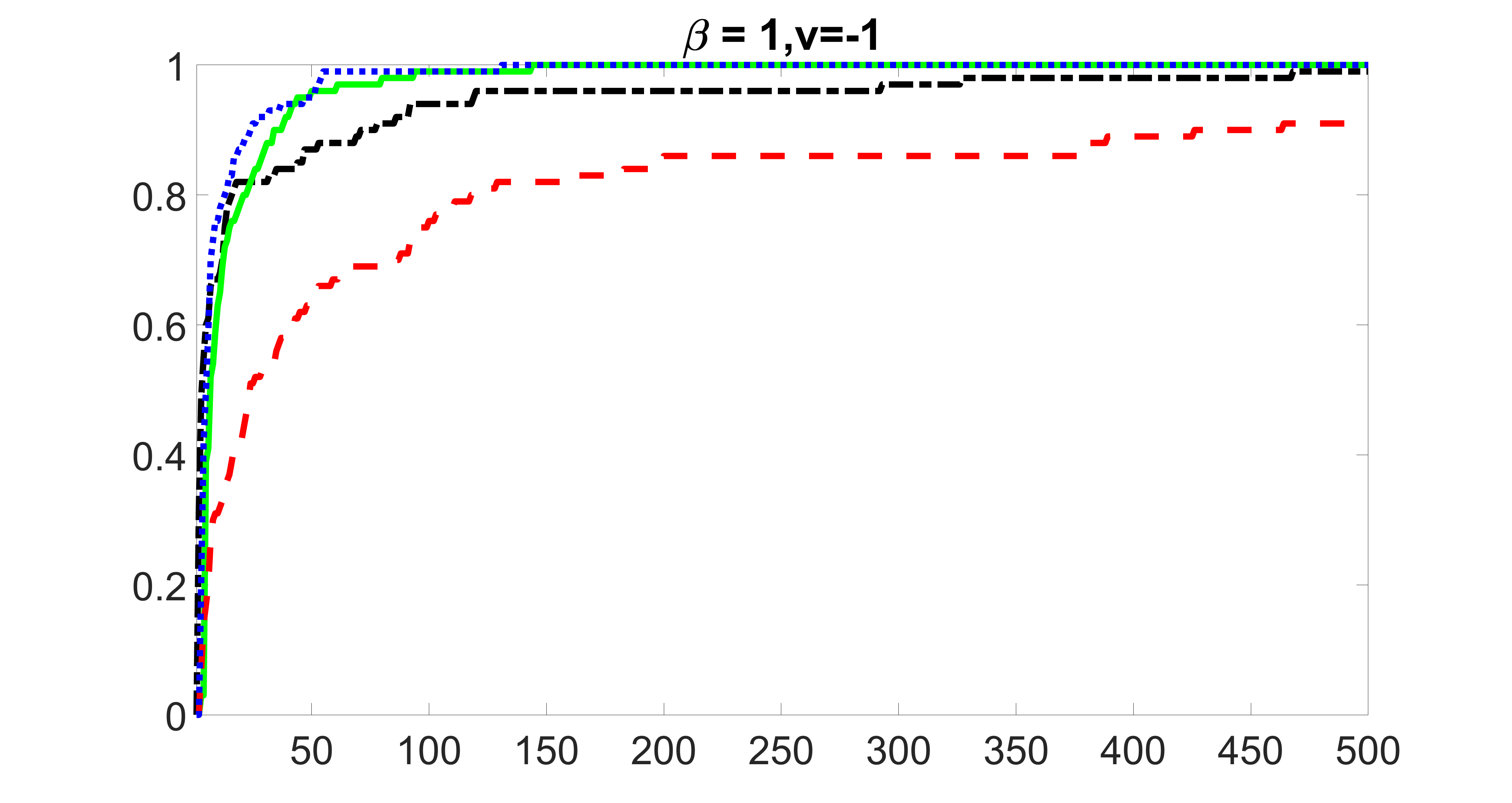}}
  \subcaptionbox{\footnotesize Confounder: weak \\ outcome, strong exposure}[0.45\linewidth]
 {\includegraphics[width=6cm,height=3.5cm]{./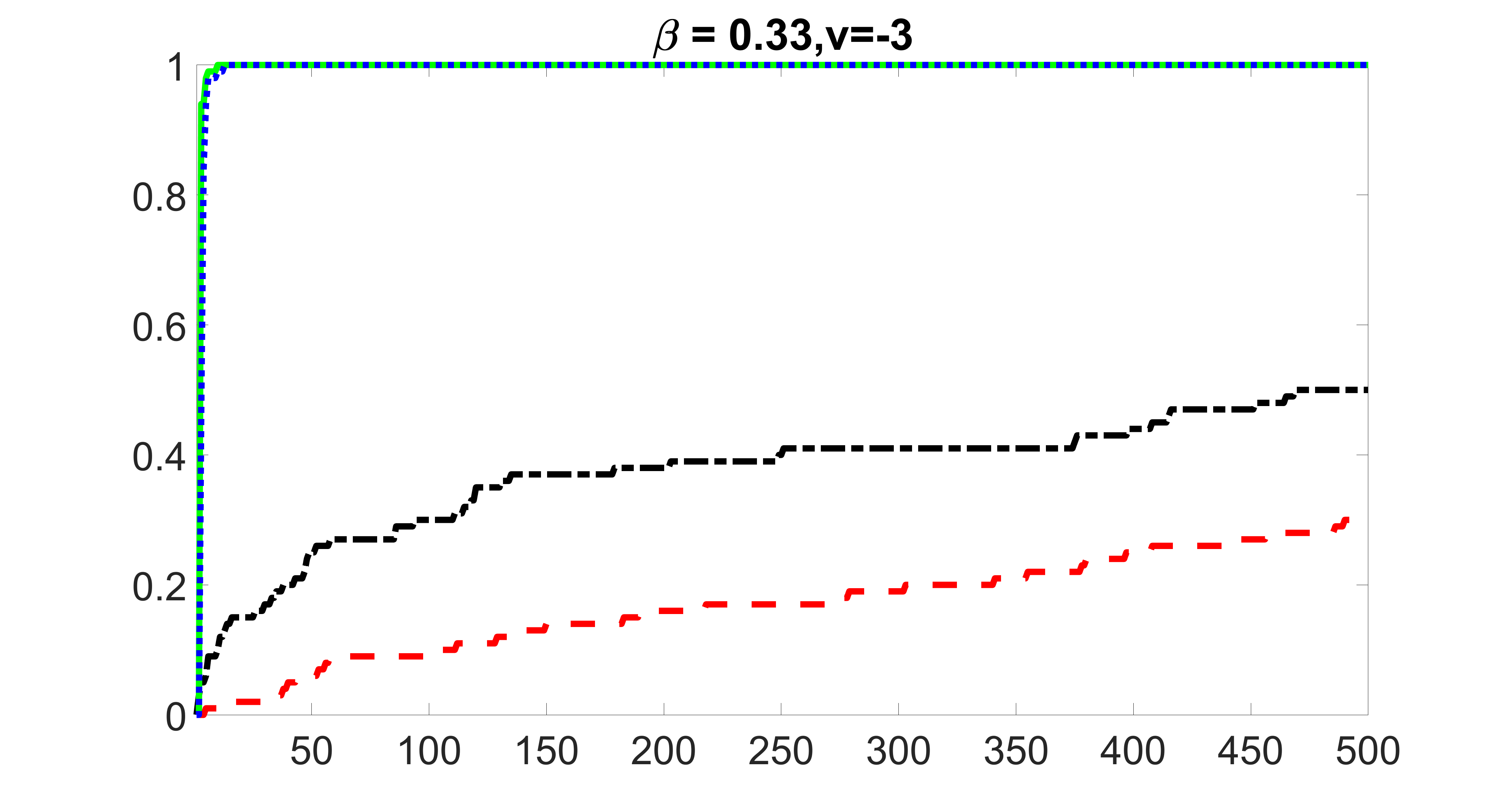}}
  \subcaptionbox{\footnotesize Precision: strong \\ outcome, zero exposure}[0.45\linewidth]
 {\includegraphics[width=6cm,height=3.5cm]{./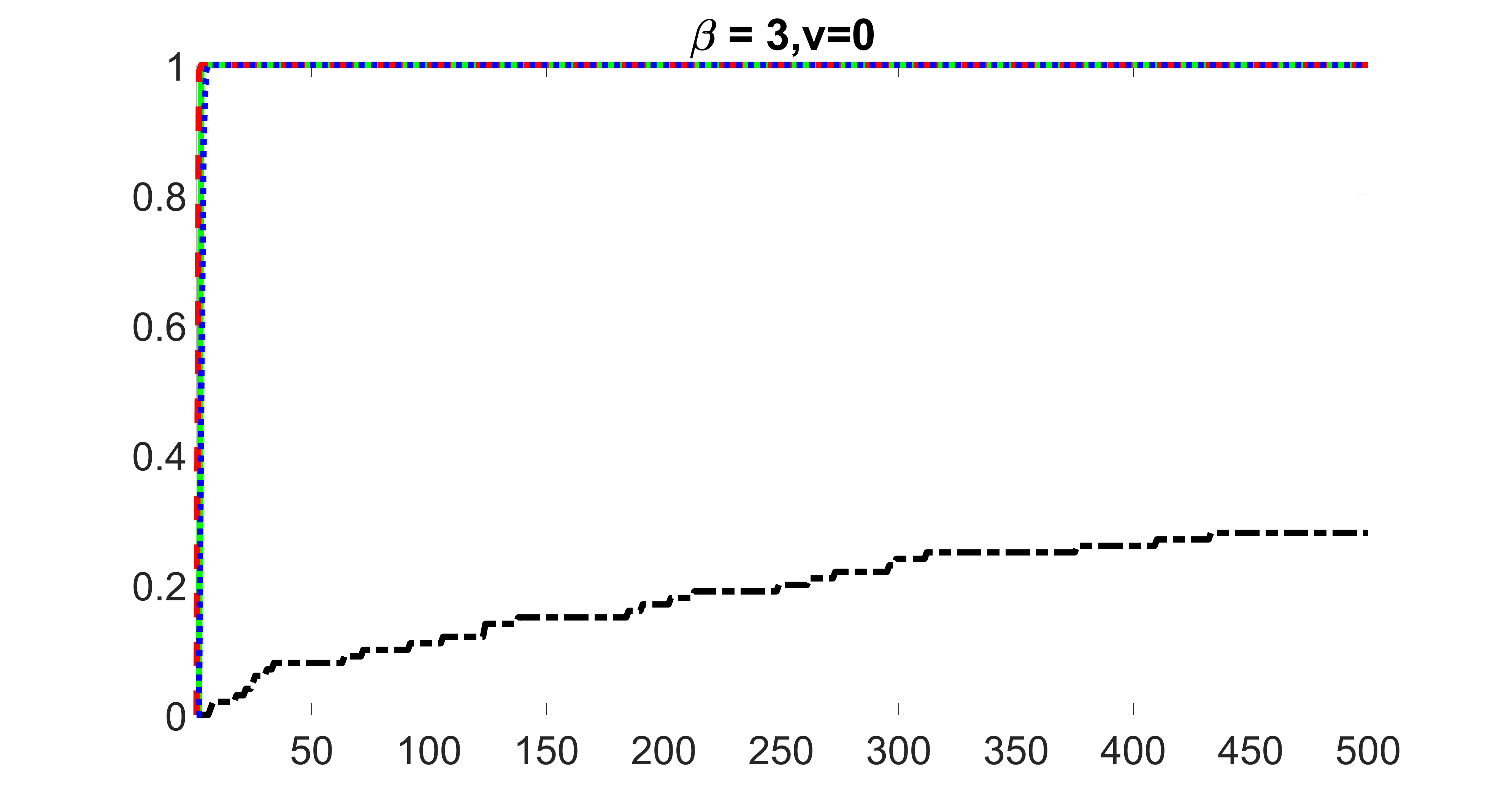}}
  \subcaptionbox{\footnotesize Precision: medium \\ outcome, zero exposure}[0.45\linewidth]
 {\includegraphics[width=6cm,height=3.5cm]{./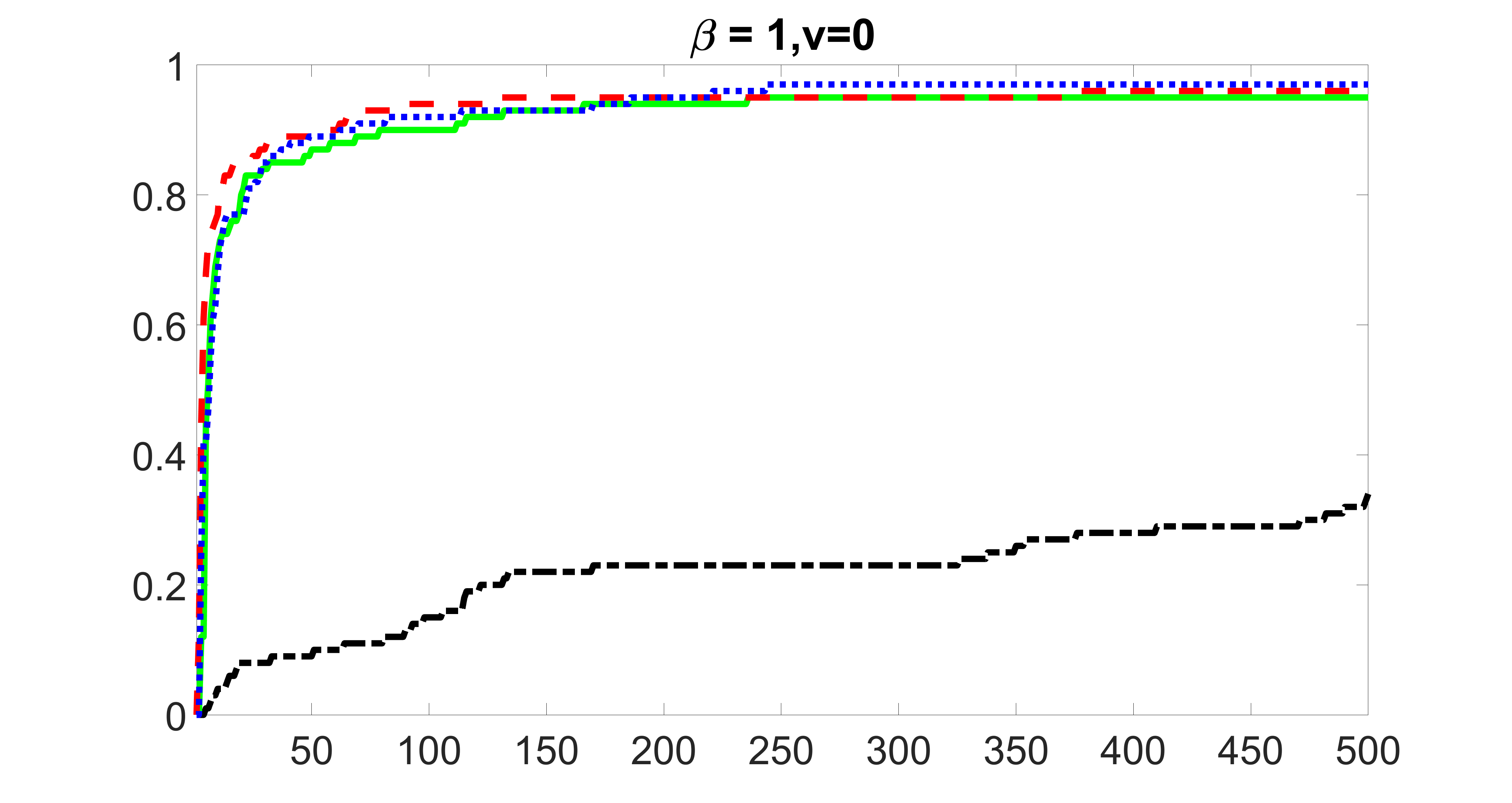}}
  \subcaptionbox{\footnotesize Precision: weak \\ outcome, zero exposure}[0.45\linewidth]
 {\includegraphics[width=6cm,height=3.5cm]{./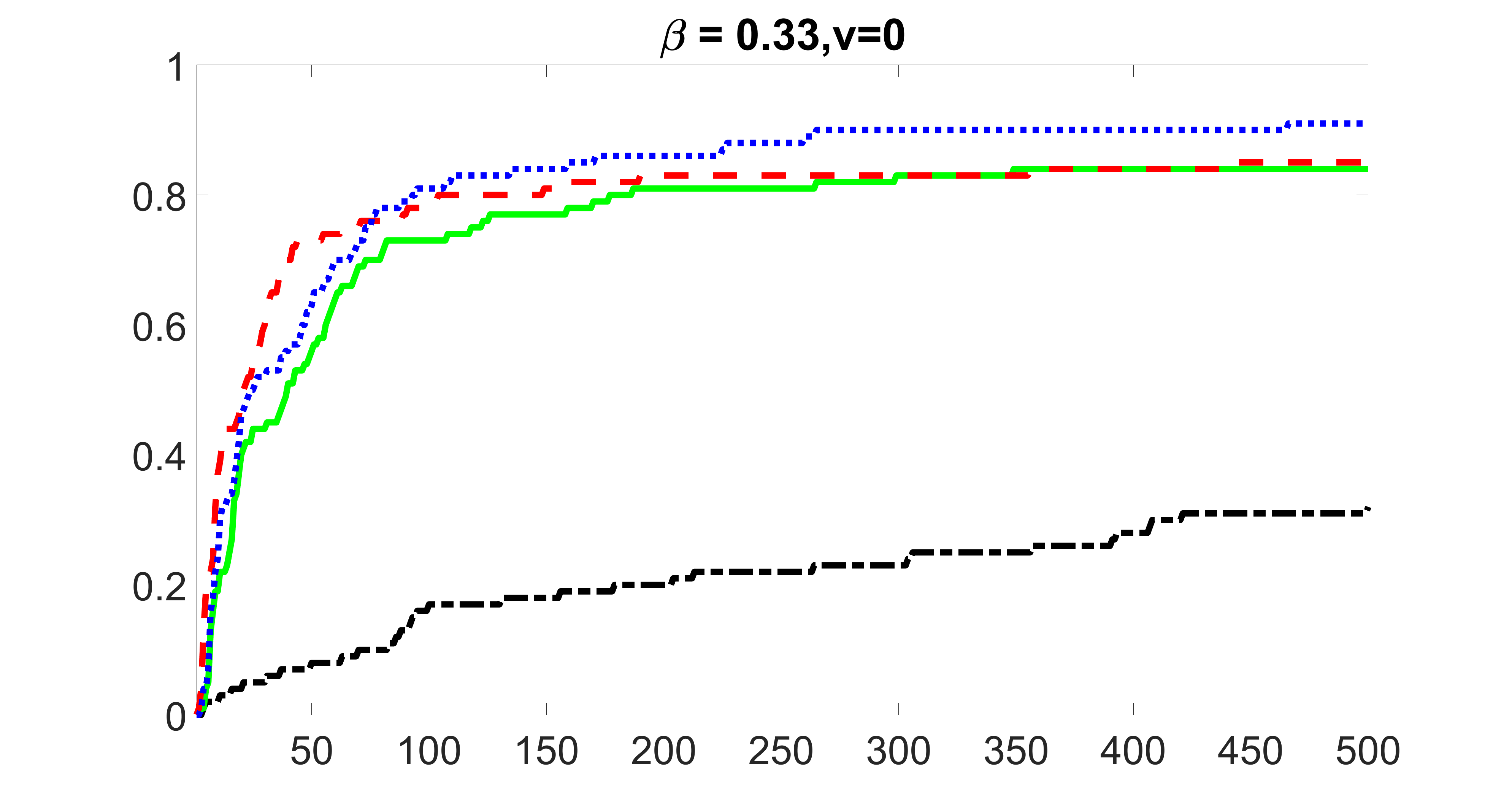}}
 \subcaptionbox{\footnotesize Precision: weaker \\ outcome, zero exposure}[0.45\linewidth]
 {\includegraphics[width=6cm,height=3.5cm]{./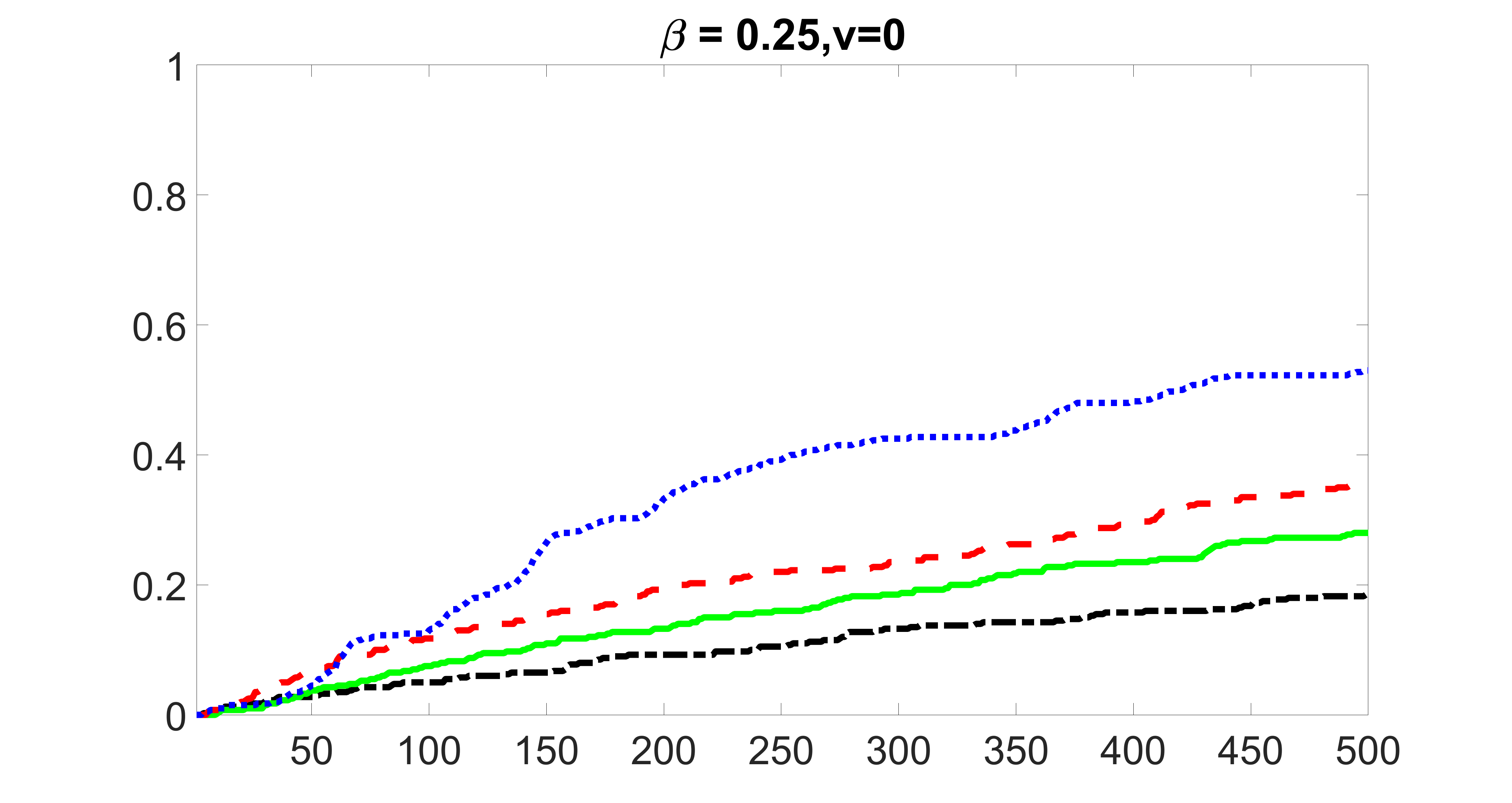}}
  \subcaptionbox{Overall coverage of $\mathcal{M}_1$}[0.45\linewidth]
 {\includegraphics[width=6cm,height=3.5cm]{./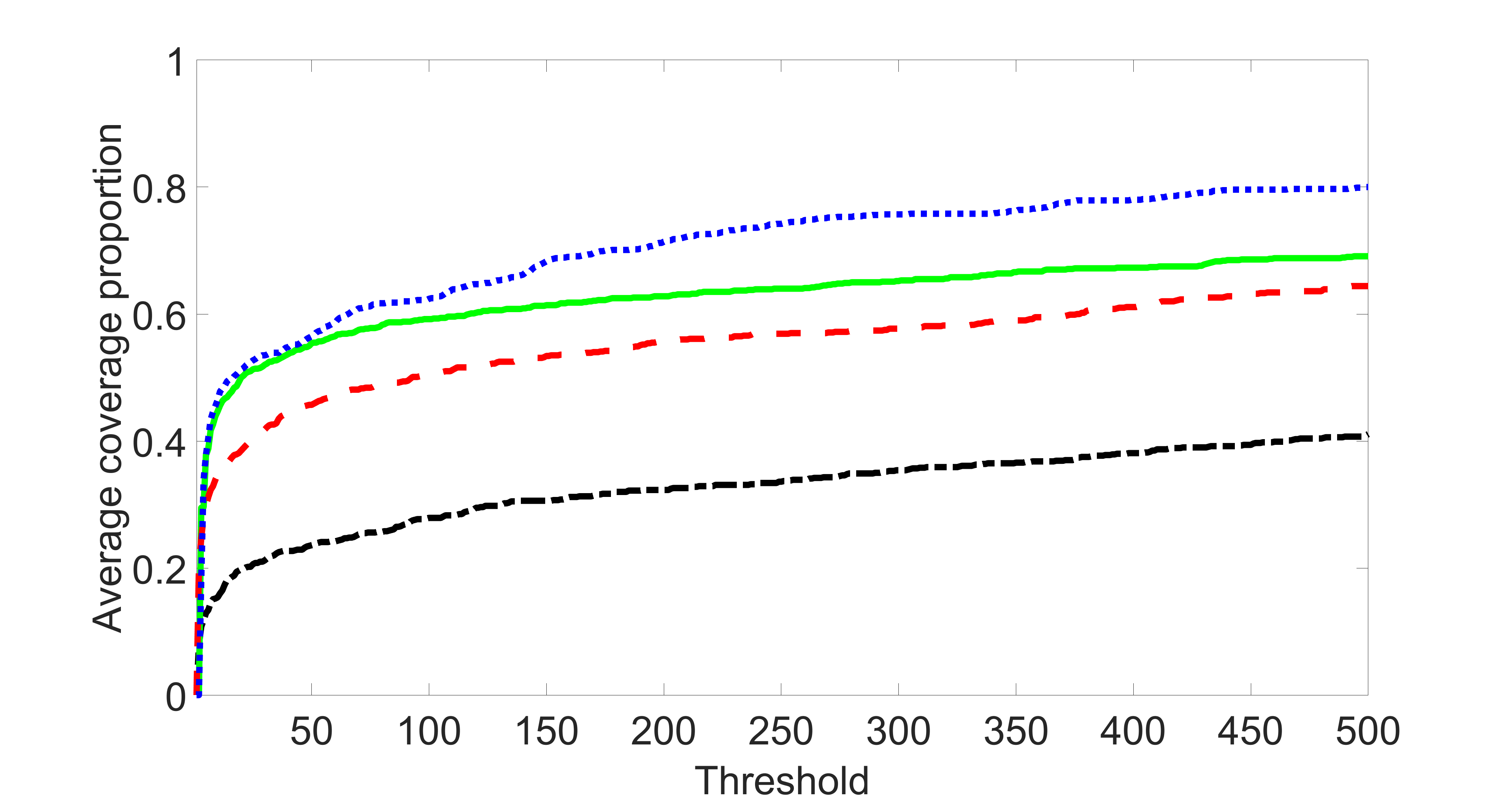}}
\caption{ Simulation results for the case $(n,s,K,\sigma) = (200,5000,4,1)$: Panels (a) -- (g) plot the average coverage proportion for $X_l$, where $l \in \mathcal{M}_1 =  \{1,2,3,104,105, 106\} \cup \mathcal{P}_{LD}$. Panels (a) -- (c) correspond to strong outcome and weak exposure predictor, moderate outcome and moderate exposure predictor and weak outcome and strong exposure predictor; Panels (d) -- (g) correspond to strong, moderate, and weak predictors of outcome only. Panel (g) plots the average coverage proportion for the index set $\mathcal{P}_{LD}$. Panel (h) plots the average coverage proportion for the index set $\mathcal{M}_1$. The x-axis represents the size of $\widehat{\mathcal{M}} $, while
y-axis denotes the average proportion. The blue dot, green solid, red dashed and black dash dotted lines denote the blockwise joint screening, joint screening, outcome screening, and intersection screening methods, respectively.}
\label{sim3step1n200sizesig4sigma1}
\end{figure}

\begin{figure}[htbp]
\captionsetup[subfigure]{justification=centering}
\centering
 \subcaptionbox{\footnotesize Confounder: strong \\ outcome, weak exposure}[0.45\linewidth]
 {\includegraphics[width=6cm,height=3.5cm]{./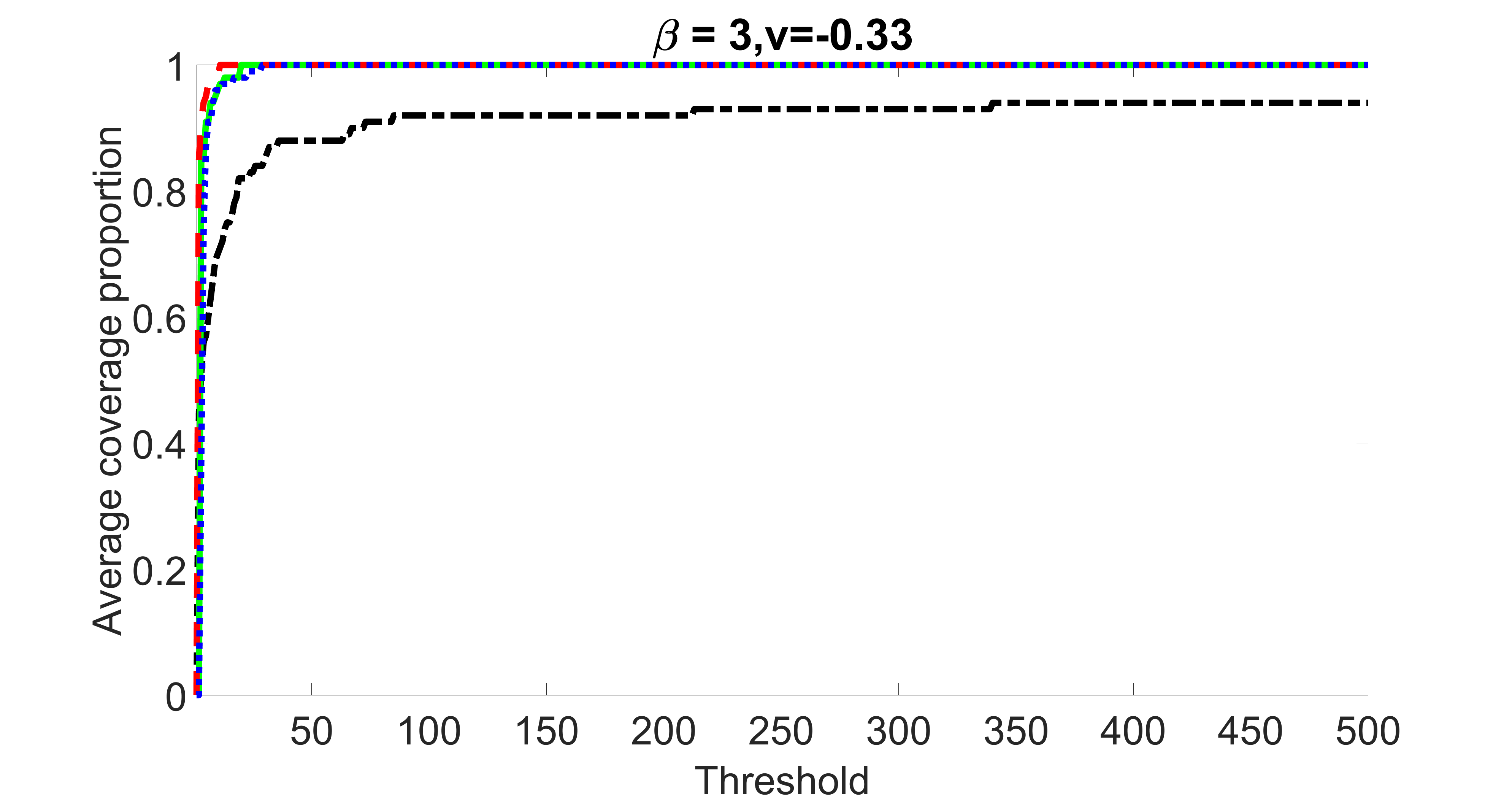}}
 \subcaptionbox{\footnotesize Confounder: medium \\ outcome, medium exposure}[0.45\linewidth]
 {\includegraphics[width=6cm,height=3.5cm]{./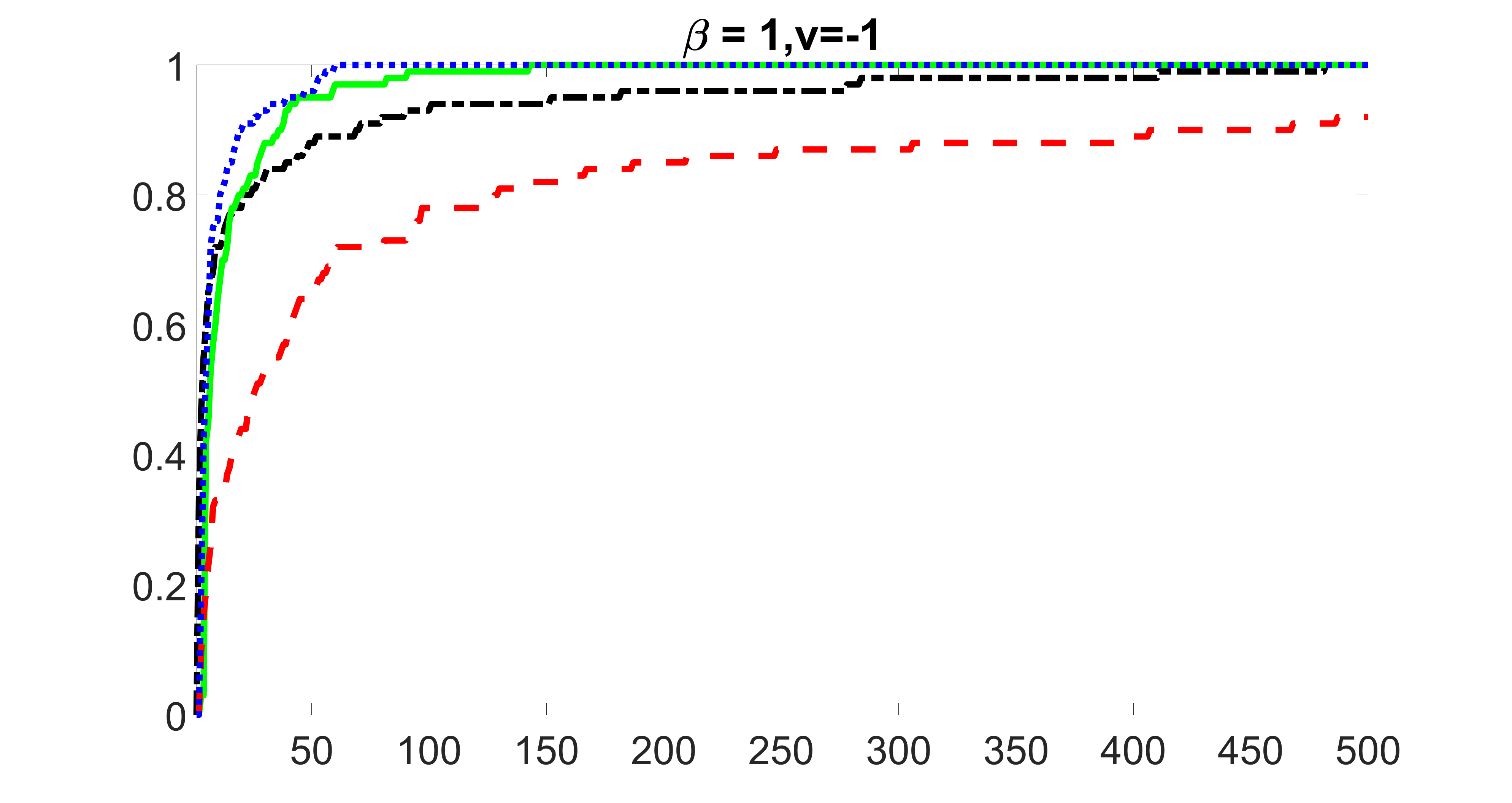}}
  \subcaptionbox{\footnotesize Confounder: weak \\ outcome, strong exposure}[0.45\linewidth]
 {\includegraphics[width=6cm,height=3.5cm]{./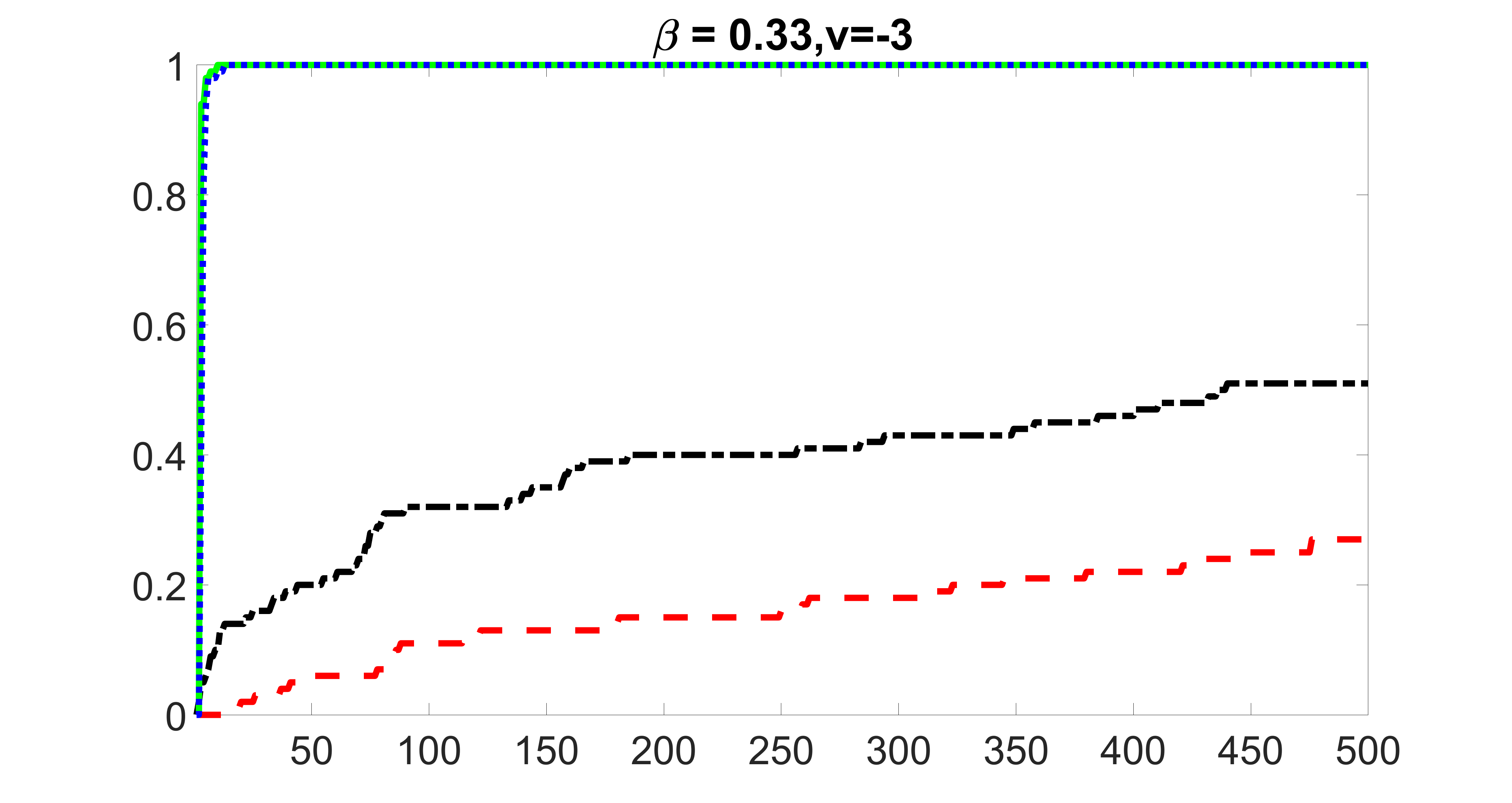}}
  \subcaptionbox{\footnotesize Precision: strong \\ outcome, zero exposure}[0.45\linewidth]
 {\includegraphics[width=6cm,height=3.5cm]{./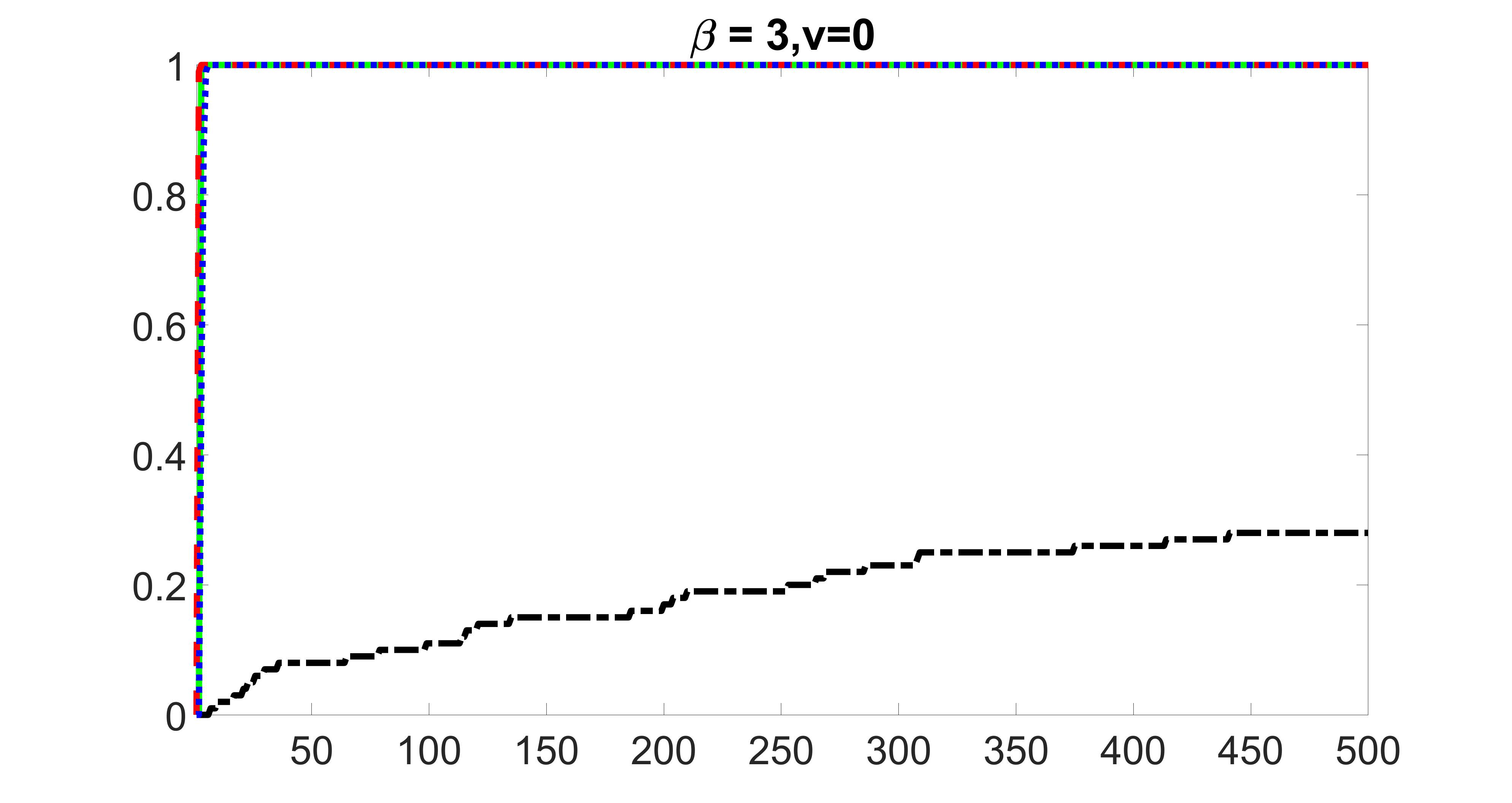}}
  \subcaptionbox{\footnotesize Precision: medium \\ outcome, zero exposure}[0.45\linewidth]
 {\includegraphics[width=6cm,height=3.5cm]{./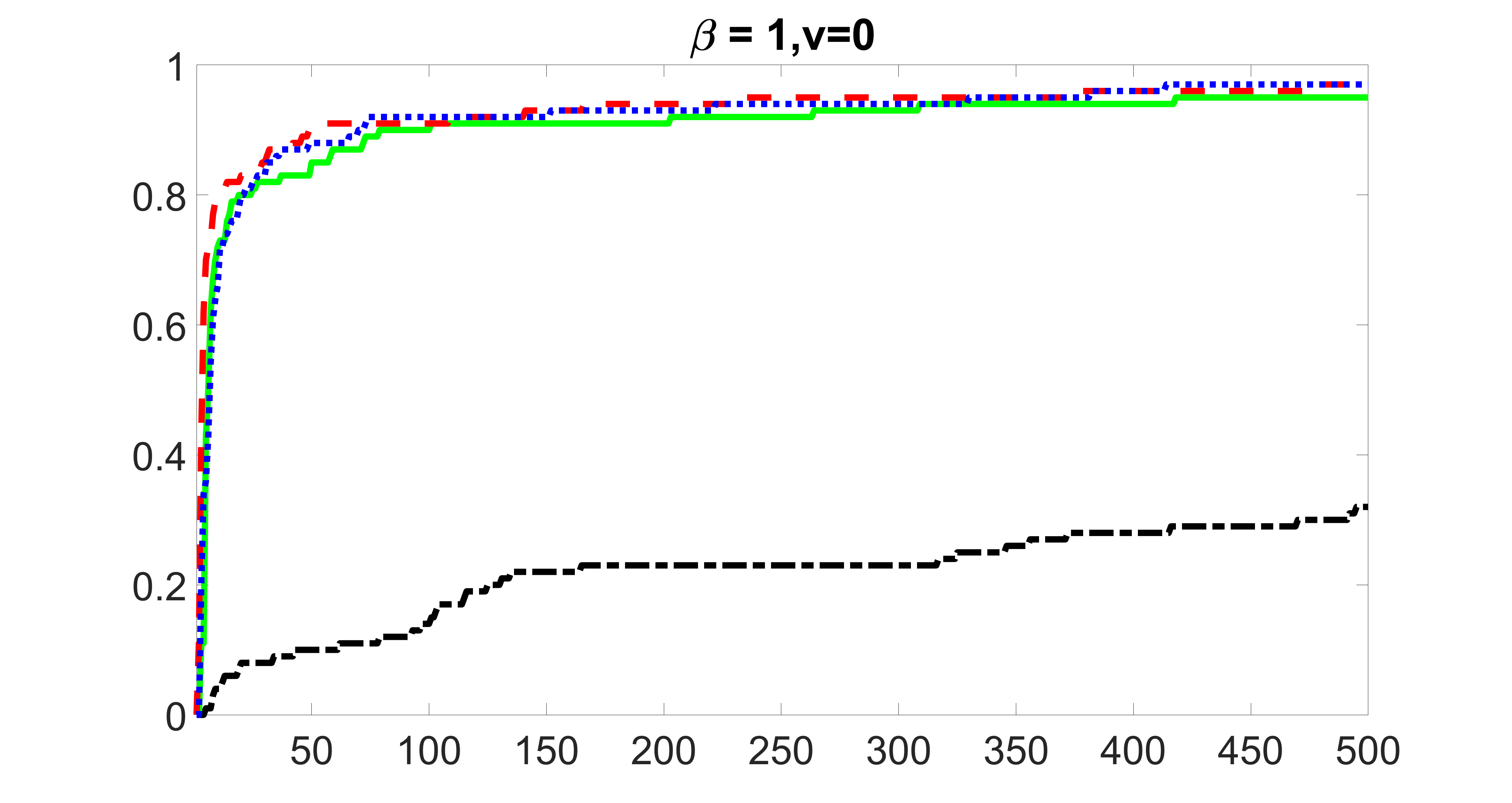}}
  \subcaptionbox{\footnotesize Precision: weak \\ outcome, zero exposure}[0.45\linewidth]
 {\includegraphics[width=6cm,height=3.5cm]{./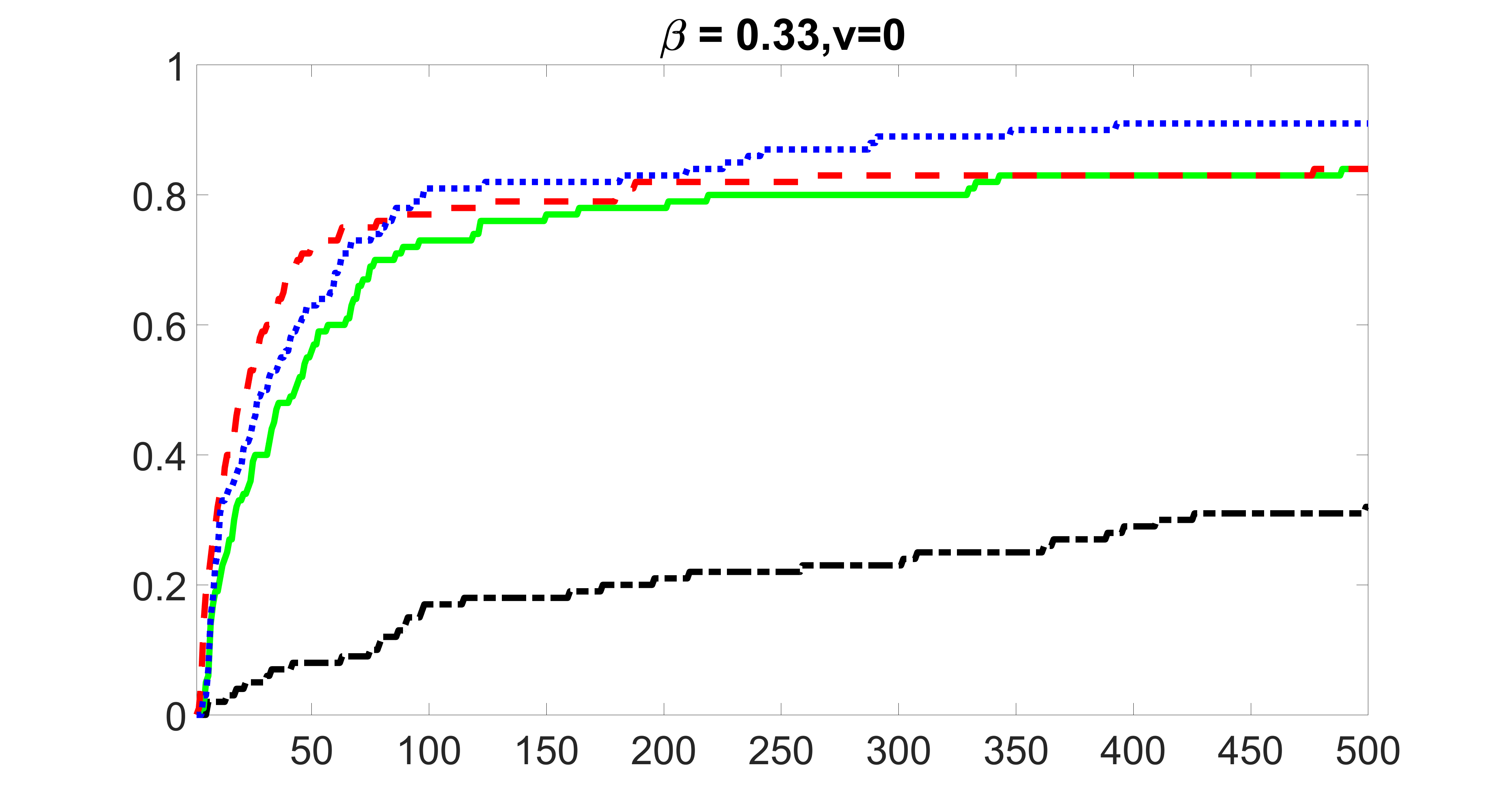}}
 \subcaptionbox{\footnotesize Precision: weaker \\ outcome, zero exposure}[0.45\linewidth]
 {\includegraphics[width=6cm,height=3.5cm]{./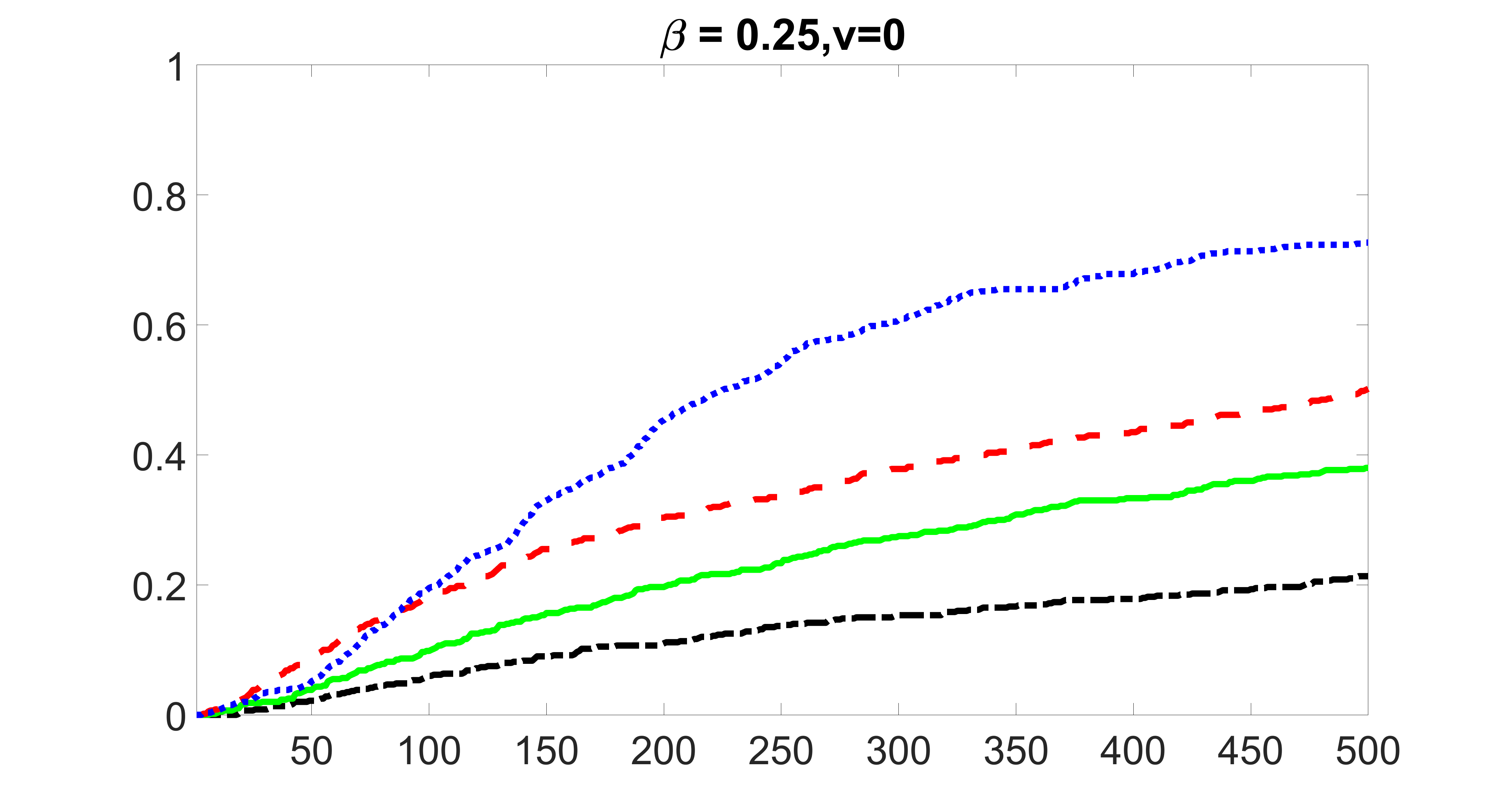}}
  \subcaptionbox{Overall coverage of $\mathcal{M}_1$}[0.45\linewidth]
 {\includegraphics[width=6cm,height=3.5cm]{./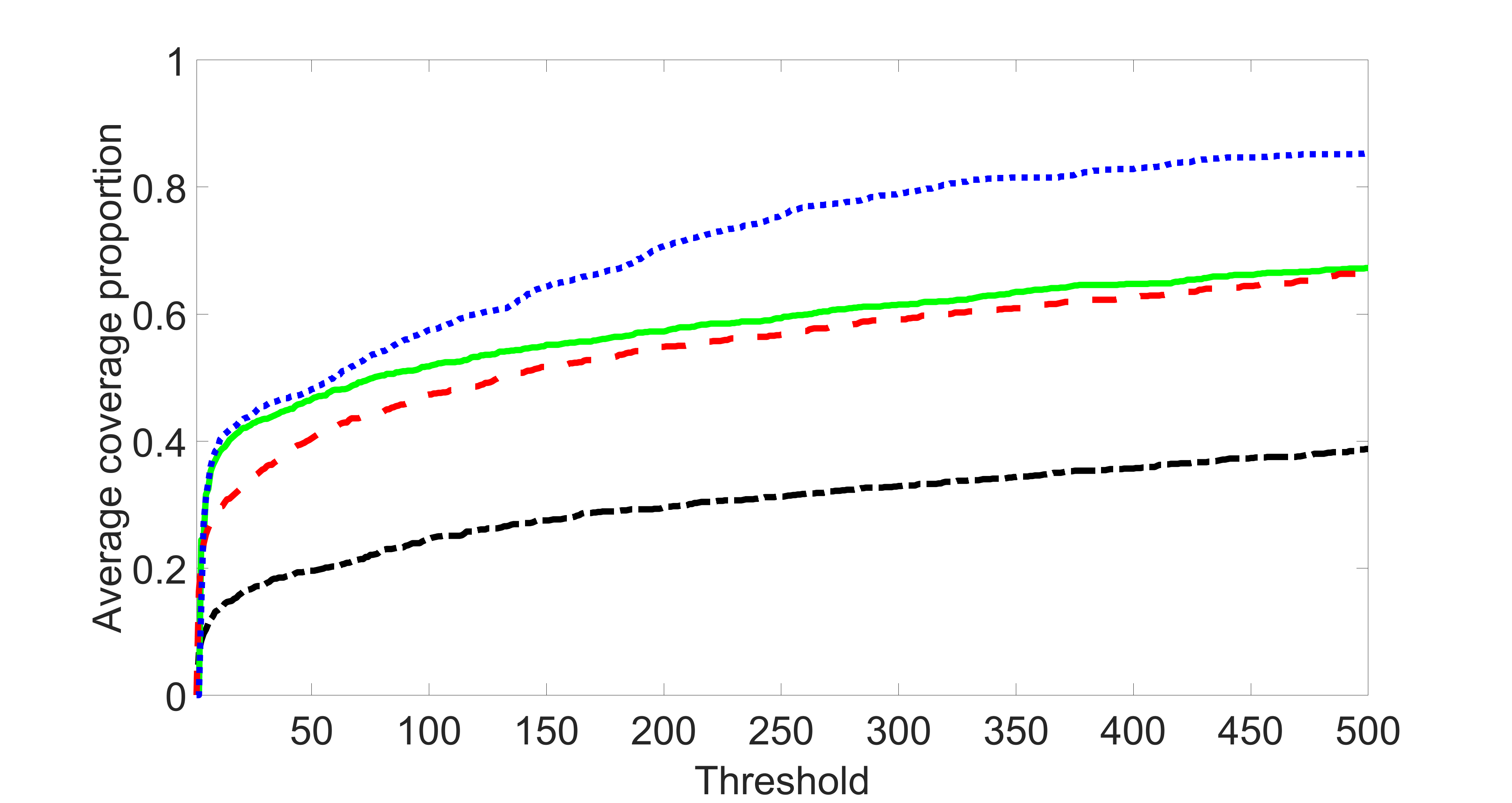}}
\caption{ Simulation results for the case $(n,s,K,\sigma) = (200,5000,6,1)$: Panels (a) -- (g) plot the average coverage proportion for $X_l$, where $l \in \mathcal{M}_1 =  \{1,2,3,104,105, 106\} \cup \mathcal{P}_{LD}$. Panels (a) -- (c) correspond to strong outcome and weak exposure predictor, moderate outcome and moderate exposure predictor and weak outcome and strong exposure predictor; Panels (d) -- (g) correspond to strong, moderate, and weak predictors of outcome only. Panel (g) plots the average coverage proportion for the index set $\mathcal{P}_{LD}$. Panel (h) plots the average coverage proportion for the index set $\mathcal{M}_1$. The x-axis represents the size of $\widehat{\mathcal{M}} $, while
y-axis denotes the average proportion. The blue dot, green solid, red dashed and black dash dotted lines denote the blockwise joint screening, joint screening, outcome screening, and intersection screening methods, respectively.}
\label{sim3step1n200sizesig6sigma1}
\end{figure}

\begin{figure}[htbp]
\captionsetup[subfigure]{justification=centering}
\centering
 \subcaptionbox{\footnotesize Confounder: strong \\ outcome, weak exposure}[0.45\linewidth]
 {\includegraphics[width=6cm,height=3.5cm]{./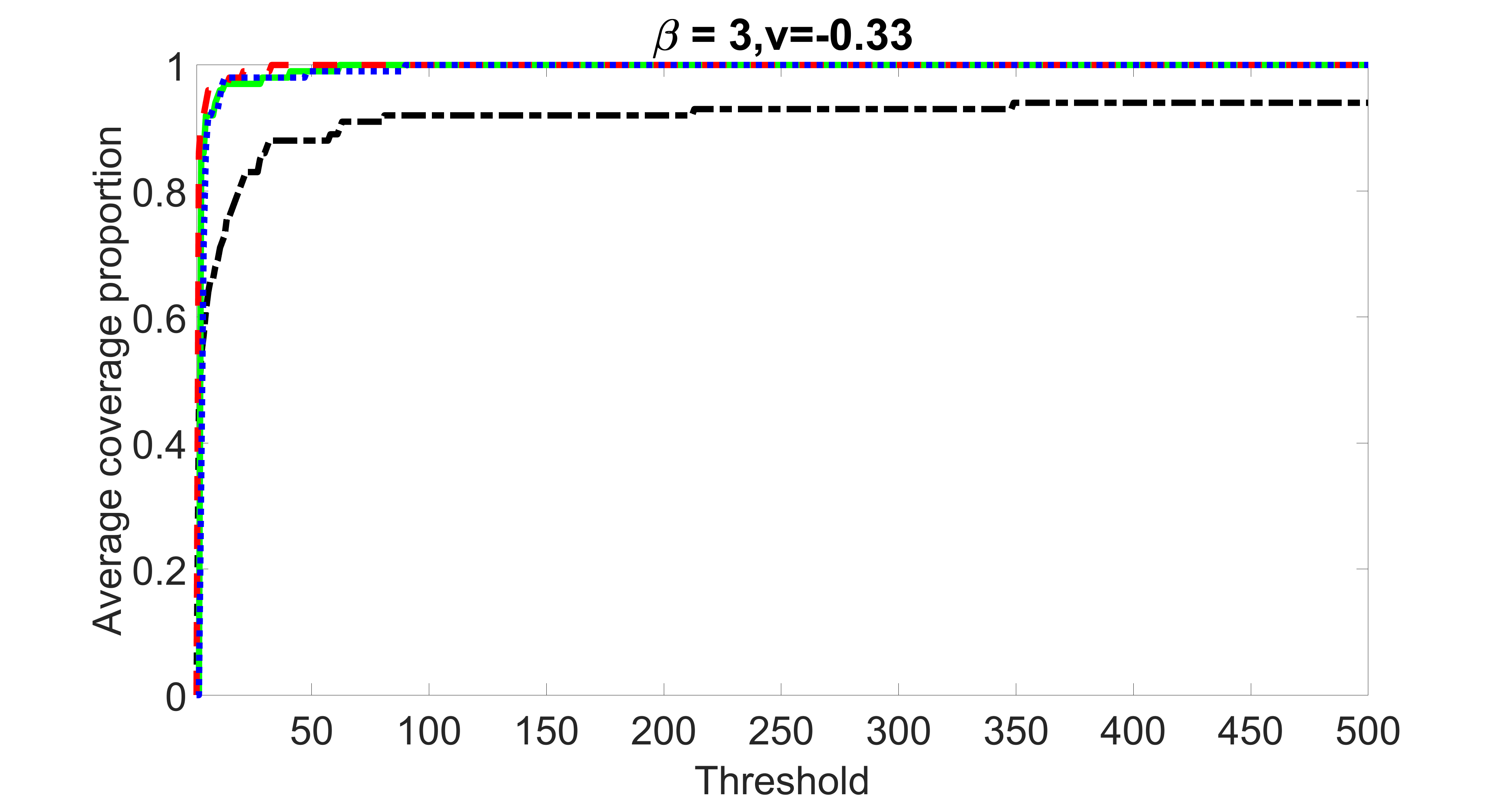}}
 \subcaptionbox{\footnotesize Confounder: medium \\ outcome, medium exposure}[0.45\linewidth]
 {\includegraphics[width=6cm,height=3.5cm]{./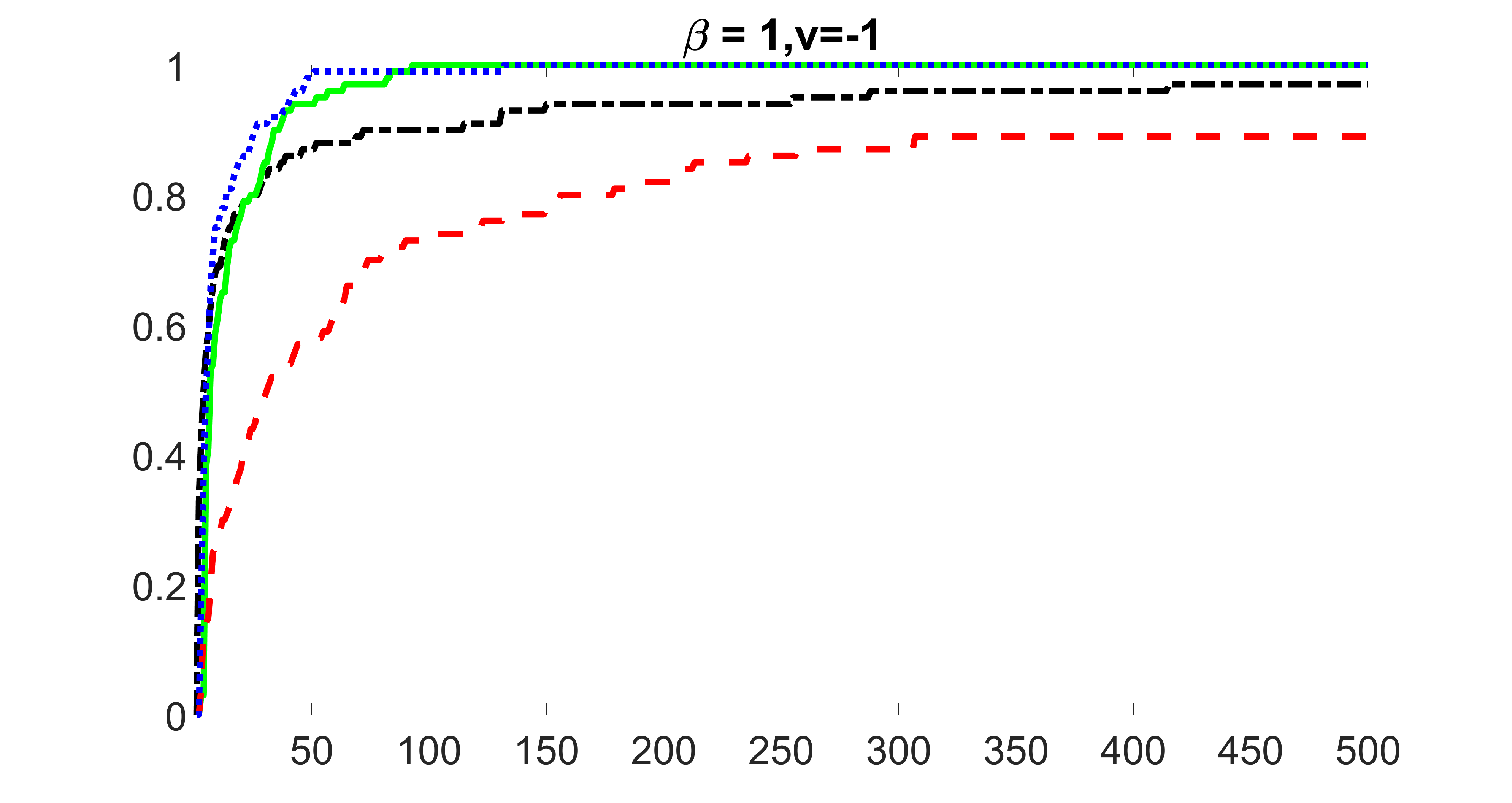}}
  \subcaptionbox{\footnotesize Confounder: weak \\ outcome, strong exposure}[0.45\linewidth]
 {\includegraphics[width=6cm,height=3.5cm]{./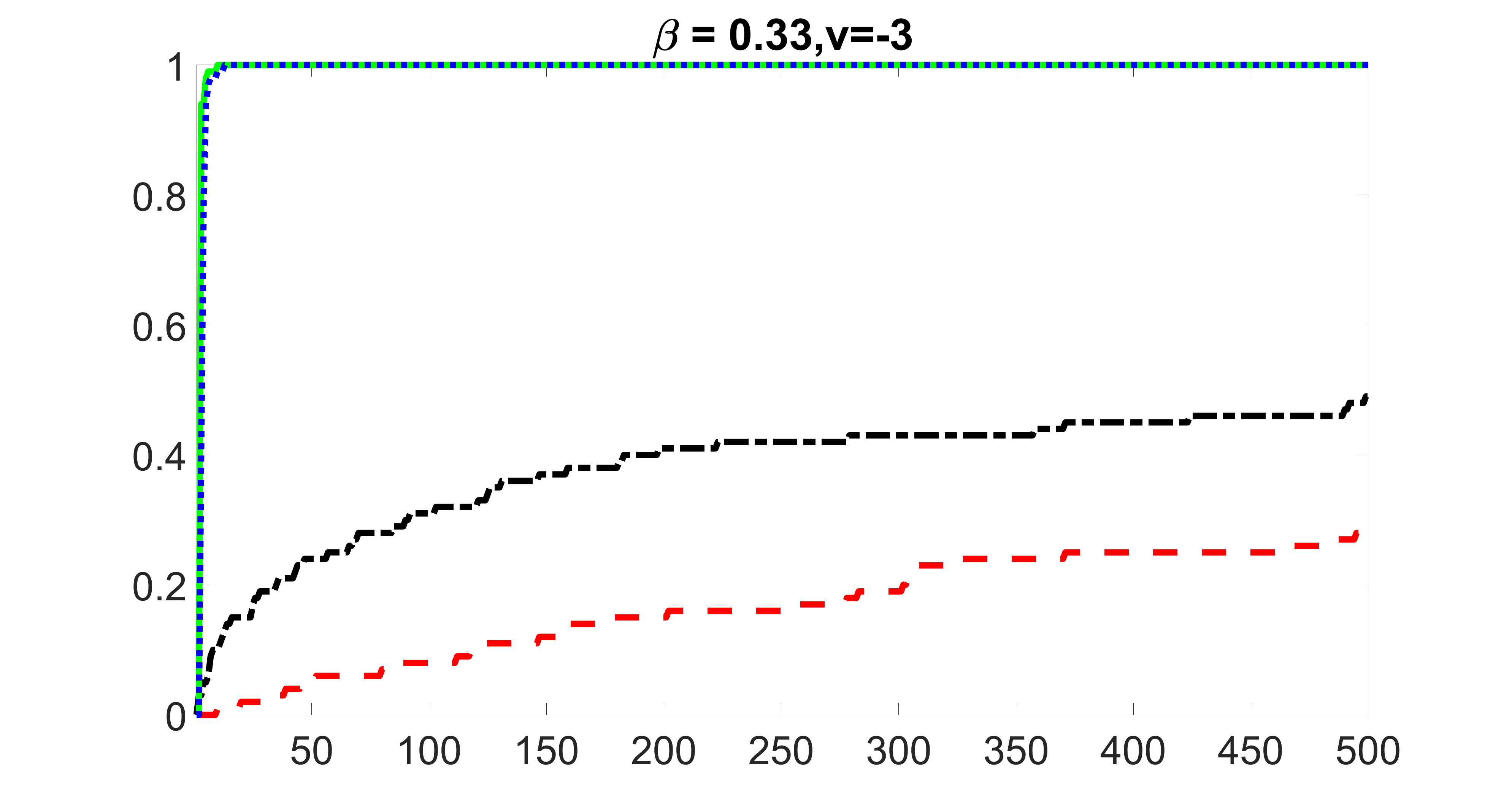}}
  \subcaptionbox{\footnotesize Precision: strong \\ outcome, zero exposure}[0.45\linewidth]
 {\includegraphics[width=6cm,height=3.5cm]{./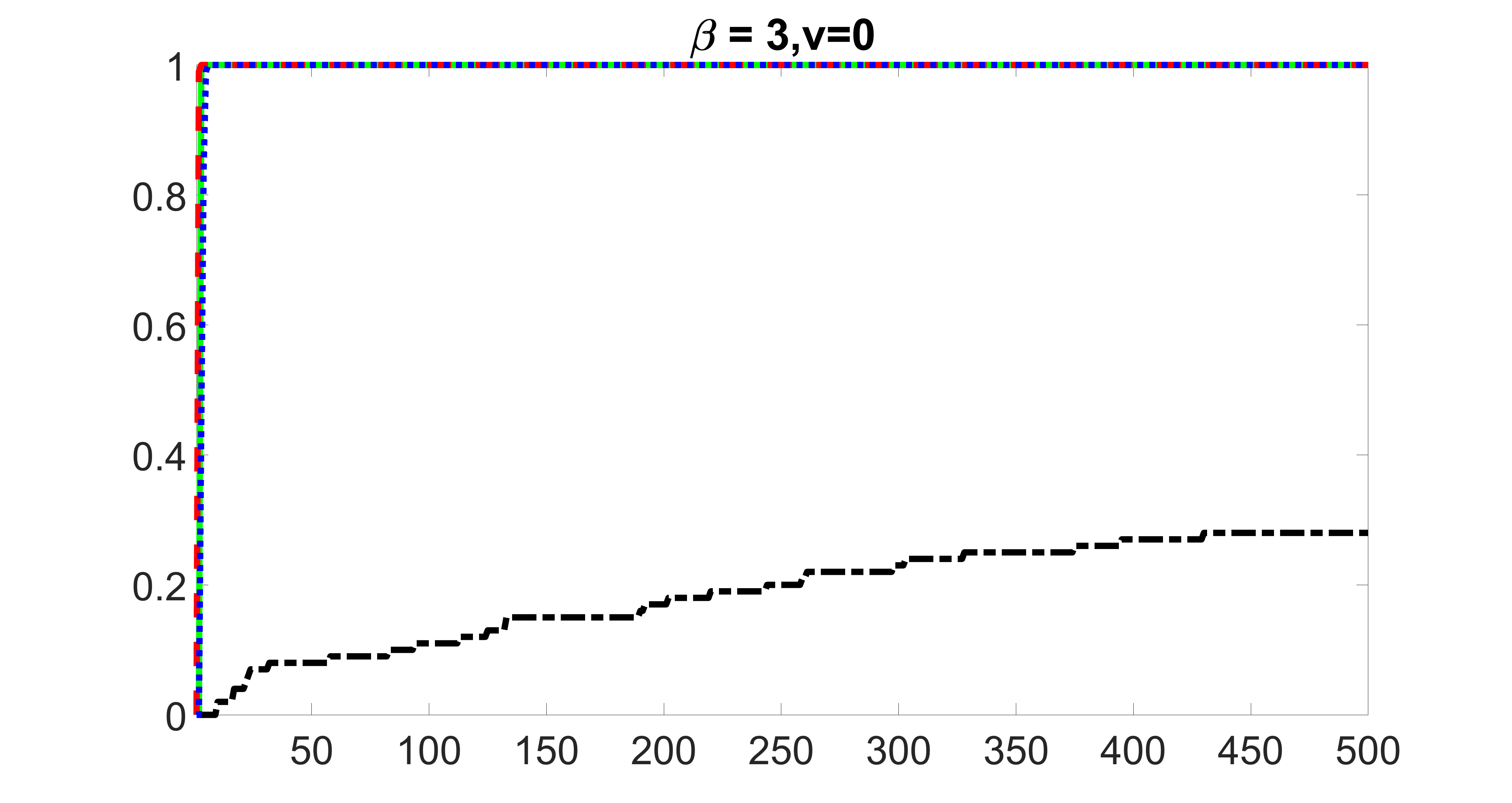}}
  \subcaptionbox{\footnotesize Precision: medium \\ outcome, zero exposure}[0.45\linewidth]
 {\includegraphics[width=6cm,height=3.5cm]{./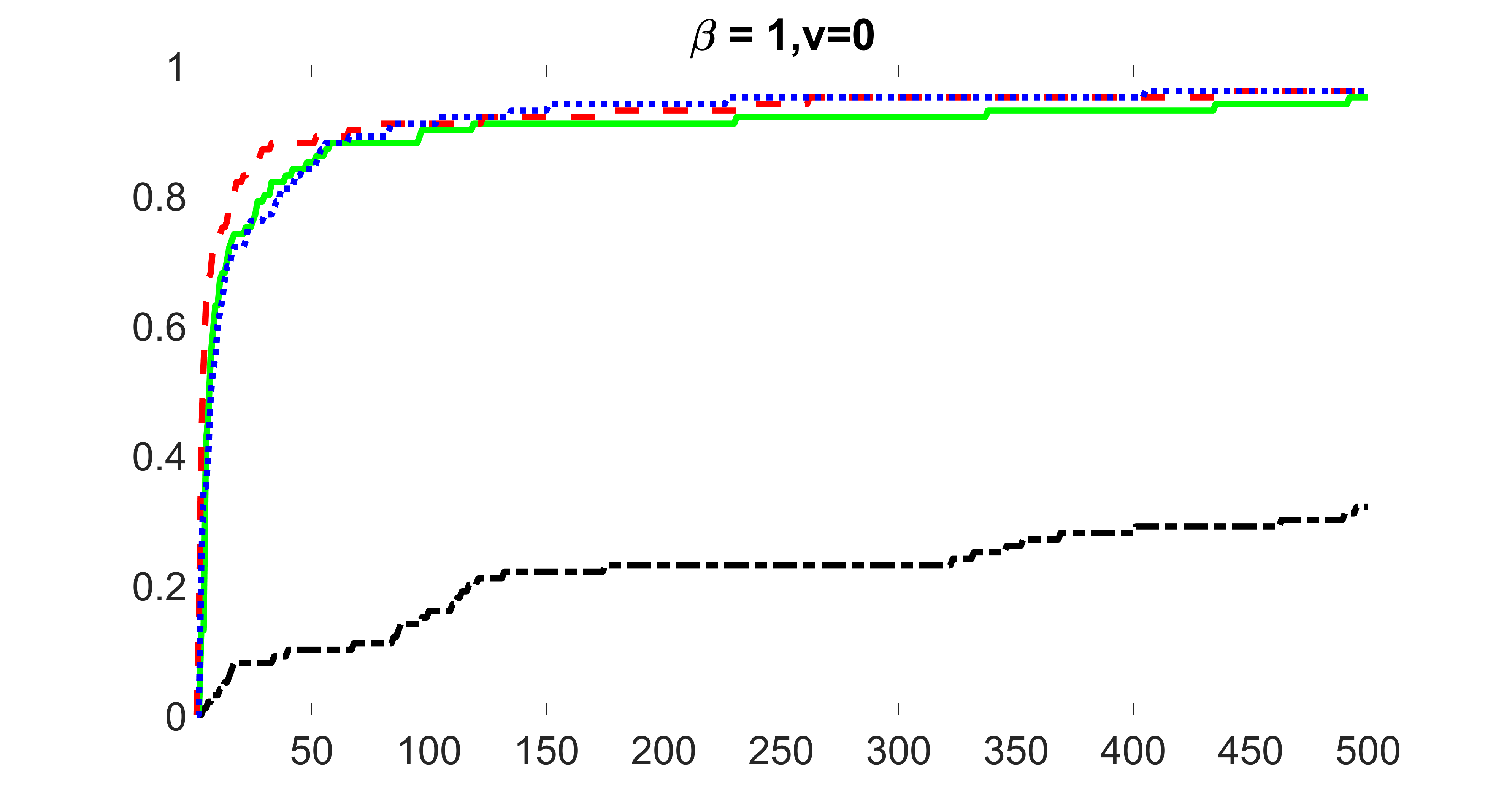}}
  \subcaptionbox{\footnotesize Precision: weak \\ outcome, zero exposure}[0.45\linewidth]
 {\includegraphics[width=6cm,height=3.5cm]{./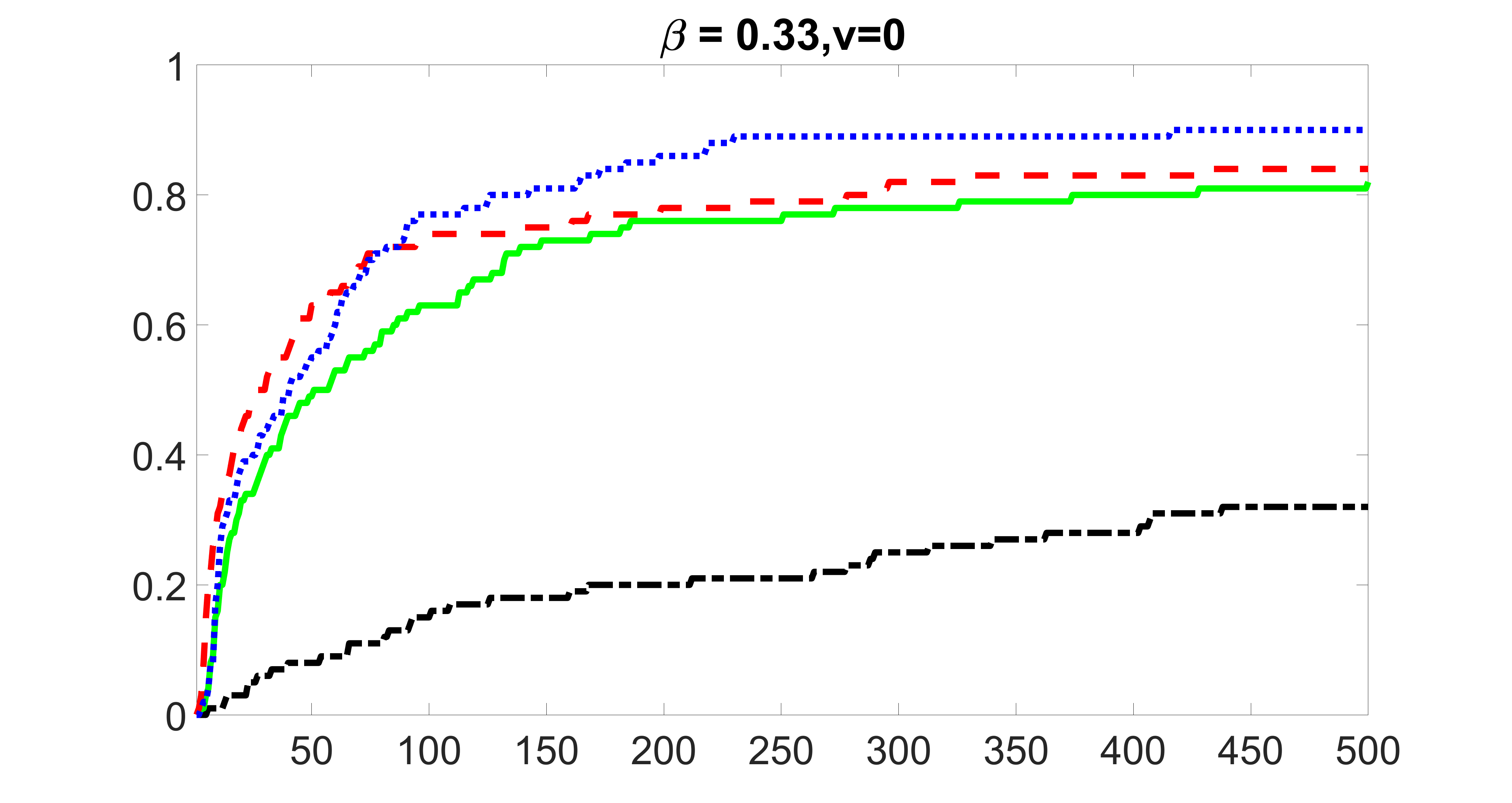}}
 \subcaptionbox{\footnotesize Precision: weaker \\ outcome, zero exposure}[0.45\linewidth]
 {\includegraphics[width=6cm,height=3.5cm]{./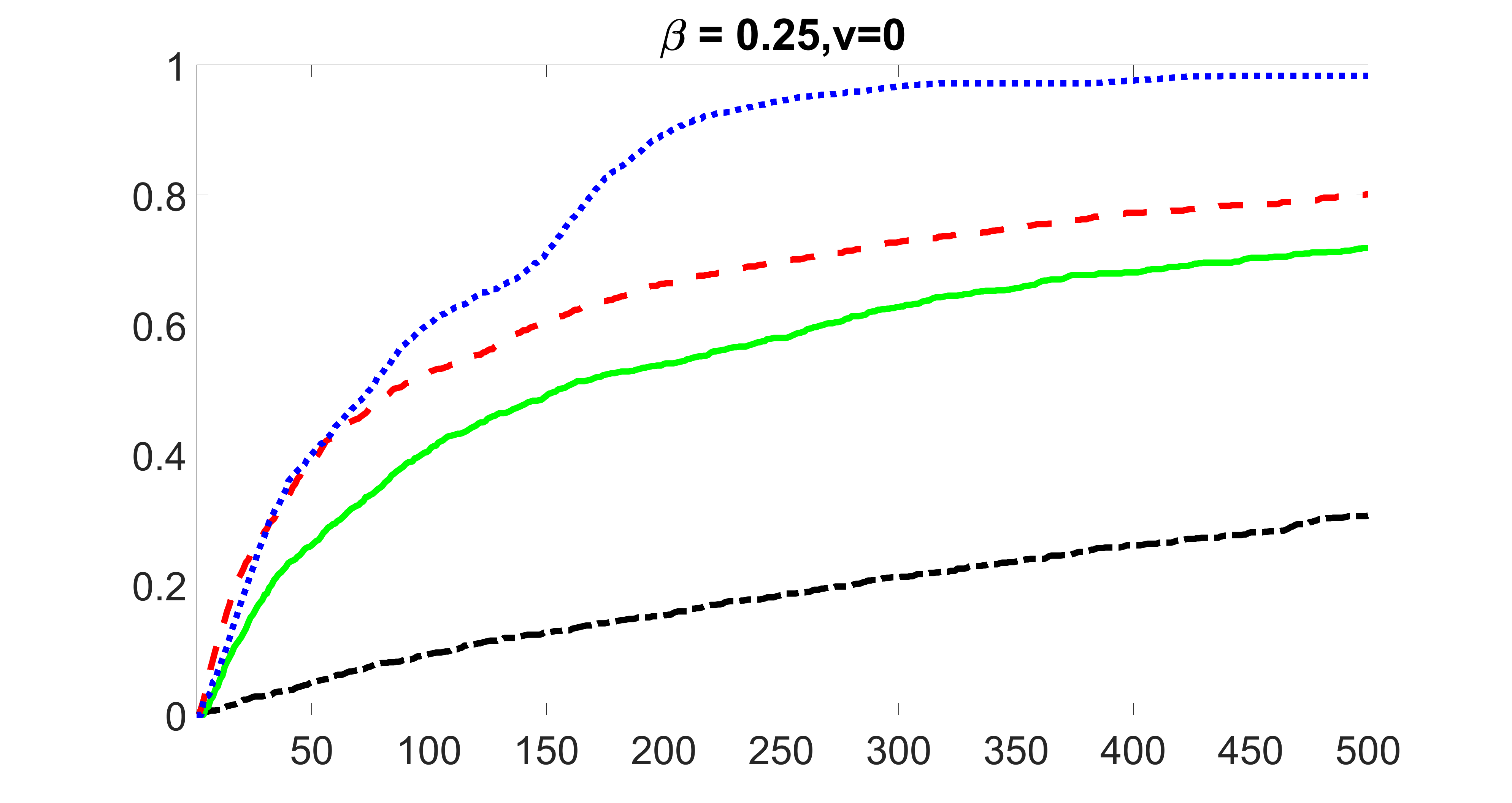}}
  \subcaptionbox{Overall coverage of $\mathcal{M}_1$}[0.45\linewidth]
 {\includegraphics[width=6cm,height=3.5cm]{./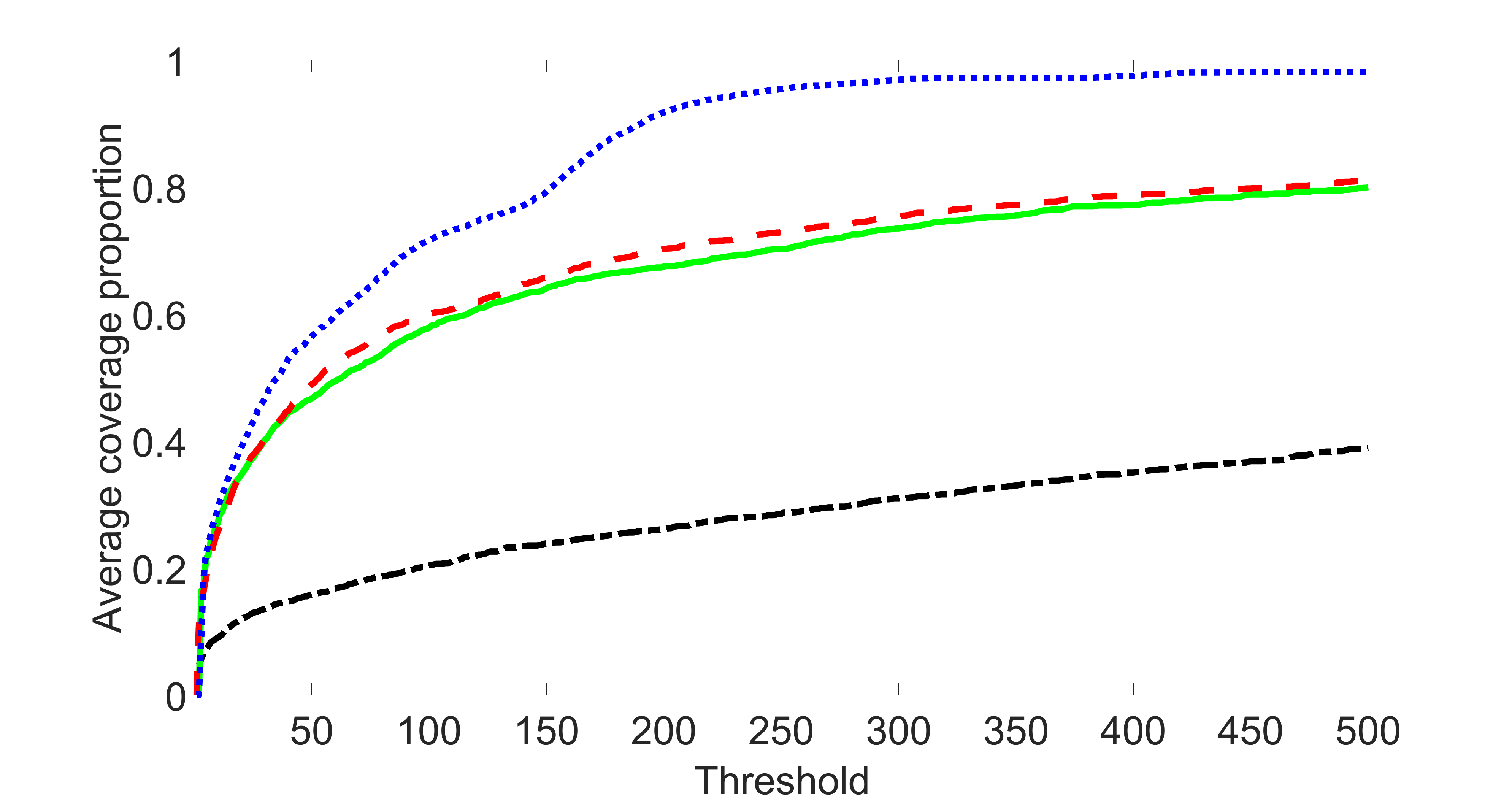}}
\caption{ Simulation results for the case $(n,s,K,\sigma) = (200,5000,12,1)$: Panels (a) -- (g) plot the average coverage proportion for $X_l$, where $l \in \mathcal{M}_1 =  \{1,2,3,104,105, 106\} \cup \mathcal{P}_{LD}$. Panels (a) -- (c) correspond to strong outcome and weak exposure predictor, moderate outcome and moderate exposure predictor and weak outcome and strong exposure predictor; Panels (d) -- (g) correspond to strong, moderate, and weak predictors of outcome only. Panel (g) plots the average coverage proportion for the index set $\mathcal{P}_{LD}$. Panel (h) plots the average coverage proportion for the index set $\mathcal{M}_1$. The x-axis represents the size of $\widehat{\mathcal{M}} $, while
y-axis denotes the average proportion. The blue dot, green solid, red dashed and black dash dotted lines denote the blockwise joint screening, joint screening, outcome screening, and intersection screening methods, respectively.}
\label{sim3step1n200sizesig12sigma1}
\end{figure}

\begin{figure}[htbp]
\captionsetup[subfigure]{justification=centering}
\centering
 \subcaptionbox{\footnotesize Confounder: strong \\ outcome, weak exposure}[0.45\linewidth]
 {\includegraphics[width=6cm,height=3.5cm]{./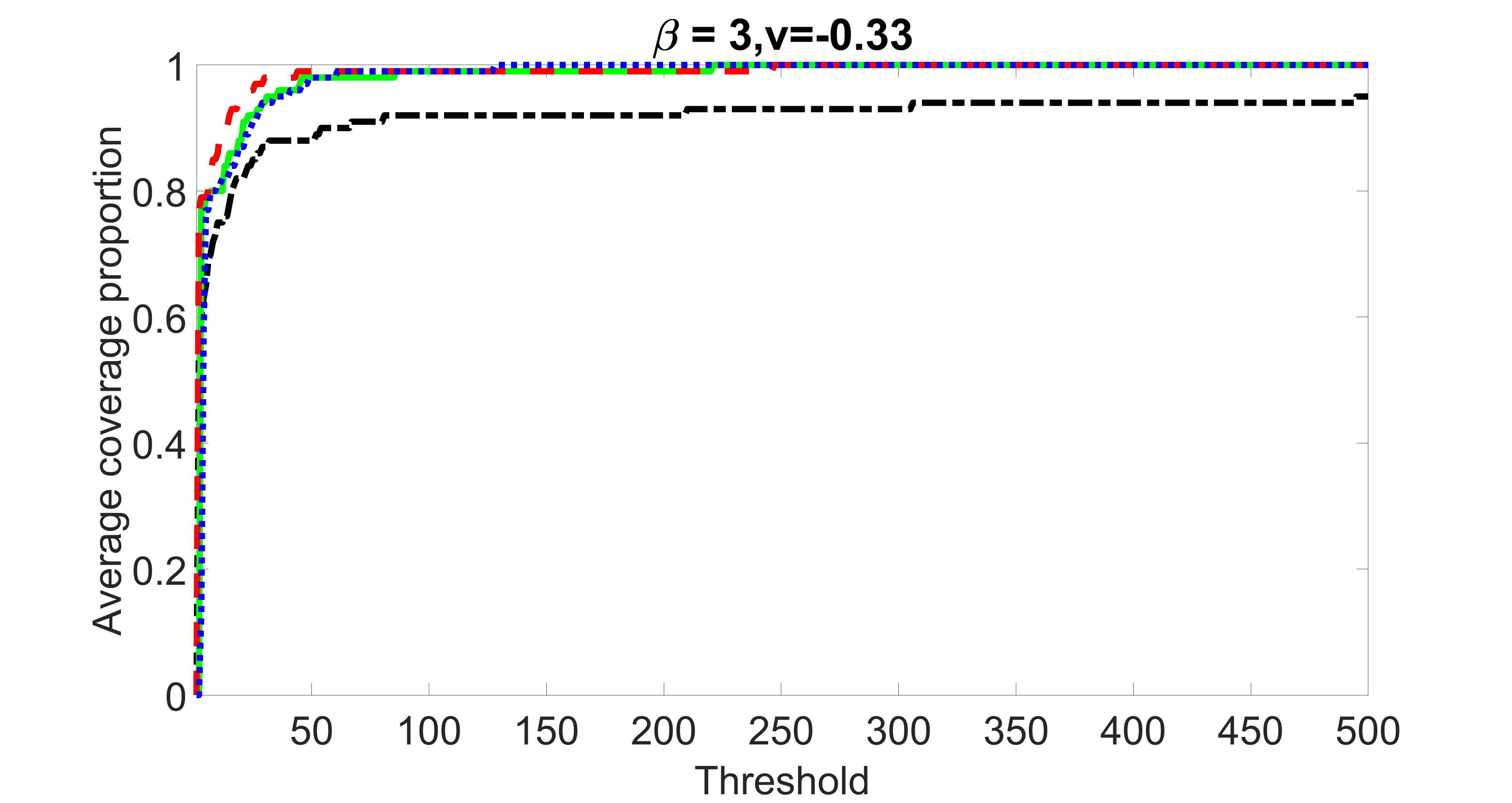}}
 \subcaptionbox{\footnotesize Confounder: medium \\ outcome, medium exposure}[0.45\linewidth]
 {\includegraphics[width=6cm,height=3.5cm]{./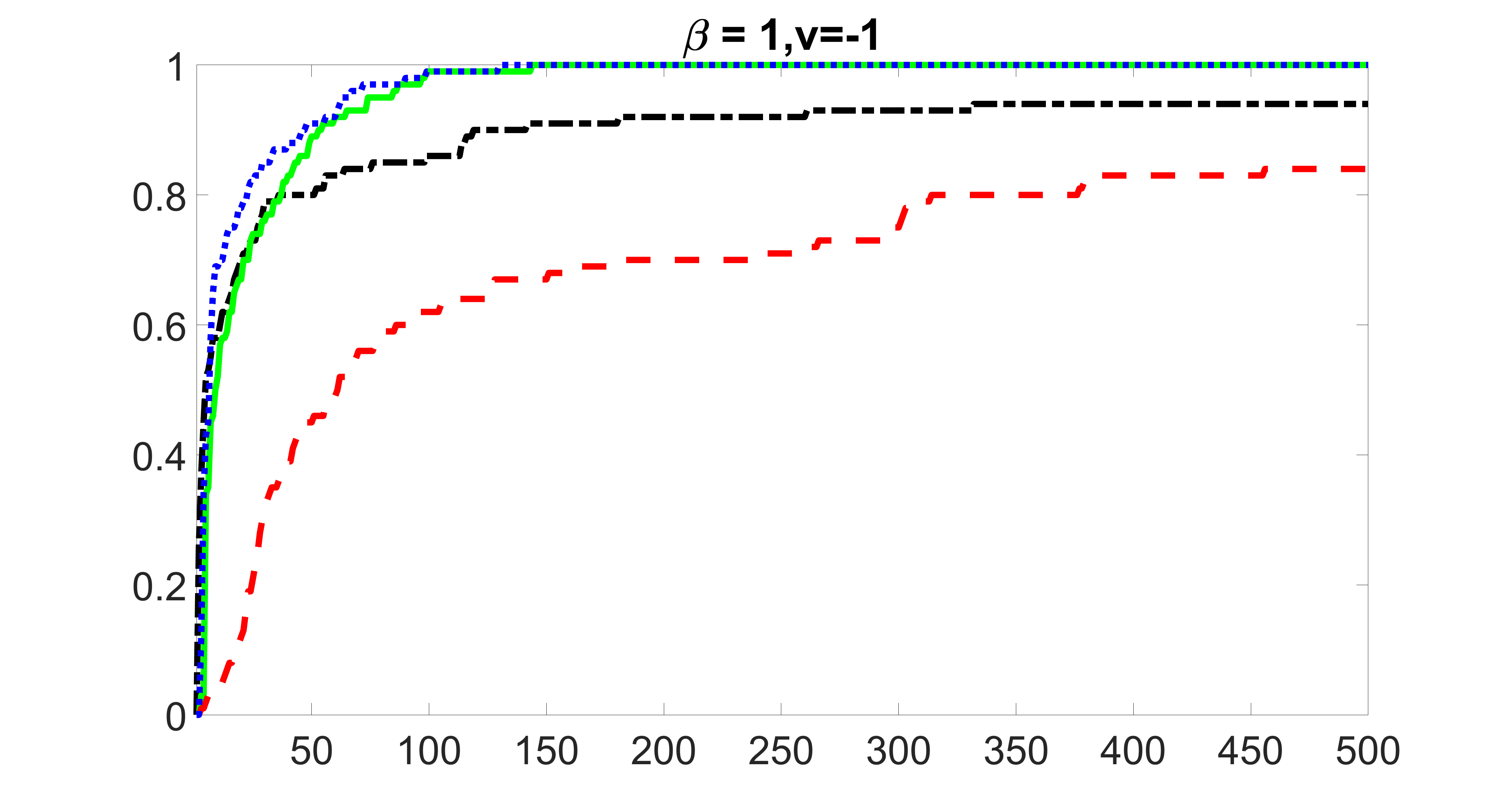}}
  \subcaptionbox{\footnotesize Confounder: weak \\ outcome, strong exposure}[0.45\linewidth]
 {\includegraphics[width=6cm,height=3.5cm]{./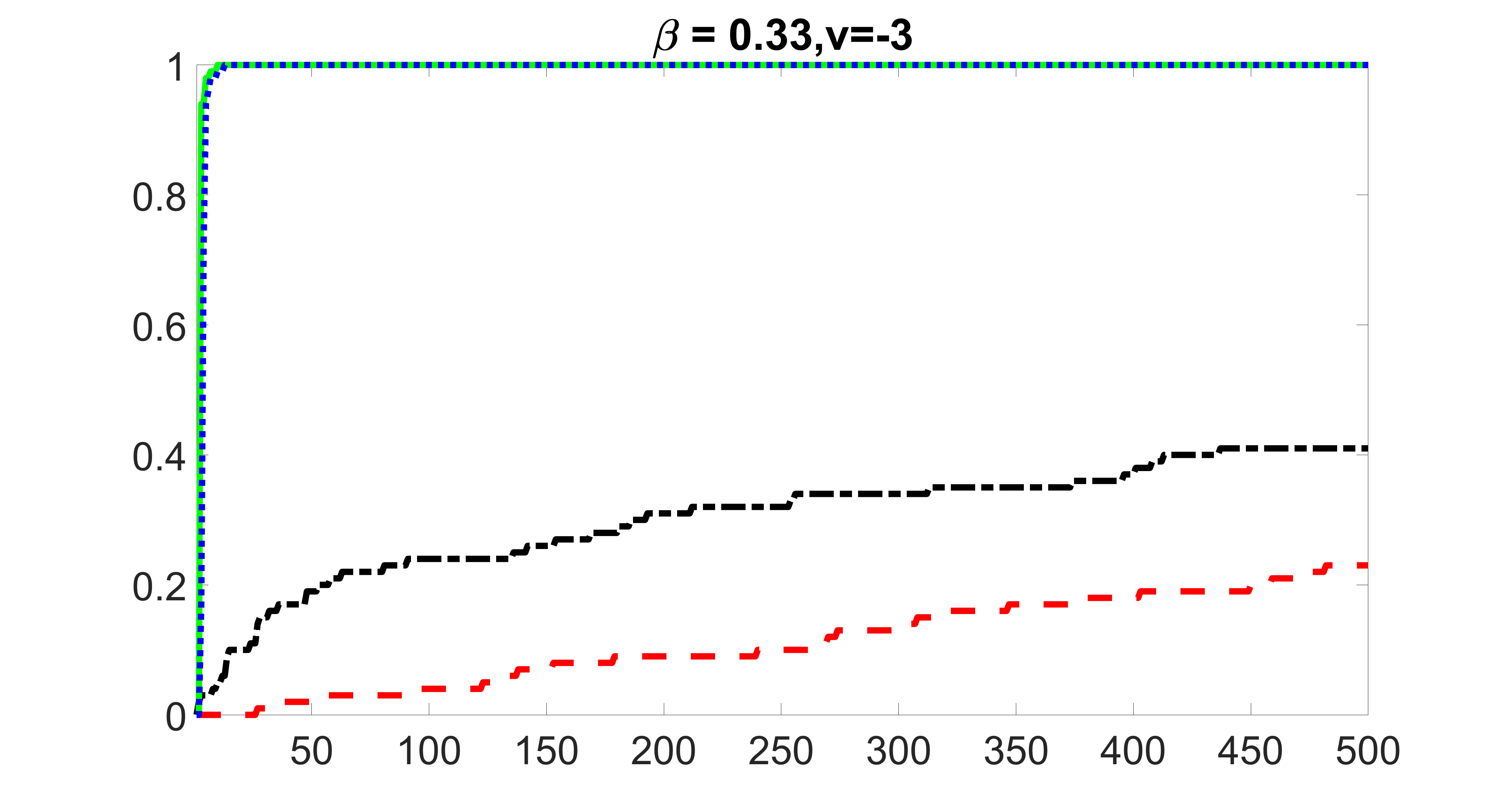}}
  \subcaptionbox{\footnotesize Precision: strong \\ outcome, zero exposure}[0.45\linewidth]
 {\includegraphics[width=6cm,height=3.5cm]{./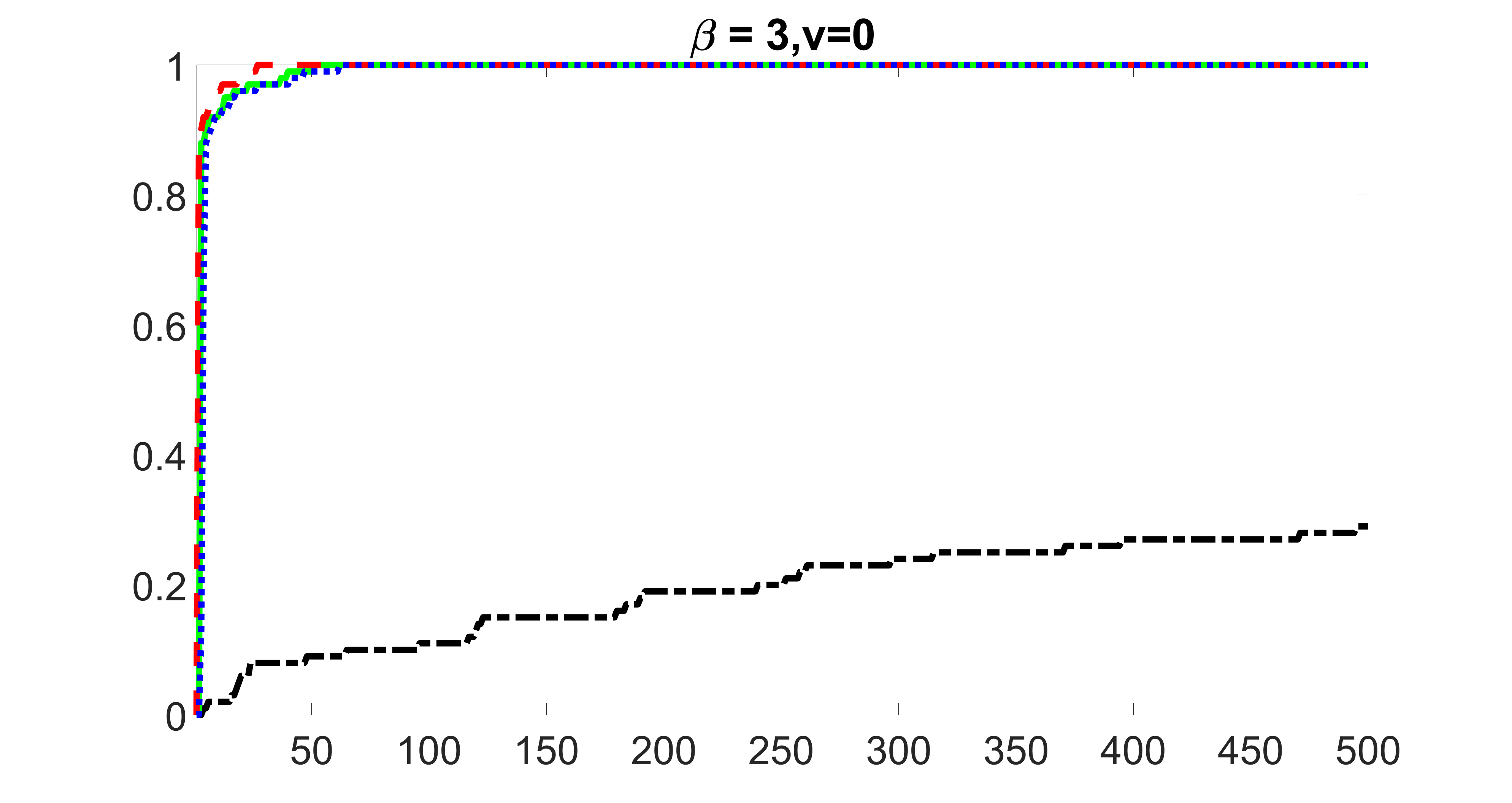}}
  \subcaptionbox{\footnotesize Precision: medium \\ outcome, zero exposure}[0.45\linewidth]
 {\includegraphics[width=6cm,height=3.5cm]{./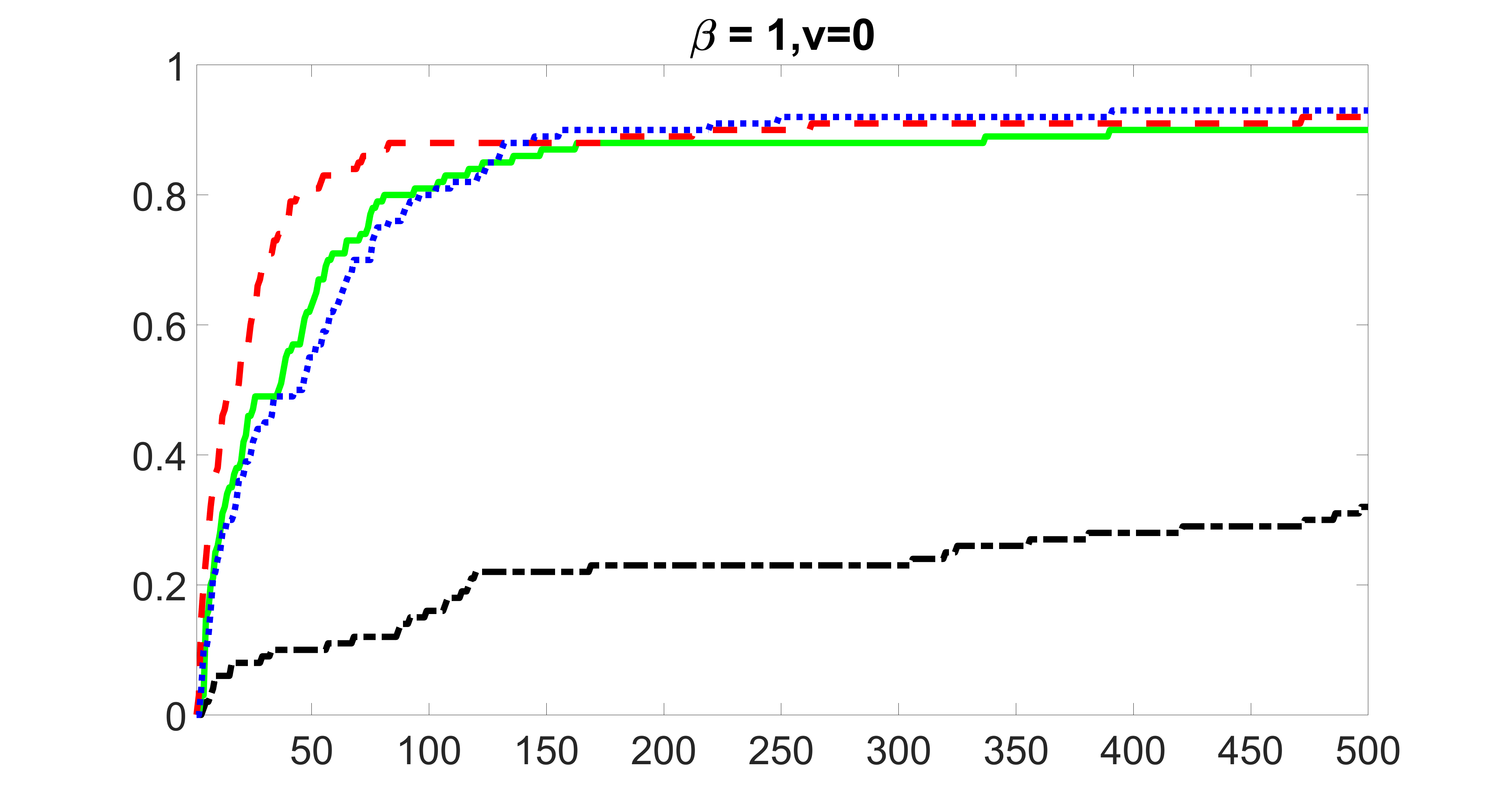}}
  \subcaptionbox{\footnotesize Precision: weak \\ outcome, zero exposure}[0.45\linewidth]
 {\includegraphics[width=6cm,height=3.5cm]{./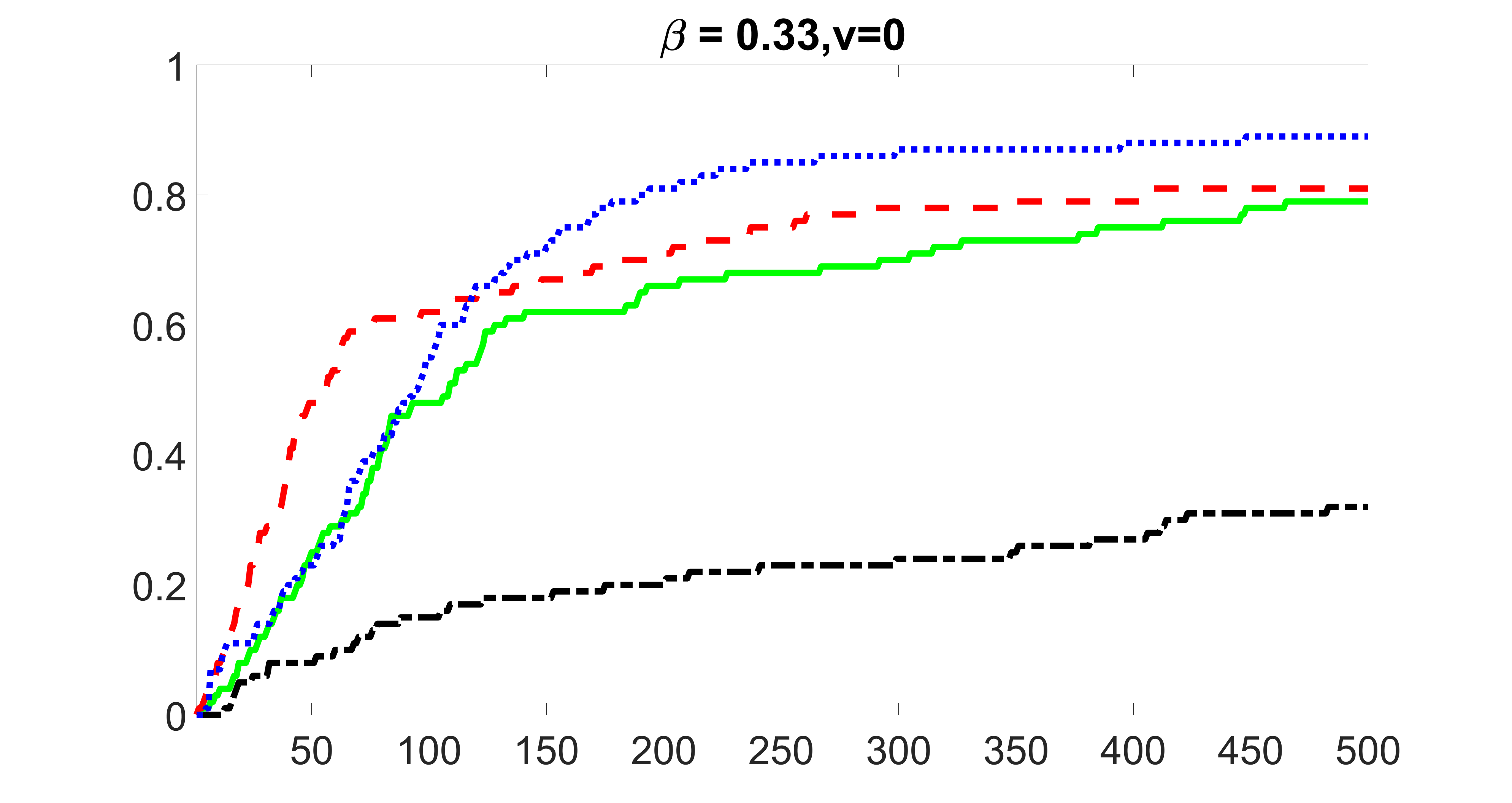}}
 \subcaptionbox{\footnotesize Precision: weaker \\ outcome, zero exposure}[0.45\linewidth]
 {\includegraphics[width=6cm,height=3.5cm]{./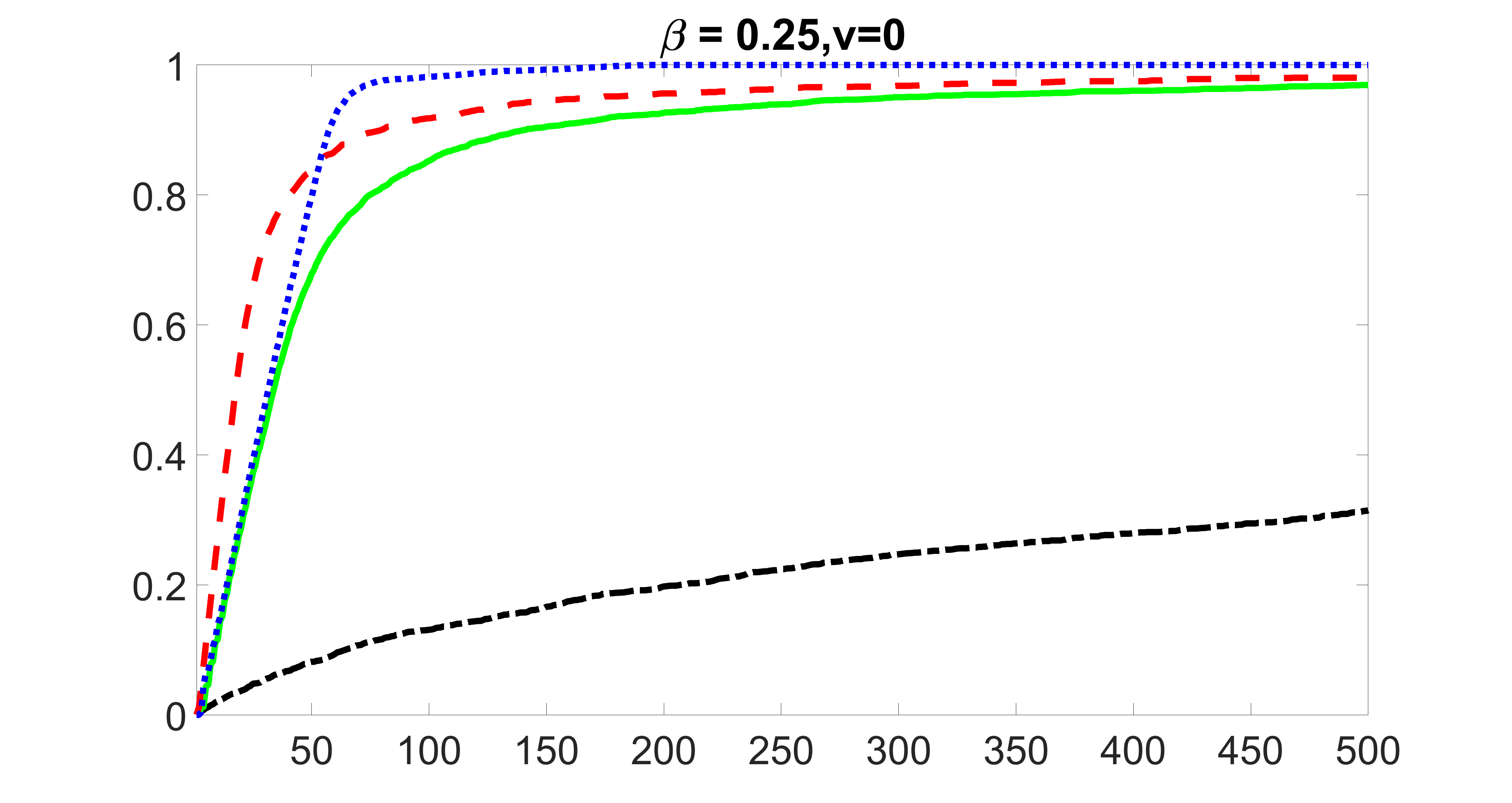}}
  \subcaptionbox{Overall coverage of $\mathcal{M}_1$}[0.45\linewidth]
 {\includegraphics[width=6cm,height=3.5cm]{./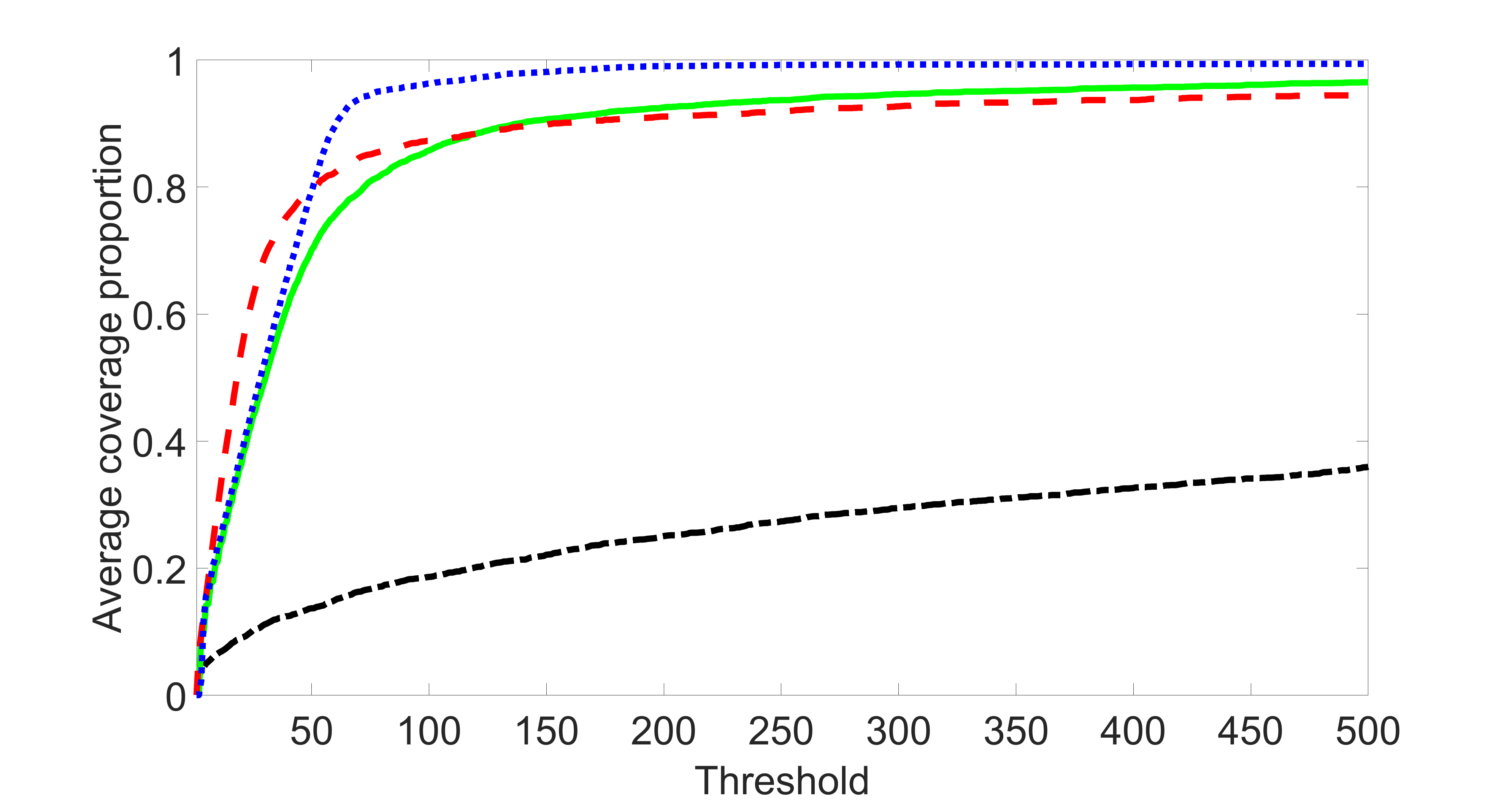}}
\caption{ Simulation results for the case $(n,s,K,\sigma) = (200,5000,24,1)$: Panels (a) -- (g) plot the average coverage proportion for $X_l$, where $l \in \mathcal{M}_1 =  \{1,2,3,104,105, 106\} \cup \mathcal{P}_{LD}$. Panels (a) -- (c) correspond to strong outcome and weak exposure predictor, moderate outcome and moderate exposure predictor and weak outcome and strong exposure predictor; Panels (d) -- (g) correspond to strong, moderate, and weak predictors of outcome only. Panel (g) plots the average coverage proportion for the index set $\mathcal{P}_{LD}$. Panel (h) plots the average coverage proportion for the index set $\mathcal{M}_1$. The x-axis represents the size of $\widehat{\mathcal{M}} $, while
y-axis denotes the average proportion. The blue dot, green solid, red dashed and black dash dotted lines denote the blockwise joint screening, joint screening, outcome screening, and intersection screening methods, respectively.}
\label{sim3step1n200sizesig24sigma1}
\end{figure}

\begin{figure}[htbp]
\captionsetup[subfigure]{justification=centering}
\centering
 \subcaptionbox{\footnotesize Confounder: strong \\ outcome, weak exposure}[0.45\linewidth]
 {\includegraphics[width=6cm,height=3.5cm]{./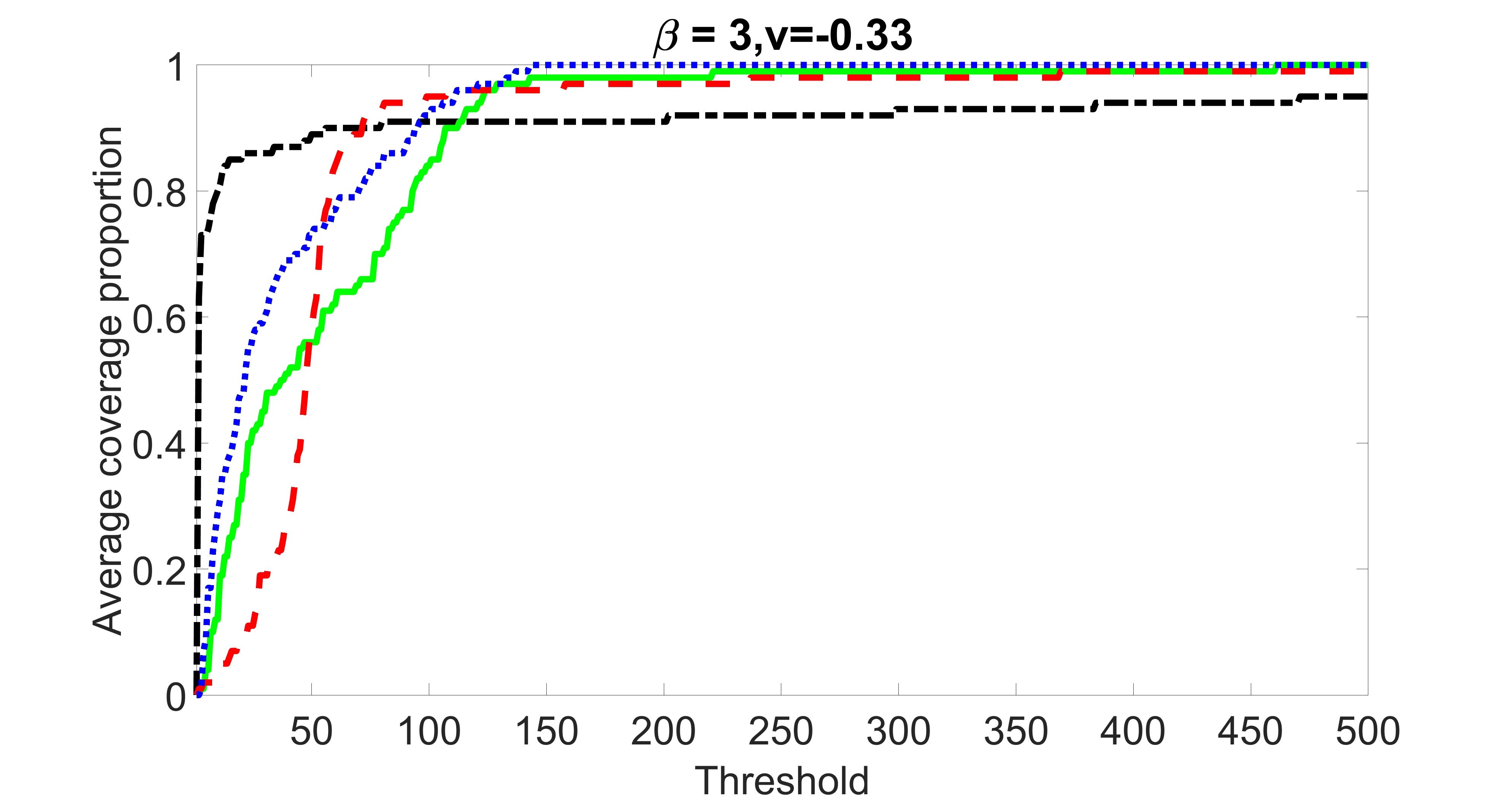}}
 \subcaptionbox{\footnotesize Confounder: medium \\ outcome, medium exposure}[0.45\linewidth]
 {\includegraphics[width=6cm,height=3.5cm]{./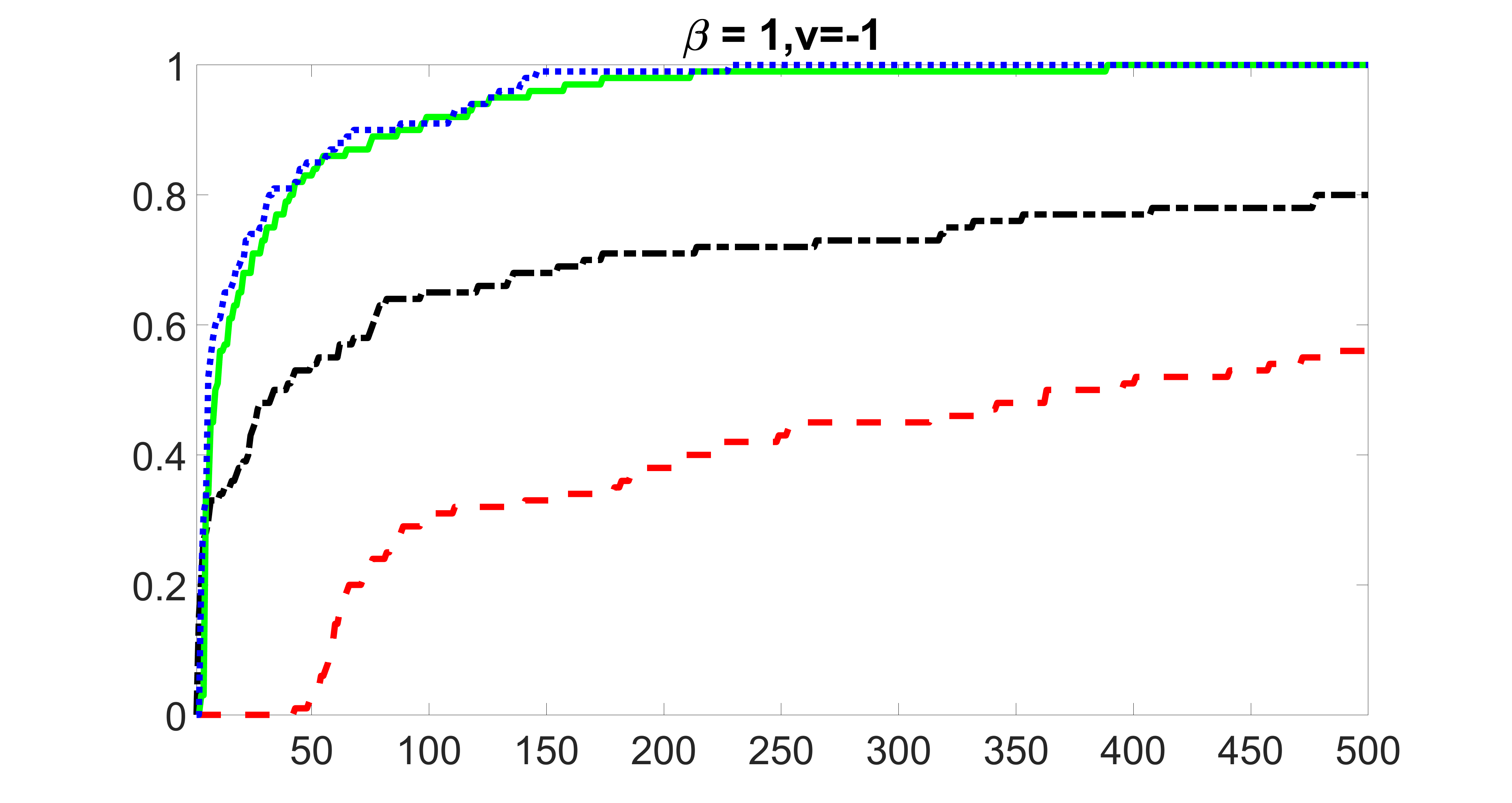}}
  \subcaptionbox{\footnotesize Confounder: weak \\ outcome, strong exposure}[0.45\linewidth]
 {\includegraphics[width=6cm,height=3.5cm]{./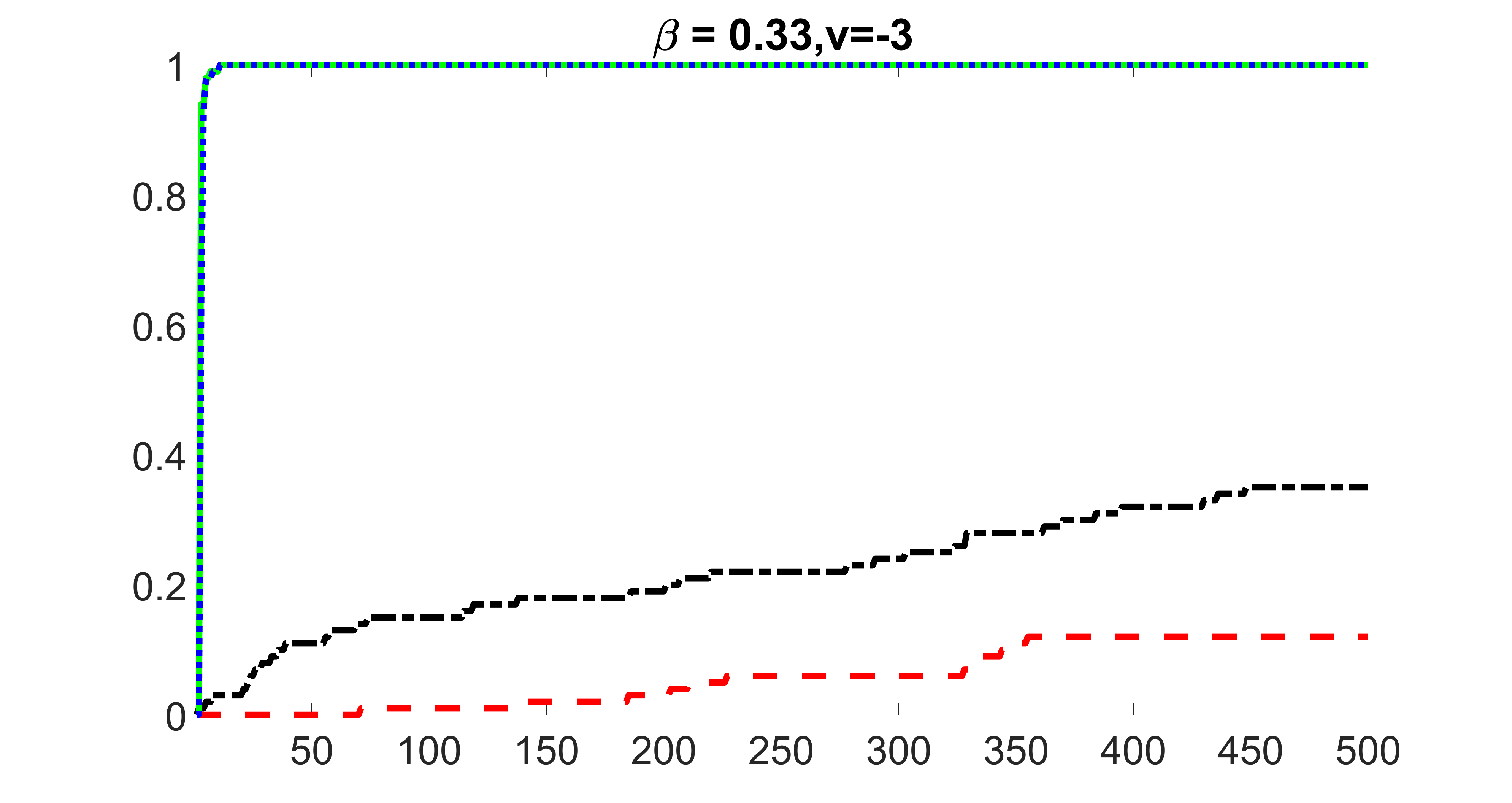}}
  \subcaptionbox{\footnotesize Precision: strong \\ outcome, zero exposure}[0.45\linewidth]
 {\includegraphics[width=6cm,height=3.5cm]{./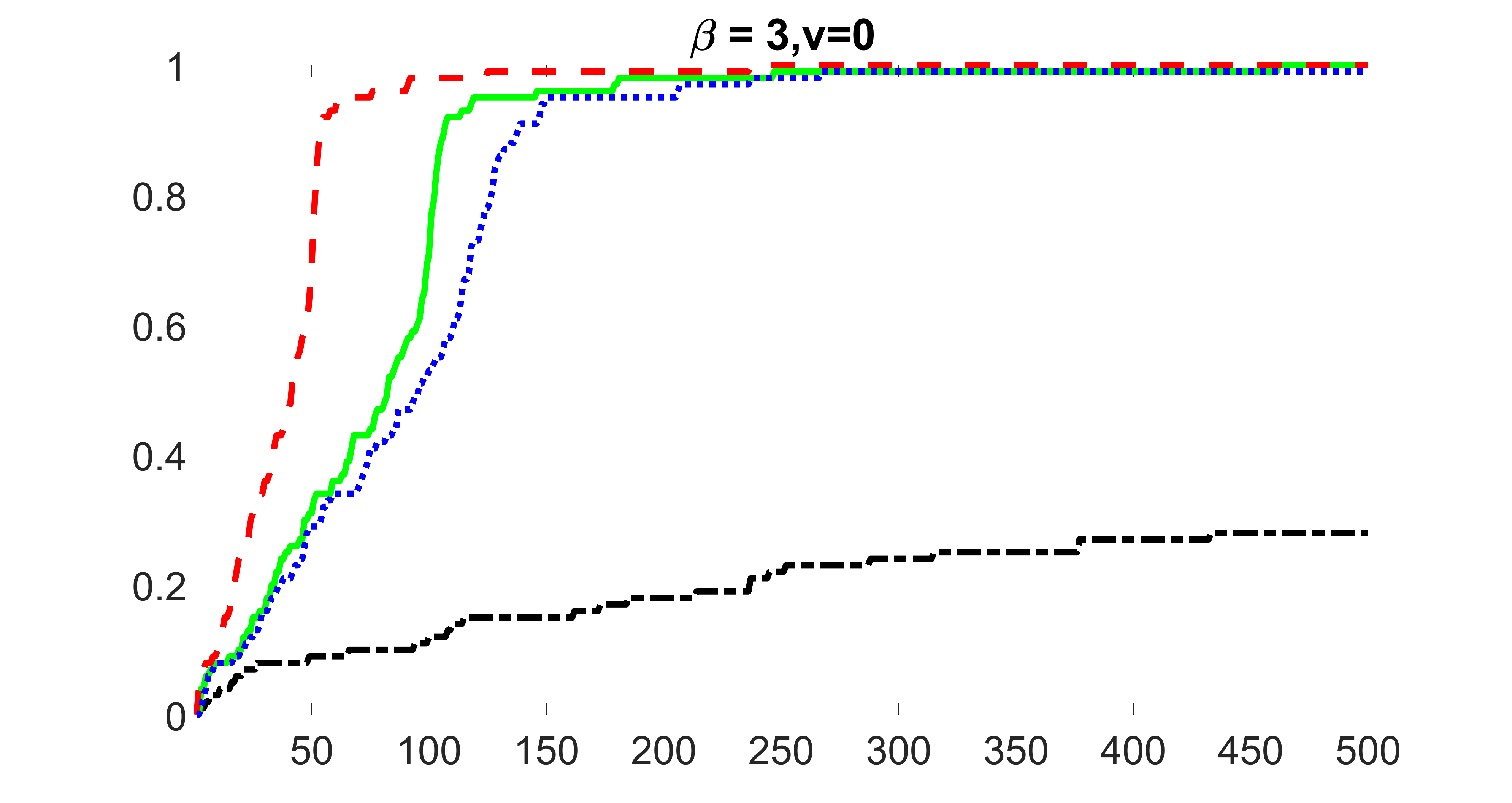}}
  \subcaptionbox{\footnotesize Precision: medium \\ outcome, zero exposure}[0.45\linewidth]
 {\includegraphics[width=6cm,height=3.5cm]{./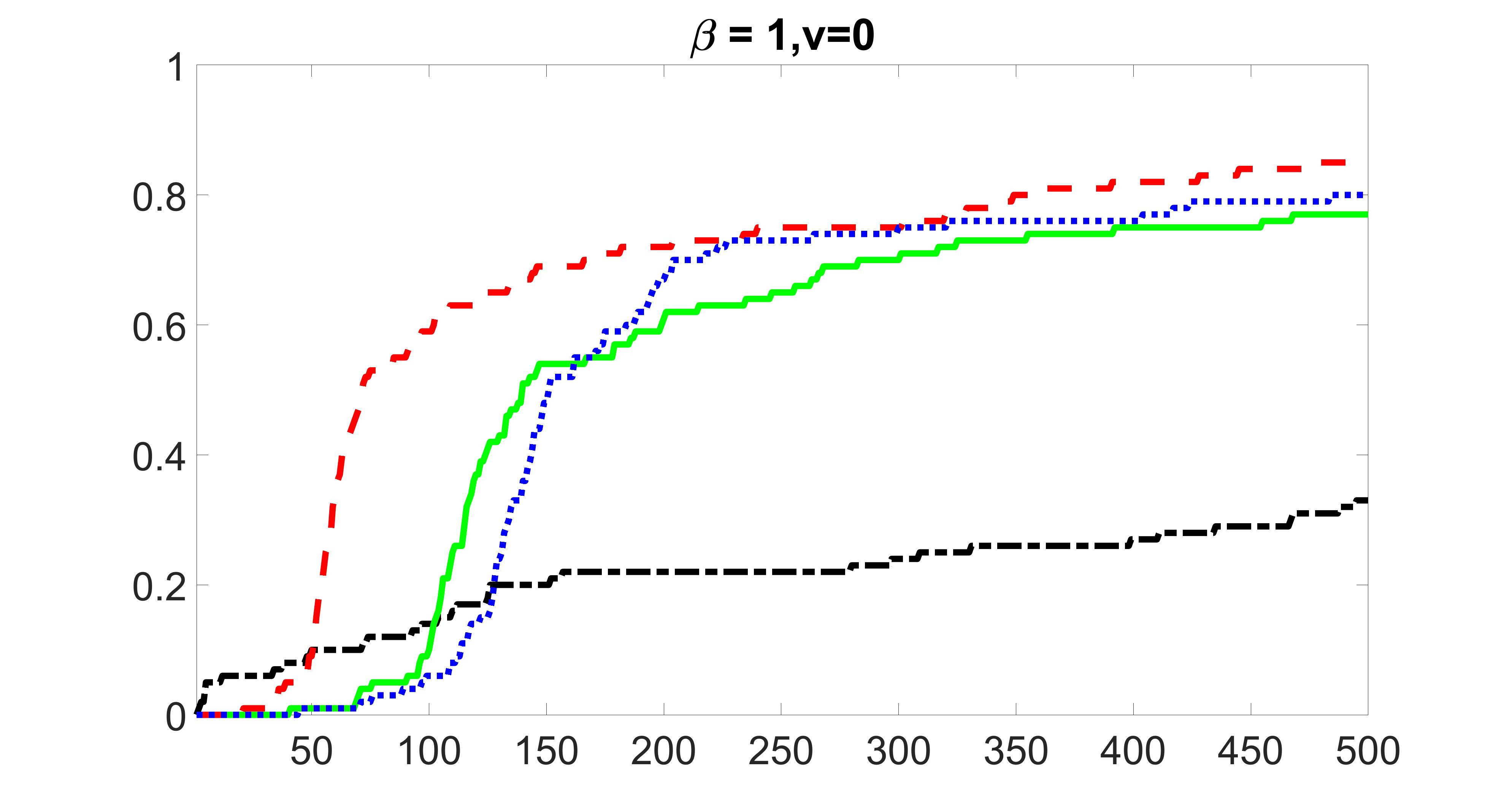}}
  \subcaptionbox{\footnotesize Precision: weak \\ outcome, zero exposure}[0.45\linewidth]
 {\includegraphics[width=6cm,height=3.5cm]{./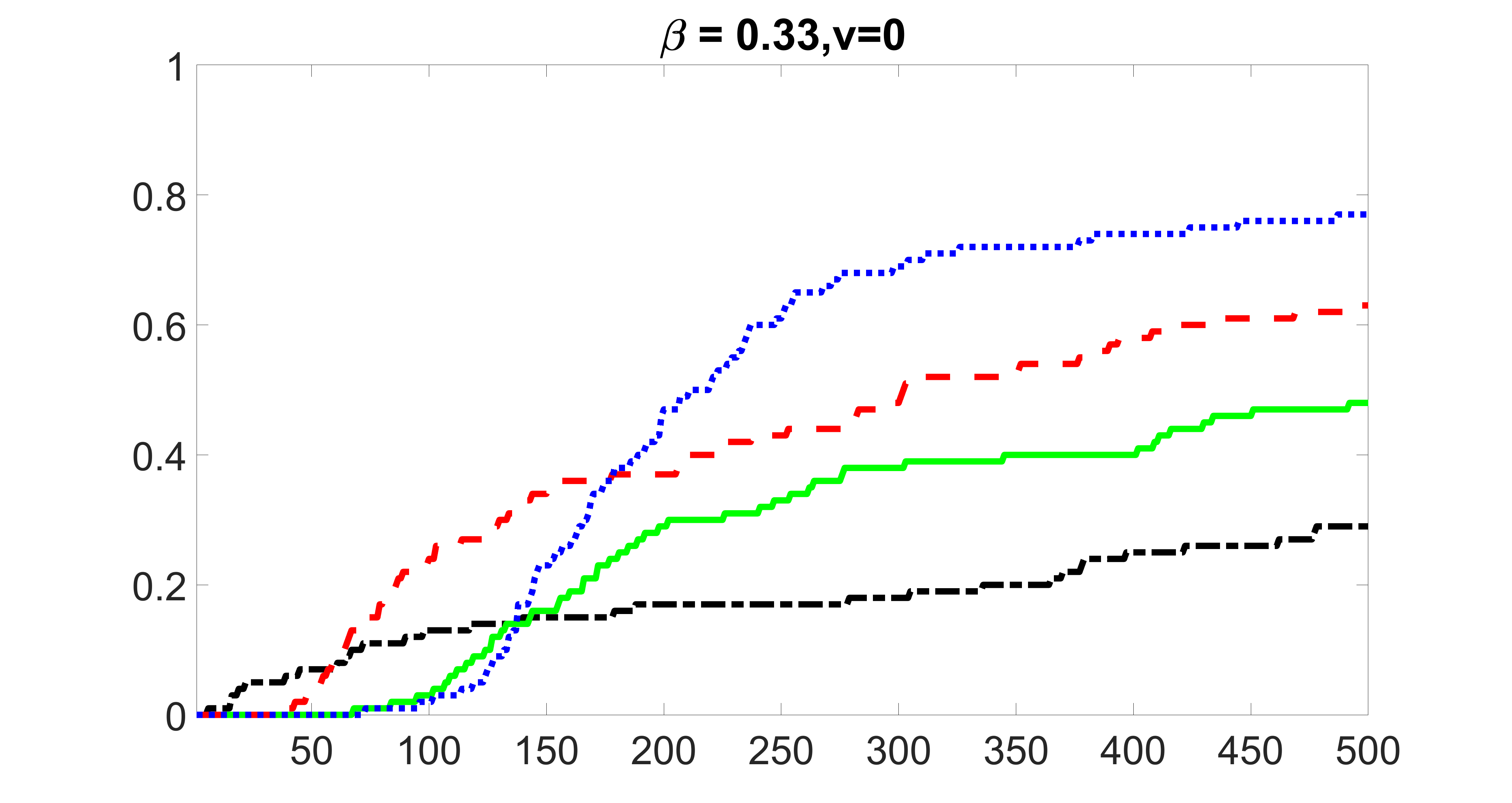}}
 \subcaptionbox{\footnotesize Precision: weaker \\ outcome, zero exposure}[0.45\linewidth]
 {\includegraphics[width=6cm,height=3.5cm]{./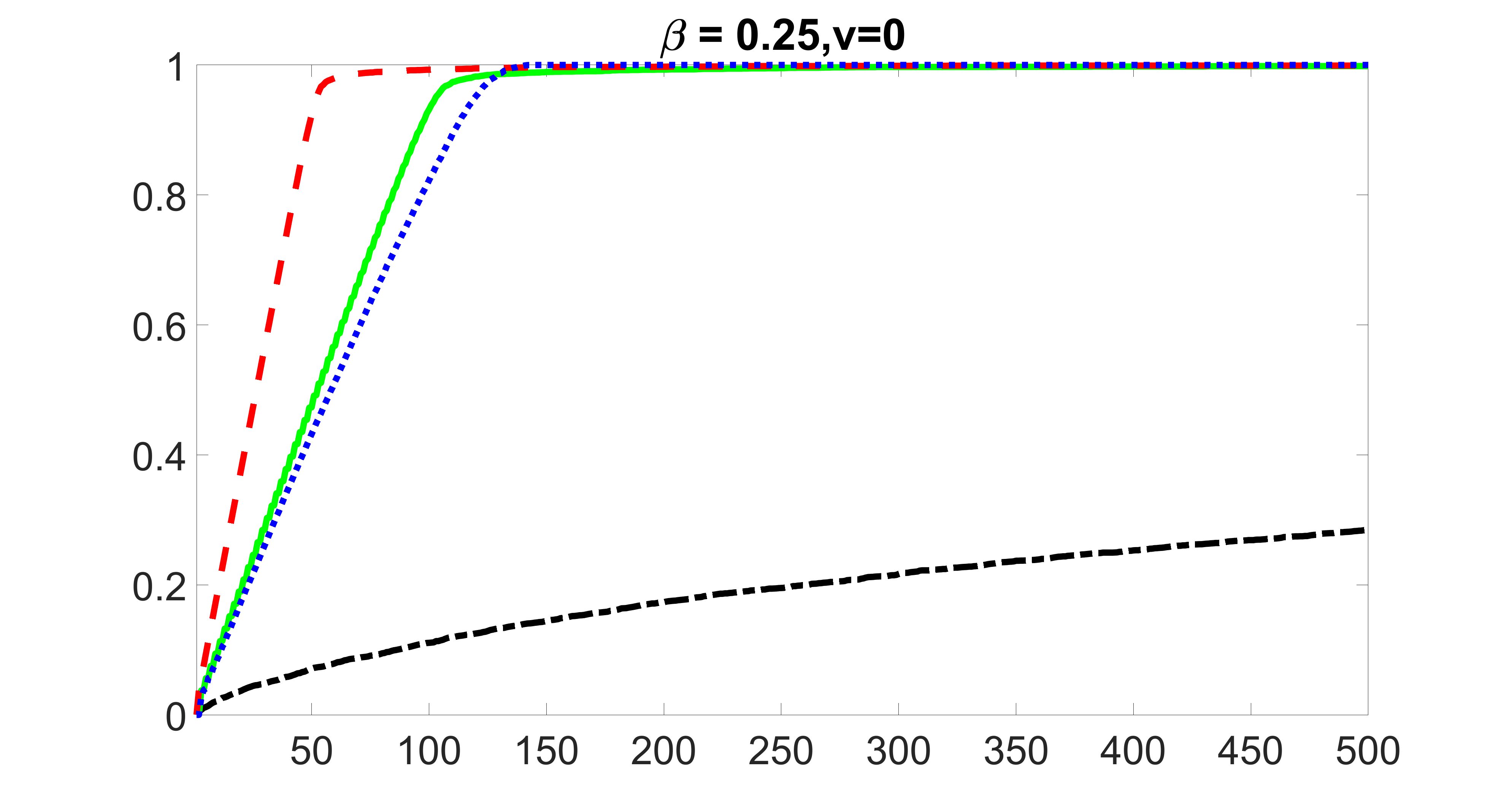}}
  \subcaptionbox{Overall coverage of $\mathcal{M}_1$}[0.45\linewidth]
 {\includegraphics[width=6cm,height=3.5cm]{./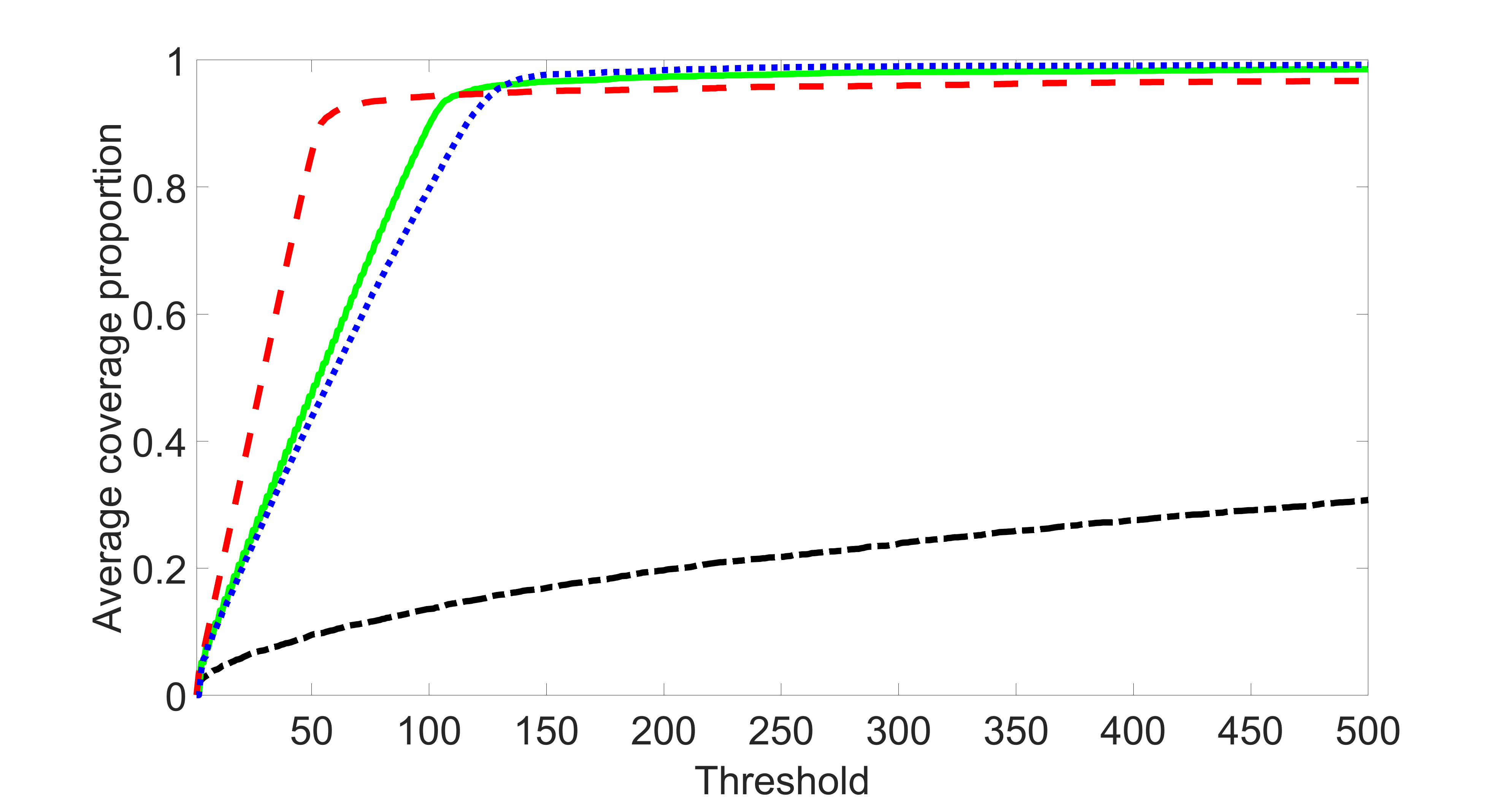}}
\caption{ Simulation results for the case $(n,s,K,\sigma) = (200,5000,52,1)$: Panels (a) -- (g) plot the average coverage proportion for $X_l$, where $l \in \mathcal{M}_1 =  \{1,2,3,104,105, 106\} \cup \mathcal{P}_{LD}$. Panels (a) -- (c) correspond to strong outcome and weak exposure predictor, moderate outcome and moderate exposure predictor and weak outcome and strong exposure predictor; Panels (d) -- (g) correspond to strong, moderate, and weak predictors of outcome only. Panel (g) plots the average coverage proportion for the index set $\mathcal{P}_{LD}$. Panel (h) plots the average coverage proportion for the index set $\mathcal{M}_1$. The x-axis represents the size of $\widehat{\mathcal{M}} $, while
y-axis denotes the average proportion. The blue dot, green solid, red dashed and black dash dotted lines denote the blockwise joint screening, joint screening, outcome screening, and intersection screening methods, respectively.}
\label{sim3step1n200sizesig52sigma1}
\end{figure}

\begin{figure}[htbp]
\captionsetup[subfigure]{justification=centering}
\centering
 \subcaptionbox{\footnotesize Confounder: strong \\ outcome, weak exposure}[0.45\linewidth]
 {\includegraphics[width=6cm,height=3.5cm]{./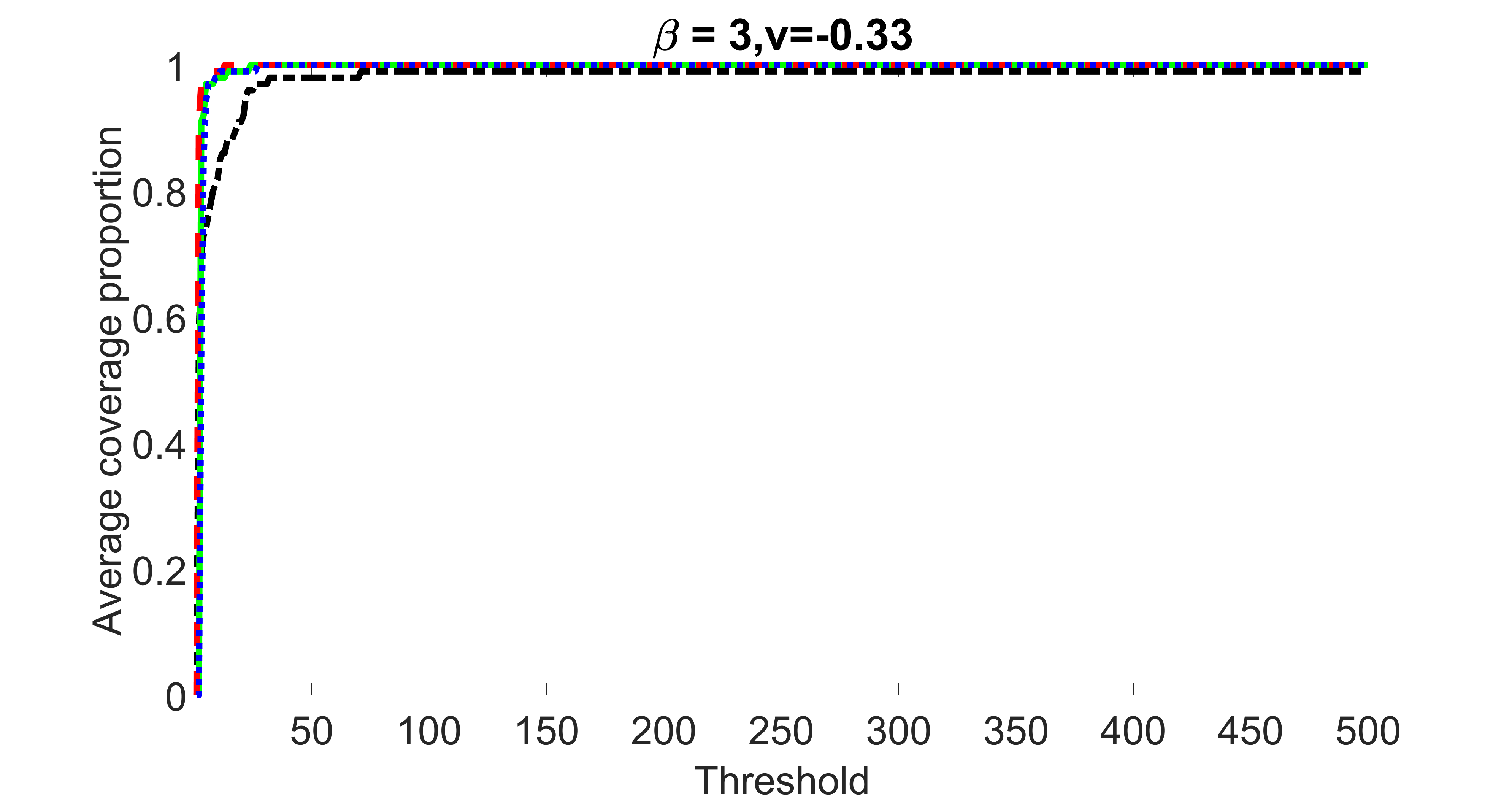}}
 \subcaptionbox{\footnotesize Confounder: medium \\ outcome, medium exposure}[0.45\linewidth]
 {\includegraphics[width=6cm,height=3.5cm]{./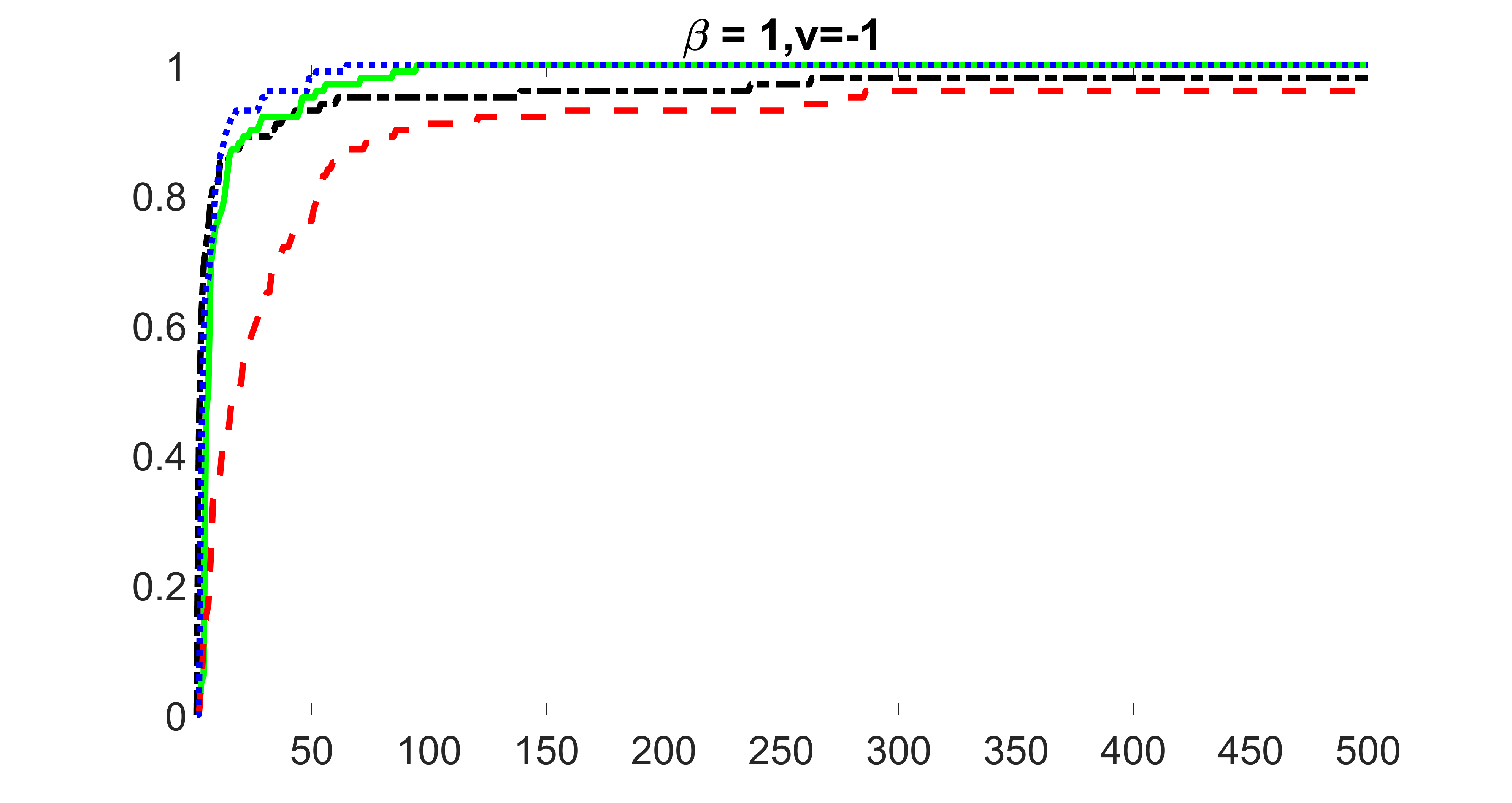}}
  \subcaptionbox{\footnotesize Confounder: weak \\ outcome, strong exposure}[0.45\linewidth]
 {\includegraphics[width=6cm,height=3.5cm]{./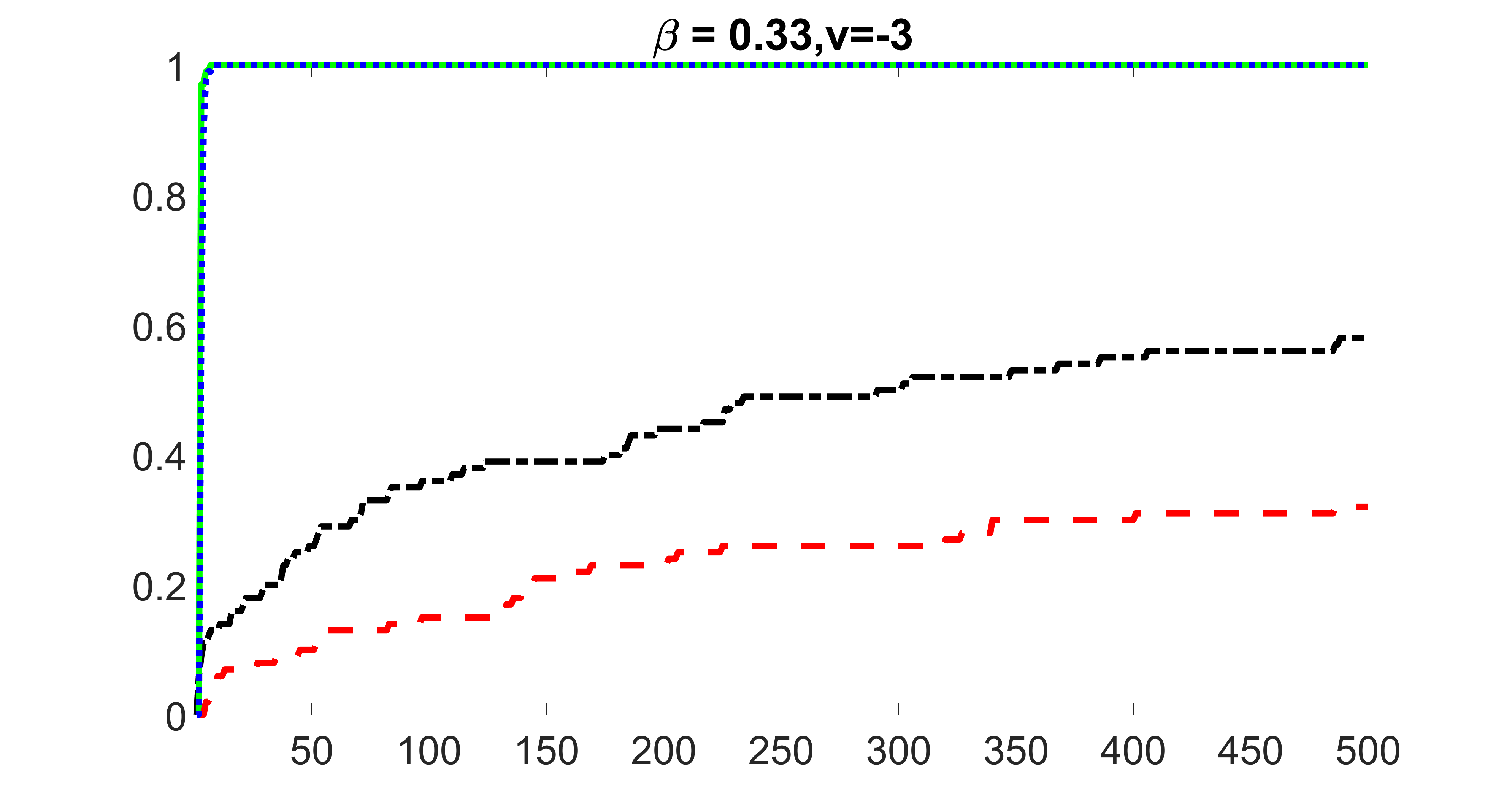}}
  \subcaptionbox{\footnotesize Precision: strong \\ outcome, zero exposure}[0.45\linewidth]
 {\includegraphics[width=6cm,height=3.5cm]{./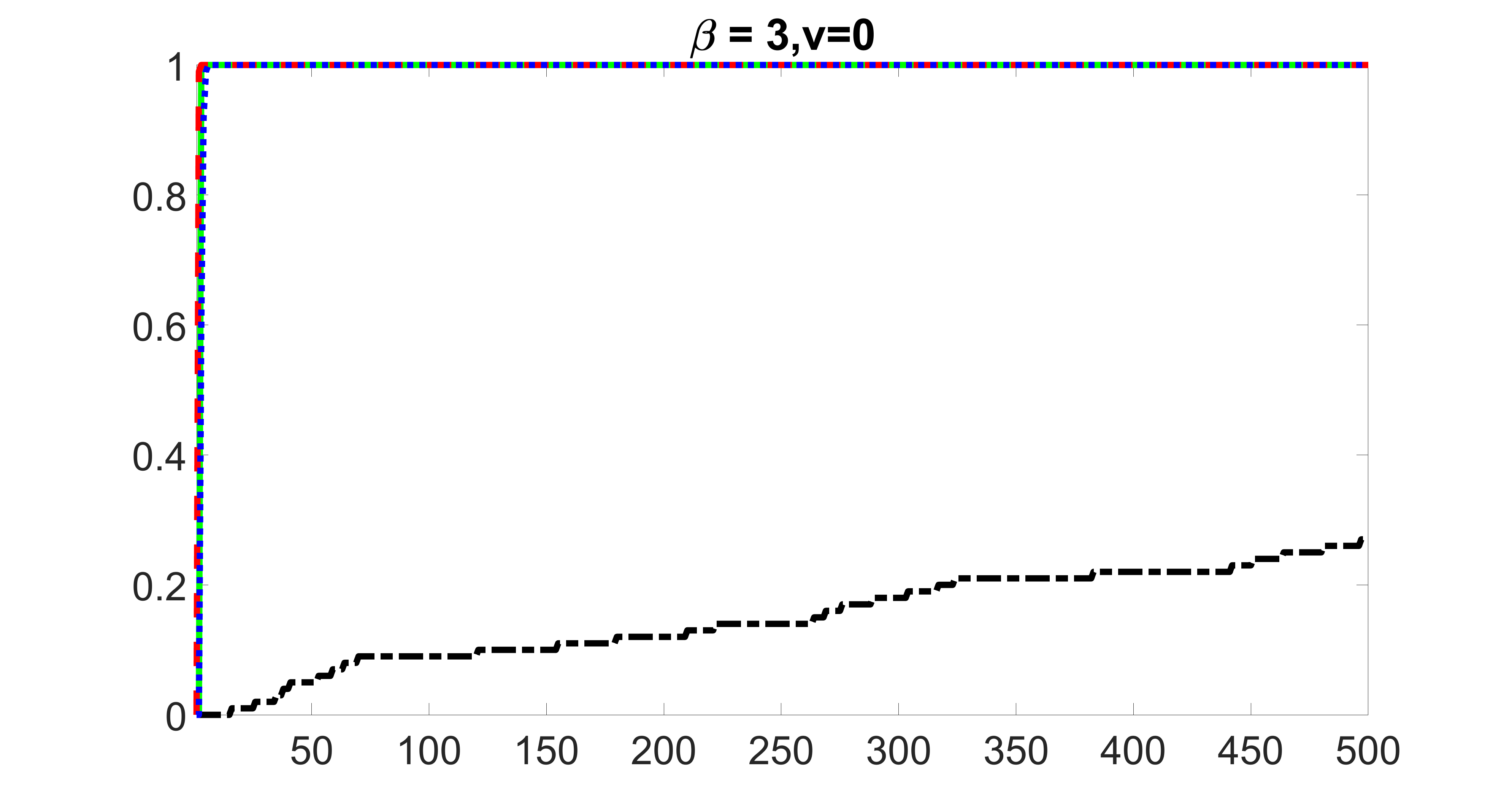}}
  \subcaptionbox{\footnotesize Precision: medium \\ outcome, zero exposure}[0.45\linewidth]
 {\includegraphics[width=6cm,height=3.5cm]{./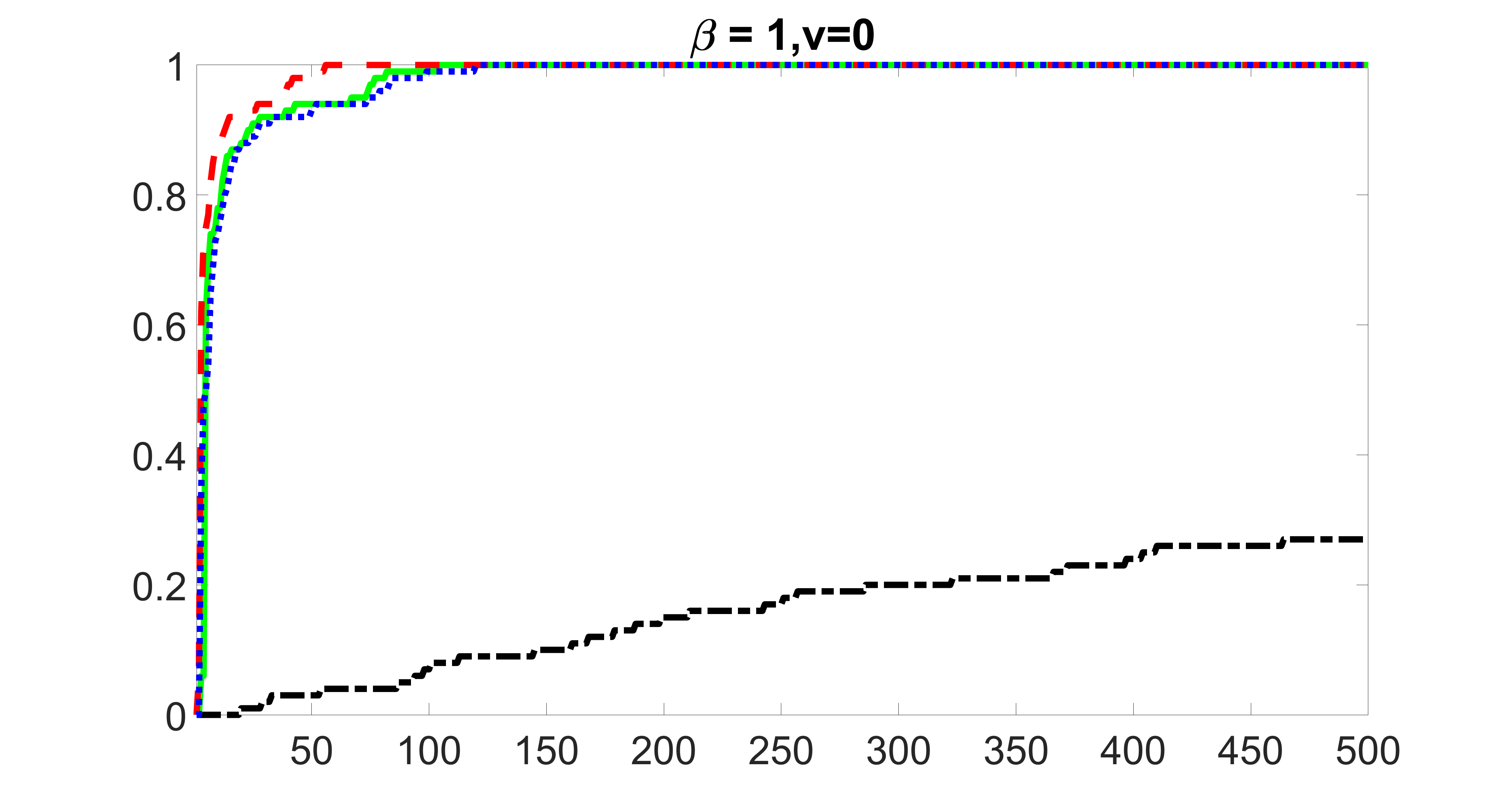}}
  \subcaptionbox{\footnotesize Precision: weak \\ outcome, zero exposure}[0.45\linewidth]
 {\includegraphics[width=6cm,height=3.5cm]{./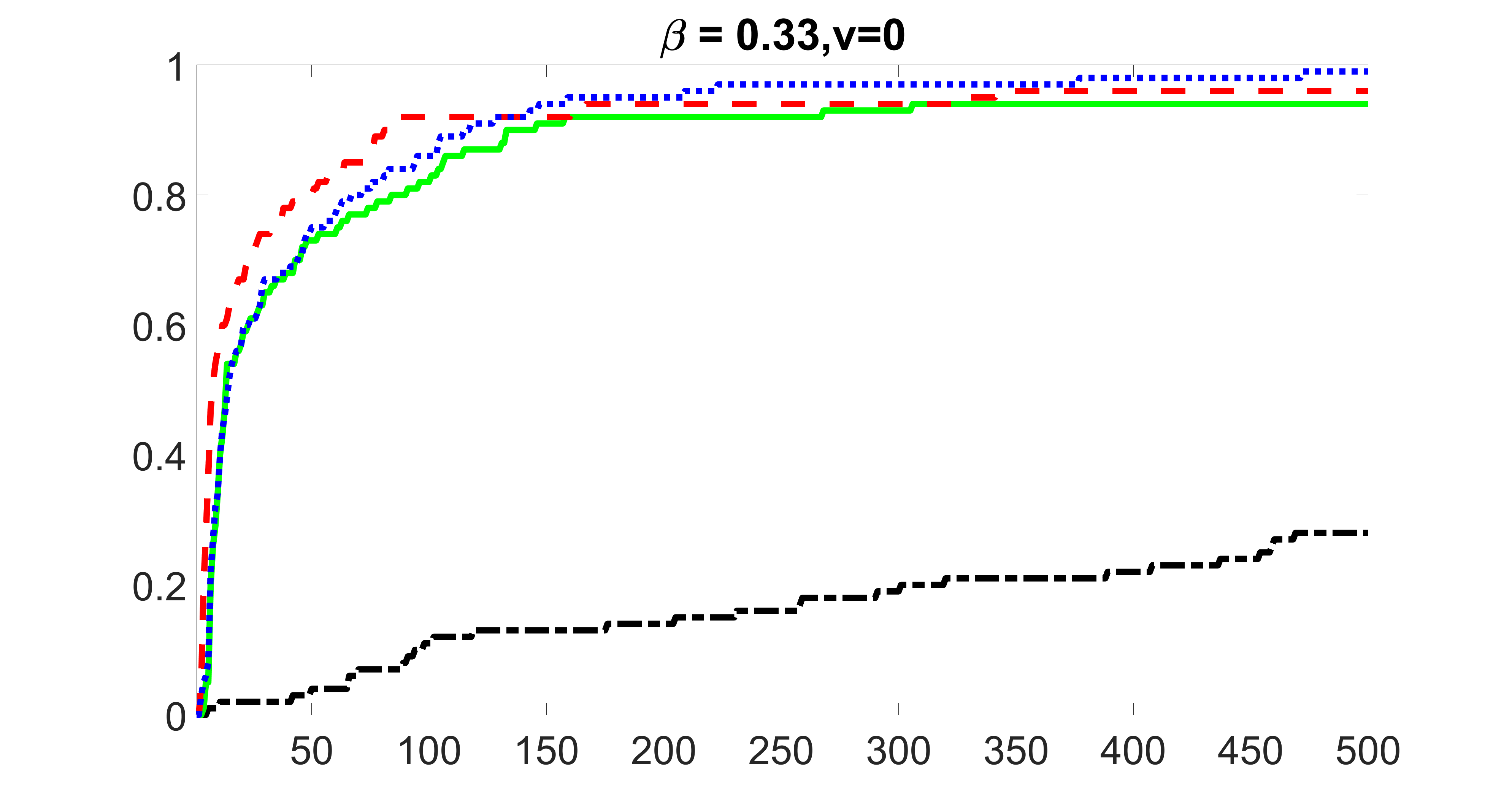}}
 \subcaptionbox{\footnotesize Precision: weaker \\ outcome, zero exposure}[0.45\linewidth]
 {\includegraphics[width=6cm,height=3.5cm]{./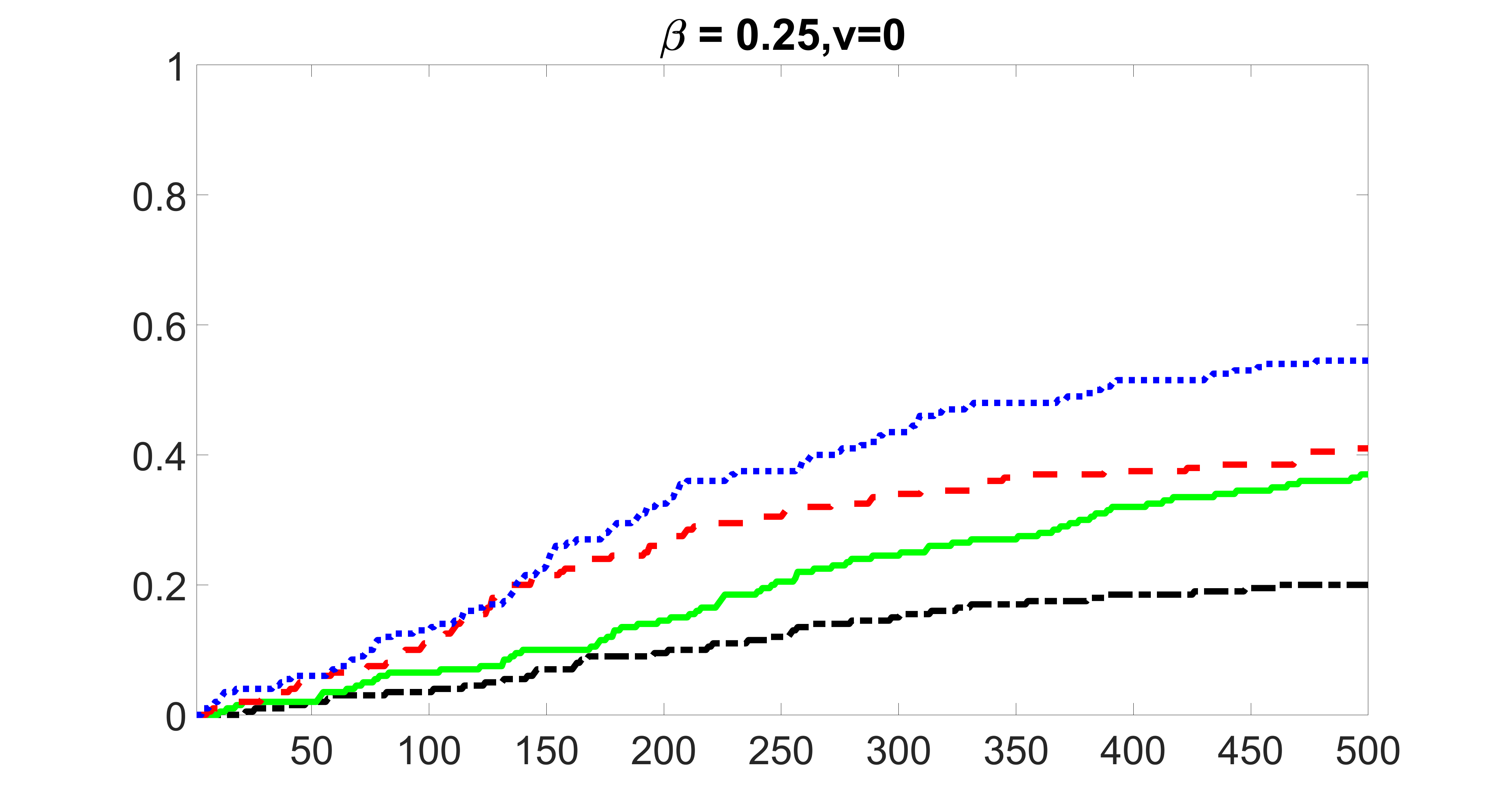}}
  \subcaptionbox{Overall coverage of $\mathcal{M}_1$}[0.45\linewidth]
 {\includegraphics[width=6cm,height=3.5cm]{./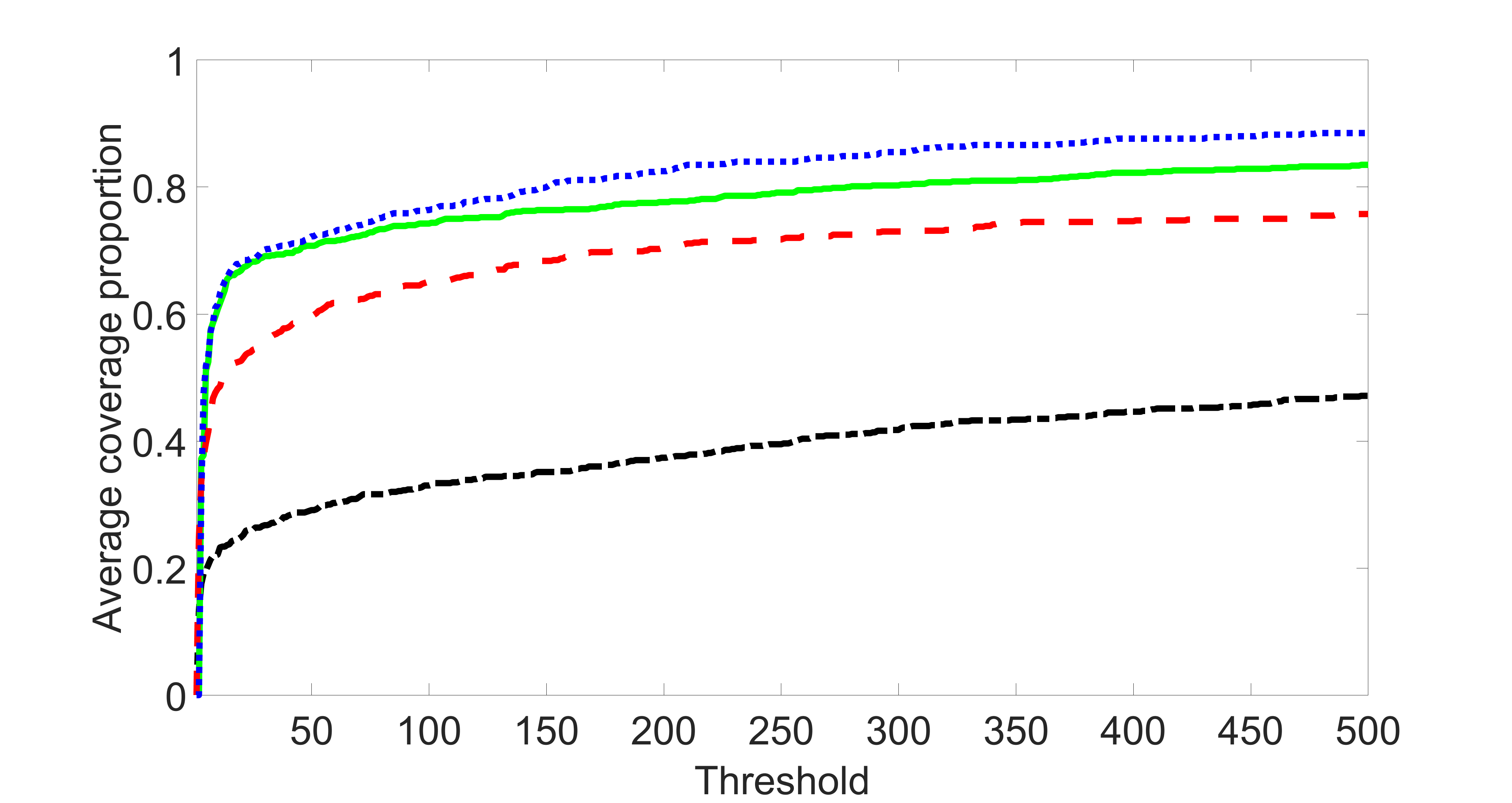}}
\caption{Simulation results for the case $(n,s,K,\sigma) = (500,5000,2,1)$: Panels (a) -- (g) plot the average coverage proportion for $X_l$, where $l \in \mathcal{M}_1 =  \{1,2,3,104,105, 106\} \cup \mathcal{P}_{LD}$. Panels (a) -- (c) correspond to strong outcome and weak exposure predictor, moderate outcome and moderate exposure predictor and weak outcome and strong exposure predictor; Panels (d) -- (g) correspond to strong, moderate, and weak predictors of outcome only. Panel (g) plots the average coverage proportion for the index set $\mathcal{P}_{LD}$. Panel (h) plots the average coverage proportion for the index set $\mathcal{M}_1$. The x-axis represents the size of $\widehat{\mathcal{M}} $, while
y-axis denotes the average proportion. The blue dot, green solid, red dashed and black dash dotted lines denote the blockwise joint screening, joint screening, outcome screening, and intersection screening methods, respectively.}
\label{sim3step1n500sizesig2sigma1}
\end{figure}

\begin{figure}[htbp]
\captionsetup[subfigure]{justification=centering}
\centering
 \subcaptionbox{\footnotesize Confounder: strong \\ outcome, weak exposure}[0.45\linewidth]
 {\includegraphics[width=6cm,height=3.5cm]{./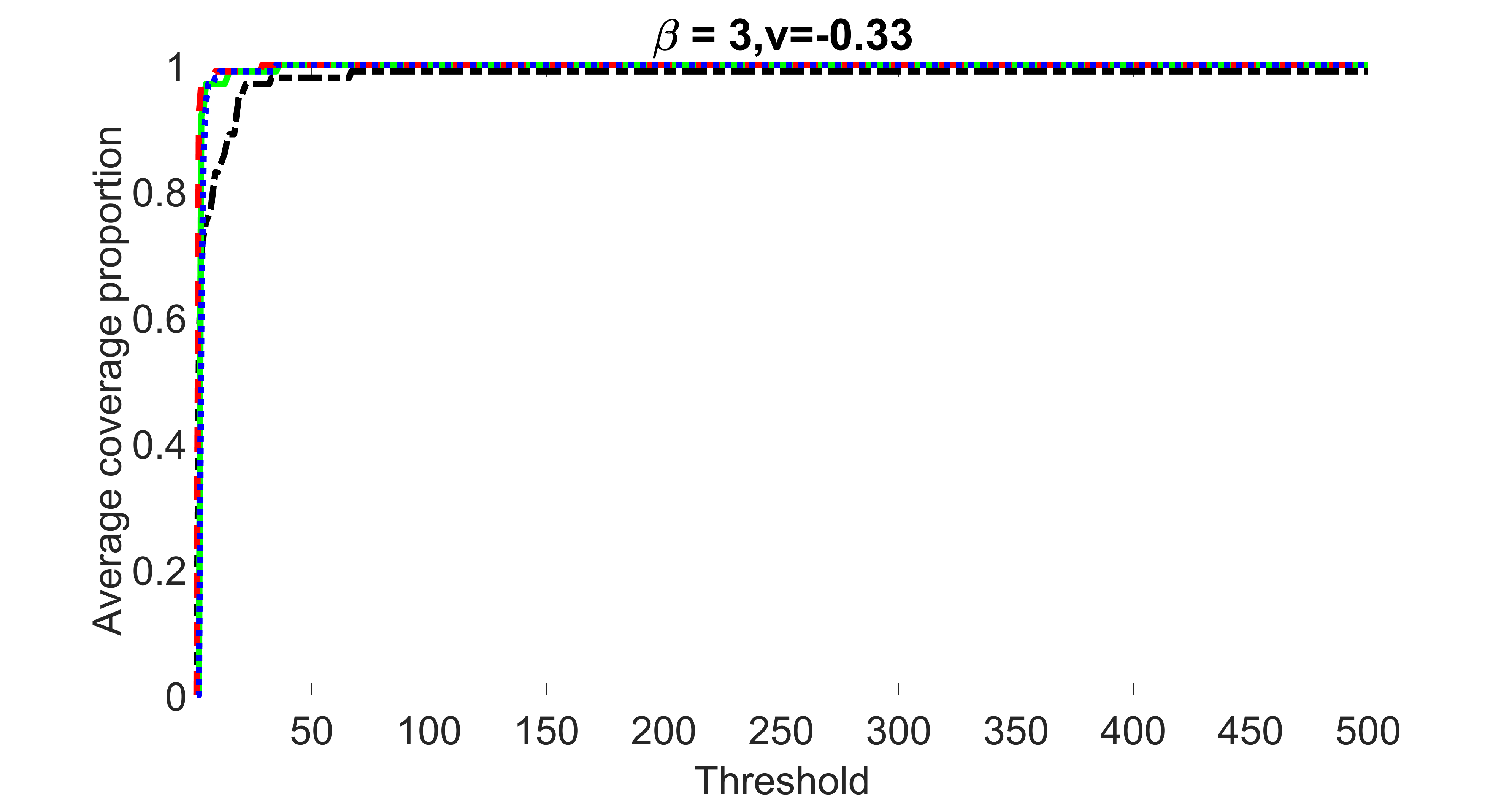}}
 \subcaptionbox{\footnotesize Confounder: medium \\ outcome, medium exposure}[0.45\linewidth]
 {\includegraphics[width=6cm,height=3.5cm]{./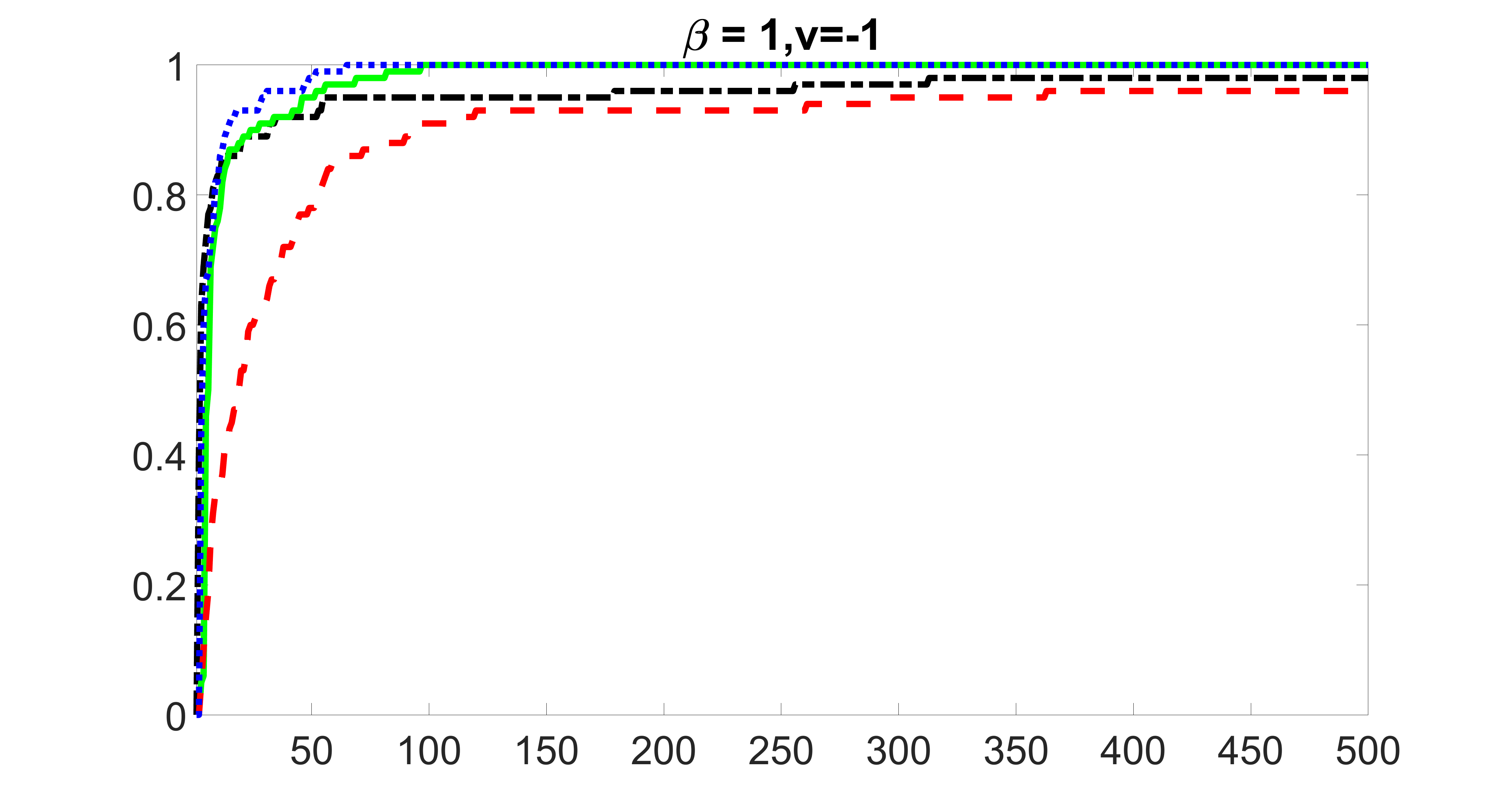}}
  \subcaptionbox{\footnotesize Confounder: weak \\ outcome, strong exposure}[0.45\linewidth]
 {\includegraphics[width=6cm,height=3.5cm]{./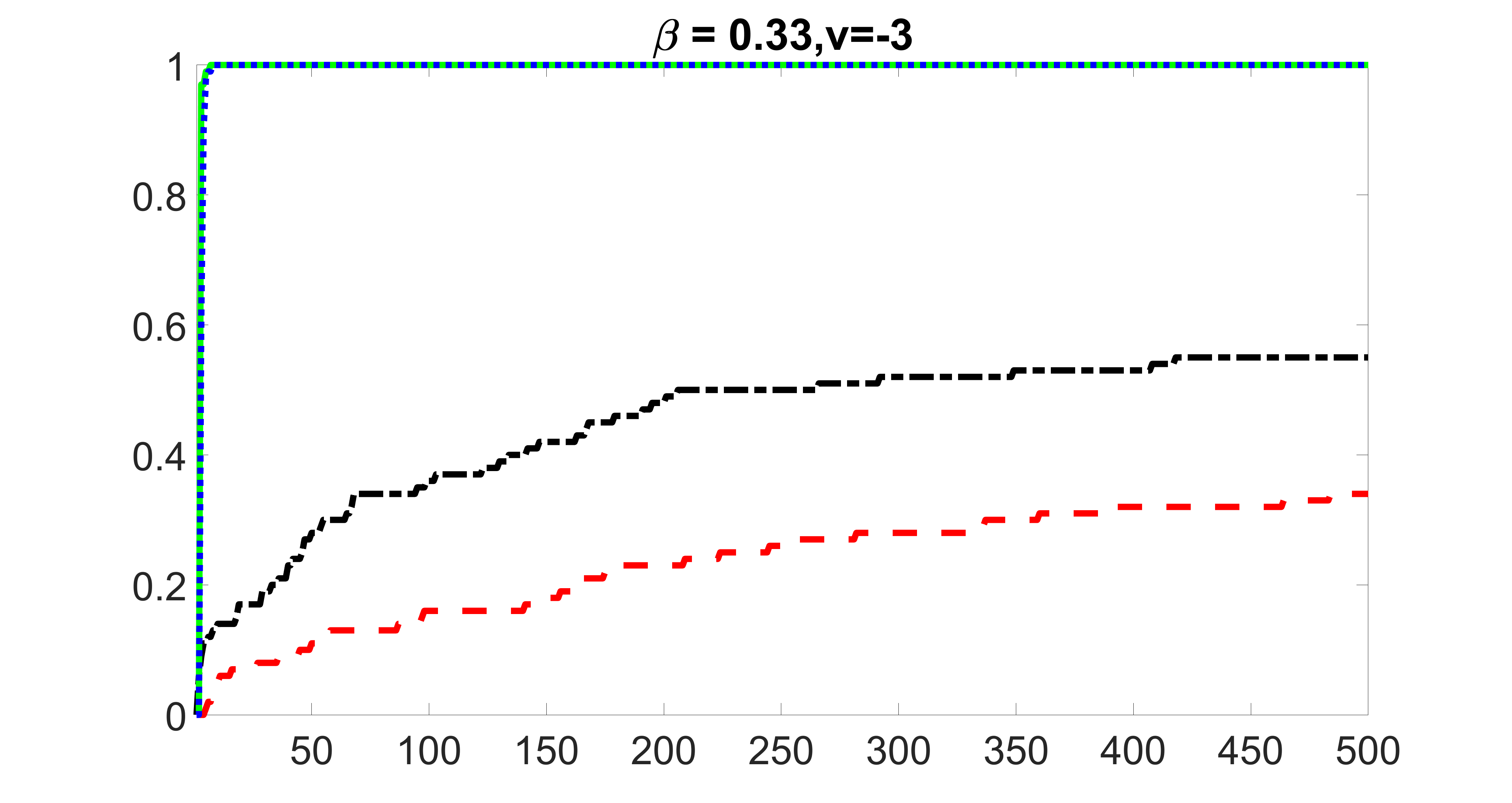}}
  \subcaptionbox{\footnotesize Precision: strong \\ outcome, zero exposure}[0.45\linewidth]
 {\includegraphics[width=6cm,height=3.5cm]{./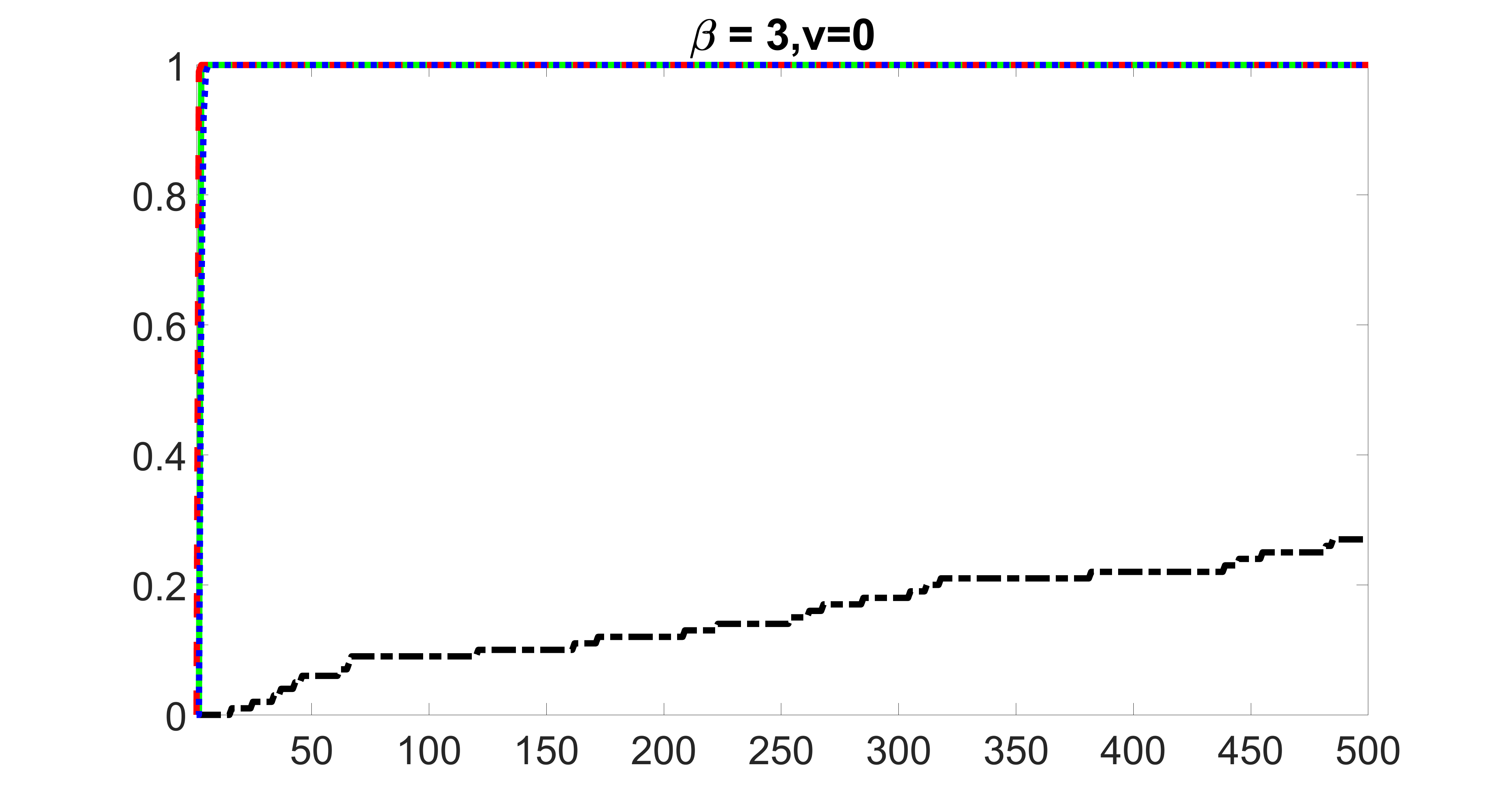}}
  \subcaptionbox{\footnotesize Precision: medium \\ outcome, zero exposure}[0.45\linewidth]
 {\includegraphics[width=6cm,height=3.5cm]{./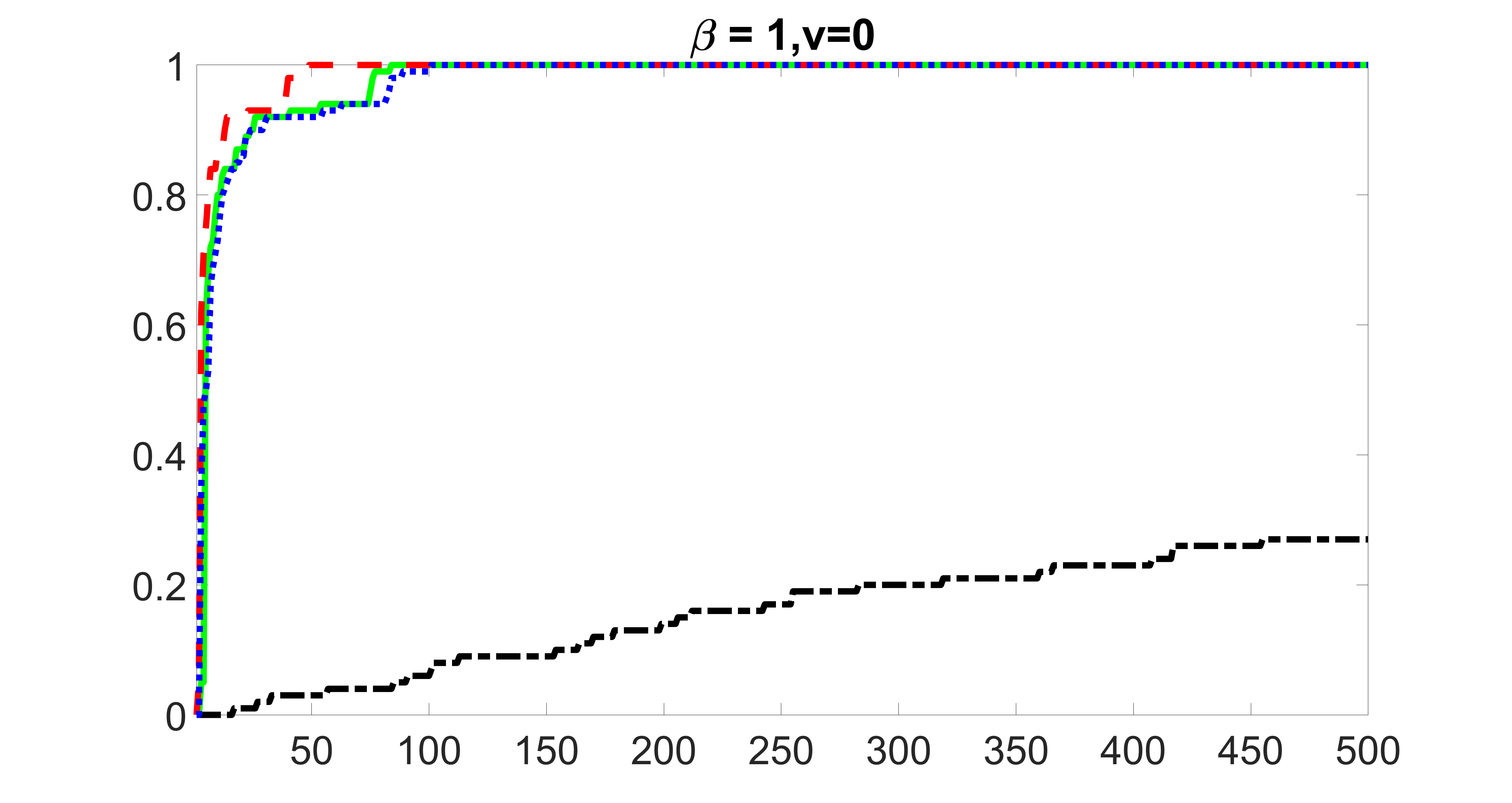}}
  \subcaptionbox{\footnotesize Precision: weak \\ outcome, zero exposure}[0.45\linewidth]
 {\includegraphics[width=6cm,height=3.5cm]{./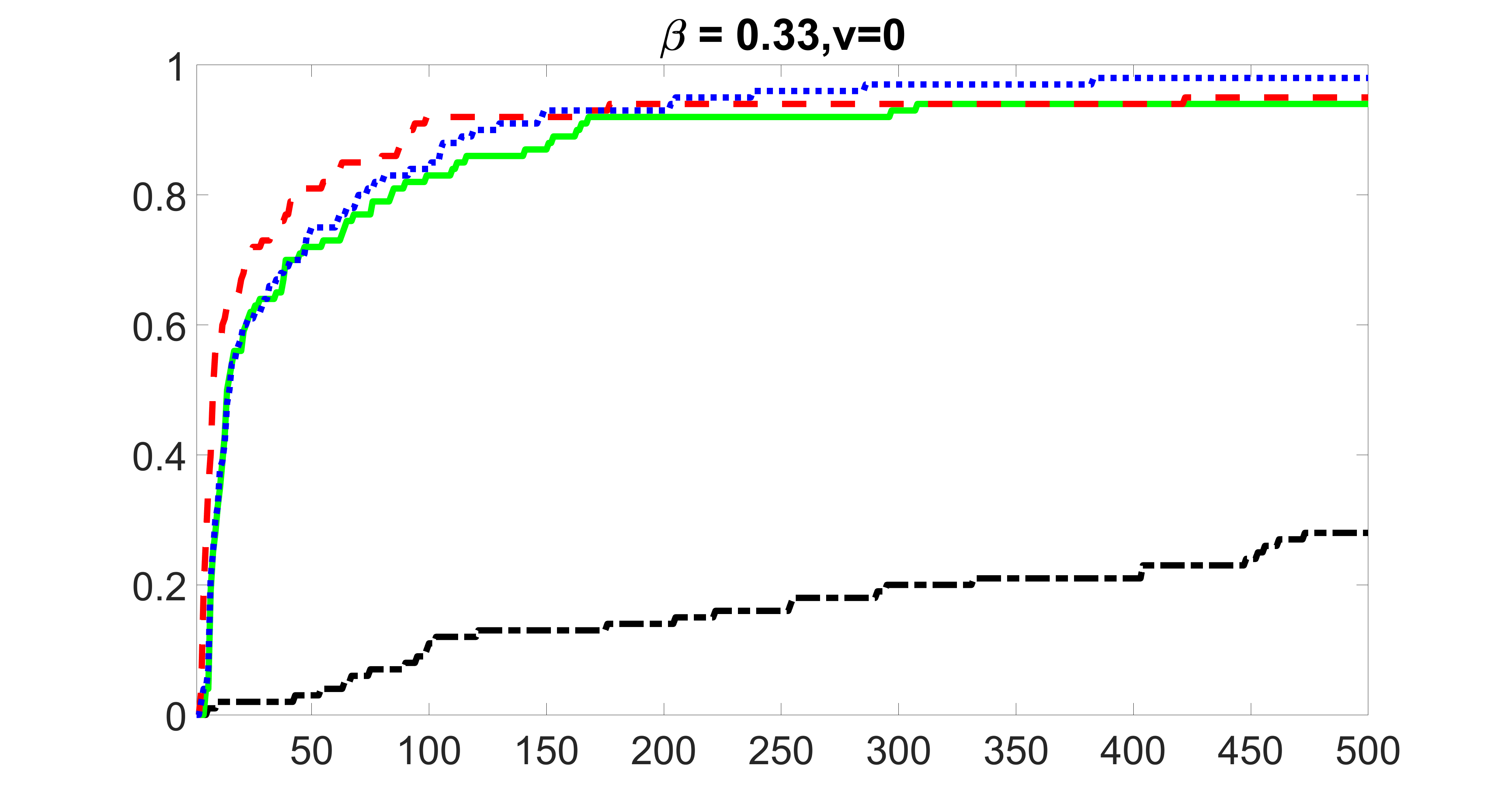}}
 \subcaptionbox{\footnotesize Precision: weaker \\ outcome, zero exposure}[0.45\linewidth]
 {\includegraphics[width=6cm,height=3.5cm]{./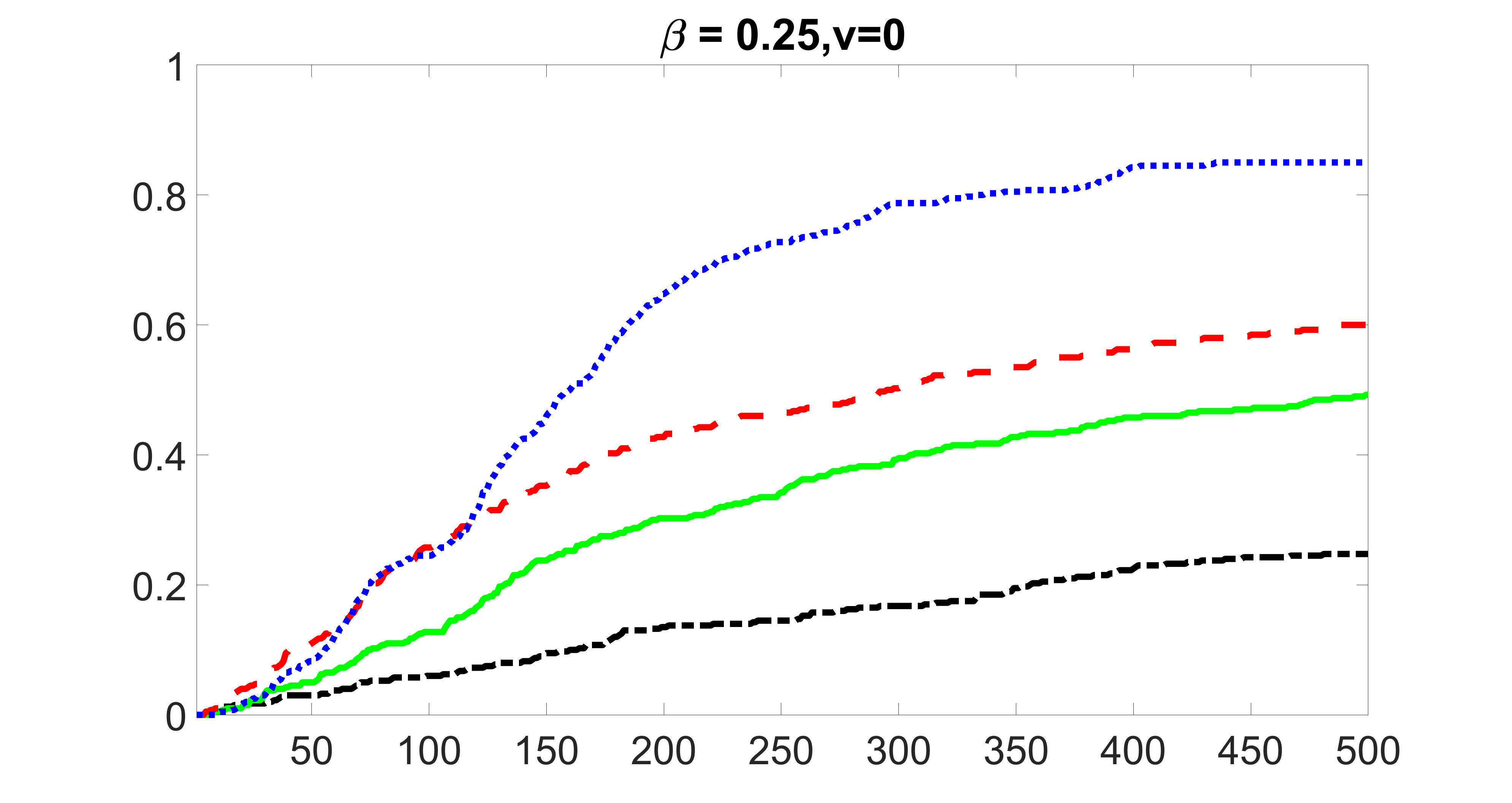}}
  \subcaptionbox{Overall coverage of $\mathcal{M}_1$}[0.45\linewidth]
 {\includegraphics[width=6cm,height=3.5cm]{./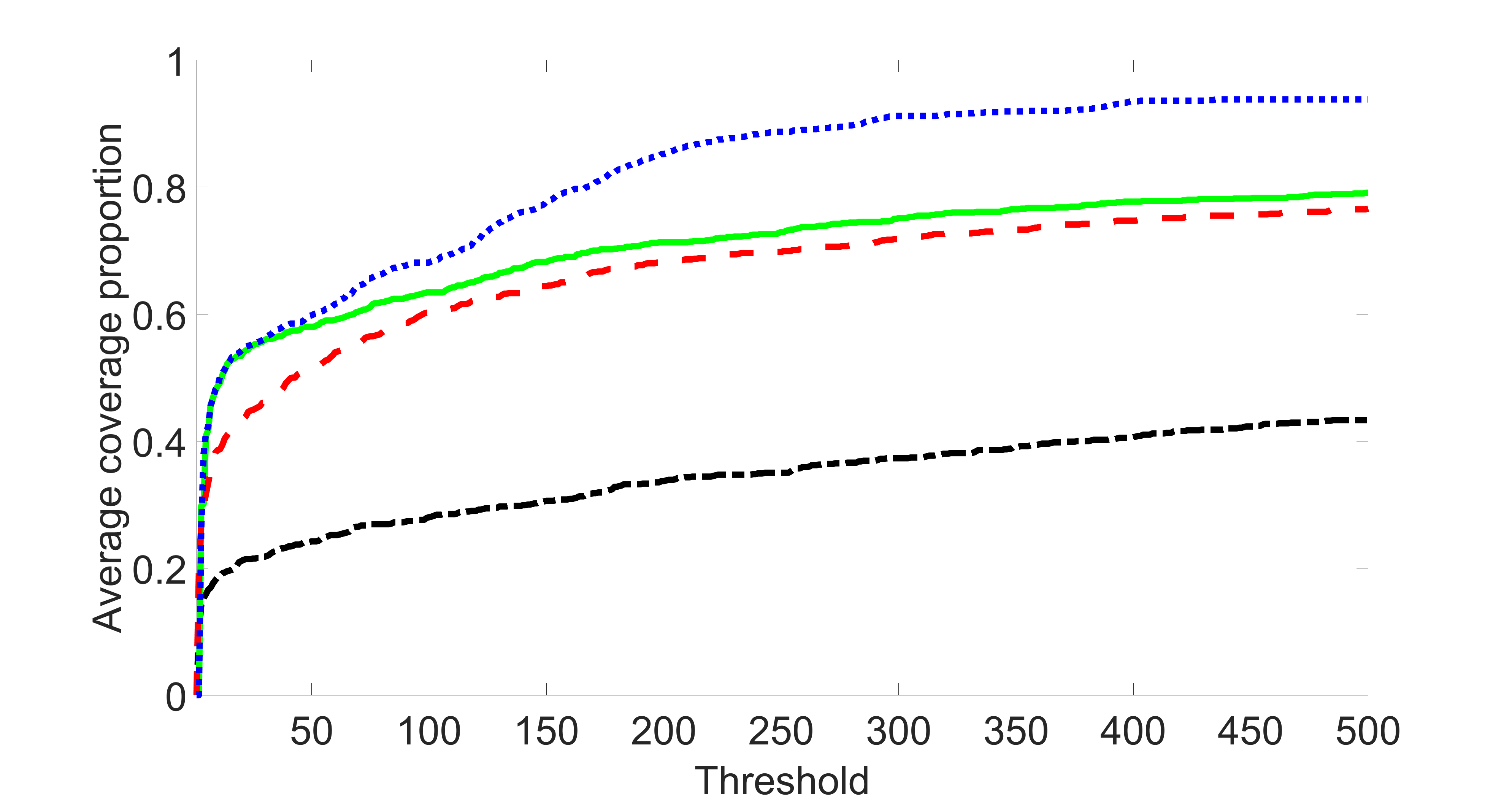}}
\caption{Simulation results for the case $(n,s,K,\sigma) = (500,5000,4,1)$: Panels (a) -- (g) plot the average coverage proportion for $X_l$, where $l \in \mathcal{M}_1 =  \{1,2,3,104,105, 106\} \cup \mathcal{P}_{LD}$. Panels (a) -- (c) correspond to strong outcome and weak exposure predictor, moderate outcome and moderate exposure predictor and weak outcome and strong exposure predictor; Panels (d) -- (g) correspond to strong, moderate, and weak predictors of outcome only. Panel (g) plots the average coverage proportion for the index set $\mathcal{P}_{LD}$. Panel (h) plots the average coverage proportion for the index set $\mathcal{M}_1$. The x-axis represents the size of $\widehat{\mathcal{M}} $, while
y-axis denotes the average proportion. The blue dot, green solid, red dashed and black dash dotted lines denote the blockwise joint screening, joint screening, outcome screening, and intersection screening methods, respectively.}
\label{sim3step1n500sizesig4sigma1}
\end{figure}

\begin{figure}[htbp]
\captionsetup[subfigure]{justification=centering}
\centering
 \subcaptionbox{\footnotesize Confounder: strong \\ outcome, weak exposure}[0.45\linewidth]
 {\includegraphics[width=6cm,height=3.5cm]{./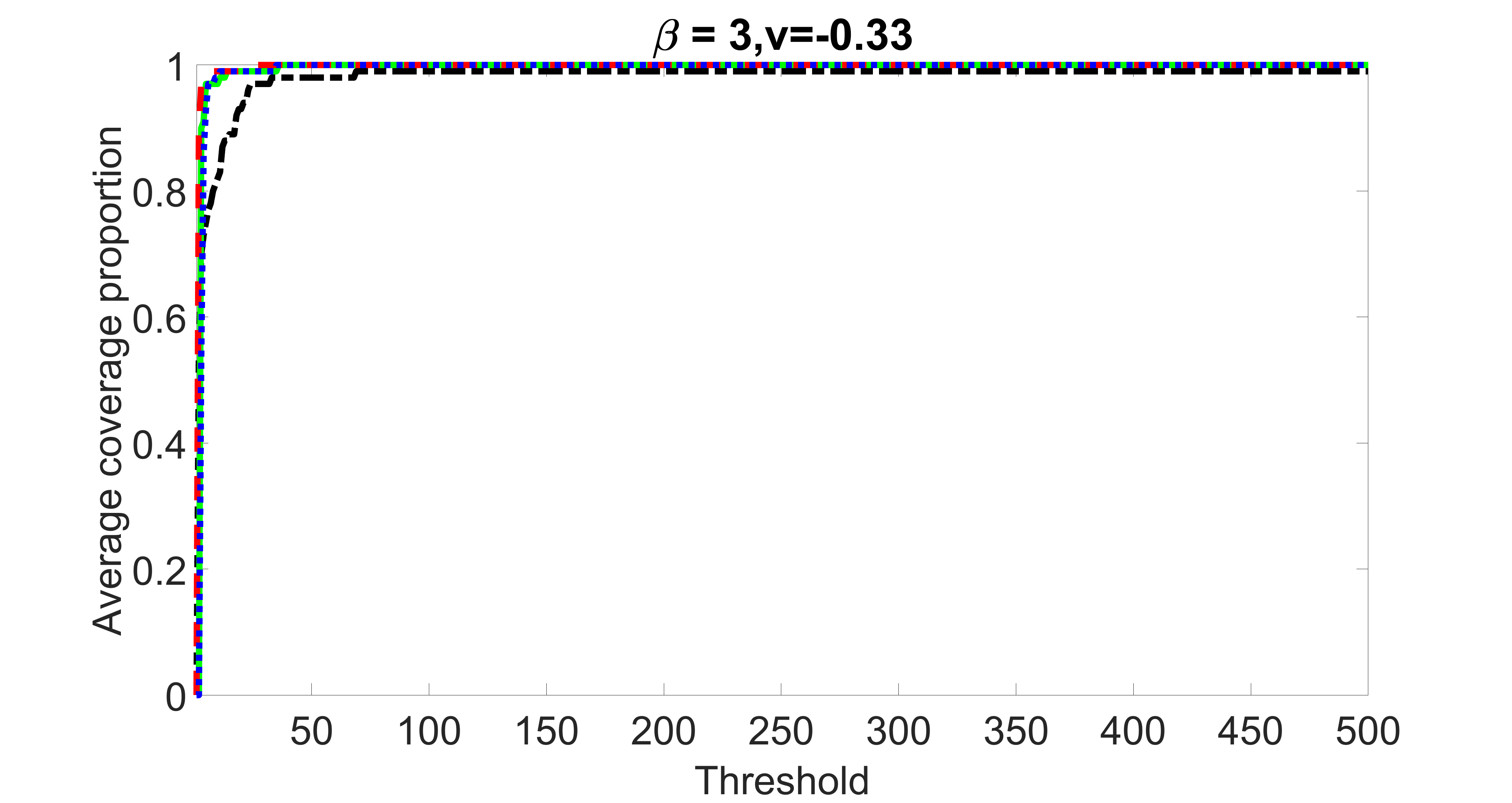}}
 \subcaptionbox{\footnotesize Confounder: medium \\ outcome, medium exposure}[0.45\linewidth]
 {\includegraphics[width=6cm,height=3.5cm]{./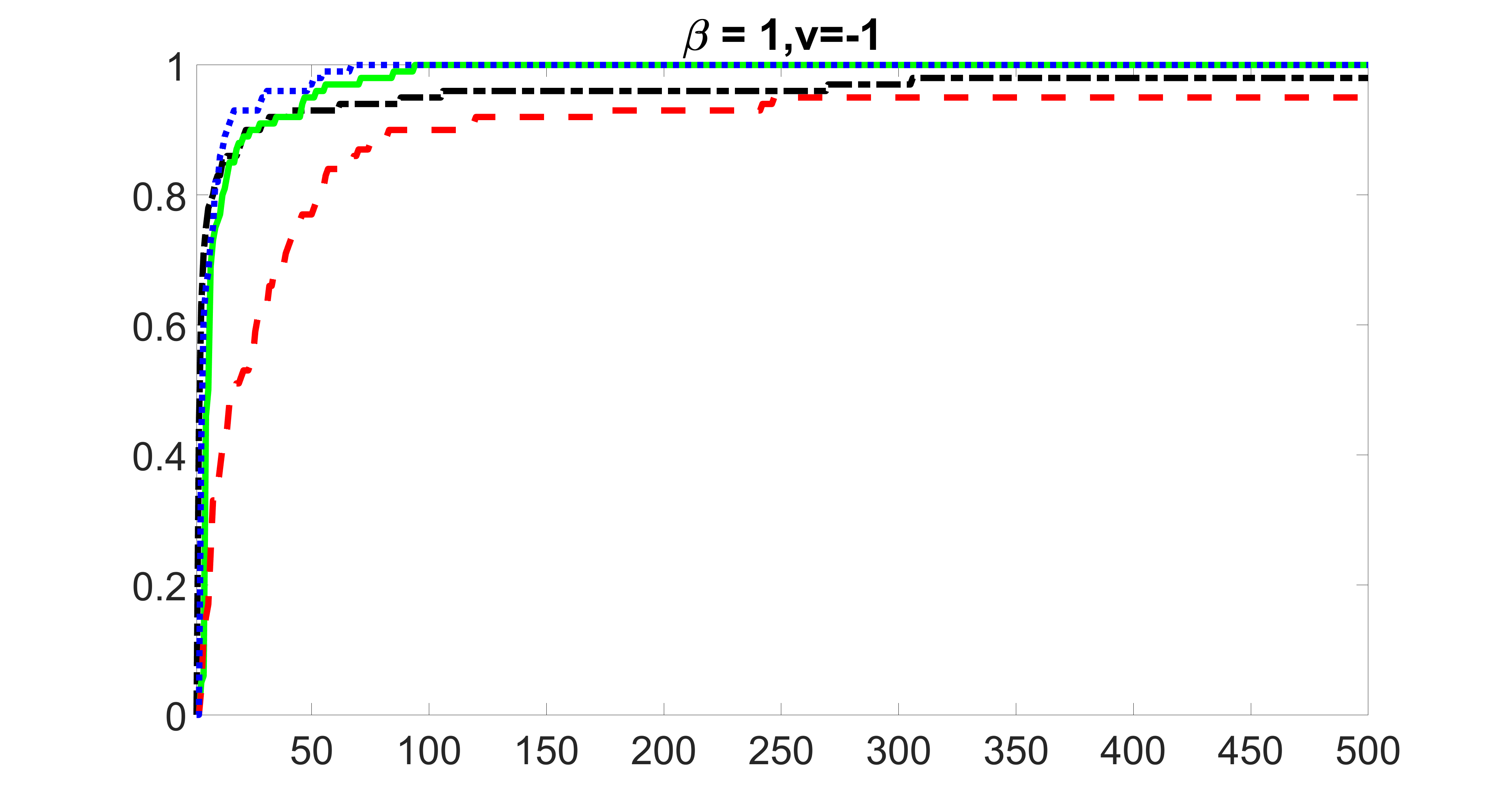}}
  \subcaptionbox{\footnotesize Confounder: weak \\ outcome, strong exposure}[0.45\linewidth]
 {\includegraphics[width=6cm,height=3.5cm]{./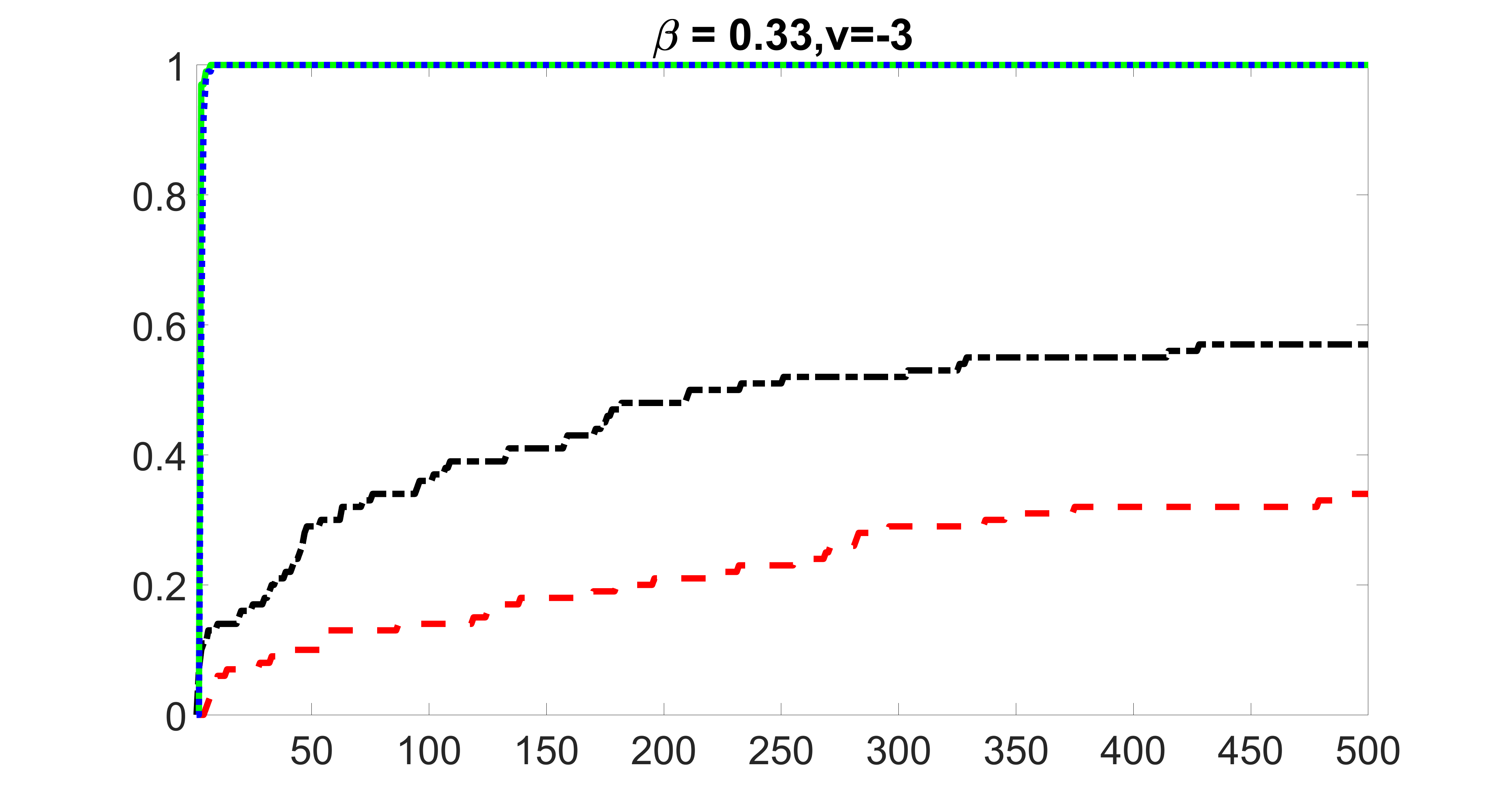}}
  \subcaptionbox{\footnotesize Precision: strong \\ outcome, zero exposure}[0.45\linewidth]
 {\includegraphics[width=6cm,height=3.5cm]{./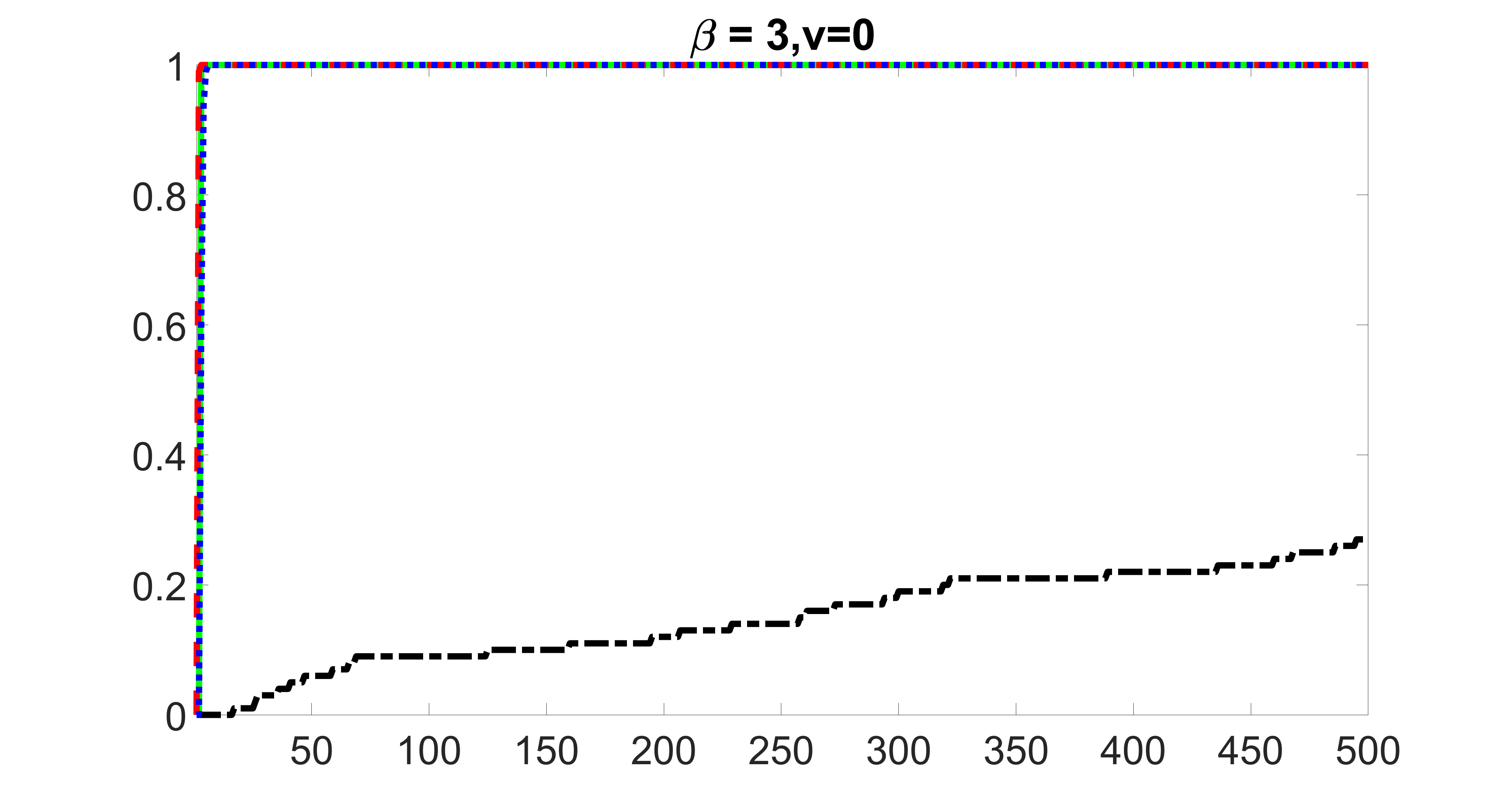}}
  \subcaptionbox{\footnotesize Precision: medium \\ outcome, zero exposure}[0.45\linewidth]
 {\includegraphics[width=6cm,height=3.5cm]{./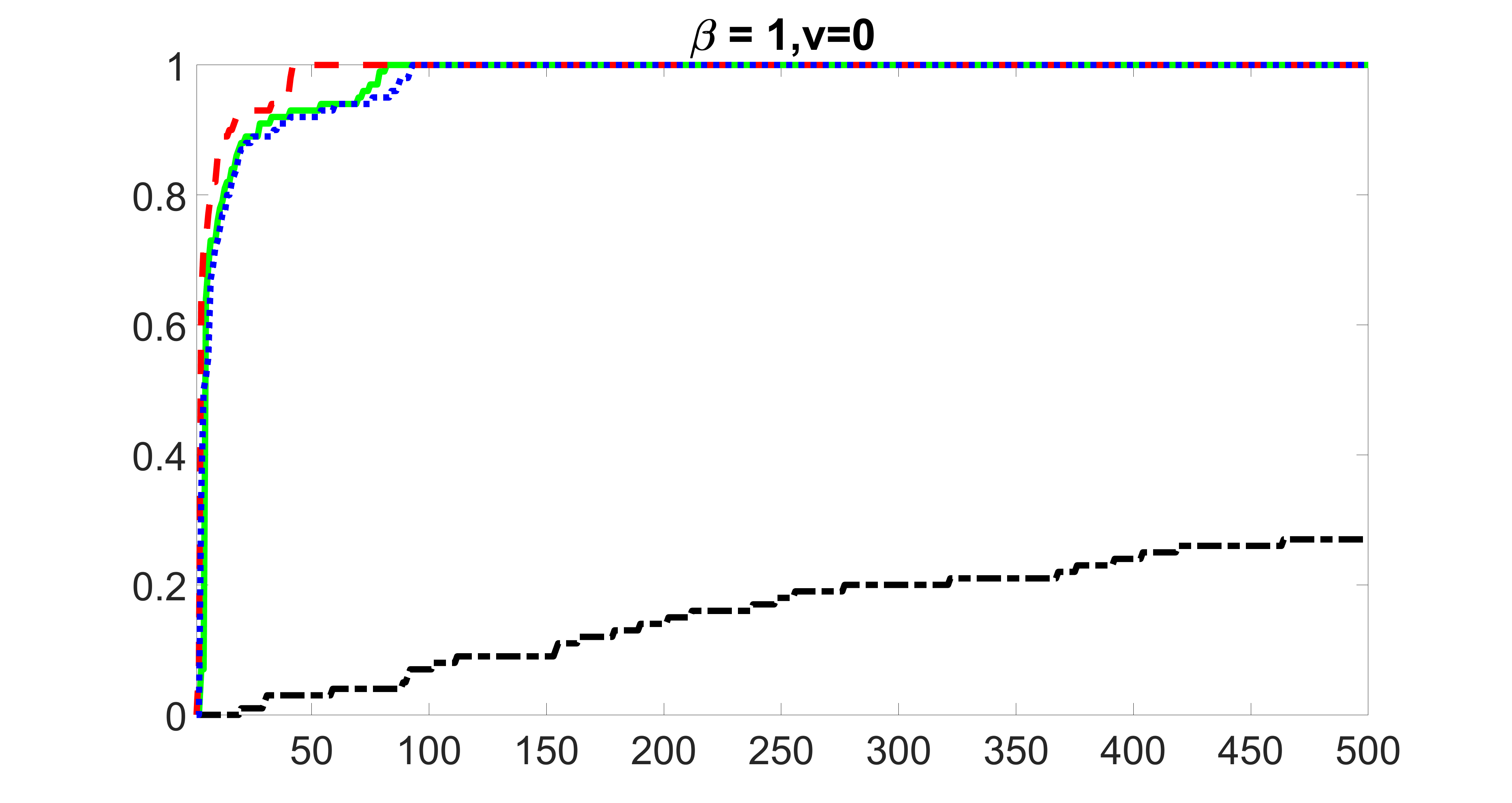}}
  \subcaptionbox{\footnotesize Precision: weak \\ outcome, zero exposure}[0.45\linewidth]
 {\includegraphics[width=6cm,height=3.5cm]{./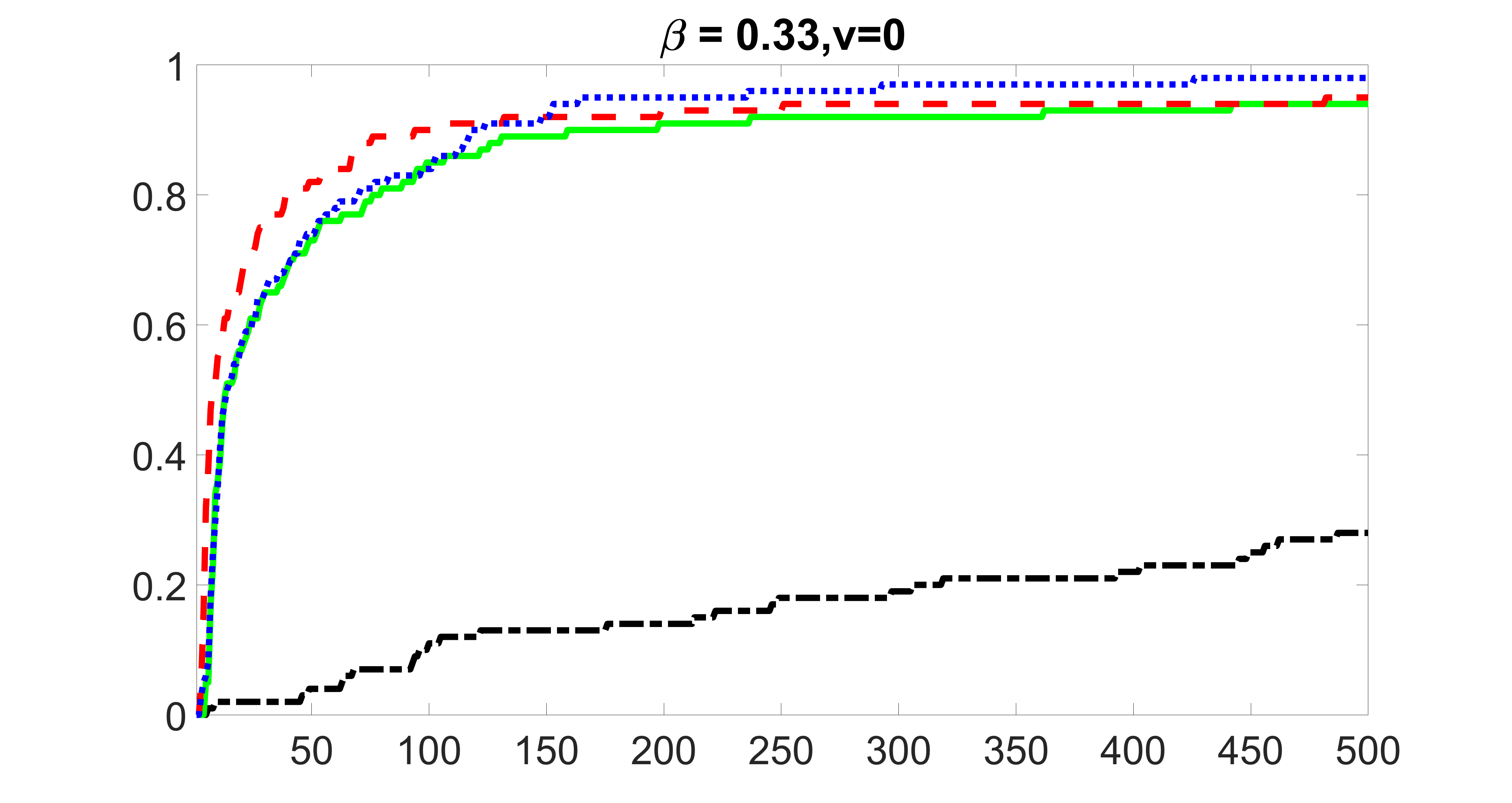}}
 \subcaptionbox{\footnotesize Precision: weaker \\ outcome, zero exposure}[0.45\linewidth]
 {\includegraphics[width=6cm,height=3.5cm]{./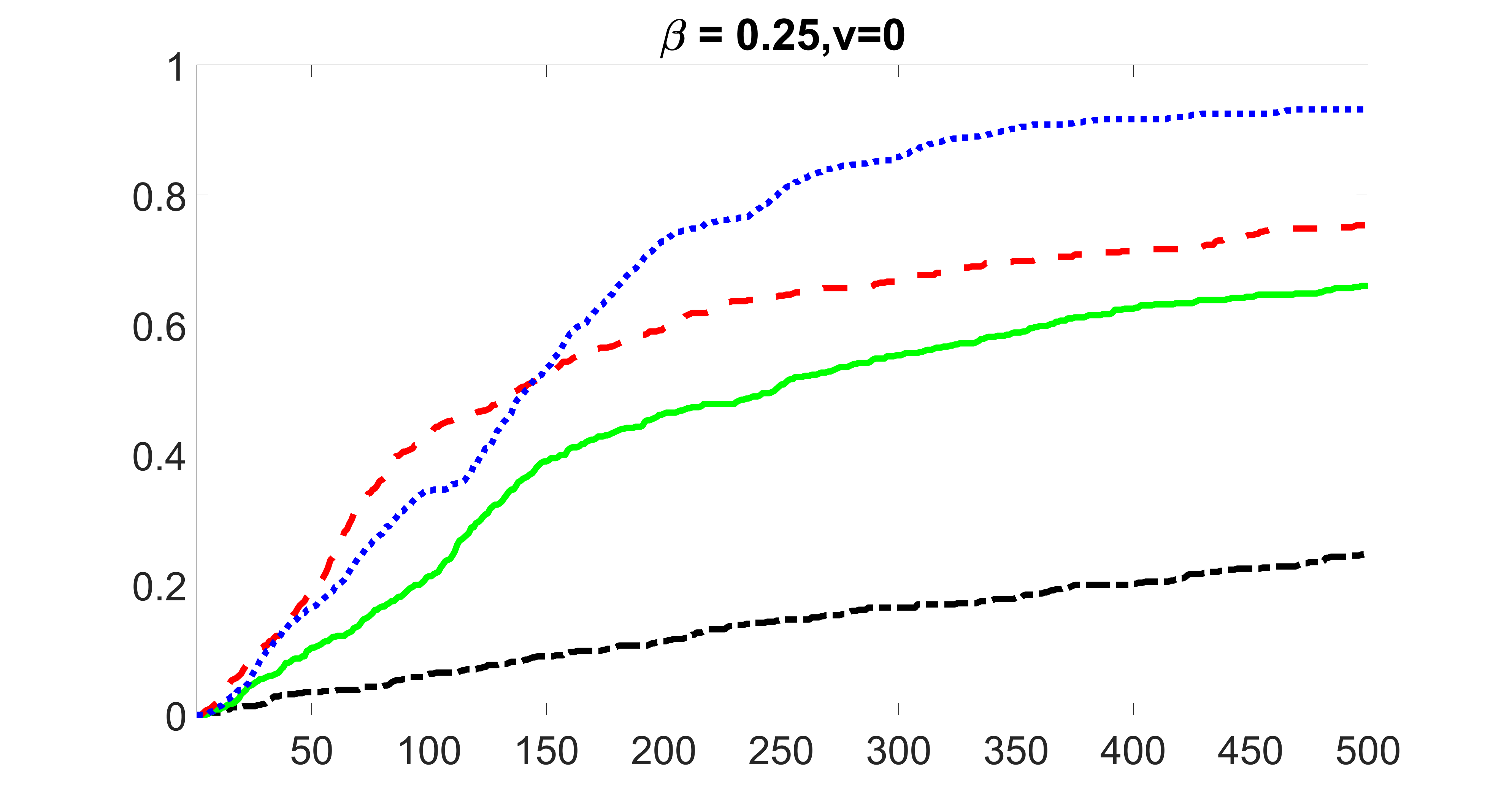}}
  \subcaptionbox{Overall coverage of $\mathcal{M}_1$}[0.45\linewidth]
 {\includegraphics[width=6cm,height=3.5cm]{./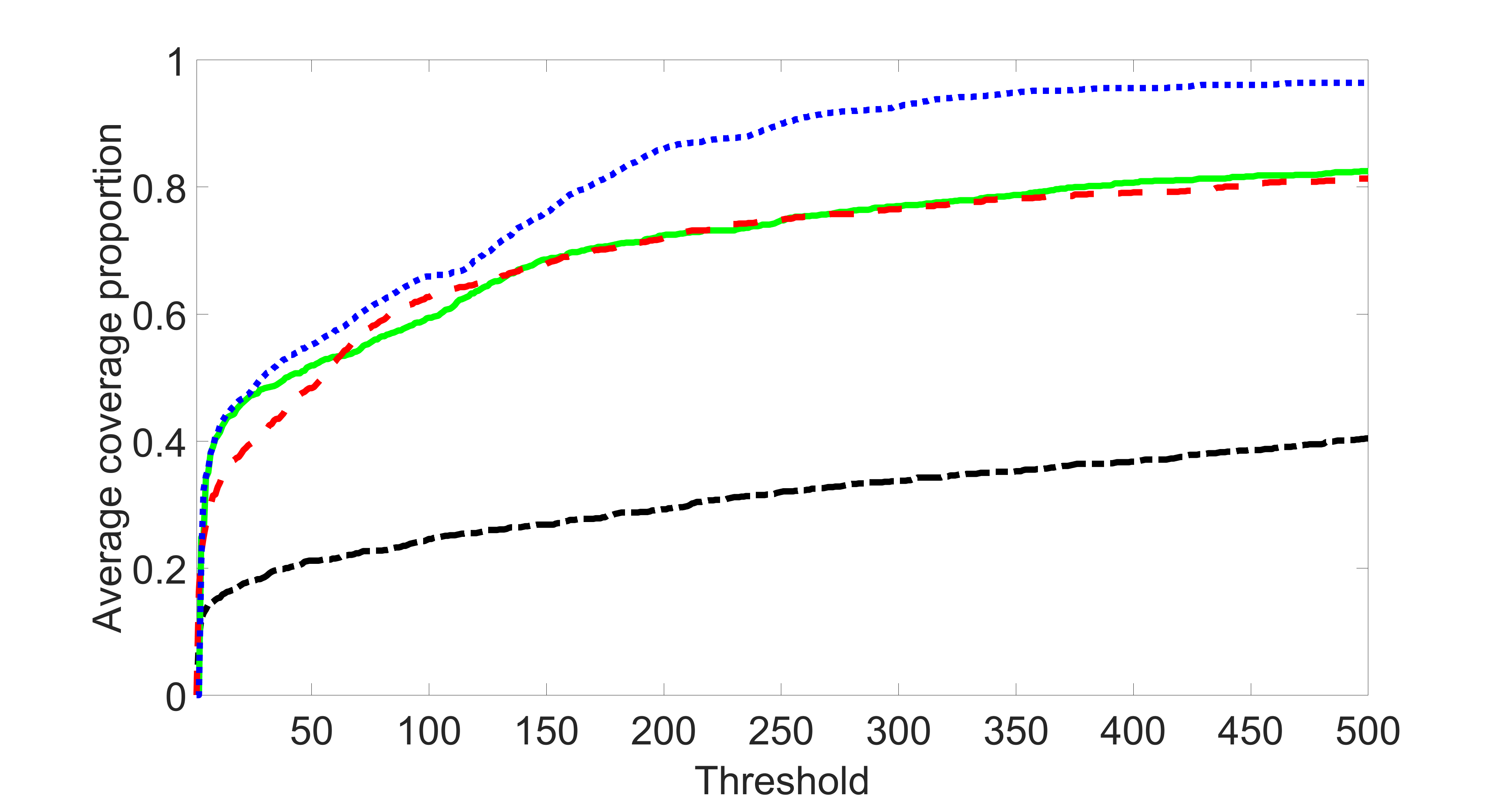}}
\caption{ Simulation results for the case $(n,s,K,\sigma) = (500,5000,6,1)$: Panels (a) -- (g) plot the average coverage proportion for $X_l$, where $l \in \mathcal{M}_1 =  \{1,2,3,104,105, 106\} \cup \mathcal{P}_{LD}$. Panels (a) -- (c) correspond to strong outcome and weak exposure predictor, moderate outcome and moderate exposure predictor and weak outcome and strong exposure predictor; Panels (d) -- (g) correspond to strong, moderate, and weak predictors of outcome only. Panel (g) plots the average coverage proportion for the index set $\mathcal{P}_{LD}$. Panel (h) plots the average coverage proportion for the index set $\mathcal{M}_1$. The x-axis represents the size of $\widehat{\mathcal{M}} $, while
y-axis denotes the average proportion. The blue dot, green solid, red dashed and black dash dotted lines denote the blockwise joint screening, joint screening, outcome screening, and intersection screening methods, respectively.}
\label{sim3step1n500sizesig6sigma1}
\end{figure}

\begin{figure}[htbp]
\captionsetup[subfigure]{justification=centering}
\centering
 \subcaptionbox{\footnotesize Confounder: strong \\ outcome, weak exposure}[0.45\linewidth]
 {\includegraphics[width=6cm,height=3.5cm]{./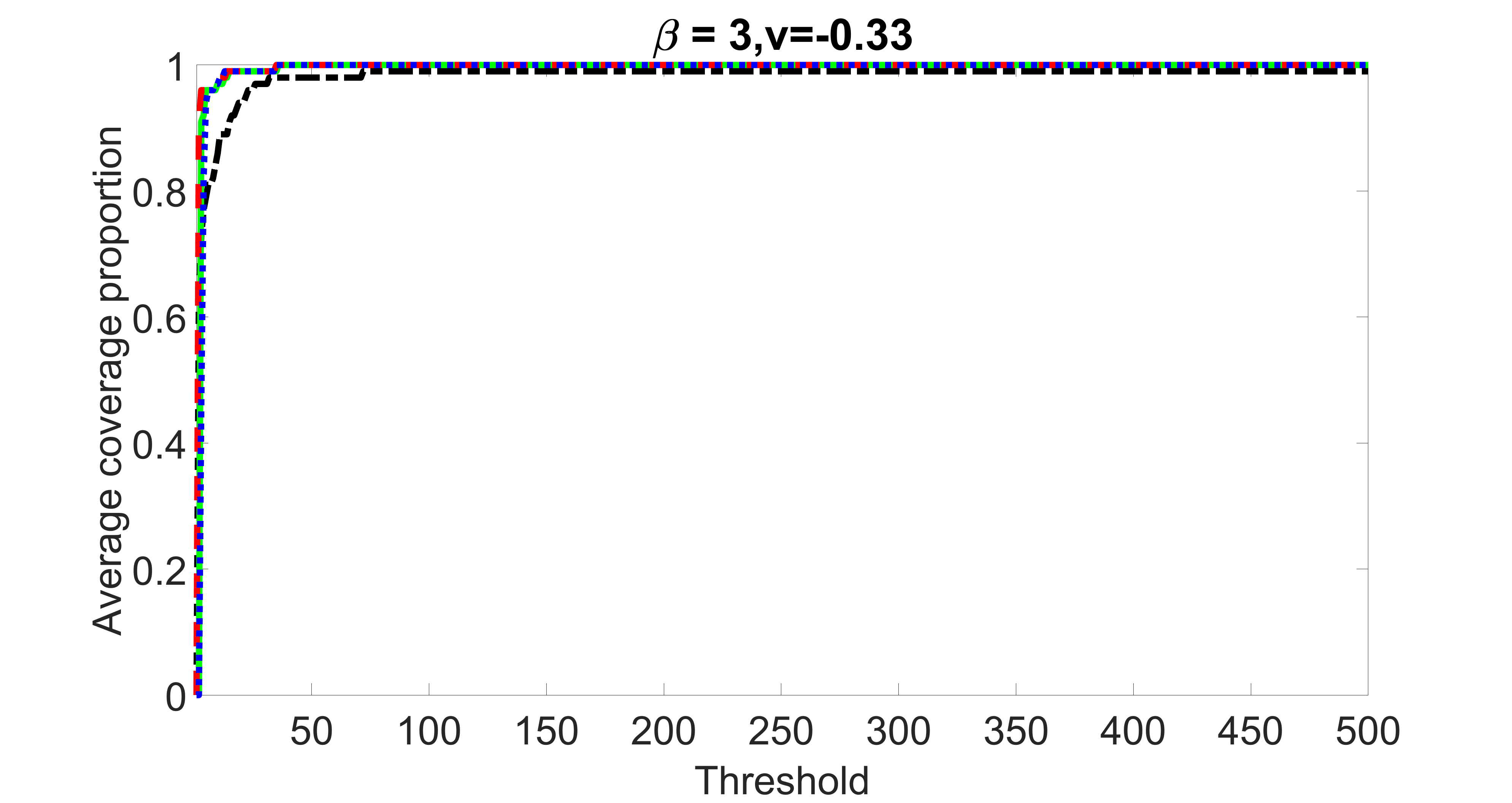}}
 \subcaptionbox{\footnotesize Confounder: medium \\ outcome, medium exposure}[0.45\linewidth]
 {\includegraphics[width=6cm,height=3.5cm]{./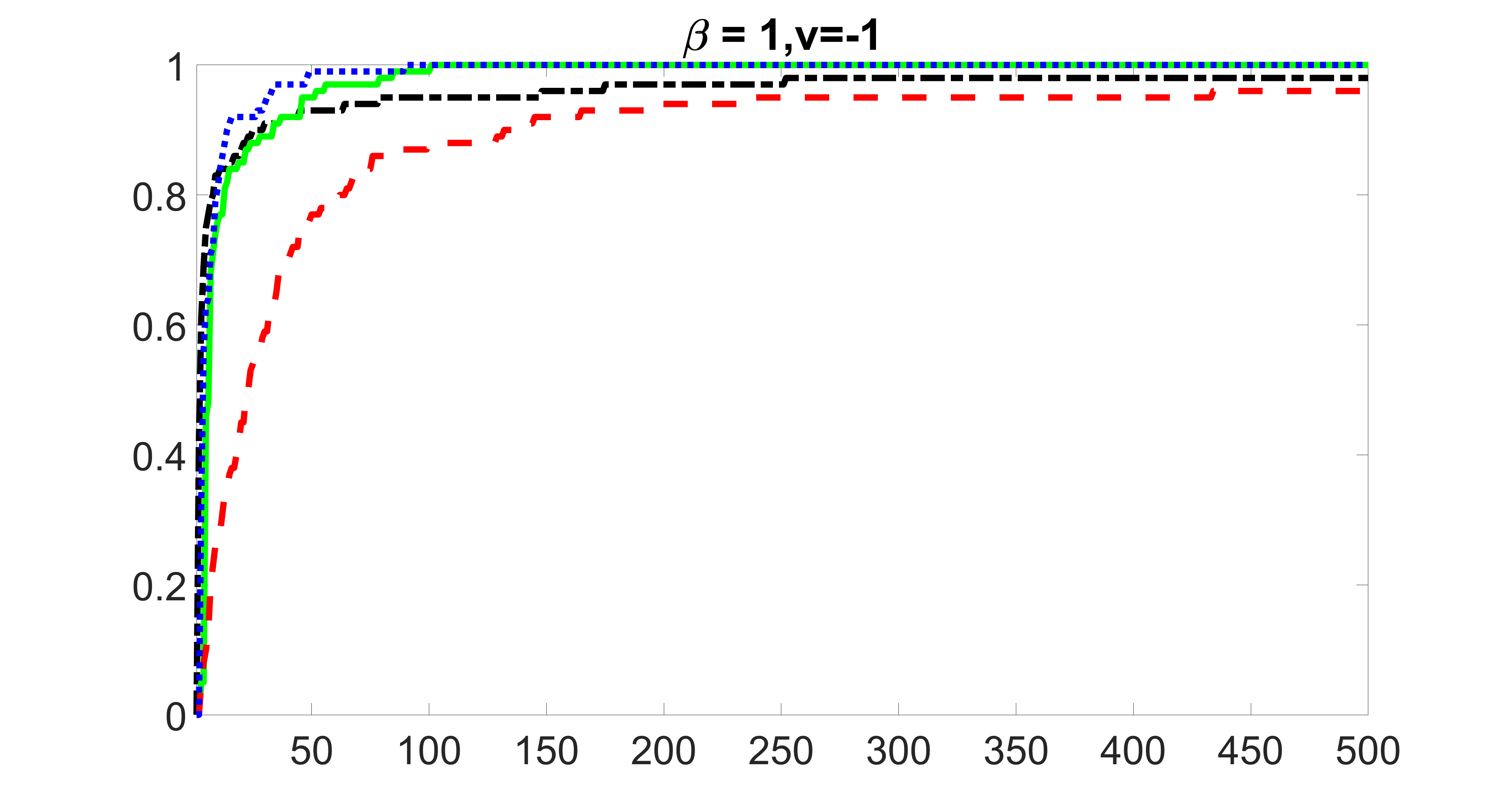}}
  \subcaptionbox{\footnotesize Confounder: weak \\ outcome, strong exposure}[0.45\linewidth]
 {\includegraphics[width=6cm,height=3.5cm]{./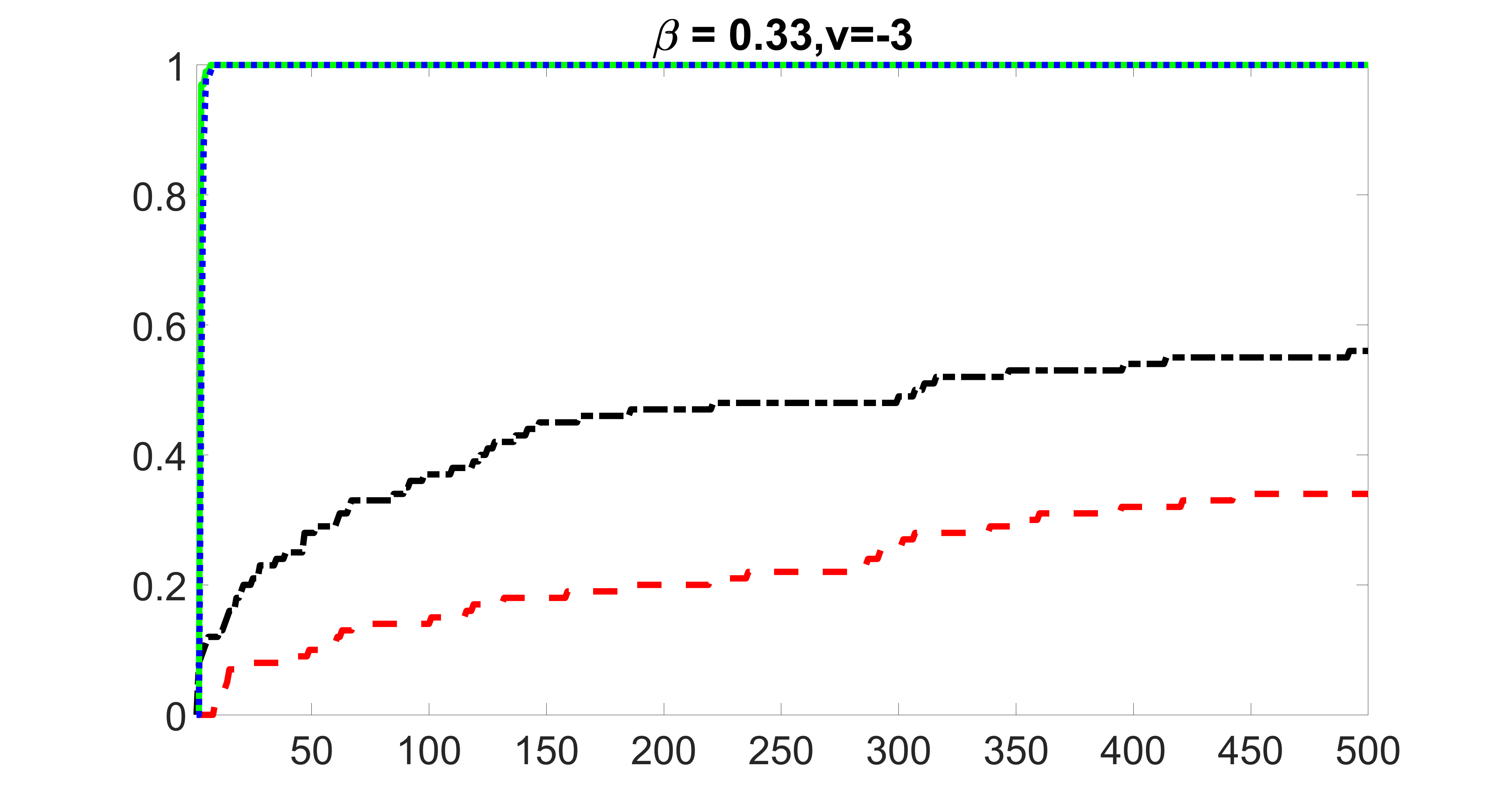}}
  \subcaptionbox{\footnotesize Precision: strong \\ outcome, zero exposure}[0.45\linewidth]
 {\includegraphics[width=6cm,height=3.5cm]{./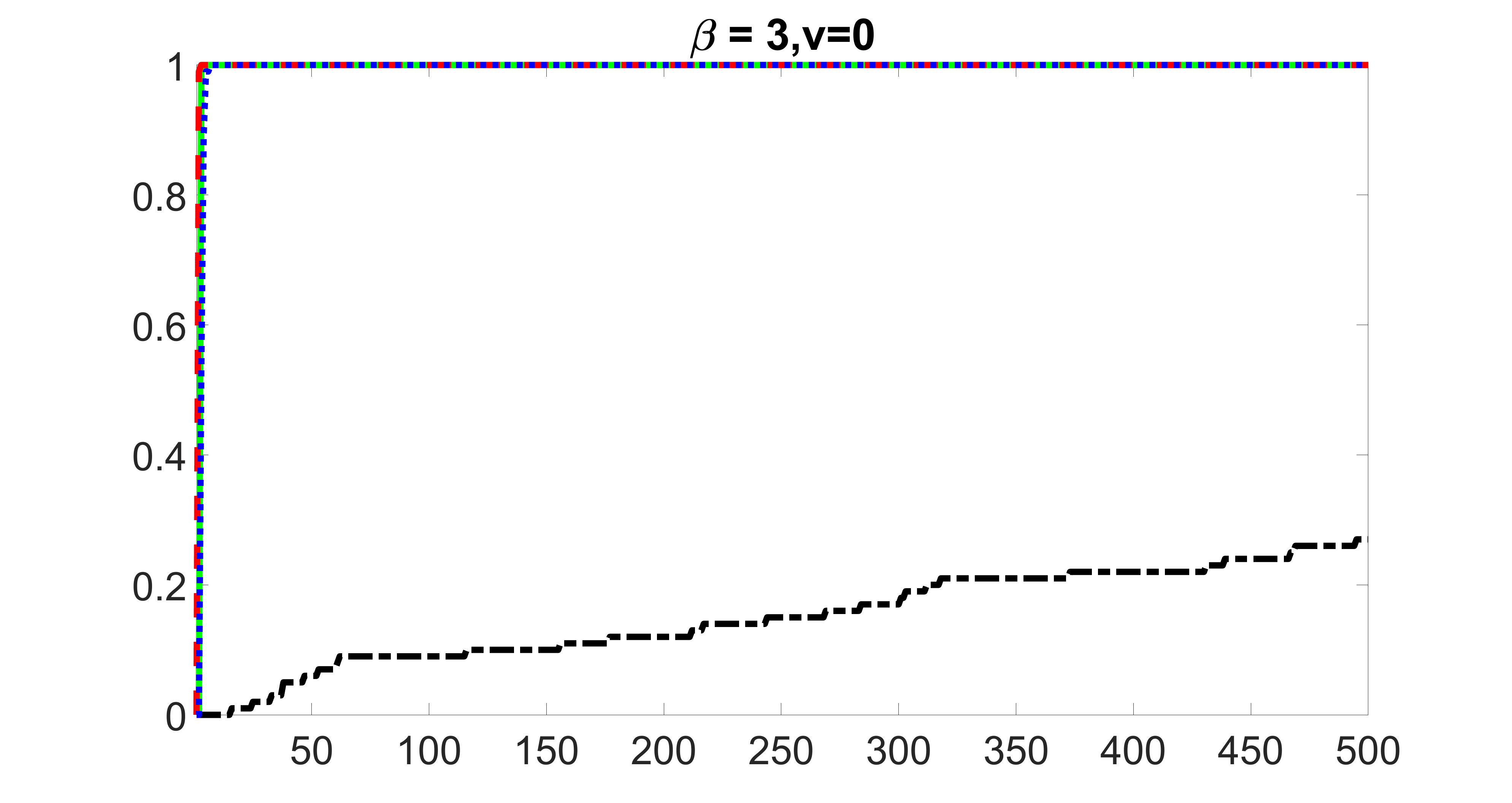}}
  \subcaptionbox{\footnotesize Precision: medium \\ outcome, zero exposure}[0.45\linewidth]
 {\includegraphics[width=6cm,height=3.5cm]{./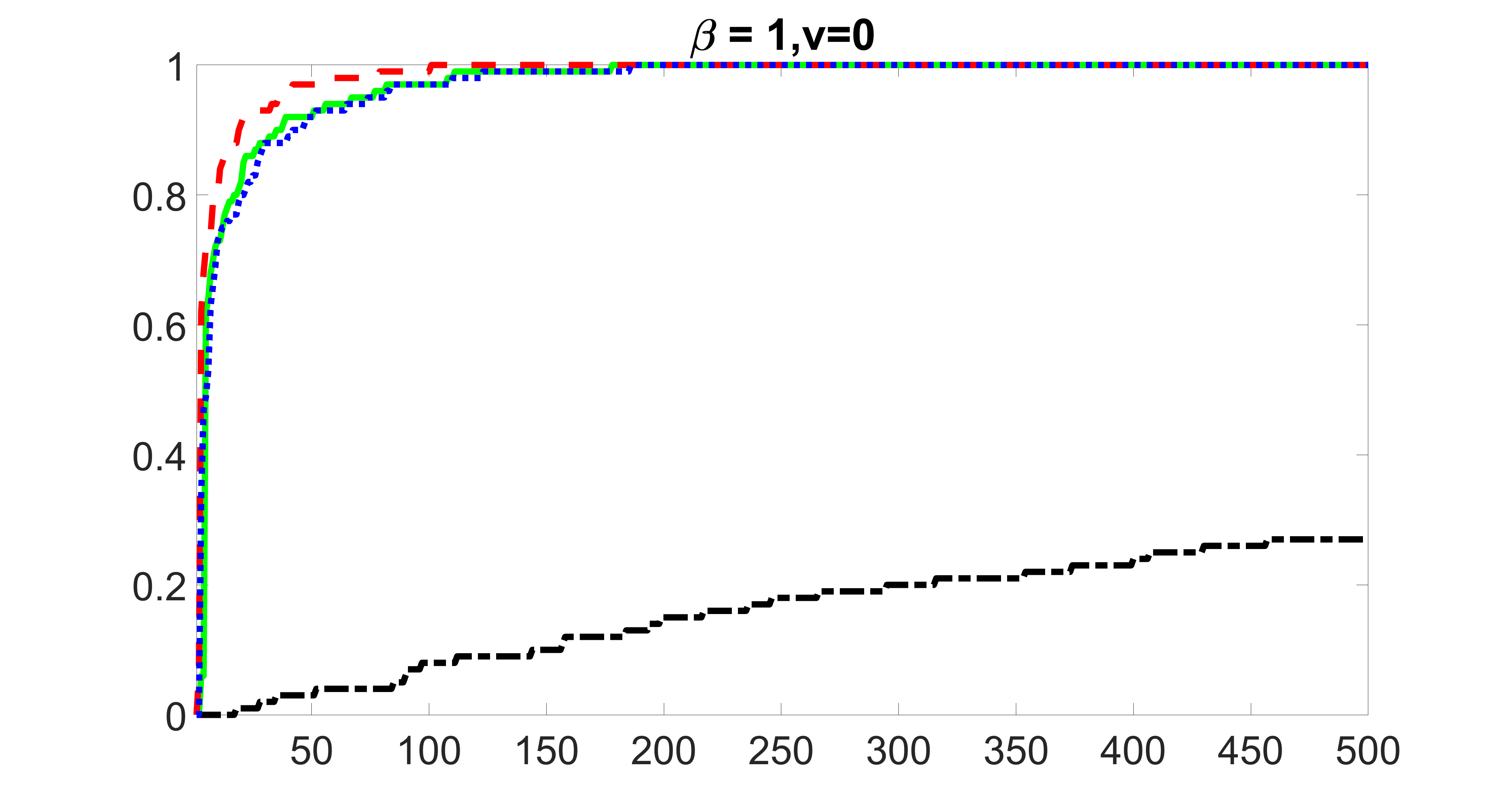}}
  \subcaptionbox{\footnotesize Precision: weak \\ outcome, zero exposure}[0.45\linewidth]
 {\includegraphics[width=6cm,height=3.5cm]{./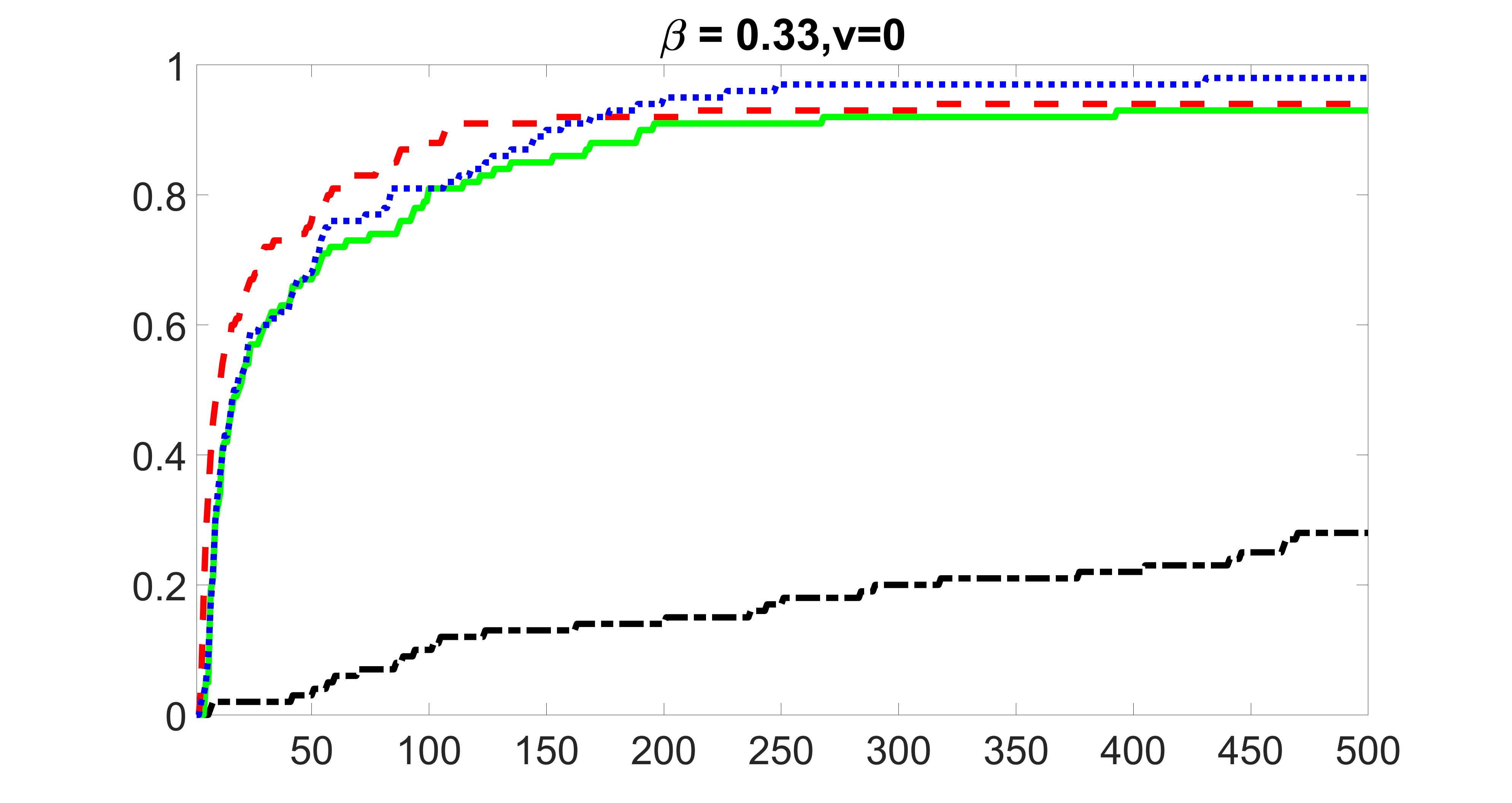}}
 \subcaptionbox{\footnotesize Precision: weaker \\ outcome, zero exposure}[0.45\linewidth]
 {\includegraphics[width=6cm,height=3.5cm]{./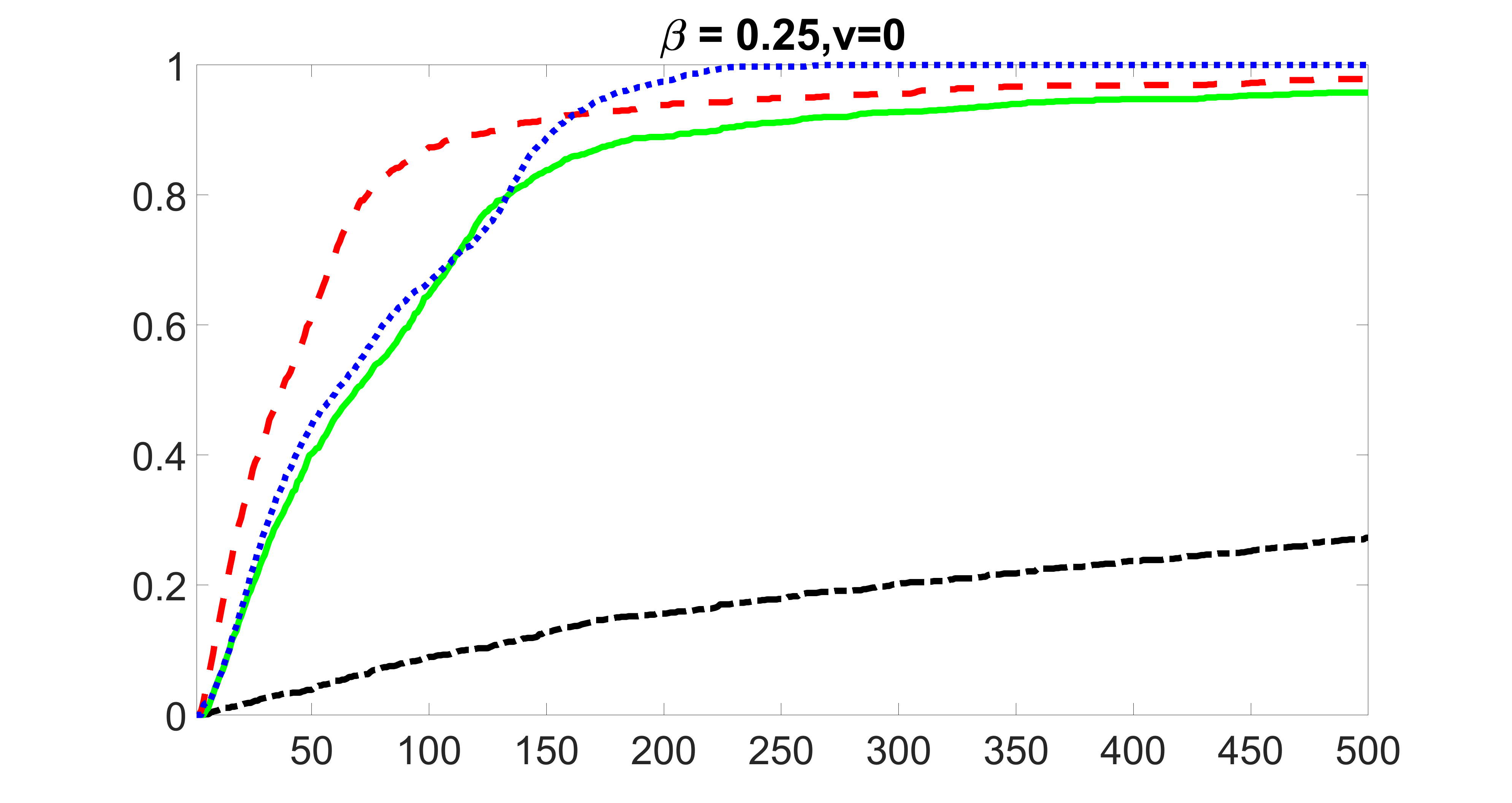}}
  \subcaptionbox{Overall coverage of $\mathcal{M}_1$}[0.45\linewidth]
 {\includegraphics[width=6cm,height=3.5cm]{./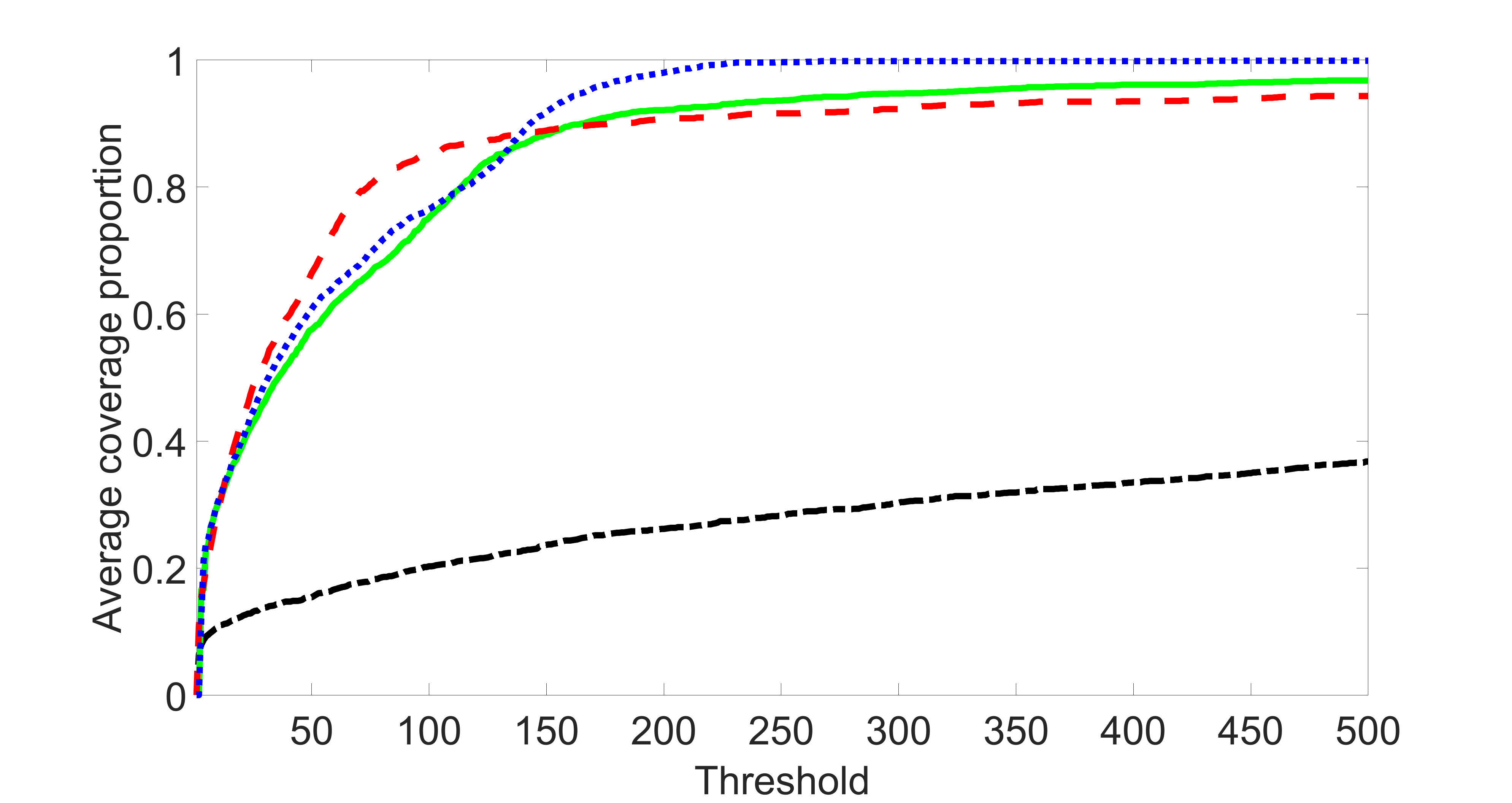}}
\caption{ Simulation results for the case $(n,s,K,\sigma) = (500,5000,12,1)$: Panels (a) -- (g) plot the average coverage proportion for $X_l$, where $l \in \mathcal{M}_1 =  \{1,2,3,104,105, 106\} \cup \mathcal{P}_{LD}$. Panels (a) -- (c) correspond to strong outcome and weak exposure predictor, moderate outcome and moderate exposure predictor and weak outcome and strong exposure predictor; Panels (d) -- (g) correspond to strong, moderate, and weak predictors of outcome only. Panel (g) plots the average coverage proportion for the index set $\mathcal{P}_{LD}$. Panel (h) plots the average coverage proportion for the index set $\mathcal{M}_1$. The x-axis represents the size of $\widehat{\mathcal{M}} $, while
y-axis denotes the average proportion. The blue dot, green solid, red dashed and black dash dotted lines denote the blockwise joint screening, joint screening, outcome screening, and intersection screening methods, respectively.}
\label{sim3step1n500sizesig12sigma1}
\end{figure}

\begin{figure}[htbp]
\captionsetup[subfigure]{justification=centering}
\centering
 \subcaptionbox{\footnotesize Confounder: strong \\ outcome, weak exposure}[0.45\linewidth]
 {\includegraphics[width=6cm,height=3.5cm]{./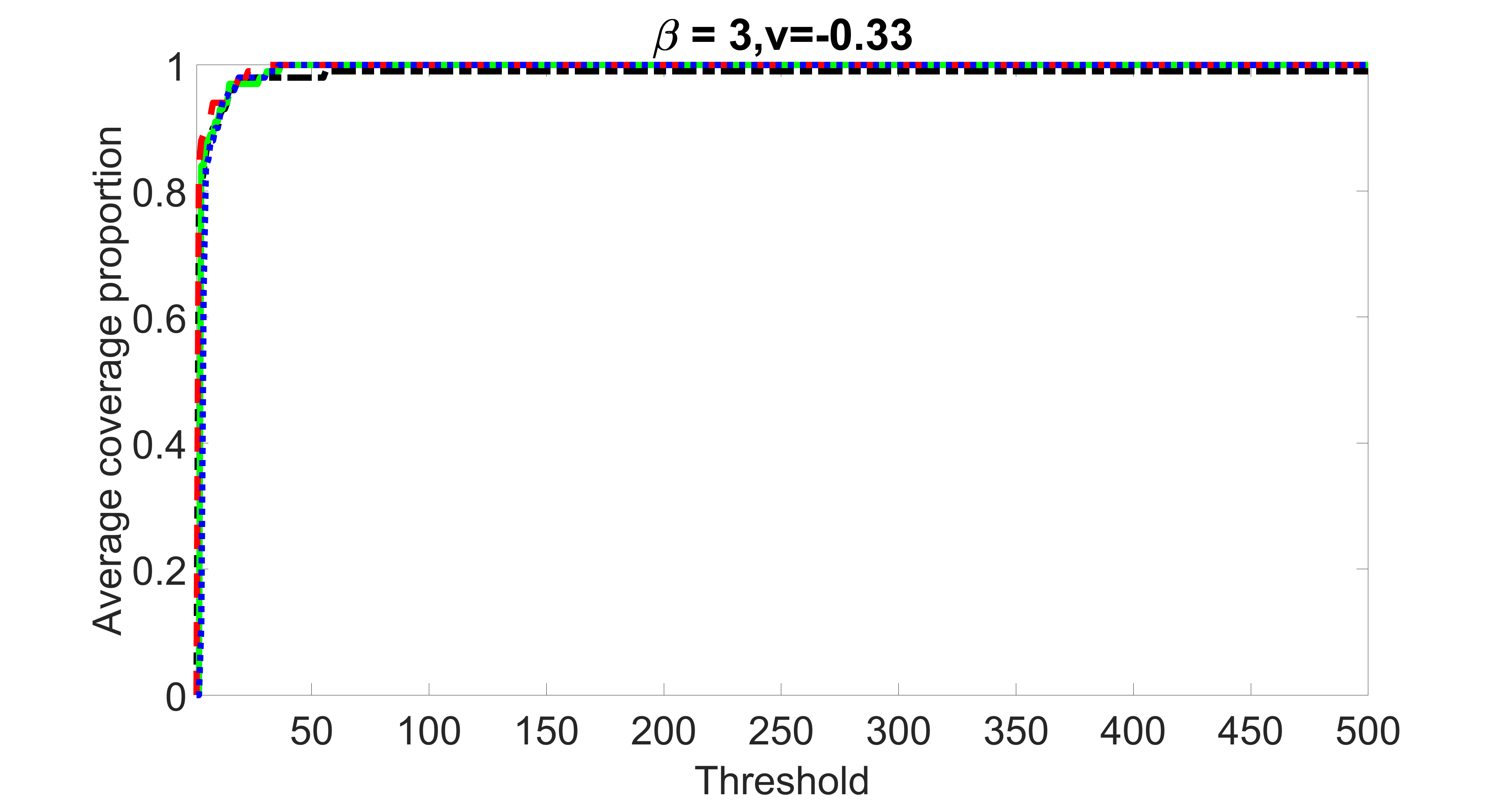}}
 \subcaptionbox{\footnotesize Confounder: medium \\ outcome, medium exposure}[0.45\linewidth]
 {\includegraphics[width=6cm,height=3.5cm]{./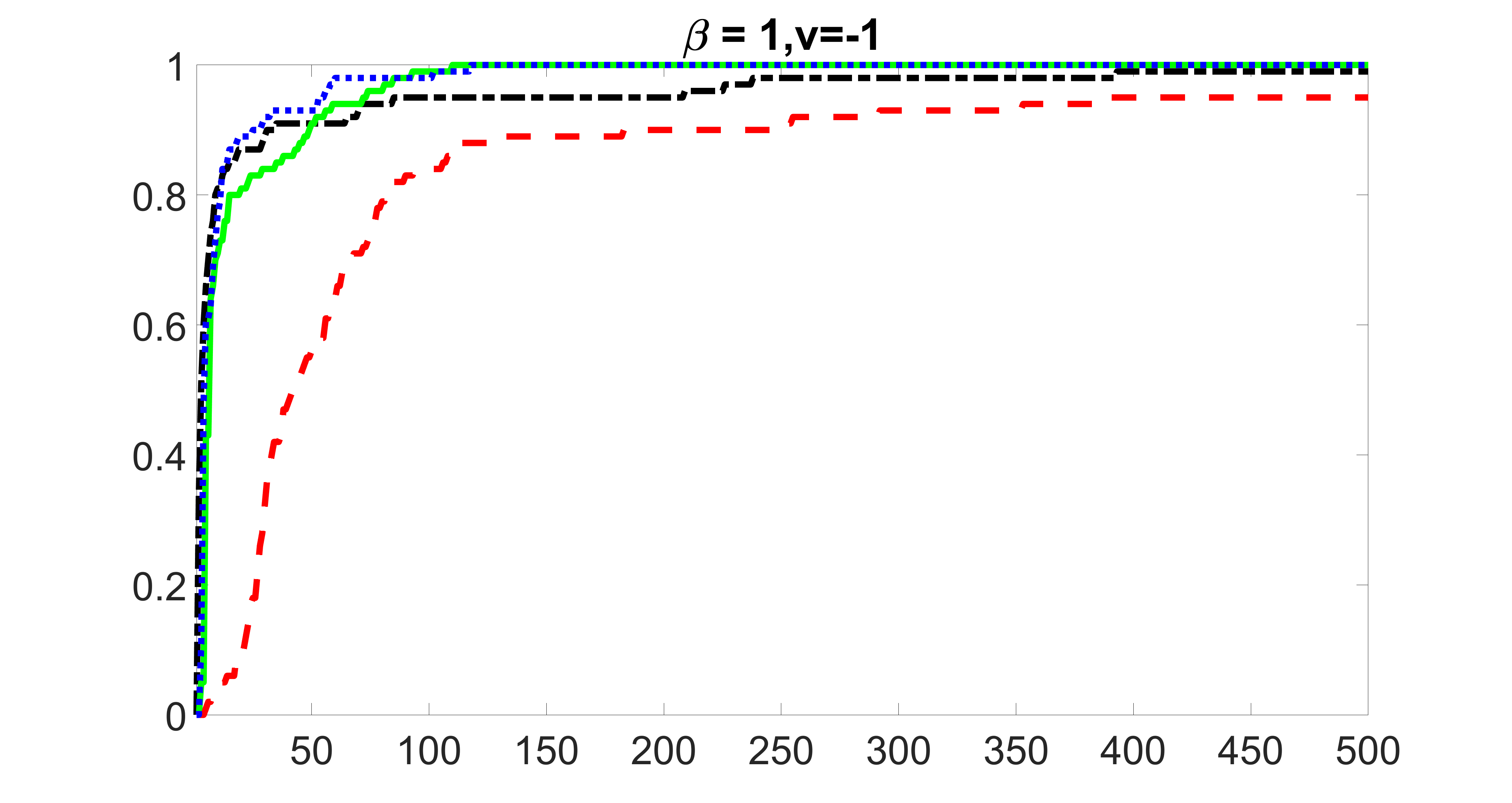}}
  \subcaptionbox{\footnotesize Confounder: weak \\ outcome, strong exposure}[0.45\linewidth]
 {\includegraphics[width=6cm,height=3.5cm]{./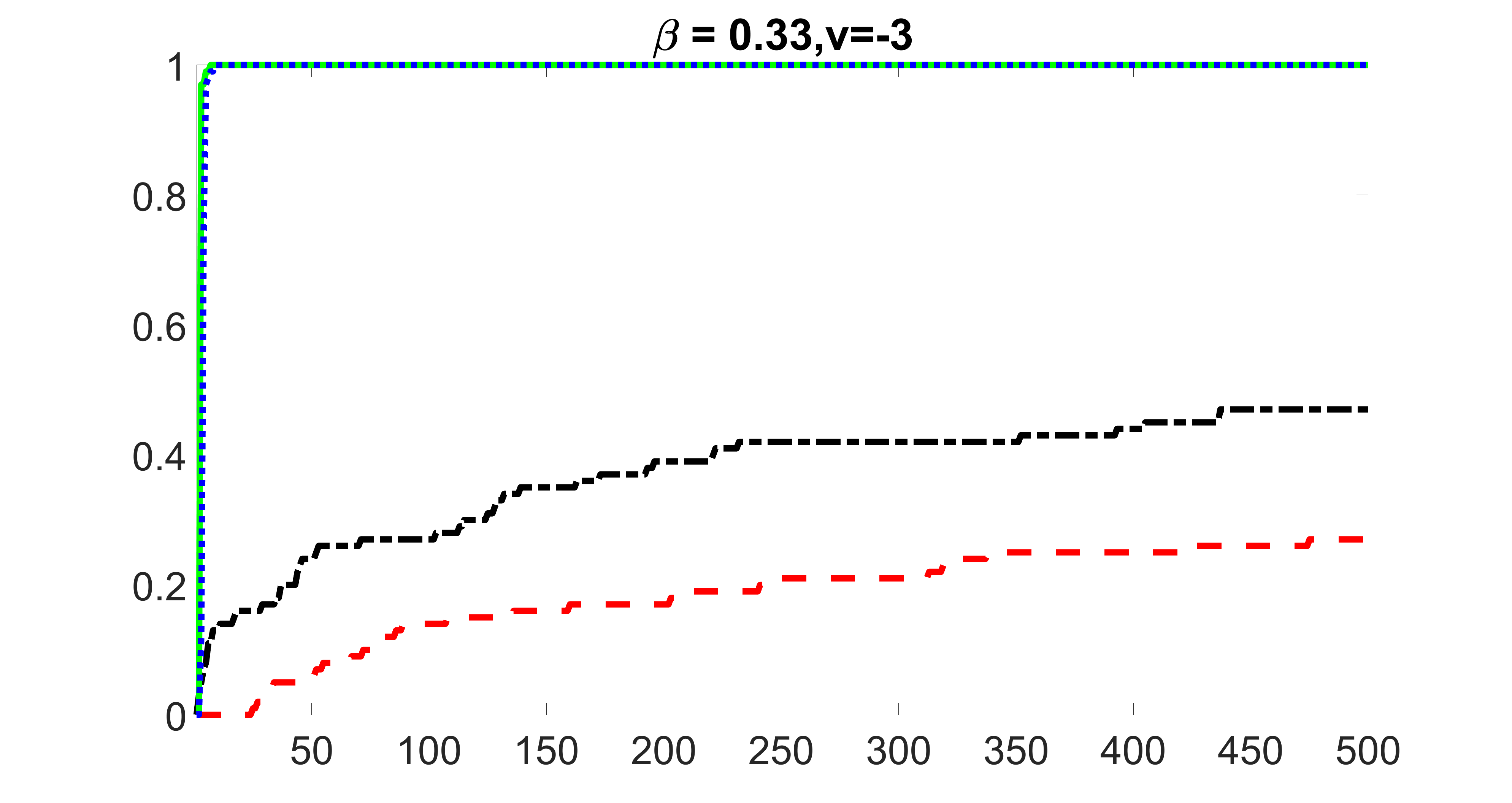}}
  \subcaptionbox{\footnotesize Precision: strong \\ outcome, zero exposure}[0.45\linewidth]
 {\includegraphics[width=6cm,height=3.5cm]{./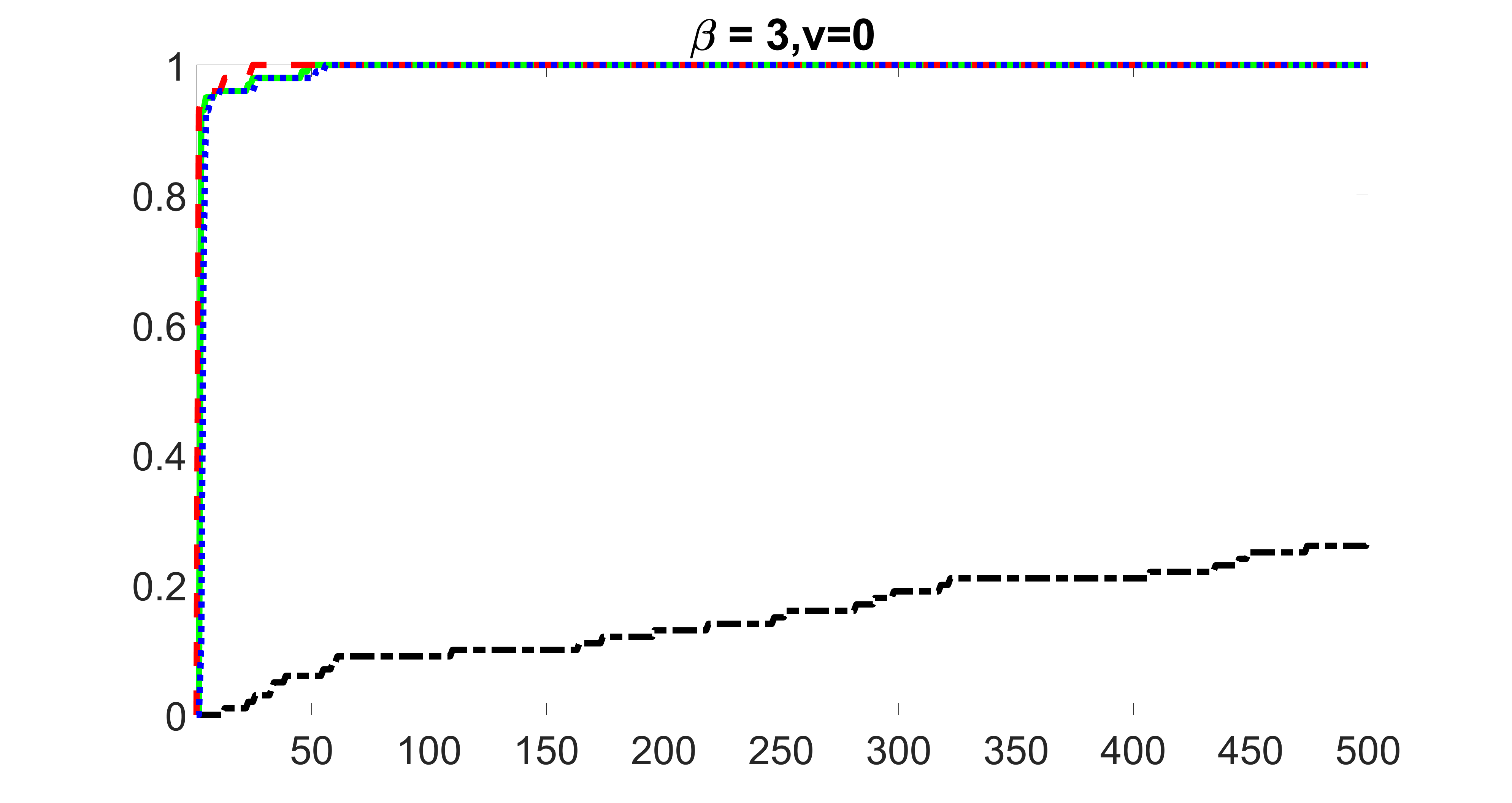}}
  \subcaptionbox{\footnotesize Precision: medium \\ outcome, zero exposure}[0.45\linewidth]
 {\includegraphics[width=6cm,height=3.5cm]{./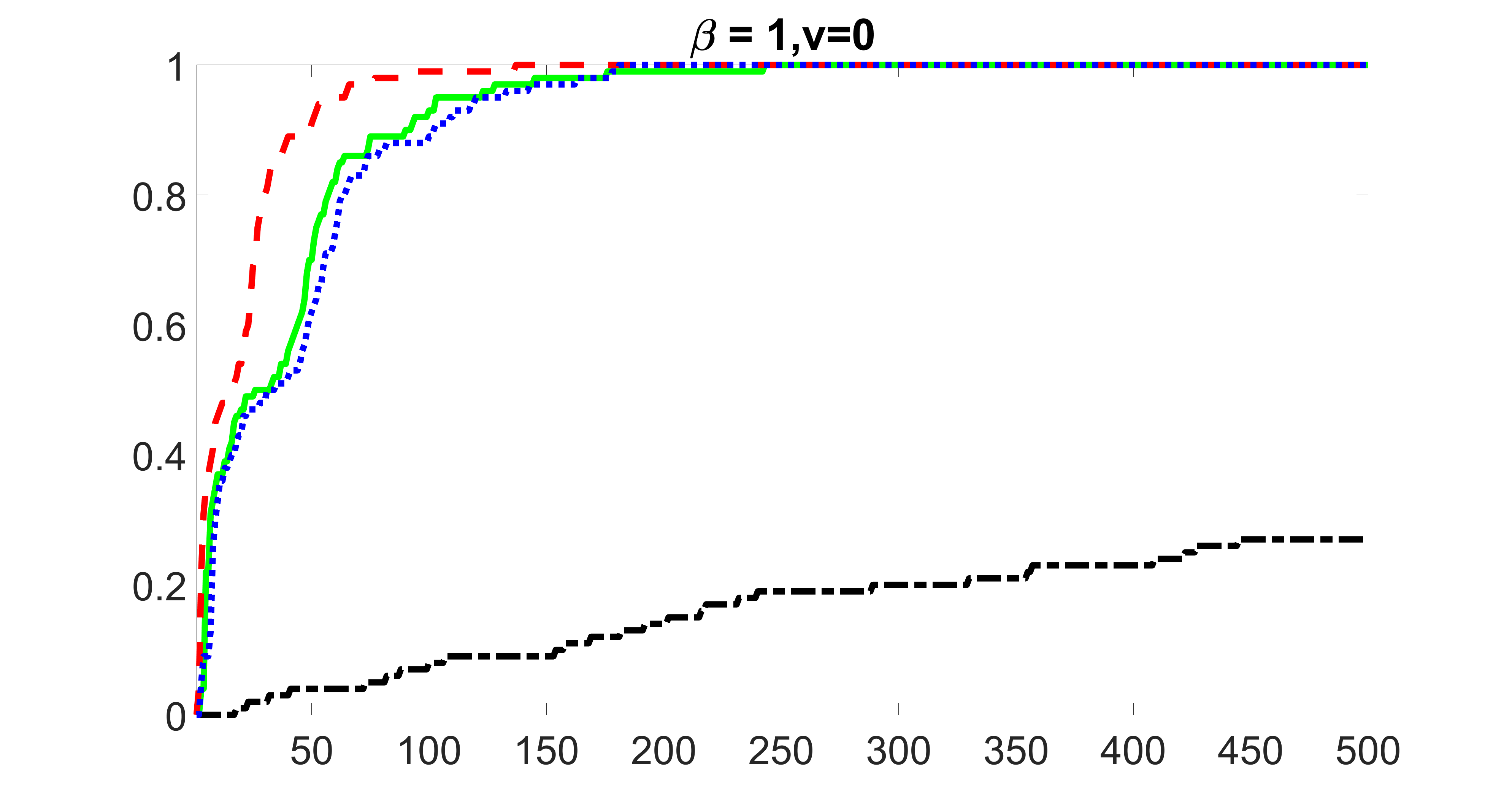}}
  \subcaptionbox{\footnotesize Precision: weak \\ outcome, zero exposure}[0.45\linewidth]
 {\includegraphics[width=6cm,height=3.5cm]{./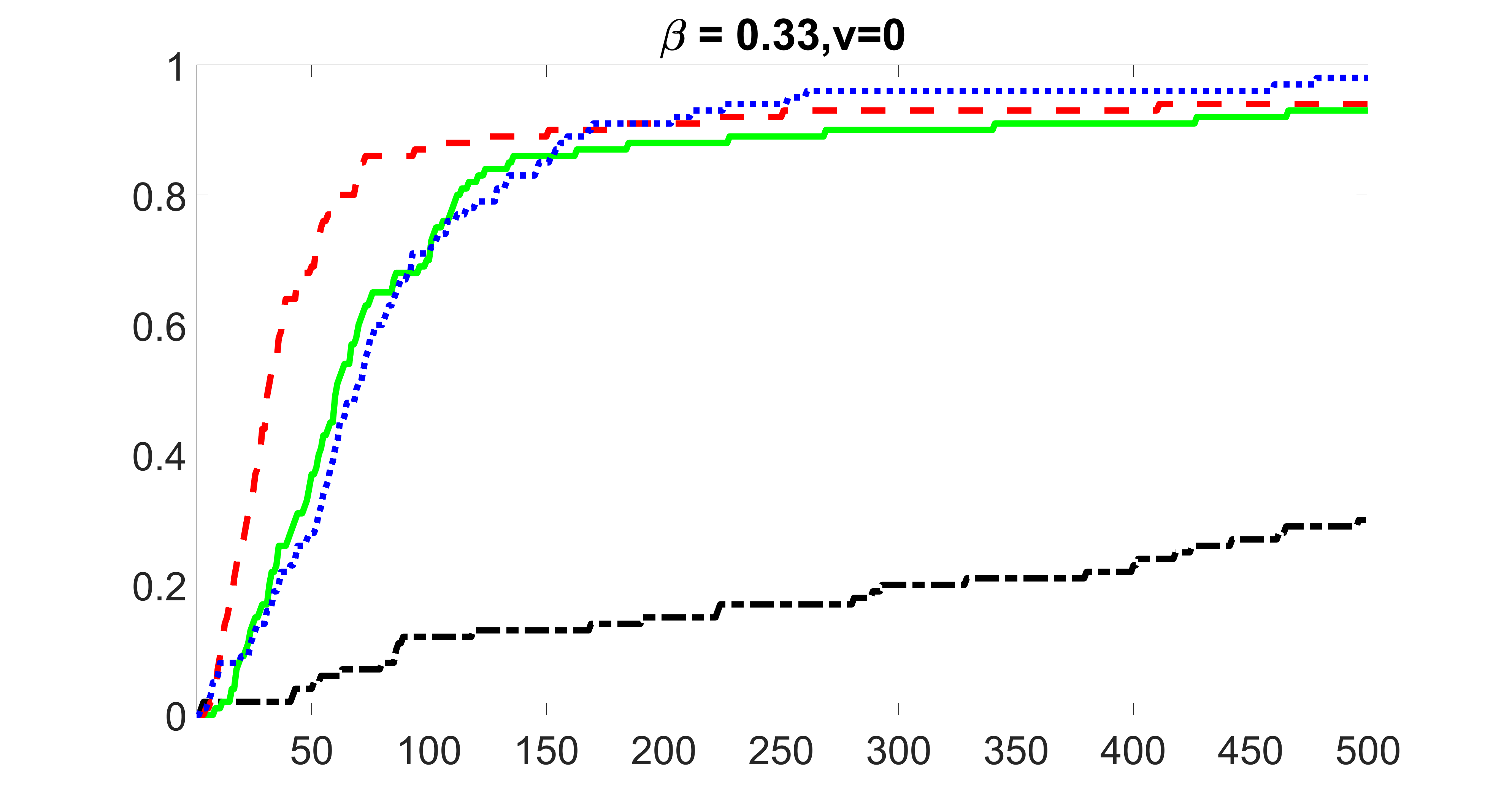}}
 \subcaptionbox{\footnotesize Precision: weaker \\ outcome, zero exposure}[0.45\linewidth]
 {\includegraphics[width=6cm,height=3.5cm]{./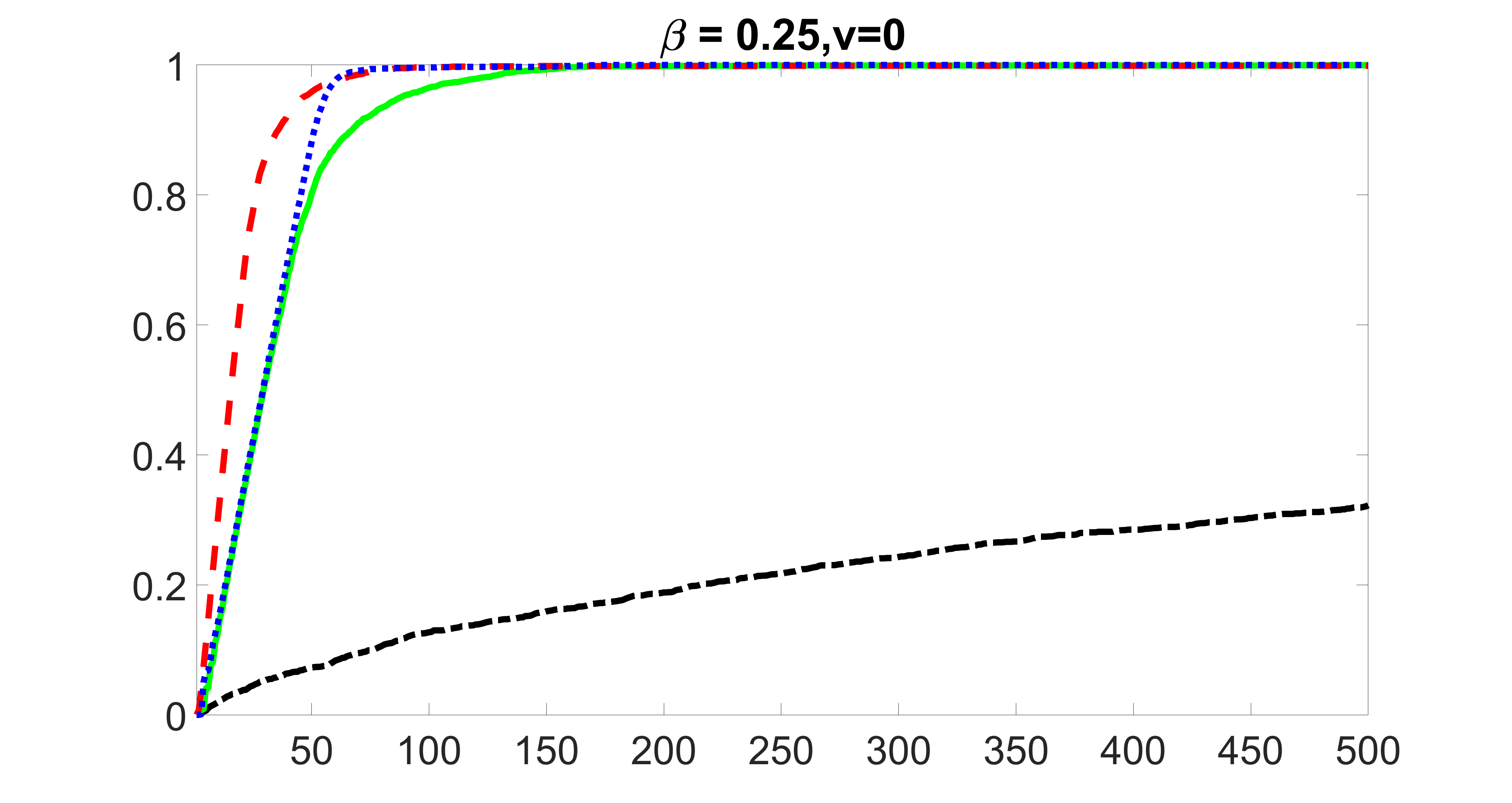}}
  \subcaptionbox{Overall coverage of $\mathcal{M}_1$}[0.45\linewidth]
 {\includegraphics[width=6cm,height=3.5cm]{./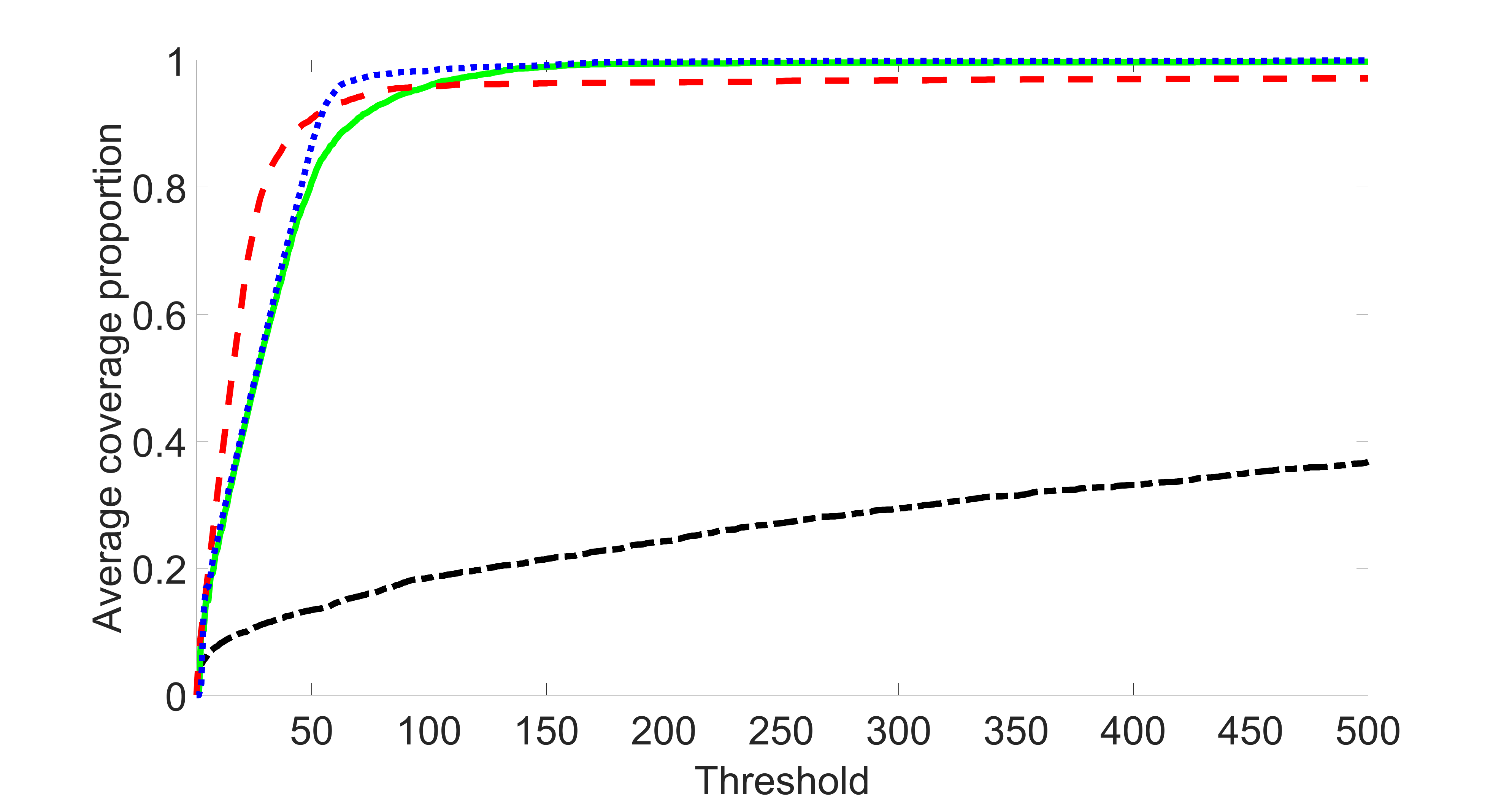}}
\caption{ Simulation results for the case $(n,s,K,\sigma) = (500,5000,24,1)$: Panels (a) -- (g) plot the average coverage proportion for $X_l$, where $l \in \mathcal{M}_1 =  \{1,2,3,104,105, 106\} \cup \mathcal{P}_{LD}$. Panels (a) -- (c) correspond to strong outcome and weak exposure predictor, moderate outcome and moderate exposure predictor and weak outcome and strong exposure predictor; Panels (d) -- (g) correspond to strong, moderate, and weak predictors of outcome only. Panel (g) plots the average coverage proportion for the index set $\mathcal{P}_{LD}$. Panel (h) plots the average coverage proportion for the index set $\mathcal{M}_1$. The x-axis represents the size of $\widehat{\mathcal{M}} $, while
y-axis denotes the average proportion. The blue dot, green solid, red dashed and black dash dotted lines denote the blockwise joint screening, joint screening, outcome screening, and intersection screening methods, respectively.}
\label{sim3step1n500sizesig24sigma1}
\end{figure}

\begin{figure}[htbp]
\captionsetup[subfigure]{justification=centering}
\centering
 \subcaptionbox{\footnotesize Confounder: strong \\ outcome, weak exposure}[0.45\linewidth]
 {\includegraphics[width=6cm,height=3.5cm]{./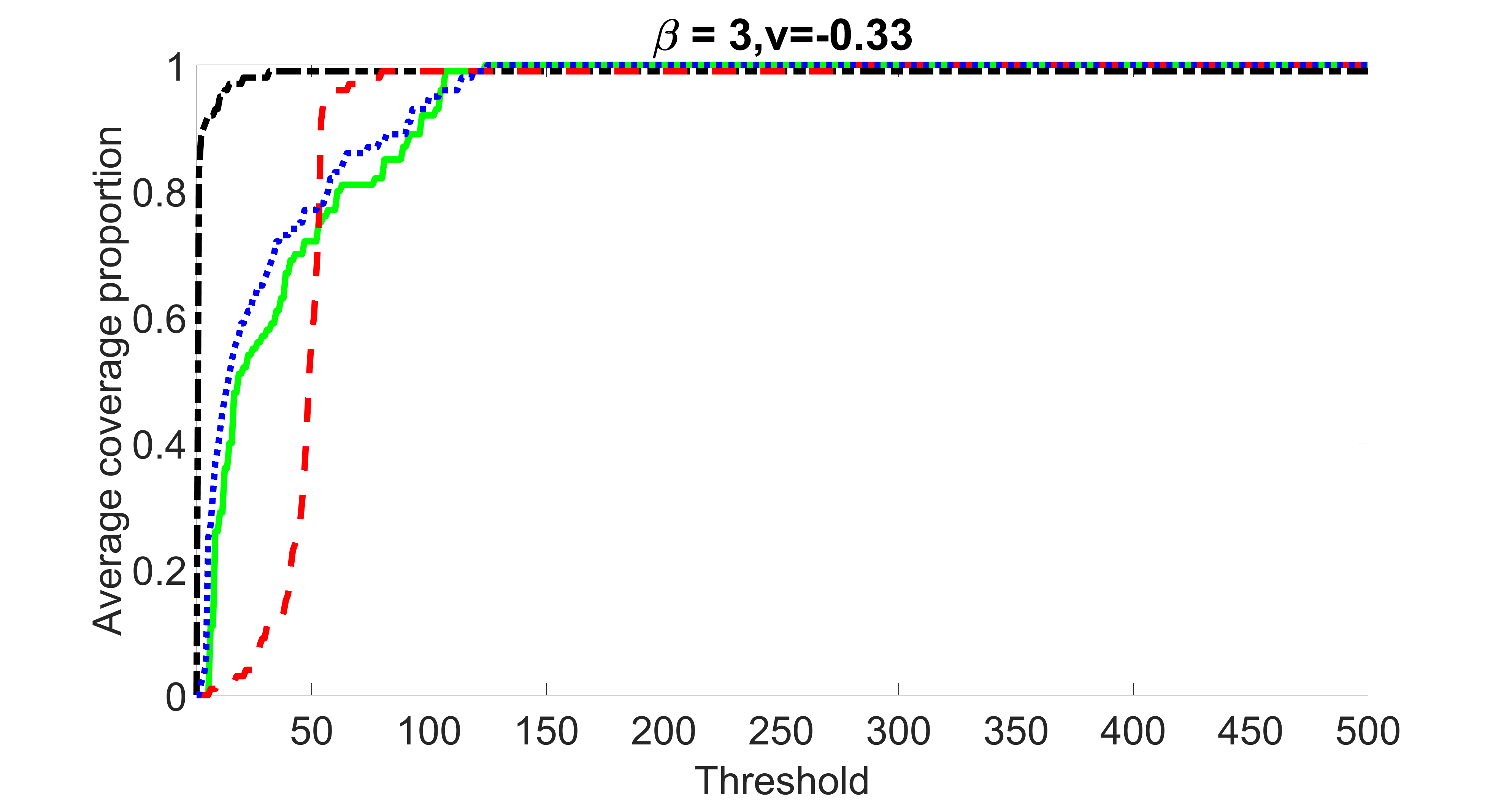}}
 \subcaptionbox{\footnotesize Confounder: medium \\ outcome, medium exposure}[0.45\linewidth]
 {\includegraphics[width=6cm,height=3.5cm]{./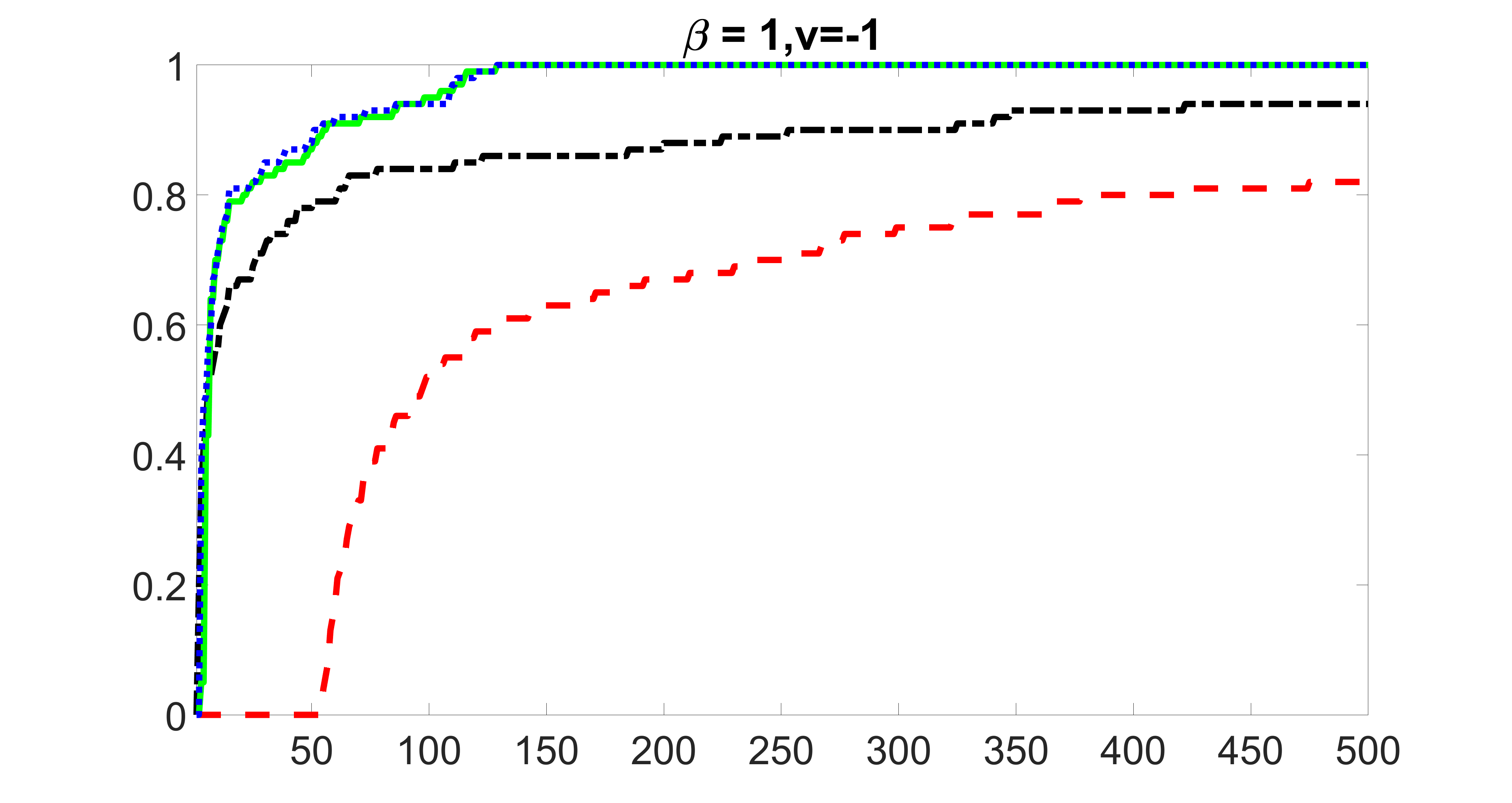}}
  \subcaptionbox{\footnotesize Confounder: weak \\ outcome, strong exposure}[0.45\linewidth]
 {\includegraphics[width=6cm,height=3.5cm]{./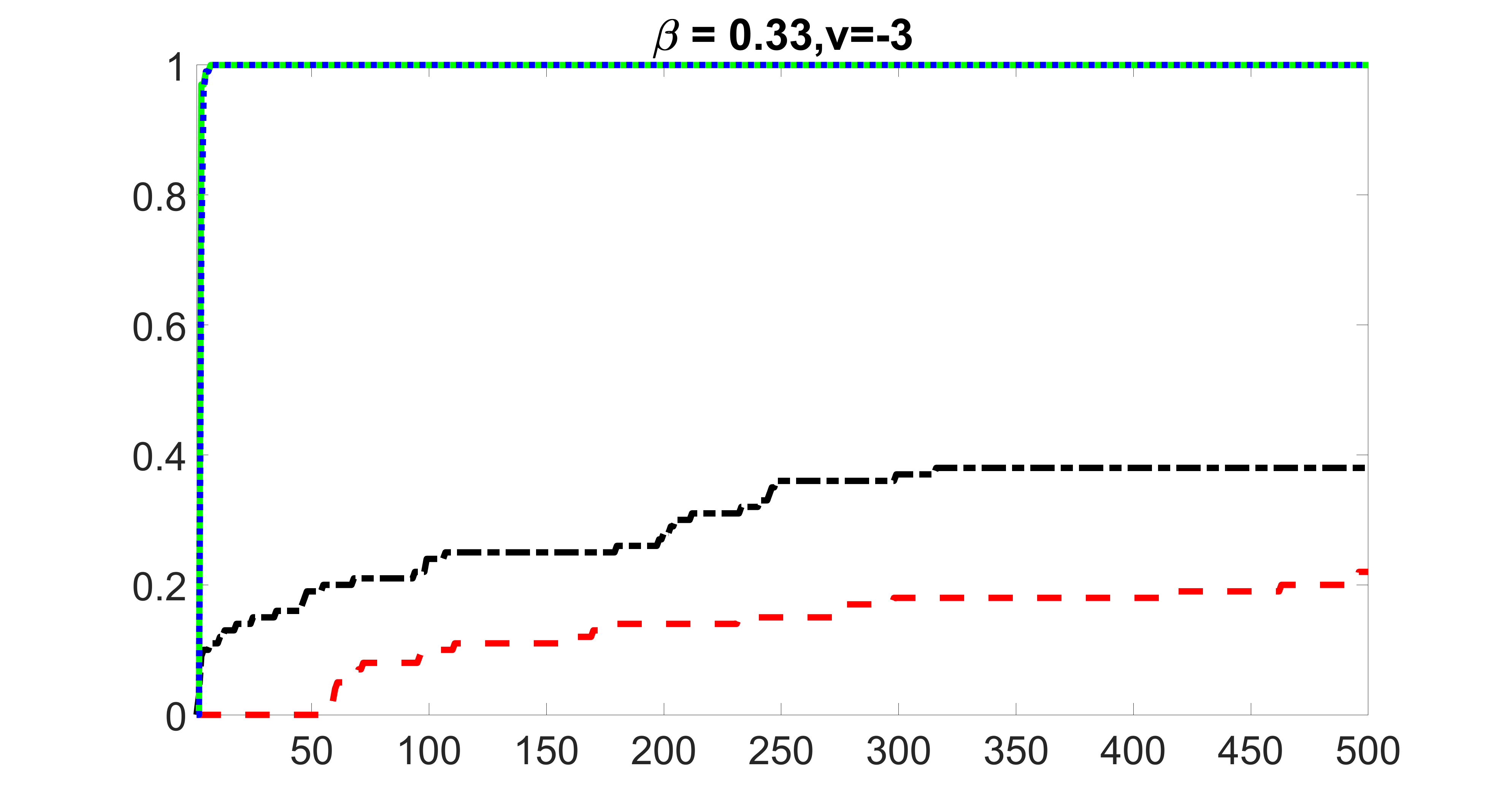}}
  \subcaptionbox{\footnotesize Precision: strong \\ outcome, zero exposure}[0.45\linewidth]
 {\includegraphics[width=6cm,height=3.5cm]{./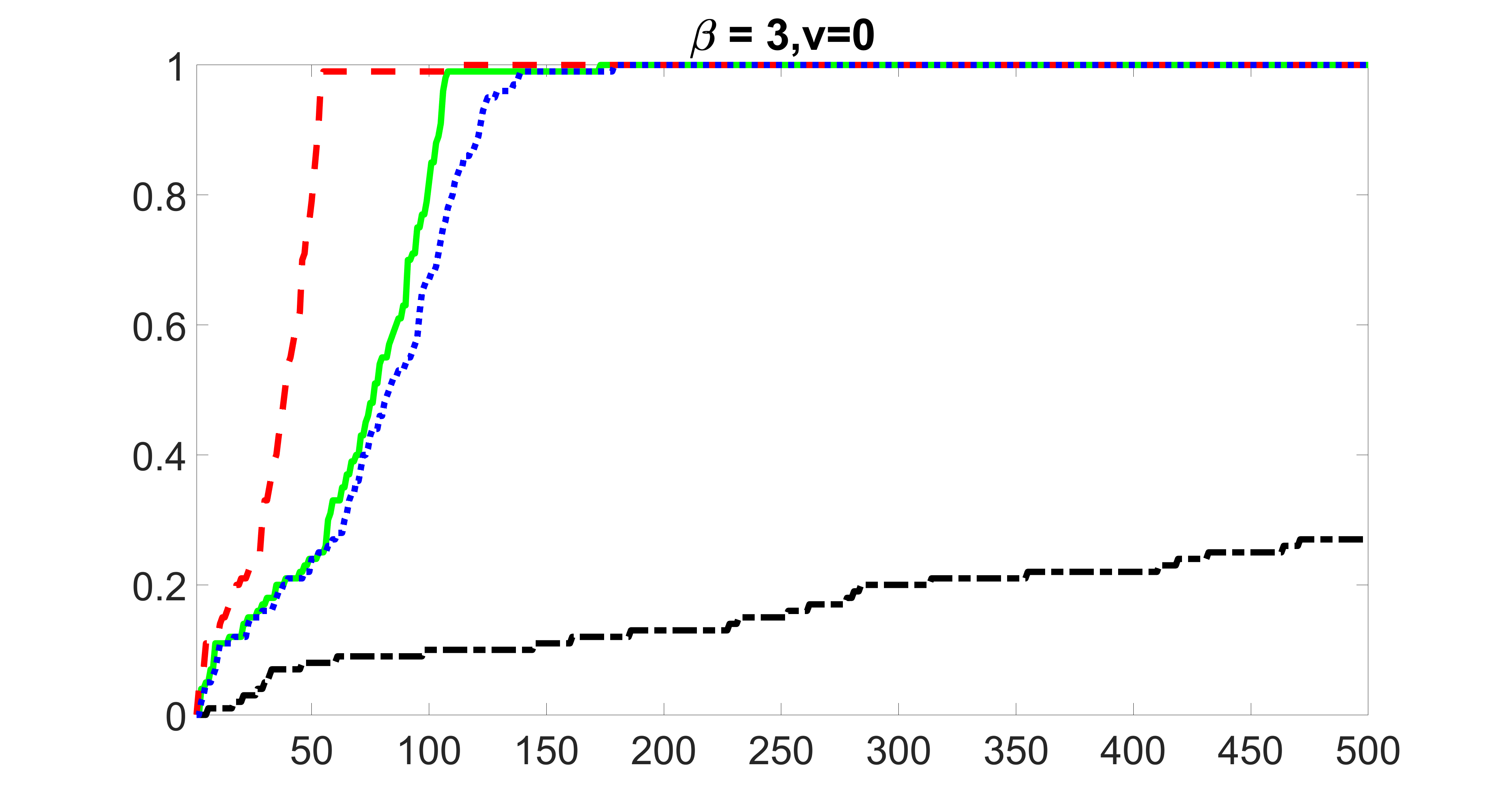}}
  \subcaptionbox{\footnotesize Precision: medium \\ outcome, zero exposure}[0.45\linewidth]
 {\includegraphics[width=6cm,height=3.5cm]{./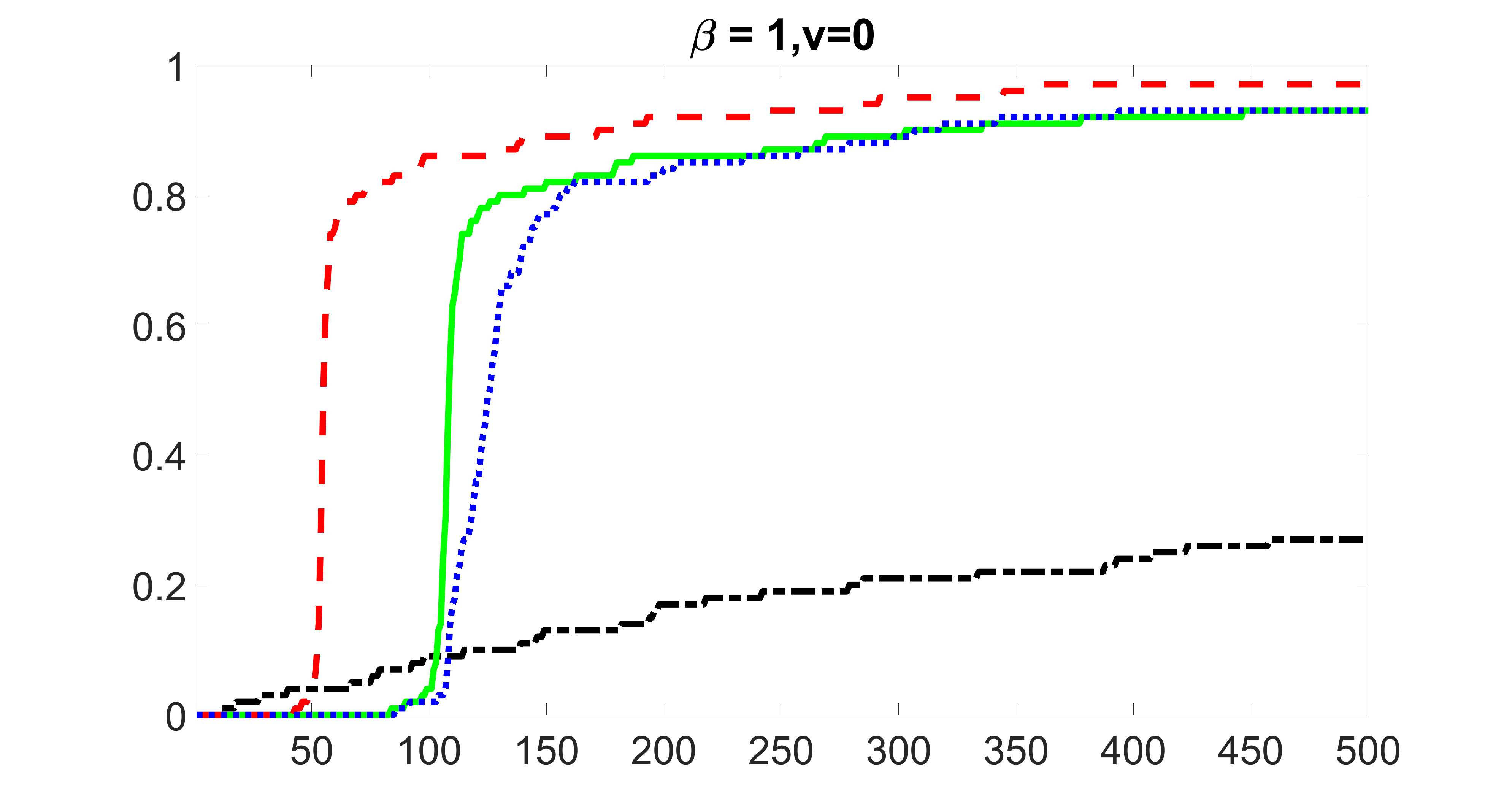}}
  \subcaptionbox{\footnotesize Precision: weak \\ outcome, zero exposure}[0.45\linewidth]
 {\includegraphics[width=6cm,height=3.5cm]{./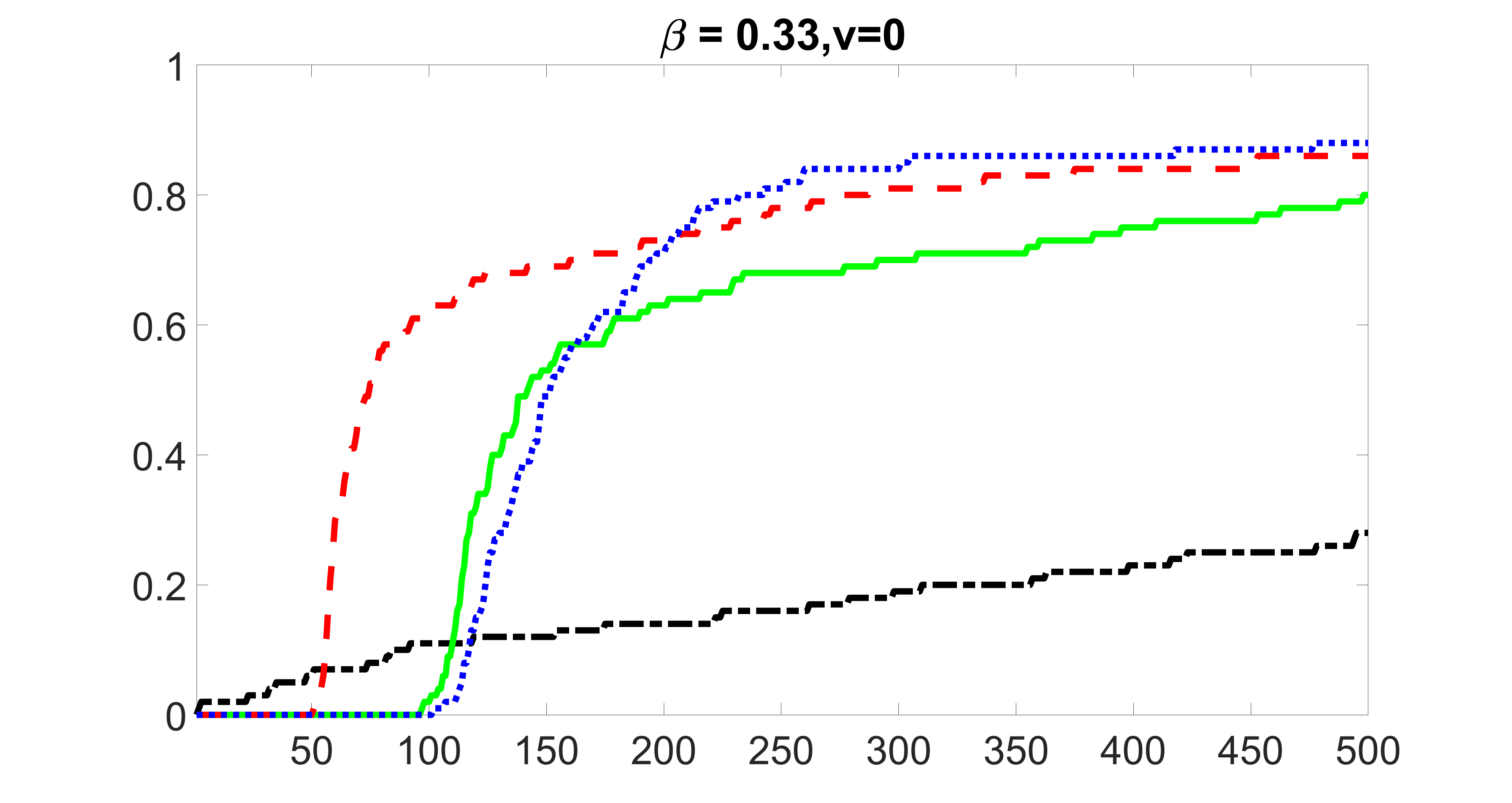}}
 \subcaptionbox{\footnotesize Precision: weaker \\ outcome, zero exposure}[0.45\linewidth]
 {\includegraphics[width=6cm,height=3.5cm]{./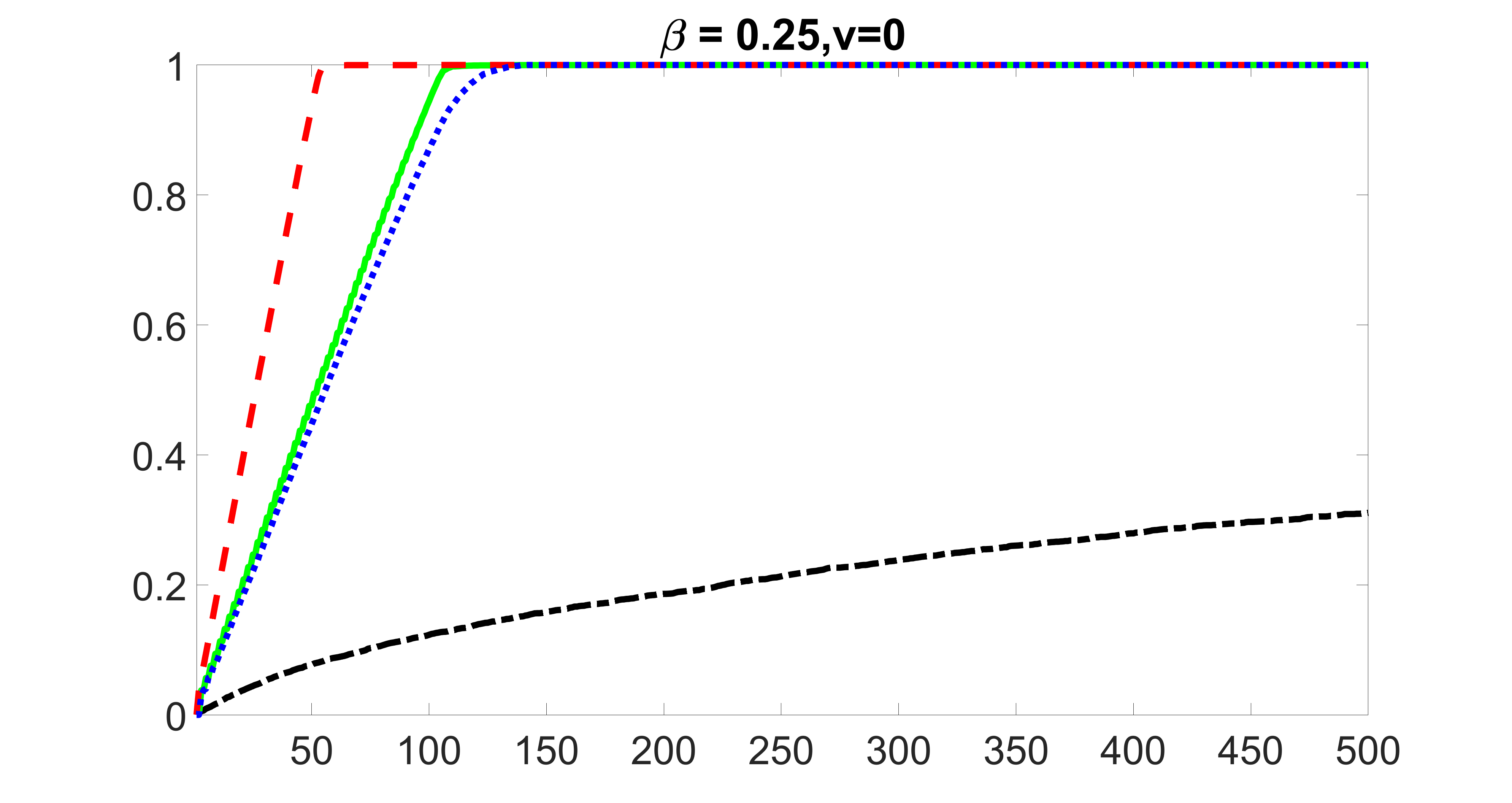}}
  \subcaptionbox{Overall coverage of $\mathcal{M}_1$}[0.45\linewidth]
 {\includegraphics[width=6cm,height=3.5cm]{./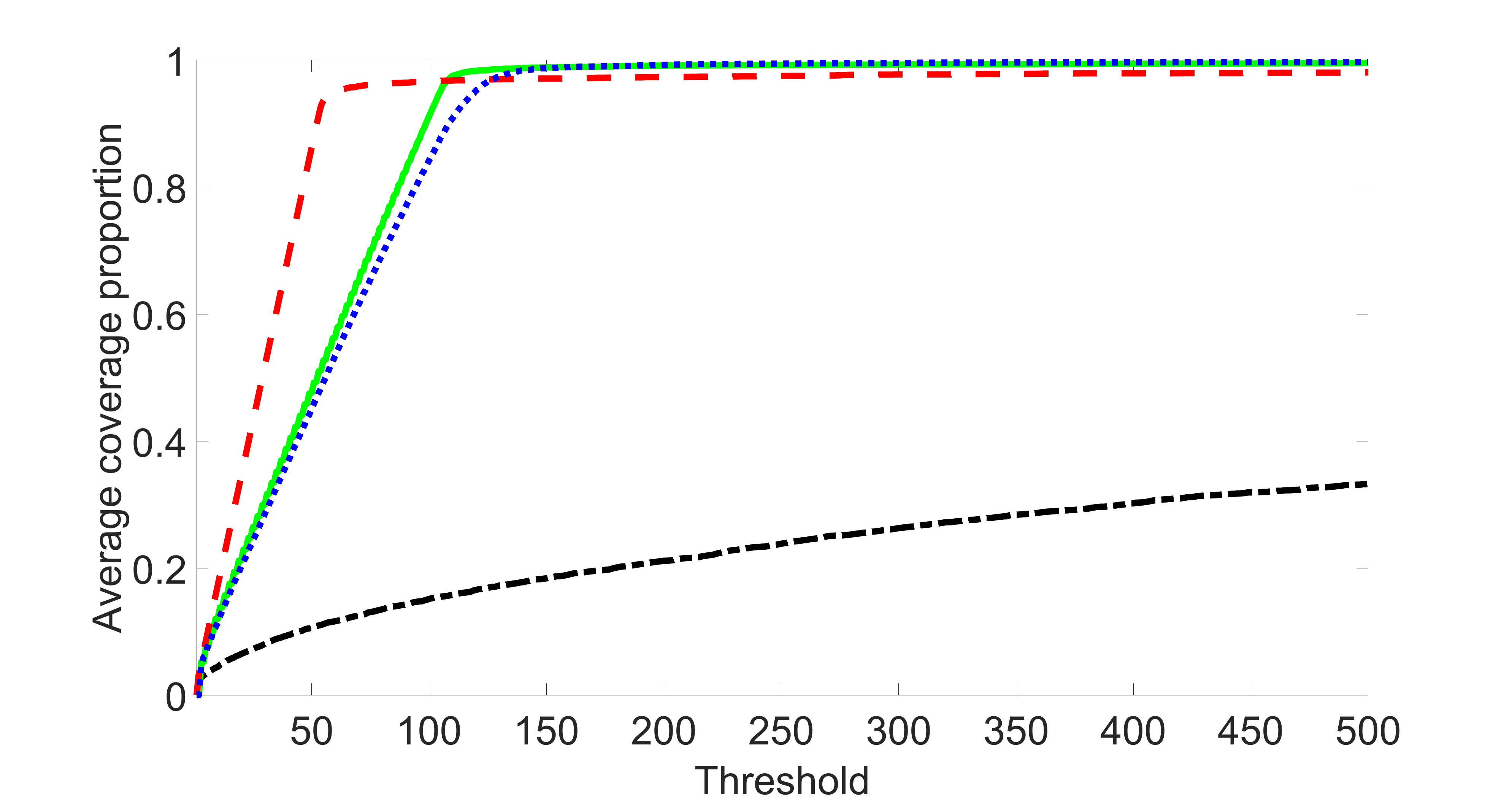}}
\caption{ Simulation results for the case $(n,s,K,\sigma) = (500,5000,52,1)$: Panels (a) -- (g) plot the average coverage proportion for $X_l$, where $l \in \mathcal{M}_1 =  \{1,2,3,104,105, 106\} \cup \mathcal{P}_{LD}$. Panels (a) -- (c) correspond to strong outcome and weak exposure predictor, moderate outcome and moderate exposure predictor and weak outcome and strong exposure predictor; Panels (d) -- (g) correspond to strong, moderate, and weak predictors of outcome only. Panel (g) plots the average coverage proportion for the index set $\mathcal{P}_{LD}$. Panel (h) plots the average coverage proportion for the index set $\mathcal{M}_1$. The x-axis represents the size of $\widehat{\mathcal{M}} $, while
y-axis denotes the average proportion. The blue dot, green solid, red dashed and black dash dotted lines denote the blockwise joint screening, joint screening, outcome screening, and intersection screening methods, respectively.}
\label{sim3step1n500sizesig52sigma1}
\end{figure}

\begin{figure}[htbp]
\captionsetup[subfigure]{justification=centering}
\centering
 \subcaptionbox{\footnotesize Confounder: strong \\ outcome, weak exposure}[0.45\linewidth]
 {\includegraphics[width=6cm,height=3.5cm]{./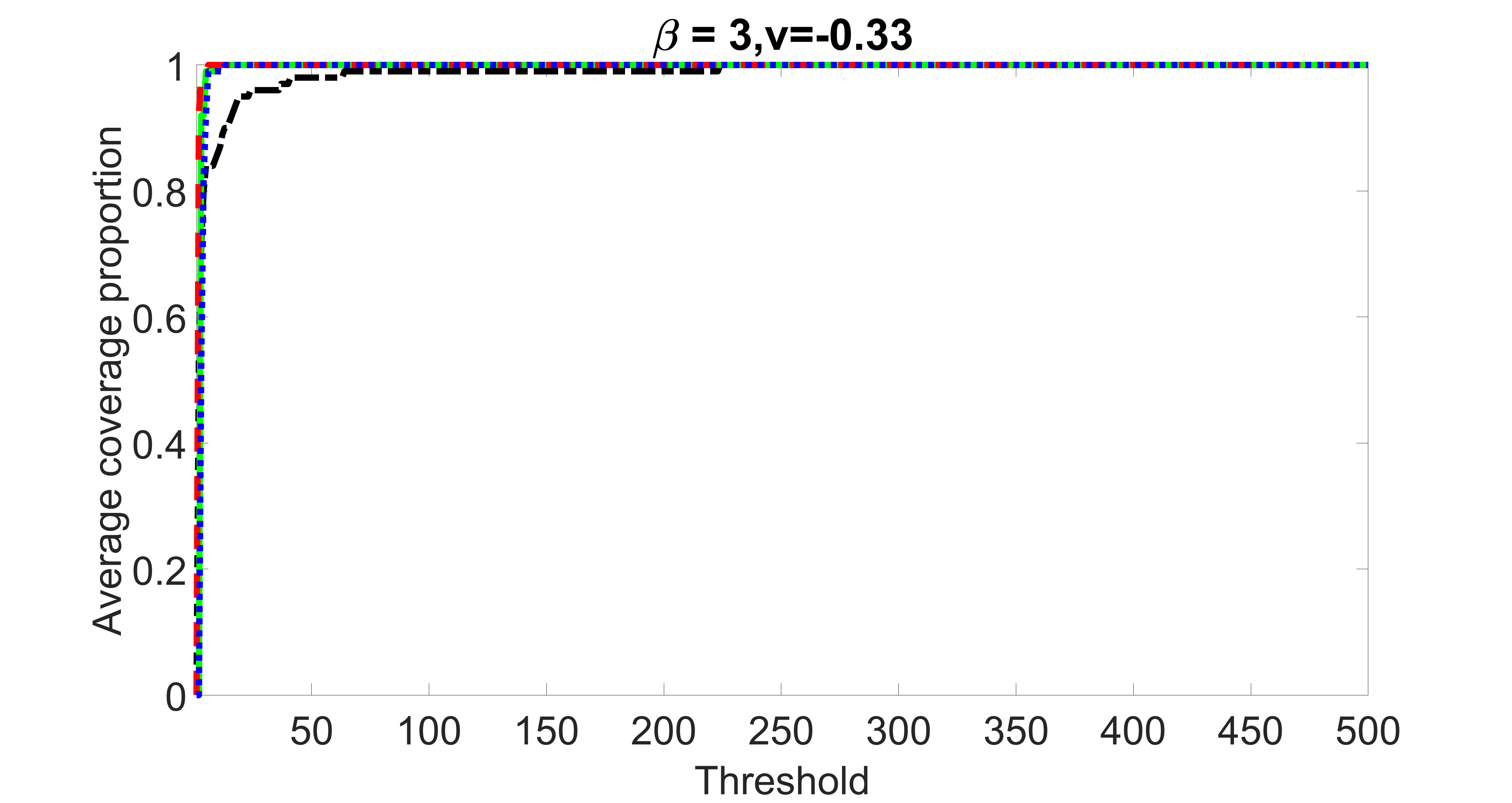}}
 \subcaptionbox{\footnotesize Confounder: medium \\ outcome, medium exposure}[0.45\linewidth]
 {\includegraphics[width=6cm,height=3.5cm]{./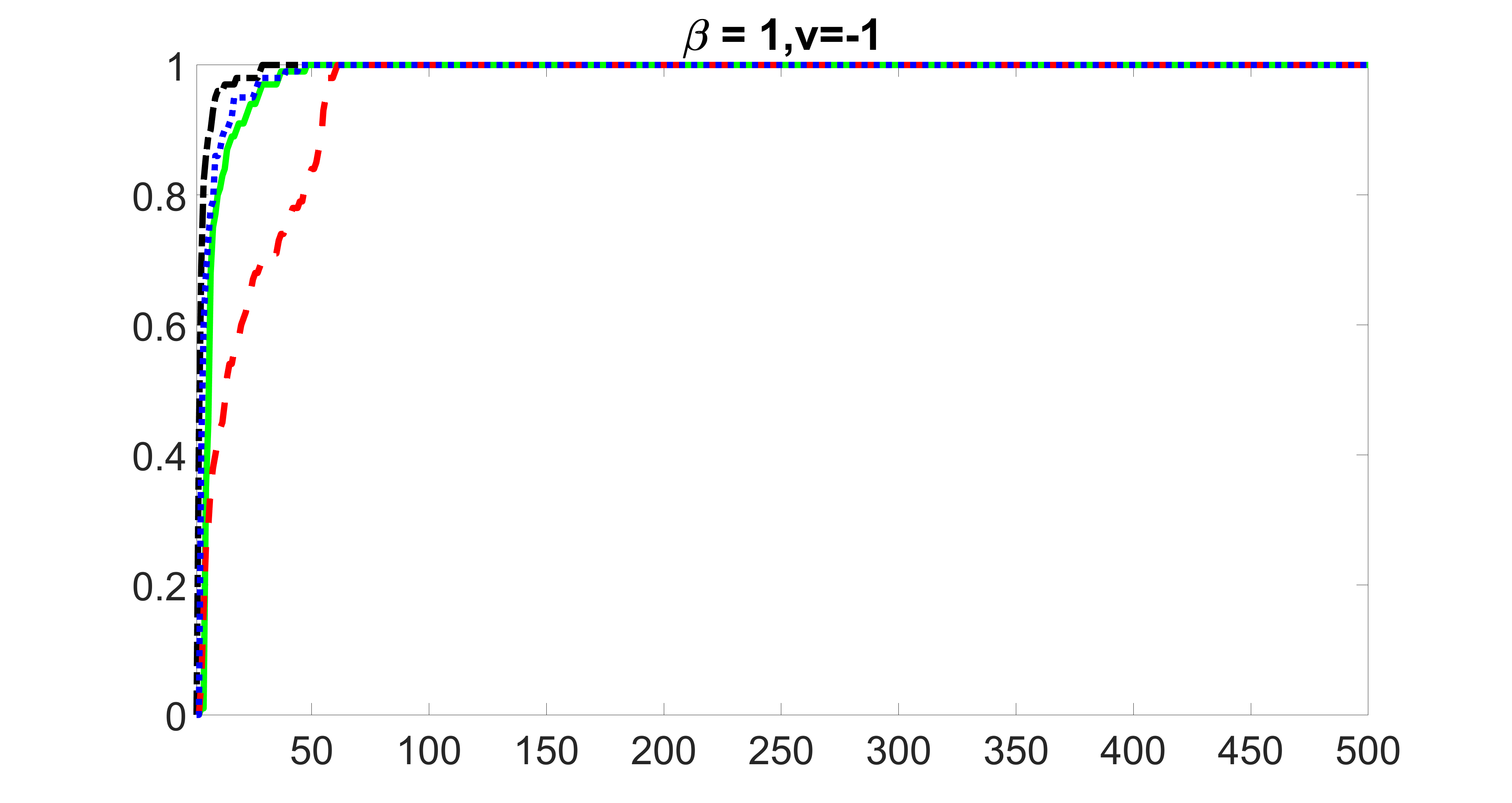}}
  \subcaptionbox{\footnotesize Confounder: weak \\ outcome, strong exposure}[0.45\linewidth]
 {\includegraphics[width=6cm,height=3.5cm]{./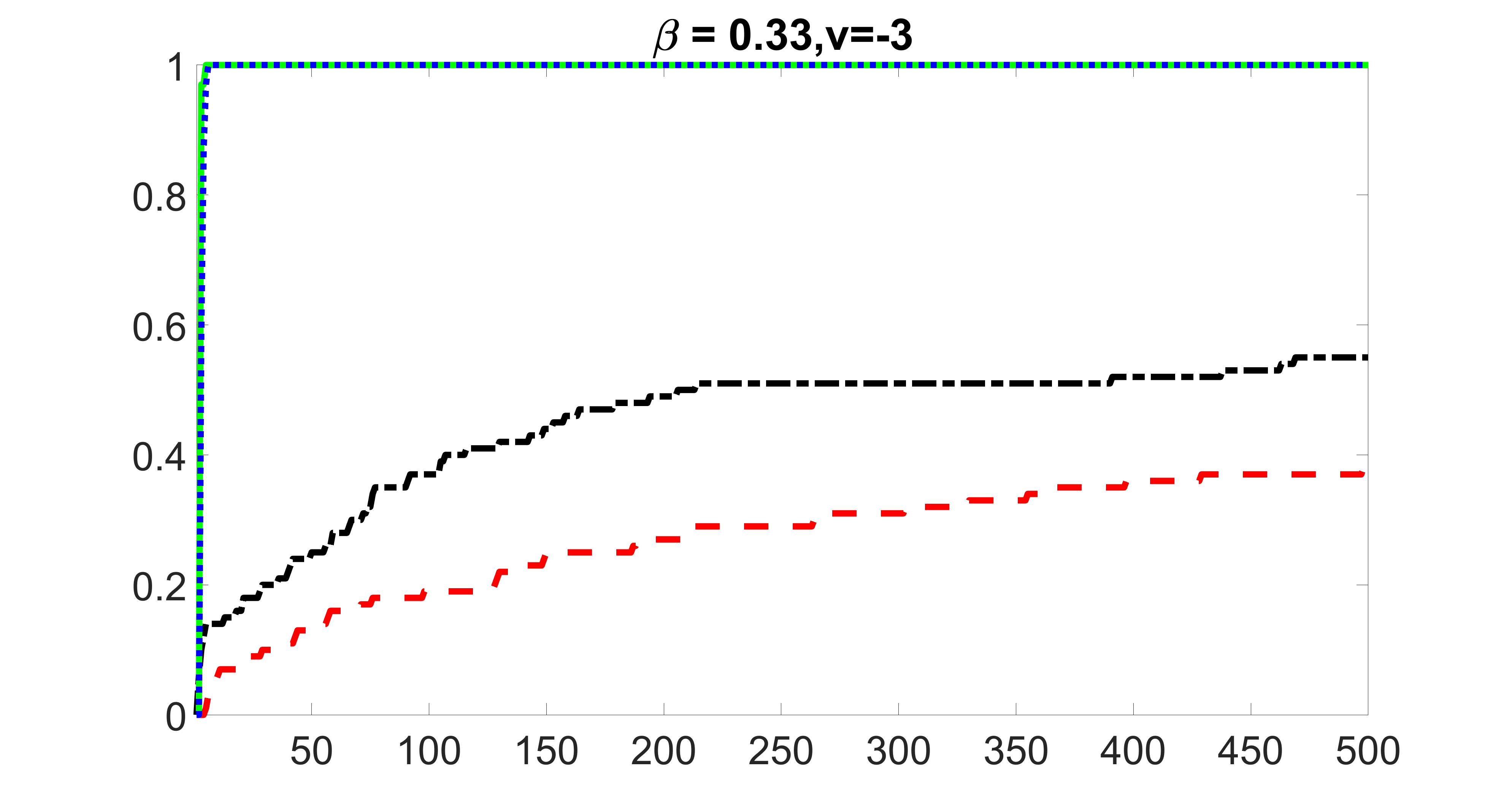}}
  \subcaptionbox{\footnotesize Precision: strong \\ outcome, zero exposure}[0.45\linewidth]
 {\includegraphics[width=6cm,height=3.5cm]{./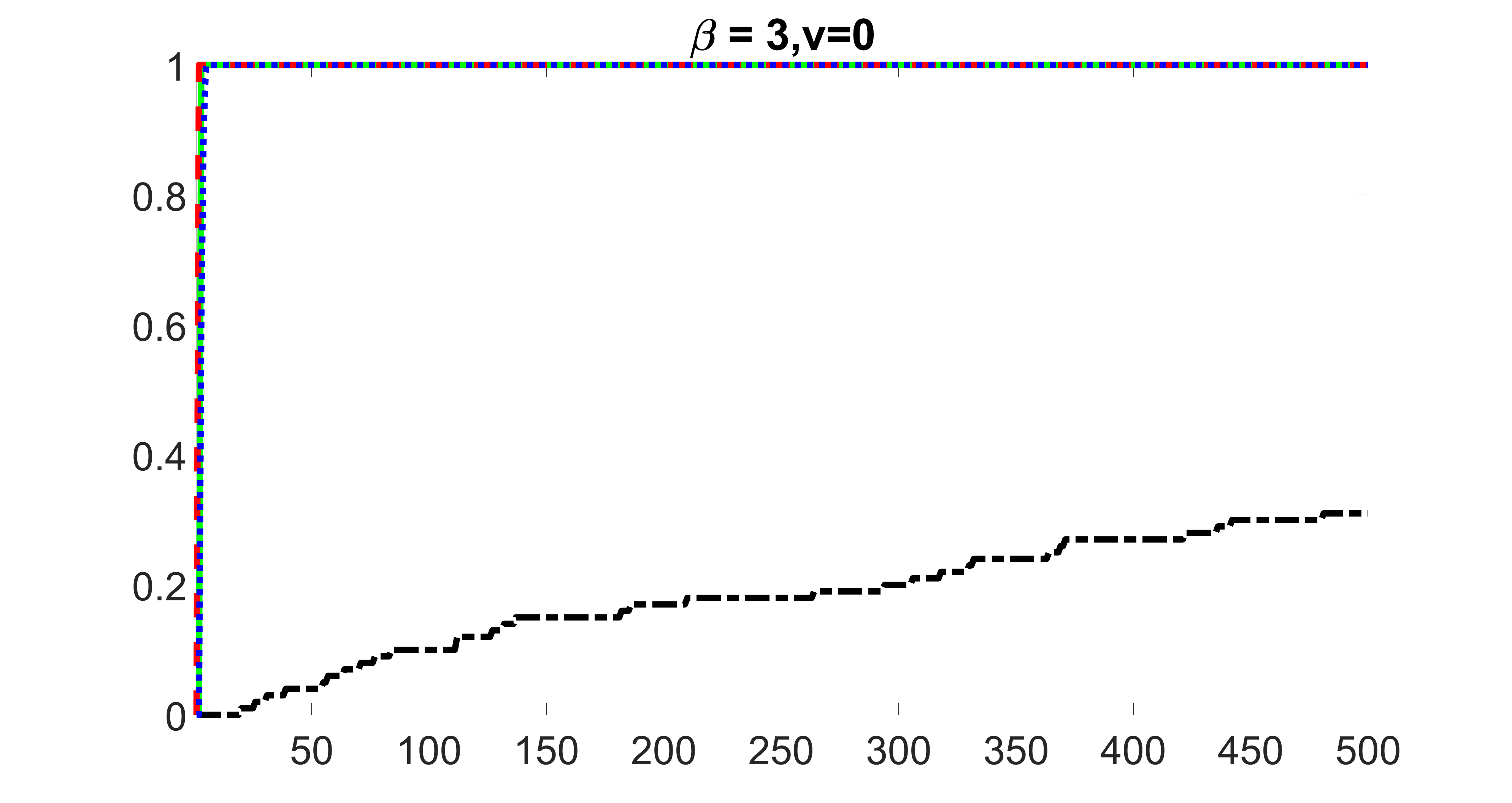}}
  \subcaptionbox{\footnotesize Precision: medium \\ outcome, zero exposure}[0.45\linewidth]
 {\includegraphics[width=6cm,height=3.5cm]{./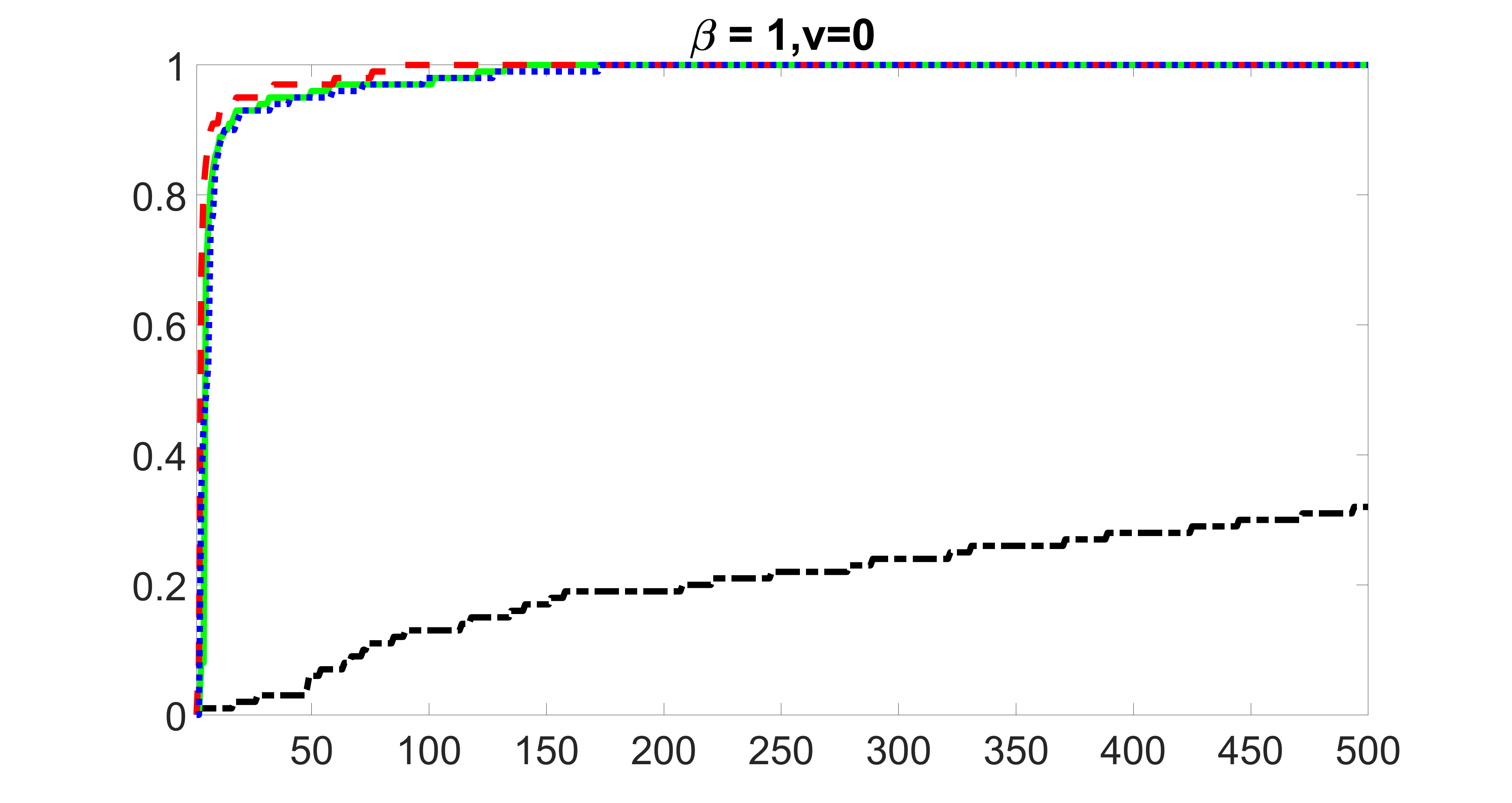}}
  \subcaptionbox{\footnotesize Precision: weak \\ outcome, zero exposure}[0.45\linewidth]
 {\includegraphics[width=6cm,height=3.5cm]{./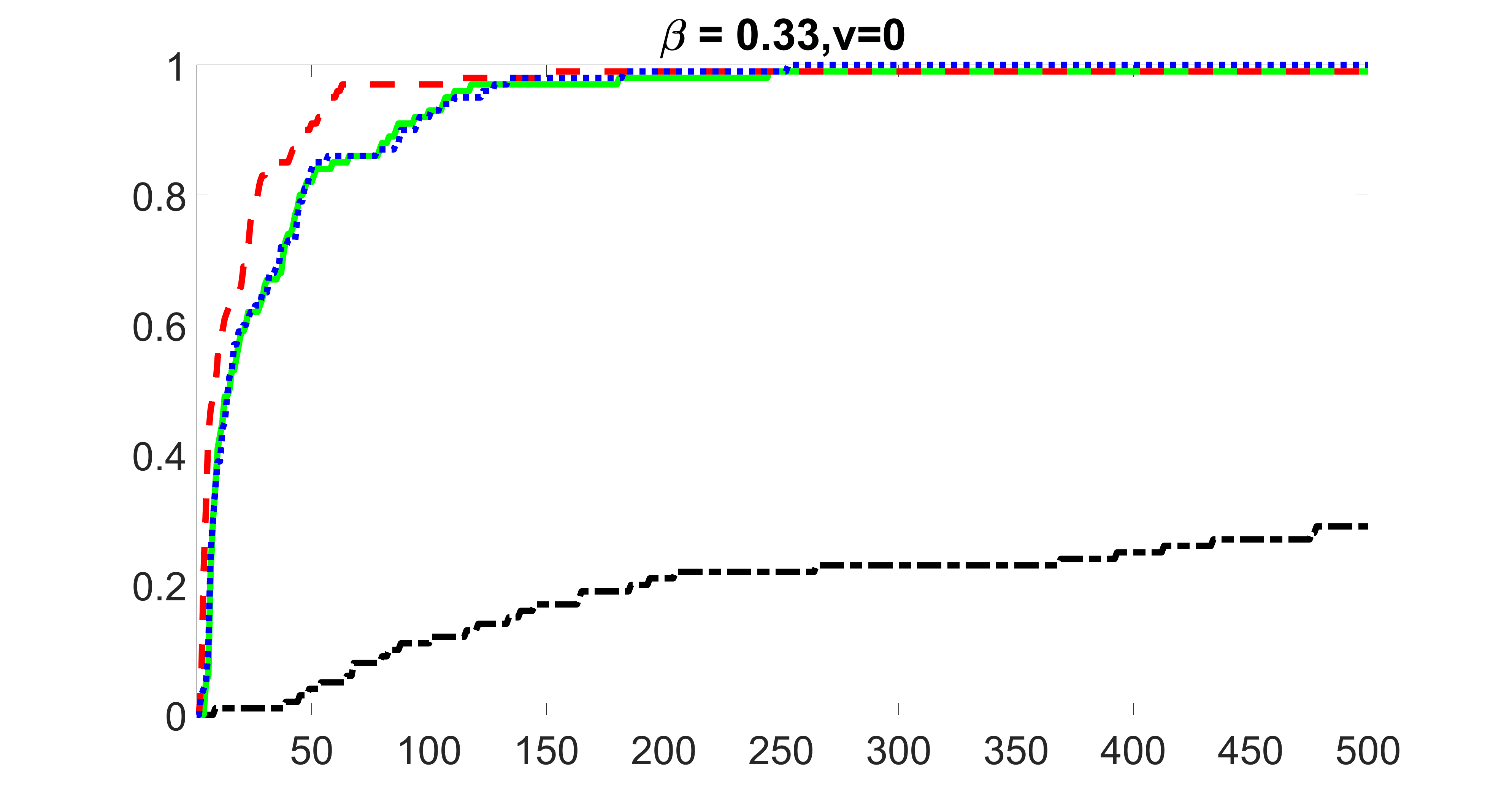}}
 \subcaptionbox{\footnotesize Precision: weaker \\ outcome, zero exposure}[0.45\linewidth]
 {\includegraphics[width=6cm,height=3.5cm]{./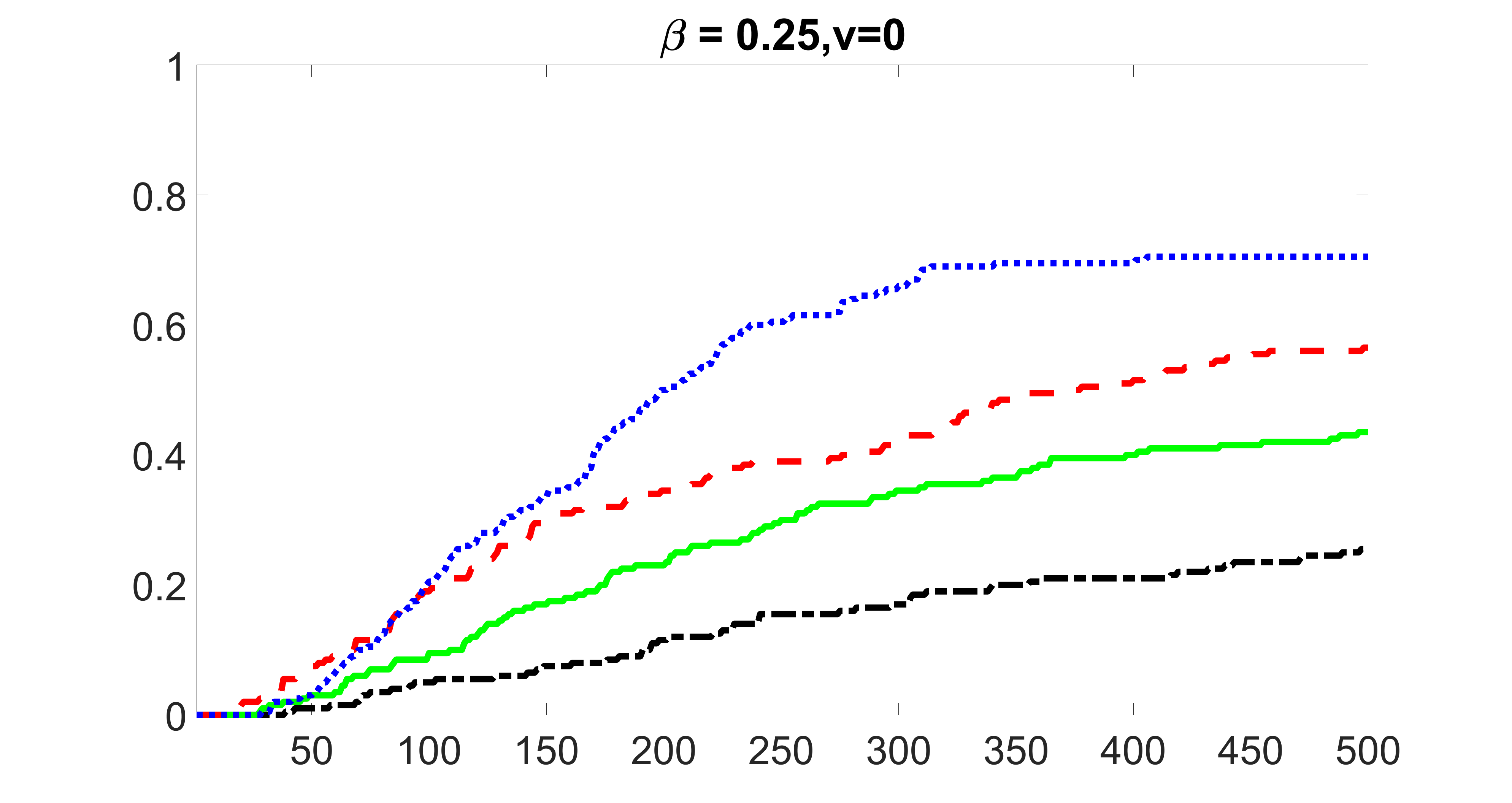}}
  \subcaptionbox{Overall coverage of $\mathcal{M}_1$}[0.45\linewidth]
 {\includegraphics[width=6cm,height=3.5cm]{./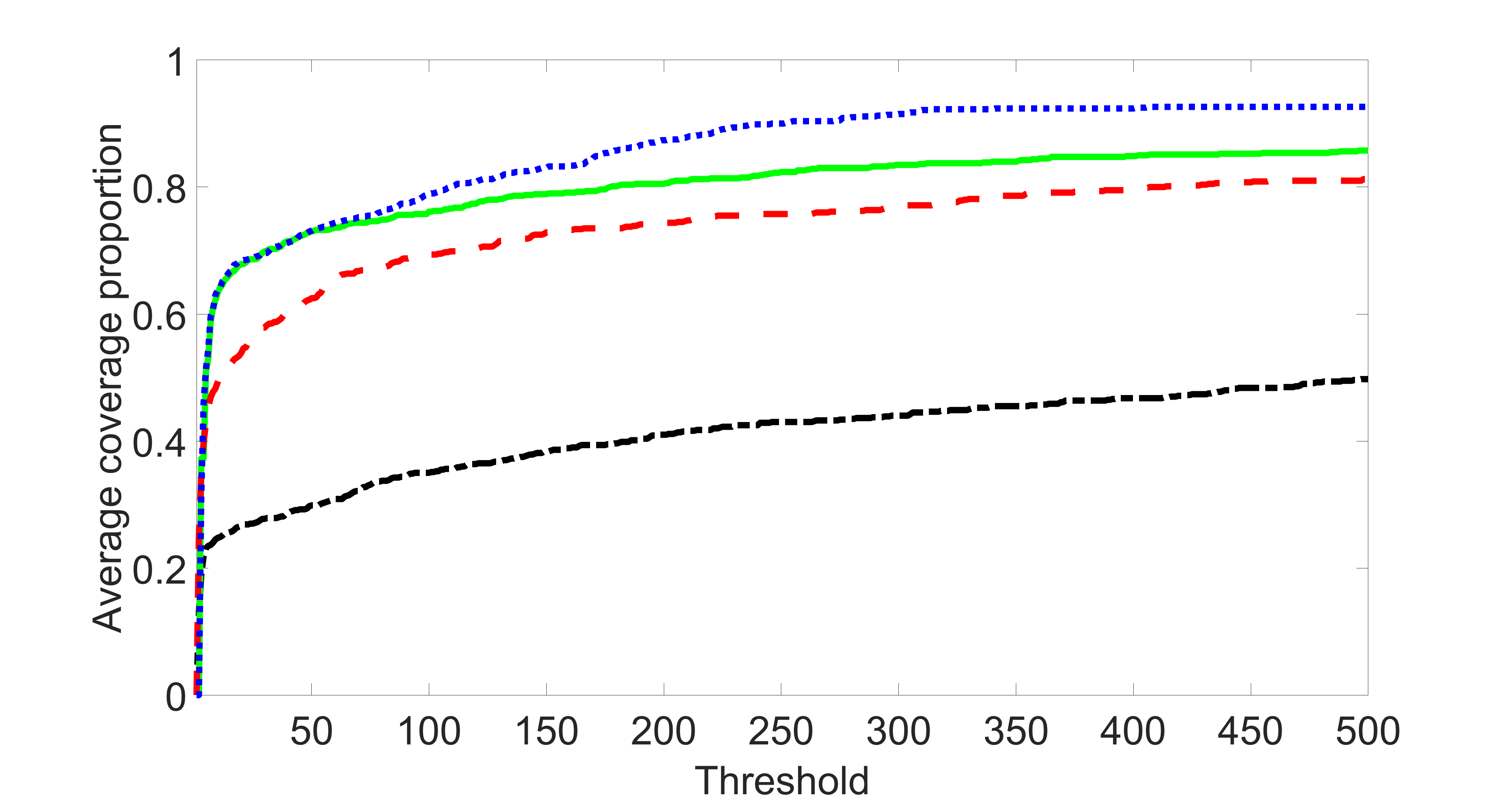}}
\caption{ Simulation results for the case $(n,s,K,\sigma) = (1000,5000,2,1)$: Panels (a) -- (g) plot the average coverage proportion for $X_l$, where $l \in \mathcal{M}_1 =  \{1,2,3,104,105, 106\} \cup \mathcal{P}_{LD}$. Panels (a) -- (c) correspond to strong outcome and weak exposure predictor, moderate outcome and moderate exposure predictor and weak outcome and strong exposure predictor; Panels (d) -- (g) correspond to strong, moderate, and weak predictors of outcome only. Panel (g) plots the average coverage proportion for the index set $\mathcal{P}_{LD}$. Panel (h) plots the average coverage proportion for the index set $\mathcal{M}_1$. The x-axis represents the size of $\widehat{\mathcal{M}} $, while
y-axis denotes the average proportion. The blue dot, green solid, red dashed and black dash dotted lines denote the blockwise joint screening, joint screening, outcome screening, and intersection screening methods, respectively.}
\label{sim3step1n1000sizesig2sigma1}
\end{figure}

\begin{figure}[htbp]
\captionsetup[subfigure]{justification=centering}
\centering
 \subcaptionbox{\footnotesize Confounder: strong \\ outcome, weak exposure}[0.45\linewidth]
 {\includegraphics[width=6cm,height=3.5cm]{./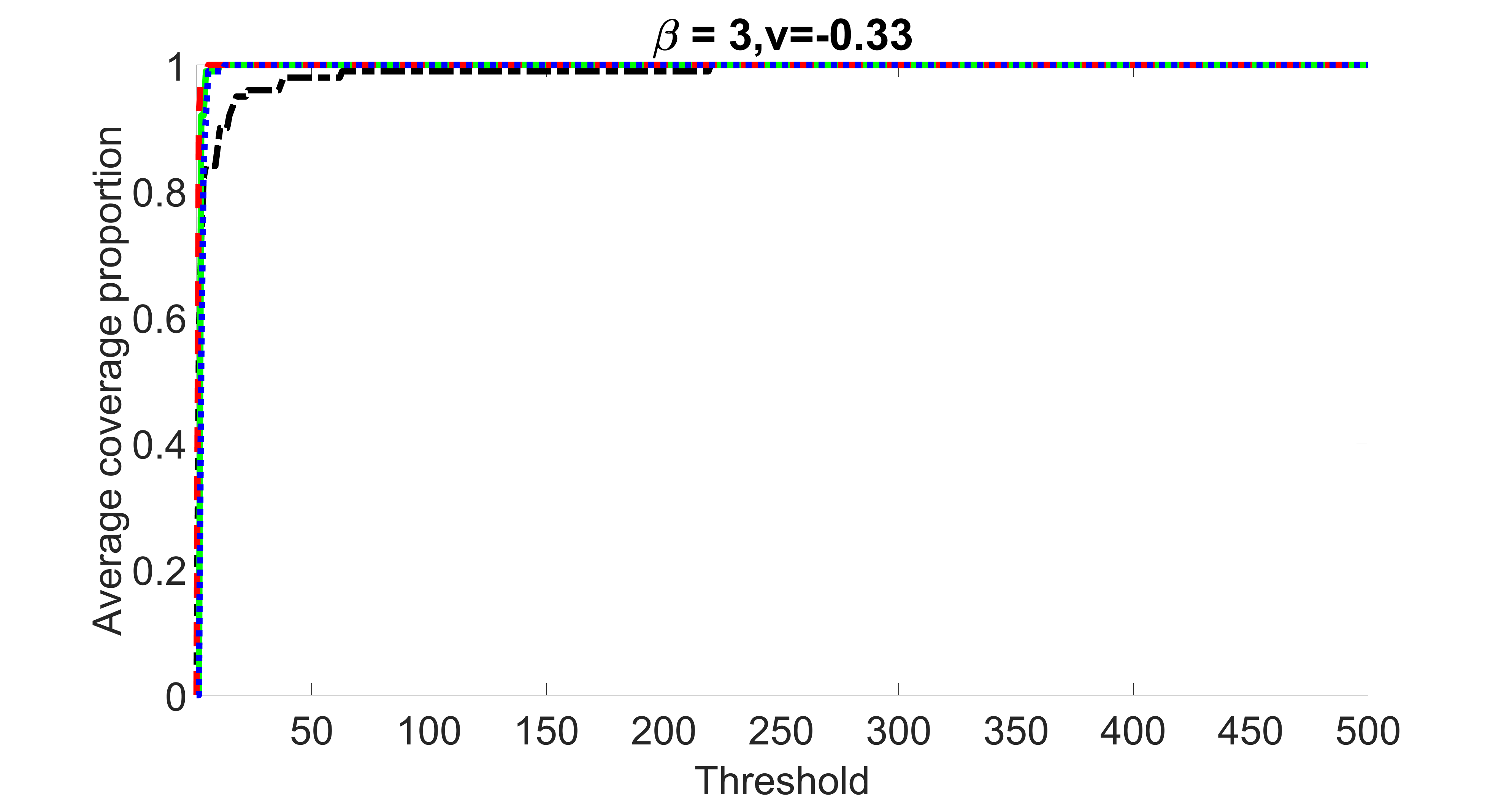}}
 \subcaptionbox{\footnotesize Confounder: medium \\ outcome, medium exposure}[0.45\linewidth]
 {\includegraphics[width=6cm,height=3.5cm]{./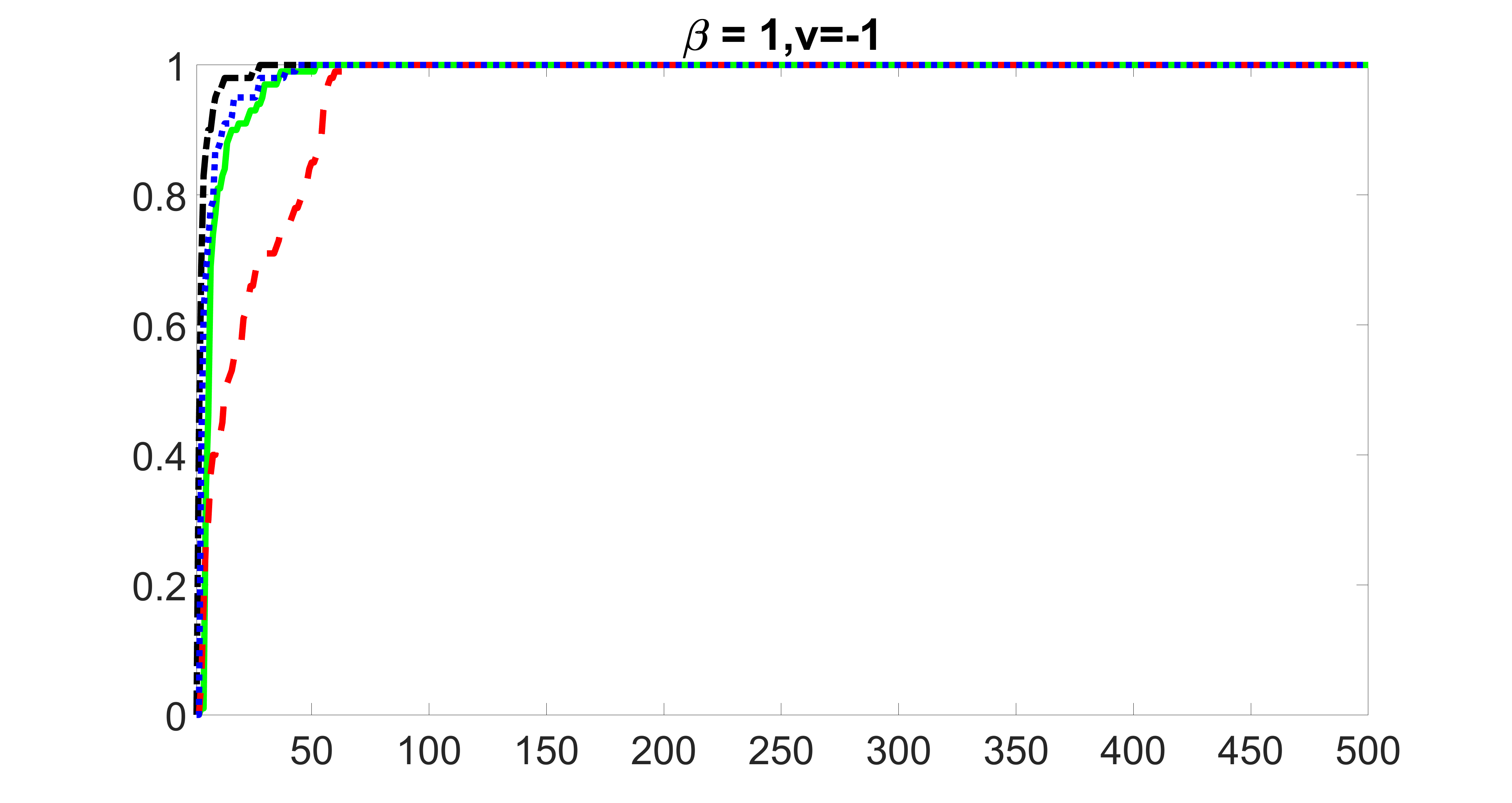}}
  \subcaptionbox{\footnotesize Confounder: weak \\ outcome, strong exposure}[0.45\linewidth]
 {\includegraphics[width=6cm,height=3.5cm]{./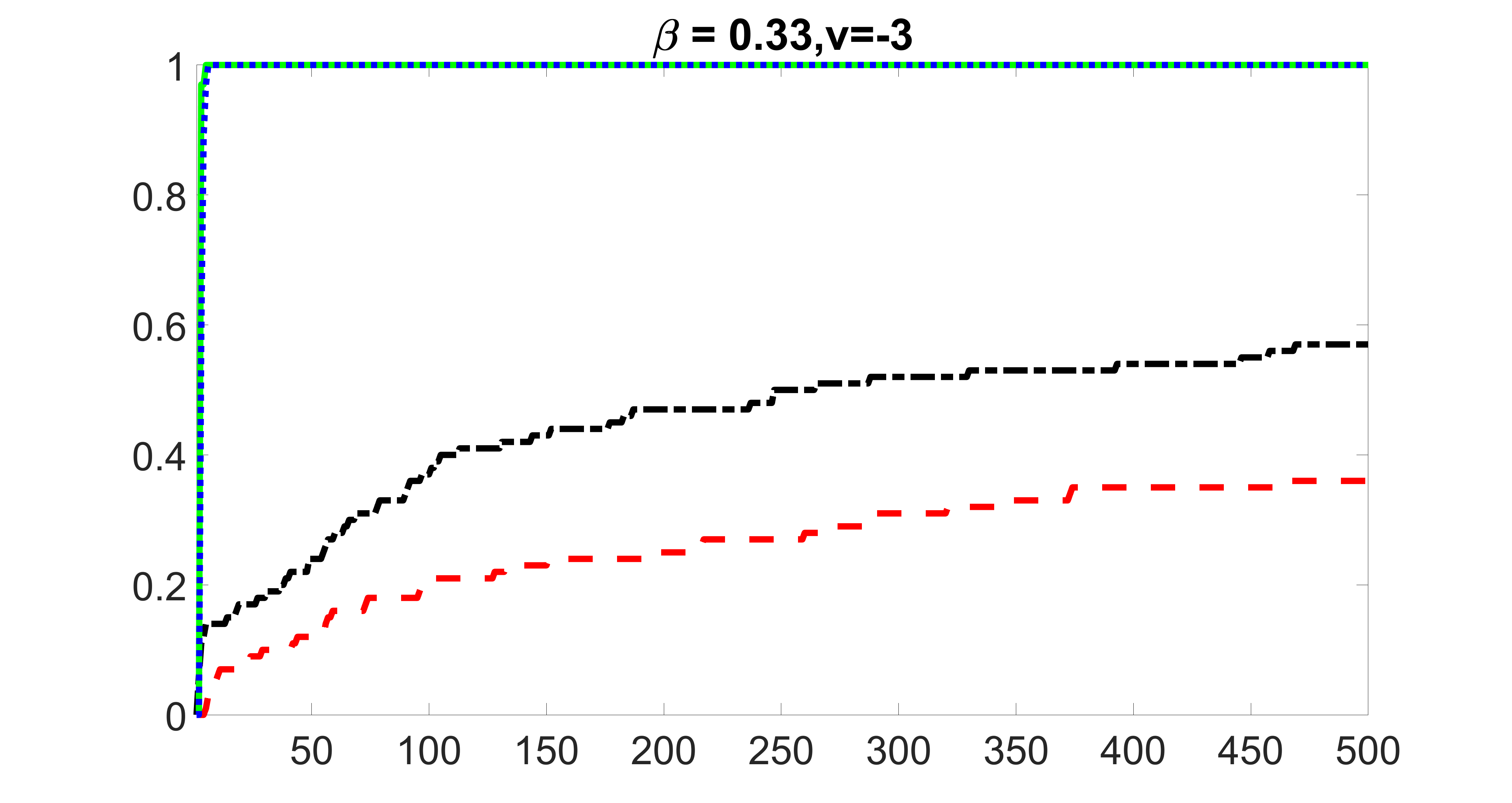}}
  \subcaptionbox{\footnotesize Precision: strong \\ outcome, zero exposure}[0.45\linewidth]
 {\includegraphics[width=6cm,height=3.5cm]{./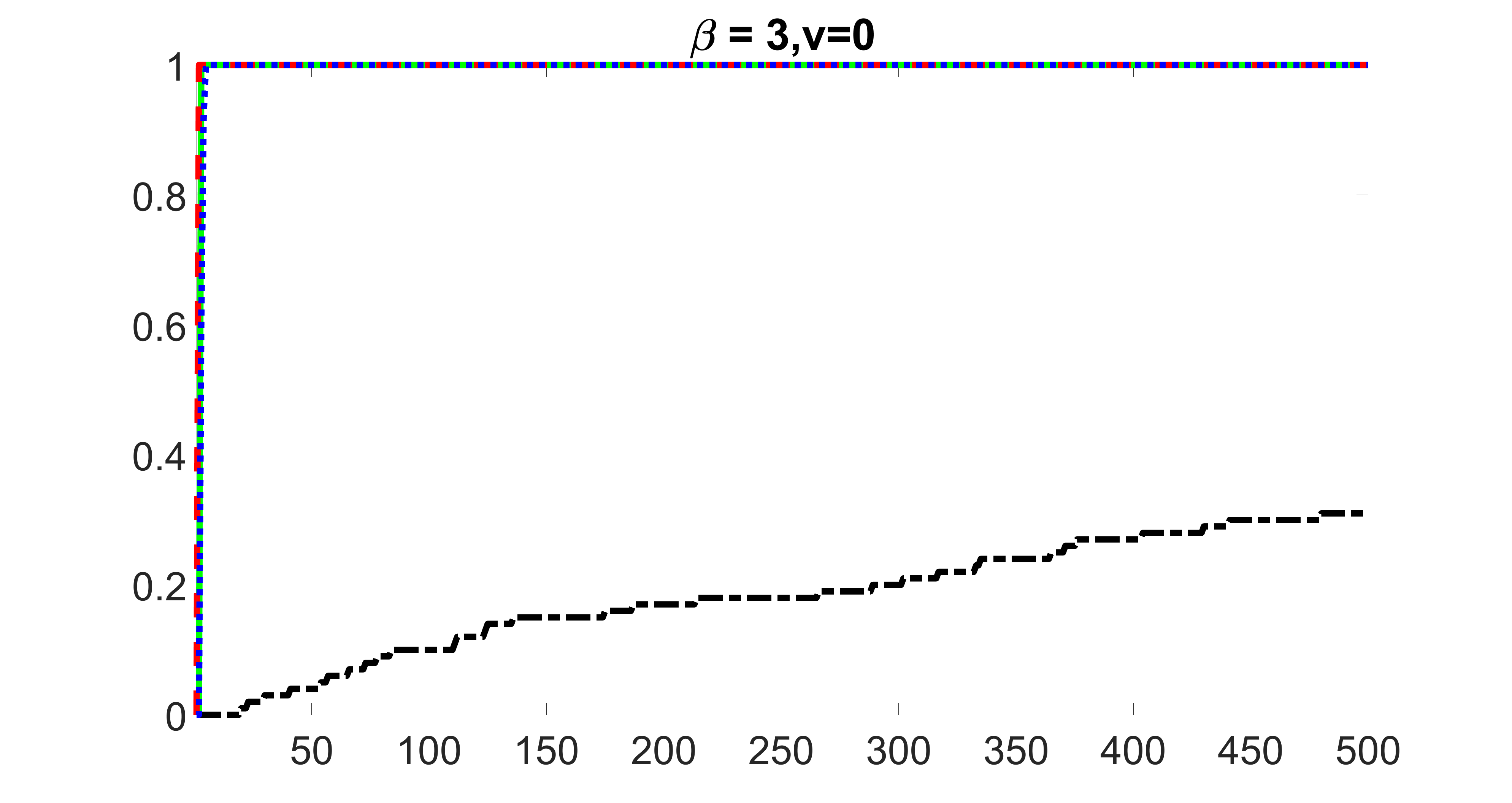}}
  \subcaptionbox{\footnotesize Precision: medium \\ outcome, zero exposure}[0.45\linewidth]
 {\includegraphics[width=6cm,height=3.5cm]{./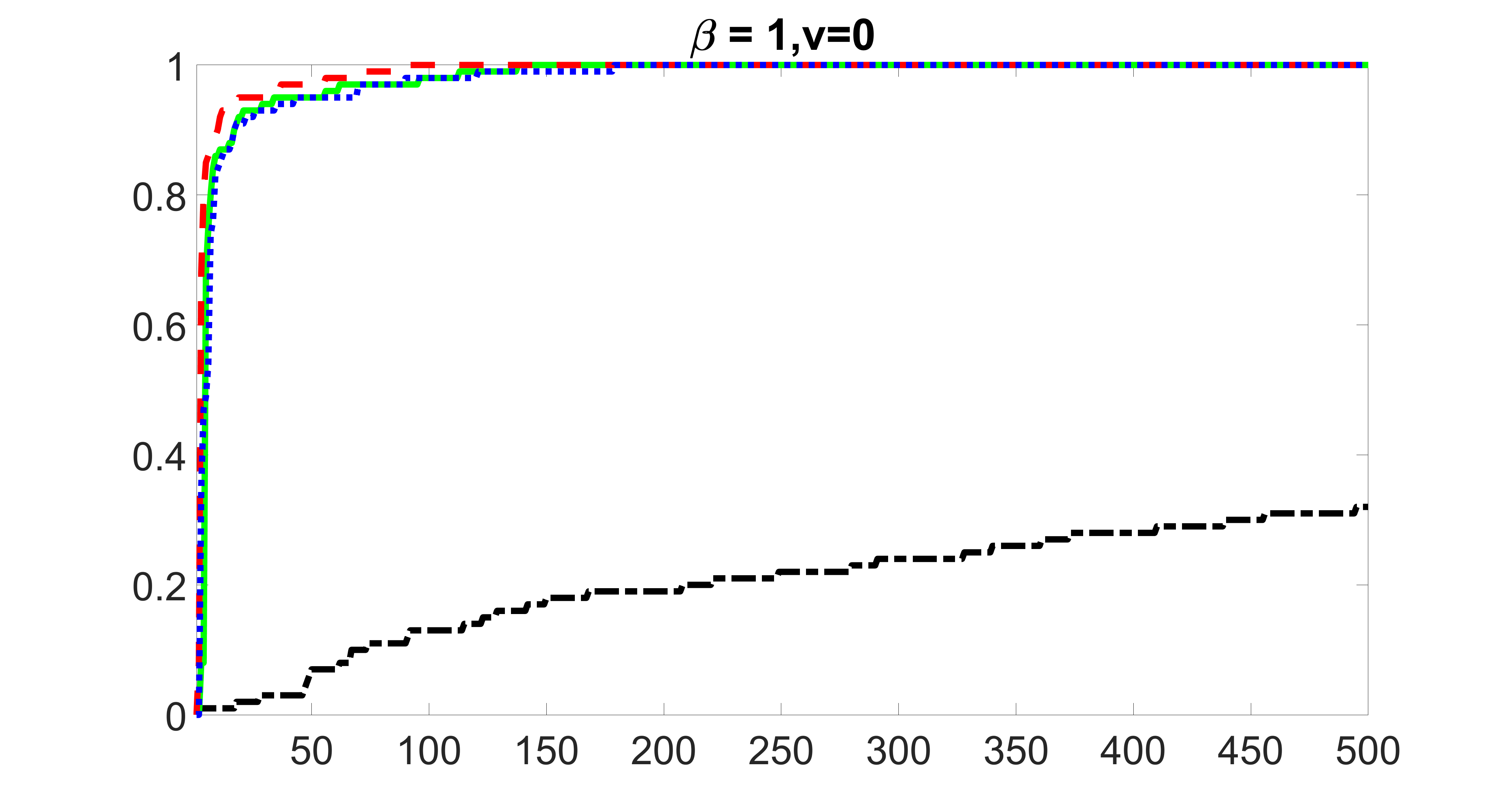}}
  \subcaptionbox{\footnotesize Precision: weak \\ outcome, zero exposure}[0.45\linewidth]
 {\includegraphics[width=6cm,height=3.5cm]{./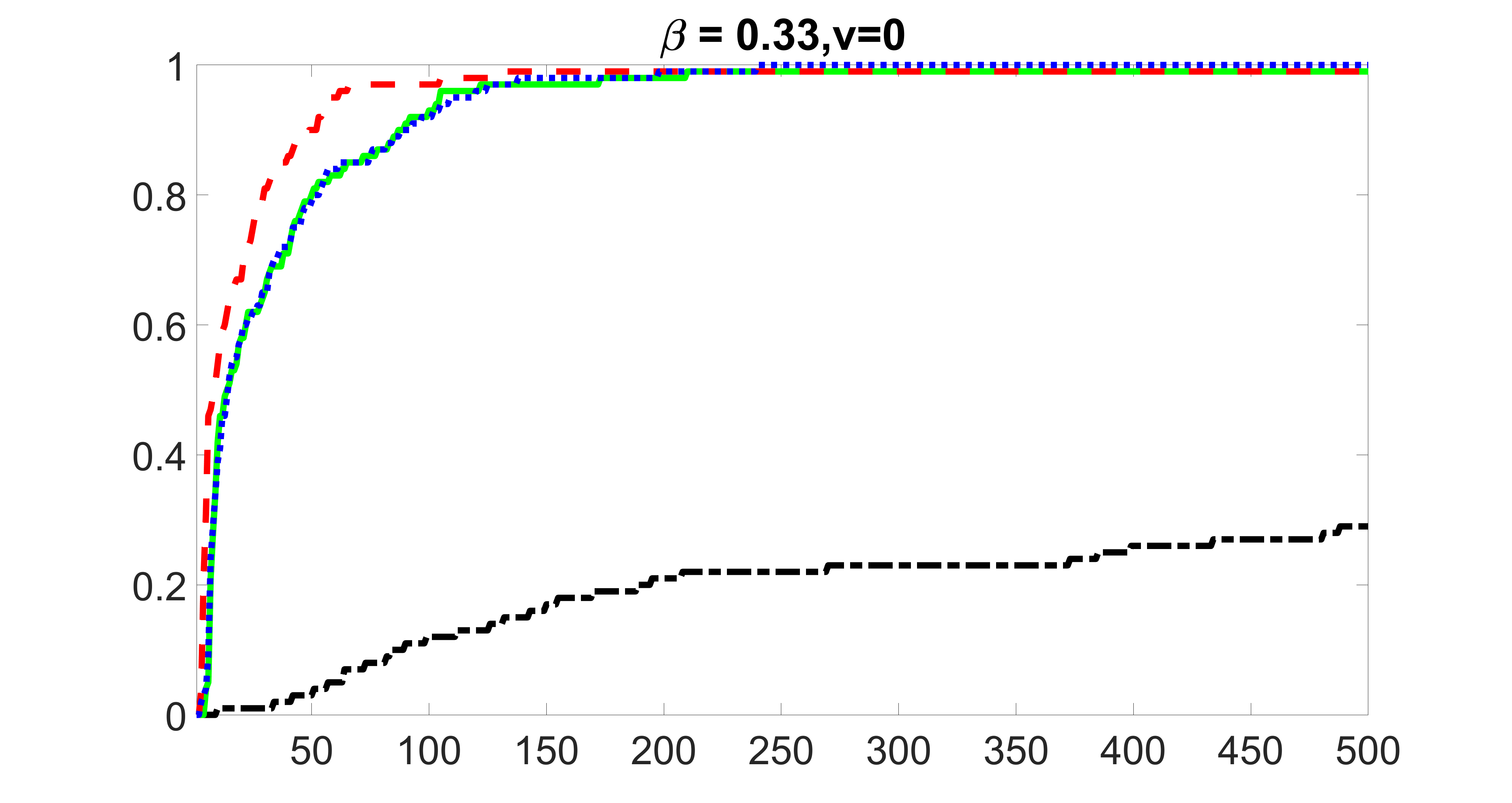}}
 \subcaptionbox{\footnotesize Precision: weaker \\ outcome, zero exposure}[0.45\linewidth]
 {\includegraphics[width=6cm,height=3.5cm]{./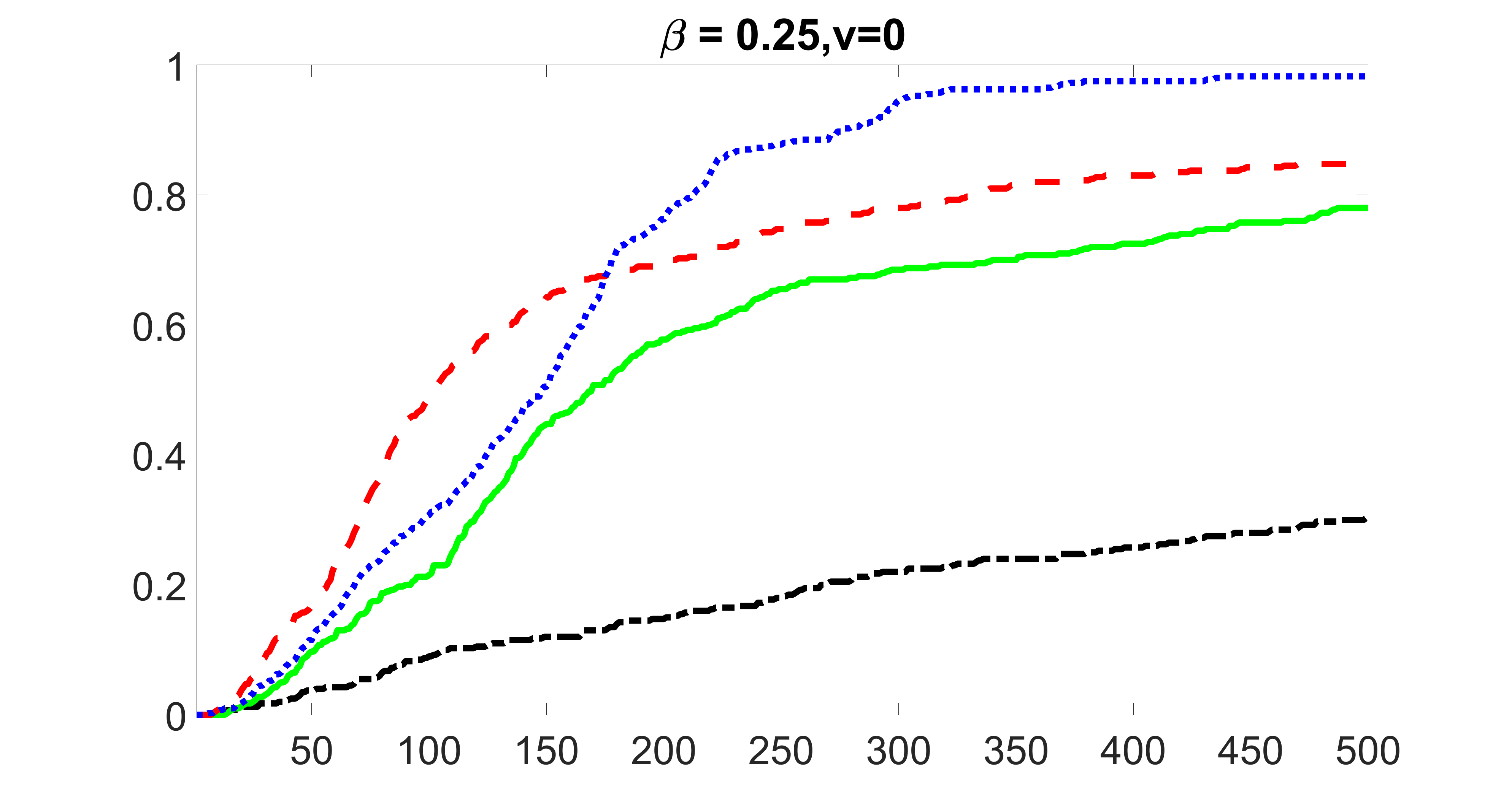}}
  \subcaptionbox{Overall coverage of $\mathcal{M}_1$}[0.45\linewidth]
 {\includegraphics[width=6cm,height=3.5cm]{./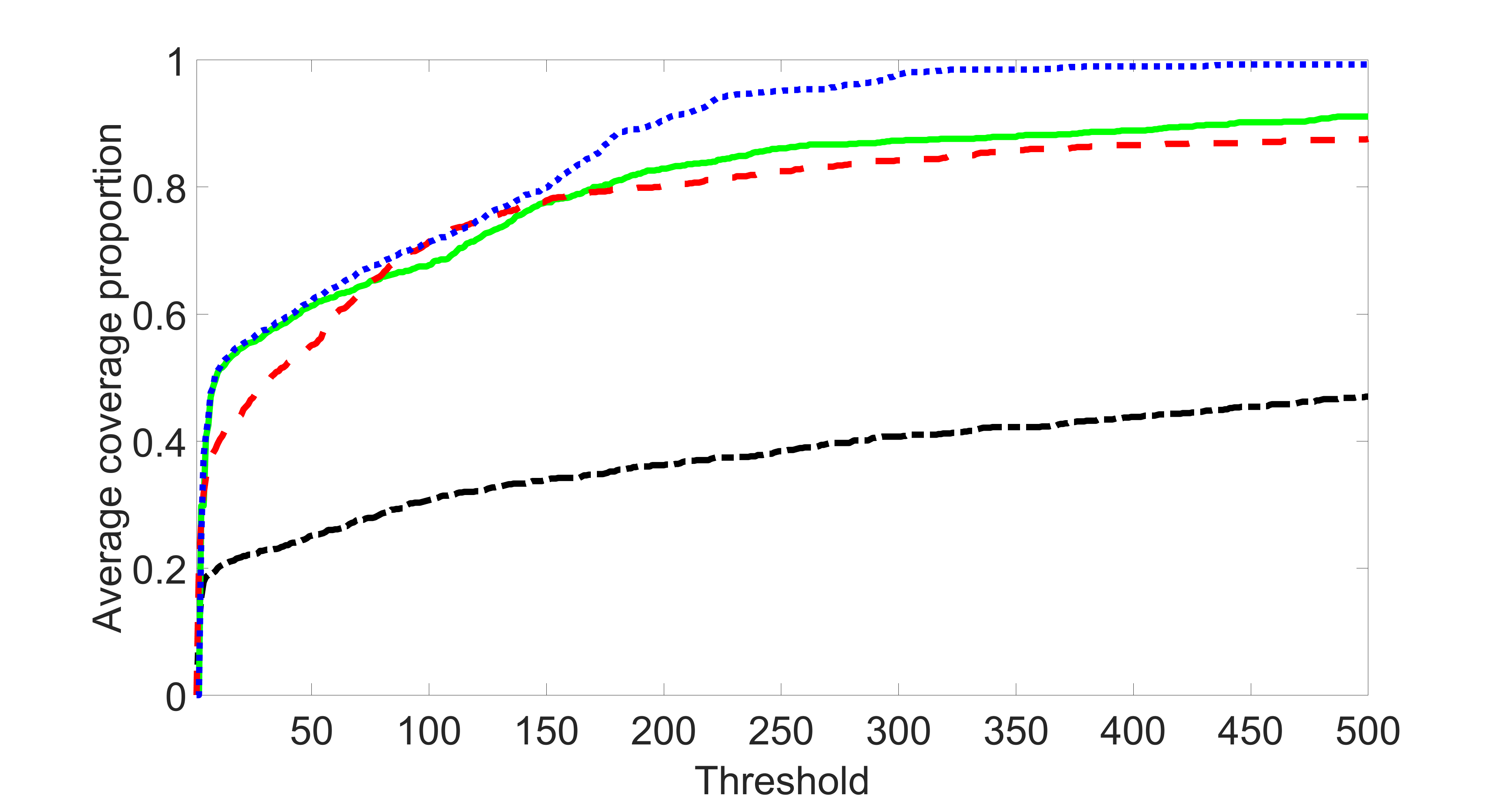}}
\caption{ Simulation results for the case $(n,s,K,\sigma) = (1000,5000,4,1)$: Panels (a) -- (g) plot the average coverage proportion for $X_l$, where $l \in \mathcal{M}_1 =  \{1,2,3,104,105, 106\} \cup \mathcal{P}_{LD}$. Panels (a) -- (c) correspond to strong outcome and weak exposure predictor, moderate outcome and moderate exposure predictor and weak outcome and strong exposure predictor; Panels (d) -- (g) correspond to strong, moderate, and weak predictors of outcome only. Panel (g) plots the average coverage proportion for the index set $\mathcal{P}_{LD}$. Panel (h) plots the average coverage proportion for the index set $\mathcal{M}_1$. The x-axis represents the size of $\widehat{\mathcal{M}} $, while
y-axis denotes the average proportion. The blue dot, green solid, red dashed and black dash dotted lines denote the blockwise joint screening, joint screening, outcome screening, and intersection screening methods, respectively.}
\label{sim3step1n1000sizesig4sigma1}
\end{figure}

\begin{figure}[htbp]
\captionsetup[subfigure]{justification=centering}
\centering
 \subcaptionbox{\footnotesize Confounder: strong \\ outcome, weak exposure}[0.45\linewidth]
 {\includegraphics[width=6cm,height=3.5cm]{./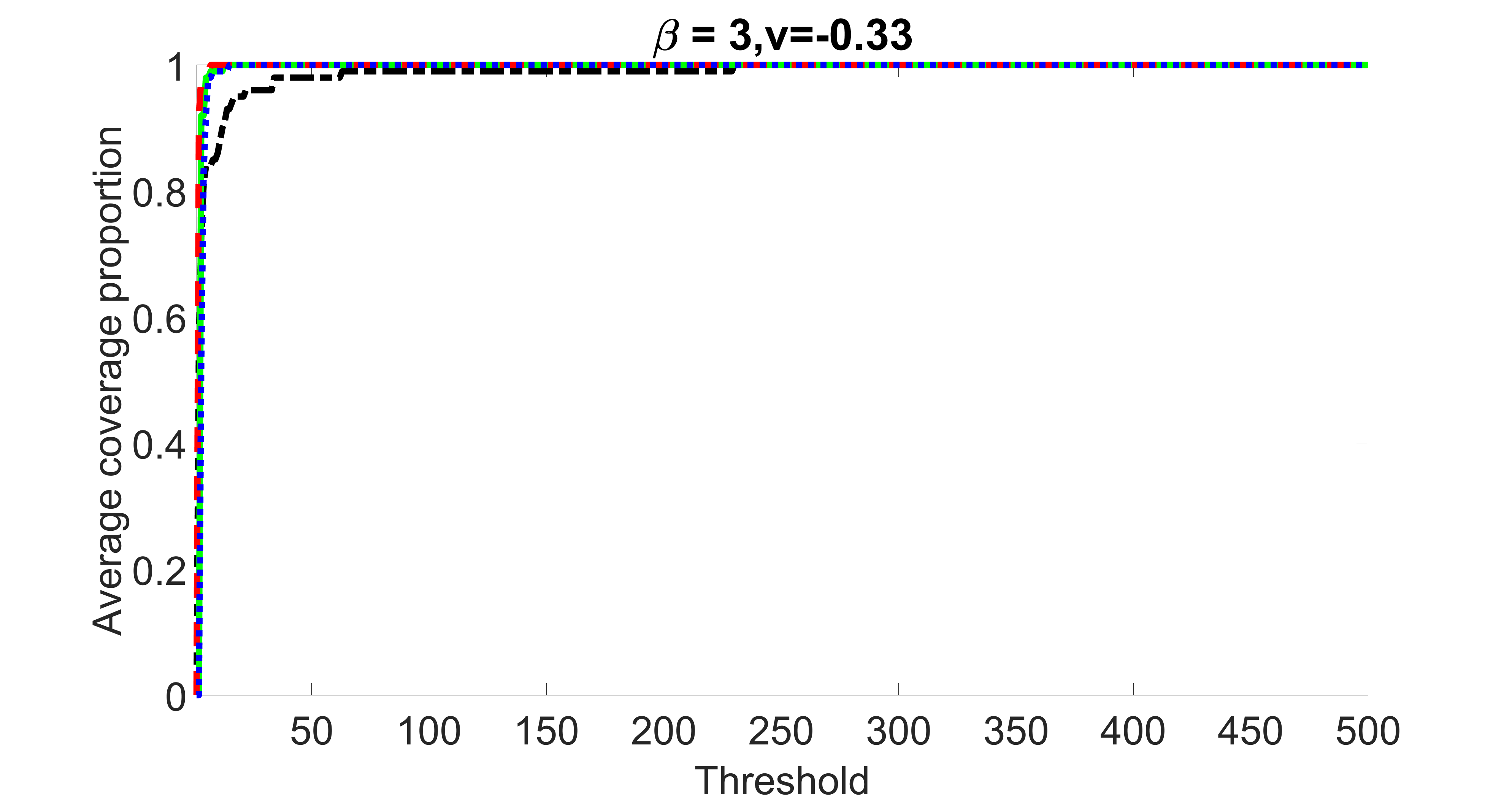}}
 \subcaptionbox{\footnotesize Confounder: medium \\ outcome, medium exposure}[0.45\linewidth]
 {\includegraphics[width=6cm,height=3.5cm]{./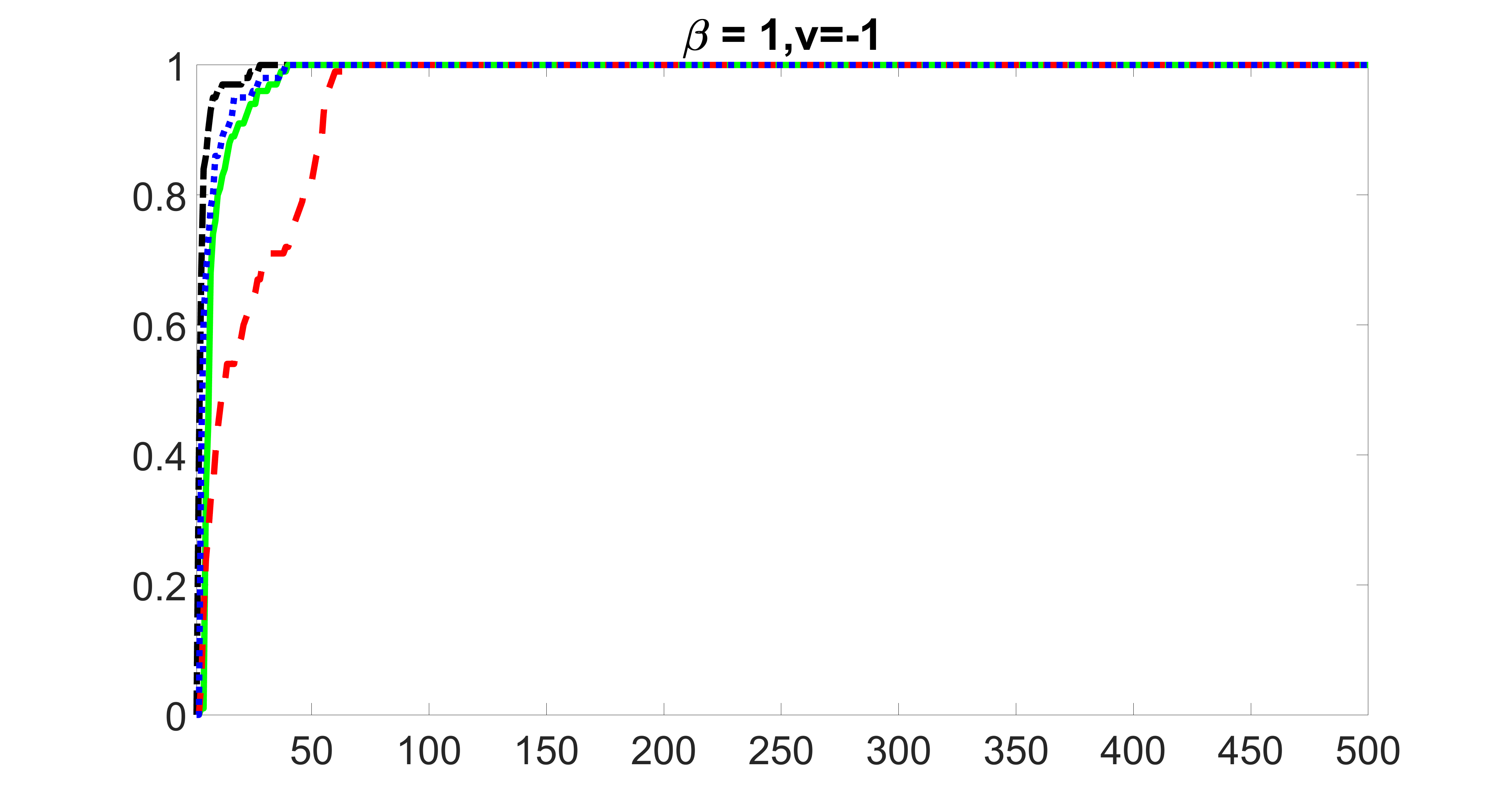}}
  \subcaptionbox{\footnotesize Confounder: weak \\ outcome, strong exposure}[0.45\linewidth]
 {\includegraphics[width=6cm,height=3.5cm]{./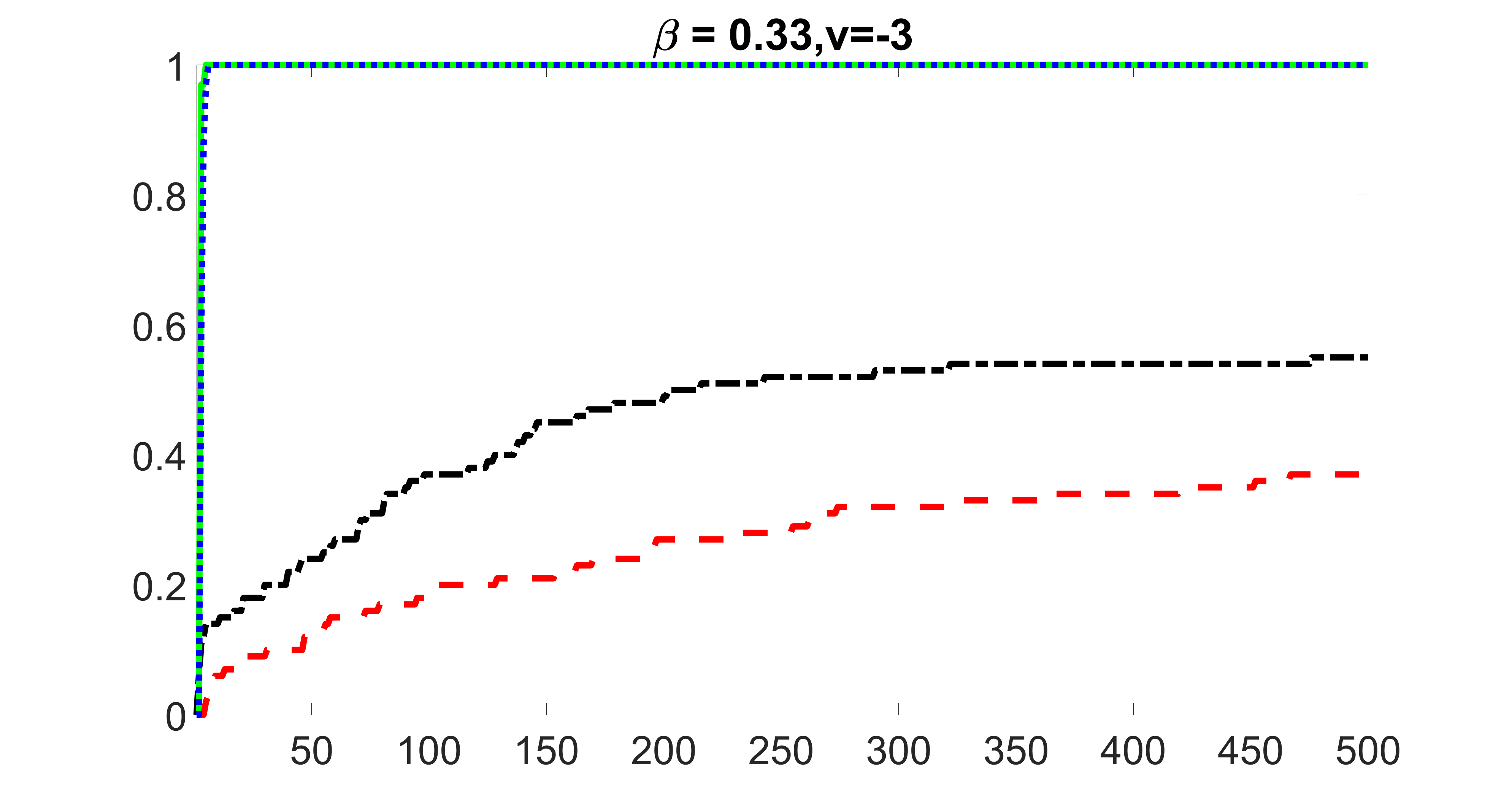}}
  \subcaptionbox{\footnotesize Precision: strong \\ outcome, zero exposure}[0.45\linewidth]
 {\includegraphics[width=6cm,height=3.5cm]{./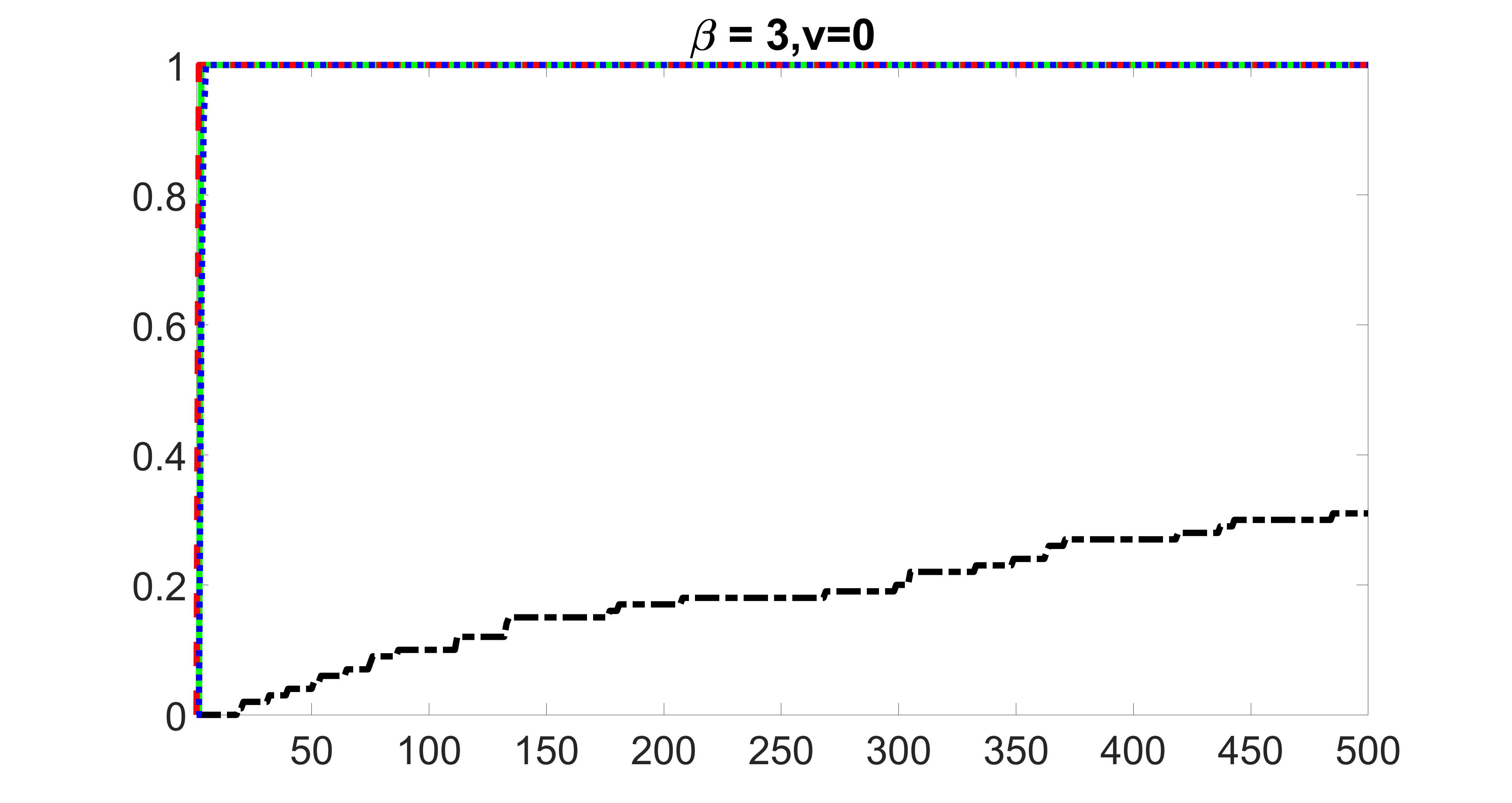}}
  \subcaptionbox{\footnotesize Precision: medium \\ outcome, zero exposure}[0.45\linewidth]
 {\includegraphics[width=6cm,height=3.5cm]{./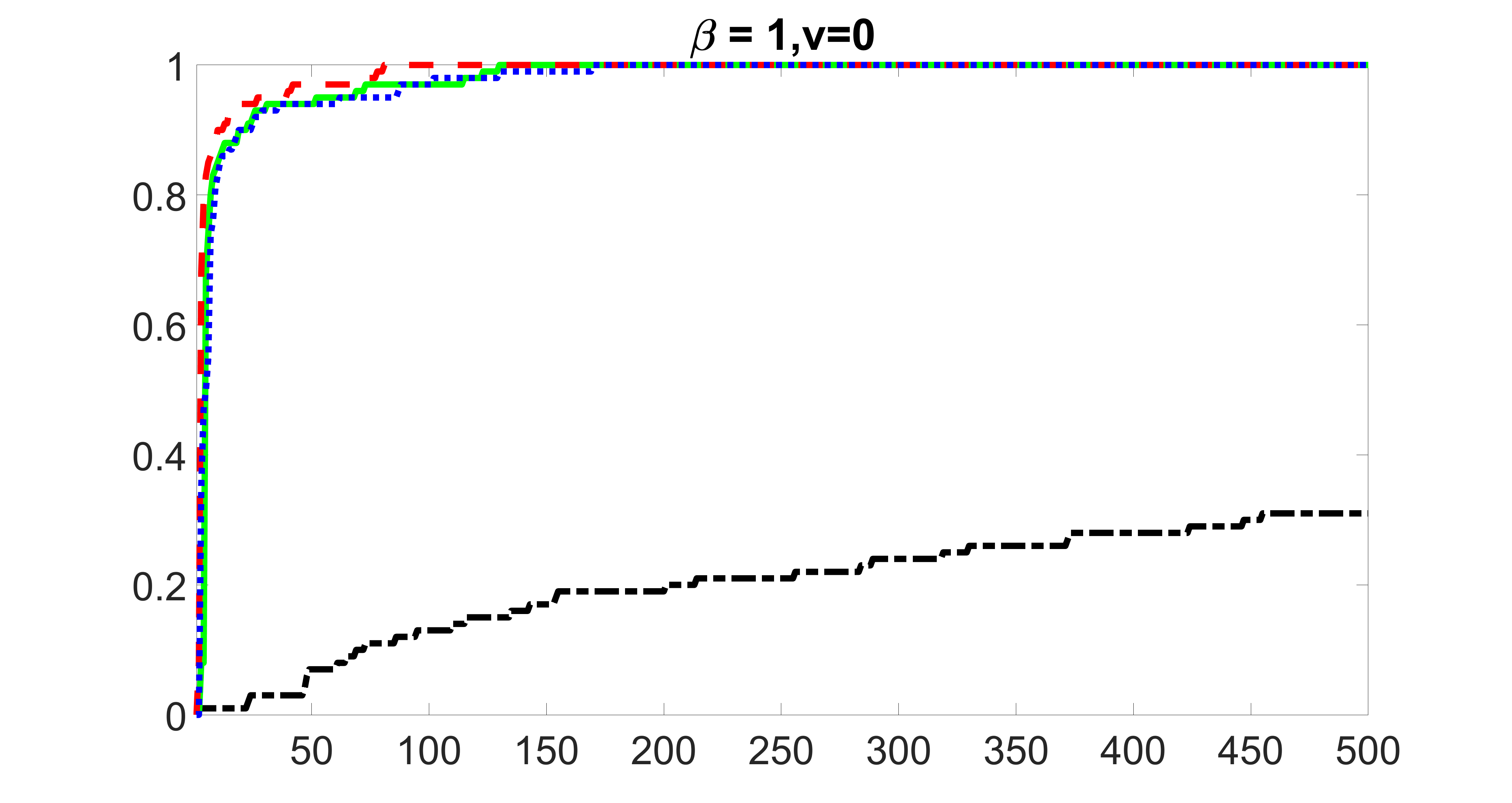}}
  \subcaptionbox{\footnotesize Precision: weak \\ outcome, zero exposure}[0.45\linewidth]
 {\includegraphics[width=6cm,height=3.5cm]{./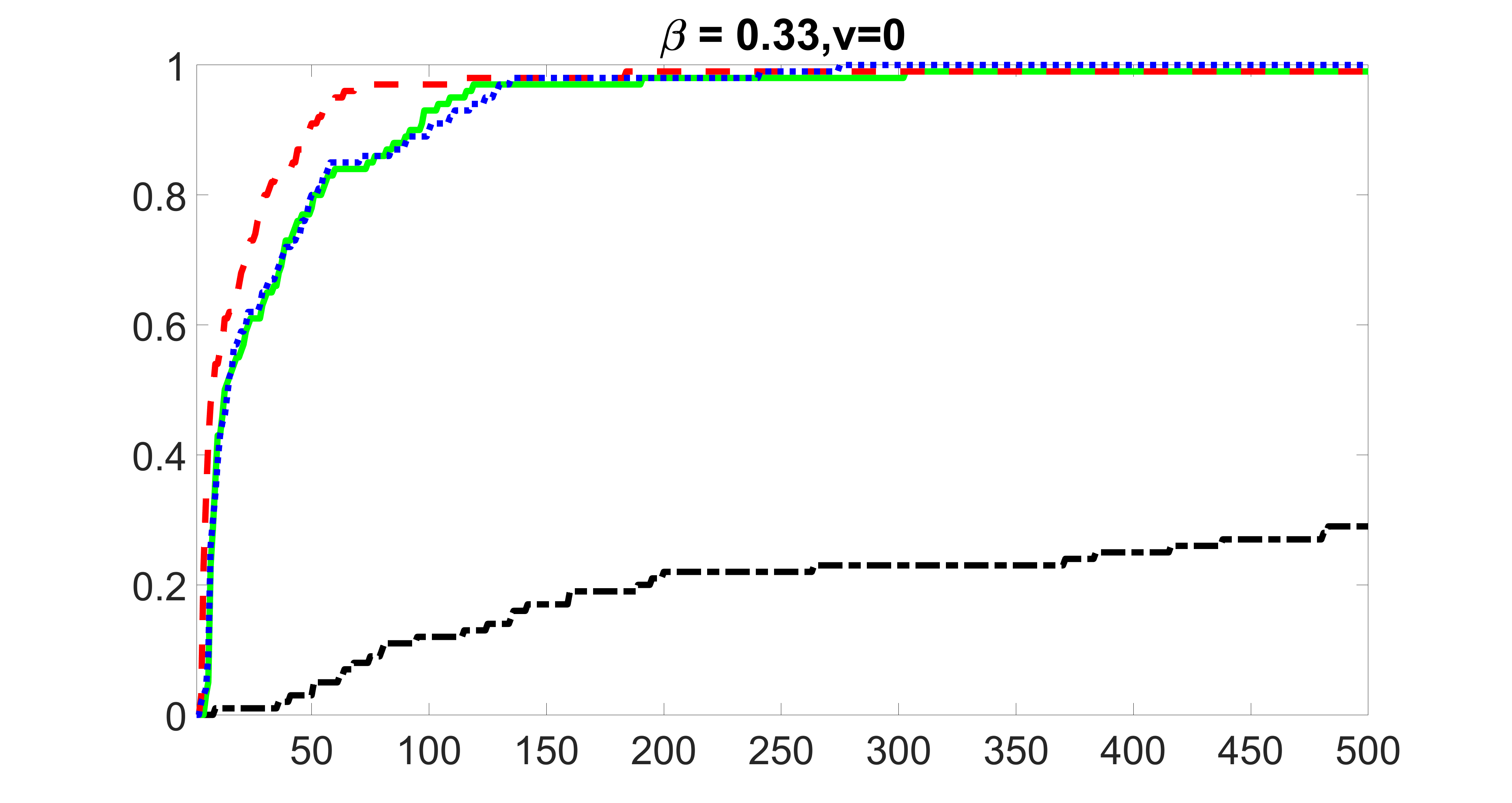}}
 \subcaptionbox{\footnotesize Precision: weaker \\ outcome, zero exposure}[0.45\linewidth]
 {\includegraphics[width=6cm,height=3.5cm]{./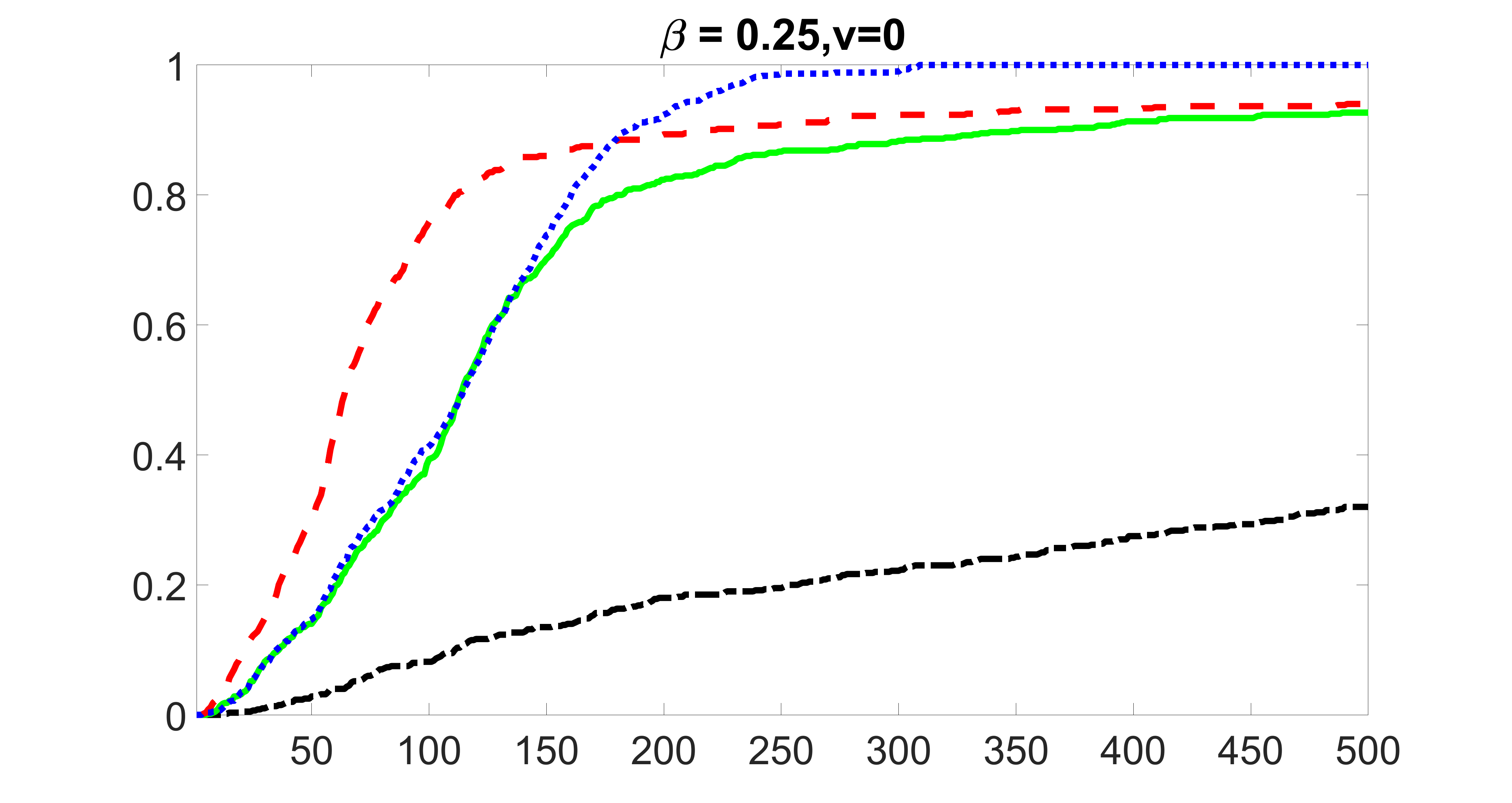}}
  \subcaptionbox{Overall coverage of $\mathcal{M}_1$}[0.45\linewidth]
 {\includegraphics[width=6cm,height=3.5cm]{./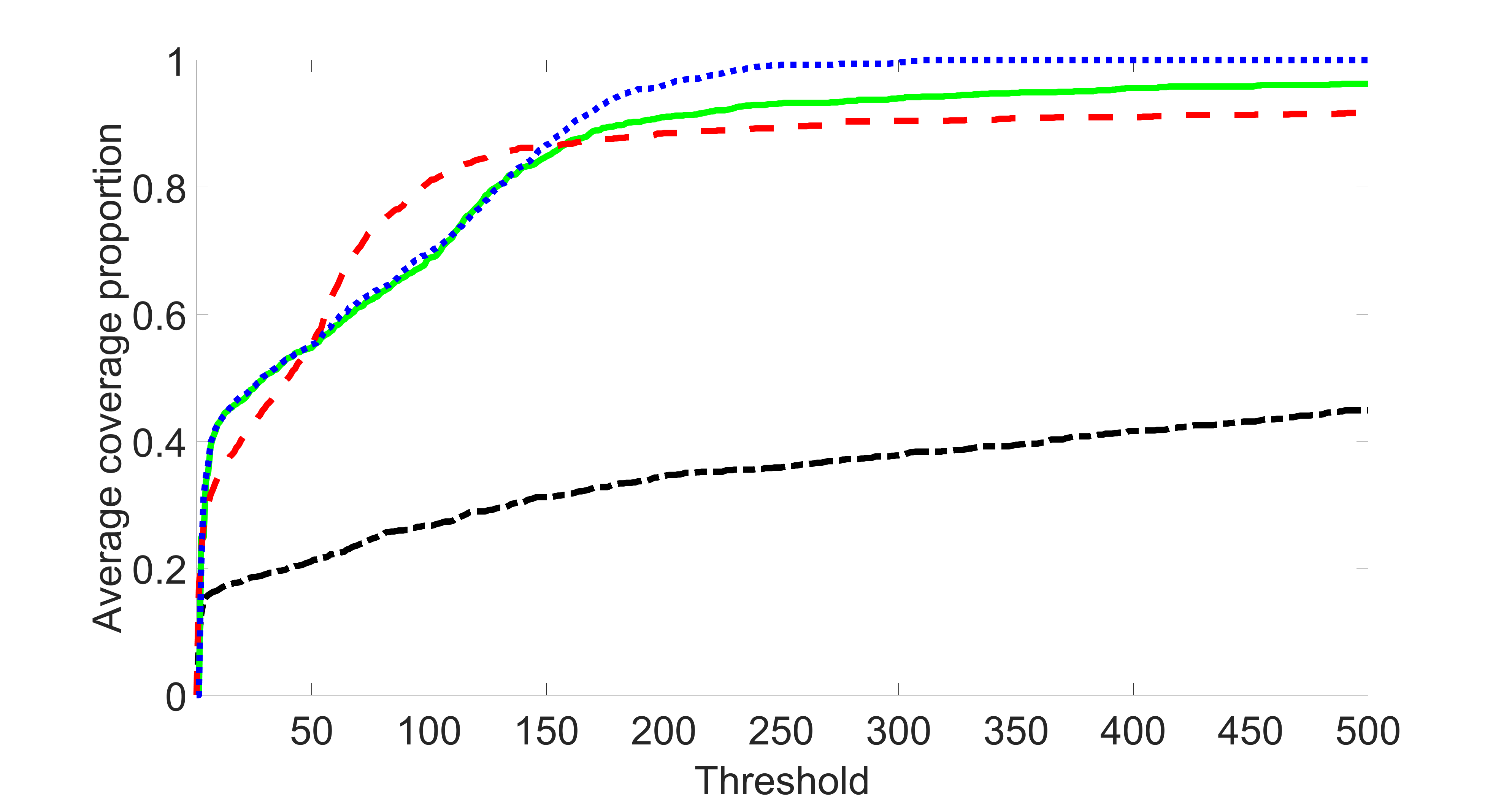}}
\caption{ Simulation results for the case $(n,s,K,\sigma) = (1000,5000,6,1)$: Panels (a) -- (g) plot the average coverage proportion for $X_l$, where $l \in \mathcal{M}_1 =  \{1,2,3,104,105, 106\} \cup \mathcal{P}_{LD}$. Panels (a) -- (c) correspond to strong outcome and weak exposure predictor, moderate outcome and moderate exposure predictor and weak outcome and strong exposure predictor; Panels (d) -- (g) correspond to strong, moderate, and weak predictors of outcome only. Panel (g) plots the average coverage proportion for the index set $\mathcal{P}_{LD}$. Panel (h) plots the average coverage proportion for the index set $\mathcal{M}_1$. The x-axis represents the size of $\widehat{\mathcal{M}} $, while
y-axis denotes the average proportion. The blue dot, green solid, red dashed and black dash dotted lines denote the blockwise joint screening, joint screening, outcome screening, and intersection screening methods, respectively.}
\label{sim3step1n1000sizesig6sigma1}
\end{figure}

\begin{figure}[htbp]
\captionsetup[subfigure]{justification=centering}
\centering
 \subcaptionbox{\footnotesize Confounder: strong \\ outcome, weak exposure}[0.45\linewidth]
 {\includegraphics[width=6cm,height=3.5cm]{./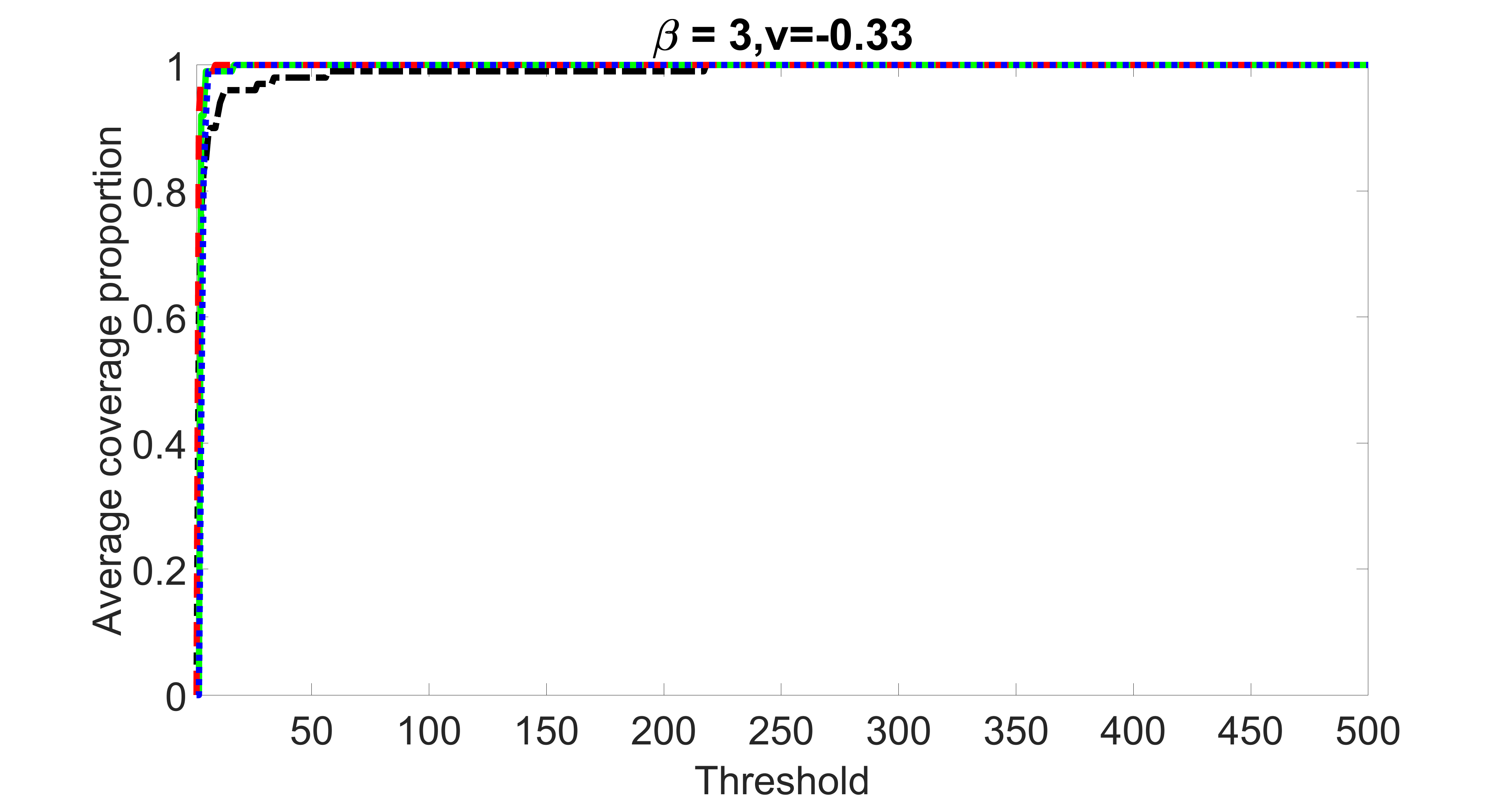}}
 \subcaptionbox{\footnotesize Confounder: medium \\ outcome, medium exposure}[0.45\linewidth]
 {\includegraphics[width=6cm,height=3.5cm]{./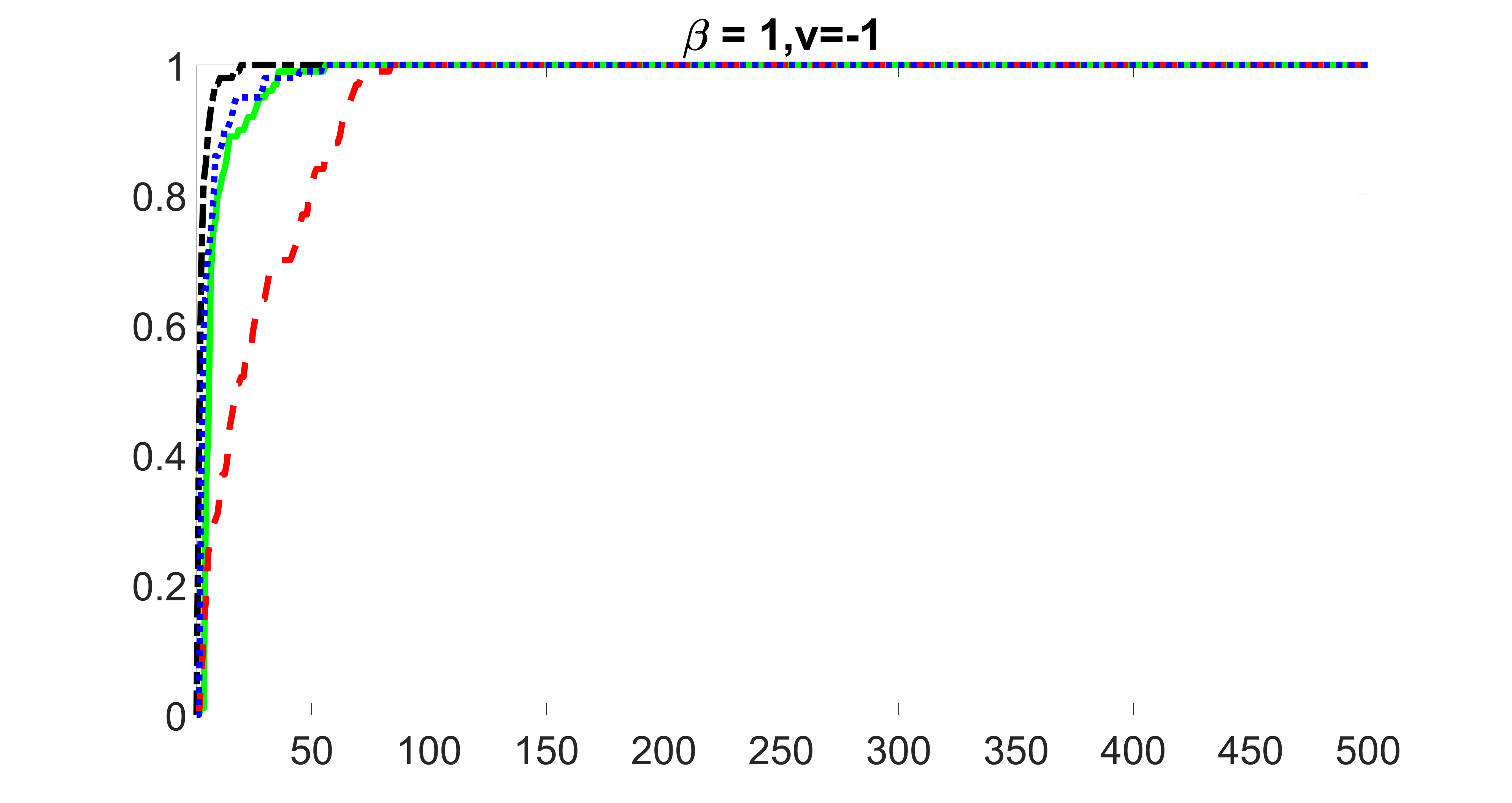}}
  \subcaptionbox{\footnotesize Confounder: weak \\ outcome, strong exposure}[0.45\linewidth]
 {\includegraphics[width=6cm,height=3.5cm]{./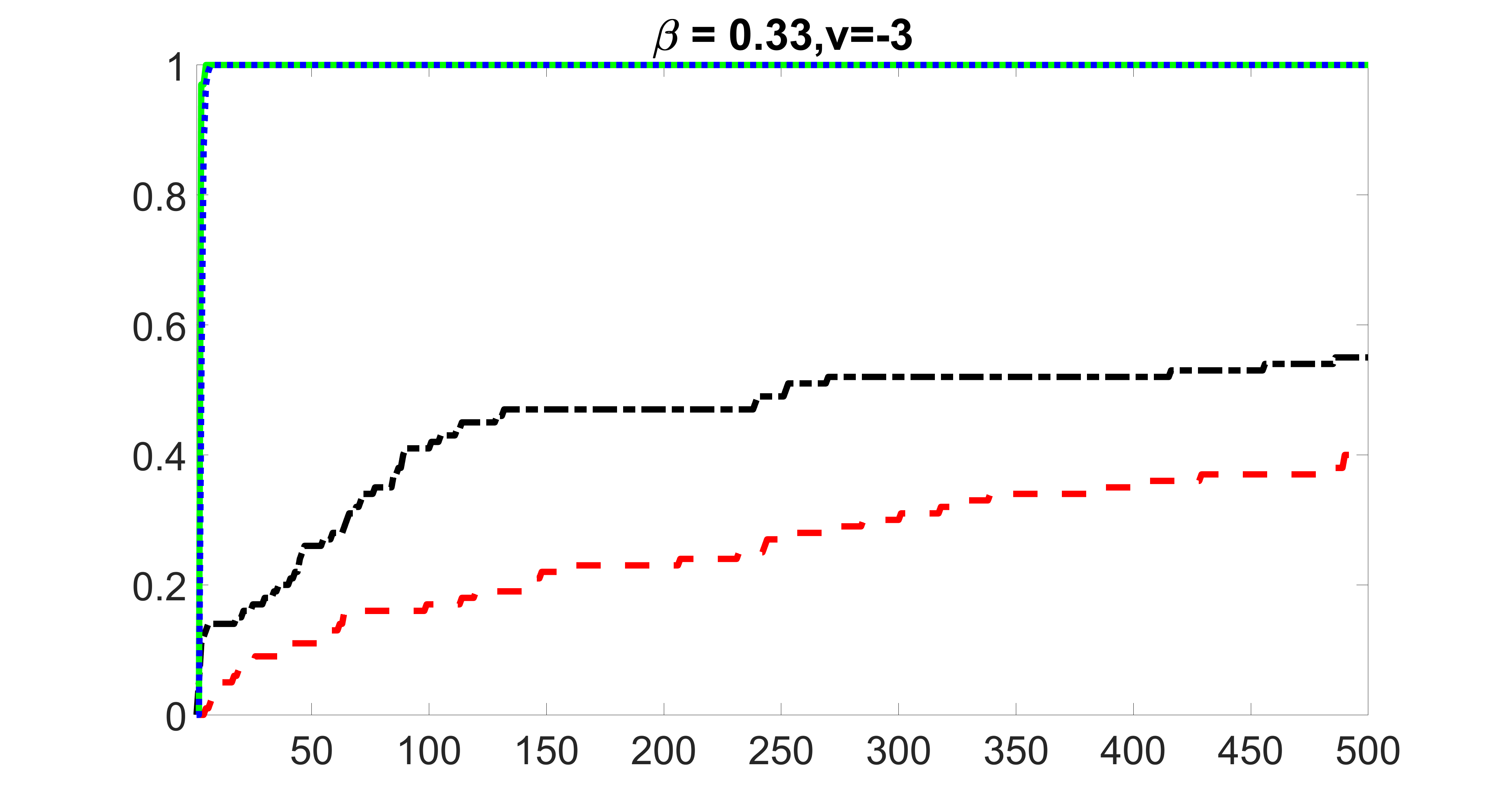}}
  \subcaptionbox{\footnotesize Precision: strong \\ outcome, zero exposure}[0.45\linewidth]
 {\includegraphics[width=6cm,height=3.5cm]{./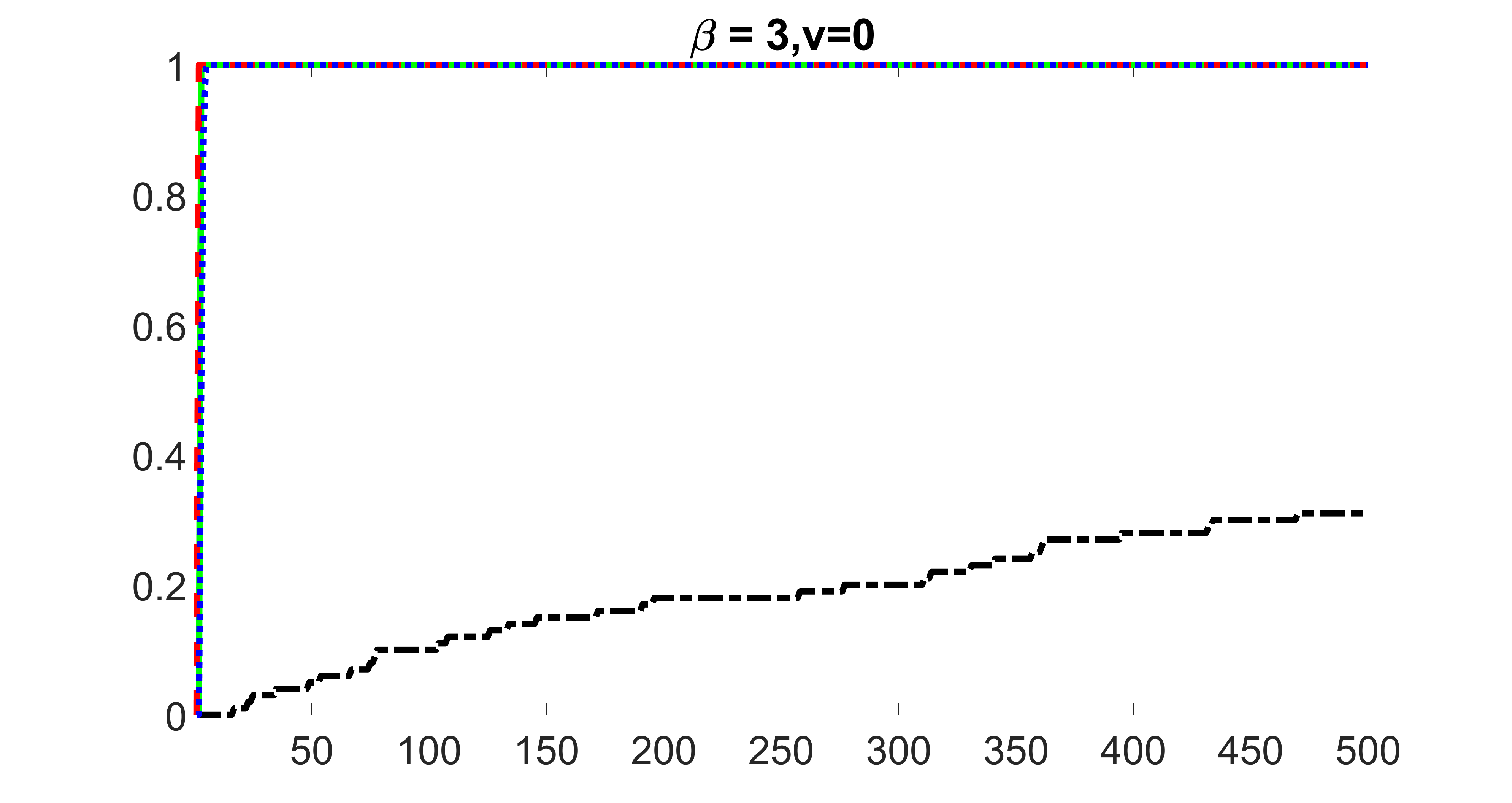}}
  \subcaptionbox{\footnotesize Precision: medium \\ outcome, zero exposure}[0.45\linewidth]
 {\includegraphics[width=6cm,height=3.5cm]{./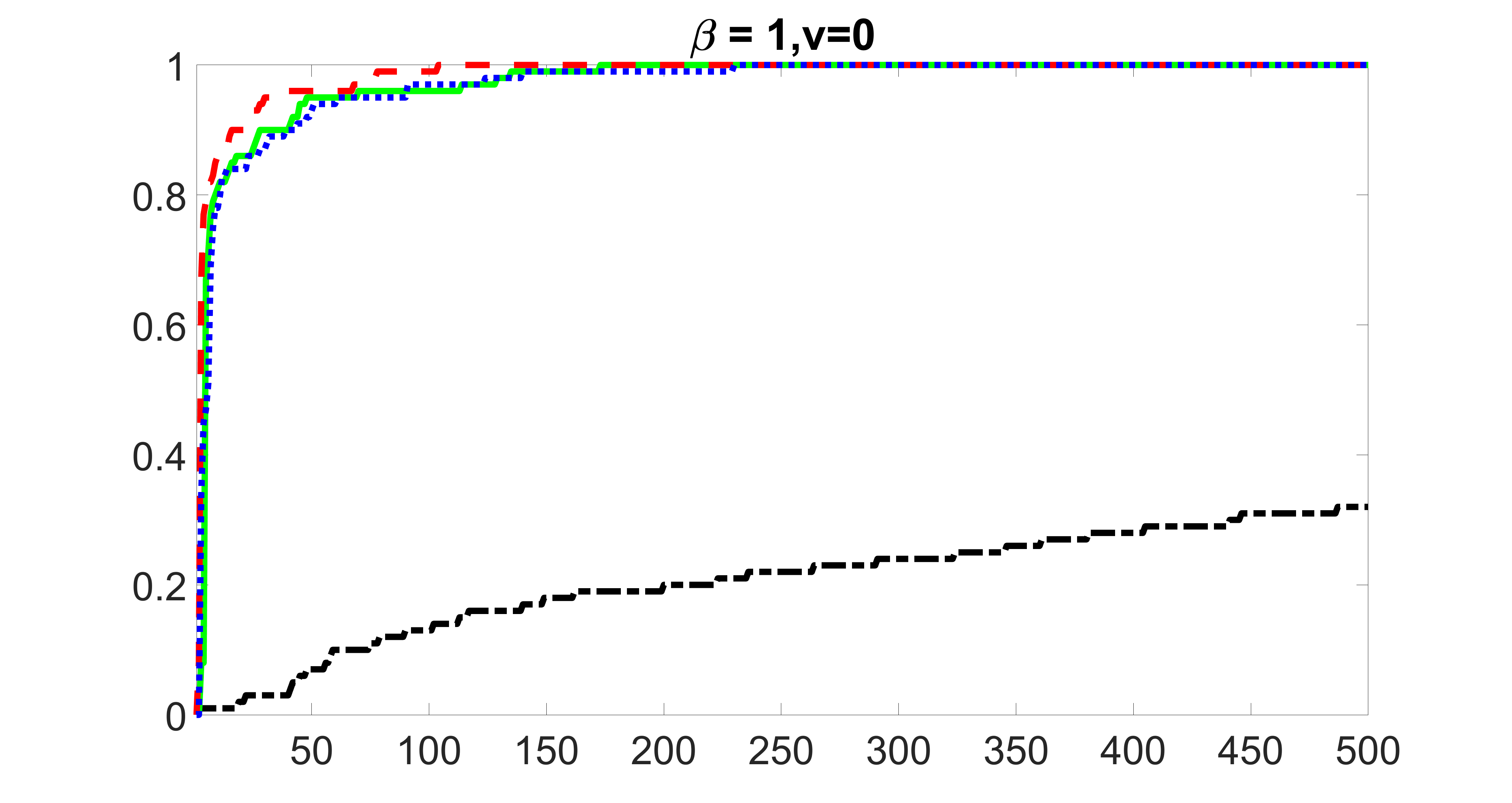}}
  \subcaptionbox{\footnotesize Precision: weak \\ outcome, zero exposure}[0.45\linewidth]
 {\includegraphics[width=6cm,height=3.5cm]{./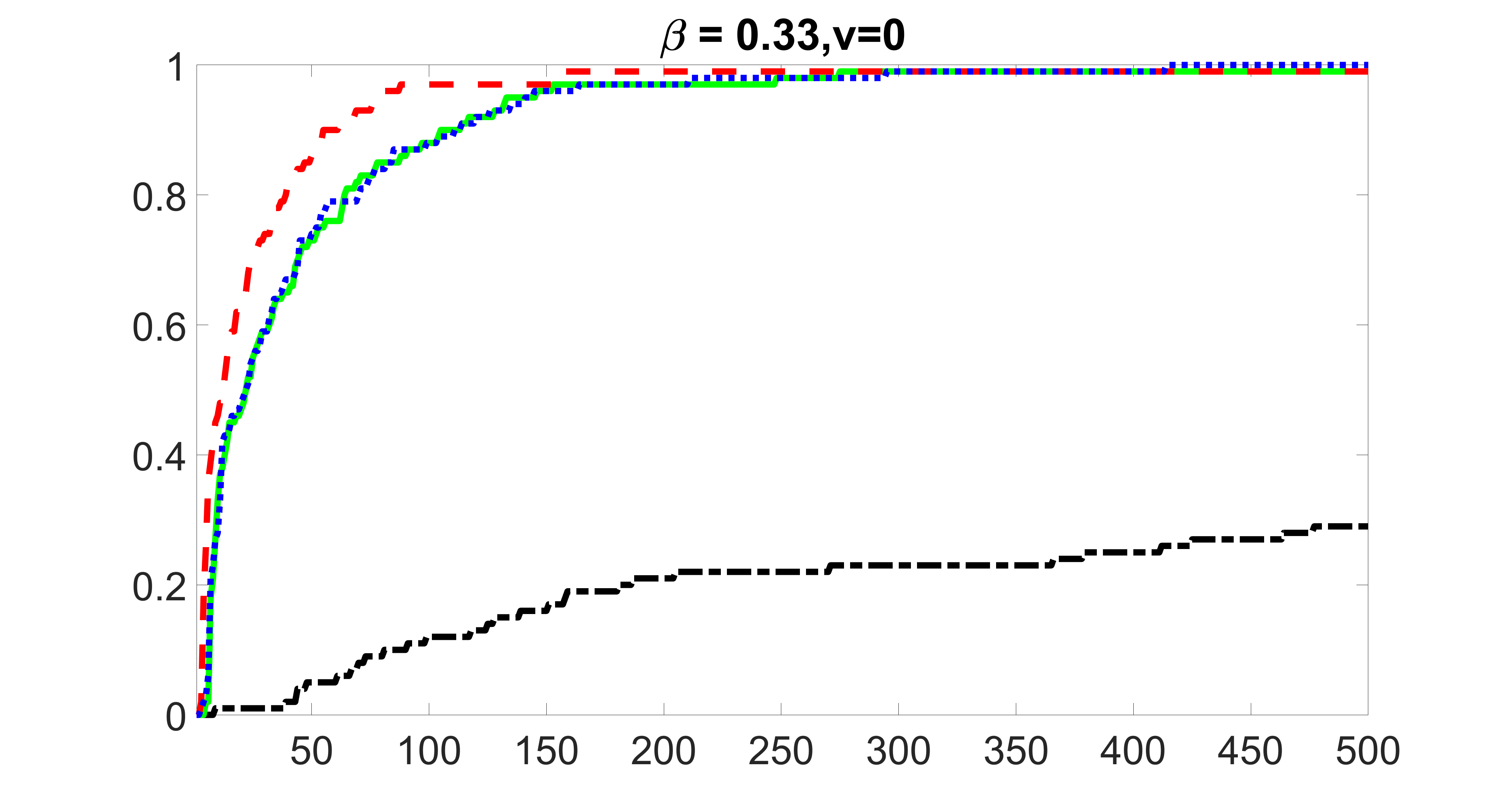}}
 \subcaptionbox{\footnotesize Precision: weaker \\ outcome, zero exposure}[0.45\linewidth]
 {\includegraphics[width=6cm,height=3.5cm]{./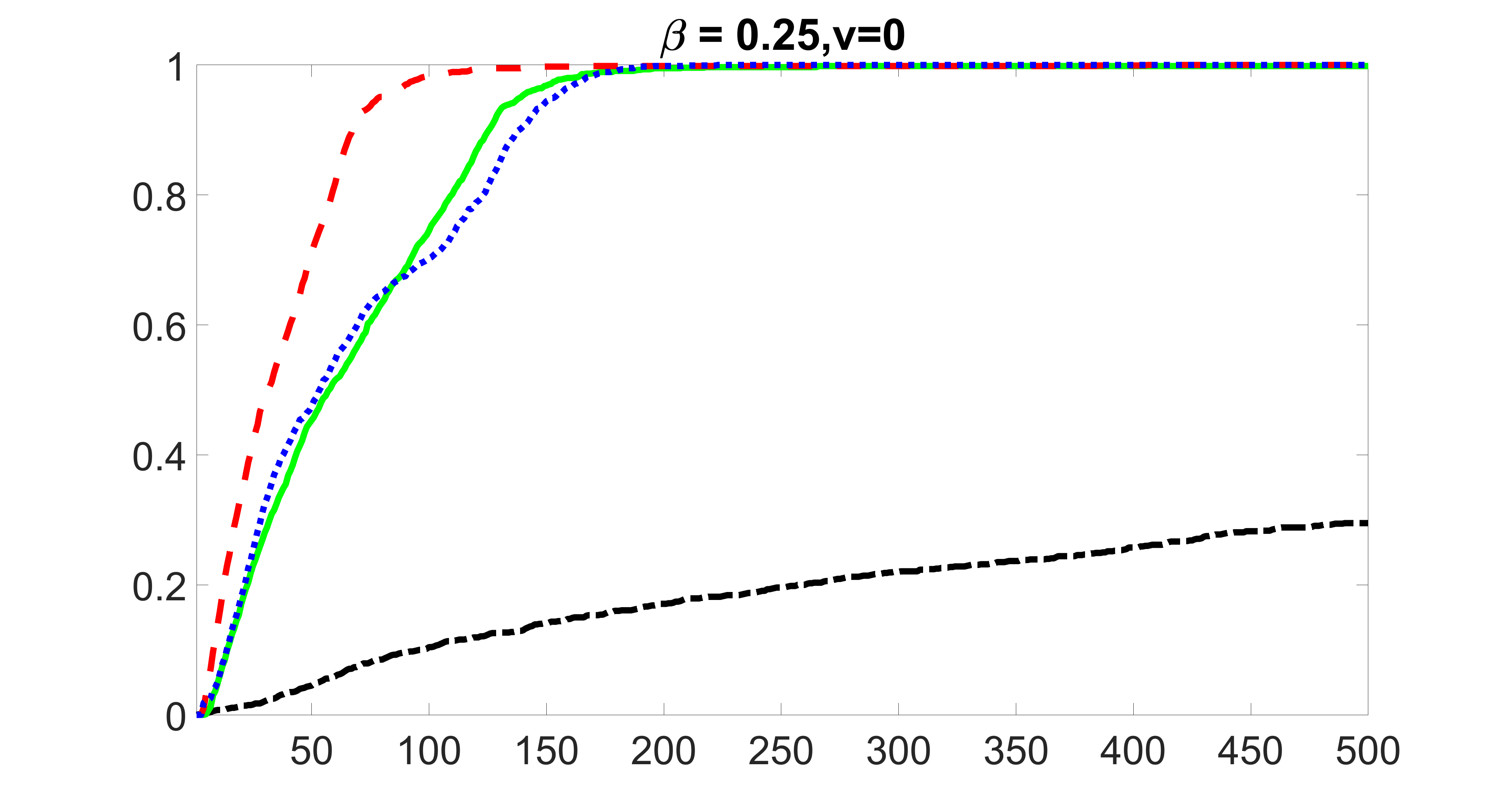}}
  \subcaptionbox{Overall coverage of $\mathcal{M}_1$}[0.45\linewidth]
 {\includegraphics[width=6cm,height=3.5cm]{./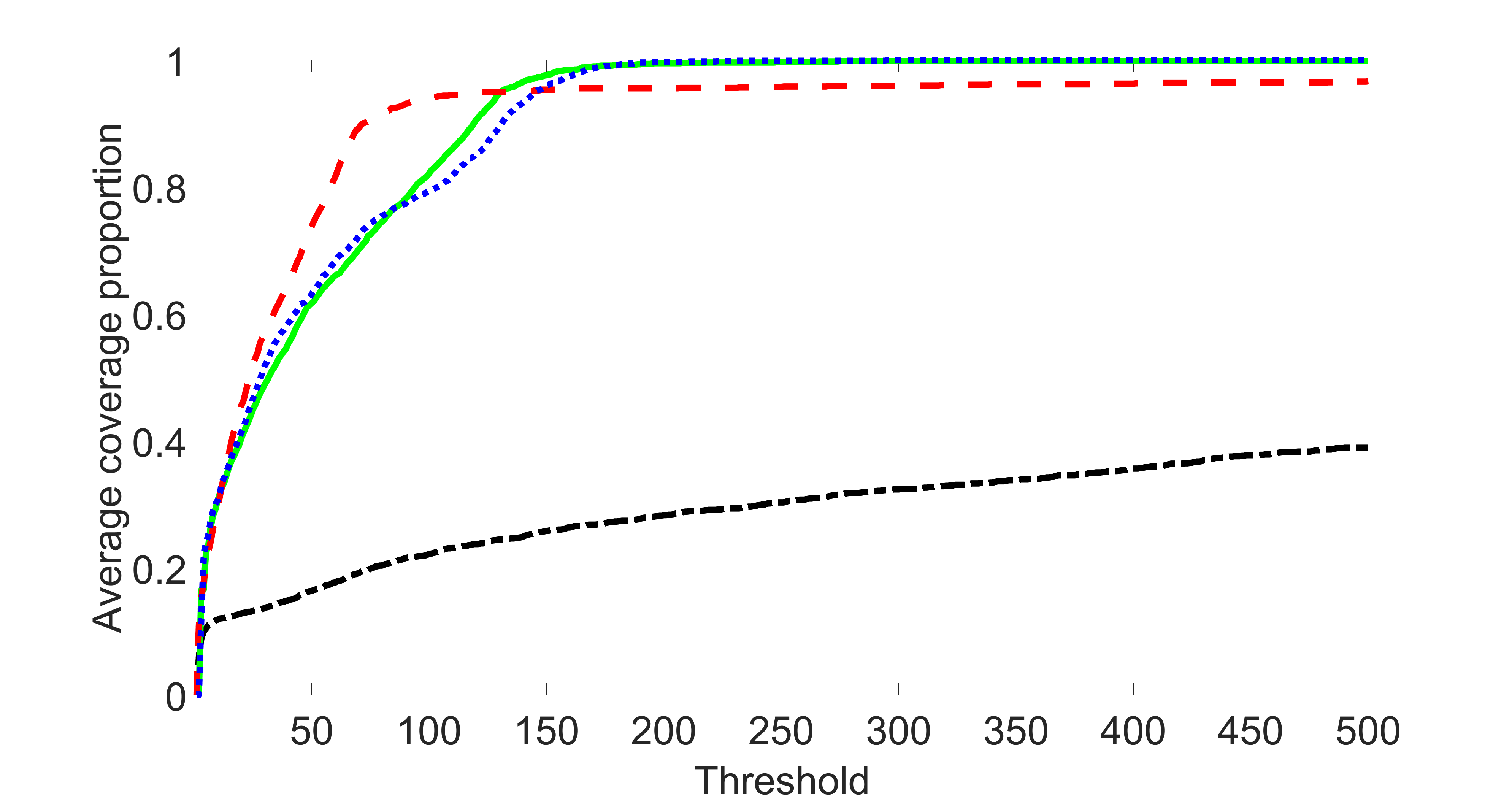}}
\caption{ Simulation results for the case $(n,s,K,\sigma) = (1000,5000,12,1)$: Panels (a) -- (g) plot the average coverage proportion for $X_l$, where $l \in \mathcal{M}_1 =  \{1,2,3,104,105, 106\} \cup \mathcal{P}_{LD}$. Panels (a) -- (c) correspond to strong outcome and weak exposure predictor, moderate outcome and moderate exposure predictor and weak outcome and strong exposure predictor; Panels (d) -- (g) correspond to strong, moderate, and weak predictors of outcome only. Panel (g) plots the average coverage proportion for the index set $\mathcal{P}_{LD}$. Panel (h) plots the average coverage proportion for the index set $\mathcal{M}_1$. The x-axis represents the size of $\widehat{\mathcal{M}} $, while
y-axis denotes the average proportion. The blue dot, green solid, red dashed and black dash dotted lines denote the blockwise joint screening, joint screening, outcome screening, and intersection screening methods, respectively.}
\label{sim3step1n1000sizesig12sigma1}
\end{figure}

\begin{figure}[htbp]
\captionsetup[subfigure]{justification=centering}
\centering
 \subcaptionbox{\footnotesize Confounder: strong \\ outcome, weak exposure}[0.45\linewidth]
 {\includegraphics[width=6cm,height=3.5cm]{./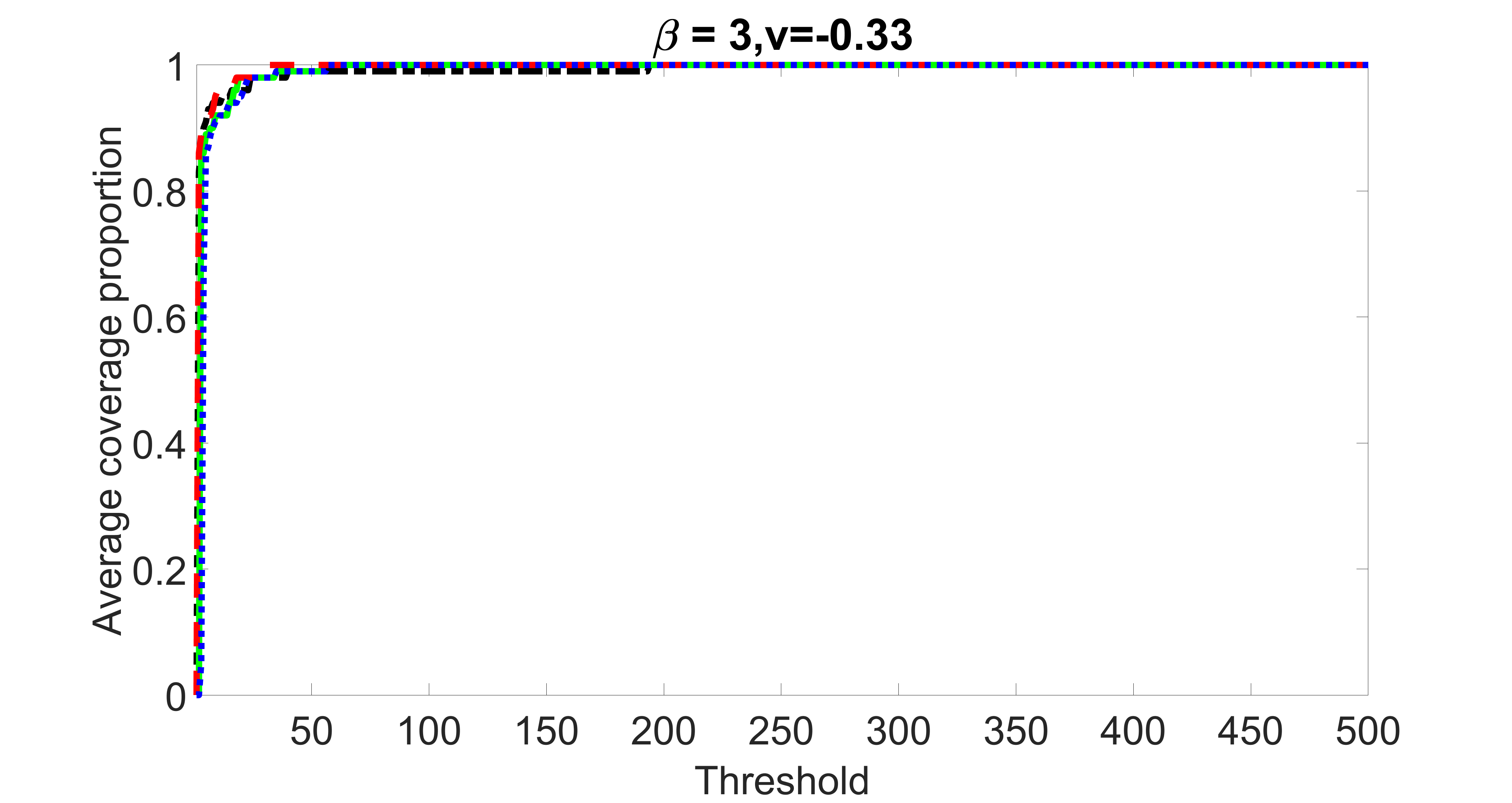}}
 \subcaptionbox{\footnotesize Confounder: medium \\ outcome, medium exposure}[0.45\linewidth]
 {\includegraphics[width=6cm,height=3.5cm]{./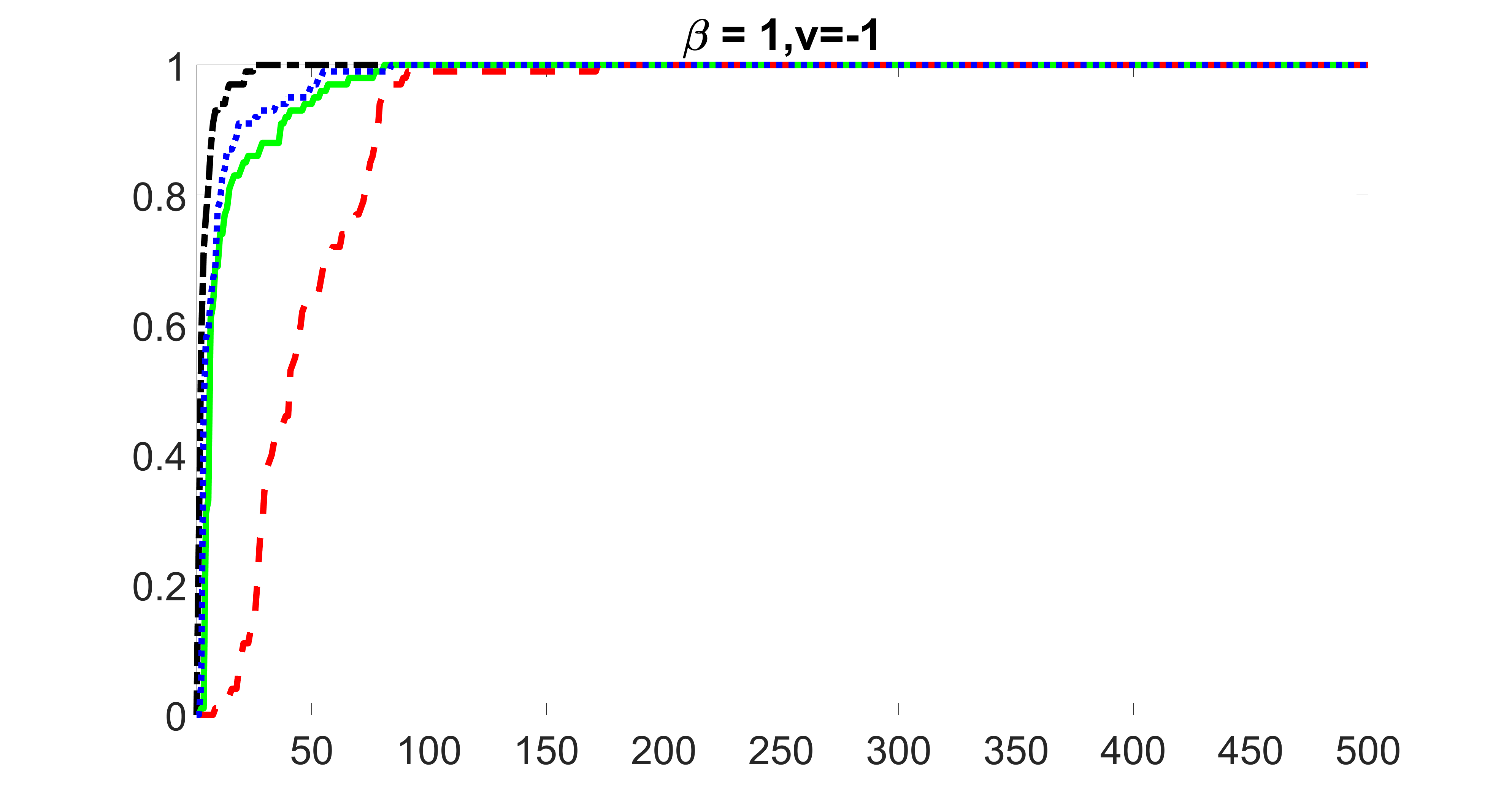}}
  \subcaptionbox{\footnotesize Confounder: weak \\ outcome, strong exposure}[0.45\linewidth]
 {\includegraphics[width=6cm,height=3.5cm]{./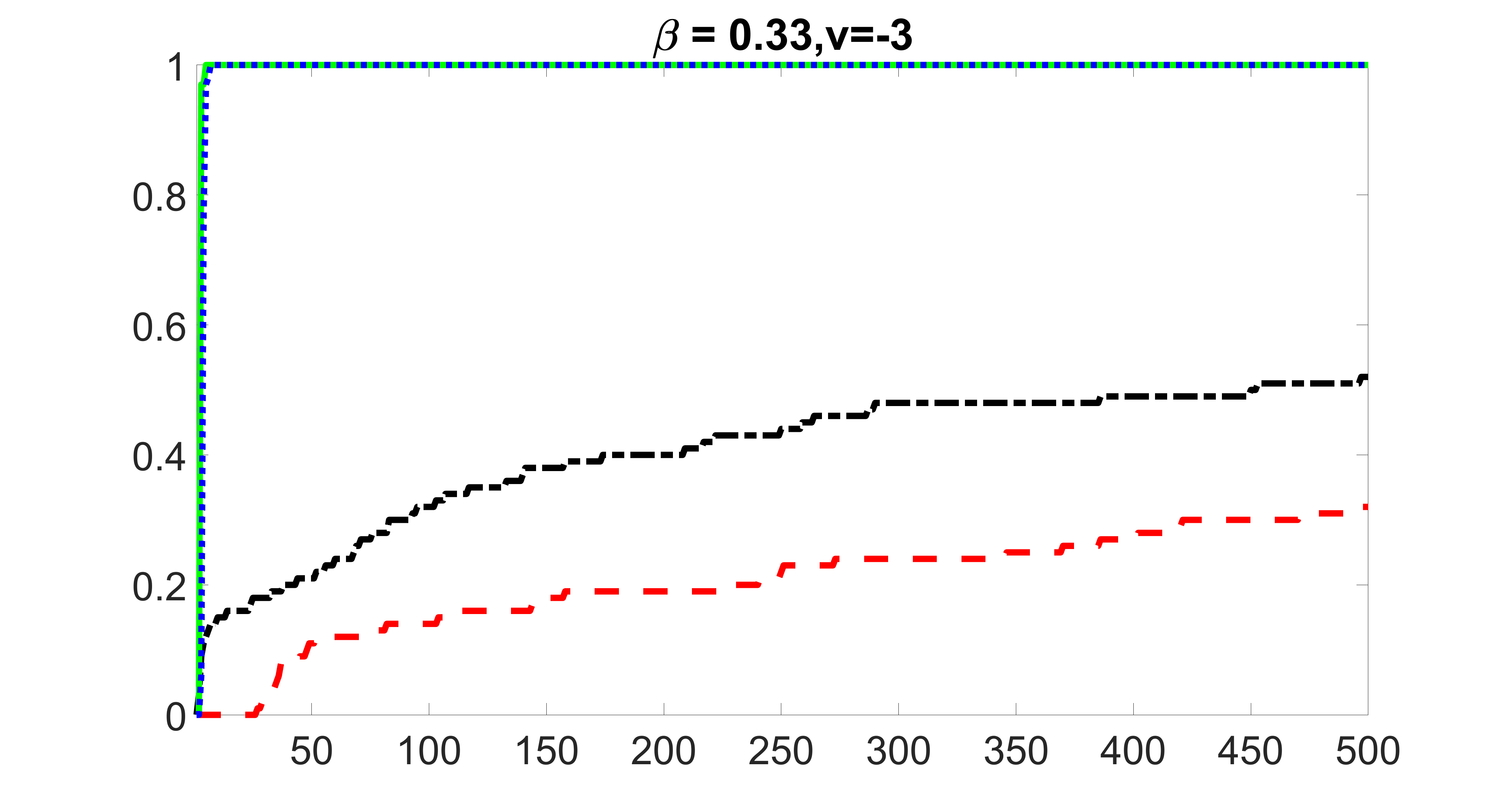}}
  \subcaptionbox{\footnotesize Precision: strong \\ outcome, zero exposure}[0.45\linewidth]
 {\includegraphics[width=6cm,height=3.5cm]{./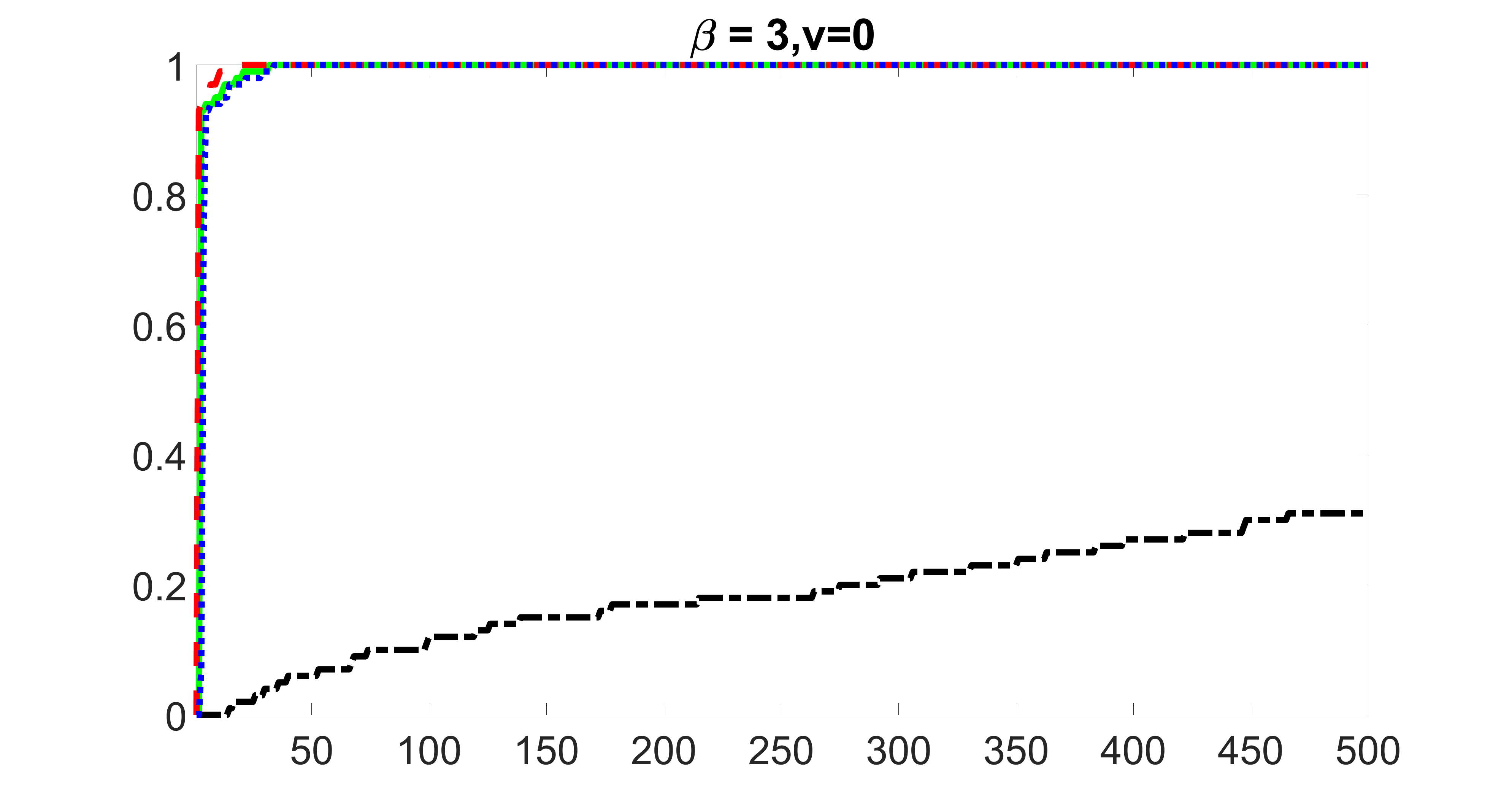}}
  \subcaptionbox{\footnotesize Precision: medium \\ outcome, zero exposure}[0.45\linewidth]
 {\includegraphics[width=6cm,height=3.5cm]{./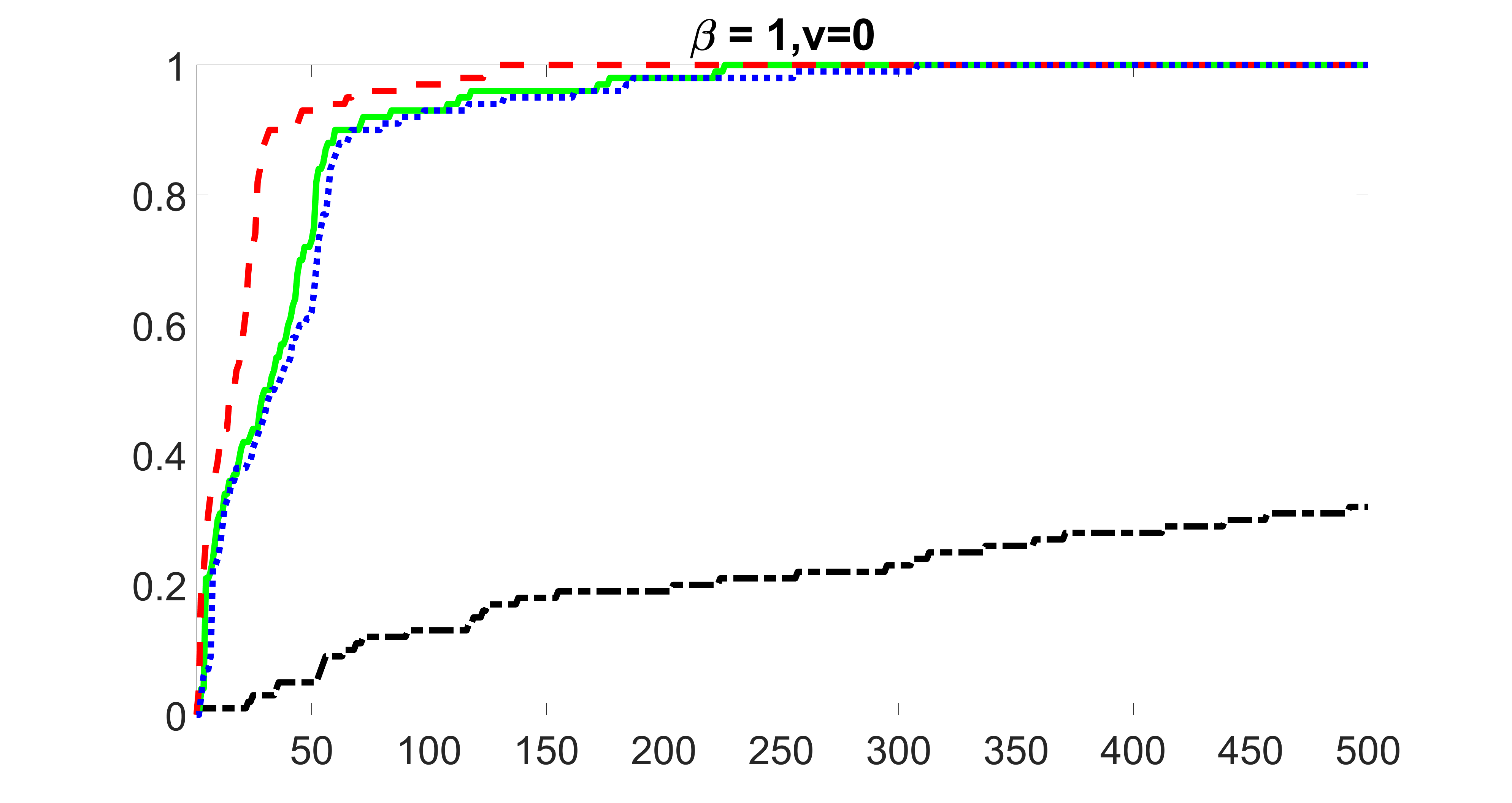}}
  \subcaptionbox{\footnotesize Precision: weak \\ outcome, zero exposure}[0.45\linewidth]
 {\includegraphics[width=6cm,height=3.5cm]{./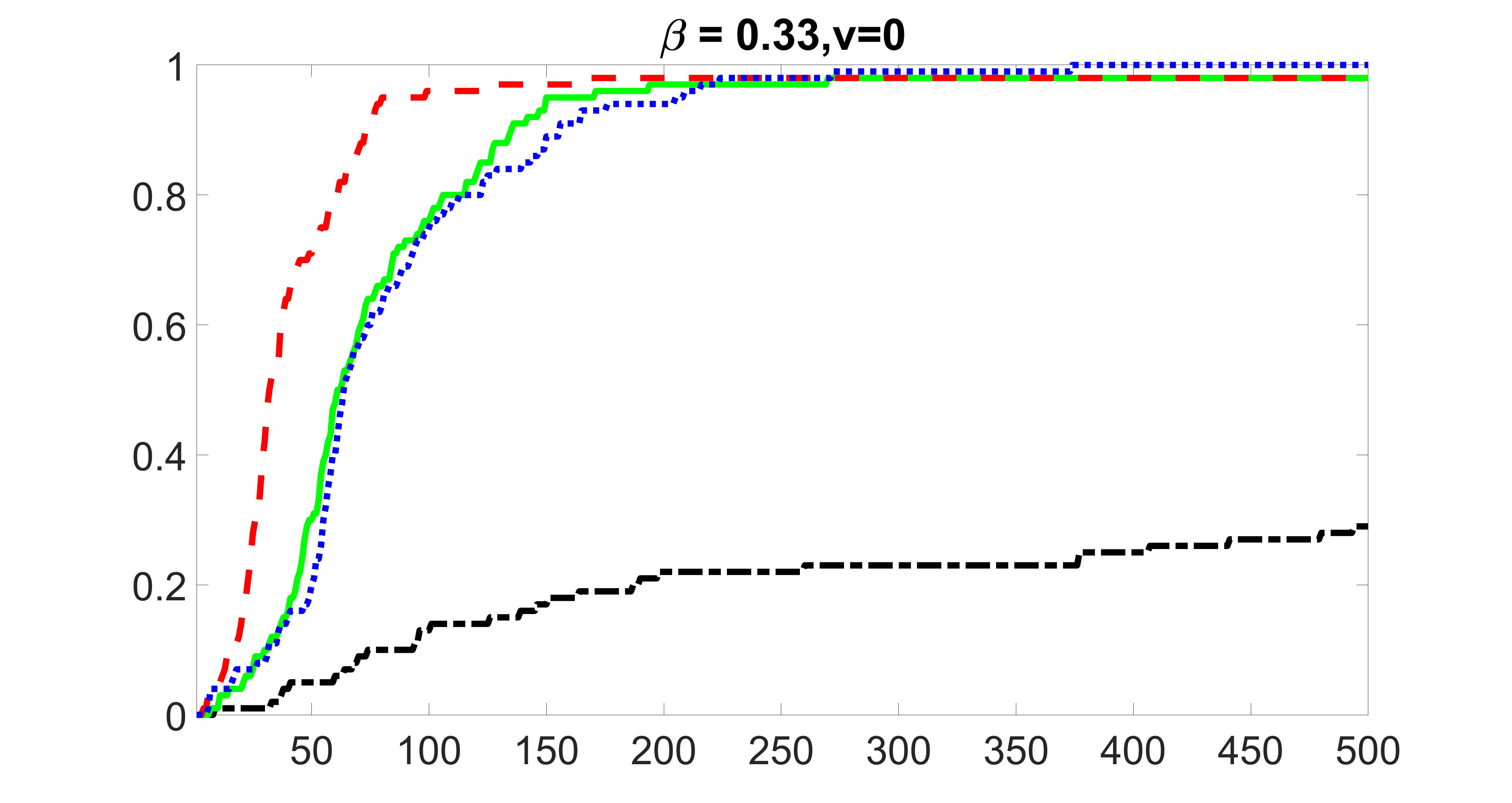}}
 \subcaptionbox{\footnotesize Precision: weaker \\ outcome, zero exposure}[0.45\linewidth]
 {\includegraphics[width=6cm,height=3.5cm]{./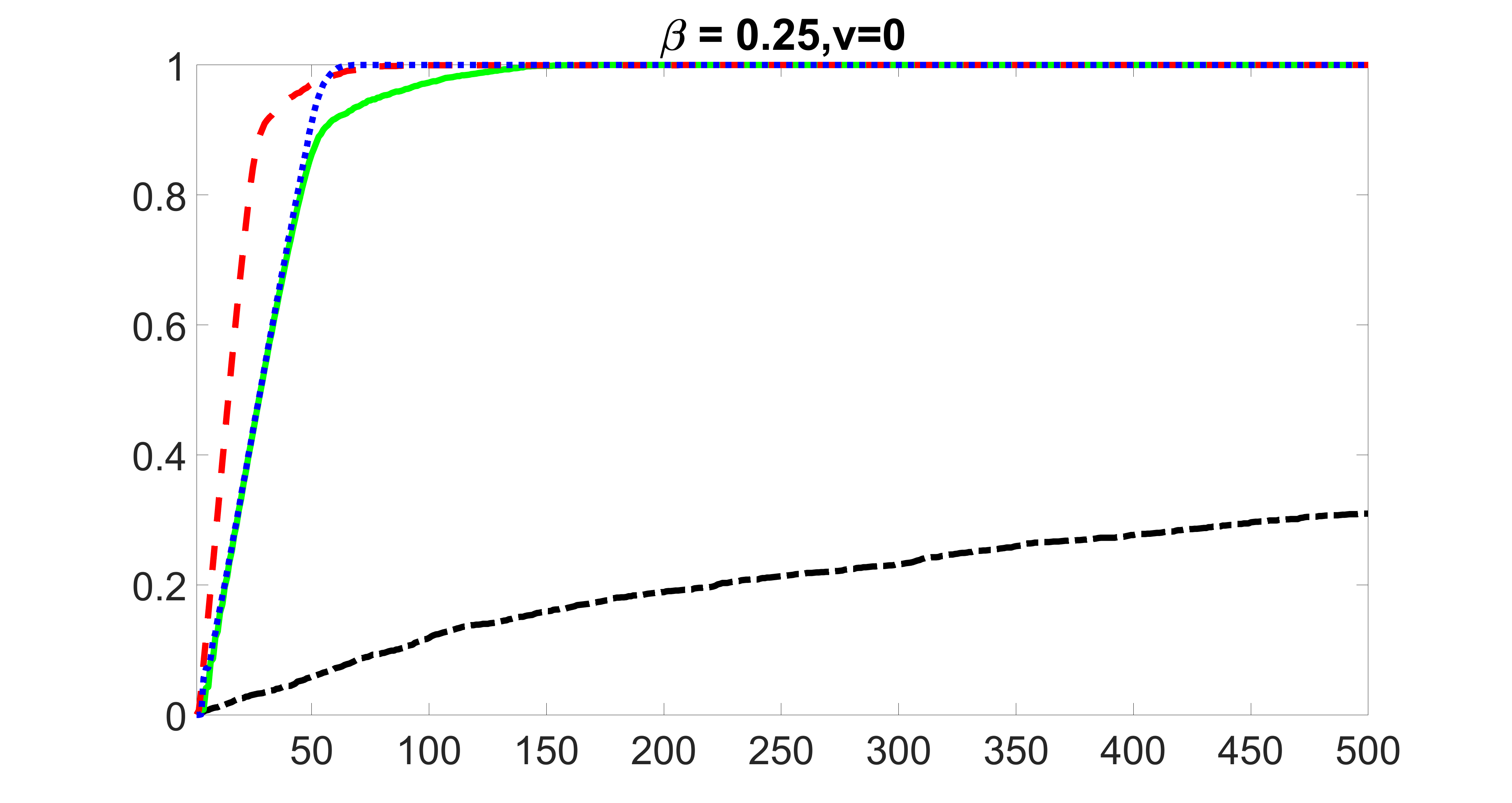}}
  \subcaptionbox{Overall coverage of $\mathcal{M}_1$}[0.45\linewidth]
 {\includegraphics[width=6cm,height=3.5cm]{./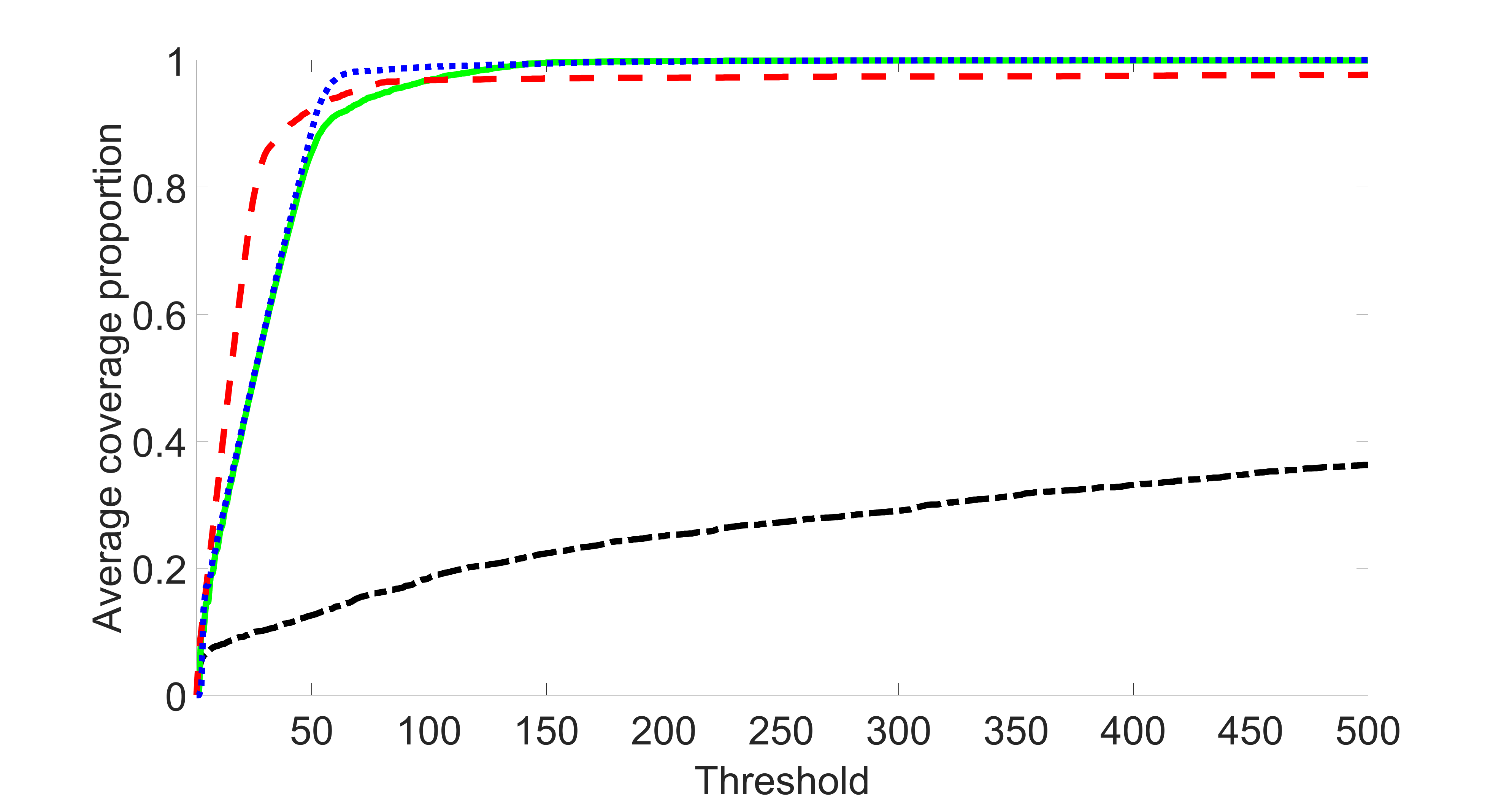}}
\caption{ Simulation results for the case $(n,s,K,\sigma) = (1000,5000,24,1)$: Panels (a) -- (g) plot the average coverage proportion for $X_l$, where $l \in \mathcal{M}_1 =  \{1,2,3,104,105, 106\} \cup \mathcal{P}_{LD}$. Panels (a) -- (c) correspond to strong outcome and weak exposure predictor, moderate outcome and moderate exposure predictor and weak outcome and strong exposure predictor; Panels (d) -- (g) correspond to strong, moderate, and weak predictors of outcome only. Panel (g) plots the average coverage proportion for the index set $\mathcal{P}_{LD}$. Panel (h) plots the average coverage proportion for the index set $\mathcal{M}_1$. The x-axis represents the size of $\widehat{\mathcal{M}} $, while
y-axis denotes the average proportion. The blue dot, green solid, red dashed and black dash dotted lines denote the blockwise joint screening, joint screening, outcome screening, and intersection screening methods, respectively.}
\label{sim3step1n1000sizesig24sigma1}
\end{figure}

\begin{figure}[htbp]
\captionsetup[subfigure]{justification=centering}
\centering
 \subcaptionbox{\footnotesize Confounder: strong \\ outcome, weak exposure}[0.45\linewidth]
 {\includegraphics[width=6cm,height=3.5cm]{./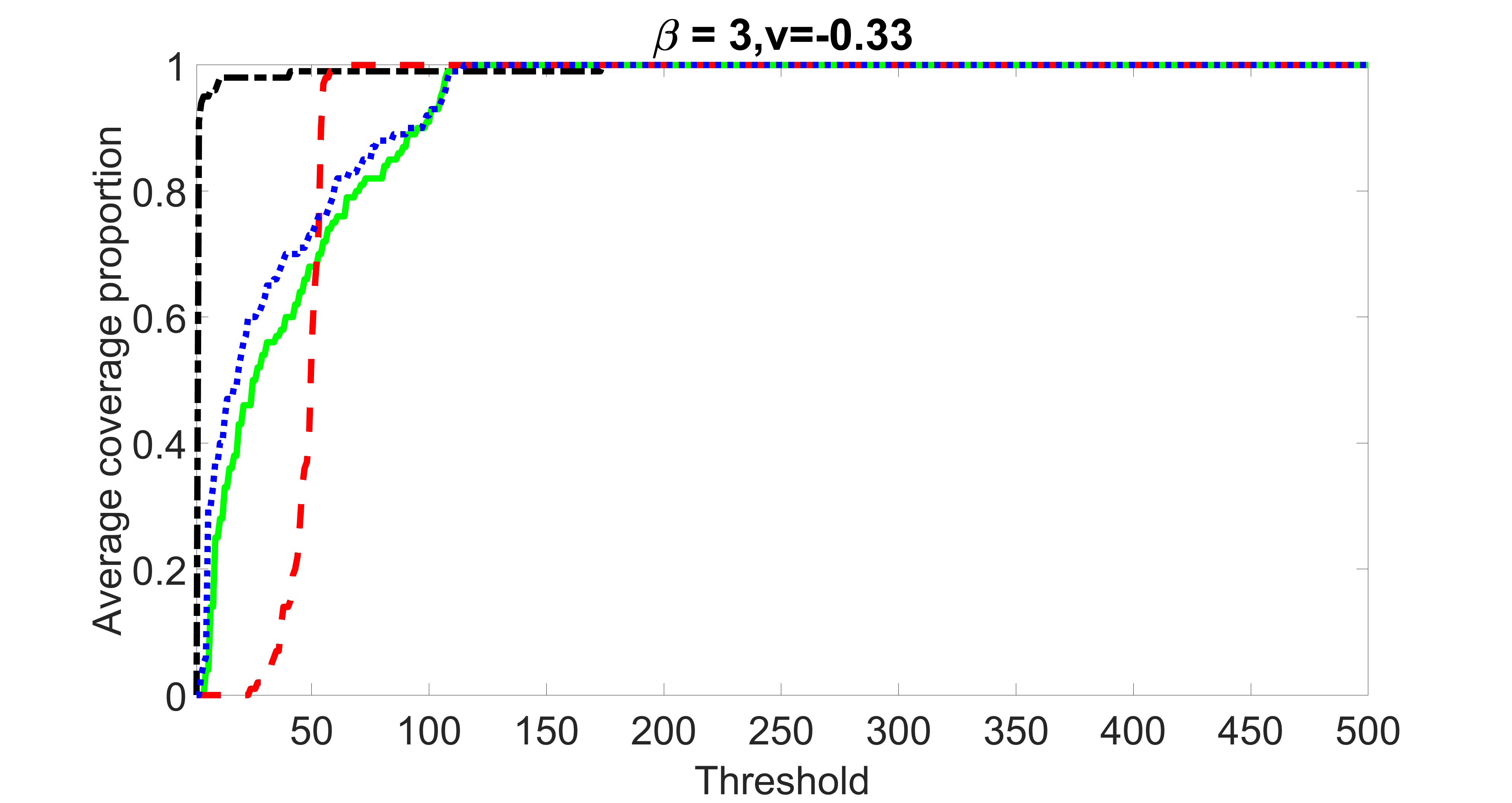}}
 \subcaptionbox{\footnotesize Confounder: medium \\ outcome, medium exposure}[0.45\linewidth]
 {\includegraphics[width=6cm,height=3.5cm]{./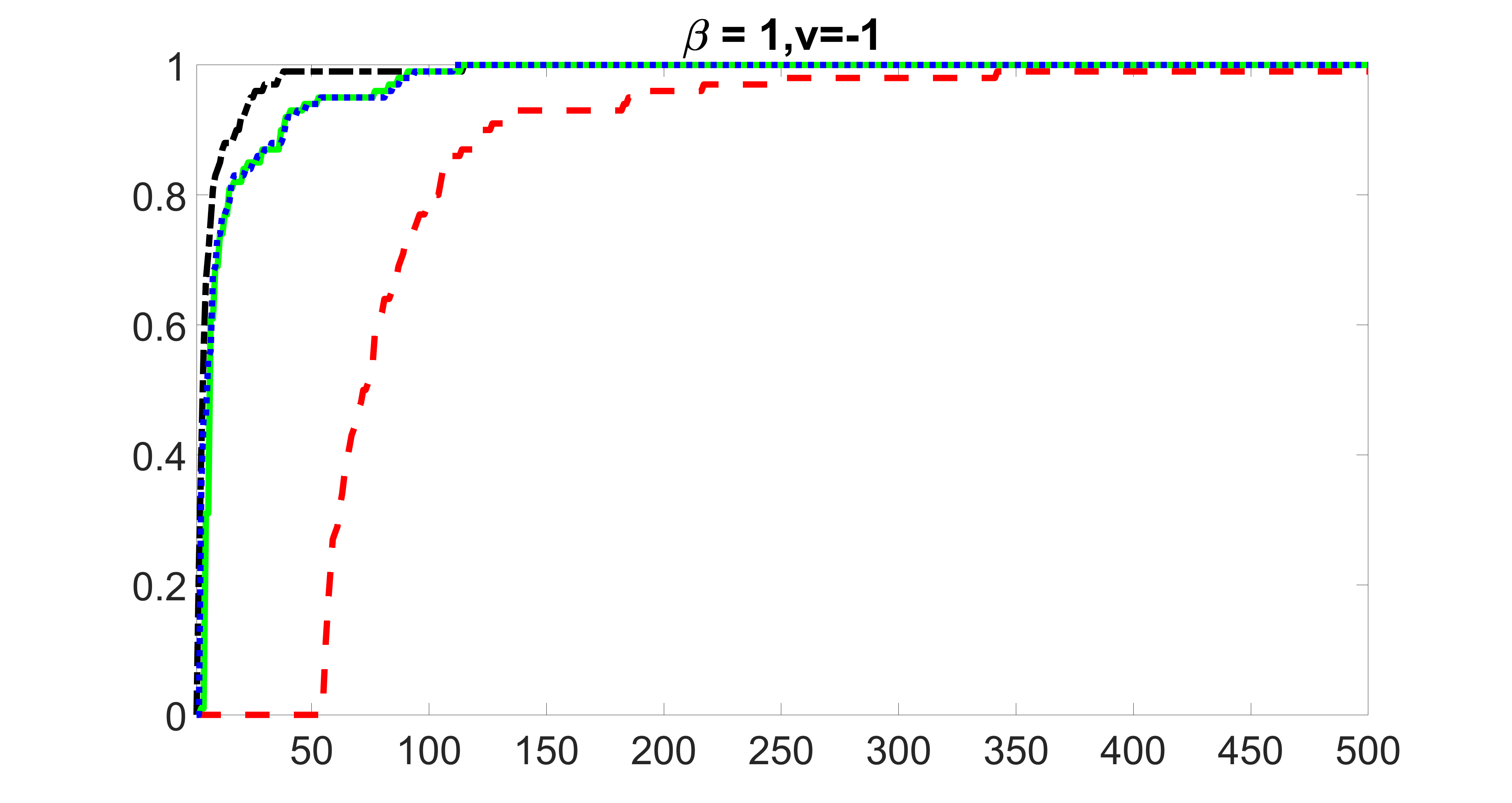}}
  \subcaptionbox{\footnotesize Confounder: weak \\ outcome, strong exposure}[0.45\linewidth]
 {\includegraphics[width=6cm,height=3.5cm]{./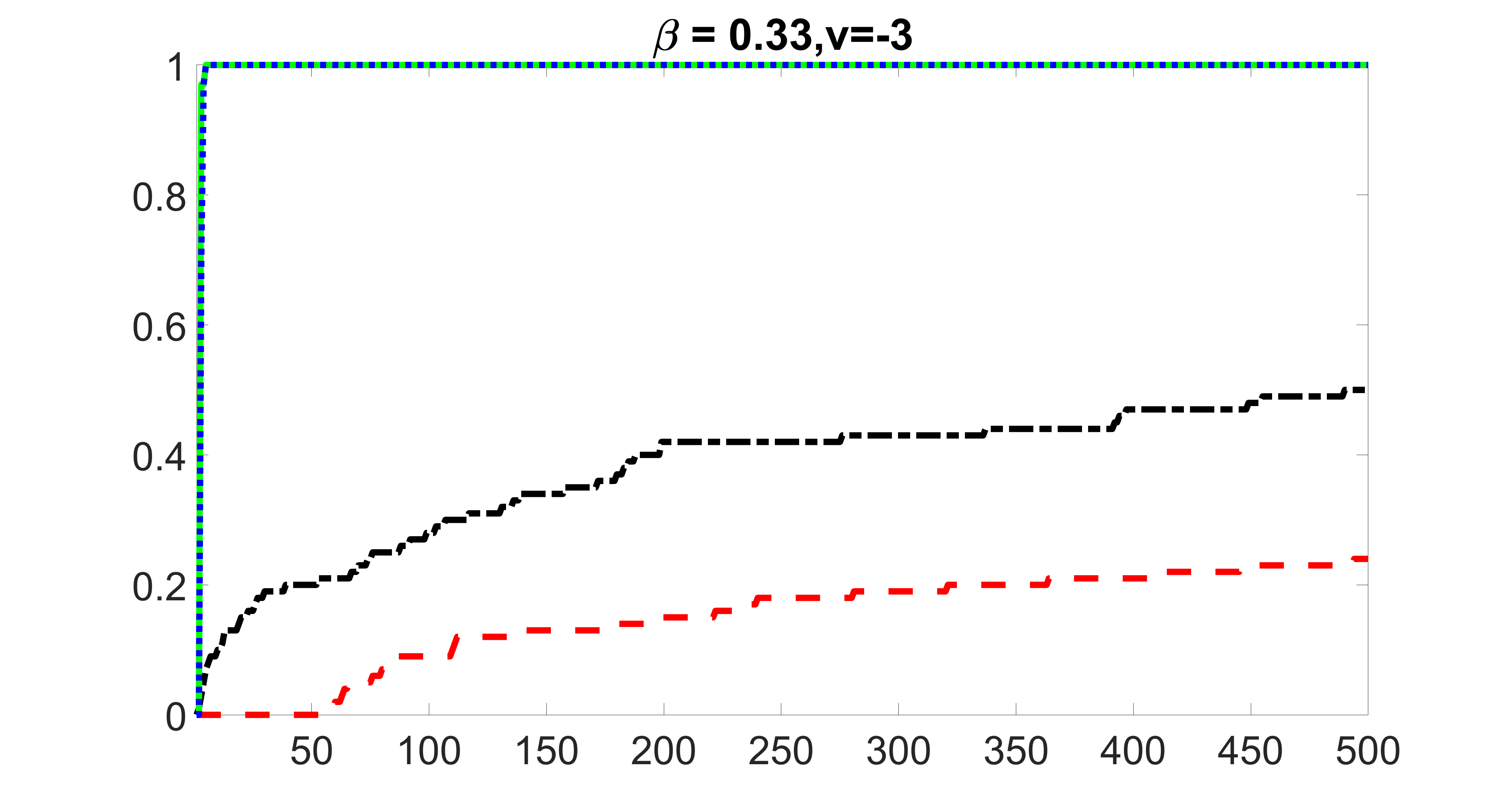}}
  \subcaptionbox{\footnotesize Precision: strong \\ outcome, zero exposure}[0.45\linewidth]
 {\includegraphics[width=6cm,height=3.5cm]{./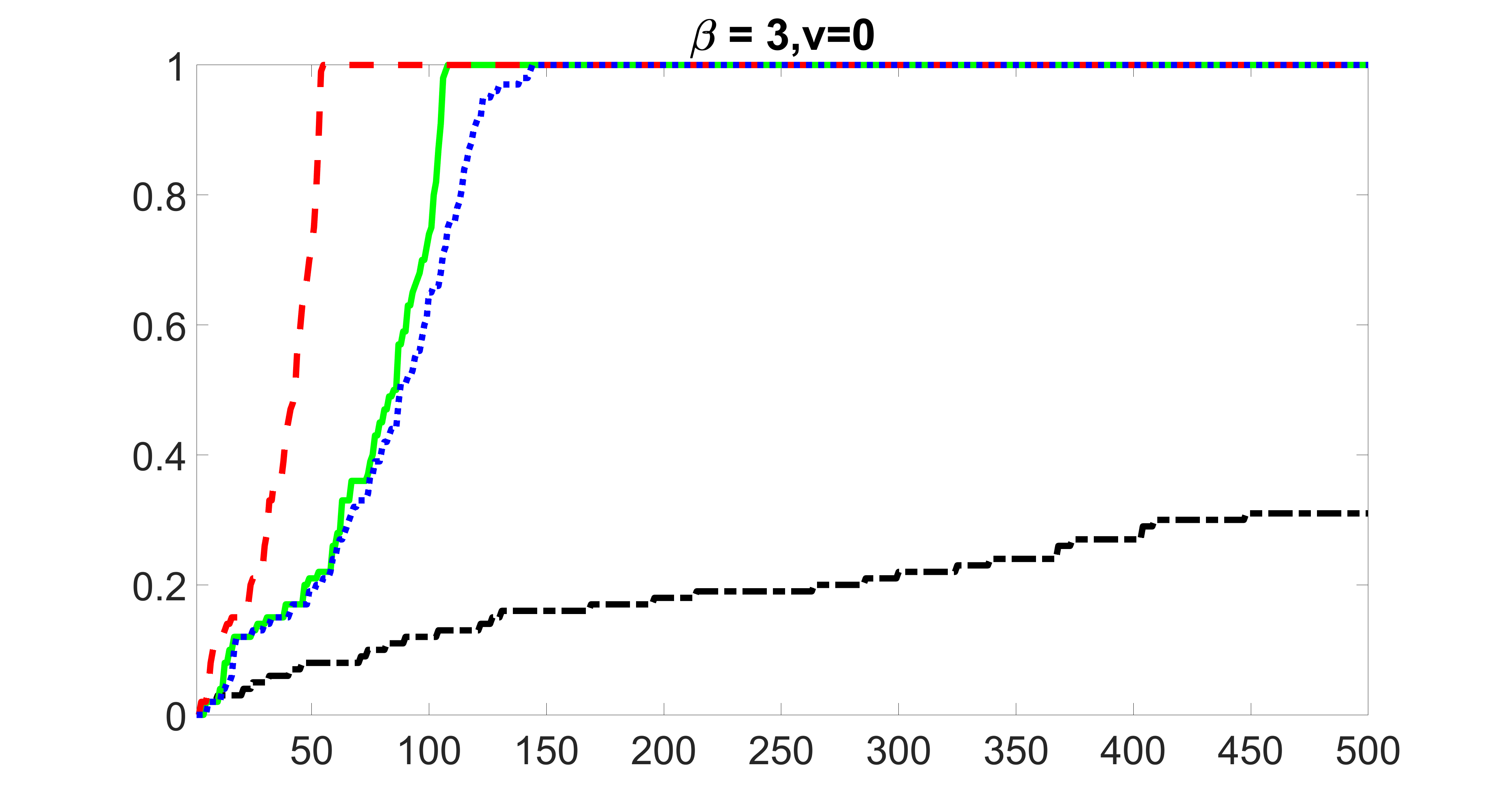}}
  \subcaptionbox{\footnotesize Precision: medium \\ outcome, zero exposure}[0.45\linewidth]
 {\includegraphics[width=6cm,height=3.5cm]{./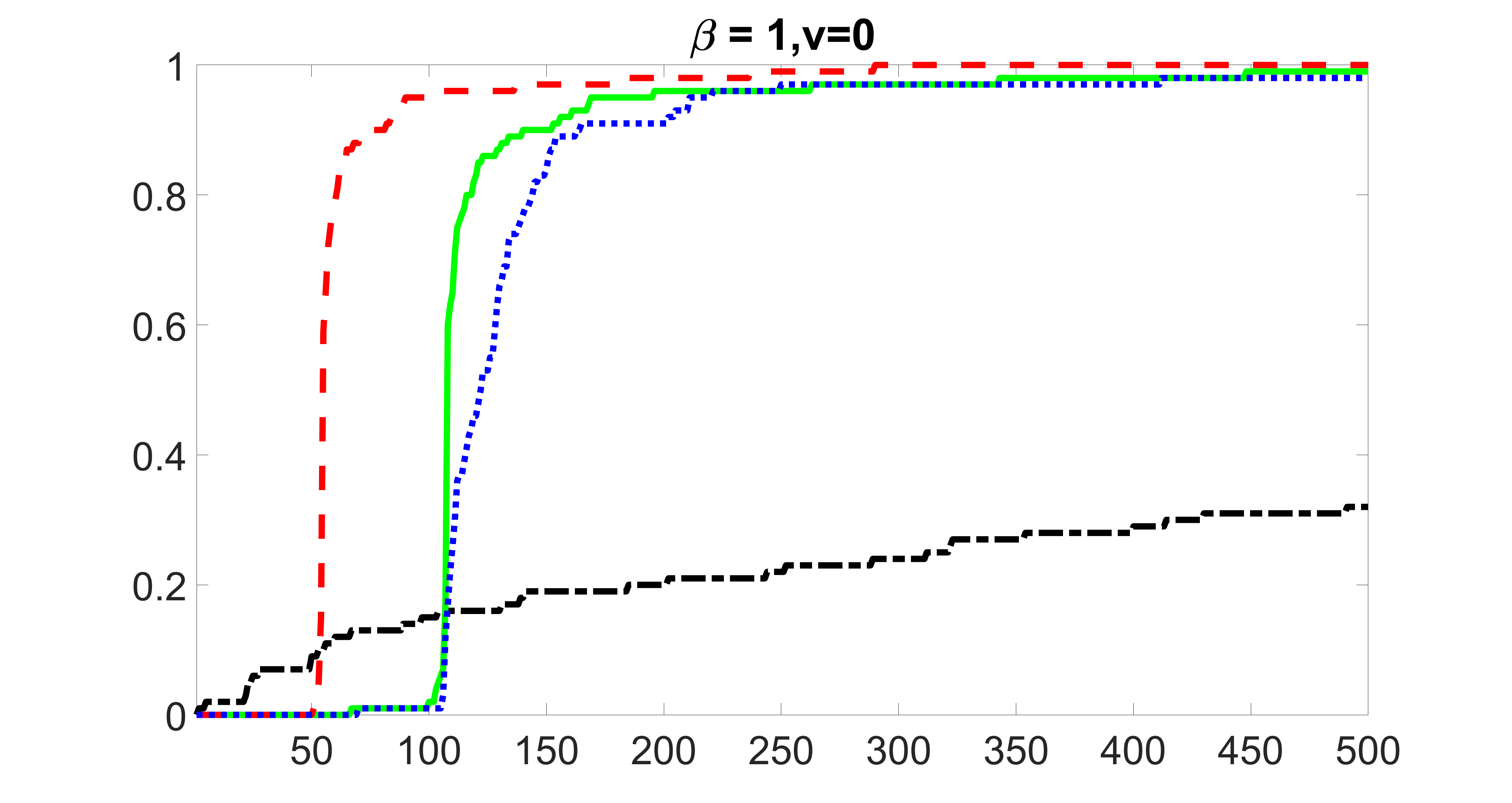}}
  \subcaptionbox{\footnotesize Precision: weak \\ outcome, zero exposure}[0.45\linewidth]
 {\includegraphics[width=6cm,height=3.5cm]{./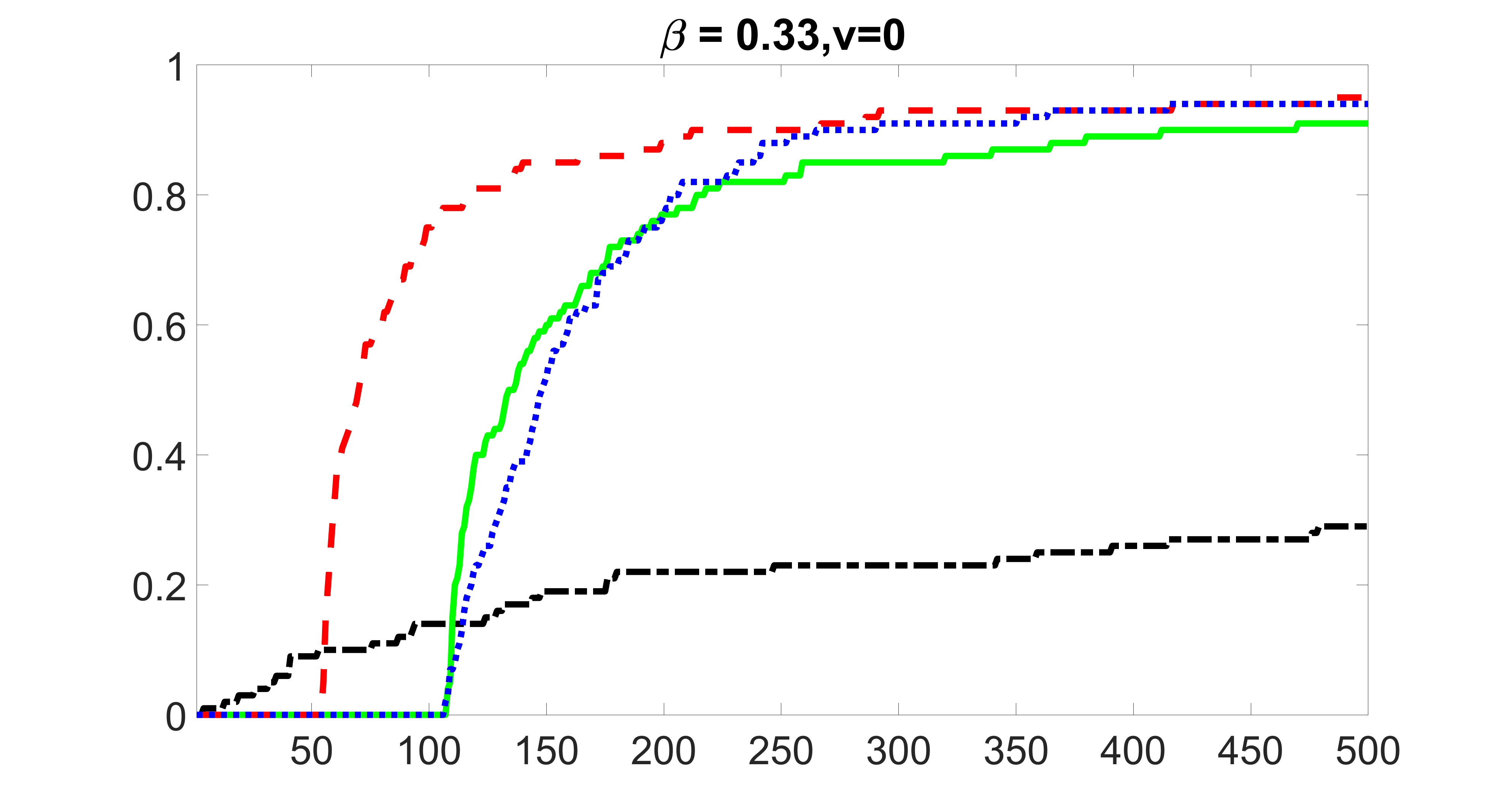}}
 \subcaptionbox{\footnotesize Precision: weaker \\ outcome, zero exposure}[0.45\linewidth]
 {\includegraphics[width=6cm,height=3.5cm]{./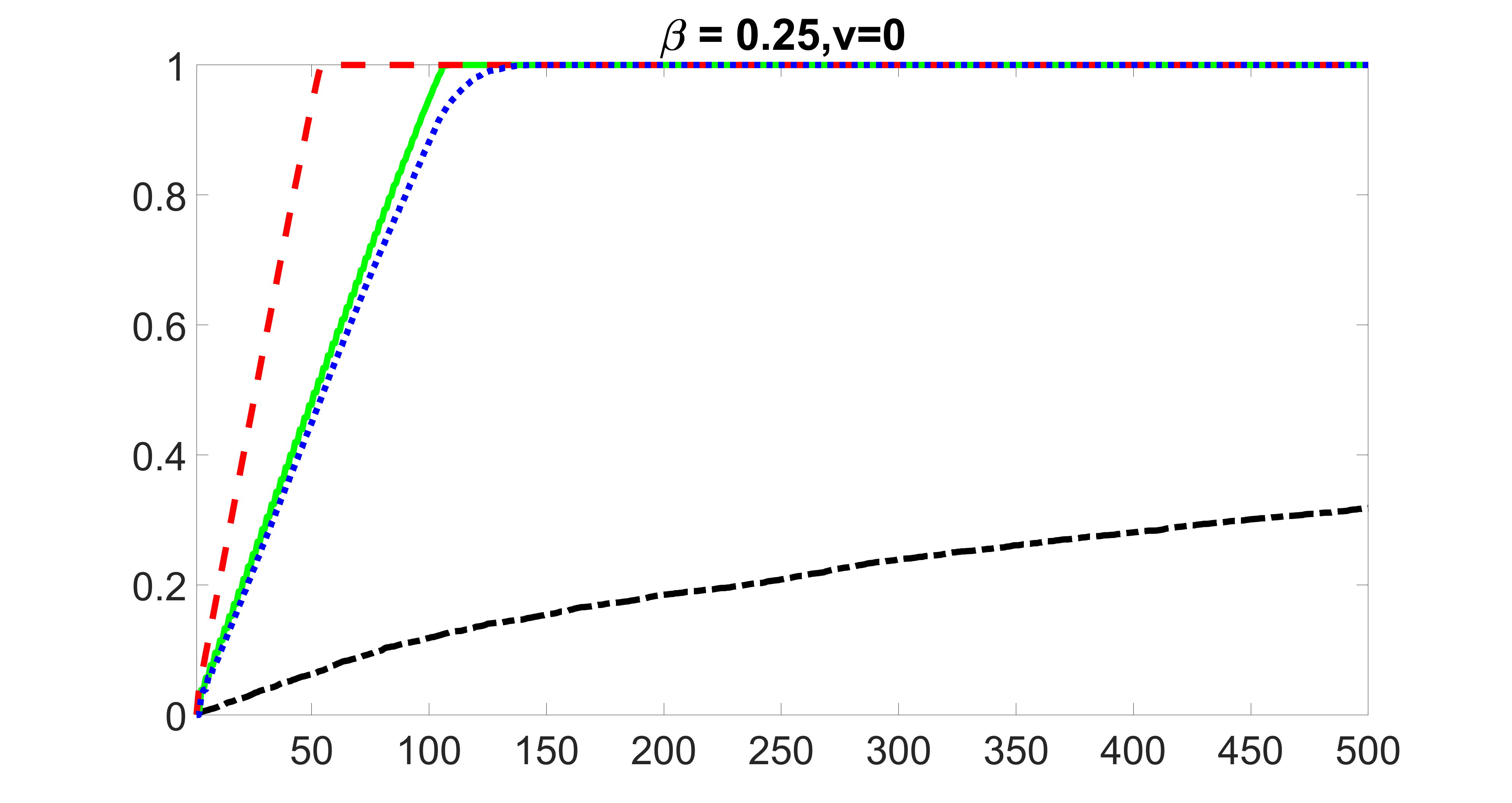}}
  \subcaptionbox{Overall coverage of $\mathcal{M}_1$}[0.45\linewidth]
 {\includegraphics[width=6cm,height=3.5cm]{./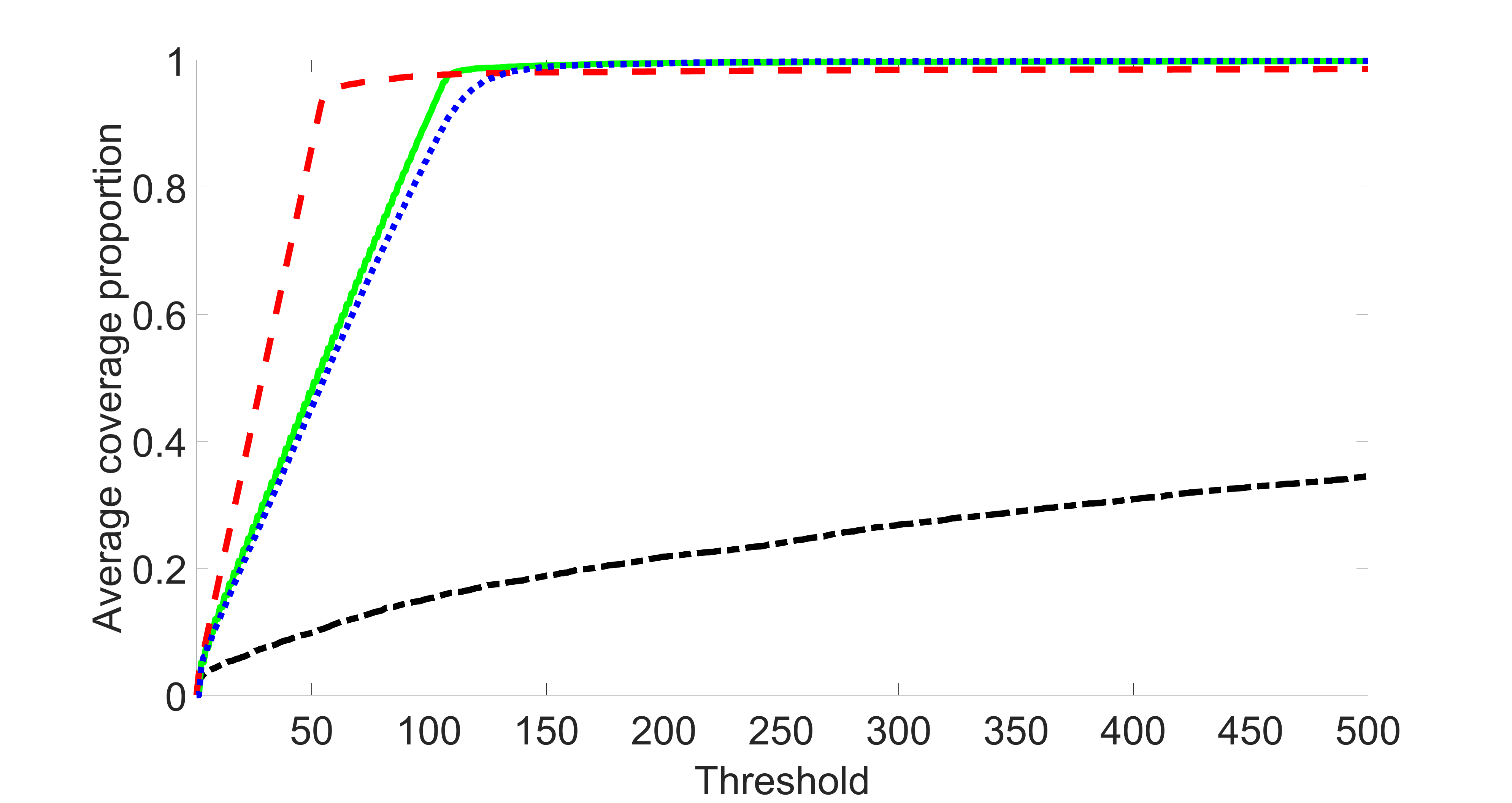}}
\caption{ Simulation results for the case $(n,s,K,\sigma) = (1000,5000,52,1)$: Panels (a) -- (g) plot the average coverage proportion for $X_l$, where $l \in \mathcal{M}_1 =  \{1,2,3,104,105, 106\} \cup \mathcal{P}_{LD}$. Panels (a) -- (c) correspond to strong outcome and weak exposure predictor, moderate outcome and moderate exposure predictor and weak outcome and strong exposure predictor; Panels (d) -- (g) correspond to strong, moderate, and weak predictors of outcome only. Panel (g) plots the average coverage proportion for the index set $\mathcal{P}_{LD}$. Panel (h) plots the average coverage proportion for the index set $\mathcal{M}_1$. The x-axis represents the size of $\widehat{\mathcal{M}} $, while
y-axis denotes the average proportion. The blue dot, green solid, red dashed and black dash dotted lines denote the blockwise joint screening, joint screening, outcome screening, and intersection screening methods, respectively.}
\label{sim3step1n1000sizesig52sigma1}
\end{figure}

\begin{table}[htbp]
\centering
\caption{Simulation results for $ \sigma=1 $: the average MSEs for $\bm\beta$ and $ {\bm B}$, and their associated standard errors in the parentheses are reported. The left panel summarizes the results from the joint screening method; the right panel summarizes the results from the blockwise joint screening method. The results are based on 100 Monte Carlo repetitions. }
\begin{tabular}{ rrr | rrrrr }
 &  &  &   & &\\
Proposed & MSE $\bm\beta$ & MSE ${\bm{B}}$ & Proposed (block) &MSE $\bm\beta$ & MSE ${\bm{B}}$  \\
\hline
n=200,K=2&1.423(0.096)&0.785(0.009)&n=200,K=2&1.390(0.090)&0.793(0.010)\\
n=500,K=2&0.831(0.069)&0.726(0.008)&n=500,K=2&0.892(0.082)&0.734(0.009)\\
n=1000,K=2&0.591(0.050)&0.676(0.008)&n=1000,K=2&0.488(0.028)&0.666(0.006)\\*
\hline
n=200,K=4&1.667(0.096)&0.815(0.011)&n=200,K=4&1.548(0.088)&0.805(0.010)\\*
n=500,K=4&1.059(0.082)&0.751(0.011)&n=500,K=4&1.094(0.090)&0.758(0.012)\\
n=1000,K=4&0.606(0.057)&0.671(0.008)&n=1000,K=4&0.555(0.045)&0.678(0.008)\\
\hline
n=200,K=6&1.955(0.101)&0.826(0.010)&n=200,K=6&1.701(0.084)&0.816(0.009)\\*
n=500,K=6&1.155(0.085)&0.749(0.011)&n=500,K=6&1.107(0.089)&0.752(0.011)\\
n=1000,K=6&0.578(0.051)&0.674(0.008)&n=1000,K=6&0.551(0.047)&0.672(0.008)\\*
\hline
n=200,K=12&2.466(0.096)&0.890(0.039)&n=200,K=12&2.223(0.129)&0.838(0.011)\\*
n=500,K=12&1.024(0.082)&0.735(0.010)&n=500,K=12&0.927(0.077)&0.727(0.008)\\*
n=1000,K=12&0.570(0.046)&0.673(0.008)&n=1000,K=12&0.627(0.057)&0.681(0.009)\\
\hline
n=200,K=24&2.533(0.164)&0.847(0.014)&n=200,K=24&2.136(0.138)&0.821(0.010)\\*
n=500,K=24&1.065(0.080)&0.733(0.010)&n=500,K=24&1.119(0.088)&0.737(0.011)\\
n=1000,K=24&0.662(0.050)&0.669(0.008)&n=1000,K=24&0.677(0.056)&0.673(0.009)\\
\hline
n=200,K=52&14.650(0.815)&2.034(0.487)&n=200,K=52&13.693(0.728)&1.870(0.459)\\*
n=500,K=52&1.816(0.144)&0.775(0.019)&n=500,K=52&1.725(0.143)&0.762(0.019)\\*
n=1000,K=52&0.937(0.066)&0.684(0.010)&n=1000,K=52&0.861(0.056)&0.675(0.008)\\*
\end{tabular}
\label{sim1t2}
\end{table}

\newpage
\section{ Theoretical properties}
Starting from here, we denote $\dot{\bm\beta} = ( \dot\beta_1, \ldots, \dot\beta_s)^T$, $\dot{\bm B}$ and $\dot{\bm C}_l$ as the true values for $\bm\beta = (\beta_1, \ldots,\beta_s)^T$, ${\bm B}$ and ${\bm C}_l$ respectively. Furthermore, we denote $\dot{\mathcal{C}}$, $\dot{\mathcal{P}}$, and $\dot{\mathcal{I}}$ as the true index sets of ${\mathcal{C}}$, ${\mathcal{P}}$, and ${\mathcal{I}}$.
\label{Theoretical guarantees}

\subsection{Sure screening property}
In this subsection, we study  theoretical properties for our screening procedure. We let $\mathcal{M}_1 =  \{1 \leq l \leq s_n : \dot{\beta}_{l}^{*} \neq 0 \}= \dot{\mathcal{C}} \cup \dot{\mathcal{P}}$, where $\dot{\mathcal{C}} = \{1 \leq  l \leq s_n: \dot{\bm{C}_{l}} \neq \bm{0} \textrm{ and } \dot{\beta}_{l}^{*} \neq 0 \}$ and $\dot{\mathcal{P}} = \{1 \leq  l \leq s_n: \dot{\bm{C}_{l}} = \bm{0} \textrm{ and } \dot{\beta}_{l}^{*} \neq 0 \}$. Here $\dot{\beta}_{l}^{*}$ and $\dot{\bm{C}_{l}}$ are the true values for $\beta_{l}$ and $\bm{C}_{l}$, respectively, and $\dot{\bm{B}}$ is the true value of $\bm{B}$.

We have the following theorems, where the assumptions needed are included in Section \ref{assumption}. 

\begin{thm}
\label{thm1}
Under Assumptions (A0) -- (A3) and (A5), let $\gamma_{1,n} = \alpha D_1   n^{-\kappa}$ and $\gamma_{2,n} = \alpha D_1   (pq)^{1/2}$ $n^{-\kappa}$  with $0 < \alpha < 1$, then we have 
$P(\mathcal{M}_1  \subset \widehat{\mathcal{M}}_{}) \to 1 $ as $n \to \infty$.
\end{thm}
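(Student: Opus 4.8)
The plan is to exploit the partition $\mathcal{M}_1 = \dot{\mathcal{P}} \cup \dot{\mathcal{C}}$ into precision variables ($\dot\beta_l \ne 0$, $\dot{\bm C}_l = \bm 0$) and confounders ($\dot\beta_l \ne 0$, $\dot{\bm C}_l \ne \bm 0$), and to prove separately that $P(\dot{\mathcal P} \subset \widehat{\mathcal M}_1^*) \to 1$ and $P(\dot{\mathcal C} \subset \widehat{\mathcal M}_2) \to 1$. Since $\widehat{\mathcal M} = \widehat{\mathcal M}_1^* \cup \widehat{\mathcal M}_2 \supset \dot{\mathcal P} \cup \dot{\mathcal C} = \mathcal M_1$ on the intersection of these two events, a union bound over the two asymptotically negligible failure probabilities gives the claim. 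The reason for routing precision variables through the outcome screening and confounders through the exposure screening is the cancellation phenomenon identified after the model display: substituting \eqref{treatmentmodel} into \eqref{outcomemodel} gives $Y_i = \sum_l X_{il}(\dot\beta_l + \langle \dot{\bm C}_l, \dot{\bm B}\rangle) + \langle \bm E_i, \dot{\bm B}\rangle + \epsilon_i$, so the marginal outcome signal can vanish only when $\dot{\bm C}_l \ne \bm 0$; precision variables have $\dot{\bm C}_l = \bm 0$ and are therefore immune, while every confounder is guaranteed a nonzero exposure signal.

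For the precision-variable step, I would first identify the population target of the marginal outcome estimator, $\E[\widehat\beta_l^M] = \E[X_l Y] = \sum_m \sigma_{x,lm}(\dot\beta_m + \langle \dot{\bm C}_m, \dot{\bm B}\rangle)$, using that $X$ is standardized and that $\epsilon$ and $\bm E$ are mean-independent of $X$. The minimum-signal condition among (A0)--(A3), (A5) supplies $\min_{l \in \dot{\mathcal P}} |\E[X_l Y]| \ge D_1 n^{-\kappa}$. The core estimate is then the uniform deviation bound $P\bigl(\max_{1\le l\le s}|\widehat\beta_l^M - \E[X_l Y]| \ge (1-\alpha)D_1 n^{-\kappa}\bigr) \to 0$, which I would obtain from Bernstein's inequality applied to the sub-exponential summands $X_{il}Y_i$ together with a union bound over the $s$ coordinates, controlled under the dimension condition $\log s = o(n^{1-2\kappa})$. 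A triangle inequality then forces $|\widehat\beta_l^M| > D_1 n^{-\kappa} - (1-\alpha)D_1 n^{-\kappa} = \gamma_{1,n}$ for every $l \in \dot{\mathcal P}$ on the good event, i.e.\ $\dot{\mathcal P} \subset \widehat{\mathcal M}_1^*$.

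The confounder step is the matrix-variate analogue. Here the population target is $\E[\widehat{\bm C}_l^M] = \E[X_l \bm Z] = \sum_m \sigma_{x,lm}\dot{\bm C}_m$, and the minimum-signal condition gives $\min_{l\in\dot{\mathcal C}}\|\E[X_l \bm Z]\|_{op} \ge D_1 (pq)^{1/2}n^{-\kappa}$. I would control the operator-norm deviation through the Frobenius norm, $\|\widehat{\bm C}_l^M - \E[X_l\bm Z]\|_{op} \le \|\widehat{\bm C}_l^M - \E[X_l\bm Z]\|_F$, and bound the latter by concentrating each of the $pq$ entries $n^{-1}\sum_i X_{il}Z_{i,jk}$ about its mean and union-bounding over the $s\,pq$ entry-coordinate pairs; this is precisely why the threshold carries the $(pq)^{1/2}$ factor. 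The same triangle-inequality reasoning yields $\dot{\mathcal C}\subset\widehat{\mathcal M}_2$, and combining with the previous step completes the proof.

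The main obstacle I anticipate is the uniform matrix concentration in operator norm: the entries $X_{il}Z_{i,jk}$ are products of (sub-Gaussian) variables whose cross-correlations across $(j,k)$ are induced by the structured exposure-error covariance $\bm\Sigma_e$ and by the $\dot{\bm C}_m$'s, so the Frobenius bound must accommodate this dependence while still delivering a rate compatible with $\gamma_{2,n}$. Making the $s\,pq$ union bound survive in the ultra-high-dimensional regime ($s \gg n$, with $pq$ possibly growing) is where the tail/moment assumptions and the sparsity--dimension conditions of (A0)--(A5) do the real work; this is the step I would expect to delegate to the auxiliary lemmas of Section~\ref{auxlemma} and to verify most carefully.
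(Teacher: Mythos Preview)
Your proposal is correct and takes essentially the same route as the paper: the same partition of $\mathcal M_1$ into confounders and precision variables, the same routing (confounders through $\widehat{\mathcal M}_2$ via entry-wise Bernstein plus the Frobenius-to-operator-norm bound, precision variables through $\widehat{\mathcal M}_1^*$), and the same triangle-inequality finish. The only cosmetic differences are that the paper decomposes $Y_i$ and $Z_{i,jk}$ into their structural pieces and applies Bernstein (Lemma~\ref{lem2}) to each product type $x_{il}x_{il'}$, $x_{il}E_{i,jk}$, $x_{il}\langle\bm E_i,\dot{\bm B}\rangle$, $x_{il}\epsilon_i$ separately rather than to $X_{il}Y_i$ as a single sub-exponential summand, and it union-bounds only over the bounded-cardinality set $\mathcal M_1$ rather than over all $s$ indices.
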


Since the screening procedure automatically includes all the significant covariates for small value of $\gamma_{1,n}$ and $\gamma_{2,n}$, it is necessary to consider the size of  $\widehat{\mathcal{M}}_{}$, which we quantify in Theorem \ref{thm2}.

\begin{thm}
\label{thm2}
Under Assumptions (A0) -- (A5), when $\gamma_{1,n} = \alpha D_1   n^{-\kappa}$ and $\gamma_{2,n} = \alpha D_1   (pq)^{1/2} n^{-\kappa}$ with $0 < \alpha < 1$, we have $P(|\widehat{\mathcal{M}}_{}| = O(n^{2 \kappa + \tau})) \to 1$ as $n \to \infty$. 
\end{thm}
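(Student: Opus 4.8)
\textbf{Proof proposal for Theorem \ref{thm2}.} The plan is to bound $|\widehat{\mathcal{M}}_1^*|$ and $|\widehat{\mathcal{M}}_2|$ separately and then take the union, since $\widehat{\mathcal{M}} = \widehat{\mathcal{M}}_1^* \cup \widehat{\mathcal{M}}_2$. For each set the engine is the elementary ``sum-of-squares versus threshold'' counting inequality: if a statistic exceeds a threshold $\gamma$ at $m$ indices, then $m\gamma^2$ is at most the total sum of squared statistics. The subtlety is that applying this directly to the \emph{empirical} statistics is fatal when $s\gg n$: writing $Y_i=\sum_l X_{il}\tilde\beta_l+\eta_i$ with $\tilde\beta_l=\dot\beta_l+\langle\dot{\bm C}_l,\dot{\bm B}\rangle$ and $\eta_i=\langle\bm E_i,\dot{\bm B}\rangle+\epsilon_i$, the noise part of $\widehat{\bm\beta}^M=n^{-1}\bm X^\T\bm Y$ satisfies $\E\|n^{-1}\bm X^\T\bm\eta\|_2^2=O(s/n)$, which dwarfs the signal. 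I would therefore avoid a raw second-moment bound and instead split each selected set into a population-signal part, to which the counting inequality is applied, and a fluctuation part, which I show is empty with probability tending to one.

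First I would introduce the population marginal quantities $\beta_l^{M}:=\E[X_{il}Y_i]=(\bm\Sigma\tilde{\bm\beta})_l$ and $\bm C_l^{M}:=\E[X_{il}\bm Z_i]=\sum_m\Sigma_{lm}\dot{\bm C}_m$, where $\bm\Sigma=\C(X_i)$; under exogeneity of $(\epsilon_i,\bm E_i)$ (Assumption (A0)) these are the means of $\widehat\beta_l^{M}$ and $\widehat{\bm C}_l^{M}$. The inclusion
\[
\widehat{\mathcal{M}}_1^* \subseteq \{l:|\beta_l^{M}|\ge \gamma_{1,n}/2\}\cup\{l:|\widehat\beta_l^{M}-\beta_l^{M}|\ge \gamma_{1,n}/2\}
\]
reduces the task to two bounds. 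For the signal term the counting inequality and an eigenvalue split give
\[
\big|\{l:|\beta_l^{M}|\ge \gamma_{1,n}/2\}\big|\le 4\gamma_{1,n}^{-2}\sum_{l=1}^{s}(\beta_l^{M})^2 = 4\gamma_{1,n}^{-2}\,\tilde{\bm\beta}^\T\bm\Sigma^2\tilde{\bm\beta}\le 4\gamma_{1,n}^{-2}\,\lambda_{\max}(\bm\Sigma)\,\tilde{\bm\beta}^\T\bm\Sigma\tilde{\bm\beta}.
\]
Using $\lambda_{\max}(\bm\Sigma)=O(n^{\tau})$ (Assumption (A4), which introduces $\tau$) and $\tilde{\bm\beta}^\T\bm\Sigma\tilde{\bm\beta}\le\V(Y_i)=O(1)$, together with $\gamma_{1,n}=\alpha D_1 n^{-\kappa}$, this is $O(n^{2\kappa+\tau})$. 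The analogous computation for the exposure side uses $\|\cdot\|_{op}\le\|\cdot\|_F$ and the same eigenvalue bound applied pixelwise: with $\bm c_{jk}$ stacking $\{(\dot{\bm C}_m)_{jk}\}_m$, one has $\sum_l\|\bm C_l^{M}\|_{op}^2\le\sum_{j,k}\bm c_{jk}^\T\bm\Sigma^2\bm c_{jk}\le\lambda_{\max}(\bm\Sigma)\sum_{j,k}\V\big((\bm Z_i-\bm E_i)_{jk}\big)=O(pq\,n^{\tau})$. Since $\gamma_{2,n}^2=(\alpha D_1)^2(pq)n^{-2\kappa}$, the $pq$ factors cancel and the signal count for $\widehat{\mathcal{M}}_2$ is again $O(n^{2\kappa+\tau})$.

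For the fluctuation terms I would invoke the uniform large-deviation bounds already required for the sure-screening statement in Theorem \ref{thm1} (and established as auxiliary lemmas under the moment/sub-Gaussian Assumptions (A1)--(A3) and the dimensionality Assumption (A5)), namely $\P(\max_{l}|\widehat\beta_l^{M}-\beta_l^{M}|\ge \gamma_{1,n}/2)\to0$ and $\P(\max_l\|\widehat{\bm C}_l^{M}-\bm C_l^{M}\|_{op}\ge \gamma_{2,n}/2)\to0$. On the intersection of these two high-probability events both fluctuation index sets are empty, so $|\widehat{\mathcal{M}}_1^*|$ and $|\widehat{\mathcal{M}}_2|$ are controlled by their signal parts, whence $|\widehat{\mathcal{M}}|\le|\widehat{\mathcal{M}}_1^*|+|\widehat{\mathcal{M}}_2|=O(n^{2\kappa+\tau})$ with probability tending to one.

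I expect the main obstacle to be the fluctuation step rather than the counting. The union bound over all $s$ indices forces the ultra-high-dimensional restriction $\log s\ll n^{1-2\kappa}$ that is hidden in (A5), and the exposure side additionally needs an operator-norm concentration inequality for the mean-zero matrix average $\widehat{\bm C}_l^{M}-\bm C_l^{M}$; its fluctuation scale is precisely what the $(pq)^{1/2}$ factor in $\gamma_{2,n}$ is calibrated to dominate, and checking that calibration (so that the threshold stays above the noise while remaining small enough for the signal count to be $O(n^{2\kappa+\tau})$) is the delicate part.
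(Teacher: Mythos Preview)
Your proposal is correct and matches the paper's proof essentially step for step: the paper also splits $\widehat{\mathcal{M}}$ into $\widehat{\mathcal{M}}_1^*$ and $\widehat{\mathcal{M}}_2$, shows via uniform deviation bounds (the same ones underlying Theorem~\ref{thm1}) that with high probability $\widehat{\mathcal{M}}\subset\mathcal{M}^0:=\{l:|\beta_l^M|\ge\gamma_{1,n}/2\}\cup\{l:\|\bm C_l^M\|_{op}\ge\gamma_{2,n}/2\}$, and then bounds $|\mathcal{M}^0|$ deterministically by the sum-of-squares counting inequality together with $\|\bm\Sigma\tilde{\bm\beta}\|_2^2\le\lambda_{\max}(\bm\Sigma)\,\tilde{\bm\beta}^\T\bm\Sigma\tilde{\bm\beta}$ and the pixelwise analogue (using $\|\cdot\|_{op}\le\|\cdot\|_F$) for the exposure side. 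Your phrasing of the fluctuation step as ``the fluctuation index set is empty'' is equivalent to the paper's containment $\widehat{\mathcal{M}}\subset\mathcal{M}^0$, and your identification of Assumption~(A5) as what makes the union bound over $s$ indices go through is exactly the mechanism the paper uses.
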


\begin{corr}
\label{coro2}
Under Assumptions (A0) -- (A5), when $\gamma_{1,n} = \alpha D_1   n^{-\kappa}$ and $\gamma_{2,n} = \alpha D_1   (pq)^{1/2} n^{-\kappa}$ with $0 < \alpha < 1$, we have $P(|\widehat{\mathcal{M}}-\widehat{\mathcal{M}}_{1}^*| = O(n^{2\kappa +\tau})) \to 1$ as $n \to \infty$.
\end{corr}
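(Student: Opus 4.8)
The plan is to reduce the corollary to a size bound on $\widehat{\mathcal{M}}_2$ alone, which is already available from the proof of Theorem \ref{thm2}. The first step is the purely set-theoretic observation that, because $\widehat{\mathcal{M}} = \widehat{\mathcal{M}}_1^* \cup \widehat{\mathcal{M}}_2$, one has $\widehat{\mathcal{M}} \setminus \widehat{\mathcal{M}}_1^* = \widehat{\mathcal{M}}_2 \setminus \widehat{\mathcal{M}}_1^* \subseteq \widehat{\mathcal{M}}_2$, and hence $|\widehat{\mathcal{M}} - \widehat{\mathcal{M}}_1^*| \le |\widehat{\mathcal{M}}_2|$ surely. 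Thus it suffices to show $P\big(|\widehat{\mathcal{M}}_2| = O(n^{2\kappa+\tau})\big) \to 1$, after which the corollary follows by monotonicity of probability.

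The second step is to control $|\widehat{\mathcal{M}}_2|$ by the standard Fan--Lv counting argument, now applied to the matrix-valued marginal estimates $\widehat{\bm C}_l^M = n^{-1}\sum_i X_{il} * \bm Z_i$. On the defining event for $\widehat{\mathcal{M}}_2$, every selected index contributes at least $\gamma_{2,n}^2$ to $\sum_l \|\widehat{\bm C}_l^M\|_{op}^2$, so, writing $\bm X$ for the $n\times s$ design matrix with rows $X_i^{\T}$ and $\bm z_{jk}$ for the $n$-vector collecting the $(j,k)$ pixel of $\bm Z_i$ across subjects,
\[
|\widehat{\mathcal{M}}_2|\, \gamma_{2,n}^2 \;\le\; \sum_{l=1}^{s} \|\widehat{\bm C}_l^M\|_{op}^2 \;\le\; \sum_{l=1}^{s} \|\widehat{\bm C}_l^M\|_F^2 \;=\; \sum_{j,k} \Big\| n^{-1} \bm X^{\T} \bm z_{jk} \Big\|_2^2 .
\]
For each pixel I would bound $\| n^{-1}\bm X^{\T}\bm z_{jk}\|_2^2 \le n^{-1}\lambda_{\max}(n^{-1}\bm X^{\T}\bm X)\, \|\bm z_{jk}\|_2^2$, and then invoke the eigenvalue condition (A4) together with the moment and concentration bounds already used in Theorem \ref{thm2} to obtain $\lambda_{\max}(n^{-1}\bm X^{\T}\bm X) = O(n^{\tau})$ and $\|\bm z_{jk}\|_2^2 = O(n)$ uniformly, each with probability tending to one. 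Summing over the $pq$ pixels yields $\sum_l \|\widehat{\bm C}_l^M\|_F^2 = O(pq\, n^{\tau})$; dividing by $\gamma_{2,n}^2 = \alpha^2 D_1^2\, pq\, n^{-2\kappa}$ cancels the $pq$ factor and delivers $|\widehat{\mathcal{M}}_2| = O(n^{2\kappa+\tau})$, exactly the bound needed.

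I expect the main obstacle to be the same one already handled in Theorem \ref{thm2}: making the uniform-in-pixel control rigorous, namely the passage from the operator norm to the Frobenius norm and the simultaneous concentration of $\lambda_{\max}(n^{-1}\bm X^{\T}\bm X)$ and of the per-pixel norms $\|\bm z_{jk}\|_2^2$ under the tail assumptions. Since these estimates are precisely the ones established while bounding $|\widehat{\mathcal{M}}| = |\widehat{\mathcal{M}}_1^* \cup \widehat{\mathcal{M}}_2|$, the cleanest route is to extract the $\widehat{\mathcal{M}}_2$ portion of that argument rather than reprove it, so that the corollary emerges as an immediate by-product. The only genuinely new content is the elementary set inclusion of the first step.
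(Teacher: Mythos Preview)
Your first step is correct, and in fact already finishes the job: since $\widehat{\mathcal{M}}\setminus\widehat{\mathcal{M}}_1^*\subseteq\widehat{\mathcal{M}}$, the bound $|\widehat{\mathcal{M}}-\widehat{\mathcal{M}}_1^*|\le|\widehat{\mathcal{M}}|$ together with Theorem~\ref{thm2} yields the corollary immediately. No further counting argument is needed.

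Your second step, as written, has a genuine gap. The claim that $\lambda_{\max}(n^{-1}\bm X^{\T}\bm X)=O(n^{\tau})$ with high probability is false in this regime, and it is \emph{not} something established in the proof of Theorem~\ref{thm2}. Assumption~(A4) bounds the \emph{population} eigenvalue $\lambda_{\max}(\Sigma_x)$, not the sample Gram matrix. Under~(A3), $s_n=\exp(D_4 n^{\xi})\gg n$; with the columns of $\bm X$ standardized, $\mathrm{tr}(n^{-1}\bm X^{\T}\bm X)=s_n$ while $\mathrm{rank}(n^{-1}\bm X^{\T}\bm X)\le n$, so necessarily $\lambda_{\max}(n^{-1}\bm X^{\T}\bm X)\ge s_n/n$, which diverges super-polynomially. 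The elementwise Bernstein bounds used in the paper give no operator-norm control of the full $s_n\times s_n$ sample covariance, so your inequality $\|n^{-1}\bm X^{\T}\bm z_{jk}\|_2^2\le n^{-1}\lambda_{\max}(n^{-1}\bm X^{\T}\bm X)\|\bm z_{jk}\|_2^2$ is useless here.

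The paper's proof of Theorem~\ref{thm2} avoids this entirely by never working with sample eigenvalues. It first shows, via the concentration bounds of Lemma~\ref{lem2}, that $\widehat{\mathcal{M}}_2\subset\mathcal{M}_2^0:=\{l:\|\dot{\bm C}_l^M\|_{op}\ge\gamma_{2,n}/2\}$ with probability tending to one, where $\dot{\bm C}_l^M=E[X_l\bm Z]$ is the \emph{population} marginal. It then bounds the deterministic cardinality $|\mathcal{M}_2^0|$ through $\sum_l\|\dot{\bm C}_l^M\|_F^2=\sum_{j,k}\|\Sigma_x\dot{\bm C}_{jk}\|^2$, where Assumption~(A4) on $\lambda_{\max}(\Sigma_x)$ applies legitimately. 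If you insist on a self-contained argument for $|\widehat{\mathcal{M}}_2|$, this population-sandwich route is the one to extract; your direct sample-based counting cannot be repaired.
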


Theorem \ref{thm1} shows that if $\gamma_{1,n}$ and $\gamma_{2,n}$ are chosen properly, our screening procedure will include all significant variables with a high probability.   
Theorem \ref{thm2} guarantees that the size of selected model from the screening procedure is only of a polynomial order of $n$ even though the original model size is of an exponential order of $n$. 
Therefore, the false selection rate of our screening procedure vanishes as   $n \to \infty$, while the size of $\widehat{\mathcal{M}}$ grows in a polynomial order of $n$, where the order depends on two constants $\kappa$ and $\tau$ defined in Section \ref{assumption}. Theorem \ref{thm1a} shows our blockwise screening procedure also enjoys the screening property. The proofs of these theorems are collected in Section \ref{proofthm}.

\begin{thm}
\label{thm1a}
Under Assumptions (A0) -- (A3), (A5), and further assume the $j$-th block size $|\mathcal{B}_j| = D_6 n^{\nu_1}$ for some constant $D_6>0$. Let $\gamma_{1,n} = \alpha D_1   n^{-\kappa}$, $\gamma_{2,n} = \alpha D_1   (pq)^{1/2}$ $n^{-\kappa}$  with $0 < \alpha < 1$, $\gamma_{3,n} = \alpha D_1 D_6^{-1}   n^{-\kappa-\nu_1}$,  and $\gamma_{4,n} = \alpha D_1  D_6^{-1}  (pq)^{1/2}$ $n^{-\kappa-\nu_1}$  with $0 < \alpha < 1$, then we have 
$P(\mathcal{M}_1  \subset \widehat{\mathcal{M}}^{block}_{}) \to 1 $ as $n \to \infty$.
\end{thm}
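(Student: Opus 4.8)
The plan is to reduce Theorem \ref{thm1a} to the already-established Theorem \ref{thm1} by exploiting the fact that the blockwise screening set is, by construction, a superset of the ordinary joint screening set. Concretely, I would start from the identity
\[
\widehat{\mathcal{M}}^{block} = \widehat{\mathcal{M}}_1^* \cup \widehat{\mathcal{M}}_2 \cup \widehat{\mathcal{M}}_1^{block,*} \cup \widehat{\mathcal{M}}_2^{block} \supseteq \widehat{\mathcal{M}}_1^* \cup \widehat{\mathcal{M}}_2 = \widehat{\mathcal{M}},
\]
so that the event $\{\mathcal{M}_1 \subset \widehat{\mathcal{M}}\}$ is contained in the event $\{\mathcal{M}_1 \subset \widehat{\mathcal{M}}^{block}\}$.

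Next I would observe that the thresholds $\gamma_{1,n} = \alpha D_1 n^{-\kappa}$ and $\gamma_{2,n} = \alpha D_1 (pq)^{1/2} n^{-\kappa}$ imposed in Theorem \ref{thm1a} coincide exactly with those in Theorem \ref{thm1}, and the invoked assumptions (A0)--(A3) and (A5) are identical. Hence the random sets $\widehat{\mathcal{M}}_1^*$ and $\widehat{\mathcal{M}}_2$ appearing in the two statements are literally the same, and Theorem \ref{thm1} directly supplies $P(\mathcal{M}_1 \subset \widehat{\mathcal{M}}) \to 1$. Combining this with the inclusion above and the monotonicity of probability under set containment yields
\[
P(\mathcal{M}_1 \subset \widehat{\mathcal{M}}^{block}) \geq P(\mathcal{M}_1 \subset \widehat{\mathcal{M}}) \to 1
\]
as $n \to \infty$, which is the claim. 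In this route the extra thresholds $\gamma_{3,n}$ and $\gamma_{4,n}$ play no role in the asymptotic containment guarantee; they serve only to enlarge the screening set and thereby improve finite-sample recovery of weakly-signalled blocks, so the sure-screening property is inherited entirely from the individual-variable statistics.

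For completeness I would also sketch the direct analysis of the block statistics, since it explains the specific scaling of $\gamma_{3,n}$ and $\gamma_{4,n}$. Fixing a signal index $l \in \mathcal{M}_1$ lying in a block $\mathcal{B}_j$ with $|\mathcal{B}_j| = D_6 n^{\nu_1}$, one would push the uniform deviation bounds for $\widehat{\beta}_i^{M}$ and $\|\widehat{\bm{C}}_i^M\|_{op}$ from the auxiliary lemmas through a union bound over the $b$ blocks; the block-averaged population quantities $\beta_l^{block,M}$ and $C_l^{block,M}$ are then of order $n^{-\kappa-\nu_1}$ whenever the block carries signal, which is precisely why the block thresholds carry the extra factor $D_6^{-1} n^{-\nu_1}$. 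The main obstacle in this direct approach is \emph{signal dilution}: a signal index sitting in a block dominated by noise variables may have its block average fall below $\gamma_{3,n}$ and $\gamma_{4,n}$, so the block statistics alone cannot certify containment of every element of $\mathcal{M}_1$. This is exactly what makes the union with $\widehat{\mathcal{M}}_1^*$ and $\widehat{\mathcal{M}}_2$ indispensable, and it confirms that the economical superset argument of the preceding paragraphs is the cleanest and most robust path to the stated result.
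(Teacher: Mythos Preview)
Your proposal is correct, and the core idea matches the paper's: both proofs ultimately rest on the observation that the block screening sets can only enlarge the ordinary joint screening set, so the sure-screening guarantee is inherited from Theorem~\ref{thm1}. Your route is the more economical one: you invoke $\widehat{\mathcal{M}}^{block}\supseteq\widehat{\mathcal{M}}$ directly and apply Theorem~\ref{thm1} as a black box. The paper instead rewrites the full union-bound decomposition from scratch, re-derives the deviation bounds for $|\widehat{\beta}_l^M-\dot{\beta}_l^M|$ and $\|\widehat{\bm C}_l^M-\dot{\bm C}_l^M\|_{op}$ verbatim from the proof of Theorem~\ref{thm1}, and then handles the two extra block terms by the pointwise inequalities
\[
P\bigl(\widehat{\beta}_l^{block,M}\le\gamma_{3,n}\bigr)\le P\bigl(|\widehat{\beta}_l^M|\le\gamma_{1,n}\bigr),\qquad
P\bigl(\widehat{C}_l^{block,M}\le\gamma_{4,n}\bigr)\le P\bigl(\|\widehat{\bm C}_l^M\|_{op}\le\gamma_{2,n}\bigr),
\]
which follow because $|\widehat{\beta}_l^M|>\gamma_{1,n}$ forces $\widehat{\beta}_l^{block,M}\ge|\widehat{\beta}_l^M|/|\mathcal{B}_j|>\gamma_{1,n}/(D_6 n^{\nu_1})=\gamma_{3,n}$ (and analogously for the operator-norm statistic). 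This is precisely your superset argument expressed at the level of individual indices, so the two proofs are equivalent in substance; yours just avoids the repetition. Your closing remarks on signal dilution and on why $\gamma_{3,n},\gamma_{4,n}$ carry the $D_6^{-1}n^{-\nu_1}$ factor are accurate and add useful intuition that the paper's proof does not spell out.
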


\subsection{Theory for two-step estimator}
In this section, we develop a unified theory for our two-step estimator. In particular, we derive a non-asymptotic bound for the final estimates. We first introduce some notation. 

Denote parameter $\bm{\theta} =\left\{ \bm\beta^{\T},  \mathrm{vec}^{\T} (\bm{B})\right\}^{\T} \in \mathbb{R}^{s+pq}$, where $\bm\beta \in \mathbb{R}^{s}$ and $\bm{B} \in \mathbb{R}^{p \times q}$. Using this notation, problem \eqref{min1} can be recasted as minimizing  $l(\bm{\theta}) + P(\bm{\theta})$, where 
$l(\bm{\theta}) = (2n)^{-1} \sum_{i=1}^n \left( Y_i - \right.$ $\left. \langle \bm{Z}_i, \bm{B} \rangle - \right. $ $\left. \sum_{l \in \widehat{\mathcal{M}}_{}} {X}_{il} \beta_l \right)^2$,  and $P(\bm{\theta}) = \lambda_{1}  \sum_{l \in \widehat{\mathcal{M}}_{}}  |\beta_l|  + \lambda_{2} || \bm{B}||_* $. In addition, we let $\dot{\bm\theta} = \{ \dot{\bm\beta}^{\T}, $ $\mathrm{vec}(\dot{\bm{B}})^{\T}\}^{\T}$ be the true value for $\bm\theta$, where $\dot{\bm\beta}$ and $\dot{\bm{B}}$ is the true values for $\bm\beta$ and $\bm{B}$, respectively. Let $\widehat{\bm\theta}_{\bm\lambda} = \{\widehat{\bm\beta}^{ \T},\mathrm{vec}(\widehat{\bm{B}}\}^{\T})^{\T}$ be the proposed estimator for 
$\bm\theta$, where $\widehat{\bm\beta}$ and $\widehat{\bm{B}}$ are the estimators obtained from \eqref{min1} for tuning parameters $\bm\lambda = (\lambda_1,\lambda_2)$.

We hereby give nonasymptotic error bound for the proposed two-step estimator $\widehat{\bm\theta}_{\bm\lambda}$:  
\begin{thm}(Nonasymptotic error bounds for two-step estimator)
\label{thm3}
Under Assumptions (A0) -- (A9), $2 \kappa + \tau < 1$ and $\kappa < 1/4$, and the condition that $ \mathcal{M}_1\subset \widehat{\mathcal{M}}$ with $|\widehat{\mathcal{M}}_{}| = O(n^{2 \kappa + \tau}) $, conditional on $\widehat{\mathcal{M}}$, there exists some positive constants $c_1, c_2, c_3, c_4$, $C_0$, $C_1$, $g_0$ and $g_1$, such that for $\lambda_1  \geq 2 \sigma_{0} [ 2 n^{-1} \{ \log (\log n) + C_0 (2 \kappa + \tau)\log n \}]^{1/2}$ and 
$\lambda_2 \geq 2 b s_2 \sigma_{0} [ 2 n^{-1} \{ 3 \log s_2 + \log (\log n) \}]^{1/2} + 4 n^{-1/2} $ $\sigma_{\epsilon} (p^{1/2} + q^{1/2})$,   with probability at least 
$1- c_1/ \log n - c_2 / (s_2 \log n) - c_3 \exp\{ -c_4 (p+q)\} - \exp(-n)  $, one has
$$\left\| \widehat{\bm\theta}_{\bm\lambda} - \dot{\bm\theta} \right\|_2^2 \leq C_0 \max\left\{C_1 \lambda_1^2 n^{2 \kappa + \tau},\lambda_2^2 r \right\} \iota^{-2}.$$
\end{thm}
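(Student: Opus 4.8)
The plan is to treat \eqref{min1} as a penalized $M$-estimation problem with the \emph{decomposable} regularizer $P(\bm\theta) = \lambda_1\|\bm\beta\|_1 + \lambda_2\|\bm B\|_*$ and to follow the standard analysis for estimators with decomposable penalties. Write $\widehat{\bm\Delta} = \widehat{\bm\theta}_{\bm\lambda} - \dot{\bm\theta}$ and split it as $\widehat{\bm\Delta} = (\widehat{\bm\Delta}_\beta^{\T}, \mathrm{vec}(\widehat{\bm\Delta}_B)^{\T})^{\T}$. Since $l$ is quadratic, optimality of $\widehat{\bm\theta}_{\bm\lambda}$ gives the basic inequality $l(\dot{\bm\theta}+\widehat{\bm\Delta}) - l(\dot{\bm\theta}) \le \langle -\nabla l(\dot{\bm\theta}), \widehat{\bm\Delta}\rangle + \lambda_1(\|\dot{\bm\beta}\|_1 - \|\widehat{\bm\beta}\|_1) + \lambda_2(\|\dot{\bm B}\|_* - \|\widehat{\bm B}\|_*)$. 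First I would note that, because $\mathcal M_1 \subset \widehat{\mathcal M}$ is assumed, the residual of the true parameter equals $\epsilon_i$ exactly, so $\nabla_{\bm\beta} l(\dot{\bm\theta}) = n^{-1}\sum_i X_i^{\widehat{\mathcal M}}\epsilon_i$ and $\nabla_{\bm B} l(\dot{\bm\theta}) = n^{-1}\sum_i \bm Z_i \epsilon_i$; no approximation bias survives the screening step.

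The first substantive step is to show that the prescribed $\lambda_1,\lambda_2$ dominate twice these gradients on a high-probability event, i.e. $\lambda_1 \ge 2\|\nabla_{\bm\beta}l(\dot{\bm\theta})\|_\infty$ and $\lambda_2 \ge 2\|\nabla_{\bm B}l(\dot{\bm\theta})\|_{op}$. For the $\ell_1$ part I would apply a sub-Gaussian/sub-exponential tail bound to each coordinate $n^{-1}\sum_i X_{il}\epsilon_i$ and union bound over the $|\widehat{\mathcal M}| = O(n^{2\kappa+\tau})$ retained coordinates, producing the $\{\log(\log n) + C_0(2\kappa+\tau)\log n\}^{1/2}$ factor and the $c_1/\log n$ failure term. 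For the nuclear part I would split $\bm Z_i = \sum_l X_{il}\dot{\bm C}_l + \bm E_i$ and bound the operator norms of $n^{-1}\sum_i(\sum_l X_{il}\dot{\bm C}_l)\epsilon_i$ and of $n^{-1}\sum_i \bm E_i\epsilon_i$ separately: the former, via a coordinatewise bound over the $s_2$ active $\dot{\bm C}_l$'s, yields the $b s_2 \sigma_{0}\{n^{-1}(3\log s_2 + \log\log n)\}^{1/2}$ term and the $c_2/(s_2\log n)$ failure term, while the latter, via an $\varepsilon$-net/matrix-concentration argument for a sum of mean-zero random matrices, yields the $n^{-1/2}\sigma_\epsilon(p^{1/2}+q^{1/2})$ term and the $c_3\exp\{-c_4(p+q)\}$ failure term.

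Conditional on this event, decomposability of $P$ forces $\widehat{\bm\Delta}$ into a restricted cone: the off-support part of $\widehat{\bm\Delta}_\beta$ is controlled by its projection onto $\mathcal M_1$, and the component of $\widehat{\bm\Delta}_B$ orthogonal to the rank-$r$ row/column spaces of $\dot{\bm B}$ is controlled by the complementary component, with the usual factor-$3$ cone inequalities. On this cone I would invoke restricted strong convexity of the quadratic loss --- equivalently a restricted eigenvalue lower bound $\iota$ on the joint design $\bm X_{new} = (\bm X^{\widehat{\mathcal M}},\bm Z_{new})$ supplied by Assumptions (A6)--(A9), whose validity contributes the $\exp(-n)$ failure term --- to lower bound $l(\dot{\bm\theta}+\widehat{\bm\Delta}) - l(\dot{\bm\theta}) - \langle \nabla l(\dot{\bm\theta}),\widehat{\bm\Delta}\rangle$ by $\iota\|\widehat{\bm\Delta}\|_2^2$. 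Feeding the cone bounds $\|\widehat{\bm\Delta}_\beta\|_1 \le c\sqrt{|\mathcal M_1|}\,\|\widehat{\bm\Delta}_\beta\|_2$ and $\|\widehat{\bm\Delta}_B\|_* \le c\sqrt{r}\,\|\widehat{\bm\Delta}_B\|_F$ into the basic inequality and solving the resulting quadratic in $\|\widehat{\bm\Delta}\|_2$ gives $\|\widehat{\bm\Delta}\|_2^2 \le C\,\iota^{-2}\max\{\lambda_1^2\,n^{2\kappa+\tau},\lambda_2^2 r\}$, which is the claimed bound after absorbing constants into $C_0,C_1$.

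I expect the main obstacle to be the operator-norm control of $\nabla_{\bm B}l(\dot{\bm\theta})$ together with the joint restricted eigenvalue. Because $\bm Z_i$ is generated from $X_i$ through the exposure model \eqref{treatmentmodel} and $\epsilon_i$ may be correlated with $\bm E_i$, the summands $\bm Z_i\epsilon_i$ are neither independent of the $X$-block nor free of a signal component, so the random-matrix deviation bound must be carried out after the signal/noise split and must track these dependencies; likewise, verifying that the cross-block curvature between the sparse $\bm\beta$ directions and the low-rank $\bm B$ directions does not destroy strong convexity is the delicate part of establishing RSC on the combined cone. The remaining $\ell_1$ concentration, the cone derivation, and the final quadratic-solving step are comparatively routine.
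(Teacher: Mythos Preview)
Your plan is essentially the same as the paper's: both arguments instantiate the Negahban--Ravikumar--Wainwright--Yu decomposable-regularizer framework, verify $\lambda_1\ge 2\|\nabla_{\bm\beta}l(\dot{\bm\theta})\|_\infty$ and $\lambda_2\ge 2\|\nabla_{\bm B}l(\dot{\bm\theta})\|_{op}$ on a high-probability event (with the same signal/noise split of $\bm Z_i$ for the operator norm), invoke restricted strong convexity with constant $\iota$, and solve the resulting quadratic using the compatibility constants $\sqrt{|\widehat{\mathcal M}|}$ and $\sqrt r$.

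Two small discrepancies are worth flagging. First, you attribute the $\exp(-n)$ failure probability to establishing RSC, but in the paper Assumption~(A6) is a \emph{deterministic} condition on the design; the $\exp(-n)$ term instead arises from a $\chi^2$ tail bound $P(\|\bm\epsilon\|_2^2\ge 4n\sigma_\epsilon^2)\le e^{-n}$ used when controlling $\|n^{-1}\sum_i\epsilon_i\bm E_i\|_{op}$ (the paper conditions on $\bm\epsilon$, applies a Gaussian comparison to replace $\bm E_i$ by an i.i.d.\ Gaussian matrix via Assumption~(A8), and then uses standard random-matrix bounds). Second, you write the cone compatibility as $\sqrt{|\mathcal M_1|}$, which is sharper than the paper's $\sqrt{|\widehat{\mathcal M}|}$; the stated theorem, however, absorbs this into $n^{2\kappa+\tau}$, so the discrepancy is cosmetic. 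Neither point changes the structure of the argument.
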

The bound in Theorem \ref{thm3} implies that the convergence rate of the proposed estimator $\widehat{\bm\theta}_{\lambda}$ is $O(\max\{$ $n^{2\kappa+\tau-1},n^{1-2\tau} \} )$.  
Here $\iota$ is a positive constant defined in Assumption (A6) in Section \ref{assumption}, and $r$ is the rank of $\dot{\bm{B}}$. The convergence rate is controlled by $\kappa$ and $\tau$, where $\kappa$ controls the exponential rates of model complexity that can diverge and $\tau$ controls the rate of largest eigenvalue of population covariance matrix that can grow. The proof of the theorem is deferred to section \ref{proofthm}.

\section{Assumptions for main theorems}\label{assumption}
In this section, we state the assumptions for the main theorems. We first make the following assumptions, which are needed for Theorems \ref{thm1} and \ref{thm2}.\\

(A0) The covariates $X_i$ are independent and identically distributed (i.i.d) with mean zero and covariance $\Sigma_x$. The random error $\epsilon_i$ are i.i.d with mean zero and variance $\sigma_{\epsilon}^2$. Define $\sigma_l^2 = (\Sigma_x)_{ll}$. The vectorized error matrices $\mathrm{vec}(E_i)$ are i.i.d with mean zero and covariance $\Sigma_e$. There exists a constant $\sigma_x > 0$ such that $ \left\|\Sigma_{x} \right\|_{\infty} \leq \sigma_x $.  Moreover, $x_i$ is independent of $E_i = (E_{i,jk})$ and $\epsilon_i$. \\

(A1) There exist some constants $D_1 > 0 $ and $b> 0 $, and $0 < \kappa < 1/2$ such that
\begin{eqnarray*}
\min_{l \in \mathcal{M}_1} \left| cov \left(\sum_{l^\prime \in \mathcal{M}_1} x_{i l^\prime}  \dot{\beta_{l^\prime}^{*}}, x_{il} \right) \right|
&\geq& D_1 n^{-\kappa},\\
\min_{l \in \mathcal{M}_2} \left\| cov \left( \sum_{l^\prime \in \mathcal{M}_2} x_{il^\prime} * \dot{\bm{C}_{l^\prime}}, x_{il}\right) \right\|_{op}
&\geq& D_1 (pq)^{1/2} n^{-\kappa},
\end{eqnarray*}
and $\max \left\{\max_{l \in \mathcal{M}_2} \left\|  \dot{\bm{C}_{l}} \right\|_{\infty}, \max_{l \in \mathcal{M}_2} \left\|  \dot{\bm{C}_{l}} \right\|_{op} ,\max_{l \in \mathcal{M}_2} \left| \langle  \dot{\bm{C}_{l}},  \dot{\bm{B}} \rangle \right|, \max_{l \in \mathcal{M}_1} | \dot{\beta}_{l}^{*} | \right\}< b$.\\

(A2) There exist positive constants $D_2$ and $D_3$ such that 
\[ 
\max \left\{ E[e^{D_2 x_{il}^2}],E[e^{D_2 E_{i,jk}^2}] , E[e^{D_2\langle \bm{E}_i, \dot{\bm{B}}\rangle^2}]\right\} \leq D_3
\]
for every $1 \leq l \leq s_n$, $1 \leq j \leq p$ and $1 \leq k \leq q$. 
Denote $\bm \epsilon = (\epsilon_1,\ldots,\epsilon_n)^T$ is a n-dimensional vector of zero-mean, 
there exists a constant $ \sigma_{0} > 0$ such that for any fixed $\|\bm v\|_2 = 1$, 
$P( \left| \langle \bm v, \bm \epsilon \rangle \right| \geq t ) \leq 2 \exp \left( - \frac{t^2}{2 \sigma_{0}^2} \right)$ for all $t>0$.\\

(A3) There exists a constant $D_4 > 0 $ such that $\log(s_n) = D_4 n^{\xi}$ for $\xi \in (0,1-2\kappa)$. \\

(A4) There exists constants $D_5 > 0 $  and $\tau > 0 $ such that $\lambda_{\max} (\Sigma_x) \leq D_5 n^\tau$.\\

(A5) $\log(pq) = o(n^{1- 2\kappa})$.\\

Before we state the assumptions for Theorem \ref{thm3}, we first introduce some notations. 

Denote $P(\bm{\theta}) = P_1(\bm{\beta})+P_2(\bm{B})$
 where $ P_1(\bm{\beta})= \lambda_{1}  \sum_{l \in \widehat{\mathcal{M}}_{}}  |\beta_l| $ and $P_2(\bm{B}) = \lambda_{2} || \bm{B}||_* $. In addition, let $r = \mathrm{rank}(\dot{\bm{B}})$, the true rank of matrix $\dot{\bm{B}} \in \mathbb{R}^{p \times q}$. Let us consider the class of matrices $ \Theta$ that have rank $r \leq \min\left\{ p,q \right\}$.  For any given matrix $\Theta$, we let $\mathrm{row}(\Theta) \subset \mathbb{R}^p$ and $\mathrm{col}(\Theta) \subset \mathbb{R}^q$ denote its row and column space, respectively. Let $U$ and $V$ be a given pair of $r$-dimensional subspace $U \subset \mathbb{R}^{p}$ and  $V \subset \mathbb{R}^{q}$, respectively. 

For a given $\bm\theta$ and pair $(U,V)$, we define the subspace 
$\Omega_1(\mathcal{M}_1)$,  $\overline{\Omega}_1(\mathcal{M}_1)$,  $\overline{\Omega}^{\perp}_1(\mathcal{M}_1)$, $\Omega_2(U,V)$, $\overline{\Omega}_2(U,V)$ and $\overline{\Omega}^{\perp}_2(U,V)$ as follows:
\begin{eqnarray*}
{\Omega}_1(\mathcal{M}_1)&=&\overline{\Omega}_1(\mathcal{M}_1) := \left\{ \bm\beta \in \mathbb{R}^{s} | \beta_j = 0 \textrm{ for all } j \not\in \mathcal{M}_1 \right\},\\
\overline{\Omega}_1^{\perp}(\mathcal{M}_1) &:=& \left\{ \bm\beta \in \mathbb{R}^{s} | \beta_j = 0 \textrm{ for all } j \in \mathcal{M}_1 \right\},\\
{\Omega}_2(U,V) &:=& \left\{ \Theta \in \mathbb{R}^{p \times q} |  \mathrm{row}(\Theta) \subset V, \textrm{ and } \mathrm{col}(\Theta) \subset U \right\},\\
\overline{\Omega}_2(U,V) &:=& \left\{ \Theta  \in \mathbb{R}^{p \times q} |   \mathrm{row}(\Theta) \subset V, \textrm{ or } \mathrm{col}(\Theta) \subset U \right\},\\
\overline{\Omega}_2^{\perp}(U,V) &:=& \left\{ \Theta  \in \mathbb{R}^{p \times q} |   \mathrm{row}(\Theta) \subset V^{\perp}, \textrm{ and } \mathrm{col}(\Theta) \subset U^{\perp} \right\}.
\end{eqnarray*}

Denote $\bm\Delta = \{ \bm\Delta_1^{\T},\mathrm{vec}(\bm\Delta_2)^{\T} \}^{\T} \in \mathbb{R}^{s+pq}$ with $\bm\Delta_1 \in \mathbb{R}^{s}$ and $\bm\Delta_2 \in \mathbb{R}^{p \times q}$. Then 
$\bm\Delta_{1,\overline{\Omega}_1} = \arg\min\limits_{\bm{v} \in \overline{\Omega}_1} \| \bm\Delta_{1} - \bm{v}\|_2$ and 
$\bm\Delta_{1,\overline{\Omega}_1^{\perp}} = \arg\min\limits_{\bm{v} \in \overline{\Omega}_1^{\perp}} \| \bm\Delta_{1} - \bm{v}\|_2$; 
$\bm\Delta_{2,\overline{\Omega}_2} = \arg\min\limits_{\bm{v} \in \overline{\Omega}_2} \| \bm\Delta_{2} - \bm{v}\|_F$ and 
$\bm\Delta_{2,\overline{\Omega}_2^{\perp}} = \arg\min\limits_{\bm{v} \in \overline{\Omega}_2^{\perp}} \| \bm\Delta_{2} - \bm{v}\|_F$. 
We write $\bm{X}_{comp} = (\bm{X},\bm{Z}_{new}) \in \mathbb{R}^{n \times (s+pq)}$ 
with $\bm{Z}_{new} = (\mathrm{vec}(\bm{Z}_1),$ $\ldots,\mathrm{vec}(\bm{Z}_n))^{\T} \in \mathbb{R}^{n \times pq}$ and let $X_{comp,i}$ represent the $i$-th column of $\bm{X}_{comp}^{\T}$ for $i=1,\ldots,n$. 
With loss of generality we assume that 
$\bm X$ has been column normalized, i.e. $\| \bm{x}_l\|_2 / \sqrt{n} = 1$, for all $l \in 1, \ldots,s $. 

We need the following assumptions:

(A6)
Define 
\begin{eqnarray*}
\iota := \min\limits_{
{\tiny
\begin{array}{c}
    \left|\bm\Delta_{1,\overline{\Omega}_1^{\perp}}\right|_1 \leq 3 \left|\bm\Delta_{1,\overline{\Omega}_1}\right|_1    \\
    \left\|\bm\Delta_{2,\overline{\Omega}_2^{\perp}}\right\|_{*} \leq 3 \left\|\bm\Delta_{2,\overline{\Omega}_2}\right\|_{*}
\end{array}}
}  \frac{1}{n} \sum_{i=1}^{n} \left\{ \frac{\left| \langle X_{comp,i}, \bm\Delta \rangle \right|^2}{\|\bm\Delta \|_2^2}\right\}, 
\end{eqnarray*}
and assume that $\iota$ is a positive constant.

(A7) Assume $\max\{p,q\} / \log(n) \to \infty$ and $ \max\{p,q\} = o(n^{1-2\tau})$ as $n \to \infty$ with $\tau < 1/2$.

(A8) The vectorized error matrices $\mathrm{vec}(\bm{E}_i)$ are i.i.d. $N(\bm{0},\bm\Sigma_e^2)$, where $\lambda_{\max}(\bm\Sigma_e) \leq C_U^2 < \infty$.

(A9) $\mathrm{rank}(\dot{\bm{B}}) = r < \min (p,q)$ holds.

\section{Auxiliary lemmas}
\label{auxlemma}
In this section, we include the auxiliary lemmas needed for the theorems and their proofs.
\begin{lem}
\label{lem1}
(Bernstein's inequality) Let $T_1,\ldots, T_n$ be independent random variable with zero mean such that $E(|T_i|^m) \leq m ! M^{m-2} v_i/2$, for every $m \geq 2$ (and all i) and some constant $M$ and $v_i$. Then
\[ 
P(|\sum_{i=1}^n T_i| > x) \leq 2 e^{-\frac{1}{2} \frac{x^2}{v+Mx}},
\]
for $v = \sum_{i=1}^n v_i$.
\end{lem}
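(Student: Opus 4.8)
The plan is to prove Lemma \ref{lem1} by the classical exponential-moment (Chernoff) method: I would control the moment generating function of each $T_i$ through the hypothesized moment bounds, multiply across the independent summands, and then optimize over the free exponential parameter. First I would fix $\lambda > 0$ and apply Markov's inequality to $e^{\lambda \sum_i T_i}$, which gives the one-sided bound $P(\sum_{i=1}^n T_i > x) \le e^{-\lambda x} \prod_{i=1}^n E[e^{\lambda T_i}]$, where the factorization of the expectation uses independence of the $T_i$.

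The key step is to bound each factor $E[e^{\lambda T_i}]$. Expanding the exponential in a power series and using $E(T_i) = 0$ to eliminate the linear term, I would write $E[e^{\lambda T_i}] = 1 + \sum_{m \ge 2} \lambda^m E(T_i^m)/m!$ and dominate $|E(T_i^m)| \le E(|T_i|^m) \le m!\, M^{m-2} v_i / 2$ by hypothesis. This collapses the tail of the series into a geometric one, $\sum_{m \ge 2} \lambda^m M^{m-2} v_i/2 = (\lambda^2 v_i/2) \sum_{m \ge 0} (\lambda M)^m = \lambda^2 v_i / \{2(1 - \lambda M)\}$, valid provided $\lambda M < 1$. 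Applying the elementary inequality $1 + u \le e^u$ then yields $E[e^{\lambda T_i}] \le \exp\{\lambda^2 v_i / (2(1-\lambda M))\}$, and taking the product over $i$ gives $P(\sum_{i=1}^n T_i > x) \le \exp\{-\lambda x + \lambda^2 v / (2(1-\lambda M))\}$ with $v = \sum_{i=1}^n v_i$.

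The final step is to choose $\lambda$ to minimize the exponent. I would take $\lambda = x/(v + Mx)$, which automatically satisfies $\lambda M < 1$ since $v > 0$, and a direct substitution using $1 - \lambda M = v/(v+Mx)$ collapses the exponent to exactly $-x^2/\{2(v + Mx)\}$. This establishes the one-sided tail bound $P(\sum_{i=1}^n T_i > x) \le \exp\{-x^2/(2(v+Mx))\}$. Applying the identical argument to $-T_1, \ldots, -T_n$, which satisfy the same moment hypotheses, bounds the lower tail $P(\sum_{i=1}^n T_i < -x)$ by the same quantity, and a union bound of the two events produces the claimed two-sided inequality with the factor of $2$.

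The argument is essentially routine, so the only point requiring genuine care is the moment-generating-function estimate: verifying that the geometric series converges and, crucially, that the minimizing choice $\lambda = x/(v+Mx)$ remains in the admissible regime $\lambda M < 1$. I would therefore present that estimate in full detail and treat the subsequent optimization over $\lambda$ as a short explicit computation.
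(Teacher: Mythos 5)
Your proof is correct and complete: Markov's inequality applied to $e^{\lambda \sum_i T_i}$, the series estimate $E[e^{\lambda T_i}] \le \exp\{\lambda^2 v_i/(2(1-\lambda M))\}$ valid for $\lambda M < 1$, and the choice $\lambda = x/(v+Mx)$ (which indeed satisfies $\lambda M < 1$ whenever $v>0$; the degenerate case $v=0$ forces $T_i=0$ a.s.\ by the $m=2$ moment bound) collapse the exponent to exactly $-x^2/\{2(v+Mx)\}$, and symmetry plus a union bound yield the two-sided statement. The paper itself gives no proof---it cites Lemma 2.2.11 of van der Vaart and Wellner (2000) and explicitly omits the argument---and your exponential-moment derivation is precisely the standard proof found in that reference, so your write-up simply supplies the details the paper outsources.
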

This is Lemma 2.2.11 from \cite{van2000weak} and we omit the proof here.

\begin{lem}
\label{lem2}
Under Assumptions (A0), (A1) and (A2), for arbitrary $t>0$ and for every $l,l^\prime,j,k$, we have that 
\begin{eqnarray*}
P\left(( |\sum_{i=1}^n \{ x_{il} x_{il^\prime} -E(x_{il} x_{il^\prime}) \}| \geq t\right) &\leq& 2 \exp\left\{-\frac{t^2}{2(2n D_2^{-2} e^{D_2 \sigma_x} D_3 + t/D_2)}\right\},\\
P\left(\left|\sum_{i=1}^n (x_{il} E_{i,jk})\right| \geq t \right)  &\leq& 2 \exp\left\{-\frac{t^2}{2(2n D_2^{-2} D_3 + t/D_2)}\right\}, \\
P\left( \left|\sum_{i=1}^n \left( x_{il} \langle \bm{E}_i,  \dot{\bm{B}}\rangle \right)\right| \geq t \right) &\leq& 2 \exp\left\{-\frac{t^2}{2(2n D_2^{-2} D_3 + t/D_2)}\right\},\\
P\left(|\sum_{i=1}^n (x_{il} \epsilon_{i})| \geq t\right)  &\leq& 2 \exp\left\{-\frac{t^2}{2(2n D_2^{-2} D_3 + t/D_2)}\right\}.
\end{eqnarray*}
\end{lem}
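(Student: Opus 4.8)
The plan is to obtain all four inequalities as immediate applications of Bernstein's inequality (Lemma \ref{lem1}) to the i.i.d., zero-mean summands $T_i$ that make up each sum; the substance of the proof is the verification of the Bernstein moment condition $E(|T_i|^m) \leq m!\,M^{m-2} v_i/2$ with scale $M = 1/D_2$ and the correct variance proxy $v_i$ read off from the stated denominators. The single tool driving every case is a polynomial moment bound extracted from the exponential-square moment bound of Assumption (A2): expanding
\[
E\bigl[e^{D_2 W^2}\bigr] = \sum_{k\geq 0} \frac{D_2^k\,E[W^{2k}]}{k!} \leq D_3
\]
and retaining only the $m$-th nonnegative term gives $E[W^{2m}] \leq m!\,D_3\,D_2^{-m}$ for $W \in \{x_{il},\,E_{i,jk},\,\langle \bm E_i,\dot{\bm B}\rangle\}$. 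Each $T_i$ is a function of $(x_i,\bm E_i,\epsilon_i)$, so the summands are independent across $i$, and Assumption (A0) (zero means, and $x_i$ independent of $\bm E_i,\epsilon_i$) makes each $T_i$ mean zero.

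For the second, third and fourth inequalities the summand is a product $T_i = UV$ of two mean-zero factors drawn from independent groups, so $E(T_i)=0$. I would bound its moments by the arithmetic--geometric mean inequality $|UV|^m \leq \tfrac12(U^{2m}+V^{2m})$ followed by the even-moment bound above:
\[
E(|T_i|^m) \leq \tfrac12\bigl(E[U^{2m}] + E[V^{2m}]\bigr) \leq m!\,D_3\,D_2^{-m} = \tfrac{m!}{2}\,(1/D_2)^{m-2}\,\bigl(2 D_2^{-2} D_3\bigr),
\]
which is exactly the Bernstein condition with $M = 1/D_2$ and $v_i = 2 D_2^{-2} D_3$; then $v = \sum_i v_i = 2n D_2^{-2} D_3$ and Lemma \ref{lem1} reproduces the three bounds verbatim. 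For the fourth inequality, where one factor is $\epsilon_i$, I would first turn the sub-Gaussian tail $P(|\epsilon_i|\geq t)\leq 2\exp(-t^2/2\sigma_0^2)$ (take $\bm v$ a coordinate vector in (A2)) into an exponential-square moment bound $E[e^{D_2\epsilon_i^2}]\leq D_3$, shrinking $D_2$ and enlarging $D_3$ if needed so that one pair of constants serves all variables, after which $\epsilon_i$ is handled exactly as $E_{i,jk}$.

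The first inequality is the one requiring genuine care, since the summand $T_i = x_{il}x_{il'} - E(x_{il}x_{il'})$ is a \emph{centered} product and $x_{il},x_{il'}$ are dependent; its mean vanishes by construction. The factor $e^{D_2\sigma_x}$ in the variance proxy is produced precisely by the centering. Writing $\sigma_{ll'}=E(x_{il}x_{il'})$ and using $|\sigma_{ll'}|\leq \|\Sigma_x\|_\infty \leq \sigma_x$ from (A0), I would control the two-sided exponential moments by
\[
E\bigl[e^{\pm D_2 T_i}\bigr] = e^{\mp D_2\sigma_{ll'}}\,E\bigl[e^{\pm D_2 x_{il}x_{il'}}\bigr] \leq e^{D_2\sigma_x}\bigl(E[e^{D_2 x_{il}^2}]\,E[e^{D_2 x_{il'}^2}]\bigr)^{1/2} \leq e^{D_2\sigma_x}D_3,
\]
where the middle step uses $\pm x_{il}x_{il'} \leq (x_{il}^2+x_{il'}^2)/2$ and the Cauchy--Schwarz inequality. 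Expanding $E[e^{D_2 T_i}]+E[e^{-D_2 T_i}] = 2\sum_{k\geq 0} D_2^{2k}E[T_i^{2k}]/(2k)!$ and keeping one term yields $E[T_i^{2k}] \leq (2k)!\,e^{D_2\sigma_x}D_3\,D_2^{-2k}$, which matches the Bernstein requirement at even orders with $M=1/D_2$ and $v_i = 2 D_2^{-2} e^{D_2\sigma_x}D_3$; the odd orders follow from log-convexity of both the moments and the Bernstein right-hand side. Lemma \ref{lem1} then gives the first inequality. I expect this centering step---producing the explicit $e^{D_2\sigma_x}$ constant while still matching the precise Bernstein coefficients $v_i$ and $M$---to be the main obstacle, whereas the other three cases are mechanical once the even-moment bound is in place.
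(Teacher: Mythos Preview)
Your approach is essentially the paper's: verify a Bernstein moment condition from the exponential-square bounds in (A2) and apply Lemma~\ref{lem1}. For the last three inequalities your AM--GM argument on $|UV|^m$ is a clean variant of the paper's bound on $E[e^{D_2|UV|}]$ and yields the identical constants.

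The first inequality, however, contains a genuine gap. Your interpolation ``the odd orders follow from log-convexity of both the moments and the Bernstein right-hand side'' does not work: log-convexity of the moment sequence gives $E|T_i|^{2k+1}\le \sqrt{E|T_i|^{2k}\,E|T_i|^{2k+2}}$, but to conclude you would need the Bernstein bound $g(m)=m!\,M^{m-2}v_i/2$ to be log-\emph{concave}, whereas $m!$ (and hence $g$) is log-convex. Concretely, your even-moment bound yields
\[
E|T_i|^{2k+1}\le \sqrt{(2k)!\,(2k+2)!}\;e^{D_2\sigma_x}D_3\,D_2^{-(2k+1)},
\]
and $\sqrt{(2k)!\,(2k+2)!}=(2k+1)!\sqrt{(2k+2)/(2k+1)}>(2k+1)!$, so the required inequality fails by a constant factor and the stated denominator $2nD_2^{-2}e^{D_2\sigma_x}D_3$ is not recovered.

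The fix is simpler than your cosh route and is exactly what the paper does: bound the \emph{absolute} exponential moment directly. From $|T_i|\le |x_{il}x_{il'}|+|\sigma_{ll'}|\le \tfrac12(x_{il}^2+x_{il'}^2)+\sigma_x$ and Cauchy--Schwarz,
\[
E\bigl[e^{D_2|T_i|}\bigr]\le e^{D_2\sigma_x}\bigl(E e^{D_2 x_{il}^2}\,E e^{D_2 x_{il'}^2}\bigr)^{1/2}\le e^{D_2\sigma_x}D_3,
\]
and then the single-term extraction $E|T_i|^m\le m!\,D_2^{-m}\,E[e^{D_2|T_i|}]$ gives the Bernstein condition for \emph{all} $m\ge 2$ at once with $M=1/D_2$ and $v_i=2D_2^{-2}e^{D_2\sigma_x}D_3$, matching the lemma exactly. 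You already have every ingredient for this; just replace the two-sided exponential by the absolute one.
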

\begin{proof}[Proof of Lemma \ref{lem2}:]
Note that the last part of Assumption (A2) actually implies that, 
there exist positive constants $D^\prime_2$ and $D^\prime_3$, such that $E[e^{D^\prime_2 \epsilon_i^2}] \leq D^\prime_3 $ by applying Theorem 3.1 from \cite{rivasplata2012subgaussian}. Therefore, it can be unified into the first part of Assumption (A2) which implies that  $$\max \left\{ E[e^{D_2 x_{il}^2}],E[e^{D_2 E_{i,jk}^2}] , E[e^{D_2 \langle \bm{E}_i, \dot{\bm{B}}\rangle^2}], E[e^{D_2 \epsilon_i^2}] \right\} \leq D_3 $$
for every $1 \leq l \leq s_n$, $1 \leq j \leq p$ and $1 \leq k \leq q$.
Therefore, by Assumptions (A0) and (A2) and Jensen's inequality, we have 
\begin{eqnarray*}
E \left[ e^{D_2 | x_{il} x_{il^\prime} - E \left(x_{il} x_{il^\prime}\right)|}\right] 
\leq E \left[ e^{D_2  | x_{il} x_{il^\prime} | + D_2|E \left(x_{il} x_{il^\prime}\right)|}\right]
= e^{D_2|E \left(x_{il} x_{il^\prime}\right)|} E \left[ e^{D_2  | x_{il} x_{il^\prime} | }\right]\\
\leq e^{D_2 \sigma_x} E\left[ e^{D_2 \frac{x_{il}^2 + x_{il^\prime}^2}{2}}\right]
\leq e^{D_2 \sigma_x}  \left[ E\left\{ e^{D_2 x_{il}^2 }\right\} E\left\{ e^{D_2 x_{il^\prime}^2}\right\} \right]^{1/2} 
\leq e^{D_2 \sigma_x} D_3. 
 \end{eqnarray*}
 Then for every $m \geq 2$, one has
 \[
 E\left[ | x_{il} x_{il^\prime} -E(x_{il} x_{il^\prime}) |^m \right] 
 \leq \frac{m !}{D_2^m} E\left[ e^{D_2  | x_{il} x_{il^\prime} -E(x_{il} x_{il^\prime}) |}\right] 
 \leq \frac{m !}{D_2^m} e^{D_2 \sigma_x} D_3. 
 \]
 It follows from Lemma \ref{lem1} that
 \[
P( |\sum_{i=1}^n \{ x_{il} x_{il^\prime} -E(x_{il} x_{il^\prime}) \}| \geq t) \leq 2 \exp\left\{-\frac{t^2}{2(2n D_2^{-2} e^{D_2 \sigma_x} D_3 + t/D_2)}\right\}.
\]
Similarly we obtain
 \begin{eqnarray*}
E \left[ e^{D_2 \left| x_{il} E_{i,jk}\right|}\right] 
\leq E \left[ e^{   D_2 \frac{x_{il}^2 + E_{i,jk}^2}{2} }\right]
\leq \left[ E\left\{ e^{D_2 x_{il}^2 }\right\} E\left\{ e^{D_2 E_{i,jk}^2}\right\} \right]^{1/2} 
\leq D_3. 
 \end{eqnarray*}
 Then for every $m \geq 2$, one has
 \[
 E\left[ \left| x_{il} E_{i,jk} \right|^m \right] 
 \leq \frac{m !}{D_2^m} E \left[ e^{ \left| x_{il} E_{i,jk}\right|}\right]  
 \leq \frac{m !}{D_2^m} D_3. 
 \]
 It follows from Lemma \ref{lem1} that
 \[
P\left(\left|\sum_{i=1}^n (x_{il} E_{i,jk})\right| \geq t \right)  \leq 2 \exp\left\{-\frac{t^2}{2(2n D_2^{-2} D_3 + t/D_2)}\right\}.
\]

Similarly we have
 \begin{eqnarray*}
E \left[ e^{D_2 \left| x_{il} \langle \bm{E}_i, \dot{\bm{B}}\rangle \right|}\right] 
&\leq& E \left[ e^{   D_2 \frac{x_{il}^2 + \langle \bm{E}_i, \dot{\bm{B}}\rangle^2}{2} }\right]
\leq \left[ E\left\{ e^{D_2 x_{il}^2 }\right\} E\left\{ e^{D_2 \langle \bm{E}_i, \dot{\bm{B}}\rangle^2}\right\} \right]^{1/2} 
\leq D_3, \\  
E \left[ e^{D_2 \left| x_{il} \epsilon_{i}\right|}\right] 
&\leq& E \left[ e^{   D_2 \frac{x_{il}^2 + \epsilon_{i}^2}{2} }\right]
\leq \left[ E\left\{ e^{D_2 x_{il}^2 }\right\} E\left\{ e^{D_2 \epsilon_{i}^2}\right\} \right]^{1/2} 
\leq D_3. 
 \end{eqnarray*}
 Then following the proof of showing second inequality, one has
 \begin{eqnarray*}
P\left( \left|\sum_{i=1}^n \left( x_{il} \langle  \bm{E}_i,  \dot{\bm{B}}\rangle \right)\right| \geq t \right) &\leq& 2 \exp\left\{-\frac{t^2}{2(2n D_2^{-2} D_3 + t/D_2)}\right\},\\
P\left(|\sum_{i=1}^n (x_{il} \epsilon_{i})| \geq t\right)  &\leq& 2 \exp\left\{-\frac{t^2}{2(2n D_2^{-2} D_3 + t/D_2)}\right\}.
 \end{eqnarray*}
\end{proof}
The following lemma is a standard result called Gaussian comparison inequality \citep{anderson1955integral}. 
\begin{lem}
\label{lem3}
Let $X$ and $Y$ be zero-mean vector Gaussian random vectors with covariance matrix $\Sigma_X$ and $\Sigma_Y$ respectively. 
If  $\Sigma_X - \Sigma_Y$ is positive semi-definite, then for any convex symmetric set $C$, $P(X \in C) \leq P(Y \in C)$. 
\end{lem}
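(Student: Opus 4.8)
The final statement is Lemma~\ref{lem3}, the Gaussian comparison inequality of Anderson: for zero-mean Gaussian vectors $X,Y$ with covariances $\Sigma_X,\Sigma_Y$ such that $\Sigma_X-\Sigma_Y$ is positive semi-definite, one has $P(X\in C)\le P(Y\in C)$ for every convex symmetric set $C$. The plan is to reduce the comparison to Anderson's theorem on the behaviour of integrals of symmetric unimodal (log-concave) densities over translates and dilates of convex symmetric sets, exploiting the fact that $X$ is distributionally equal to $Y$ plus an independent Gaussian perturbation.

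First I would use the semi-definiteness hypothesis to write $\Sigma_X = \Sigma_Y + \Sigma_W$ with $\Sigma_W := \Sigma_X - \Sigma_Y \succeq 0$, and introduce an independent Gaussian vector $W\sim N(\bm 0,\Sigma_W)$ with $W$ independent of $Y$. Then $Y+W$ is zero-mean Gaussian with covariance $\Sigma_Y+\Sigma_W=\Sigma_X$, so $X\stackrel{d}{=}Y+W$ and it suffices to show $P(Y+W\in C)\le P(Y\in C)$. Conditioning on $W=w$ and writing $f_Y$ for the (possibly degenerate) Gaussian density of $Y$, this reduces to proving
\begin{equation*}
\int_{\mathbb{R}^d} P(Y\in C-w)\,dF_W(w)\le P(Y\in C),
\end{equation*}
so the key inequality to establish is the translation bound $P(Y\in C-w)\le P(Y\in C)$ for every fixed $w$, after which integrating against the law of $W$ and using $P(Y\in C)=\int P(Y\in C)\,dF_W(w)$ closes the argument.

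The heart of the matter is therefore Anderson's lemma: if $g$ is a nonnegative, symmetric ($g(u)=g(-u)$), and quasi-concave (unimodal, i.e. all superlevel sets $\{g\ge t\}$ are convex and symmetric) integrable function and $C$ is convex and symmetric, then $k\mapsto\int_C g(u+kw)\,du$ is nonincreasing in $k\ge 0$; in particular $\int_C g(u+w)\,du\le\int_C g(u)\,du$. I would apply this with $g=f_Y$, the Gaussian density of $Y$, which is indeed symmetric and log-concave (hence quasi-concave). The proof of Anderson's lemma itself proceeds by the layer-cake representation $g(u)=\int_0^\infty \mathbf{1}\{g(u)>t\}\,dt$, reducing to indicator functions of convex symmetric bodies $A_t=\{g>t\}$, and then applying the Brunn--Minkowski inequality to control the volume $\mathrm{vol}\big(C\cap(A_t-kw)\big)$ as a function of $k$ via the convexity and symmetry of both $C$ and $A_t$. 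Since the statement cites \cite{anderson1955integral} as a known standard result, I would invoke Anderson's lemma directly rather than reprove the Brunn--Minkowski step.

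The main obstacle is the possible degeneracy of the Gaussian laws: $\Sigma_Y$ or $\Sigma_W$ may be singular, in which case the densities $f_Y$, $f_W$ do not exist on all of $\mathbb{R}^d$ and the layer-cake/Fubini manipulations must be carried out on the affine subspace supporting the measure. I would handle this by a standard regularisation argument: replace $\Sigma_Y$ and $\Sigma_W$ by $\Sigma_Y+\eta I$ and $\Sigma_W$ (keeping $\Sigma_W\succeq0$), which are nondegenerate for $\eta>0$, establish the inequality in the nondegenerate case where all densities are genuine symmetric log-concave functions, and then let $\eta\downarrow 0$, using weak convergence of the Gaussian laws together with the fact that $P(\cdot\in C)$ is continuous under this limit on the convex set $C$ (up to the boundary, whose Gaussian measure can be shown to vanish or handled by taking closed/open versions of $C$). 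This limiting step, and the careful bookkeeping that the independence decomposition $X\stackrel{d}{=}Y+W$ survives the regularisation, is where most of the genuine care is required; the convex-geometry core is entirely outsourced to Anderson's cited result.
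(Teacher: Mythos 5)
Your proof is correct and is exactly the standard argument behind the result as cited: the paper itself gives no proof of this lemma, stating it as the Gaussian comparison inequality and deferring entirely to \cite{anderson1955integral}. Your decomposition $X \stackrel{d}{=} Y + W$ with $W \sim N(\bm{0}, \Sigma_X - \Sigma_Y)$ independent of $Y$, followed by Anderson's translation inequality $P(Y \in C - w) \leq P(Y \in C)$ for symmetric convex $C$ and integration over the law of $W$ (with a regularisation step for possibly degenerate covariances, where in fact only the density of $Y$, not of $W$, is ever needed), is precisely the classical route that the citation points to, so there is nothing to reconcile.
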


\section{Proof of theorems}
\label{proofthm}
\begin{proof}[Proof of Theorem \ref{thm1}:]
We can write
\begin{eqnarray*}
&& P\left\{ \mathcal{M}_1 \subset \left( \widehat{\mathcal{M}}_{1}^{*} \cup \widehat{\mathcal{M}}_{2} \right) \right\} \\
&=& P\left\{ \cap_{l \in \mathcal{M}_1} \left( \widehat{\mathcal{M}}_{1}^{*} \cup \widehat{\mathcal{M}}_{2} \right) \right\}  \\
&=&  1 - P\left\{ \cup_{l \in \mathcal{M}_1} \left( \widehat{\mathcal{M}}_{1}^{*} \cup \widehat{\mathcal{M}}_{2} \right)^c \right\}  \\
&\geq& 1 - \sum_{l \in \mathcal{M}_1} P \left( \widehat{\mathcal{M}}_{1}^{*c} \cap \widehat{\mathcal{M}}_{2}^c \right)  \\
&=& 1 - \sum_{l \in \mathcal{M}_1} P\left(  |\widehat{{\beta}_{l}^{M}}|  \leq \gamma_{1,n}, \|\widehat{{\bm{C}}}_{l}^M\|_{op} \leq \gamma_{2,n}\right) \\
 &\geq& 1 - \sum_{l \in \mathcal{M}_1 \cap \mathcal{M}_2} P\left( \|\widehat{{\bm{C}}}_{l}^M\|_{op} \leq \gamma_{2,n} \right)
 - \sum_{l \in \mathcal{M}_1 \cap \mathcal{M}_2^c} P\left(|\widehat{{\beta}_{l}^{M}}|  \leq \gamma_{1,n}\right).
\end{eqnarray*}

Firstly, recall that $\dot{\bm{C}}_{l}^M = cov (\sum_{l^\prime \in \mathcal{M}_2} x_{i l^\prime} * \dot{\bm{C}}_{l^\prime}, x_{il})$ i.e.
$\dot{C}_{l,jk}^M = cov (\sum_{l^\prime \in \mathcal{M}_2} x_{i l^\prime} \dot{C}_{l^\prime,jk},$ $x_{il}) = n^{-1} \sum_{i=1}^n E(x_{il} Z_{i,jk})$. For every $1 \leq j \leq p$, $1 \leq k \leq q$ and $1 \leq l \leq s_n$, we have
\[
\widehat{C}_{l,jk}^M -\dot{C}_{l,jk}^M = n^{-1} \sum_{i=1}^n \left[ x_{il} Z_{i,jk} - E(x_{il} Z_{i,jk})\right].
\]
It follows from Assumptions (A0), (A1), (A2) and Lemma 2 that for any $t>0$, one has
\begin{eqnarray*}
&&P\left(\left|\widehat{C}_{l,jk}^M -\dot{C}_{l,jk}^M \right| \geq t\right) 
= P\left( \left|  \sum_{i=1}^n [ x_{il} Z_{i,jk} - E(x_{il} Z_{i,jk}) \right| \geq nt \right) \\
&=& P \left( \left|\sum_{l^\prime \in \mathcal{M}_2} \sum_{i=1}^n \left[x_{il} x_{il^\prime} - E(x_{il} x_{il^\prime}) \right] \dot{C}_{l^\prime,jk} + \sum_{i=1}^n x_{il} E_{i,jk}\right| \geq nt\right)\\
&\leq& \sum_{l^\prime  \in \mathcal{M}_2} P\left(\left| \sum_{i=1}^n \left[x_{il} x_{il^\prime} - E(x_{il} x_{il^\prime}) \right] \right| \geq \frac{nt}{b(s_2 + 1)}\right) 
+ P\left( \left| \sum_{i=1}^n x_{il} E_{i,jk}\right| \geq \frac{nt}{s_2 + 1} \right) \\
&\leq& 2 s_2 \exp \left[ - \frac{n t^2 b^{-2} (s_1 + 1)^{-2}}{2 \{ 2 D_2^{-2} e^{D_2 \sigma_x} D_3 + D_2^{-1} b^{-1} (s_1 + 1)^{-1} t \} }\right]\\
&& + 2 \exp \left[ - \frac{n t^2 (s_2 + 1)^{-2}}{2 \{ 2 D_2^{-2} D_3 + D_2^{-1} (s_2 + 1)^{-1} t\} }\right].
 \end{eqnarray*}
Therefore, for every $l \in \mathcal{M}_2$, we have
 \begin{eqnarray*}
&& P\left( \| \widehat{{\bm{C}}}_{l}^M \|_{op} \leq \gamma_{2,n} \right) 
\leq P\left( \| \widehat{{\bm{C}}}_{l}^M - \dot{\bm{C}}_{l}^M\|_{op} \geq D_1(pq)^{1/2}n^{-\kappa} - \gamma_{2,n} \right) \\
&\leq&  P\left( \| \widehat{{\bm{C}}}_{l}^M - \dot{\bm{C}}_{l}^M\|_{F} \geq (pq)^{1/2} (1-\alpha) D_1 n^{-\kappa} \right) \\
&=&  P\left( \sum_{j,k} \left| \widehat{C}_{l,jk}^M - \dot{C}_{l,jk}^M \right|^2 \geq (pq)^{1/2} (1-\alpha) D_1 n^{-\kappa} \right) \\ 
&\leq& \sum_{j,k} P\left(  \left| \widehat{C}_{l,jk}^M - \dot{C}_{l,jk}^M \right|^2 \geq   \left\{(1-\alpha) D_1 n^{-\kappa} \right\}^2 \right) \\ 
&\leq& \sum_{j,k} P\left(  \left| \widehat{C}_{l,jk}^M - \dot{C}_{l,jk}^M \right| \geq    (1-\alpha) D_1 n^{-\kappa}  \right)\\
&\leq& 2pq \left( s_2 \exp \left[ - \frac{n^{1-2 \kappa} \left\{ (1-\alpha)D_1 b^{-1} (s_2 + 1)^{-1} \right\}^2 }{2 \{2 D_2^{-2} e^{D_2 \sigma_x} D_3 + D_2^{-1} b^{-1} (s_2 + 1)^{-1}  (1-\alpha) D_1 n^{-\kappa}\}}\right] \right. \\
&&\left. +  \exp \left[ - \frac{n^{1-2\kappa} \left\{ (1-\alpha) D_1  (s_2 + 1)^{-1} \right\}^2}{2 \{2 D_2^{-2} D_3 + D_2^{-1} (s_2 + 1)^{-1} (1-\alpha) D_1 n^{-\kappa}\}}\right]\right)\\
&\leq& 2pq \left( s_2 \exp \left[ - \frac{n^{1-2 \kappa} \left\{ (1-\alpha)D_1 b^{-1} (s_2 + 1)^{-1} \right\}^2 }{2 \{2 D_2^{-2} e^{D_2 \sigma_x} D_3 + D_2^{-1} b^{-1} (s_2 + 1)^{-1}  (1-\alpha) D_1 \}}\right] \right. \\
&&\left. +  \exp \left[ - \frac{n^{1-2\kappa} \left\{ (1-\alpha) D_1  (s_2 + 1)^{-1} \right\}^2}{2 \{2 D_2^{-2} D_3 + D_2^{-1} (s_2 + 1)^{-1} (1-\alpha) D_1\}}\right]\right)\\
 \end{eqnarray*}
 Let 
\begin{eqnarray*}
d_0 &=& \min \left[  \frac{ \left\{ (1-\alpha)D_1 b^{-1} (s_2 + 1)^{-1} \right\}^2 }{2 \{2 D_2^{-2} e^{D_2 \sigma_x} D_3 + D_2^{-1} b^{-1} (s_2 + 1)^{-1}  (1-\alpha) D_1 \}}, \right.\\
&& \left. \frac{ \left\{ (1-\alpha) D_1  (s_2 + 1)^{-1} \right\}^2}{2 \{2 D_2^{-2} D_3 + D_2^{-1} (s_2 + 1)^{-1} (1-\alpha) D_1\}} \right],
\end{eqnarray*} 
We have for every $l \in \mathcal{M}_2$, 
\begin{eqnarray}
P\left( \| \widehat{{\bm{C}}}_{l}^M \|_{op} \leq \gamma_{2,n} \right) &\leq& 2pq(s_2 + 1)\exp(-d_0 n^{1-2\kappa}), \label{thm1eq1}
\end{eqnarray}

Let us consider $P\left( |\widehat{{\beta}_{l}^{M}}|  \leq \gamma_{1,n} \right)$. 
recall that, $\dot{\beta}_{l}^M = \dot{\beta}_{l}^{*M} + \langle\dot{\bm{C}}_{l}^{M}$, $\dot{\bm{B}} \rangle $, $\dot{\beta}_{l}^{*M} = cov (\sum_{l^\prime \in \mathcal{M}_1} $ $x_{i l^\prime}  \dot{\beta}_{l^\prime}, x_{il})$ and 
$\dot{\beta}_{l}^M =n^{-1} \sum_{i=1}^n E(x_{il} Y_{i})$. For every $1 \leq l \leq s_n$, we have
\[
\widehat{\beta}_{l}^M -\dot{\beta}_{l}^M = n^{-1} \sum_{i=1}^n \left\{ x_{il} Y_{i} - E(x_{il} Y_{i})\right\}.
\]
It follows from Assumptions (A0), (A1), (A2) and Lemma 2 that for any $t>0$, we have
\begin{eqnarray*}
&&P\left(\left|\widehat{\beta}_{l}^M -\dot{\beta}_{l}^M  \right| \geq t\right) 
= P\left[ \left|\sum_{i=1}^n \left\{ x_{il} Y_{i} - E(x_{il} Y_{i})\right\} \right| \geq nt \right] \\
&=& P \left[ \left|\sum_{l^\prime \in \mathcal{M}_1} \sum_{i=1}^n \left\{ x_{il} x_{il^\prime} - E\left( x_{il} x_{il^\prime} \right) \right\} \dot{\beta}_{l^\prime}^{*M} + 
\sum_{l^\prime \in \mathcal{M}_2} \sum_{i=1}^n \left\{ x_{il} x_{il^\prime} - E\left( x_{il} x_{il^\prime} \right) \right\}  \langle \dot{\bm{C}}_{l^\prime}^{M}, \dot{\bm{B}}\rangle + \right. \right.\\
&+&\left. \left. \sum_{i=1}^n \left\{ x_{il} \langle \bm{E}_{i}, \dot{\bm{B}} \rangle - E( x_{il})E\left(  \langle \bm{E}_{i}, \dot{\bm{B}} \rangle  \right) \right\}  + \sum_{i=1}^n x_{il} \epsilon_{i}\right| \geq n t\right]\\ 
&\leq& P \left[   \sum_{l^{\prime}  \in \mathcal{M}_1}  \left|\sum_{i=1}^n \left\{ x_{il} x_{il^\prime} - E\left( x_{il} x_{il^\prime} \right)  \right\} \right|  b +
   \sum_{l^{\prime}  \in \mathcal{M}_2}  \left|\sum_{i=1}^n \left\{ x_{il} x_{il^\prime} - E\left( x_{il} x_{il^\prime} \right)  \right\} \right|  b \right. \\
&& + \left.  \left| \sum_{i=1}^n x_{il} \langle \bm{E}_i, \dot{\bm{B}}\rangle \right| 
+  \left| \sum_{i=1}^n x_{il} \epsilon_{i} \right| \geq nt \right]\\
&\leq& \sum_{l^\prime  \in \mathcal{M}_1} P\left[\left| \sum_{i=1}^n \left\{  x_{il} x_{il^\prime} - E\left(  x_{il} x_{il^\prime} \right)  \right\} \right| \geq \frac{nt}{b(s_1 + s_2 + 2)}\right] \\
&& + \sum_{l^\prime  \in \mathcal{M}_2} P\left[\left| \sum_{i=1}^n \left\{  x_{il} x_{il^\prime} - E\left(  x_{il} x_{il^\prime} \right)  \right\} \right| \geq \frac{nt}{b(s_1 + s_2 + 2)}\right]  \\
&& + P\left( \left| \sum_{i=1}^n x_{il} \langle \bm{E}_i, \dot{\bm{B}}\rangle \right|  \geq \frac{nt}{s_1 + s_2 + 2} \right)  + P\left( \left| \sum_{i=1}^n x_{il} \epsilon_{i} \right| \geq \frac{nt}{s_1 + s_2 + 2} \right) \\
&\leq& 2 (s_1 + s_2) \exp \left[ - \frac{n t^2 (2b)^{-2} (s_1 + s_2 + 2)^{-2}}{2\left\{2 D_2^{-2} e^{D_2 \sigma_x} D_3 + D_2^{-1} (2b)^{-1} (s_1 + s_2 + 2)^{-1} t\right\}}\right]
\\
&& + 4 \exp \left[ - \frac{n t^2 (s_1 + s_2 + 2)^{-2}}{2 \left\{2 D_2^{-2} D_3 + D_2^{-1} (s_1 + s_2 + 2)^{-1} t\right\}}\right].
 \end{eqnarray*}

For $l \in \mathcal{M}_1 \cap \mathcal{M}^c_2$,  we have $\langle \dot{\bm{C}}^M_{l}, \dot{\bm{B}} \rangle = 0$, under Assumption (A1) and previous deduction, we have 
\begin{eqnarray*}
&& P\left( | \widehat{\beta}^M_{l} | \leq \gamma_{1,n} \right) 
=  P\left(  - | \widehat{\beta}^M_{l} | \geq - \gamma_{1,n} \right)
\leq P\left(  | \dot{\beta}_{l}^{*M} | - | \widehat{\beta}^M_{l} | \geq D_1   n^{-\kappa} - \gamma_{1,n} \right)  \\
&=&  P\left(  | \dot{\beta}_{l}^{*M}| - |\langle \dot{\bm{C}}^M_{l}, \dot{\bm{B}} \rangle  | - | \widehat{\beta}^M_{l} | \geq (1-\alpha)D_1   n^{-\kappa} \right)\\
&\leq&  P\left(  | \dot{\beta}_{l}^M  | - | \widehat{\beta}^M_{l} | \geq (1-\alpha)D_1   n^{-\kappa} \right)
\\
&\leq&  
P\left(  |  \dot{\beta}_{l}^M - \widehat{\beta}^M_{l}  | \geq (1-\alpha)D_1   n^{-\kappa} \right)\\
&\leq& 2 (s_1 + s_2) \exp \left[ - \frac{n^{1-2\kappa} (1-\alpha)^2 D_1^2  (2b)^{-2} (s_1 + s_2 + 2)^{-2}}{2\left\{2 D_2^{-2} e^{D_2 \sigma_x} D_3 + D_2^{-1} (2b)^{-1} (s_1 + s_2 + 2)^{-1} (1-\alpha)D_1   n^{-\kappa} \right\}}\right]
\\
&& + 4 \exp \left[ - \frac{n^{1-2\kappa} (1-\alpha)^2 D_1^2 (s_1 + s_2 + 2)^{-2}}{2 \left\{2 D_2^{-2} D_3 + D_2^{-1} (s_1 + s_2 + 2)^{-1} (1-\alpha) D_1 pq^{1/2} n^{-\kappa} \right\}}\right]\\
&\leq& 2 (s_1 + s_2) \exp \left[ - \frac{n^{1-2\kappa} (1-\alpha)^2 D_1^2  (2b)^{-2} (s_1 + s_2 + 2)^{-2}}{2\left\{2 D_2^{-2} e^{D_2 \sigma_x} D_3 + D_2^{-1} (2b)^{-1} (s_1 + s_2 + 2)^{-1} (1-\alpha)D_1   \right\}}\right]
\\
&& + 4 \exp \left[ - \frac{n^{1-2\kappa} (1-\alpha)^2 D_1^2  (s_1 + s_2 + 2)^{-2}}{2 \left\{2 D_2^{-2} D_3 + D_2^{-1} (s_1 + s_2 + 2)^{-1} (1-\alpha) D_1  \right\}}\right]\\
 \end{eqnarray*}
Let
\begin{eqnarray*}
d_1 &=& \min \left[ \frac{(1-\alpha)^2 D_1^2 (2b)^{-2} (s_1 + s_2 + 2)^{-2}}{2\left\{2 D_2^{-2} e^{D_2 \sigma_x} D_3 + D_2^{-1} (2b)^{-1} (s_1 + s_2 + 2)^{-1} (1-\alpha)D_1  \right\}}, \right.\\
&&\left. \frac{(1-\alpha)^2 D_1^2 (s_1 + s_2 + 2)^{-2}}{2 \left\{2 D_2^{-2} D_3 + D_2^{-1} (s_1 + s_2 + 2)^{-1} (1-\alpha) D_1  \right\}} \right],
\end{eqnarray*}
We have for each $l \in \mathcal{M}_1 \cap \mathcal{M}_2^c$, 
\begin{eqnarray}
P\left( |\widehat{{\beta}_{l}^{M}}|  \leq \gamma_{2,n} \right) \leq 2(s_1 +s_2 + 2) \exp \left( -d_1 n^{1-2\kappa}   \right) . \label{thm1eq4}
\end{eqnarray}
In sum, by Assumption (A5), and \eqref{thm1eq1} and \eqref{thm1eq4}, we have
\begin{eqnarray*}
&& P\left\{ \mathcal{M}_1 \subset \left( \widehat{\mathcal{M}}_{1}^{*} \cup \widehat{\mathcal{M}}_{2} \right) \right\} \\
&\geq& 1 - \sum_{l \in \mathcal{M}_1 \cap \mathcal{M}_2} P\left( \|\widehat{{\bm{C}}}_{l}^M\|_{op} \leq \gamma_{2,n} \right)
 - \sum_{l \in \mathcal{M}_1 \cap \mathcal{M}_2^c} P\left(|\widehat{{\beta}_{l}^{M}}|  \leq \gamma_{1,n} \right)\\
&\geq& 1 - 2pqs_2(s_2 + 1)\exp(-d_0 n^{1-2\kappa}) -  2s_1 (s_1 +s_2 + 2) \exp \left( -d_1 n^{1-2\kappa}   \right)\\
&\geq& 1 - d_0^\prime pq \exp \left( -d_1^\prime n^{1-2\kappa} \right) \to  1,  \quad \textrm{ as } n \to \infty,
\end{eqnarray*}
for some positive constants $d_0^\prime$ and $d_1^\prime$. 
Therefore, $P(\mathcal{M}_1  \subset \widehat{\mathcal{M}}_{}) \to  1$ as $n \to \infty$. 
\end{proof}

\begin{proof}[Proof of Theorem \ref{thm2}]
The proof consists of two steps. In step 1, we will show that
$P(\widehat{\mathcal{M}}_{} \subset \mathcal{M}^0) \to 1$, where $\mathcal{M}^0 = \mathcal{M}^0_1 \cup \mathcal{M}^0_2$,  $\mathcal{M}^0_1 = \left\{1 \leq l \leq s_n: |\dot{\beta}_{l}^M| \geq \gamma_{1,n}/2 \right\}$ and $\mathcal{M}^0_2 = \left\{1 \leq l \leq s_n: \right.$ $\left. \left\|\dot{\bm{C}}^M_{l}\right\|_{op} \geq \gamma_{2,n}/2 \right\}.$ Recall that 
$\widehat{\mathcal{M}}_{} = \widehat{\mathcal{M}}_{1}^{*} \cup \widehat{\mathcal{M}}_{2}
= \left\{1 \leq l \leq s_n: |\widehat{\beta}^M_{l}| \geq\right.$ $\left.\gamma_{1,n} \right\}$ $\cup \left\{1 \leq l \leq s_n: \right.$  $\left. \left\| \widehat{{\bm{C}}}_{l}^M \right\|_{op} \geq \gamma_{2,n} \right\}$. Let  $\gamma_{1,n} = \alpha D_1   n^{-\kappa}$ and $\gamma_{2,n} = \alpha D_1   (pq)^{1/2} n^{-\kappa}$  with $0 < \alpha < 1$, we have 
\begin{eqnarray*}
&& P( \widehat{\mathcal{M}}_{} \subset \mathcal{M}^0_1 \cup \mathcal{M}^0_2) \\
&\geq& P\left[   \cap_{1 \leq l \leq s_n} \left\{|\widehat{{\beta}_{l}^{M}} - \dot{\beta}_{l}^M|  \leq \frac{\gamma_{1,n}}{2}\right\} \cap_{1 \leq l \leq s_n}  \left\{  ||\widehat{{\bm{C}}}_{l}^M - \dot{\bm{C}}_{l}^M||_{op} \leq  \frac{\gamma_{2,n}}{2} \right\}\right]\\
&=& 1 - P \left[ \cup_{1 \leq l \leq s_n}  \{|\widehat{{\beta}_{l}^{M}} - \dot{\beta}_{l}^M|  \geq \frac{\gamma_{1,n}}{2}\} \cup_{1 \leq l \leq s_n}  \{||\widehat{{\bm{C}}}_{l}^M - \dot{\bm{C}}_{l}^M||_{op} \geq  \frac{\gamma_{2,n}}{2} \}   \right]\\
&\geq& 1 - \sum_{1 \leq l \leq s_n}  \left\{ P\left(  |\widehat{{\beta}_{l}^{M}} - \dot{\beta}_{l}^M|  \geq \frac{\gamma_{1,n}}{2} \right) + P\left(  ||\widehat{{\bm{C}}}_{l}^M - \dot{\bm{C}}_{l}^M||_{op} \geq  \frac{\gamma_{2,n}}{2}  \right)   \right\} \\
&\geq& 1 -   \sum_{1 \leq l \leq s_n}  P\left(  |\widehat{{\beta}_{l}^{M}} - \dot{\beta}_{l}^M|  \geq \frac{\gamma_{1,n}}{2} \right)  -   \sum_{1 \leq l \leq s_n}   P\left(  ||\widehat{{\bm{C}}}_{l}^M - \dot{\bm{C}}_{l}^M||_{F} \geq  \frac{\gamma_{2,n}}{2}  \right)   \\
&\geq& 1 - \sum_{1 \leq l \leq s_n}  P\left(  |\widehat{{\beta}_{l}^{M}} - \dot{\beta}_{l}^M|  \geq \alpha D_1 n^{-\kappa} /2 \right) \\
&& -   \sum_{1 \leq l \leq s_n} \sum_{j,k}  P\left(  | \widehat{C}_{l,jk}^M - \dot{C}_{l,jk}^M | \geq  \alpha D_1 n^{-\kappa}/2  \right)\\
&\geq& 1 - 2 s_n\left\{ (s_1 + s_2) \exp \left[ - \frac{ \alpha^2 D_1^2  (4b)^{-2} (s_1 + s_2 + 2)^{-2} n^{1-2\kappa} }{2\left\{2 D_2^{-2} e^{D_2 \sigma_x} D_3 + D_2^{-1} (4b)^{-1} (s_1 + s_2 + 2)^{-1} \alpha D_1 n^{-\kappa} \right\}}\right] \right.
\\
&& \left. +  2 \exp \left[ - \frac{\alpha^2 D_1^2  2^{-2} (s_1 + s_2 + 2)^{-2} n^{1-2\kappa}}{2 \left\{2 D_2^{-2} D_3 + D_2^{-1} (s_1 + s_2 + 2)^{-1} \alpha D_1 n^{-\kappa} \right\}}\right] \right\}\\
&& -2 s_n pq \left\{ s_2 \exp \left[ - \frac{\alpha^2 D_1^2 2^{-2} b^{-2} (s_1 + 1)^{-2} n^{1-2\kappa}}{2 \{ 2 D_2^{-2} e^{D_2 \sigma_x} D_3 + D_2^{-1} b^{-1} (s_1 + 1)^{-1} 2^{-1}\alpha D_1 n^{-\kappa}\}}\right] \right.\\
&&  \left. + \exp \left[ - \frac{ \alpha^2 D_1^2 2^{-2} (s_2 + 1)^{-2} n^{1-2\kappa}}{2 \{ 2 D_2^{-2} D_3 + D_2^{-1} (s_2 + 1)^{-1} 2^{-1}\alpha D_1 n^{-\kappa}\}}\right]\right\} \\
&\geq& 1 - 2 s_n\left\{ (s_1 + s_2) \exp \left[ - \frac{ \alpha^2 D_1^2  (4b)^{-2} (s_1 + s_2 + 2)^{-2} n^{1-2\kappa} }{2\left\{2 D_2^{-2} e^{D_2 \sigma_x} D_3 + D_2^{-1} (4b)^{-1} (s_1 + s_2 + 2)^{-1} \alpha D_1 \right\}}\right] \right.
\\
&& \left. +  2 \exp \left[ - \frac{\alpha^2 D_1^2  2^{-2} (s_1 + s_2 + 2)^{-2} n^{1-2\kappa}}{2 \left\{2 D_2^{-2} D_3 + D_2^{-1} (s_1 + s_2 + 2)^{-1} \alpha D_1 \right\}}\right] \right\}\\
&& -2 s_n pq \left\{ s_2 \exp \left[ - \frac{\alpha^2 D_1^2 2^{-2} b^{-2} (s_1 + 1)^{-2} n^{1-2\kappa}}{2 \{ 2 D_2^{-2} e^{D_2 \sigma_x} D_3 + D_2^{-1} b^{-1} (s_1 + 1)^{-1} 2^{-1}\alpha D_1 \}}\right] \right.\\
&&  \left. + \exp \left[ - \frac{ \alpha^2 D_1^2 2^{-2} (s_2 + 1)^{-2} n^{1-2\kappa}}{2 \{ 2 D_2^{-2} D_3 + D_2^{-1} (s_2 + 1)^{-1} 2^{-1}\alpha D_1 \}}\right]\right\} \\
&=& 1 - 2 \exp(D_4 n^{\xi})\left\{ (s_1 + s_2) \exp \left[ - \frac{ \alpha^2 D_1^2  (4b)^{-2} (s_1 + s_2 + 2)^{-2} n^{1-2\kappa} }{2\left\{2 D_2^{-2} e^{D_2 \sigma_x} D_3 + D_2^{-1} (4b)^{-1} (s_1 + s_2 + 2)^{-1} \alpha D_1 \right\}}\right] \right.
\\
&& \left. +  2 \exp \left[ - \frac{\alpha^2 D_1^2  2^{-2} (s_1 + s_2 + 2)^{-2} n^{1-2\kappa}}{2 \left\{2 D_2^{-2} D_3 + D_2^{-1} (s_1 + s_2 + 2)^{-1} \alpha D_1 \right\}}\right] \right\}\\
&& -2 pq \exp(D_4 n^{\xi}) \left\{ s_2 \exp \left[ - \frac{\alpha^2 D_1^2 2^{-2} b^{-2} (s_1 + 1)^{-2} n^{1-2\kappa}}{2 \{ 2 D_2^{-2} e^{D_2 \sigma_x} D_3 + D_2^{-1} b^{-1} (s_1 + 1)^{-1} 2^{-1}\alpha D_1 \}}\right] \right.\\
&&  \left. + \exp \left[ - \frac{ \alpha^2 D_1^2 2^{-2} (s_2 + 1)^{-2} n^{1-2\kappa}}{2 \{ 2 D_2^{-2} D_3 + D_2^{-1} (s_2 + 1)^{-1} 2^{-1}\alpha D_1 \}}\right]\right\} \\
\end{eqnarray*}
By Assumptions (A3) and (A5), we have 
\[
P( \widehat{\mathcal{M}}_{} \subset \mathcal{M}^0_1 \cup \mathcal{M}^0_2) \geq 1 - d_2 \exp( - d_3 n^{1-2 \kappa}), 
\]
for some constants $d_2$ and $d_3 > 0$. Therefore, we have $P( \widehat{\mathcal{M}}_{} \subset \mathcal{M}^0) \to 1$ as $n \to \infty$.\\

In step 2, we will show that $|\mathcal{M}^0| = O(n^{2 \kappa+\tau})$.
As $ \left| \mathcal{M}^0 \right| = \left| \mathcal{M}^0_1 \cup \mathcal{M}^0_2 \right| \leq\left| \mathcal{M}^0_1 \right| +  \left| \mathcal{M}^0_2 \right| $, we only need to show that both  $\left| \mathcal{M}^0_1 \right|$ and $\left| \mathcal{M}^0_2 \right|$ are $O(n^{2 \kappa+\tau})$. 

Define $\mathcal{M}^1_1 = \left\{ 1 \leq l \leq s_n:  \left| \dot{\beta}_{l}^M \right|^2 \geq \gamma_{1,n}^2/4 \right\}$ and $\mathcal{M}^0_1 \subset \mathcal{M}^1_1$. By the definition of $\mathcal{M}^1_1$, we have
\begin{eqnarray*}
\left| \mathcal{M}^1_1 \right| \gamma_{1,n}^2/4 &\leq& \sum_{l=1}^{s_n}  \left| \dot{\beta}_{l}^M \right |^2 
=  \sum_{l=1}^{s_n}  \left( E \left[x_{il} Y_{i}\right] \right)^2 = \left\| E\left[ \bm{x}_i * Y_i \right]\right\|^2.
\end{eqnarray*}
Define $\dot{\bm\beta}^{*} = (\dot{\beta}_{l}^{*},\ldots, \dot{\beta}_{s_n}^{*})^{\T}$ and $\dot{\bm{c}} = (\langle \dot{\bm{C}}_{l}, \dot{\bm{B}}\rangle,\ldots, \langle \dot{\bm{C}}_{s_n}, \dot{\bm{B}}\rangle )^{\T}$, we can write 
\begin{eqnarray*}
Y_i &=& \bm{x}_i^{\T} \left( \dot{\bm\beta}^{*} + \dot{\bm{c}}\right) + \langle \bm{E}_i,\dot{\bm{B}}\rangle + \epsilon_i. \\
\end{eqnarray*}
Multiplying $\bm{x}_i$ on both sides and taking expectations yield $E \left[\bm{x}_{i} * Y_{i}\right] = \Sigma_x \left( \dot{\bm\beta}^{*} + \dot{\bm{c}} \right)$. Therefore, we have 
\begin{eqnarray*}
\left| \mathcal{M}^2_1 \right| \gamma_{2,n}^2/4 &\leq& \left\| \Sigma_x \left( \dot{\bm\beta}^{*} + \dot{\bm{c}} \right) \right\|^2
\leq \lambda_{max}(\Sigma_x)  \left( \dot{\bm\beta}^{*} + \dot{\bm{c}} \right)^{\T}  \left( \dot{\bm\beta}^{*} + \dot{\bm{c}} \right) 
 \leq  4 b^2 \lambda_{max}(\Sigma_x).
\end{eqnarray*}
By Assumption (A4), we have $\left| \mathcal{M}^1_1 \right| \leq 4 b^2  \lambda_{max}(\Sigma_x) \gamma_{1,n}^{-2} = O(n^{2\kappa + \tau})$. 
This implies that $\left| \mathcal{M}^0_{1}\right| \leq \left| \mathcal{M}^1_{1}\right| \leq  O(n^{2\kappa + \tau})$. 

Define $\mathcal{M}^1_2 = \left\{ 1 \leq l \leq s_n:  \left\| \dot{\bm{C}}_{l}^M \right\|_{F}^2 \geq \gamma_{2,n}^2/4 \right\}$. As $\left\| \dot{\bm{C}}_{l}^M \right\|_{op} \leq \left\| \dot{\bm{C}}_{l}^M \right\|_{F}$, we have $\mathcal{M}^0_2 \subset \mathcal{M}^1_1$. By the definition of $\mathcal{M}^1_2$, we have
\begin{eqnarray*}
\left| \mathcal{M}^1_2 \right| \gamma_{2,n}^2/4 &\leq& \sum_{l=1}^{s_n}  \left\| \dot{\bm{C}}_{l}^M \right\|^2_{F} \\
&=& \sum_{j,k} \sum_{l=1}^{s_n} (\dot{C}_{l,jk}^M)^2 =  \sum_{j,k} \sum_{l=1}^{s_n}  \left( E \left[x_{il} Z_{i,jk}\right] \right)^2
= \sum_{j,k} \left\| E\left[ \bm{x}_i * Z_{i,jk} \right] \right\|^2.
\end{eqnarray*}
Define $\dot{\bm{C}}_{jk} = (\dot{C}_{1,jk},\ldots, \dot{C}_{s_n,jk})^{\T}$, we can write $Z_{i,jk} = \bm{x}_i^{\T} \dot{\bm{C}}_{jk} + E_{i,jk}$. Multiplying $\bm{x}_i$ on both sides and taking expectations yield $E \left[\bm{x}_{i} * Z_{i,jk}\right] = \Sigma_x \dot{\bm{C}}_{jk}$. 
\begin{eqnarray*}
\left| \mathcal{M}^1_2 \right| \gamma_{1,n}^2/4 &\leq& \sum_{j,k} \left\| \Sigma_x \dot{\bm{C}}_{jk} \right\|^2
\leq \lambda_{max}(\Sigma_x) \sum_{j,k} \dot{\bm{C}}_{jk}^{\T} \dot{\bm{C}}_{jk} 
 \leq  pq b^2  \lambda_{max}(\Sigma_x).
\end{eqnarray*}
By Assumption (A4), we have $\left| \mathcal{M}^1_2 \right| \leq 4  pq b^2 \lambda_{max}(\Sigma_x) \gamma_{2,n}^{-2} = O(n^{2\kappa + \tau})$.

Combining results from two steps above leads to $P\{ |\widehat{\mathcal{M}}_{}| = O(n^{2 \kappa + \tau}) \} \geq P(\widehat{\mathcal{M}}_{} \subset \mathcal{M}^0) \to 1$.
\end{proof}

\begin{proof}[Proof of Theorem \ref{thm1a}:]
We can write
\begin{eqnarray*}
&& P\left\{ \mathcal{M}_1 \subset \left( \widehat{\mathcal{M}}_{1}^{*} \cup \widehat{\mathcal{M}}_{2} \cup \widehat{\mathcal{M}}_{1}^{block,*} \cup \widehat{\mathcal{M}}_{2}^{block} \right) \right\} \\
&=& P\left\{ \cap_{l \in \mathcal{M}_1} \left( \widehat{\mathcal{M}}_{1}^{*} \cup \widehat{\mathcal{M}}_{2}  \cup \widehat{\mathcal{M}}_{1}^{block,*} \cup \widehat{\mathcal{M}}_{2}^{block}  \right)\right\}  \\
&=&  1 - P\left\{ \cup_{l \in \mathcal{M}_1} \left( \widehat{\mathcal{M}}_{1}^{*} \cup \widehat{\mathcal{M}}_{2} \cup \widehat{\mathcal{M}}_{1}^{block,*} \cup \widehat{\mathcal{M}}_{2}^{block}\right)^c \right\}  \\
&\geq& 1 - \sum_{l \in \mathcal{M}_1} P \left\{ \widehat{\mathcal{M}}_{1}^{*c} \cap \widehat{\mathcal{M}}_{2}^c \cap (\widehat{\mathcal{M}}_{1}^{block,*})^c \cap (\widehat{\mathcal{M}}_{2}^{block })^c\right\}  \\
    &=& 1 - \sum_{l \in \mathcal{M}_1} P\left(  |\widehat{{\beta}_{l}^{M}}|  \leq \gamma_{1,n}, \|\widehat{{\bm{C}}}_{l}^M\|_{op} \leq \gamma_{2,n}, \widehat{{\beta}_{l}^{block,M}} \leq \gamma_{3,n},\widehat{{C}_{l}^{block,M}} \leq  \gamma_{4,n}  \right) \\
 &\geq& 1 - \sum_{l \in \mathcal{M}_1 \cap \mathcal{M}_2} P\left( \|\widehat{{\bm{C}}}_{l}^M\|_{op} \leq \gamma_{2,n},\widehat{{C}_{l}^{block,M}} \leq  \gamma_{4,n}  \right)
 - \sum_{l \in \mathcal{M}_1 \cap \mathcal{M}_2^c} P\left(|\widehat{{\beta}_{l}^{M}}|  \leq \gamma_{1,n}, \widehat{{\beta}_{l}^{block,M}} \leq \gamma_{3,n}\right)\\
  &\geq& 1 - \sum_{l \in \mathcal{M}_1 \cap \mathcal{M}_2} P\left( \|\widehat{{\bm{C}}}_{l}^M\|_{op} \leq \gamma_{2,n}  \right) 
  - \sum_{l \in \mathcal{M}_1 \cap \mathcal{M}_2} P\left( \widehat{{C}_{l}^{block,M}} \leq  \gamma_{4,n}  \right) \\
  &&
 - \sum_{l \in \mathcal{M}_1 \cap \mathcal{M}_2^c} P\left(|\widehat{{\beta}_{l}^{M}}|  \leq \gamma_{1,n}\right) 
 - \sum_{l \in \mathcal{M}_1 \cap \mathcal{M}_2^c} P\left( \widehat{{\beta}_{l}^{block,M}} \leq \gamma_{3,n}\right)
\end{eqnarray*}

Firstly, recall that $\dot{\bm{C}}_{l}^M = cov (\sum_{l^\prime \in \mathcal{M}_2} x_{i l^\prime} * \dot{\bm{C}}_{l^\prime}, x_{il})$ i.e.
$\dot{C}_{l,jk}^M = cov (\sum_{l^\prime \in \mathcal{M}_2} x_{i l^\prime} \dot{C}_{l^\prime,jk},$ $x_{il}) = n^{-1} \sum_{i=1}^n E(x_{il} Z_{i,jk})$. For every $1 \leq j \leq p$, $1 \leq k \leq q$ and $1 \leq l \leq s_n$, we have
\[
\widehat{C}_{l,jk}^M -\dot{C}_{l,jk}^M = n^{-1} \sum_{i=1}^n \left[ x_{il} Z_{i,jk} - E(x_{il} Z_{i,jk})\right].
\]
It follows from Assumptions (A0), (A1), (A2) and Lemma 2 that for any $t>0$, one has
\begin{eqnarray*}
&&P\left(\left|\widehat{C}_{l,jk}^M -\dot{C}_{l,jk}^M \right| \geq t\right) 
= P\left( \left|  \sum_{i=1}^n [ x_{il} Z_{i,jk} - E(x_{il} Z_{i,jk}) \right| \geq nt \right) \\
&=& P \left( \left|\sum_{l^\prime \in \mathcal{M}_2} \sum_{i=1}^n \left[x_{il} x_{il^\prime} - E(x_{il} x_{il^\prime}) \right] \dot{C}_{l^\prime,jk} + \sum_{i=1}^n x_{il} E_{i,jk}\right| \geq nt\right)\\
&\leq& \sum_{l^\prime  \in \mathcal{M}_2} P\left(\left| \sum_{i=1}^n \left[x_{il} x_{il^\prime} - E(x_{il} x_{il^\prime}) \right] \right| \geq \frac{nt}{b(s_2 + 1)}\right) 
+ P\left( \left| \sum_{i=1}^n x_{il} E_{i,jk}\right| \geq \frac{nt}{s_2 + 1} \right) \\
&\leq& 2 s_2 \exp \left[ - \frac{n t^2 b^{-2} (s_1 + 1)^{-2}}{2 \{ 2 D_2^{-2} e^{D_2 \sigma_x} D_3 + D_2^{-1} b^{-1} (s_1 + 1)^{-1} t \} }\right]\\
&& + 2 \exp \left[ - \frac{n t^2 (s_2 + 1)^{-2}}{2 \{ 2 D_2^{-2} D_3 + D_2^{-1} (s_2 + 1)^{-1} t\} }\right].
 \end{eqnarray*}
Therefore, for every $l \in \mathcal{M}_2$, we have
 \begin{eqnarray*}
&& P\left( \| \widehat{{\bm{C}}}_{l}^M \|_{op} \leq \gamma_{2,n} \right) 
\leq P\left( \| \widehat{{\bm{C}}}_{l}^M - \dot{\bm{C}}_{l}^M\|_{op} \geq D_1(pq)^{1/2}n^{-\kappa} - \gamma_{2,n} \right) \\
&\leq&  P\left( \| \widehat{{\bm{C}}}_{l}^M - \dot{\bm{C}}_{l}^M\|_{F} \geq (pq)^{1/2} (1-\alpha) D_1 n^{-\kappa} \right) \\
&=&  P\left( \sum_{j,k} \left| \widehat{C}_{l,jk}^M - \dot{C}_{l,jk}^M \right|^2 \geq (pq)^{1/2} (1-\alpha) D_1 n^{-\kappa} \right) \\ 
&\leq& \sum_{j,k} P\left(  \left| \widehat{C}_{l,jk}^M - \dot{C}_{l,jk}^M \right|^2 \geq   \left\{(1-\alpha) D_1 n^{-\kappa} \right\}^2 \right) \\ 
&\leq& \sum_{j,k} P\left(  \left| \widehat{C}_{l,jk}^M - \dot{C}_{l,jk}^M \right| \geq    (1-\alpha) D_1 n^{-\kappa}  \right)\\
&\leq& 2pq \left( s_2 \exp \left[ - \frac{n^{1-2 \kappa} \left\{ (1-\alpha)D_1 b^{-1} (s_2 + 1)^{-1} \right\}^2 }{2 \{2 D_2^{-2} e^{D_2 \sigma_x} D_3 + D_2^{-1} b^{-1} (s_2 + 1)^{-1}  (1-\alpha) D_1 n^{-\kappa}\}}\right] \right. \\
&&\left. +  \exp \left[ - \frac{n^{1-2\kappa} \left\{ (1-\alpha) D_1  (s_2 + 1)^{-1} \right\}^2}{2 \{2 D_2^{-2} D_3 + D_2^{-1} (s_2 + 1)^{-1} (1-\alpha) D_1 n^{-\kappa}\}}\right]\right)\\
&\leq& 2pq \left( s_2 \exp \left[ - \frac{n^{1-2 \kappa} \left\{ (1-\alpha)D_1 b^{-1} (s_2 + 1)^{-1} \right\}^2 }{2 \{2 D_2^{-2} e^{D_2 \sigma_x} D_3 + D_2^{-1} b^{-1} (s_2 + 1)^{-1}  (1-\alpha) D_1 \}}\right] \right. \\
&&\left. +  \exp \left[ - \frac{n^{1-2\kappa} \left\{ (1-\alpha) D_1  (s_2 + 1)^{-1} \right\}^2}{2 \{2 D_2^{-2} D_3 + D_2^{-1} (s_2 + 1)^{-1} (1-\alpha) D_1\}}\right]\right)\\
 \end{eqnarray*}
 Let 
\begin{eqnarray*}
d_0 &=& \min \left[  \frac{ \left\{ (1-\alpha)D_1 b^{-1} (s_2 + 1)^{-1} \right\}^2 }{2 \{2 D_2^{-2} e^{D_2 \sigma_x} D_3 + D_2^{-1} b^{-1} (s_2 + 1)^{-1}  (1-\alpha) D_1 \}}, \right.\\
&& \left. \frac{ \left\{ (1-\alpha) D_1  (s_2 + 1)^{-1} \right\}^2}{2 \{2 D_2^{-2} D_3 + D_2^{-1} (s_2 + 1)^{-1} (1-\alpha) D_1\}} \right],
\end{eqnarray*} 
We have for every $l \in \mathcal{M}_2$, 
\begin{eqnarray}
P\left( \widehat{{C}_{l}^{block,M}} \leq  \gamma_{4,n}  \right) \leq P\left( \| \widehat{{\bm{C}}}_{l}^M \|_{op} \leq \gamma_{2,n} \right) &\leq& 2pq(s_2 + 1)\exp(-d_0 n^{1-2\kappa}), \label{thm1aeq1}
\end{eqnarray}

Let us consider $P\left( |\widehat{{\beta}_{l}^{M}}|  \leq \gamma_{1,n} \right)$. 
recall that, $\dot{\beta}_{l}^M = \dot{\beta}_{l}^{*M} + \langle\dot{\bm{C}}_{l}^{M}$, $\dot{\bm{B}} \rangle $, $\dot{\beta}_{l}^{*M} = cov (\sum_{l^\prime \in \mathcal{M}_1} $ $x_{i l^\prime}  \dot{\beta}_{l^\prime}, x_{il})$ and 
$\dot{\beta}_{l}^M =n^{-1} \sum_{i=1}^n E(x_{il} Y_{i})$. For every $1 \leq l \leq s_n$, we have
\[
\widehat{\beta}_{l}^M -\dot{\beta}_{l}^M = n^{-1} \sum_{i=1}^n \left\{ x_{il} Y_{i} - E(x_{il} Y_{i})\right\}.
\]
It follows from Assumptions (A0), (A1), (A2) and Lemma 2 that for any $t>0$, we have
\begin{eqnarray*}
&&P\left(\left|\widehat{\beta}_{l}^M -\dot{\beta}_{l}^M  \right| \geq t\right) 
= P\left[ \left|\sum_{i=1}^n \left\{ x_{il} Y_{i} - E(x_{il} Y_{i})\right\} \right| \geq nt \right] \\
&=& P \left[ \left|\sum_{l^\prime \in \mathcal{M}_1} \sum_{i=1}^n \left\{ x_{il} x_{il^\prime} - E\left( x_{il} x_{il^\prime} \right) \right\} \dot{\beta}_{l^\prime}^{*M} + 
\sum_{l^\prime \in \mathcal{M}_2} \sum_{i=1}^n \left\{ x_{il} x_{il^\prime} - E\left( x_{il} x_{il^\prime} \right) \right\}  \langle \dot{\bm{C}}_{l^\prime}^{M}, \dot{\bm{B}}\rangle + \right. \right.\\
&+&\left. \left. \sum_{i=1}^n \left\{ x_{il} \langle \bm{E}_{i}, \dot{\bm{B}} \rangle - E( x_{il})E\left(  \langle \bm{E}_{i}, \dot{\bm{B}} \rangle  \right) \right\}  + \sum_{i=1}^n x_{il} \epsilon_{i}\right| \geq n t\right]\\ 
&\leq& P \left[   \sum_{l^{\prime}  \in \mathcal{M}_1}  \left|\sum_{i=1}^n \left\{ x_{il} x_{il^\prime} - E\left( x_{il} x_{il^\prime} \right)  \right\} \right|  b +
   \sum_{l^{\prime}  \in \mathcal{M}_2}  \left|\sum_{i=1}^n \left\{ x_{il} x_{il^\prime} - E\left( x_{il} x_{il^\prime} \right)  \right\} \right|  b \right. \\
&& + \left.  \left| \sum_{i=1}^n x_{il} \langle \bm{E}_i, \dot{\bm{B}}\rangle \right| 
+  \left| \sum_{i=1}^n x_{il} \epsilon_{i} \right| \geq nt \right]\\
&\leq& \sum_{l^\prime  \in \mathcal{M}_1} P\left[\left| \sum_{i=1}^n \left\{  x_{il} x_{il^\prime} - E\left(  x_{il} x_{il^\prime} \right)  \right\} \right| \geq \frac{nt}{b(s_1 + s_2 + 2)}\right] \\
&& + \sum_{l^\prime  \in \mathcal{M}_2} P\left[\left| \sum_{i=1}^n \left\{  x_{il} x_{il^\prime} - E\left(  x_{il} x_{il^\prime} \right)  \right\} \right| \geq \frac{nt}{b(s_1 + s_2 + 2)}\right]  \\
&& + P\left( \left| \sum_{i=1}^n x_{il} \langle \bm{E}_i, \dot{\bm{B}}\rangle \right|  \geq \frac{nt}{s_1 + s_2 + 2} \right)  + P\left( \left| \sum_{i=1}^n x_{il} \epsilon_{i} \right| \geq \frac{nt}{s_1 + s_2 + 2} \right) \\
&\leq& 2 (s_1 + s_2) \exp \left[ - \frac{n t^2 (2b)^{-2} (s_1 + s_2 + 2)^{-2}}{2\left\{2 D_2^{-2} e^{D_2 \sigma_x} D_3 + D_2^{-1} (2b)^{-1} (s_1 + s_2 + 2)^{-1} t\right\}}\right]
\\
&& + 4 \exp \left[ - \frac{n t^2 (s_1 + s_2 + 2)^{-2}}{2 \left\{2 D_2^{-2} D_3 + D_2^{-1} (s_1 + s_2 + 2)^{-1} t\right\}}\right].
 \end{eqnarray*}

For $l \in \mathcal{M}_1 \cap \mathcal{M}^c_2$,  we have $\langle \dot{\bm{C}}^M_{l}, \dot{\bm{B}} \rangle = 0$, under Assumption (A1) and previous deduction, we have 
\begin{eqnarray*}
&& P\left( | \widehat{\beta}^M_{l} | \leq \gamma_{1,n} \right) 
=  P\left(  - | \widehat{\beta}^M_{l} | \geq - \gamma_{1,n} \right)
\leq P\left(  | \dot{\beta}_{l}^{*M} | - | \widehat{\beta}^M_{l} | \geq D_1   n^{-\kappa} - \gamma_{1,n} \right)  \\
&=&  P\left(  | \dot{\beta}_{l}^{*M}| - |\langle \dot{\bm{C}}^M_{l}, \dot{\bm{B}} \rangle  | - | \widehat{\beta}^M_{l} | \geq (1-\alpha)D_1   n^{-\kappa} \right)\\ 
&\leq&  P\left(  | \dot{\beta}_{l}^M  | - | \widehat{\beta}^M_{l} | \geq (1-\alpha)D_1   n^{-\kappa} \right)
\\
&\leq&  
P\left(  |  \dot{\beta}_{l}^M - \widehat{\beta}^M_{l}  | \geq (1-\alpha)D_1   n^{-\kappa} \right)\\
&\leq& 2 (s_1 + s_2) \exp \left[ - \frac{n^{1-2\kappa} (1-\alpha)^2 D_1^2  (2b)^{-2} (s_1 + s_2 + 2)^{-2}}{2\left\{2 D_2^{-2} e^{D_2 \sigma_x} D_3 + D_2^{-1} (2b)^{-1} (s_1 + s_2 + 2)^{-1} (1-\alpha)D_1   n^{-\kappa} \right\}}\right]
\\
&& + 4 \exp \left[ - \frac{n^{1-2\kappa} (1-\alpha)^2 D_1^2 (s_1 + s_2 + 2)^{-2}}{2 \left\{2 D_2^{-2} D_3 + D_2^{-1} (s_1 + s_2 + 2)^{-1} (1-\alpha) D_1 pq^{1/2} n^{-\kappa} \right\}}\right]\\
&\leq& 2 (s_1 + s_2) \exp \left[ - \frac{n^{1-2\kappa} (1-\alpha)^2 D_1^2  (2b)^{-2} (s_1 + s_2 + 2)^{-2}}{2\left\{2 D_2^{-2} e^{D_2 \sigma_x} D_3 + D_2^{-1} (2b)^{-1} (s_1 + s_2 + 2)^{-1} (1-\alpha)D_1   \right\}}\right]
\\
&& + 4 \exp \left[ - \frac{n^{1-2\kappa} (1-\alpha)^2 D_1^2  (s_1 + s_2 + 2)^{-2}}{2 \left\{2 D_2^{-2} D_3 + D_2^{-1} (s_1 + s_2 + 2)^{-1} (1-\alpha) D_1  \right\}}\right]\\
 \end{eqnarray*}
Let
\begin{eqnarray*}
d_1 &=& \min \left[ \frac{(1-\alpha)^2 D_1^2 (2b)^{-2} (s_1 + s_2 + 2)^{-2}}{2\left\{2 D_2^{-2} e^{D_2 \sigma_x} D_3 + D_2^{-1} (2b)^{-1} (s_1 + s_2 + 2)^{-1} (1-\alpha)D_1  \right\}}, \right.\\
&&\left. \frac{(1-\alpha)^2 D_1^2 (s_1 + s_2 + 2)^{-2}}{2 \left\{2 D_2^{-2} D_3 + D_2^{-1} (s_1 + s_2 + 2)^{-1} (1-\alpha) D_1  \right\}} \right],
\end{eqnarray*}
We have for each $l \in \mathcal{M}_1 \cap \mathcal{M}_2^c$, 
\begin{eqnarray}
P\left( \widehat{{\beta}_{l}^{block,M}} \leq \gamma_{3,n}\right) \leq P\left( |\widehat{{\beta}_{l}^{M}}|  \leq \gamma_{1,n} \right) \leq 2(s_1 +s_2 + 2) \exp \left( -d_1 n^{1-2\kappa}   \right) . \label{thm1aeq4}
\end{eqnarray}
In sum, by Assumption (A5), and \eqref{thm1aeq1} and \eqref{thm1aeq4}, we have
\begin{eqnarray*}
&& P\left\{ \mathcal{M}_1 \subset \left( \widehat{\mathcal{M}}_{1}^{*} \cup \widehat{\mathcal{M}}_{2} \cup \widehat{\mathcal{M}}_{1}^{block,*} \cup \widehat{\mathcal{M}}_{2}^{block} \right) \right\} \\
&\geq& 1 - 2\sum_{l \in \mathcal{M}_1 \cap \mathcal{M}_2} P\left( \|\widehat{{\bm{C}}}_{l}^M\|_{op} \leq \gamma_{2,n} \right)
 - 2\sum_{l \in \mathcal{M}_1 \cap \mathcal{M}_2^c} P\left(|\widehat{{\beta}_{l}^{M}}|  \leq \gamma_{1,n} \right)\\
&\geq& 1 - 4pqs_2(s_2 + 1)\exp(-d_0 n^{1-2\kappa}) -  4s_1 (s_1 +s_2 + 2) \exp \left( -d_1 n^{1-2\kappa}   \right)\\
&\geq& 1 - d_0^\prime pq \exp \left( -d_1^\prime n^{1-2\kappa} \right) \to  1,  \quad \textrm{ as } n \to \infty,
\end{eqnarray*}
for some positive constants $d_0^\prime$ and $d_1^\prime$. 
Therefore, $P(\mathcal{M}_1  \subset \widehat{\mathcal{M}}_{}) \to  1$ as $n \to \infty$. 
\end{proof}

\begin{proof}[Proof of Theorem \ref{thm3}:]

Denote $\bm\theta^{\widehat{\mathcal{M}}} = (\bm\beta^{\widehat{\mathcal{M}}, \T},\mathrm{vec}(\bm{B})^{\T})^{\T}$, where $\bm\beta^{\widehat{\mathcal{M}}} = \{ \beta_l \}_{l \in \widehat{\mathcal{M}}} \in \mathbb{R}^{|\widehat{\mathcal{M}}|}$ and $\bm{B} \in \mathbb{R}^{p \times q}$. 
In addition, for a given pair $\bm\lambda = (\lambda_1,\lambda_2)$, we let $\dot{\bm\theta}^{\widehat{\mathcal{M}}} = (\dot{\bm\beta}^{\widehat{\mathcal{M}}, \T},\mathrm{vec}(\dot{\bm{B}})^{\T})^{\T}$ be the true value for $\bm\theta^{\widehat{\mathcal{M}}}$ with $\dot{\bm\beta}^{\widehat{\mathcal{M}}}$ and $\dot{\bm{B}}$ being true values for $\bm\beta^{\widehat{\mathcal{M}}}$ and $\bm{B}$ respectively, and $\widehat{\bm\theta}_{\bm\lambda}^{\widehat{\mathcal{M}}} = (\widehat{\bm\beta}^{\widehat{\mathcal{M}}, \T},\mathrm{vec}(\widehat{\bm{B}})^{\T})^{\T}$ be the proposed estimator for 
$\bm\theta^{\widehat{\mathcal{M}}}$ with $\widehat{\bm\beta}^{\widehat{\mathcal{M}}}$ and $\widehat{\bm{B}}$ being the estimators for $\bm\beta^{\widehat{\mathcal{M}}}$ and $\bm{B}$ respectively. 
Furthermore, we let $r = \mathrm{rank}(\dot{\bm{B}})$, the true rank of matrix $\dot{\bm{B}} \in \mathbb{R}^{p \times q}$. Let us consider the class of matrices $ \Theta$ that have rank $r \leq \min\left\{ p,q \right\}$.  For any given matrix $\Theta$, we let $\mathrm{row}(\Theta) \subset \mathbb{R}^p$ and $\mathrm{col}(\Theta) \subset \mathbb{R}^q$ denote its row and column space, respectively. Let $U$ and $V$ be a given pair of $r$-dimensional subspace $U \subset \mathbb{R}^{p}$ and  $V \subset \mathbb{R}^{q}$, respectively.

For a given $\bm\theta^{\widehat{\mathcal{M}}}$ and pair $(U,V)$, we define the subspace 
$\Omega_1(\widehat{\mathcal{M}})$, $\Omega_2(U,V)$, $\overline{\Omega}_1(\widehat{\mathcal{M}})$, $\overline{\Omega}_2(U,V)$,  $\overline{\Omega}^{\perp}_1(\widehat{\mathcal{M}})$ and $\overline{\Omega}^{\perp}_2(U,V)$ as followed:
\begin{eqnarray*}
{\Omega}_1(\widehat{\mathcal{M}}) &=&\overline{\Omega}_1(\widehat{\mathcal{M}}) :=  \left\{ \bm\beta^{\widehat{\mathcal{M}}} = \{ \beta_l \}_{l \in \widehat{\mathcal{M}}}\in \mathbb{R}^{|\widehat{\mathcal{M}}|} | \beta_l = 0  \textrm{ for all } l \not\in \mathcal{M}_1 \right\},\\
\overline{\Omega}_1^{\perp}(\widehat{\mathcal{M}}) &:=& \left\{ \bm\beta^{\widehat{\mathcal{M}}} = \{ \beta_l \}_{l \in \widehat{\mathcal{M}}}\in \mathbb{R}^{|\widehat{\mathcal{M}}|} | \beta_l = 0  \textrm{ for all } l \in \mathcal{M}_1 \right\},\\
{\Omega}_2(U,V) &:=& \left\{ \Theta \in \mathbb{R}^{p \times q} |  \mathrm{row}(\Theta) \subset V, \textrm{ and } \mathrm{col}(\Theta) \subset U \right\},\\
\overline{\Omega}_2(U,V) &:=& \left\{ \Theta  \in \mathbb{R}^{p \times q} |   \mathrm{row}(\Theta) \subset V, \textrm{ or } \mathrm{col}(\Theta) \subset U \right\},\\
\overline{\Omega}_2^{\perp}(U,V) &:=& \left\{ \Theta  \in \mathbb{R}^{p \times q} |   \mathrm{row}(\Theta) \subset V^{\perp}, \textrm{ and } \mathrm{col}(\Theta) \subset U^{\perp} \right\},
\end{eqnarray*}
where $\overline{\Omega}_1^{\perp}(\widehat{\mathcal{M}})$ and $\overline{\Omega}_2^{\perp}(U,V)$ are the orthogonal complements for $\overline{\Omega}_1(\widehat{\mathcal{M}})$ and $\overline{\Omega}_2(U,V)$ respectively. 
For simplicity, we will use
${\Omega}_1$, $\overline{\Omega}_1$ and 
$\overline{\Omega}_1^{\perp}$ to denote
${\Omega}_1(\widehat{\mathcal{M}})$, $\overline{\Omega}_1(\widehat{\mathcal{M}})$ and 
$\overline{\Omega}_1^{\perp}(\widehat{\mathcal{M}})$, respectively; and use ${\Omega}_2$, $\overline{\Omega}_2$ and 
$\overline{\Omega}_2^{\perp}$ to denote
${\Omega}_2(U,V)$, $\overline{\Omega}_2(U,V)$ and $\overline{\Omega}_2^{\perp}(U,V)$.
It is easy to see that both $P_1$ and $P_2$ satisfy the condition that they are decomposable with respect to the subspace pair $(\Omega_1, \overline{\Omega}_1^{\perp})$ and $(\Omega_2, \overline{\Omega}_2^{\perp})$, respectively. 
Therefore the regularizer terms $P_1$ and $P_2$ satisfies condition  (G1) of  \cite{negahban2012unified}.

Here we define the function $F : \mathbb{R}^{|\widehat{\mathcal{M}}|+pq} \to \mathbb{R}$ by
\begin{eqnarray*}
F(\bm{\Delta}) &:=& l(\dot{\bm\theta}^{\widehat{\mathcal{M}}} + \bm{\Delta}) - l(\dot{\bm\theta}^{\widehat{\mathcal{M}}}) + \lambda_1 \left\{ P_1(\dot{\bm\beta}^{\widehat{\mathcal{M}}} + \bm{\Delta}_1) - P_1(\dot{\bm\beta}^{\widehat{\mathcal{M}}}) \right\} \\
&& + \lambda_2 \left\{ P_2( \dot{\bm{B}} + \bm{\Delta}_2) - P_2(\dot{\bm{B}}) \right\},
\end{eqnarray*}
where, $\bm\Delta = \{ \bm\Delta_1^{\T},\mathrm{vec}(\bm\Delta_2)^{\T} \}^{\T} \in \mathbb{R}^{|\widehat{\mathcal{M}}|+pq}$ with $\bm\Delta_1 \in \mathbb{R}^{|\widehat{\mathcal{M}}|}$ and $\bm\Delta_2 \in \mathbb{R}^{p \times q}$. 
Next, We will derive a lower bound for $F(\bm{\Delta})$. 

Before we formally prove the result, we first introduce the concept of subspace compatibility constant. 
For a subspace $\Omega$, the subspace compatibility constant with respect to the pair $(P,\| \cdot\|)$ is given by
\[ 
\psi(\Omega)  := \sup_{u \in \Omega \backslash \{ 0\}} \frac{P(u)}{\| u\|}.
\] 
Therefore, we have
\begin{eqnarray*}
\psi_1(\overline{\Omega}_1) &=& \sup_{\bm\beta^{\widehat{\mathcal{M}}} \in \overline{\Omega}_1 \backslash \{ 0\}} \frac{P_1(\bm\beta^{\widehat{\mathcal{M}}})}{\| \bm\beta^{\widehat{\mathcal{M}}}\|} = \frac{\sum_{l \in \widehat{\mathcal{M}}} | \beta_l | }{\sqrt{ \sum_{l \in \widehat{\mathcal{M}}} {\beta_l^2}}} \leq  \frac{\sqrt{ \sum_{l \in \widehat{\mathcal{M}}} | \beta_l |^2   \sum_{l \in \widehat{M}} 1^2} }{\sqrt{ \sum_{l \in \widehat{\mathcal{M}}} {\beta_l^2}}} \leq \sqrt{|\widehat{\mathcal{M}}|};\\
\psi_2(\overline{\Omega}_2)  &=& \sup_{\bm{B} \in \overline{\Omega}_2 \backslash \{ 0\}} \frac{P_2(\bm{B})}{\| \bm{B}\|} = \frac{ \left\| \bm{B} \right\|_* }{\| \bm{B}\|_F} \leq \frac{ \sqrt{r} \left\| \bm{B} \right\|_F }{\| \bm{B}\|_F}  = \sqrt{r},
\end{eqnarray*}
where the last step of first inequality is obtained by applying Cauchy-Schwarz inequality.
Furthermore, we have
\begin{eqnarray*}
F(\bm{\Delta}) &=& l(\dot{\bm\theta}^{\widehat{\mathcal{M}}} + \bm{\Delta}) - l(\dot{\bm\theta}^{\widehat{\mathcal{M}}}) + \lambda_1 \left\{ P_1(\dot{\bm\beta}^{\widehat{\mathcal{M}}} + \bm{\Delta}_1) - P_1(\dot{\bm\beta}^{\widehat{\mathcal{M}}}) \right\}\\
&& + \lambda_2 \left\{ P_2(\dot{\bm{B}} + \bm{\Delta}_2) - P_2(\dot{\bm{B}}) \right\}\\
 &\geq& \langle \nabla l(\dot{\bm\theta}^{\widehat{\mathcal{M}}}), \bm{\Delta} \rangle + \iota \left\| \bm{\Delta} \right\|^2 + 
 \lambda_1 \left\{ P_1(\dot{\bm\beta}^{\widehat{\mathcal{M}}} + \bm{\Delta}_1) - P_1(\dot{\bm\beta}^{\widehat{\mathcal{M}}}) \right\}\\
&& + \lambda_2 \left\{ P_2(\dot{\bm{B}} + \bm{\Delta}_2) - P_2(\dot{\bm{B}}) \right\}\\ 
  &\geq& \langle \nabla l(\dot{\bm\theta}^{\widehat{\mathcal{M}}}), \bm{\Delta} \rangle + \iota \left\| \bm{\Delta} \right\|^2 + 
 \lambda_1 \left[ P_1( \bm{\Delta}_{1,\overline{\Omega}_1^{\perp}}) - P_1(\bm{\Delta}_{1,\overline{\Omega}_1}) - 2 P_1\{(\dot{\bm\beta}^{\widehat{\mathcal{M}}})_{\overline{\Omega}_1^{\perp}}\} \right]\\
 &&+ \lambda_2 \left[ P_2( \bm{\Delta}_{2,\overline{\Omega}_2^{\perp}}) - P_2(\bm{\Delta}_{2,\overline{\Omega}_2}) - 2 P_2\{(\dot{\bm{B}})_{\overline{\Omega}_2^{\perp}}\} \right],
\end{eqnarray*} 
where the first inequality follows from Assumption (A6) and the second inequality follows from Lemma 3 in \cite{negahban2012unified}. 
By the Cauchy-Schwarz inequality applied to $P_k$ and its dual $P_k^*$, $k=1,2$,  where $P_k^*$ is defined as the dual norm of $P_k$ such that $P_k^*(\bm\theta) = \sup_{P_k(\bm\eta) \leq 1} \langle \bm\theta, \bm\eta \rangle$, we have 
$| \langle l(\dot{\bm\theta}^{\widehat{\mathcal{M}}}), \bm{\Delta} \rangle  |
=  |\langle \nabla l(\dot{\bm\beta}^{\widehat{\mathcal{M}}}), \bm{\Delta}_1 \rangle|  + |\langle\nabla l(\dot{\bm{B}}), \bm{\Delta}_2 \rangle |
\leq P_1^* \left\{ \nabla l(\dot{\bm\beta}^{\widehat{\mathcal{M}}}) \right\} P_1\left( \bm{\Delta}_1  \right) + P_2^* \left\{ \nabla l(\dot{\bm{B}}) \right\} P_2\left( \bm{\Delta}_2 \right).$
If $\lambda_k \geq 2 P_k^*(\cdot)$ holds, where
 $ P_1^*(\cdot) =  \left\| \nabla l(\dot{\bm\beta}^{\widehat{\mathcal{M}}}) \right\|_{\infty} $ and 
$ P_2^*(\cdot) =  \left\| \nabla l(\dot{\bm{B}}) \right\|_{op}$ \citep{negahban2012unified, kong2019l2rm}, 
one has
$| \langle \nabla l(\cdot), \bm{\Delta}_k \rangle  | \leq   \frac{1}{2} \lambda_k P_k(\bm{\Delta}_k) 
\leq \frac{1}{2} \lambda_k \left\{  P_k( \bm{\Delta}_{k,\overline{\Omega}_k^{\perp}}) + P_k(\bm{\Delta}_{k,\overline{\Omega}_k}) \right\}. $
Therefore, we have
\begin{eqnarray*}
F(\bm{\Delta}) 
&\geq& \langle \nabla l(\dot{\bm\theta}^{\widehat{\mathcal{M}}}), \bm{\Delta} \rangle + \iota \left\| \bm{\Delta} \right\|^2 + 
 \lambda_1 \left[ P_1( \bm{\Delta}_{1,\overline{\Omega}_1^{\perp}}) - P_1(\bm{\Delta}_{1,\overline{\Omega}_1}) - 2 P_1\{(\dot{\bm{\beta}}^{\widehat{\mathcal{M}}})_{\overline{\Omega}_1^{\perp}}\} \right]\\
 &&+ \lambda_2 \left[ P_2( \bm{\Delta}_{2,\overline{\Omega}_2^{\perp}}) - P_2(\bm{\Delta}_{2,\overline{\Omega}_2}) - 2 P_2\{\mathrm{vec}(\dot{\bm{B}})_{\overline{\Omega}_2^{\perp}}\} \right]\\
 &=& \langle \nabla l(\dot{\bm\beta}^{\widehat{\mathcal{M}}}), \bm{\Delta}_1 \rangle + \langle\nabla l(\dot{\bm{B}}), \bm{\Delta}_2 \rangle + \iota \left\| \bm{\Delta} \right\|^2 \\
 &&+ \lambda_1 \left[ P_1( \bm{\Delta}_{1,\overline{\Omega}_1^{\perp}}) - P_1(\bm{\Delta}_{1,\overline{\Omega}_1}) - 2 P_1\left\{(\dot{\bm{\beta}}^{\widehat{\mathcal{M}}})_{\overline{\Omega}_1^{\perp}}\right\} \right]\\
 &&+ \lambda_2 \left[ P_2( \bm{\Delta}_{2,\overline{\Omega}_2^{\perp}}) - P_2(\bm{\Delta}_{2,\overline{\Omega}_2}) - 2 P_2\left\{(\dot{\bm{B}})_{\overline{\Omega}_2^{\perp}}\right\} \right] \\
 &\geq& - \frac{\lambda_1}{2} \left\{ P_1( \bm{\Delta}_{1,\overline{\Omega}_1^{\perp}}) + P_1(\bm{\Delta}_{1,\overline{\Omega}_1}) \right\} 
 - \frac{\lambda_2}{2} \left\{ P_2( \bm{\Delta}_{2,\overline{\Omega}_2^{\perp}}) + P_2(\bm{\Delta}_{2,\overline{\Omega}_2}) \right\}  + \iota \left\| \bm{\Delta} \right\|^2 \\
&& + \lambda_1 \left[ P_1( \bm{\Delta}_{1,\overline{\Omega}_1^{\perp}}) - P_1(\bm{\Delta}_{1,\overline{\Omega}_1}) - 2 P_1\left\{(\dot{\bm{\beta}}^{\widehat{\mathcal{M}}})_{\overline{\Omega}_1^{\perp}}\right\} \right]\\
&&+ \lambda_2 \left[ P_2( \bm{\Delta}_{2,\overline{\Omega}_2^{\perp}}) - P_2(\bm{\Delta}_{2,\overline{\Omega}_2}) - 2 P_2\left\{(\dot{\bm{B}})_{\overline{\Omega}_2^{\perp}}\right\} \right] \\
&=&\iota \left\| \bm{\Delta} \right\|^2 + \lambda_1 \left[ \frac{1}{2} P_1( \bm{\Delta}_{1,\overline{\Omega}_1^{\perp}}) - \frac{3}{2}  P_1(\bm{\Delta}_{1,\overline{\Omega}_1}) - 2 P_1\left\{(\dot{\bm{\beta}}^{\widehat{\mathcal{M}}})_{\overline{\Omega}_1^{\perp}}\right\} \right] \\
&&+\lambda_2 \left[ \frac{1}{2} P_2( \bm{\Delta}_{2,\overline{\Omega}_2^{\perp}}) -\frac{3}{2}  P_2(\bm{\Delta}_{2,\overline{\Omega}_2}) - 2 P_2\left\{(\dot{\bm{B}})_{\overline{\Omega}_2^{\perp}}\right\} \right].\\
\end{eqnarray*} 
By the subspace compatibility, we have $ P_k(\bm{\Delta}_{k,\overline{\Omega}_k}) \leq \psi_k(\overline{\Omega}_k) \| \bm{\Delta}_{k,\overline{\Omega}_k}\|$, for $k=1,2$. Substituting them into the previous inequality, and noticing that $P_1\left\{(\dot{\bm{\beta}}^{\widehat{\mathcal{M}}})_{\overline{\Omega}_1^{\perp}}\right\} =  P_2\left\{(\dot{\bm{B}})_{\overline{\Omega}_2^{\perp}}\right\}$  $= 0 $, we obtain that 
\begin{eqnarray*}
F(\bm{\Delta}) 
&\geq& \iota \left\| \bm{\Delta} \right\|^2 -\sum_{k \in \{ 1,2\}}   \frac{3 \lambda_k}{2} \psi_k(\overline{\Omega}_k) \| \bm{\Delta}_{k,\overline{\Omega}_k}\|  \\
&\geq& \iota \left\| \bm{\Delta} \right\|^2 -\sum_{k \in \{ 1,2\}}   \frac{3 \lambda_k}{2} \psi_k(\overline{\Omega}_k) \| \bm{\Delta}_{k}\|  \\
&\geq& \iota \left\| \bm{\Delta} \right\|^2 - 3 \max_{k \in \{ 1,2\}}\{  \lambda_k \psi_k(\overline{\Omega}_k) \} \left\| \bm{\Delta} \right\|.
\end{eqnarray*}
The right hand side is a quadratic form of $\bm{\Delta}$. Therefore, as long as $ \left\| \bm{\Delta} \right\|^2 > \frac{9}{4 \iota^2}  \max^2_{k \in \{ 1,2\}}\{ $  $\lambda_k \psi_k(\overline{\Omega}_k) \} $, one has $F(\bm{\Delta}) > 0$. Therefore, by Lemma 4 in \cite{negahban2012unified}, 
we can establish that
\begin{eqnarray*}
\left\| \widehat{\bm\theta}^{\widehat{\mathcal{M}}}_{\bm\lambda} - \dot{\bm\theta}^{\widehat{\mathcal{M}}} \right\|_2^2 \leq C \max\left\{\lambda_1^2 |\widehat{\mathcal{M}}|,\lambda_2^2 r \right\} \iota^{-2}, 
\end{eqnarray*} 
for some constant $C > 0$. 
It is easy to see that there exists some constant $C_0 > 0$ such that
\begin{eqnarray*}
\left\| \widehat{\bm\theta}_{\bm\lambda} - \dot{\bm\theta} \right\|_2^2 \leq C_0 \max\left\{ C_1 n^{2 \kappa + \tau} \lambda_1^2 ,\lambda_2^2 r \right\} \iota^{-2}.
\end{eqnarray*} 

Now we calculate  $ P_1^*(\cdot) $ and $ P_2^*(\cdot) $. 
According to \cite{kong2019l2rm,negahban2012unified}, we have that
$ P_1^*(\cdot) =  \left\| \nabla l(\dot{\bm\beta}^{\widehat{\mathcal{M}}}) \right\|_{\infty} $ and 
$ P_2^*(\cdot) =  \left\| \nabla l(\dot{\bm{B}}) \right\|_{op}$, where 
$
\nabla l(\dot{\bm\beta}^{\widehat{\mathcal{M}}}) = - n^{-1}\sum_{i=1}^{n}  \epsilon_i X_i^{\widehat{\mathcal{M}}} $ and 
$\nabla l(\dot{\bm{B}}) = - n^{-1}\sum_{i=1}^{n} \epsilon_i *  \bm{Z}_i
$.

We first calculate $  \left\| \nabla l(\dot{\bm\beta}^{\widehat{\mathcal{M}}}) \right\|_{\infty} $. 
Denoting $\bm\epsilon = (\epsilon_1,\ldots,\epsilon_n)^{\T}$, $\bm X^{\widehat{\mathcal{M}}} = (X_1^{\widehat{\mathcal{M}},\T},\ldots,X_n^{\widehat{\mathcal{M}},\T})^{\T}$,
 where $X_i^{\widehat{\mathcal{M}}} = (x_{ij})^{\T}_{j \in \widehat{\mathcal{M}}} \in \mathbb{R}^{|\widehat{\mathcal{M}}|}$ for $i= 1,\ldots,n$. 
we let $\bm x_l^{\widehat{\mathcal{M}}}$, where $l = 1,\ldots, |\widehat{\mathcal{M}}|$ represent the $l$-th column of  $\bm X^{\widehat{\mathcal{M}}}$. 
Since $\bm X$ column normalized, i.e. $\| \bm{x}_l\|_2 / \sqrt{n} = 1$, for all $l \in 1, \ldots,s $, by Assumption (A2), there exists a constant $ \sigma_{0} > 0$ such that 
$P( \left| \langle \bm x_l^{\widehat{\mathcal{M}}}, \bm \epsilon \rangle / n \right| \geq t ) \leq 2 \exp \left( - \frac{n t^2}{2 \sigma_{0}^2} \right)$. 
Applying union bound, we have
$P \left( \left\| - n^{-1}\sum_{i=1}^{n}  \epsilon_i X_i^{\widehat{\mathcal{M}}} \right\|_\infty  \geq t \right) = 
P\left( \left\| \bm X^{\widehat{\mathcal{M}},\T}  \bm\epsilon / n \right\|_\infty  \geq t \right) 
= P\left( \mathrm{sup}_{l \in \widehat{\mathcal{M}}}
\left| \langle \bm x_l^{\widehat{\mathcal{M}}}, \bm \epsilon \rangle /n 
 \right| \geq t \right)
\leq 2 \exp \left(- \frac{n t^2}{2 \sigma_{0}^2} + \log |\widehat{\mathcal{M}}| \right) \leq
2 \exp \left( \right.$ $\left. - \frac{n t^2}{2 \sigma_{0}^2}   + C_1 (2 \kappa + \tau)\log n  \right)$.
By choosing $t^2 = 2  n^{-1} \sigma_{0}^2$  $\{ \log (\log n) + C_1 (2 \kappa + \tau)\log n \}$, we can see that when $\lambda_1 \geq 
 2 \sigma_{0} [2 n^{-1} \{ \log (\log n) +  C_1 (2 \kappa + \tau)\log n \}]^{1/2}$, 
then there exist a positive constant $c_1 > 0$ such that the choice of $\lambda_1$ holds with probability at least $1 - c_1  (\log n)^{-1}$. 

Secondly, we calculate $ \left\| \nabla l(\dot{\bm{B}}) \right\|_{op}$: 
\begin{eqnarray*}
&& \left\| \nabla l(\dot{\bm{B}}) \right\|_{op} = \left\|   - n^{-1} \sum_{i=1}^{n} \epsilon_i *  \bm{Z}_i \right\|_{op}
 =\left\| n^{-1} \sum_{i=1}^{n} \epsilon_i * \left( \sum_{l \in \mathcal{M}_2} X_{il}* \dot{\bm{C}}_{l} + \bm{E}_{i}\right) \right\|_{op}\\
 &=& \left\|  n^{-1} \sum_{i=1}^{n} \epsilon_i * \left( \sum_{l \in \mathcal{M}_2} X_{il}* \dot{\bm{C}}_{l} \right)+  n^{-1} \sum_{i=1}^{n} \epsilon_i * \bm{E}_{i} \right\|_{op}\\
& =&  \left\|  n^{-1} \sum_{l \in \mathcal{M}_2} \langle \bm x_l, \bm\epsilon \rangle * \dot{\bm{C}}_{l} + n^{-1} \sum_{i=1}^{n} \epsilon_i * \bm{E}_{i} \right\|_{op}\\
&\leq&  n^{-1} \sum_{l \in \mathcal{M}_2}   \left| \langle \bm x_l, \bm\epsilon \rangle \right| \left\|  \dot{\bm{C}}_{l} \right\|_{op} +   \left\|n^{-1} \sum_{i=1}^{n} \epsilon_i * \bm{E}_{i} \right\|_{op}.
\end{eqnarray*}
Under condition (A1),  
$
n^{-1} \sum_{l \in \mathcal{M}_2}   \left| \langle \bm x_l, \bm\epsilon \rangle \right| \left\|  \dot{\bm{C}}_{l} \right\|_{op} \leq  
n^{-1}  b \sum_{l \in \mathcal{M}_2}   \left| \langle \bm x_l, \bm\epsilon \rangle \right|$. 
Therefore, we have
\begin{eqnarray*}
&&P(n^{-1} b \sum_{l \in \mathcal{M}_2}   \left| \langle \bm x_l, \bm\epsilon \rangle \right| \geq t )
= P( \sum_{l \in \mathcal{M}_2}   \left| \langle \bm x_l, \bm\epsilon \rangle /n \right| \geq t/b )
\\
&\leq& P\left[ \cup_{l \in \mathcal{M}_2}  \left\{ \left| \langle \bm x_l, \bm\epsilon \rangle /n \right| \geq \frac{t}{b s_2} \right\} \right]
\leq \sum_{l \in \mathcal{M}_2}  P\left( \left| \langle \bm x_l, \bm\epsilon \rangle /n \right| \geq \frac{t}{b s_2}\right)\\
&\leq& \sum_{l \in \mathcal{M}_2}  P\left( \sup_{l \in \mathcal{M}_2} \left| \langle \bm x_l, \bm\epsilon \rangle /n \right| \geq \frac{t}{b s_2}\right)
\leq 2 \exp \left( - \frac{n t^2}{2 b^2 s_2^2 \sigma_{0}^2} + 2 \log s_2 \right).
\end{eqnarray*}
Therefore, by choosing 
$t= b s_2 \sigma_{0}$ $[ 2 n^{-1} \{ 3 \log s_2 + \log (\log n) \}]^{1/2}$, for any choice of $t_1 \geq t$, we guarantee that $t_1 \geq n^{-1} b \sum_{l \in \mathcal{M}_2}   \left| \langle \bm x_l, \bm\epsilon \rangle \right| $ therefore  $t_1 \geq n^{-1} \sum_{l \in \mathcal{M}_2}   \left| \langle \bm x_l, \bm\epsilon \rangle \right| \left\|  \dot{\bm{C}}_{l} \right\|_{op}$ is valid with probability at least $1 - c_2 / (s_2 \log n)$ for some positive constant $c_2$.

On the other hand, let $\bm{W}_i$ be a $p \times q$ random matrix with each entry i.i.d. standard normal. By Assumption (A8) and Lemma \ref{lem3}, conditioning on $\epsilon_i$ we have
\begin{eqnarray*}
P( \left\| \sum_{i=1}^{n} \epsilon_i * \bm{E}_{i} \right\|_{op} \geq t ) 
= P( \sup_{\| \bm A \|_* \leq 1 }\langle \bm A, \sum_{i=1}^{n} \epsilon_i * \bm{E}_{i} \rangle \geq t ) 
\leq  P( \sup_{\| \bm A \|_* \leq 1 }\langle \bm A, \sum_{i=1}^{n} \epsilon_i * \bm{W}_{i} \rangle \geq \frac{t}{C_U} ),
\end{eqnarray*}
since $\Sigma_e \preceq C_u^2 I_{pq \times pq}$. 

As $ \sup_{\| \bm A \|_* \leq 1 }\langle \bm A, \sum_{i=1}^{n} \epsilon_i * \bm{W}_{i} \rangle 
= \left\| \sum_{i=1}^{n} \epsilon_i * \bm{W}_{i}  \right\|_\infty$
conditioning on $\bm{W}_{i}$, each entry of the matrix 
$\sum_{i=1}^{n} \epsilon_i * \bm{W}_{i} $ 
is i.i.d. $N(0, \| \bm\epsilon \|_{op}^2)$. 
Since $\frac{\| \bm\epsilon \|_{op}^2}{\sigma_{\epsilon}^2}$ is a $\chi^2$ random variable with $n$ degrees of freedom, one has
\[
P(\frac{\| \bm\epsilon \|_{op}^2}{n \sigma_{\epsilon}^2} \geq 4) \leq \exp(-n) 
\]
using the tail bound of $\chi^2$ presented by the corollary of Lemma 1 from \cite{laurent2000adaptive}. Combing with the standard random matrix theory, we know that 
$\| n^{-1/2} \sum_{i=1}^{n} \epsilon_i * $ $\bm{W}_{i}  \|_{op} \leq 2 n^{-1/2} \sigma_{\epsilon} (p^{1/2} + q^{1/2})$
with probability at least $1 - c_3 \exp\{ -c_4 (p+q)\} - \exp(-n)$ where $c_3$ and $c_4$ are some positive constants. 
Combining with what we got from previous step, we have
$ \sum_{l \in \mathcal{M}_2}   \left| \langle \bm x_l, \bm\epsilon \rangle \right| \left\|  \dot{\bm{C}}_{l} \right\|_{op} + \left\| \sum_{i=1}^{n} \epsilon_i * \bm{E}_{i} \right\|_{op} \leq  b s_2 \sigma_{0}[ 2 n^{-1} \{ 3 \log s_2 + \log (\log n) \}]^{1/2} +  2 n^{-1/2} \sigma_{\epsilon} (p^{1/2} + q^{1/2})$
holds with probability at least 
$1- c_2 / (s_2 \log n) - c_3 \exp\{ -c_4 (p+q)\} - \exp(-n) $. Therefore, the choice of $\lambda_2 \geq 2 b s_2 \sigma_{0}[ 2 n^{-1} \{ 3 \log s_2 + \log (\log n) \}]^{1/2} +  4 n^{-1/2} \sigma_{\epsilon} (p^{1/2} + q^{1/2})$ holds with probability at least 
$1- c_2 / (s_2 \log n) - c_3 \exp\{ -c_4 (p+q)\} - \exp(-n) $. 

As a result, the event that both $\lambda_1$ and $\lambda_2$ satisfy the above inequalities holds with  probability at least 
$1-  c_1/ \log n - c_2 / (s_2 \log n) - c_3 \exp\{ -c_4 (p+q)\} - \exp(-n) $ for some postive constants $c_1$, $c_2$, $c_3$ and $c_4$.

In sum, there exists some positive constants $c_1, c_2, c_3, c_4$, with probability at least 
$1- c_1/ \log n - c_2 / (s_2 \log n) - c_3 \exp\{ -c_4 (p+q)\} $,  
one has
$$\left\| \widehat{\bm\theta}_{\bm\lambda} - \dot{\bm\theta} \right\|_2^2 \leq C_0 \max\left\{C_1 \lambda_1^2 n^{2 \kappa + \tau},\lambda_2^2 r \right\} \iota^{-2},$$
for some constants $C_0, C_1 > 0$. This completes the proof.

\end{proof}

\bibliographystyle{chicago}
\bibliography{refsg}

\begin{thebibliography}{}

\bibitem[\protect\citeauthoryear{Aizenstein, Nebes, Saxton, Price, Mathis,
  Tsopelas, Ziolko, James, Snitz, Houck, et~al.}{Aizenstein
  et~al.}{2008}]{aizenstein2008frequent}
Aizenstein, H.~J., R.~D. Nebes, J.~A. Saxton, J.~C. Price, C.~A. Mathis, N.~D.
  Tsopelas, S.~K. Ziolko, J.~A. James, B.~E. Snitz, P.~R. Houck, et~al. (2008).
\newblock Frequent amyloid deposition without significant cognitive impairment
  among the elderly.
\newblock {\em Archives of Neurology\/}~{\em 65\/}(11), 1509--1517.

\bibitem[\protect\citeauthoryear{Anderson}{Anderson}{1955}]{anderson1955integral}
Anderson, T.~W. (1955).
\newblock The integral of a symmetric unimodal function over a symmetric convex
  set and some probability inequalities.
\newblock {\em Proceedings of the American Mathematical Society\/}~{\em
  6\/}(2), 170--176.

\bibitem[\protect\citeauthoryear{Apostolova, Dinov, Dutton, Hayashi, Toga,
  Cummings, and Thompson}{Apostolova et~al.}{2006}]{apostolova20063d}
Apostolova, L.~G., I.~D. Dinov, R.~A. Dutton, K.~M. Hayashi, A.~W. Toga, J.~L.
  Cummings, and P.~M. Thompson (2006).
\newblock {3D} comparison of hippocampal atrophy in amnestic mild cognitive
  impairment and {A}lzheimer's disease.
\newblock {\em Brain\/}~{\em 129\/}(11), 2867--2873.

\bibitem[\protect\citeauthoryear{Apostolova, Mosconi, Thompson, Green, Hwang,
  Ramirez, Mistur, Tsui, and de~Leon}{Apostolova
  et~al.}{2010}]{apostolova2010subregional}
Apostolova, L.~G., L.~Mosconi, P.~M. Thompson, A.~E. Green, K.~S. Hwang,
  A.~Ramirez, R.~Mistur, W.~H. Tsui, and M.~J. de~Leon (2010).
\newblock Subregional hippocampal atrophy predicts {A}lzheimer's dementia in
  the cognitively normal.
\newblock {\em Neurobiology of Aging\/}~{\em 31\/}(7), 1077--1088.

\bibitem[\protect\citeauthoryear{Barrett, Fry, Maller, and Daly}{Barrett
  et~al.}{2005}]{barrett2005haploview}
Barrett, J.~C., B.~Fry, J.~Maller, and M.~J. Daly (2005).
\newblock Haploview: analysis and visualization of {LD} and haplotype maps.
\newblock {\em Bioinformatics\/}~{\em 21\/}(2), 263--265.

\bibitem[\protect\citeauthoryear{Barut, Fan, and Verhasselt}{Barut
  et~al.}{2016}]{barut2016conditional}
Barut, E., J.~Fan, and A.~Verhasselt (2016).
\newblock Conditional sure independence screening.
\newblock {\em Journal of the American Statistical Association\/}~{\em
  111\/}(515), 1266--1277.

\bibitem[\protect\citeauthoryear{Beck and Teboulle}{Beck and
  Teboulle}{2009}]{beck2009fast}
Beck, A. and M.~Teboulle (2009).
\newblock A fast iterative shrinkage-thresholding algorithm for linear inverse
  problems.
\newblock {\em SIAM Journal on Imaging Sciences\/}~{\em 2\/}(1), 183--202.

\bibitem[\protect\citeauthoryear{Bertram and Tanzi}{Bertram and
  Tanzi}{2020}]{bertram2020genomic}
Bertram, L. and R.~E. Tanzi (2020).
\newblock Genomic mechanisms in {A}lzheimer's disease.
\newblock {\em Brain Pathology\/}~{\em 30\/}(5), 966--977.

\bibitem[\protect\citeauthoryear{Bi, Yang, Li, Wang, Zhu, and Zhang}{Bi
  et~al.}{2017}]{bi2017genome}
Bi, X., L.~Yang, T.~Li, B.~Wang, H.~Zhu, and H.~Zhang (2017).
\newblock Genome-wide mediation analysis of psychiatric and cognitive traits
  through imaging phenotypes.
\newblock {\em Human Brain Mapping\/}~{\em 38\/}(8), 4088--4097.

\bibitem[\protect\citeauthoryear{Blackstad, Brink, Hem, and June}{Blackstad
  et~al.}{1970}]{blackstad1970distribution}
Blackstad, T., K.~Brink, J.~Hem, and B.~June (1970).
\newblock Distribution of hippocampal mossy fibers in the rat. {A}n
  experimental study with silver impregnation methods.
\newblock {\em Journal of Comparative Neurology\/}~{\em 138\/}(4), 433--449.

\bibitem[\protect\citeauthoryear{Braak and Braak}{Braak and
  Braak}{1998}]{braak1998evolution}
Braak, H. and E.~Braak (1998).
\newblock Evolution of neuronal changes in the course of {A}lzheimer’s
  disease.
\newblock In {\em Ageing and Dementia}, pp.\  127--140. Springer.

\bibitem[\protect\citeauthoryear{Brookhart, Schneeweiss, Rothman, Glynn, Avorn,
  and St{\"u}rmer}{Brookhart et~al.}{2006}]{brookhart2006variable}
Brookhart, M.~A., S.~Schneeweiss, K.~J. Rothman, R.~J. Glynn, J.~Avorn, and
  T.~St{\"u}rmer (2006).
\newblock Variable selection for propensity score models.
\newblock {\em American Journal of Epidemiology\/}~{\em 163\/}(12), 1149--1156.

\bibitem[\protect\citeauthoryear{Cai, Cand{\`e}s, and Shen}{Cai
  et~al.}{2010}]{cai2010singular}
Cai, J.-F., E.~J. Cand{\`e}s, and Z.~Shen (2010).
\newblock A singular value thresholding algorithm for matrix completion.
\newblock {\em SIAM Journal on Optimization\/}~{\em 20\/}(4), 1956--1982.

\bibitem[\protect\citeauthoryear{Chung, Kim, Kim, Kim, You, Koh, Kim, and
  Lee}{Chung et~al.}{2014}]{chung2014exome}
Chung, S.~J., M.-J. Kim, J.~Kim, Y.~J. Kim, S.~You, J.~Koh, S.~Y. Kim, and
  J.-H. Lee (2014).
\newblock Exome array study did not identify novel variants in {A}lzheimer's
  disease.
\newblock {\em Neurobiology of Aging\/}~{\em 35\/}(8), 1958--e13.

\bibitem[\protect\citeauthoryear{Consortium et~al.}{Consortium
  et~al.}{2012}]{GPC1000}
Consortium, . G.~P. et~al. (2012).
\newblock An integrated map of genetic variation from 1,092 human genomes.
\newblock {\em Nature\/}~{\em 491}, 56--65.

\bibitem[\protect\citeauthoryear{Coon, Myers, Craig, Webster, Pearson, Lince,
  Zismann, Beach, Leung, Bryden, et~al.}{Coon et~al.}{2007}]{coon2007high}
Coon, K.~D., A.~J. Myers, D.~W. Craig, J.~A. Webster, J.~V. Pearson, D.~H.
  Lince, V.~L. Zismann, T.~G. Beach, D.~Leung, L.~Bryden, et~al. (2007).
\newblock A high-density whole-genome association study reveals that {APOE} is
  the major susceptibility gene for sporadic late-onset {A}lzheimer's disease.
\newblock {\em The Journal of Clinical Psychiatry\/}~{\em 68\/}(4), 613--618.

\bibitem[\protect\citeauthoryear{De~Leon, George, Stylopoulos, Smith, and
  Miller}{De~Leon et~al.}{1989}]{de1989early}
De~Leon, M., A.~George, L.~Stylopoulos, G.~Smith, and D.~Miller (1989).
\newblock Early marker for {A}lzheimer's disease: the atrophic hippocampus.
\newblock {\em The Lancet\/}~{\em 334\/}(8664), 672--673.

\bibitem[\protect\citeauthoryear{Dehman, Ambroise, and Neuvial}{Dehman
  et~al.}{2015}]{dehman2015performance}
Dehman, A., C.~Ambroise, and P.~Neuvial (2015).
\newblock Performance of a blockwise approach in variable selection using
  linkage disequilibrium information.
\newblock {\em BMC Bioinformatics\/}~{\em 16\/}(1), 1--14.

\bibitem[\protect\citeauthoryear{Du, Liu, Yao, Risacher, Han, Guo, Saykin, and
  Shen}{Du et~al.}{2018}]{du2018fast}
Du, L., K.~Liu, X.~Yao, S.~L. Risacher, J.~Han, L.~Guo, A.~J. Saykin, and
  L.~Shen (2018).
\newblock Fast multi-task {SCCA} learning with feature selection for
  multi-modal brain imaging genetics.
\newblock In {\em 2018 IEEE International Conference on Bioinformatics and
  Biomedicine (BIBM)}, pp.\  356--361. IEEE.

\bibitem[\protect\citeauthoryear{Dudek, Alexander, and Farris}{Dudek
  et~al.}{2016}]{dudek2016rediscovering}
Dudek, S.~M., G.~M. Alexander, and S.~Farris (2016).
\newblock Rediscovering area {CA}2: unique properties and functions.
\newblock {\em Nature Reviews Neuroscience\/}~{\em 17\/}(2), 89--102.

\bibitem[\protect\citeauthoryear{Fan and Lv}{Fan and Lv}{2008}]{fan2008sure}
Fan, J. and J.~Lv (2008).
\newblock Sure independence screening for ultrahigh dimensional feature space.
\newblock {\em Journal of the Royal Statistical Society: Series B (Statistical
  Methodology)\/}~{\em 70\/}(5), 849--911.

\bibitem[\protect\citeauthoryear{Ferrari, E~Avila, A~Medina, P{\'e}rez-Palma,
  I~Bustos, A~Alarcon, et~al.}{Ferrari et~al.}{2014}]{ferrari2014wnt}
Ferrari, D.~V., M.~E~Avila, M.~A~Medina, E.~P{\'e}rez-Palma, B.~I~Bustos,
  M.~A~Alarcon, et~al. (2014).
\newblock Wnt/$\beta$-catenin signaling in {A}lzheimer’s disease.
\newblock {\em CNS \& Neurological Disorders-Drug Targets (Formerly Current
  Drug Targets-CNS \& Neurological Disorders)\/}~{\em 13\/}(5), 745--754.

\bibitem[\protect\citeauthoryear{Ferreira and Klein}{Ferreira and
  Klein}{2011}]{ferreira2011abeta}
Ferreira, S.~T. and W.~L. Klein (2011).
\newblock The {A}$\beta$ oligomer hypothesis for synapse failure and memory
  loss in {A}lzheimer’s disease.
\newblock {\em Neurobiology of Learning and Memory\/}~{\em 96\/}(4), 529--543.

\bibitem[\protect\citeauthoryear{Fox, Warrington, Freeborough, Hartikainen,
  Kennedy, Stevens, and Rossor}{Fox et~al.}{1996}]{fox1996presymptomatic}
Fox, N., E.~Warrington, P.~Freeborough, P.~Hartikainen, A.~Kennedy, J.~Stevens,
  and M.~N. Rossor (1996).
\newblock Presymptomatic hippocampal atrophy in {A}lzheimer's disease: a
  longitudinal {MRI} study.
\newblock {\em Brain\/}~{\em 119\/}(6), 2001--2007.

\bibitem[\protect\citeauthoryear{Frozza, Lourenco, and De~Felice}{Frozza
  et~al.}{2018}]{frozza2018challenges}
Frozza, R.~L., M.~V. Lourenco, and F.~G. De~Felice (2018).
\newblock Challenges for {A}lzheimer's disease therapy: insights from novel
  mechanisms beyond memory defects.
\newblock {\em Frontiers in Neuroscience\/}~{\em 12}, 37.

\bibitem[\protect\citeauthoryear{Gabriel, Schaffner, Nguyen, Moore, Roy,
  Blumenstiel, Higgins, DeFelice, Lochner, Faggart, et~al.}{Gabriel
  et~al.}{2002}]{gabriel2002structure}
Gabriel, S.~B., S.~F. Schaffner, H.~Nguyen, J.~M. Moore, J.~Roy,
  B.~Blumenstiel, J.~Higgins, M.~DeFelice, A.~Lochner, M.~Faggart, et~al.
  (2002).
\newblock The structure of haplotype blocks in the human genome.
\newblock {\em Science\/}~{\em 296\/}(5576), 2225--2229.

\bibitem[\protect\citeauthoryear{Gao, Cui, Shen, and Ji}{Gao
  et~al.}{2016}]{gao2016shared}
Gao, L., Z.~Cui, L.~Shen, and H.-F. Ji (2016).
\newblock Shared genetic etiology between type 2 diabetes and {A}lzheimer's
  disease identified by bioinformatics analysis.
\newblock {\em Journal of Alzheimer's Disease\/}~{\em 50\/}(1), 13--17.

\bibitem[\protect\citeauthoryear{Gaugler, James, Johnson, Marin, and
  Weuve}{Gaugler et~al.}{2019}]{gaugler20192019}
Gaugler, J., B.~James, T.~Johnson, A.~Marin, and J.~Weuve (2019).
\newblock 2019 {A}lzheimer's disease facts and figures.
\newblock {\em Alzheimer's \& Dementia\/}~{\em 15\/}(3), 321--387.

\bibitem[\protect\citeauthoryear{Guerreiro and Bras}{Guerreiro and
  Bras}{2015}]{guerreiro2015age}
Guerreiro, R. and J.~Bras (2015).
\newblock The age factor in {A}lzheimer’s disease.
\newblock {\em Genome Medicine\/}~{\em 7\/}(1), 106.

\bibitem[\protect\citeauthoryear{Guo, Xu, Li, Ou, Shen, Huang, Dong, Tan, and
  Yu}{Guo et~al.}{2019}]{guo2019genome}
Guo, Y., W.~Xu, J.-Q. Li, Y.-N. Ou, X.-N. Shen, Y.-Y. Huang, Q.~Dong, L.~Tan,
  and J.-T. Yu (2019).
\newblock Genome-wide association study of hippocampal atrophy rate in
  non-demented elders.
\newblock {\em Aging (Albany NY)\/}~{\em 11\/}(22), 10468--10484.

\bibitem[\protect\citeauthoryear{Hampel, Mesulam, Cuello, Farlow, Giacobini,
  Grossberg, Khachaturian, Vergallo, Cavedo, Snyder, et~al.}{Hampel
  et~al.}{2018}]{hampel2018cholinergic}
Hampel, H., M.-M. Mesulam, A.~C. Cuello, M.~R. Farlow, E.~Giacobini, G.~T.
  Grossberg, A.~S. Khachaturian, A.~Vergallo, E.~Cavedo, P.~J. Snyder, et~al.
  (2018).
\newblock The cholinergic system in the pathophysiology and treatment of
  {A}lzheimer’s disease.
\newblock {\em Brain\/}~{\em 141\/}(7), 1917--1933.

\bibitem[\protect\citeauthoryear{Hao and Zhang}{Hao and
  Zhang}{2014}]{hao2014interaction}
Hao, N. and H.~H. Zhang (2014).
\newblock Interaction screening for ultrahigh-dimensional data.
\newblock {\em Journal of the American Statistical Association\/}~{\em
  109\/}(507), 1285--1301.

\bibitem[\protect\citeauthoryear{Hao, Li, Du, Yao, Yan, Risacher, Saykin, Shen,
  and Zhang}{Hao et~al.}{2017}]{hao2017mining}
Hao, X., C.~Li, L.~Du, X.~Yao, J.~Yan, S.~L. Risacher, A.~J. Saykin, L.~Shen,
  and D.~Zhang (2017).
\newblock Mining outcome-relevant brain imaging genetic associations via
  three-way sparse canonical correlation analysis in {A}lzheimer’s disease.
\newblock {\em Scientific Reports\/}~{\em 7\/}(1), 1--12.

\bibitem[\protect\citeauthoryear{Hastie, Tibshirani, and Friedman}{Hastie
  et~al.}{2009}]{hastie2009}
Hastie, T., R.~Tibshirani, and J.~Friedman (2009).
\newblock {\em The Elements of Statistical Learning: Data mining, Inference,
  and Prediction\/} (Second ed.).
\newblock Springer Series in Statistics. Springer, New York.

\bibitem[\protect\citeauthoryear{Hesse, Rosengren, Andreasen, Davidsson,
  Vanderstichele, Vanmechelen, and Blennow}{Hesse
  et~al.}{2001}]{hesse2001transient}
Hesse, C., L.~Rosengren, N.~Andreasen, P.~Davidsson, H.~Vanderstichele,
  E.~Vanmechelen, and K.~Blennow (2001).
\newblock Transient increase in total tau but not phospho-tau in human
  cerebrospinal fluid after acute stroke.
\newblock {\em Neuroscience Letters\/}~{\em 297\/}(3), 187--190.

\bibitem[\protect\citeauthoryear{Jack, Petersen, Xu, O’brien, Smith, Ivnik,
  Boeve, Tangalos, and Kokmen}{Jack et~al.}{2000}]{jack2000rates}
Jack, C., R.~C. Petersen, Y.~Xu, P.~O’brien, G.~Smith, R.~Ivnik, B.~F. Boeve,
  E.~G. Tangalos, and E.~Kokmen (2000).
\newblock Rates of hippocampal atrophy correlate with change in clinical status
  in aging and {AD}.
\newblock {\em Neurology\/}~{\em 55\/}(4), 484--490.

\bibitem[\protect\citeauthoryear{Jack, Slomkowski, Gracon, Hoover, Felmlee,
  Stewart, Xu, Shiung, O’Brien, Cha, et~al.}{Jack et~al.}{2003}]{jack2003mri}
Jack, C., M.~Slomkowski, S.~Gracon, T.~Hoover, J.~Felmlee, K.~Stewart, Y.~Xu,
  M.~Shiung, P.~O’Brien, R.~Cha, et~al. (2003).
\newblock {MRI} as a biomarker of disease progression in a therapeutic trial of
  milameline for {AD}.
\newblock {\em Neurology\/}~{\em 60\/}(2), 253--260.

\bibitem[\protect\citeauthoryear{Jack~Jr, Knopman, Jagust, Petersen, Weiner,
  Aisen, Shaw, Vemuri, Wiste, Weigand, et~al.}{Jack~Jr
  et~al.}{2013}]{jack2013tracking}
Jack~Jr, C.~R., D.~S. Knopman, W.~J. Jagust, R.~C. Petersen, M.~W. Weiner,
  P.~S. Aisen, L.~M. Shaw, P.~Vemuri, H.~J. Wiste, S.~D. Weigand, et~al.
  (2013).
\newblock Tracking pathophysiological processes in {A}lzheimer's disease: an
  updated hypothetical model of dynamic biomarkers.
\newblock {\em The Lancet Neurology\/}~{\em 12\/}(2), 207--216.

\bibitem[\protect\citeauthoryear{Jack~Jr, Knopman, Jagust, Shaw, Aisen, Weiner,
  Petersen, and Trojanowski}{Jack~Jr et~al.}{2010}]{jack2010hypothetical}
Jack~Jr, C.~R., D.~S. Knopman, W.~J. Jagust, L.~M. Shaw, P.~S. Aisen, M.~W.
  Weiner, R.~C. Petersen, and J.~Q. Trojanowski (2010).
\newblock Hypothetical model of dynamic biomarkers of the {A}lzheimer's
  pathological cascade.
\newblock {\em The Lancet Neurology\/}~{\em 9\/}(1), 119--128.

\bibitem[\protect\citeauthoryear{Khan, Aguilar, Kiddle, Doyle, Thambisetty,
  Muehlboeck, Sattlecker, Newhouse, Lovestone, Dobson, et~al.}{Khan
  et~al.}{2015}]{khan2015subset}
Khan, W., C.~Aguilar, S.~J. Kiddle, O.~Doyle, M.~Thambisetty, S.~Muehlboeck,
  M.~Sattlecker, S.~Newhouse, S.~Lovestone, R.~Dobson, et~al. (2015).
\newblock A subset of cerebrospinal fluid proteins from a multi-analyte panel
  associated with brain atrophy, disease classification and prediction in
  {A}lzheimer’s disease.
\newblock {\em {PLoS} {O}ne\/}~{\em 10\/}(8), e0134368.

\bibitem[\protect\citeauthoryear{Kim, Basak, and Holtzman}{Kim
  et~al.}{2009}]{kim2009role}
Kim, J., J.~M. Basak, and D.~M. Holtzman (2009).
\newblock The role of apolipoprotein {E} in {A}lzheimer's disease.
\newblock {\em Neuron\/}~{\em 63\/}(3), 287--303.

\bibitem[\protect\citeauthoryear{Kong, An, Zhang, and Zhu}{Kong
  et~al.}{2020}]{kong2019l2rm}
Kong, D., B.~An, J.~Zhang, and H.~Zhu (2020).
\newblock {L2RM}: Low-rank linear regression models for high-dimensional matrix
  responses.
\newblock {\em Journal of the American Statistical Association\/}~{\em
  115\/}(529), 403--424.

\bibitem[\protect\citeauthoryear{Laurent and Massart}{Laurent and
  Massart}{2000}]{laurent2000adaptive}
Laurent, B. and P.~Massart (2000).
\newblock Adaptive estimation of a quadratic functional by model selection.
\newblock {\em Annals of Statistics\/}~{\em 28\/}(5), 1302--1338.

\bibitem[\protect\citeauthoryear{Li, Shi, Pu, Li, Jiang, Xie, and Wang}{Li
  et~al.}{2007}]{li2007hippocampal}
Li, S., F.~Shi, F.~Pu, X.~Li, T.~Jiang, S.~Xie, and Y.~Wang (2007).
\newblock Hippocampal shape analysis of {A}lzheimer disease based on machine
  learning methods.
\newblock {\em American Journal of Neuroradiology\/}~{\em 28\/}(7), 1339--1345.

\bibitem[\protect\citeauthoryear{Lin, Feng, and Li}{Lin
  et~al.}{2015}]{lin2015regularization}
Lin, W., R.~Feng, and H.~Li (2015).
\newblock Regularization methods for high-dimensional instrumental variables
  regression with an application to genetical genomics.
\newblock {\em Journal of the American Statistical Association\/}~{\em
  110\/}(509), 270--288.

\bibitem[\protect\citeauthoryear{Liu, Li, Wang, and Li}{Liu
  et~al.}{2013}]{LiuLi2013}
Liu, E., M.~Li, W.~Wang, and Y.~Li (2013).
\newblock {MaCH-Admix}: Genotype imputation for admixed populations.
\newblock {\em Genetic Epidemiology\/}~{\em 37}, 25--37.

\bibitem[\protect\citeauthoryear{Maass, Lockhart, Harrison, Bell, Mellinger,
  Swinnerton, Baker, Rabinovici, and Jagust}{Maass
  et~al.}{2018}]{maass2018entorhinal}
Maass, A., S.~N. Lockhart, T.~M. Harrison, R.~K. Bell, T.~Mellinger,
  K.~Swinnerton, S.~L. Baker, G.~D. Rabinovici, and W.~J. Jagust (2018).
\newblock Entorhinal tau pathology, episodic memory decline, and
  neurodegeneration in aging.
\newblock {\em Journal of Neuroscience\/}~{\em 38\/}(3), 530--543.

\bibitem[\protect\citeauthoryear{Majdi, Sadigh-Eteghad, Aghsan, Farajdokht,
  Vatandoust, Namvaran, and Mahmoudi}{Majdi et~al.}{2020}]{majdi2020amyloid}
Majdi, A., S.~Sadigh-Eteghad, S.~R. Aghsan, F.~Farajdokht, S.~M. Vatandoust,
  A.~Namvaran, and J.~Mahmoudi (2020).
\newblock Amyloid-$\beta$, tau, and the cholinergic system in {A}lzheimer’s
  disease: Seeking direction in a tangle of clues.
\newblock {\em Reviews in the Neurosciences\/}~{\em 31\/}(4), 391--413.

\bibitem[\protect\citeauthoryear{Mohs, Knopman, Petersen, Ferris, Ernesto,
  Grundman, Sano, Bieliauskas, Geldmacher, Clark, et~al.}{Mohs
  et~al.}{1997}]{mohs1997development}
Mohs, R.~C., D.~Knopman, R.~C. Petersen, S.~H. Ferris, C.~Ernesto, M.~Grundman,
  M.~Sano, L.~Bieliauskas, D.~Geldmacher, C.~Clark, et~al. (1997).
\newblock Development of cognitive instruments for use in clinical trials of
  antidementia drugs: additions to the {A}lzheimer's disease assessment scale
  that broaden its scope.
\newblock {\em Alzheimer Disease and Associated Disorders\/}~{\em 11\/}(suppl
  2), S13--21.

\bibitem[\protect\citeauthoryear{Morishima-Kawashima and
  Ihara}{Morishima-Kawashima and Ihara}{2002}]{morishima2002alzheimer}
Morishima-Kawashima, M. and Y.~Ihara (2002).
\newblock {A}lzheimer's disease: $\beta$-{A}myloid protein and tau.
\newblock {\em Journal of Neuroscience Research\/}~{\em 70\/}(3), 392--401.

\bibitem[\protect\citeauthoryear{Negahban, Ravikumar, Wainwright, Yu,
  et~al.}{Negahban et~al.}{2012}]{negahban2012unified}
Negahban, S.~N., P.~Ravikumar, M.~J. Wainwright, B.~Yu, et~al. (2012).
\newblock A unified framework for high-dimensional analysis of {$ M
  $}-estimators with decomposable regularizers.
\newblock {\em Statistical Science\/}~{\em 27\/}(4), 538--557.

\bibitem[\protect\citeauthoryear{Nesterov}{Nesterov}{1998}]{nesterov1998introductory}
Nesterov, Y. (1998).
\newblock Introductory lectures on convex programming volume i: Basic course.
\newblock {\em Lecture Notes\/}~{\em 3\/}(4), 5.

\bibitem[\protect\citeauthoryear{Nitsch, Slack, Wurtman, and Growdon}{Nitsch
  et~al.}{1992}]{nitsch1992release}
Nitsch, R.~M., B.~E. Slack, R.~J. Wurtman, and J.~H. Growdon (1992).
\newblock Release of {A}lzheimer amyloid precursor derivatives stimulated by
  activation of muscarinic acetylcholine receptors.
\newblock {\em Science\/}~{\em 258\/}(5080), 304--307.

\bibitem[\protect\citeauthoryear{Noble, Grieve, Korgaonkar, Engelhardt,
  Griffith, Williams, and Brickman}{Noble et~al.}{2012}]{noble2012hippocampal}
Noble, K.~G., S.~M. Grieve, M.~S. Korgaonkar, L.~E. Engelhardt, E.~Y. Griffith,
  L.~M. Williams, and A.~M. Brickman (2012).
\newblock Hippocampal volume varies with educational attainment across the
  life-span.
\newblock {\em Frontiers in Human Neuroscience\/}~{\em 6}, 307.

\bibitem[\protect\citeauthoryear{Paterson, Bartlett, Blennow, Fox, Shaw,
  Trojanowski, Zetterberg, and Schott}{Paterson
  et~al.}{2014}]{paterson2014cerebrospinal}
Paterson, R., J.~Bartlett, K.~Blennow, N.~Fox, L.~Shaw, J.~Trojanowski,
  H.~Zetterberg, and J.~Schott (2014).
\newblock Cerebrospinal fluid markers including trefoil factor 3 are associated
  with neurodegeneration in amyloid-positive individuals.
\newblock {\em Translational Psychiatry\/}~{\em 4\/}(7), e419.

\bibitem[\protect\citeauthoryear{Pedraza, Bowers, and Gilmore}{Pedraza
  et~al.}{2004}]{pedraza2004asymmetry}
Pedraza, O., D.~Bowers, and R.~Gilmore (2004).
\newblock Asymmetry of the hippocampus and amygdala in {MRI} volumetric
  measurements of normal adults.
\newblock {\em Journal of the International Neuropsychological Society\/}~{\em
  10\/}(5), 664--678.

\bibitem[\protect\citeauthoryear{Purcell, Neale, Todd-Brown, Thomas, Ferreira,
  Bender, Maller, Sklar, De~Bakker, Daly, et~al.}{Purcell
  et~al.}{2007}]{purcell2007plink}
Purcell, S., B.~Neale, K.~Todd-Brown, L.~Thomas, M.~A. Ferreira, D.~Bender,
  J.~Maller, P.~Sklar, P.~I. De~Bakker, M.~J. Daly, et~al. (2007).
\newblock {PLINK}: a tool set for whole-genome association and population-based
  linkage analyses.
\newblock {\em The American Journal of Human Genetics\/}~{\em 81\/}(3),
  559--575.

\bibitem[\protect\citeauthoryear{Remnest{\aa}l, Bergstr{\"o}m, Olofsson,
  Sj{\"o}stedt, Uhl{\'e}n, Blennow, Zetterberg, Zettergren, Kern, Skoog,
  et~al.}{Remnest{\aa}l et~al.}{2021}]{remnestaal2021association}
Remnest{\aa}l, J., S.~Bergstr{\"o}m, J.~Olofsson, E.~Sj{\"o}stedt,
  M.~Uhl{\'e}n, K.~Blennow, H.~Zetterberg, A.~Zettergren, S.~Kern, I.~Skoog,
  et~al. (2021).
\newblock Association of {CSF} proteins with tau and amyloid $\beta$ levels in
  asymptomatic 70-year-olds.
\newblock {\em Alzheimer's Research \& Therapy\/}~{\em 13\/}(1), 1--19.

\bibitem[\protect\citeauthoryear{Richardson, Robins, and Wang}{Richardson
  et~al.}{2018}]{richardson2018discussion}
Richardson, T.~S., J.~M. Robins, and L.~Wang (2018).
\newblock Discussion of “{D}ata-driven confounder selection via {M}arkov and
  {B}ayesian networks” by {H}{\"a}ggstr{\"o}m.
\newblock {\em Biometrics\/}~{\em 74\/}(2), 403--406.

\bibitem[\protect\citeauthoryear{Rivasplata}{Rivasplata}{2012}]{rivasplata2012subgaussian}
Rivasplata, O. (2012).
\newblock Subgaussian random variables: an expository note.
\newblock {\em Internet publication, PDF\/}.

\bibitem[\protect\citeauthoryear{Sargolzaei, Sargolzaei, Cabrerizo, Chen,
  Goryawala, Noei, Zhou, Duara, Barker, and Adjouadi}{Sargolzaei
  et~al.}{2015}]{sargolzaei2015practical}
Sargolzaei, S., A.~Sargolzaei, M.~Cabrerizo, G.~Chen, M.~Goryawala, S.~Noei,
  Q.~Zhou, R.~Duara, W.~Barker, and M.~Adjouadi (2015).
\newblock A practical guideline for intracranial volume estimation in patients
  with {A}lzheimer's disease.
\newblock {\em BMC Bioinformatics\/}~{\em 16\/}(7), 1--10.

\bibitem[\protect\citeauthoryear{Schisterman, Cole, and Platt}{Schisterman
  et~al.}{2009}]{schisterman2009overadjustment}
Schisterman, E.~F., S.~R. Cole, and R.~W. Platt (2009).
\newblock Overadjustment bias and unnecessary adjustment in epidemiologic
  studies.
\newblock {\em Epidemiology (Cambridge, Mass.)\/}~{\em 20\/}(4), 488.

\bibitem[\protect\citeauthoryear{Selkoe and Hardy}{Selkoe and
  Hardy}{2016}]{selkoe2016amyloid}
Selkoe, D.~J. and J.~Hardy (2016).
\newblock The amyloid hypothesis of {A}lzheimer's disease at 25 years.
\newblock {\em EMBO Molecular Medicine\/}~{\em 8\/}(6), 595--608.

\bibitem[\protect\citeauthoryear{Shi, Thompson, Gutman, and Wang}{Shi
  et~al.}{2013}]{shi:nimg13}
Shi, J., P.~M. Thompson, B.~Gutman, and Y.~Wang (2013, Sep).
\newblock {{S}urface fluid registration of conformal representation:
  application to detect disease burden and genetic influence on hippocampus}.
\newblock {\em NeuroImage\/}~{\em 78}, 111--134.

\bibitem[\protect\citeauthoryear{Shortreed and Ertefaie}{Shortreed and
  Ertefaie}{2017}]{shortreed2017outcome}
Shortreed, S.~M. and A.~Ertefaie (2017).
\newblock Outcome-adaptive lasso: variable selection for causal inference.
\newblock {\em Biometrics\/}~{\em 73\/}(4), 1111--1122.

\bibitem[\protect\citeauthoryear{Sims, Hill, and Williams}{Sims
  et~al.}{2020}]{sims2020multiplex}
Sims, R., M.~Hill, and J.~Williams (2020).
\newblock The multiplex model of the genetics of {A}lzheimer’s disease.
\newblock {\em Nature Neuroscience\/}~{\em 23\/}(3), 311--322.

\bibitem[\protect\citeauthoryear{Takei, Miyashita, Tsukie, Arai, Asada,
  Imagawa, Shoji, Higuchi, Urakami, Kimura, et~al.}{Takei
  et~al.}{2009}]{takei2009genetic}
Takei, N., A.~Miyashita, T.~Tsukie, H.~Arai, T.~Asada, M.~Imagawa, M.~Shoji,
  S.~Higuchi, K.~Urakami, H.~Kimura, et~al. (2009).
\newblock Genetic association study on in and around the {APOE} in late-onset
  {A}lzheimer's disease in {J}apanese.
\newblock {\em Genomics\/}~{\em 93\/}(5), 441--448.

\bibitem[\protect\citeauthoryear{Tang, Kong, Pan, and Wang}{Tang
  et~al.}{2021}]{tang2021outcome}
Tang, D., D.~Kong, W.~Pan, and L.~Wang (2021).
\newblock Ultra-high dimensional variable selection for doubly robust causal
  inference.
\newblock {\em Biometrics\/}~{\em (just-accepted)}.

\bibitem[\protect\citeauthoryear{Thompson, Hayashi, de~Zubicaray, Janke, Rose,
  Semple, Hong, Herman, Gravano, Doddrell, et~al.}{Thompson
  et~al.}{2004}]{thompson2004mapping}
Thompson, P.~M., K.~M. Hayashi, G.~I. de~Zubicaray, A.~L. Janke, S.~E. Rose,
  J.~Semple, M.~S. Hong, D.~H. Herman, D.~Gravano, D.~M. Doddrell, et~al.
  (2004).
\newblock Mapping hippocampal and ventricular change in {A}lzheimer's disease.
\newblock {\em NeuroImage\/}~{\em 22\/}(4), 1754--1766.

\bibitem[\protect\citeauthoryear{Van~de Pol, Hensel, Barkhof, Gertz, Scheltens,
  and Van Der~Flier}{Van~de Pol et~al.}{2006}]{van2006hippocampal}
Van~de Pol, L., A.~Hensel, F.~Barkhof, H.~Gertz, P.~Scheltens, and W.~Van
  Der~Flier (2006).
\newblock Hippocampal atrophy in {A}lzheimer's disease: age matters.
\newblock {\em Neurology\/}~{\em 66\/}(2), 236--238.

\bibitem[\protect\citeauthoryear{{van der Vaart} and Wellner}{{van der Vaart}
  and Wellner}{2000}]{van2000weak}
{van der Vaart}, A.~W. and J.~A. Wellner (2000).
\newblock {\em Weak convergence}.
\newblock Springer.

\bibitem[\protect\citeauthoryear{Vina and Lloret}{Vina and
  Lloret}{2010}]{vina2010women}
Vina, J. and A.~Lloret (2010).
\newblock Why women have more {A}lzheimer's disease than men: gender and
  mitochondrial toxicity of amyloid-$\beta$ peptide.
\newblock {\em Journal of Alzheimer's Disease\/}~{\em 20\/}(s2), S527--S533.

\bibitem[\protect\citeauthoryear{Wall and Pritchard}{Wall and
  Pritchard}{2003}]{wall2003haplotype}
Wall, J.~D. and J.~K. Pritchard (2003).
\newblock Haplotype blocks and linkage disequilibrium in the human genome.
\newblock {\em Nature Reviews Genetics\/}~{\em 4\/}(8), 587--597.

\bibitem[\protect\citeauthoryear{Wang, Song, Rajagopalan, An, Liu, Chou,
  Gutman, Toga, Thompson, ADNI, et~al.}{Wang et~al.}{2011}]{Wang2011}
Wang, Y., Y.~Song, P.~Rajagopalan, T.~An, K.~Liu, Y.-Y. Chou, B.~Gutman, A.~W.
  Toga, P.~M. Thompson, ADNI, et~al. (2011).
\newblock Surface-based {TBM} boosts power to detect disease effects on the
  brain: an {N = 804 ADNI} study.
\newblock {\em NeuroImage\/}~{\em 56\/}(4), 1993--2010.

\bibitem[\protect\citeauthoryear{Watson}{Watson}{2019}]{BrainAtrophy}
Watson, S. (2019).
\newblock Brain atrophy (cerebral atrophy).
\newblock available at: \url{www.healthline.com/health/brain-atrophy} (Mar.
  2019).

\bibitem[\protect\citeauthoryear{Weiner, Veitch, Aisen, Beckett, Cairns, Green,
  Harvey, Jack, Jagust, Liu, et~al.}{Weiner et~al.}{2013}]{weiner2013alzheimer}
Weiner, M.~W., D.~P. Veitch, P.~S. Aisen, L.~A. Beckett, N.~J. Cairns, R.~C.
  Green, D.~Harvey, C.~R. Jack, W.~Jagust, E.~Liu, et~al. (2013).
\newblock The {A}lzheimer's {D}isease {N}euroimaging {I}nitiative: {A} review
  of papers published since its inception.
\newblock {\em Alzheimer's \& Dementia\/}~{\em 9\/}(5), e111--e194.

\bibitem[\protect\citeauthoryear{Zhang, Katzman, Salmon, Jin, Cai, Wang, Qu,
  Grant, Yu, Levy, et~al.}{Zhang et~al.}{1990}]{zhang1990prevalence}
Zhang, M., R.~Katzman, D.~Salmon, H.~Jin, G.~Cai, Z.~Wang, G.~Qu, I.~Grant,
  E.~Yu, P.~Levy, et~al. (1990).
\newblock The prevalence of dementia and {A}lzheimer's disease in {S}hanghai,
  {C}hina: impact of age, gender, and education.
\newblock {\em Annals of Neurology: Official Journal of the American
  Neurological Association and the Child Neurology Society\/}~{\em 27\/}(4),
  428--437.

\bibitem[\protect\citeauthoryear{Zhao, Luo, Li, Li, Zhang, Shan, Wang, Yang,
  Zhou, Zhu, et~al.}{Zhao et~al.}{2019}]{zhao2019genome}
Zhao, B., T.~Luo, T.~Li, Y.~Li, J.~Zhang, Y.~Shan, X.~Wang, L.~Yang, F.~Zhou,
  Z.~Zhu, et~al. (2019).
\newblock Genome-wide association analysis of 19,629 individuals identifies
  variants influencing regional brain volumes and refines their genetic
  co-architecture with cognitive and mental health traits.
\newblock {\em Nature Genetics\/}~{\em 51\/}(11), 1637--1644.

\bibitem[\protect\citeauthoryear{Zhou, Zhou, Li, and Zhu}{Zhou
  et~al.}{2020}]{zhou2019analysis}
Zhou, F., H.~Zhou, T.~Li, and H.~Zhu (2020).
\newblock Analysis of secondary phenotypes in multi-group association studies.
\newblock {\em Biometrics\/}~{\em 76\/}(2), 606--618.

\bibitem[\protect\citeauthoryear{Zhou and Li}{Zhou and
  Li}{2014}]{zhou2014regularized}
Zhou, H. and L.~Li (2014).
\newblock Regularized matrix regression.
\newblock {\em Journal of the Royal Statistical Society: Series B (Statistical
  Methodology)\/}~{\em 76\/}(2), 463--483.

\bibitem[\protect\citeauthoryear{Zhou, Chen, Mok, Zhao, Chen, Chen,
  et~al.}{Zhou et~al.}{2018}]{zhou2018identification}
Zhou, X., Y.~Chen, K.~Y. Mok, Q.~Zhao, K.~Chen, Y.~Chen, et~al. (2018).
\newblock Identification of genetic risk factors in the {C}hinese population
  implicates a role of immune system in {A}lzheimer's disease pathogenesis.
\newblock {\em Proceedings of the National Academy of Sciences\/}~{\em
  115\/}(8), 1697--1706.

\end{thebibliography}

\end{document}